\numberwithin{equation}{section}
\newtheorem*{proposition*}{Proposition}
\newtheorem*{theorem*}{Theorem}
\newtheorem*{conjecture*}{Conjecture}
\newtheorem*{claim*}{Claim}
\newtheorem*{lemma*}{Lemma}
\newtheorem*{corollary*}{Corollary}
\newtheorem{theorem}{Theorem}[section]
\newtheorem{proposition}[theorem]{Proposition}
\newtheorem{lemma}[theorem]{Lemma}
\newtheorem{corollary}[theorem]{Corollary}
\newtheorem*{definition*}{Definition}
\newtheorem{definition}{Definition}[section]
\newtheorem*{assumption*}{Assumption}
\newtheorem{assumption}{Assumption}
\newtheorem*{remark*}{Remark}
\newtheorem{remark}{Remark}[section]
\newtheorem{thmx}{Theorem}
\newcommand{\R}{\mathbb{R}}
\newcommand{\s}{\mathbb{S}}
\newcommand{\C}{\mathbb{C}}
\newcommand{\Z}{\mathbb{Z}}
\newcommand{\N}{\mathbb{N}}
\newcommand{\snabla}{\slashed{\nabla}}
\newcommand{\sD}{\slashed{\Delta}}
\newcommand{\Divs}{\slashed{\textnormal{div}}_{\mathbb{S}^2} }
\newcommand{\Lbar}{\underline{L}}
\newcommand{\hpsi}{\widehat{\psi}}
\newcommand{\hphi}{\widehat{\phi}}
\newcommand{\tphi}{\widetilde{\varphi}}
\newcommand{\hfol}{\mathbbm{h}}
\newcommand{\tomega}{\widetilde{\omega}}
\newcommand{\tZ}{\widetilde{Z}}
\newcommand{\tX}{\widetilde{X}}
\newcommand{\etalchar}[1]{$^{#1}$}
\begin{document}

\title{Azimuthal instabilities on extremal Kerr}
\author{Dejan Gajic}
\address{\small Institut f\"ur Theoretische Physik, Universit\"at Leipzig, Br\"uderstrasse 16, 04103 Leipzig, Deutschland}
\email{dejan.gajic@uni-leipzig.de}
\date{}
\maketitle
\begin{abstract}
We prove the existence of instabilities for the geometric linear wave equation on extremal Kerr spacetime backgrounds, which describe stationary black holes rotating at their maximally allowed angular velocity. These instabilities can be associated to non-axisymmetric azimuthal modes and are stronger than the axisymmetric instabilities discovered by Aretakis in \cite{aretakis4}. The existence of non-axisymmetric \emph{instabilities} follows from a derivation of very precise \emph{stability} properties of solutions: we determine therefore the precise, global, leading-order, late-time behaviour of solutions supported on a bounded set of azimuthal modes via energy estimates in both physical and frequency space. In particular, we obtain sharp, uniform decay-in-time estimates and we determine the coefficients and rates of inverse-polynomial late-time tails everywhere in the exterior of extremal Kerr black holes. We also demonstrate how non-axisymmetric instabilities leave an imprint in the radiation on future null infinity via the coefficients appearing in front of slowly decaying and oscillating late-time tails.
\end{abstract}

\tableofcontents

\section{Introduction}
Extremal black holes are critical members of families of stationary black holes. They sit exactly at a phase transition from black hole to ``naked singularity''.\footnote{In this context, naked singularities are spacetimes that, from an evolutionary perspective, have geodesically incomplete future and past null infinities and arise from \emph{singular} initial data. These are also called ``super-extremal'' spacetimes.} Criticality can be understood geometrically via the vanishing of the surface gravity associated to Killing vector fields that generate black hole event horizons. 

In the context of the vacuum Einstein equations, 
\begin{equation}
\label{eq:ee}
\textnormal{Ric}[g]=0,
\end{equation}
such black holes are called \emph{extremal Kerr black holes} \cite{kerr63} and criticality can be interpreted as the black holes rotating at their maximally allowed speed. 

In this paper, we perform a mathematically rigorous investigation of the dynamical properties of extremal black holes in the context of the geometric, linear wave equation:
\begin{equation}
\label{eq:waveeqintro}
\square_{g_{M,a}}\phi=0,
\end{equation}
on extremal Kerr spacetime backgrounds $(\mathcal{M},g_{M,a})$ with mass $M$ and specific angular momentum\\ $|a|=M$. A mathematical study of \eqref{eq:waveeqintro} constitutes a first step towards addressing the dynamics of black holes that arise in the evolution of small perturbations of extremal Kerr initial data for \eqref{eq:ee}.

Even the most elementary dynamical question, the question of \emph{stability}, still remains widely open for small perturbations of extremal Kerr initial data. This may be contrasted with the problem of stability of \emph{sub}-extremal Kerr, corresponding to $|a|<M$ and describing black holes that rotate at speeds strictly below the maximal speed limit, for which tremendous progress has recently been achieved; see the discussion and references in \S \ref{sec:prevwork}.

Already at the level of the linear problem \eqref{eq:waveeqintro}, the criticality of extremal Kerr leads to a significant departure from the sub-extremal case. In \cite{aretakis4}, Aretakis showed that extremal Kerr black holes feature an \emph{axisymmetric, asymptotic instability} along their future event horizon $\mathcal{H}^+$: higher-order transversal derivatives of axisymmetric $\phi$ generically blow up polynomially as time\\ $\tau\to \infty$. The same type of instability is also present for linear wave equations on the simpler, non-rotating but electromagnetically charged (``extremal Reissner--Nordstr\"om'') black holes \cite{aretakis1, aretakis2}.

An asymptotic instability at the \emph{linear} level may lead to a genuine instability at the \emph{nonlinear level}, i.e.\ the spacetimes arising from small perturbations of Kerr data could deviate significantly from extremal Kerr at late times. An asymptotic instability is, however, also compatible with \emph{stability} at the nonlinear level. That is to say, while derivatives of the metric might blow up as $\tau\to \infty$, the metric itself could stay close to the unperturbed metric. Indeed, in the context of extremal Reissner--Nordstr\"om black holes, an analysis of a nonlinear analogue of \eqref{eq:waveeqintro} with a similar nonlinear structure to \eqref{eq:ee}  has shown that the asymptotic instability remains asymptotic in the nonlinear setting and the solutions decay to zero \cite{aagnonlin}. A similar result is expected to hold in the setting of \emph{axisymmetric} solutions to analogous nonlinear wave equations on extremal Kerr.

In this paper, \textbf{we go \emph{beyond axisymmetry} and derive the existence of \emph{stronger} asymptotic instabilities for non-axisymmetric solutions to \eqref{eq:waveeqintro}}. These \emph{azimuthal instabilities} are derived via a sharp analysis of the \emph{stability} properties of $\phi$ in the form of the precise leading-order late-time behaviour (``late-time tails'') of azimuthal mode solutions $\phi_m$ to \eqref{eq:waveeqintro} along $\mathcal{H}^+$, with $m$ an azimuthal number.\footnote{These are solutions with an azimuthal angular dependence $e^{im\varphi}$, where $m\in \Z$.} These tails feature \emph{slower decay in time} when compared to the axisymmetric or sub-extremal settings.

Away from the event horizon and along future null infinity $\mathcal{I}^+$, we moreover derive the existence of slowly-decaying, oscillating late-time tails, which decay like $u^{-1}$ and oscillate to leading-order with a frequency $m \upomega_+$,
where $\upomega_+$ is the angular velocity of the black hole and $u$ a standard Bondi time coordinate. This late-time behaviour is markedly different from the faster-decaying, non-oscillating sub-extremal late-time tails (``Price's law''), as derived in \cite{hintzprice, aagkerr}.

While the nonlinear analysis of \cite{aagnonlin} suggests stability of extremal Kerr within the restricted class of axisymmetric perturbations\footnote{Since the initial data in this setting is necessarily geodesically incomplete and extremal Kerr data has no trapped surfaces, ``stability'' in this context would correspond to one of the following scenarios:  the convergence of the perturbed metric in a suitable choice of coordinates to the black hole exterior of 1) a nearby sub-extremal Kerr spacetime, 2) a nearby extremal Kerr spacetime, with an asymptotic instability at the level of metric derivatives  along the event horizon (``horizon hair''), or 3) the failure to collapse to a black hole: the perturbed spacetime is close to the causal past of a finite subset of $\mathcal{I}^+$ in extremal Kerr. See also \cite[\S IV.2]{dhrt21} for a related discussion on the formulations of nonlinear stability of extremal Reissner--Nordstr\"om. }, the slow decay and the stronger nature of the instabilities derived in the present paper goes beyond the methods of \cite{aagnonlin} and tantalizes the possibility of a \emph{departure from the Kerr family} in the end-point of the nonlinear evolution of small perturbations of extremal Kerr initial data.

The quantitative analysis of late-time tails along $\mathcal{I}^+$ in the present paper also has observational implications, as it also implies the existence of characteristic slowly decaying, oscillating late-time tails at \emph{late, but finite} time intervals in sub-extremal Kerr black holes that are sufficiently close to extremality (``near-extremal Kerr black holes'').  One may think of such \emph{intermediate late time tails} as carrying an imprint of the azimuthal instability that would be occurring at the event horizon of an extremal black hole.

\subsection{Main theorems}
We now give the first, rough formulations of the main two theorems that are proved in the present paper. More precise versions are stated in \S \ref{sec:precstatthm}, after an introduction of the relevant geometric notation in \S\S \ref{sec:prelim1}--\ref{sec:prelim2}.
\begin{theorem}[The existence of azimuthal instabilities on extremal Kerr; rough version]
\label{thm:azimuthal}
Non-axisymmetric solutions to \eqref{eq:waveeqintro}, arising from suitably regular and decaying initial data feature at least one of the following asymptotic instabilities:  generically,\footnote{The initial data for which \underline{none} of these instabilities occur can be thought of as having infinite codimension.}
\begin{enumerate}[label=\emph{\textbf{(\alph*)}}]
\item first-order transversal derivatives of $\phi$ grow at least like $\tau^{\frac{1}{2}}$ in time along a sequence of times $\tau_n\to \infty$ along the event horizon 
and the $H^1_{\rm space}$-norm (the non-degenerate energy) is non-decaying in time, or
\item $ \phi \notin L^2_{\rm time}L^2_{\rm space}$, or 
\item $\partial \phi \notin  L^2_{\rm time}L^2_{\rm space; loc}$ away from the event horizon.
\end{enumerate}
\end{theorem}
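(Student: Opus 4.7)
The plan is to establish the trichotomy via a contradiction argument: assume for a solution $\phi$ with non-trivial non-axisymmetric content that both \textbf{(b)} and \textbf{(c)} hold, and deduce \textbf{(a)} under a generic nondegeneracy condition on the initial data. Since the wave equation commutes with the azimuthal Killing vector field $\Phi=\partial_\varphi$ and $\phi$ is supported on a bounded set of azimuthal modes, I would work mode-by-mode and fix a single mode $\phi_m$ with $m\neq 0$, on which $\Phi$ acts as multiplication by $im$.

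Starting from \textbf{(b)} and \textbf{(c)}, I would first promote the $L^2_\tau L^2_x$-integrability of $\phi_m$ and the localized $L^2_\tau L^2_x$-integrability of $\partial\phi_m$ away from $\mathcal{H}^+$ to pointwise-in-time decay of $\phi_m$ along the event horizon. This relies on $T$-energy boundedness (with $T=\partial_\tau$ Killing), Sobolev embedding in angular variables after commuting with $\Phi$, and a mean-value argument that extracts decaying subsequences from the integrability and upgrades them to uniform decay via the monotone $T$-flux. The horizon degeneracy of extremal Kerr obstructs the usual redshift estimate, but pointwise decay of $\phi_m|_{\mathcal{H}^+}$ is enough for the next step.

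The heart of the argument is an almost-conservation law on $\mathcal{H}^+$ adapted to $\phi_m$. Restricting the wave equation to the horizon and exploiting the vanishing surface gravity of extremal Kerr, one obtains a transport identity of the schematic form
\begin{equation*}
\partial_v\Bigl(\partial_r\phi_m + c_0\,\phi_m + c_1\,(T+im\upomega_+)\phi_m\Bigr)\Big|_{\mathcal{H}^+} = F_m,
\end{equation*}
where the coefficient $c_1$ encodes co-rotation at angular velocity $\upomega_+$ and $F_m$ is a bulk source controllable by the pointwise decay already established. The generic-data hypothesis is the non-vanishing of the bracketed quantity at $\tau=0$. For axisymmetric data ($m=0$) this identity recovers the classical Aretakis constant and forces non-decay of $\partial_r\phi$; for $m\neq 0$, the oscillation at frequency $m\upomega_+$ in $F_m$ prevents cancellations in the running integral $\int_0^\tau F_m\,dv$, which grows at rate $\tau^{1/2}$. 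Combined with the decay of $\phi_m|_{\mathcal{H}^+}$ itself, this yields pointwise growth of $\partial_r\phi_m|_{\mathcal{H}^+}$ of order at least $\tau^{1/2}$ along a sequence $\tau_n\to\infty$, which is \textbf{(a)}, and the non-decay of the $H^1_{\rm space}$-norm is automatic since that norm controls $\partial_r\phi_m$ on $\mathcal{H}^+$.

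The main obstacle is the sharp $\tau^{1/2}$ lower bound on the running integral of $F_m$, which is what distinguishes genuine polynomial growth from mere non-decay or a logarithm. For non-axisymmetric modes $F_m$ couples through the angular Laplacian across the whole exterior, so sharp control requires combining physical-space transport estimates near $\mathcal{H}^+$ with frequency-space analysis in the $\tau$-Fourier dual, where the interplay between the superradiant frequency band and the extremal horizon structure of the radial ODE dictates both the precise $\tau^{1/2}$ rate on $\mathcal{H}^+$ and the slowly decaying, oscillating $u^{-1}e^{-im\upomega_+ u}$ tail on $\mathcal{I}^+$ highlighted in the introduction.
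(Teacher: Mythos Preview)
Your skeleton is right---assume \textbf{(b)} and \textbf{(c)} fail, then derive \textbf{(a)} from a transport identity on $\mathcal{H}^+$---and this matches the paper's logical structure. But two of your steps are genuine gaps, and a third is a misidentification.

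\textbf{The decay step is the whole paper, not a preliminary.} You propose to obtain pointwise decay of $\phi_m|_{\mathcal{H}^+}$ from ``$T$-energy boundedness'', a ``monotone $T$-flux'', and mean-value/Sobolev. None of these are available for $m\neq 0$ on extremal Kerr: $T$ fails to be causal in the ergoregion (superradiance), $K$ is not causal in any neighbourhood of $\mathcal{H}^+$, and the red-shift estimate---even in degenerate form---breaks down outside axisymmetry (see the discussion in \S\ref{sec:maindiff}). The paper spends \S\S\ref{sec:freqspacean}--\ref{sec:decay} constructing integrated energy estimates in frequency space (including a delicate analysis of the regime $|\omega-m\upomega_+|\ll 1$ via renormalisation by stationary solutions), weighted energy hierarchies, and a $K$-inversion theory, precisely because the soft argument you sketch does not exist here.

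\textbf{You need the sharp rate, not just decay.} Restricting the wave equation to $\mathcal{H}^+$ gives (after projection to $S_{m\ell}$)
\[
K(-4Y\phi_{m\ell})=(\Lambda_{+,m\ell}+m^2-im)\,\phi_{m\ell}+\textnormal{(faster-decaying $K$-derivative terms)},
\]
so your $F_m$ is essentially $\phi_{m\ell}$ itself. To get $|Y\phi_{m\ell}|\gtrsim \tau^{1/2}$ from integrating the right-hand side, you must know that $\phi_{m\ell}|_{\mathcal{H}^+}$ decays \emph{exactly} like $\tau^{-1/2}$ with a nonzero leading coefficient; any faster rate gives only boundedness or a logarithm. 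This is precisely the content of Theorem~\ref{thm:tails}, and it is why the paper insists that ``the existence of non-axisymmetric \emph{instabilities} follows from a derivation of very precise \emph{stability} properties''.

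\textbf{The generic condition is not local on $\mathcal{H}^+$.} You take the nondegeneracy hypothesis to be the nonvanishing of the bracketed horizon quantity at $\tau=0$, by analogy with the Aretakis charge. This is incorrect: there is no conservation law for $m\neq 0$, and the paper's generic condition is $\mathfrak{h}_{m\ell}\neq 0$, an integral over the \emph{entire} initial hypersurface that fixes the coefficient of the $\tau^{-1/2}$ tail (see Remark~\ref{rm:diffwithaxi}(2)--(3)). Relatedly, your remark that ``oscillation at frequency $m\upomega_+$ prevents cancellations'' is off: in the co-rotating coordinates $(v,\theta,\tphi)$ used on $\mathcal{H}^+$ there is no $e^{-im\upomega_+ v}$ factor, and the $\tau^{1/2}$ growth comes purely from the non-integrability of $\tau^{-1/2}$, with the residual $\log v$-oscillation (from the $e^{\pm i\sqrt{-\alpha_{m\ell}}\log v}$ factors) being the reason the lower bound holds only along a sequence $\tau_n\to\infty$ rather than uniformly.
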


We refer to the instability \textbf{(a)} as an \emph{azimuthal instability} as it originates from azimuthal modes $\phi_m$ with $m\neq 0$, which are solutions with an azimuthal angular dependence $e^{im\varphi}$. In sub-extremal Kerr, or for axisymmetric solutions on extremal Kerr, all the derivatives $\partial \phi$ can be uniformly bounded in time and the spacetime integrals of $|\phi|^2$ and $|\partial \phi|^2$ away from the horizon can be bounded via integrated energy (Morawetz) estimates. Hence, none of the instabilities \textbf{(a)}, \textbf{(b)} or \textbf{(c)} occur in these settings.

The key ingredient for proving Theorem \ref{thm:azimuthal} is the following precise analysis of the late-time behaviour of azimuthal modes $\phi_m$:

\begin{theorem}[Late-time tails and stability for azimuthal modes on extremal Kerr; rough version]
\label{thm:tails}
Let $|m|\geq 2$. Under a \emph{qualitative} global integrability assumption for $\phi_m$, the $H^1_{\rm space}$-norm of $\phi_m$ is bounded and $r \phi_m$ decays \emph{quantitatively} in time $\tau$ with the sharp rate $\tau^{-\frac{1}{2}}$ along the future event horizon $\mathcal{H}^+$ and $\tau^{-1}$ away from the event horizon. 

More precisely, with respect to coordinates $(v,\theta,\tphi)$ along $\mathcal{H}^+$ that are adapted to the Killing vector field generating the null hypersurface $\mathcal{H}^+$:
\begin{equation*}
\phi_m|_{\mathcal{H}^+}(v,\theta,\tphi)\sim v^{-\frac{1}{2}}\sum_{\ell\::\: \alpha_{m\ell}<0} \mathfrak{h}_{m\ell} \left(a_{m\ell}\cdot e^{-i\left(\sqrt{-\alpha_{m\ell}}+m\right)\log v}+b_{m\ell}\cdot e^{i\left(\sqrt{-\alpha_{m\ell}}-m\right)\log v}\right)e^{im\tphi}S_{m\ell}(\theta).
\end{equation*}
With respect to Bondi coordinates $(u,\theta,\varphi)$ along future null infinity,
\begin{equation*}
r\phi_m|_{\mathcal{I}^+}(u,\theta,\varphi)\sim u^{-1}e^{-im\upomega_+ u}\sum_{\ell\::\: \alpha_{m\ell}<0} \mathfrak{h}_{m\ell} \left(c_{m\ell}\cdot e^{-i \sqrt{-\alpha_{m\ell}}\log u}+d_{m\ell}\cdot  e^{i \sqrt{-\alpha_{m\ell}}\log u}\right)e^{im\varphi}S_{m\ell}(\theta),
\end{equation*}
where 
\begin{itemize}
\item $\alpha_{m\ell}=\Lambda_{+,m\ell}-2m^2+\frac{1}{4}$,
\item $\Lambda_{+, m\ell}$ is the eigenvalue of a natural angular operator defined on the 2-sphere $\mathcal{H}^+\cap \{v=0\}$, with eigenfunctions $e^{im\tphi}S_{m\ell}(\theta)$,
\item $\upomega_+$ is the angular velocity of the black hole,
\item  $\mathfrak{h}_{m\ell}$ is an integral of the initial data, projected to $e^{im\tphi}S_{m\ell}(\theta)$, which is generically non-zero,\\
\item  $a_{m\ell},b_{m\ell},c_{m\ell},d_{m \ell}\in \C$ are constants that are independent of the initial data.
\end{itemize}
\end{theorem}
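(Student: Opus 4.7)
The plan is to reduce Theorem \ref{thm:tails} to a sharp frequency-space analysis of a family of radial ODEs parametrised by the frequency $\omega$ and the discrete indices $(m,\ell)$, and then invert via contour deformation to extract the leading-order late-time behaviour. First I would decompose the azimuthal mode $\phi_m$ using the basis $\{e^{im\tphi}S_{m\ell}(\theta)\}$ of eigenfunctions (with eigenvalue $\Lambda_{+,m\ell}$) of the angular operator living on the cross-section $\mathcal{H}^+\cap\{v=0\}$; applying this decomposition in the interior requires either interpolating these horizon eigenfunctions along constant-$r$ hypersurfaces or working with spheroidal harmonics and controlling the off-diagonal coupling, and this is a natural place to use the bounded-azimuthal-frequency assumption. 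After an appropriate rescaling $\psi_{m\ell}=r\phi_{m\ell}$, the wave equation on extremal Kerr becomes a one-dimensional wave equation with a potential whose near-horizon expansion is governed by the parameter $\alpha_{m\ell}=\Lambda_{+,m\ell}-2m^2+\tfrac{1}{4}$ and whose leading behaviour at infinity is $-2im\upomega_+ \partial_u$ after conjugating by $e^{-im\upomega_+ u}$ (this conjugation also explains the oscillation $e^{-im\upomega_+ u}$ in the $\mathcal{I}^+$ tail).

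Next I would take a Fourier/Mellin transform in the Killing time variable to reduce to a stationary radial ODE. Near the degenerate horizon $r=M$, the frequency-domain ODE has a regular singular point with indicial equation whose roots are $-\tfrac{1}{2}\pm i\sqrt{-\alpha_{m\ell}}$ whenever $\alpha_{m\ell}<0$; these are precisely the exponents that will produce the $v^{-\frac12}e^{\pm i\sqrt{-\alpha_{m\ell}}\log v}$ factors in the horizon tail, while the additional $\pm m$ shift in the exponent comes from the fact that $\partial_v$ on $\mathcal{H}^+$ expressed in the $\tphi$-coordinates picks up a $-m\upomega_+$ piece. At the other end, standard outgoing-wave analysis at $r=\infty$ combined with the conjugation above determines the $\mathcal{I}^+$-tail, and the pairing between the near-horizon and near-infinity solutions of the homogeneous ODE fixes the universal constants $a_{m\ell},b_{m\ell},c_{m\ell},d_{m\ell}$ together with the dependence of the amplitudes on the integral $\mathfrak{h}_{m\ell}$ of the initial data projected onto $e^{im\tphi}S_{m\ell}$.

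The quantitative decay statements ($\tau^{-\frac{1}{2}}$ along $\mathcal{H}^+$ and $\tau^{-1}$ away from it, plus $H^1_{\rm space}$-boundedness) would be proved in parallel by physical-space energy and integrated local energy decay estimates for the fixed-$m$ problem, using the qualitative integrability assumption to close the estimates in the absence of the red-shift mechanism at the degenerate horizon. The sharp rates and the shape of the tail would then be read off by deforming the inverse Fourier transform contour to zero frequency: the contribution of the branch point at $\omega=0$, after computing the relevant Jost coefficients from the indicial analysis, yields the $v^{-\frac12}(\log v)$-oscillating horizon term and the $u^{-1}(\log u)$-oscillating null-infinity term with exactly the prefactors claimed.

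The hard part will be twofold. First, establishing sufficiently strong uniform bounds (boundedness of the non-degenerate energy, integrated local energy decay) on extremal Kerr for a fixed azimuthal mode $\phi_m$ despite the absence of the red-shift; this must leverage the precise structure of the horizon operator and the spectral gap $\alpha_{m\ell}<0$ versus $\alpha_{m\ell}\geq 0$ cases. Second, performing the contour-deformation/resolvent analysis at zero frequency with enough precision to identify the \emph{coefficient} of the leading-order tail as $\mathfrak{h}_{m\ell}$ times a \emph{universal} constant; this requires matching asymptotics between the near-horizon solutions governed by the complex indicial exponents $-\tfrac12\pm i\sqrt{-\alpha_{m\ell}}$ and the outgoing solutions at infinity, and carefully tracking the conjugation $e^{-im\upomega_+ u}$ that couples the horizon angular velocity to the Bondi time at $\mathcal{I}^+$.
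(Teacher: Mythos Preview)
Your proposal takes a genuinely different route from the paper, and it also mis-identifies the critical frequency. You place the resolvent singularity at $\omega=0$, but on extremal Kerr with $m\neq0$ the relevant frequency is $\omega=m\upomega_+$ (equivalently $\widetilde{\omega}:=\omega-m\upomega_+=0$): this is where the horizon ODE changes from an irregular to a regular singular point and where the Wronskian degenerates like $|W|^{-1}\lesssim|\widetilde{\omega}|^{-1/2}$. Your conjugation by $e^{-im\upomega_+u}$ at $\mathcal{I}^+$ does not shift the Fourier variable; to put the singularity at the origin you would need to transform along the flow of $K=T+\upomega_+\Phi$ rather than $T$. This is not fatal, but it is the crux of the analysis and must be stated correctly.

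More substantively, the paper does \emph{not} extract the tail by contour deformation or resolvent expansion. After the frequency-space integrated-energy estimates, it works entirely in physical space: it writes down an explicit profile $\Pi$ built from the stationary ($K$-invariant) solutions $w_{\pm,m\ell}$ --- whose indicial exponents $-\tfrac12\pm i\sqrt{-\alpha_{m\ell}}$ you correctly identify --- multiplied by explicit self-similar functions of $(\tau,r)$, and proves that $\hphi:=\phi-\Pi$ decays strictly faster than $\Pi$. The extra decay for $\hphi$ is obtained via a $K$-integral construction: one solves an elliptic-type equation $\mathcal{L}(K^{-1}\hphi)|_{\Sigma_0}=F$ on the initial slice, and the constants $\mathfrak{h}_{m\ell}$ are fixed by the vanishing of $\int w_{\infty,m\ell}\,e^{-im\int(x+1)/(x-1)\,dx}F_{m\ell}\,dr$, not by a Jost-coefficient computation. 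Energy decay then comes from trading $K$-derivatives for factors of $(r-M)$ near the horizon, compensating for the absence of an $(r-M)^{-p}$ hierarchy that one would have in the axisymmetric or Reissner--Nordstr\"om settings. Your resolvent route could in principle be carried through (it would be closer in spirit to Hintz's method on sub-extremal Kerr), but it would require a precise expansion of the resolvent at $\omega=m\upomega_+$ --- including the $|\widetilde{\omega}|^{-1/2}$ singularity and uniform near-horizon/far-zone matching --- which the paper deliberately avoids; what the paper's method buys is that the tail coefficients are read off directly from the explicit $\Pi$, and the hard spectral step is replaced by comparatively soft elliptic estimates and weighted-energy hierarchies.
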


We now give several remarks regarding Theorems \ref{thm:azimuthal} and \ref{thm:tails}.

\begin{remark}
The instabilities \textbf{(b)} and \textbf{(c)} in Theorem \ref{thm:azimuthal} can be excluded with the assumption of appropriate, global, qualitative integrability assumptions for $\phi$, as in the statement of Theorem \ref{thm:tails}. These assumptions can be be verified if one extends the analysis in the present paper from exactly extremal Kerr backgrounds to near-extremal Kerr backgrounds; see also \S \ref{sec:remainingq}. This will be explored in future work.
\end{remark}

\begin{remark}
By combining the azimuthal instability \textbf{(a)} from Theorem \ref{thm:azimuthal} with the decay results in Theorem \ref{thm:tails}, we can infer that the energy of azimuthal modes $\phi_m$ with $|m|\geq 2$ (as measured by $H^1$-norms along a foliation of spacelike constant-$\tau$ hypersurfaces) \textbf{concentrates at the horizon} as $\tau\to\infty$, since it is non-decaying by Theorem \ref{thm:azimuthal}, but decays when restricted away from the horizon by Theorem \ref{thm:tails}.
\end{remark}

\begin{remark}
\label{rm:diffwithaxi}
The azimuthal instability \textbf{(a)} differs from the axisymmetric instability following from \cite{aretakis4,aretakis3,aretakis2012} (see Theorem \ref{thm:are} in \S \ref{sec:prevwork}) and the analogous extremal Reissner--Nordstr\"om instabilities from \cite{aretakis1,aretakis2} in the following ways: 
\begin{enumerate}[label=\arabic*)]
\item Axisymmetric solutions have bounded, but non-decaying first-order derivatives along $\mathcal{H}^+$, as shown in \cite{aretakis3} and, in view of the related analysis in extremal Reissner--Nordstr\"om in \cite{paper4}, non-degenerate energies, or $H^1_{\rm space}$ norms, are expected to decay in time.
\item The axisymmetric instability mechanism relies on the existence of axisymmetric conservation laws along $\mathcal{H}^+$. In that case, the coefficient $H_0$ appearing in front of the terms that blow up along $\mathcal{H}^+$ is the value of the corresponding conserved charge (the ``horizon hair'') and is therefore defined at the horizon. These conserved charges along $\mathcal{H}^+$ form a close analogue of the Newman--Penrose constants \cite{np2} along $\mathcal{I}^+$. In contrast, the azimuthal instability in Theorem \ref{thm:azimuthal} does not follow from conservation laws, but instead follows from the precise knowledge of the late-time tails of azimuthal modes along $\mathcal{H}^+$. 
\item In the non-axisymmetric setting, the role of $H_0$ is taken on by the constants $\mathfrak{h}_{m \ell}$ appearing in Theorem \ref{thm:tails}, which are not defined at the horizon but are integrals along the entire initial data hypersurface. As a result, the azimuthal instability does \underline{not} become weaker by simply considering initial data supported away from $\mathcal{H}^+$, as is the case in axisymmetry; see \cite{aretakis2012}. Instead, it becomes weaker for data which has vanishing constants $\mathfrak{h}_{m \ell}=0$.
\end{enumerate}
\end{remark}

\begin{remark}
The constants $\mathfrak{h}_{m \ell}$, which are responsible for the azimuthal instability (see Remark \ref{rm:diffwithaxi}), can be observed in the late-time behaviour of radiation fields $r\phi_m|_{\mathcal{I}^+}$, which also carries information about the rotation of the black hole via the oscillating factor $\frac{e^{-im\upomega_+ u}(1+\log u\,O(u^{-1}))}{u}$. Since this late-time behaviour differs significantly from the more rapid, non-oscillating inverse polynomial decay in the sub-extremal setting \cite{hintzprice, aagkerr}, it provides an \emph{observational signature} of extremality and the existence of azimuthal instabilities.

For sub-extremal black holes that are close to extremality, the azimuthal instability will instead become a ``transient instability'' by Cauchy stability, i.e.\ quantities that blow up in extremal Kerr will instead become very large, but they will eventually decay asymptotically in time. This transient instability will leave an \emph{imprint} along $\mathcal{I}^+$ in the form of a transient, oscillating late-time tail at late, but finite time intervals.

This observational signature may be compared with the setting of extremal Reissner--Nordstr\"om, where (near)-extremality \underline{cannot} immediately be inferred from the late-time decay rate along a single far-away causal curve, but requires additionally the consideration of integrals along the spheres foliating $\mathcal{I}^+$ \cite{extremal-prl}. See also \cite{burko19, burko21} for additional numerical analysis on the observational signature in this setting.
\end{remark}
\subsection{Previous work}
\label{sec:prevwork}
In this section, we give an overview of previous works relevant to Theorems \ref{thm:azimuthal} and \ref{thm:tails}.\\
\\
\paragraph{\textbf{Extremal black holes}}
A mathematical study of the dynamics of axisymmetric solutions to \eqref{eq:waveeqintro} was pioneered in the works of Aretakis  \cite{aretakis3, aretakis4, aretakis2012}, where the following theorem was proved:
\begin{thmx}[Axisymmetric instability on extremal Kerr; \cite{aretakis3, aretakis4, aretakis2012}]
\label{thm:are}
Consider suitably regular and decaying initial data for \eqref{eq:waveeqintro}. Then
\begin{enumerate}[label=\emph{(\roman*)}]
\item Generically, second-order derivatives of $\phi$ grow at least with the rate $\tau$ as $\tau\to \infty$ along the future event horizon $\mathcal{H}^+$.
\item If the initial data are supported away from the future event horizon $\mathcal{H}^+$, then the growth in \emph{(i)} occurs instead at the level of third-order derivatives.
\item Axisymmetric solutions $\phi_0$ decay at least as fast as $\tau^{-\frac{1}{2}}$ along $\mathcal{H}^+$ and future null infinity $\mathcal{I}^+$, and at least as fast as $\tau^{-1}$ in bounded regions of space away from $\mathcal{H}^+$, and their first-order derivatives are bounded in time.
\end{enumerate}
\end{thmx}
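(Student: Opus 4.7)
The plan is to follow Aretakis's original strategy, which splits the theorem into two largely independent ingredients: sharp quantitative decay for axisymmetric solutions (giving (iii)), and a conservation law for transversal derivatives along $\mathcal{H}^+$ which, combined with decay, forces the growth in (i)--(ii). Throughout, axisymmetry is indispensable: the stationary Killing field is spacelike in the ergoregion, but restricting to $\partial_{\tphi}\phi=0$ eliminates superradiance and yields a non-negative $T$-energy, enabling the global estimates to close.

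For (iii), I would foliate the exterior by hypersurfaces $\{\Sigma_\tau\}_{\tau\geq 0}$ terminating on $\mathcal{H}^+$ and $\mathcal{I}^+$, and execute the usual three-step vector-field scheme. First, a degenerate integrated-local-energy (Morawetz) estimate controlling $\phi$ and $\partial\phi$ in the bulk away from the photon sphere and $\mathcal{H}^+$; trapping at the photon sphere is handled by commuting with the Carter operator and working mode-by-mode in the azimuthal parameter (here $m=0$), while the vanishing surface gravity at $\mathcal{H}^+$ introduces an unavoidable extra degeneration weight at the horizon. Second, the Dafermos--Rodnianski $r^p$-hierarchy near $\mathcal{I}^+$, converting decay in $r$ into decay in $\tau$. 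Third, a pigeonhole/interpolation step producing pointwise $\tau^{-\frac12}$-decay on $\mathcal{H}^+$ and $\mathcal{I}^+$ and $\tau^{-1}$-decay on compact sets off the horizon. Commuting with $T$, $\partial_{\tphi}$ (trivially) and a well-chosen transversal-to-$\mathcal{H}^+$ vector field, while using the conservation law of the next paragraph to absorb the horizon boundary terms, then gives the stated boundedness of first-order derivatives.

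For the horizon conservation law, in ingoing coordinates $(v,r,\theta,\tphi)$ adapted to $\mathcal{H}^+=\{r=M\}$, I would expand $\square_{g_{M,a}}\phi=0$ in powers of $(r-M)$; extremality kills the $\partial_r\phi$ coefficient produced by the surface gravity, so that the restriction to $\mathcal{H}^+$ takes the schematic form $2T(\partial_r\phi)+\mathcal{A}[\phi|_{\mathcal{H}^+}]=0$ with $\mathcal{A}$ intrinsic to the cross-section $S^2_v$. Integrating against an appropriate weight on $S^2_v$ and using $\partial_{\tphi}\phi=0$ produces a quantity $H_0[\phi](v)$, linear in $\partial_r\phi|_{\mathcal{H}^+}$ and angular pieces of $\phi|_{\mathcal{H}^+}$, that is $v$-independent and hence a linear functional of the initial data, generically non-zero. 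Inserting the decay of $\phi$ and $T\phi$ from (iii) into this identity forces $\partial_r\phi|_{\mathcal{H}^+}$ to converge to a profile proportional to $H_0$; commuting $\square_{g_{M,a}}$ with $\partial_r$ once and integrating in $v$ along $\mathcal{H}^+$ then yields $|\partial_r^2\phi|\gtrsim v\cdot|H_0|$ along a sequence $v_n\to\infty$, proving (i). For data supported away from $\mathcal{H}^+$ one has $H_0=0$ identically, but repeating the commutator argument produces a second conserved quantity $H_1$, built from $\partial_r^2\phi|_{\mathcal{H}^+}$ and angular derivatives, which is again a manifestly nontrivial linear functional of the initial data; its generic non-vanishing shifts the blow-up to third-order derivatives, giving (ii).

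The main obstacle is step (I): establishing the Morawetz estimate on extremal Kerr while simultaneously handling photon-sphere trapping and the missing red-shift at $\mathcal{H}^+$, with only axisymmetric control at one's disposal. Technically this requires a Carter-separated frequency-localized analysis of the radial ODE together with a carefully chosen horizon-degenerate multiplier, and it is this step that also pins down the qualitative regularity/decay class of initial data for which $H_0$ (and $H_1$) are well-defined finite linear functionals, thereby giving content to the word ``generically'' in the theorem.
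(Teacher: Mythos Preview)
This theorem is not proved in the paper under review; it is stated in \S\ref{sec:prevwork} as a summary of Aretakis's prior work \cite{aretakis3, aretakis4, aretakis2012}, so there is no proof here to compare against directly. Your outline of (i) and (iii) is a faithful account of Aretakis's strategy: a degenerate integrated-energy estimate (exploiting axisymmetry to eliminate superradiance and cope with the absent red-shift), the Dafermos--Rodnianski $r^p$-hierarchy for decay, and the horizon conservation law $H_0$ (arising precisely because $\kappa=0$) which, combined with the decay of $\phi|_{\mathcal{H}^+}$, forces $\partial_r\phi|_{\mathcal{H}^+}$ to be asymptotically non-trivial and hence $\partial_r^2\phi|_{\mathcal{H}^+}$ to grow linearly.

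Your argument for (ii), however, has a gap. You propose a second conserved quantity $H_1$ ``built from $\partial_r^2\phi|_{\mathcal{H}^+}$ and angular derivatives'' that is generically non-zero for data supported away from $\mathcal{H}^+$. But any quantity constructed purely from the restriction of the solution and its transversal derivatives to $\mathcal{H}^+$, and conserved along $\mathcal{H}^+$, has its value fixed by the data on the initial horizon sphere $\Sigma_0\cap\mathcal{H}^+$; for data supported away from $\mathcal{H}^+$ it therefore vanishes identically and cannot drive any instability. This is exactly the point of Remark~\ref{rm:diffwithaxi}: the Aretakis charges are local to the horizon and all vanish for such data, which is why the axisymmetric instability becomes \emph{weaker} in that setting. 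The mechanism in \cite{aretakis2012} for (ii) is not a second horizon conservation law but rather uses the late-time tail of $\phi|_{\mathcal{H}^+}$, whose leading coefficient is a global functional of the data and generically non-zero even for data supported away from the horizon; integrating the commuted wave equation along $\mathcal{H}^+$ against this tail produces a non-zero asymptotic value of $\partial_r^2\phi|_{\mathcal{H}^+}$, from which linear growth of $\partial_r^3\phi|_{\mathcal{H}^+}$ follows.
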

Theorem \ref{thm:are} demonstrates the existence of \emph{axisymmetric instabilities} for higher-order transversal derivatives along $\mathcal{H}^+$. The underlying mechanism for instabilities is based on the existence of conservation laws along $\mathcal{H}^+$, together with the assumption of a \emph{qualitative} decay statement for $\phi$ and its tangential derivatives along $\mathcal{H}^+$. Under the global assumption of qualitative decay for $\phi$, the axisymmetric instabilities have been shown to hold in more general settings \cite{hm2012,murata2012}. See Remark \ref{rm:diffwithaxi} for a comparison of Theorem \ref{thm:are} with Theorem \ref{thm:azimuthal}.

In view of the existence of a bijection between the behaviour of $\phi_{m=0}$ along $\mathcal{H}^+$ and $r\phi_{m=0}$ in $\mathcal{I}^+$ (valid only in axisymmetry) \cite{couch, bizon2012}, the conservation law of Aretakis along $\mathcal{H}^+$ may be interpreted as a manifestation of the Newman--Penrose conservation law along $\mathcal{I}^+$ that is present in general asymptotically flat settings \cite{np2}. The relation between conservation laws and characteristic initial data gluing problems for wave equations was investigated further in \cite{aretakisglue} and has recently been analyzed in the context of the Einstein equations in \cite{acr21a,acr21b, acr21c, czro22}.

The first analysis of non-axisymmetric solutions to \eqref{eq:waveeqintro} was performed by Teixeira da Costa in \cite{costa20}, where it was shown that there are no exponentially growing mode solutions on extremal Kerr, i.e.\ there is mode stability on extremal Kerr.

A related problem is the study of the geometric wave equation on extremal Reissner--Nordstr\"om black holes, which are charged but non-rotating analogues of extremal Kerr black holes. Due to the lack of rotation, the analysis on these spacetimes is greatly simplified and it mirrors the analysis of axisymmetric solutions to \eqref{eq:waveeqintro}. In particular, these spacetimes also feature an instability mechanism based on conservation laws, and it was shown in \cite{aretakis1, aretakis2} that analogues of (i) and (ii) from Theorem \ref{thm:are} also hold in this setting. Via a gluing analysis along the event horizon, spherically symmetric black hole spacetimes have also been constructed that are isometric to extremal Reissner--Nordstr\"om in the black hole exterior for sufficiently late times, but arise from regular, geodesically complete initial data \cite{keru22}.

In the setting of extremal Reissner--Nordstr\"om, precise late-time decay properties and the existence of late-time tails were obtained in \cite{paper4}:
\begin{thmx}[Late-time tails on extremal Reissner--Nordstr\"om; \cite{paper4}]
\label{thm:ERNtails}
Consider suitably regular and decaying initial data for the geometric wave equation on extremal Reissner--Nordstr\"om spacetimes. Then the corresponding solutions $\phi$ display the following late-time behaviour:
\begin{align*}
\phi|_{\mathcal{H}^+}(v,\theta,\varphi)\sim&\: H_0v^{-1},\\
r\phi|_{\mathcal{I}^+}(u,\theta,\varphi)\sim&\:  (H_0+Q)u^{-2},
\end{align*}
with $H_0$ the value of the conserved quantity at $\mathcal{H}^+$ and $Q$ proportional to the time-inverted Newman--Penrose constant.
\end{thmx}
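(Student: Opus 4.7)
The plan is to establish the late-time asymptotics by combining a forward-in-time decay hierarchy (in the spirit of the Dafermos--Rodnianski $r^p$-method near null infinity) with a \emph{dual} hierarchy near the extremal event horizon that exploits the Aretakis conservation law, and to iterate these via a time-integration (``time-inversion'') procedure that promotes polynomial decay to the sharp rates $v^{-1}$ and $u^{-2}$ while exposing the coefficients $H_0$ and $H_0+Q$.

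First, I would decompose $\phi$ into spherical harmonics; because non-zero angular modes decay strictly faster on extremal Reissner--Nordstr\"om (due to the additional angular potential), it suffices to treat the spherically symmetric projection $\phi_0$. For this mode I would pass to the rescaled unknown $\psi=r\phi_0$ in double-null coordinates $(u,v)$, write the wave equation as $\partial_u\partial_v\psi=\mathcal{V}(r)\psi$, and record the two invariants of the problem: the Newman--Penrose constant $I_{\rm NP}[\phi]=\lim_{r\to\infty}r^2\partial_v(r\phi)$, conserved along $\mathcal{I}^+$, and the Aretakis constant $H_0[\phi]=\partial_r(r^2\partial_r\phi)|_{\mathcal{H}^+}$, conserved along $\mathcal{H}^+$.

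Next, I would run the two hierarchies in parallel. Near $\mathcal{I}^+$, the standard $r^p$-estimates with $p\in[0,2]$ control weighted fluxes of $\psi$; the saturation at $p=2$ is governed by $I_{\rm NP}$ and produces the initial $u^{-1}$-type decay of $r\phi|_{\mathcal{I}^+}$. Near $\mathcal{H}^+$, because the red-shift degenerates at extremality, one instead derives a \emph{horizon hierarchy} with weights $(r-M)^{-p}$: commuting with a vector field transverse to $\mathcal{H}^+$ and exploiting that $H_0$ is conserved yields integrated decay for $\phi|_{\mathcal{H}^+}$. Patching these two ends through the bulk requires an integrated local energy decay (Morawetz) estimate valid uniformly up to the horizon — this is available for $\phi_0$ in the spherically symmetric reduction and provides the first round of pointwise decay.

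To sharpen these rates to $v^{-1}$ and $u^{-2}$ and to pin down the coefficients, I would introduce a time-integrated solution $\phi^{(1)}$ satisfying $T\phi^{(1)}=\phi$, with $T$ the stationary Killing field, and prescribed so as to lie in the same decay class. A direct computation from the transport equation on $\mathcal{H}^+$ shows that the Aretakis charge of $\phi^{(1)}$ is proportional to $H_0[\phi]$, while the Newman--Penrose constant of $\phi^{(1)}$ is precisely the quantity $Q$. Re-running the hierarchies for $\phi^{(1)}$ and integrating in $\tau$ gains one power of decay for $\phi$, and matching the leading-order asymptotic expansions at both ends then reads off the coefficients $H_0$ at $\mathcal{H}^+$ and $H_0+Q$ at $\mathcal{I}^+$.

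The main obstacle is the degeneracy at the horizon: with no red-shift available, one cannot transport interior decay to $\mathcal{H}^+$ by standard means, and the $(r-M)^{-p}$ hierarchy must be carried out with weights sharp enough to saturate at the Aretakis constant without losing integrability. A secondary and more delicate point is that the time-inverted $\phi^{(1)}$ must be constructed so that its initial data have the correct asymptotic decay at spatial infinity and satisfy the hypothesis of the horizon hierarchy; ensuring both conditions simultaneously, and then verifying that the coefficients surviving the limit are genuinely $H_0$ and $Q$ (rather than further error contributions from the bulk), is where the most careful bookkeeping is required.
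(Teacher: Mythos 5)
This is a theorem quoted from \cite{paper4}; the present paper does not reprove it, so the comparison is to the argument in that reference. Your overall architecture — decomposition into spherical harmonics, passage to $\psi=r\phi$, the Newman--Penrose conservation law at $\mathcal{I}^+$ together with the Aretakis conservation law at $\mathcal{H}^+$, $r^p$-weighted estimates far away paired with a dual $(r-M)^{-p}$-weighted hierarchy near the degenerate horizon, a Morawetz estimate patching the two, and a time-integration step to upgrade the rates — is indeed the backbone of \cite{paper4}, so in broad strokes the approach is the right one.

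There is, however, a concrete gap in the step where you introduce $\phi^{(1)}$. You claim that ``the Aretakis charge of $\phi^{(1)}$ is proportional to $H_0[\phi]$.'' Write the spherically symmetric Aretakis quantity as $H[\phi]:=\bigl(M\,\partial_r\phi+\phi\bigr)\big|_{\mathcal{H}^+}$ (or the equivalent second-order normalization); conservation means $\partial_v H[\phi]=0$. If $\phi^{(1)}$ is a genuine solution of the homogeneous wave equation satisfying $T\phi^{(1)}=\phi$, then the same conservation law applied to $\phi^{(1)}$ gives $0=\partial_v H[\phi^{(1)}]=H[T\phi^{(1)}]=H[\phi]=H_0$, i.e.\ the existence of such a $\phi^{(1)}$ \emph{forces} $H_0=0$. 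Equivalently, since $\phi|_{\mathcal{H}^+}\sim H_0 v^{-1}$ is not integrable in $v$, the naive time integral diverges logarithmically at $\mathcal{H}^+$ for generic data with $H_0\neq 0$. So the construction of $\phi^{(1)}$ cannot proceed as you describe, and the coefficient $H_0$ in the $\mathcal{I}^+$ tail cannot be produced as ``the Aretakis charge of $\phi^{(1)}$.'' In \cite{paper4} the $v^{-1}$ tail along $\mathcal{H}^+$ is derived directly from the Aretakis transport equation (no time inversion is needed for that part), while the time integral is constructed by elliptic inversion on an initial hypersurface — precisely because the forward definition fails — and only the part of $\phi$ with vanishing Aretakis charge is effectively time-inverted; the $H_0$ contribution to the $u^{-2}$ coefficient at $\mathcal{I}^+$ then emerges from the coupling between the near-horizon and far-field asymptotics, not from any Aretakis quantity of $\phi^{(1)}$. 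You do gesture at the delicacy of constructing $\phi^{(1)}$ so that it lies in the correct decay class, but the specific mechanism you assign to $H_0$ is inconsistent with the conservation-law algebra and would need to be replaced.
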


Theorem \ref{thm:ERNtails} demonstrates that despite the presence of asymptotic instabilities along $\mathcal{H}^+$, precise late-time decay results can be obtained. In particular, the late-time tails along $\mathcal{I}^+$ can be split into two parts: a part related to the conservation law and instability along $\mathcal{H}^+$, which comes with a constant $H_0$, and a part which is also present in the sub-extremal setting and comes with a constant $Q$ that may be interpreted as the Newman--Penrose conserved quantity at $\mathcal{I}^+$ associated to the \emph{integral in time} of the spherically symmetric part of $\phi$. Unlike $H_0$, the constant $Q$ can be expressed as an integral along the entire initial data hypersurface.

Due to the strong similarity of the analysis of axisymmetric solutions to \eqref{eq:waveeqintro} with general solutions to the geometric wave equation on extremal Reissner--Nordstr\"om, it is expected that the results of Theorem \ref{thm:ERNtails} can be extended to axisymmetric solutions to \eqref{eq:waveeqintro}. The rates appearing in Theorem \ref{thm:ERNtails} may be contrasted with the more intricate and slower decay in Theorem \ref{thm:tails} and the absence of the time-inverted Newman--Penrose constant $Q$ in the leading-order asymptotics.

Recently, the analogue of the instability in (i) of Theorem \ref{thm:are} was also shown to hold in the context of the linearized Einstein--Maxwell equations on extremal Reissner--N\"ordstrom \cite{ap22}. In the nonlinear setting, the techniques developed in the proof of Theorem \ref{thm:ERNtails} have been applied to show that the instability remains asymptotic in the context of a nonlinear problem that models the nonlinear structure present in the Einstein equations \cite{aagnonlin}.\footnote{The asymptotic instability can also lead to a finite-in-time instability for nonlinear equations that have a different nonlinear structure; see \cite{are13}.}

The implications of the behaviour in Theorem \ref{thm:ERNtails} to the singularity properties in the black hole interior of extremal Reissner--Nordstr\"om and axisymmetric solutions to \eqref{eq:waveeqintro} was studied in \cite{gajic, gajic2, dejanjon1}. In particular, it was shown that the Cauchy horizon in this setting is \emph{more stable} than in the sub-extremal setting and the regularity of solutions at the Cauchy horizon may depend rather delicately on the leading and next-to-leading order late-time tails along $\mathcal{H}^+$.\\
\\
\paragraph{\textbf{Sub-extremal black holes}}
The mathematical analysis of the geometric wave equation on sub-extremal black holes has a rich history. Uniform boundedness and decay estimates for solutions to the wave equation on the full sub-extremal range of Kerr parameters were obtained by Dafermos--Rodnianski--Shlapentokh-Rothman in \cite{part3}; see also the references therein for a comprehensive list of intermediate and related results. 

In \cite{part3}, the following key mechanisms were exploited to establish weak, integrated energy estimates or \emph{Morawetz estimates} in sub-extremal Kerr: 1) the red-shift effect (first discussed as a decay mechanism in \cite{redshift}), 2) the decoupling of \emph{null geodesic trapping} and \emph{superradiance} in Fourier space, and 3) quantitative mode stability, first established qualitatively in \cite{whiting} and then quantitatively in \cite{sr15}. 

Passing from integrated energy energy estimates to inverse-polynomial energy decay in time involves an application of the $r$-weighted estimates of Dafermos--Rodnianski \cite{newmethod}, which are not restricted to Kerr spacetimes, but hold on general asymptotically flat spacetimes, as shown in \cite{moschidis1}. The analysis in \cite{part3} has recently also been extended to the setting of the Teukolsky equations of linearized gravity \cite{shlcosta20,shlcosta23}.

The step from uniform upper bound decay estimates to precise leading-order decay estimates and late-time tails (analogous to Theorems \ref{thm:tails} and \ref{thm:ERNtails}) was first made in \cite{paper1,paper2} in the context of spherically symmetric, sub-extremal black hole spacetimes and then extended to sub-extremal Kerr in \cite{aagkerr}, where it was moreover explored how the late-time tails corresponding to different angular modes interact; see also \cite{hintzprice} for a different approach that involves precise resolvent estimates near the zero time-frequency.

A useful subset of the sub-extremal Kerr family to consider first are the slowly rotating Kerr spacetimes, which have a rotation parameter $a$ that satisfies $|a|\ll M$. These include the Schwarzschild spacetimes for which $a=0$. Due to the smallness of the parameter $a$, the issue of superradiance and mode stability, which are absent when $a=0$, are easier to handle. Significant advances have been made in this setting regarding decay and stability at the level of the nonlinear Einstein equations. The nonlinear stability of Schwarzschild was proved for polarized axisymmetric initial data in \cite{klainerman17}. Data without symmetry were considered in \cite{dhrt21} and the full, codimension 3 subset of initial data perturbations were identified for which stability holds. More recently, a proof of nonlinear stability of slowly rotating Kerr without symmetry restrictions was completed in a series of works \cite{ks21, ks22a, ksg22,ks22b, sh22}.\\
\\
\paragraph{\textbf{Inverse-square potentials}}
A setting that roughly models the situation ``between'' sub-extremal Kerr and extremal Kerr are wave equations with negative inverse-square potentials on Schwarzschild spacetimes. Here, future null infinity $\mathcal{I}^+$ plays the role of the event horizon $\mathcal{H}^+$ in extremal Kerr. The absence of conservation laws for non-axisymmetric solutions along $\mathcal{H}^+$ in extremal Kerr is reflected in the breakdown of the Newman--Penrose conservation law at $\mathcal{I}^+$. The proofs for deriving the precise late-time behaviour of solutions to the wave equation without potential (see \cite{aagkerr} and the earlier works \cite{paper1, paper2}) rely crucially on the existence of this conservation law, and they therefore break down. In \cite{gaj22a}, a new, more robust, method was developed for dealing with the absence of conservation laws and deriving late-time tails which depend on the coefficient in the leading-order term of the inverse square potential. See also upcoming work on establishing these tails in a different way for wave equations with inverse-square potentials and electromagnetically charged wave equations on Minkowski \cite{gvdm23} and see \cite{brm22} for recent work on late-time tails for wave equations with inverse-square potentials on Minkowski.\\
\\
\paragraph{\textbf{Numerics and heuristics}}
The mathematical analysis described in the above paragraphs was preceded by a large number of works with numerical and heuristic analysis in the physics literature. In particular, the pointwise blow-up of transversal derivations in the azimuthal instability of Theorem \ref{thm:azimuthal} was predicted from a heuristic analysis of the Teukolsky equation in frequency space near the frequency $\omega=m \upomega_+$ by Casals--Gralla--Zimmerman in \cite{zimmerman1}, who moreover predicted the decay rates of late-time tails along $\mathcal{H}^+$ consistent with Theorem \ref{thm:tails}. Growth along $\mathcal{H}^+$ with consistent rates was also observed numerically in \cite{hbb13}. See also the related heuristics in \cite{harveyeffective} on instabilities in the near-horizon spacetimes corresponding to extremal Kerr.

The late-time behaviour in Theorem \ref{thm:tails} was first predicted (in bounded regions in space, away from $\mathcal{H}^+$) by Andersson--Glampedakis \cite{mhighinsta} via a heuristic analysis of slowly decaying \emph{quasinormal modes} (``zero-damped modes''; see for example \cite{hod08} for a discussion) on near-extremal Kerr black holes with complex time-frequencies that are expected to accumulate to the real frequency $\omega=m\upomega_+$. While each quasinormal mode decays exponentially in time, their decay rate goes to zero in the extremal limit. The analysis in \cite{mhighinsta} suggested that the \emph{sum} over these modes should give rise to an inverse-polynomial tail with decay rate $t^{-1}$ and an overall  oscillating factor $e^{-i m \upomega_+ t}$. 

The analysis of \cite{mhighinsta} was later revisited, heuristically and numerically, in \cite{zeni13}. The numerics in \cite{zeni13} predicted moreover that the slowly-decaying late-time tail could constitute a dominant part of $\phi$ at relatively \emph{early} times, for initial data that are relevant when modelling black hole mergers. This stands in contrast with the expectation that the late-time tails in sub-extremal Kerr become dominant only at relatively \emph{late} times. This strengthens further the case for slowly decaying late-time tails in extremal Kerr serving as a promising observational signature in the gravitational radiation emitted by astrophysical systems settling down to near-extremal Kerr black holes.

Note that \emph{nonlinear} aspects of the dynamics of extremal Reissner--Nordstr\"om black holes were probed numerically in \cite{harvey2013}.

\subsection{The main difficulties compared to the axisymmetric and sub-extremal settings}
\label{sec:maindiff}
In this section, we give an overview of the key difficulties enountered when proving Theorems \ref{thm:azimuthal} and \ref{thm:tails}. We will revisit these difficulties and outline how they are resolved in the present paper in \S \ref{sec:sketchproof}.\\
\\
\paragraph{\textbf{Integrated energy estimates: absence of (degenerate) red-shift}}
A key ingredient towards establishing energy decay estimate on sub-extremal black holes is the red-shift estimate, which, schematically, gives the following integrated energy estimate for solutions $\phi$ to the wave equation:
\begin{multline*}
\kappa \int_{0}^{T}\left[\int_{\Sigma_{\tau}\cap \{r_+\leq r\leq r_++\epsilon\}}|\phi|^2+|\partial \phi|^2 d\mu_{\Sigma_{\tau}}\right]\,d\tau\lesssim \int_{\Sigma_{0}\cap \{r_+\leq r\leq r_++2\epsilon\}}|\phi|^2+|\partial \phi|^2 d\mu_{\Sigma_{\tau}}\\
+\int_{0}^{T}\left[\int_{\Sigma_{\tau}\cap \{r_++\epsilon\leq r\leq r_++2\epsilon\}}|\phi|^2+|\partial \phi|^2 d\mu_{\Sigma_{\tau}}\right]\,d\tau.
\end{multline*}
Here $r_+$ is the radius of the event horizon $\mathcal{H}^+$, $\epsilon\ll r_+$ and $d\mu_{\Sigma_{\tau}}$ is the induced volume form on the spacelike hypersurface or time slice $\Sigma_{\tau}$. The constant $\kappa>0$ denotes the surface gravity associated to the future event horizon $\mathcal{H}^+$, which appears as a proportionality constant in the following equation:
\begin{equation*}
\nabla_KK=\kappa K,
\end{equation*}
where $K$ is a Killing vector field which is a null generator of the event horizon. The above integrated estimate allows one to extend integrated energy estimates valid in $\{r>r_++\epsilon\}$ up to $\{r\geq r_+\}$.

In extremal Kerr, $\kappa=0$, so the red-shift estimate breaks down. Nevertheless, degenerate versions of the red-shift estimate remain valid for axisymmetric solutions to \eqref{eq:waveeqintro}. For example,
\begin{multline*}
\kappa \int_{0}^{T}\left[\int_{\Sigma_{\tau}\cap \{r_+\leq r\leq r_++\epsilon\}}|\phi|^2+(r-r_+)|\partial \phi|^2 d\mu_{\Sigma_{\tau}}\right]\,d\tau\lesssim \int_{\Sigma_{0}\cap \{r_+\leq r\leq r_++2\epsilon\}}|\phi|^2+|\partial \phi|^2 d\mu_{\Sigma_{\tau}}\\
+\int_{0}^{T}\left[\int_{\Sigma_{\tau}\cap \{r_++\epsilon\leq r\leq r_++2\epsilon\}}|\phi|^2+|\partial \phi|^2 d\mu_{\Sigma_{\tau}}\right]\,d\tau.
\end{multline*}
For non-axisymmetric solutions, \underline{even the degenerate versions of the red-shift estimate break down}. Furthermore, while in the case of axisymmetric solutions $\phi_0$, the energy current corresponding to $K$ is non-negative near $\mathcal{H}^+$, i.e.\
\begin{equation*}
\mathbb{T}[\phi_0](K,\mathbf{n}_{\Sigma_{\tau}})\geq 0,
\end{equation*}
with $\mathbb{T}$ the energy momentum tensor and $\mathbf{n}_{\Sigma_{\tau}}$ the normal to the spacelike hypersurfaces $\Sigma_{\tau}$, the non-negativity property breaks down for non-axisymmetric solutions, since $K$ is not causal in any open neighbourhood of $\mathcal{H}^+$.\\
\\
\paragraph{\textbf{Integrated energy estimates: superradiance}}
Let $T$ denote the Killing vector field that is timelike and normalized at $r=\infty$. In all rotating ($a\neq 0$) spacetimes, $T$ fails to be causal in the \emph{ergoregion}
\begin{equation*}
r<M\left(1+\sqrt{1+a^2\cos \theta}\right),
\end{equation*}
with $r$ and $\theta$ Boyer--Lindquist coordinates. The existence of an ergoregion leads to \emph{superradiance}, which is an energy amplification effect that may be interpreted as ``energy extracted from the black hole''. At the level of the Fourier transform of azimuthal modes $\phi_m$ along integral curves of $T$, superradiance affects the following frequency range:
\begin{equation*}
0< m\omega< m^2\upomega_+.
\end{equation*}
Note in particular that for axisymmetric $\phi$, $m=0$, so the above frequency range is empty and superradiance is not present. For bounded $m$, we can divide solutions into those with \emph{high} angular frequencies $\Lambda$ and those with bounded \emph{angular} frequencies $\Lambda\lesssim m^2$ (see \S \ref{sec:freqspacean} for a precise definition of the angular frequency $\Lambda$). In both cases, the analysis from the sub-extremal setting \emph{breaks down} and new techniques are required to obtain the desired estimates in Fourier space.
\\
\\
\paragraph{\textbf{Instabilities: absence of conservation laws}}
The instability mechanism underlying the proof of (i) of Theorem \ref{thm:are} relies on the existence of conservation laws along $\mathcal{H}^+$ for axisymmetric $\phi$ in extremal Kerr. For this reason, asymptotic instabilities for transversal derivatives $\partial^2 \phi_0$ along $\mathcal{H}^+$ follow immediately from the conservation of an appropriate linear combination of $\phi$ and $\partial \phi$, together with a \emph{qualitative} decay statement for $\phi$ and its tangential derivatives along $\mathcal{H}^+$.

This stands in stark contrast with non-axisymmetric solutions, for which there is no conservation law available along $\mathcal{H}^+$. Deriving an instability in this setting requires instead sharp, \emph{quantitative} decay statements for $\phi$ and its tangential derivatives along $\mathcal{H}^+$. That is to say, one needs to derive the precise late-time tails for $\phi$ before deriving the existence of instabilities.\\
\\
\paragraph{\textbf{Late-time tails and time integrals}}
Once weak, integrated estimates have been established for $\phi$, one needs a decay mechanism to derive sharp decay estimates and to derive the precise leading-order behaviour (late-time tails) of $\phi$. In the sub-extremal setting, such a mechanism was developed in \cite{paper1, paper2, aagkerr} and it revolved around \emph{time integrals of $\phi$}, which are integrals of $\phi$ along the integral curves of the vector field $T$, together with appropriate hierarchies of $r$-weighted energy estimates far away from the black hole, first derived in \cite{newmethod}, and an exploitation of conservation laws along future null infinity $\mathcal{I}^+$.

In the case of axisymmetric $\phi$ and, analogously, solutions to the wave equation on extremal Reissner--Nordstr\"om, a similar mechanism can be exploited, which depends also on the existence of hierarchies of degenerate energy estimates near $\mathcal{H}^+$ and conservation laws along $\mathcal{H}^+$.

In view of the aforementioned breakdown of the degenerate red-shift estimates and the absence of conservation laws for non-axisymmetric $\phi$, \underline{the sub-extremal decay mechanism breaks down}. Furthermore, whereas there is an unambiguous notion of ``time integral'' for axisymmetric $\phi$ in extremal Kerr, associated to the vector field $T$, which agrees with $K$ when acting on axisymmetric $\phi$, there are multiple possible notions of time integrals in the non-axisymmetric, extremal setting.

\subsection{Remaining questions}
\label{sec:remainingq}
We list here some remaining questions and problems, which are not addressed in the present paper, but will be explored in future investigations.

\begin{itemize}
\item \emph{Uniform frequency-space analysis on near-extremal Kerr.} Extending the frequency-space analysis in the present paper to $|a|\leq M$ would allow one to prove the qualitative integrability assumptions of Theorem \ref{thm:tails} and to rule out instabilities \textbf{(b)} and \textbf{(c)} in Theorem \ref{thm:azimuthal}. While most of the frequency-space analysis in the present paper includes near-extremal spacetimes, it remains to revisit the estimates in the bounded, frequency regime, with $|\omega-m\upomega_+|\ll 1$ and extend them to the $|a|\leq M$ setting.
\item \emph{Uniform estimates in $m$.} The analysis in the present paper is concerned with fixed azimuthal modes, or bounded $m$. This means that we do not keep careful track of the $m$-dependence in all the estimates. Extending the results of the present paper uniformly in $m$ would require revisiting the analysis in the frequency regime $|\omega-m\upomega_+|\ll |m|$, which is then no longer a bounded frequency regime. In view of the existence of trapped null geodesics arbitrarily close to $\mathcal{H}^+$, one would expect any extension of the integrated energy estimates to unbounded $m$ to come with an additional loss of azimuthal $\partial_{\varphi}$ derivatives. As this frequency regime is both superradiant and is affected by trapping, it could also lead to more severe deviations from the bounded-$m$ setting and perhaps even to stronger instabilities that are not visible for bounded $m$.
\item \emph{Nonlinear analysis.} In order to address the effects on nonlinearities in the Einstein equations to the presence of azimuthal instabilities in the linearized setting, one could consider model nonlinear wave equations of the form:
\begin{equation*}
\square_g\phi= A(\phi)\cdot g^{-1}(d\phi,d\phi),
\end{equation*}
which satisfy the \emph{null condition} and are therefore consistent with the nonlinear structure present in the Einstein equations. Even when considering initial data supported on a fixed azimuthal mode $\phi_m$, all the other modes are excited due to the presence of the nonlinearity and one needs to first understand uniformly what happens to the estimates in the present paper when $m\to \infty$.

The additional difficulties faced when considering $m\to \infty$ notwithstanding, we can make the optimistic supposition for a moment that the growth and decay estimates in Theorems \ref{thm:azimuthal} and \ref{thm:tails} remain valid uniformly in $m$.

Since the nonlinearity is of the form $g^{-1}(d\phi,d\phi)$, growing transversal derivatives along the event horizon along $\mathcal{H}^+$ are not immediately fatal as will be multiplied by decaying tangential derivatives, so it is \emph{in principle} possible to obtain global existence and uniqueness of solutions and show that the behaviour of the linear problem persists. Indeed, the above null structure of the nonlinearity was used in \cite{aagnonlin} to show that the instability in extremal Reissner--Nordstr\"om remains asymptotic, just like in the linear problem.

Given the slower decay and stronger azimuthal instabilities in the present paper, the methods of \cite{aagnonlin} break down and new methods need to be developed to perform an analysis of nonlinear wave equations satisfying the null condition. The expected loss of derivatives in integrated energy estimates when considering the limit $m\to \infty$ and the resulting loss of derivatives in (degenerate) energy boundedness estimates would form additional difficulties for the nonlinear problem.
\end{itemize}
\subsection{Outline of the remainder of the paper}
We provide here an outline of the structure of the remainder of the paper.

\begin{itemize}
\item In \S\ref{sec:prelim1} and \S \ref{sec:prelim2} we introduce the necessary geometric preliminaries and basic facts regarding linear wave equations on Kerr spacetimes, as well as the main notation in the rest of the paper. We moreover derive some properties of the coefficients $\alpha_{\ell m}$ from Theorem \ref{thm:tails} and discuss stationary, or $K$-invariant solutions to the wave equation on extremal Kerr.
\item Then, we state precise versions of Theorems \ref{thm:azimuthal} and \ref{thm:tails} in \S \ref{sec:precstatthm} and sketch how they are proved in \S \ref{sec:sketchproof}.
\item The frequency-space analysis of the paper is fully contained in \S \ref{sec:freqspacean}, resulting in integrated energy estimates in physical space, which are derived in \S \ref{sec:intenestphys}.
\item The key hierarchies of $r$-weighted (in far-away regions) and $(r-r_+)$-weighted (in near-horizon regions) energy estimates are derived in \S \ref{sec:rweight}. Together with the integrated energy estimates in  \S \ref{sec:intenestphys}, these form the key ingredient for deriving energy decay estimates for $K$-derivatives in \S \ref{sec:edecay}.
\item In \S \ref{sec:ellipticKinv}, we construct time integrals of $\phi_m$ along integral curves of the vector field $K$ and we derive the necessary elliptic-type estimates. Then we combine these with the energy decay estimates from \S \ref{sec:edecay} in \S \ref{sec:decay} to obtain decay (energy and pointwise) estimates for $\phi_m$.
\item Finally, we use the precise energy and pointwise decay estimates from \S \ref{sec:decay} to derive azimuthal instabilities in \S \ref{sec:instab}.
\end{itemize}

\subsection{Acknowledgements}
The author would like to thank Rita Teixeira da Costa and Yakov Shlapentokh-Rothman for useful discussions.

\section{Preliminaries: geometry and foliations}
\label{sec:prelim1}
In this section, we introduce the geometric properties that are required for the analysis in the paper.
\subsection{The spacetime metric}
In this section, we introduce the 2-parameter family of \emph{Kerr spacetimes} $(\mathcal{M}_{M,a},g_{M,a})$ and the corresponding geometric notation. 

Let $M>0$ and $a\in \R$ with $|a|\leq M$. Then $\mathcal{M}_{M,a}$ is the following manifold-with-boundary:
\begin{align*}
\mathcal{M}_{M,a}=&\:\R\times [r_+,\infty)\times \s^2,\\
r_+=&\:M+\sqrt{M^2-a^2}.
\end{align*}
The constant $M$ is the \emph{mass} of the Kerr black hole and $a\cdot M$ can be considered its \emph{specific angular momentum}, or its angular momentum/mass.

We can cover $\mathcal{M}_{M,a}$ with coordinates $(v,r,\theta,\varphi_*)$, where $v\in \R$, $r\in [r_+,\infty)$ and $(\theta,\varphi_*)$ denote standard spherical coordinates on the unit round 2-sphere $\s^2$ (which degenerate at a half circle connecting the north and south pole). In these coordinates, the Lorentzian metric $g_{M,a}$ takes the following form:
\begin{multline}
\label{eq:metric}
g_{M,a}=-\rho^{-2}(\Delta -a^2 \sin^2\theta)dv^2+2dvdr-4Ma r\rho^{-2}\sin^2\theta dv d\varphi_*-2a\sin^2\theta drd\varphi_*\\
+\rho^2d\theta^2+\rho^{-2}((r^2+a^2)^2-a^2\Delta\sin^2\theta)\sin^2\theta d\varphi_*^2,
\end{multline}
with
\begin{align*}
\rho^2=&\:r^2+a^2\cos^2\theta,\\
\Delta=&\: r^2-2Mr+a^2=(r-r_+)(r-r_-),\\
r_-=&\:M-\sqrt{M^2-a^2}.
\end{align*}
The above local expression for $g_{M,a}$ is called the \emph{ingoing Eddington--Finkelstein} form of the Kerr metric.

The 1-parameter subfamily $g_{|a|,a}$ of Kerr spacetimes for which $M=|a|$ are known as the \emph{extremal Kerr spacetimes}.

The inverse metric $g^{-1}_{M,a}$ corresponding to $g_{M,a}$ then takes the following form:
\begin{multline}
\label{eq:invmetric}
g^{-1}_{M,a}=a^2\rho^{-2}\sin \theta^2\theta \partial_v\otimes \partial_v+\rho^{-2}(r^2+a^2)[\partial_v\otimes\partial_r+\partial_r\otimes \partial_v]+\rho^{-2}\Delta\partial_r\otimes \partial_r\\
+a\rho^{-2}[(\partial_v+\partial_r)\otimes \partial_{\varphi_*}+\partial_{\varphi_*}\otimes (\partial_v+\partial_r)]+\rho^{-2}[\partial_{\theta}\otimes\partial_{\theta}+\sin^{-2}\theta\partial_{\varphi_*}\otimes \partial_{\varphi_*} ].
\end{multline}

We introduce the following notation for submanifolds of $\mathcal{M}_{M,a}$:
\begin{align*}
\mathcal{H}^+:=&\:\{r=r_+\}\quad \textnormal{is called the (future) event horizon},\\
\mathring{\mathcal{M}_{M,a}}:=&\:\mathcal{M}_{M,a}\setminus \mathcal{H}^+,\\
S^2_{v',r'}:=&\:\{r=r'\}\cap\{v=v'\}\:\:\textnormal{with $v'\in \R$ and $r'\in [r_+,\infty)$ are diffeomorphic to $\s^2$}.
\end{align*}
We will moreover consider the following functions:
\begin{align*}
r_*=&\:\int_{3M}^r \frac{r'^2+a^2}{\Delta}\,dr' \quad \textnormal{on $\mathring{\mathcal{M}}_{M,a}$},\\
t=&\:v-r_*\quad \textnormal{on $\mathring{\mathcal{M}}_{M,a}$},\\
u=&\:v-2r_*=t-r_*\quad \textnormal{on $\mathring{\mathcal{M}}_{M,a}$},\\
\varphi=&\: \varphi_*+\int_r^{\infty}\frac{a}{\Delta}\,dr'\:\:\textnormal{mod}\,2\pi \quad \textnormal{on $\mathring{\mathcal{M}}_{M,a}$},\\
\widetilde{\varphi}=&\:\varphi_*-\upomega_+ v\:\:\textnormal{mod}\,2\pi \quad \textnormal{on $\mathcal{M}_{M,a}$},
\end{align*}
where $\upomega_+=\frac{a}{r_+^2+a^2}$ can be interpreted as the ``angular velocity of the Kerr black hole'', assigned by stationary observers at infinity. The coordinates $(v,r,\theta,\tphi)$ may be interpreted as ``rotating with the Kerr black hole'', because the integral curves of the null generator of $\mathcal{H}^+$ correspond to the level sets $\{(v,r,\theta,\tphi)=(v,r_+,\theta_0,\tphi_0)\}$, with $(\theta_0,\tphi_0)$ constant.

The functions $(t,r,\theta,\varphi)$ define coordinates on the manifold $\mathring{\mathcal{M}}_{M,a}$ and are called \emph{Boyer--Lindquist coordinates}.

To simplify the notation, we will frequently omit one parameter in the 2-parameter family of spacetimes by fixing $r_+=1$, which implies that:
\begin{align*}
r_-=&\:a^2,\\
M=&\:\frac{1}{2}(1+a^2).
\end{align*}
Then extremal Kerr spacetimes correspond simply to the choices $a=\pm 1$.

\subsection{Foliations and conformal coordinates}
\label{sec:foliations}
In this section, we introduce a foliation of $\mathcal{M}_{M,a}$ by appropriate isometric spacelike hypersurfaces, i.e.\ a ``time-slicing''.

Let $\mathbbm{h}: [r_+,\infty)\to\R_+$ be a smooth function, such that for all $r\in [r_+,\infty)$:
\begin{align*}
\frac{2(r^2+a^2)}{\Delta}-\mathbbm{h}(r)=&\:O_{\infty}(r^{-2}),\\
\mathbbm{h}(r)\Delta \left(\frac{2(r^2+a^2)}{\Delta}-\mathbbm{h}(r)\right)>&\:a^2\sin^2\theta,\\
\end{align*}
Then we define the \emph{time function} $\tau$ on $\mathcal{M}_{M,a}$ as follows:
\begin{equation*}
\tau=v-\int_{r_+}^r \mathbbm{h}(r')\,dr'-v_0,
\end{equation*}
with $v_0\in \R_+$. Define
\begin{equation*}
\Sigma_{\tau'}=\{\tau=\tau'\}.
\end{equation*}
\textbf{The analysis in the present paper, will be restricted to the spacetime subset:}
\begin{equation*}
\mathcal{R}:=\{\tau\geq 0\}.
\end{equation*}
We also denote
\begin{equation*}
\Sigma:=\Sigma_{0}.
\end{equation*}
See Figure \ref{fig:penrose} below for a picture.
\begin{figure}[H]
	\begin{center}
\includegraphics[scale=0.75]{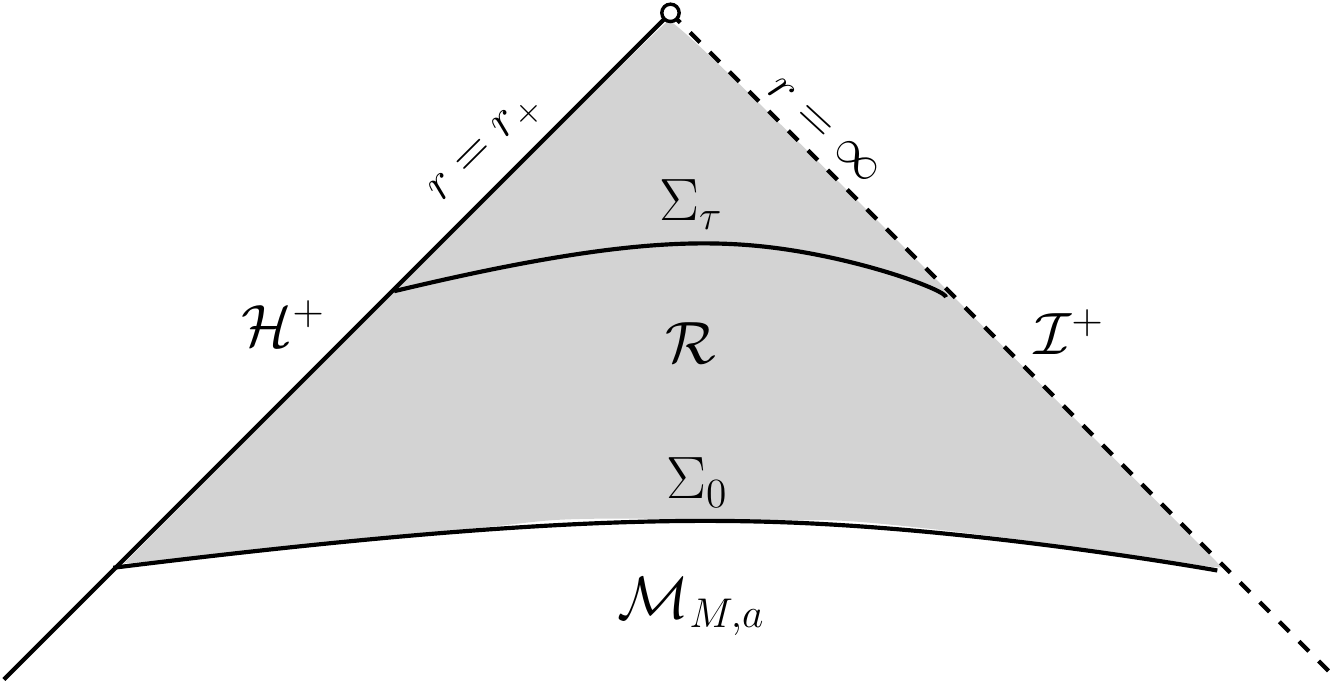}
\end{center}
\caption{A 2-dimensional representation of the region under consideration in the spacetime $\mathcal{M}_{M,a}$, with the hypersurfaces $\Sigma_{\tau}$ and the shaded region depicting $\mathcal{R}$. Each point in the picture represents a 2-sphere, composed of all points with $(v,r)$ constant.}
	\label{fig:penrose}
\end{figure}

\subsection{Key vector fields}
\label{sec:vf}
Kerr spacetimes are stationary and axisymmetric, which is reflected in the $v$- and $\varphi_*$-independence of the metric components in \eqref{eq:metric}. The following vector fields are therefore Killing vector fields:
\begin{align*}
T:=&\:\partial_v,\\
\Phi:=&\:\partial_{\varphi_*},
\end{align*}
with respect to the coordinates $(v,r,\theta,\varphi_*)$.

The linear combination:
\begin{equation*}
K:=T+\upomega_+\Phi
\end{equation*}
plays a particularly important role, as it is a null generator of $\mathcal{H}^+$. Note that $K=\partial_v$ in $(v,r,\theta,\tphi)$ coordinates.

We moreover introduce:
\begin{align*}
Y=&\:\partial_r,\\
\underline{L}=&-\frac{\Delta}{2(r^2+a^2)}Y,\\
L=&\: T+\frac{a}{r^2+a^2}\Phi-\underline{L},\\
\end{align*}
Note that $L(v)=-\Lbar(u)=1$. It is straightforward to verify the following causality properties for $r_+=1$:
\begin{align*}
g(T,T)<&\:0\quad \textnormal{in}\quad \{\Delta-a^2\sin^2\theta>0\} \subset \{r<2M\},\\
g(L,L)=&\:g(\Lbar,\Lbar)=0,\\
g(K,K)=&\:-\frac{(2+a^2+a^2\cos(2\theta))^2}{4(1+a^2)^2}(1-a^2)(r-1)+\left(-1+\frac{a^2\sin^2\theta(6+2a^2-a^2\sin^2\theta)}{1+a^2}\right)(r-1)^2\\
&+O((r-1)^3)\quad \textnormal{when fixing $r_+=1$}.
\end{align*}
Note therefore that $K$ is null along $\mathcal{H}^+$ ($r=1$) and, in sub-extremal Kerr spacetimes, it extends as a timelike vector field in a neighbourhood of the form $\{r-r_+<\delta_a\}$, with $\delta_a>0$ suitably small (depending on $a$). In extremal Kerr spacetimes, on the other hand, $K$ is not uniformly timelike or spacelike in a region of the form $\{r-r_+\leq \delta\}$, no matter how small we take $\delta>0$.

We denote further:
\begin{align*}
X:=&\:Y+\mathbbm{h}T,\\
\widetilde{X};=&\:Y+\mathbbm{h}K=X+\upomega_+\mathbbm{h}\Phi.
\end{align*}
The vector fields $X$ and $\tX$ are tangential to $\Sigma_{\tau}$. Note that $X=\partial_r$ in $(\tau,r,\theta,\varphi_*)$ coordinates and $\widetilde{X}=\partial_r$ in $(\tau,r,\theta,\widetilde{\varphi})$ coordinates.

Finally, we denote
\begin{equation*}
{\mathfrak{X}}_*=\frac{\Delta}{r^2+a^2}Y+T+\frac{a}{r^2+a^2}\Phi.
\end{equation*}
We can express ${\mathfrak{X}}_*=\partial_{r_*}$ in $(t,r_*,\theta,\varphi)$ coordinates.

We moreover introduce the Carter operator
\begin{equation*}
Q:=\slashed{\Delta}_{\s^2}-\Phi^2+\sin^2\theta T^2.
\end{equation*}
From standard elliptic estimates on $\s^2$ we have the following inequality for $f\in H^2(\s^2)$:
\begin{equation*}
\int_{\s^2}|f|^2+|\snabla_{\s^2}f|^2+|\snabla_{\s^2}^2f|^2\,d\sigma\leq C\int_{\s^2}|Qf|^2+|\Phi^2f|^2+|K\Phi f|^2+|K^2 f|^2\,d\sigma.
\end{equation*}
\subsection{Differential operators on $\s^2$}
\label{sec:sphere}
We introduce the following second-order differential operator on the unit round 1-sphere $\s^2$:
\begin{equation}
\label{eq:anghorop}
\slashed{\mathcal{D}}(\cdot):=\slashed{\Delta}_{\s^2}(\cdot)+a^2 \upomega_+^2\sin^2\theta \Phi^2(\cdot),
\end{equation}
where $\slashed{\Delta}_{\s^2}$ is the Laplacian on the unit 1-sphere. 

The operator $\slashed{\mathcal{D}}$ appears in the purely angular part of the wave operator $\square_{g_{M,a}}$. This follows by writing down $\square_g(\cdot)$ in terms of the vector fields $(K,Y,\Phi)$; see (iv) of Lemma \ref{lm:exprboxg}.

We will now relate $\slashed{\mathcal{D}}$ to the differential operator corresponding to \emph{oblate spheroidal harmonics}. 

Let $\nu\in \R$ and consider the following operator on $L^2(\s^2)$:
\begin{equation*}
P(\nu)=-\frac{1}{\sin \theta} \partial_{\theta}(\sin \theta \partial_{\theta}(\cdot))-\frac{1}{\sin^2\theta}\partial_{\varphi}^2(\cdot)-\nu^2\cos^2\theta(\cdot).
\end{equation*}
Note that $P(\nu)$ is self-adjoint, so there exists an orthonormal basis of eigenfunctions. Since $[P(\nu),\partial_{\varphi}]=0$, we can decompose these eigenfunctions by employing eigenfunctions $\{e^{im\varphi}\}_{m\in \Z}$ of $\partial_{\varphi}$:
\begin{equation*}
S_{m \ell}(\theta;\nu)e^{i m\varphi},
\end{equation*}
with $\ell\in \N_0$ and $m\in \Z$ labels, such that $-\ell\leq m\leq \ell$. We denote the corresponding eigenvalues of $P(\nu)$ by $\lambda_{m \ell}^{(\nu)}$, with $\lambda_{m\ell}^{(\nu)}<\lambda_{m(\ell+1)}^{(\nu)}$, and note that they satisfy the inequalities:
\begin{align}
\label{eq:angev1}
\lambda_{m\ell}^{(\nu)}+\nu^2\geq&\: |m|(|m|+1),\\
\label{eq:angev2}
\lambda_{m\ell}^{(\nu)}+\nu^2\geq &\:2|m\nu|.
\end{align}

We will moreover use the following shorthand notation: let $|a|\leq M$, then
\begin{align*}
P:=&\:P\left(a\upomega_+\right)=-\frac{1}{\sin \theta} \partial_{\theta}(\sin \theta \partial_{\theta}(\cdot))-\frac{1}{\sin^2\theta}\partial_{\varphi}^2(\cdot)-a^2\upomega_+^2m^2\cos^2\theta(\cdot),\\
S_{m \ell}(\theta):=&\:S_{m \ell}\left(\theta; a m\upomega_+\right),\\
\lambda_{m\ell}:=&\:\lambda_{m\ell}^{\left(a\upomega_+\right)}.
\end{align*}
	
We can relate $P$ as follows to $\slashed{\mathcal{D}}$
\begin{equation*}
-\slashed{\mathcal{D}}(\cdot)=P(\cdot)-a^2\upomega_+^2\Phi^2,
\end{equation*}
so by \eqref{eq:angev1}, the following inequality holds:
\begin{equation*}
-\int_{\s^2}\Re(\slashed{\mathcal{D}}(S_{m \ell}e^{im\tphi}) \overline{S}_{m\ell}e^{-im\tphi})\,d\sigma\geq  |m|(|m|+1) \int_{\s^2}|S_{m\ell}|^2\,d\sigma.
\end{equation*}
We moreover have that for all $\theta\in (0,\pi)$
\begin{equation*}
\frac{1}{\sin^2\theta}+a^2\upomega_+^2\sin^2\theta\geq 1+a^2\upomega_+^2,
\end{equation*}
so we also have that
\begin{equation*}
\begin{split}
-\int_{\s^2}\Re(\slashed{\mathcal{D}}S_{m \ell}e^{im\tphi} \overline{S}_{m\ell}e^{-im\tphi})\,d\sigma=&\:\int_{\s^2}|\partial_{\theta}S_{m\ell}|^2+\left(\frac{1}{\sin^2\theta}+a^2\upomega_+^2\sin^2\theta\right)m^2|S_{m \ell}|^2\,d\sigma\\
\geq&\:  \left( 1+a^2\upomega_+^2\right)m^2\int_{\s^2}|S_{m \ell}|^2\,d\sigma.
\end{split}
\end{equation*}

Denote $\Lambda_{+,m \ell}=\lambda_{m\ell}+a^2\upomega_+^2m^2$. Then 
\begin{equation*}
\slashed{\mathcal{D}}(e^{im\varphi}S_{m\ell}(\theta))=-\Lambda_{+, m \ell}e^{im\tphi}S_{m \ell}(\theta).
\end{equation*}
and, by the above,
\begin{equation}
\label{eq:mainineqLambda}
\Lambda_{+, m\ell}\geq \max\{ (1+a^2\upomega_+^2)|m|,|m|+1\}|m|.
\end{equation}
In particular, for $r_+=1$ and $|a|=1$,
\begin{equation*}
\Lambda_{+, m\ell}\geq \max\left\{ \frac{5}{4}|m|, |m|+1\right\}|m|.
\end{equation*}

Define
\begin{equation*}
|\widetilde{\snabla}_{\s^2}(\cdot)|^2:=|{\snabla}_{\s^2}(\cdot)|^2+a^2 \upomega_+^2\sin^2\theta|\Phi(\cdot)|^2,
\end{equation*}
with $\snabla_{\s^2}$ denoting the covariant derivative with respect to the metric on the unit 1-sphere.

Let $r_+=1$ and $|a|=1$. Then the following quantity will play a key role in the paper:
\begin{equation*}
\alpha_{m \ell}:=\Lambda_{+, m\ell}-2m^2+\frac{1}{4}.
\end{equation*}
The late-time asymptotics and sharp decay rates that we will derive follow from the fact that $\alpha_{m |m|}>0$ for $|m|\geq 2$. This is proved in the lemma below.
\begin{lemma}
\label{lm:negalpha}
For $|m|=1$, $\alpha_{|m|m}\geq \frac{1}{4}$ and for $|m|\geq 2$, $\alpha_{m |m|}<0$.
\end{lemma}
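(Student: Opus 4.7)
The plan is to split into the cases $|m|=1$ and $|m|\geq 2$, exploiting the fact that in extremal Kerr with $r_+=1$ and $|a|=1$, one has $\upomega_+=\pm 1/2$, so $a^2\upomega_+^2=1/4$ and consequently
\begin{equation*}
\Lambda_{+,m\ell}=\lambda_{m\ell}+\frac{m^2}{4},\qquad \alpha_{m\ell}=\lambda_{m\ell}-\frac{7m^2}{4}+\frac{1}{4}.
\end{equation*}

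For $|m|=1$, the lower bound \eqref{eq:mainineqLambda} already suffices: with $(1+a^2\upomega_+^2)|m|=5/4$ and $|m|+1=2$, the maximum is $2$, so $\Lambda_{+,m,\ell}\geq 2$ for every $\ell\geq 1$, and hence $\alpha_{m,1}\geq 2-2+1/4=1/4$. This case is essentially free.

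For $|m|\geq 2$ the desired sign $\alpha_{m,|m|}<0$ requires instead an \emph{upper} bound on $\lambda_{m,|m|}$, for which \eqref{eq:mainineqLambda} is useless. The key observation is that, on the subspace of $L^2(\s^2)$ with fixed azimuthal dependence $e^{im\varphi}$, the spheroidal operator $P$ is a pointwise non-positive perturbation of the pure spherical Laplacian $P(0)=-\slashed{\Delta}_{\s^2}$, namely by $-a^2 m^2\upomega_+^2\cos^2\theta=-(m^2/4)\cos^2\theta$. By the variational characterisation of the bottom eigenvalue,
\begin{equation*}
\lambda_{m,|m|}\leq \lambda_{m,|m|}^{(0)}=|m|(|m|+1),
\end{equation*}
the right-hand side being the eigenvalue of $-\slashed{\Delta}_{\s^2}$ on the lowest spherical harmonic $Y_{m,|m|}$ at azimuthal number $m$. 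Substituting,
\begin{equation*}
\alpha_{m,|m|}\leq |m|(|m|+1)-\frac{7m^2}{4}+\frac{1}{4}=-\frac{3m^2-4|m|-1}{4},
\end{equation*}
and $3m^2-4|m|-1>0$ is an elementary inequality for $|m|\geq 2$ (it equals $3$ at $|m|=2$ and grows from there, since $4|m|\leq 2m^2$ for $|m|\geq 2$).

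The main ``obstacle'' is conceptual rather than technical: one must be attentive to the parameter convention in the definition of $P$, noting that on the $e^{im\varphi}$ sector the effective spheroidal parameter is $am\upomega_+$, so that the perturbation of $-\slashed{\Delta}_{\s^2}$ is unambiguously $-(m^2/4)\cos^2\theta$ and in particular pointwise non-positive. Once this is in place, the monotonicity of the bottom eigenvalue under such perturbations is standard and one does not need to resort to an explicit trial function, the exact lowest harmonic $Y_{m,|m|}$ being the optimiser for the unperturbed problem. If one instead wished to quantify $\alpha_{m,|m|}$ more sharply, a direct Rayleigh-quotient computation with the trial function $(\sin\theta)^{|m|}$ yields the explicit upper bound $\lambda_{m,|m|}\leq |m|(|m|+1)-\frac{m^2/4}{2|m|+3}$.
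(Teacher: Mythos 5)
Your proof is correct, and for the $|m|\geq 2$ case it takes a genuinely different route from the paper. The paper inserts the explicit trial function $f(\theta)=\sin\theta$ into the Rayleigh quotient for $-\slashed{\mathcal{D}}$, obtaining $m^{-2}\Lambda_{+,m|m|}\leq\frac{17}{10}+\frac{1}{2m^2}$ and hence $m^{-2}\alpha_{m|m|}\leq-\frac{3}{10}+\frac{3}{4m^2}\leq-\frac{9}{80}$ for $|m|\geq 2$. You instead observe that $P(a m\upomega_+)=P(0)-\frac{m^2}{4}\cos^2\theta$ is a non-positive form perturbation of $-\slashed{\Delta}_{\s^2}$ and invoke min-max monotonicity, which is equivalent to testing against the exact ground state $(\sin\theta)^{|m|}e^{im\varphi}$ of $P(0)$ rather than $\sin\theta\cdot e^{im\varphi}$. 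This yields $\lambda_{m|m|}\leq|m|(|m|+1)$ and hence $\alpha_{m|m|}\leq-\frac{1}{4}(3m^2-4|m|-1)$, which is asymptotically $\sim-\frac{3}{4}m^2$ and strictly sharper than the paper's $\sim-\frac{3}{10}m^2$ for every $|m|\geq 2$. The paper's version requires an explicit numerical computation, while yours avoids it and produces the tighter constant essentially for free; since the lemma only needs a sign, the difference is immaterial to the rest of the paper, but your argument is cleaner. The $|m|=1$ case is handled identically in both via \eqref{eq:mainineqLambda}.
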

\begin{proof}
The inequality for $|m|=1$ follows immediately from \eqref{eq:mainineqLambda}. Now let $|m|\geq 2$. We appeal to the Rayleigh--Ritz quotient: let $f: [0,\pi]\to \R$, then
\begin{equation*}
\Lambda_{m |m|}=\inf_{e^{im\tphi}f\in L^2(\s^2)\setminus \{0\}}\frac{\int_{\s^2}|\partial_{\theta}f|^2\sin\theta+(\frac{1}{4}\sin^2\theta+\sin^{-2}\theta) \sin\theta |f|^2\,d\theta d\tphi}{\int_{\s^2} |f|^2\,\sin \theta\,d\theta d\tphi}
\end{equation*}
Let $f(\theta)=\sin\theta$, then we obtain via the above formula,
\begin{equation*}
m^{-2}\Lambda_{m |m|}\leq \frac{17}{10}+\frac{1}{2m^2}.
\end{equation*}
Hence,
\begin{equation*}
m^{-2}\alpha_{m|m|}\leq -\frac{3}{10}+\frac{3}{4m^2}.
\end{equation*}
For $|m|\geq 2$, the above expression therefore implies that
\begin{equation*}
m^{-2}\alpha_{m|m|}\leq -\frac{9}{80}.
\end{equation*}
\end{proof}

We will denote $d\sigma=\sin\theta d\theta d\varphi_*$ and $d\sigma=\sin\theta d\theta d\tphi$ when integrating functions along constant $(\tau,r)$ spheres with respect to $(\tau,r,\theta,\varphi_*)$ and $(\tau,r,\theta,\tphi)$  coordinates, respectively.

Let $f\in L^2(\s^2)$. Then we can write $f=\sum_{m\in \Z}f_m$, with
\begin{equation*}
 f_m(\theta,\tphi)=\left[\int_{\s^2}f(\theta,\tphi')e^{-im \tphi'}\,\sin \theta d\theta d\tphi'\right]e^{im\tphi}.
\end{equation*}
We refer to $f_m$ as the $m$-th azimuthal mode of $f$. We can straightforwardly extend the definition of $f_m$ to functions on $\mathcal{R}$.

Note that the eigenfunctions $S_{m\ell}(\theta)$ and the eigenvalues $\Lambda_{+,m\ell}$ may be interpreted as the oblate spheroidal harmonics and angular frequencies evaluated at the frequency $\omega=m\upomega_+$ corresponding to a Carter separation of the Fourier transform of $\phi$ in time $t$; see \S \ref{sec.carter}.

\subsection{Notation}
We will make use of the notation $O(|f|)$ to indicate functions on $\mathcal{M}$ that can be bounded uniformly by $|f|$ with a constant that depends only on $M$ and $\mathbbm{h}$. We denote with $O_{n}(|f|)$ functions $g$ on $\mathcal{M}$ such that $\partial_r^k g$ is bounded uniformly by $|\partial_r^kf|$ for all $k\leq n\in \N_0$.

We use the notation $O^m_k(|f|)$ for functions with azimuthal angular dependence $e^{im\tphi}$, when the uniform constant is allowed to depend additionally on $m$.

We will use the notation $c$ and $C$ to indicate constants that depend only on $M,a$ and $\mathbbm{h}$. If the constants depends on an additional set of parameters $\beta_1,\ldots,\beta_s\in \C$, we will instead use the notation  $c_{\beta_1,\ldots,\beta_s}$ and $C_{\beta_1,\ldots,\beta_s}$. To simplify notation, we make use the following ``algebra of constants'':
\begin{equation*}
C\cdot C=C+C=C,\quad c\cdot c=c+c=c.
\end{equation*}
\section{Preliminaries: wave equation}
\label{sec:prelim2}
\subsection{The wave operator}
In this section, we express the wave operator $\square_g$ using different sets of vector fields and we state the relevant local-in-time weighted energy estimates.
\begin{lemma}
\label{lm:exprboxg}
Let $\phi$ be a solution to the inhomogeneous wave equation:
\begin{equation}
\label{eq:inhomwaveeq}
\square_{g_{M,a}}\phi=G.
\end{equation}
Then, we can express \eqref{eq:inhomwaveeq} in the following different ways (interpreted in a distributional sense):
\begin{enumerate}[label=\emph{(\roman*)}]
\item In terms of the vector fields $(T,Y)$, we can express:
\begin{equation}
\label{eq:waveeqY}
\rho^2G=Y(\Delta Y\phi)+2aY\Phi\phi+\slashed{\Delta}_{\s^2}\phi+2(r^2+a^2)TY\phi+a^2\sin^2\theta T^2\phi+2aT\Phi\phi+2rT\phi.
\end{equation}
\item In terms of the vector fields $(T,X)$, we can express:
\begin{equation}
\label{eq:waveeqX}
\begin{split}
\rho^2G=&\:X(\Delta X\phi)+2aX\Phi\phi+\slashed{\Delta}_{\s^2}\phi+2(r^2+a^2-\hfol\Delta)TX\phi+[a^2\sin^2\theta+\hfol (\Delta \hfol- 2(r^2+a^2))]T^2\phi\\
&\:+2a(1-\hfol)T\Phi\phi+(2r-\frac{d}{dr}(\Delta \hfol))T\phi\\
=&\:X(\Delta X\phi)+2aX\Phi\phi+\slashed{\Delta}_{\s^2}\phi-2(r^2+a^2)^{\frac{1}{2}}(1+O(r^{-1}))TX\psi\\
&\:+[a^2\sin^2\theta+\hfol (\Delta \hfol- 2(r^2+a^2))]T^2\phi+2a(1-\hfol)T\Phi\phi+O(1)T\phi.
\end{split}
\end{equation}
\item In terms of the vector fields $(K,Y)$, we can express:
\begin{multline}
\label{eq:waveeqK}
\rho^2G=Y(\Delta Y\phi)-2\upomega_+(r^2-r_+^2)Y\Phi\phi+(\slashed{\mathcal{D}}-2a\upomega_+ \Phi^2)\phi-2\upomega_+r \Phi \phi+2(r^2+a^2)KY\phi\\
+a^2\sin^2\theta (K^2\phi-2\upomega_+ K\Phi\phi)+2a K\Phi\phi+2rK\phi.
\end{multline}
\item In terms of the vector fields $(K,\widetilde{X})$ and the rescaled quantity $\psi:=(r^2+a^2)^{\frac{1}{2}}\phi$, we can express:
\begin{multline}
\label{eq:waveeqtildeX}
\rho^2G=\widetilde{X}(\Delta \widetilde{X}\phi)-2\upomega_+(r^2-r_+^2)\widetilde{X}\Phi\phi+(\slashed{\mathcal{D}}-2a\upomega_+ \Phi^2)\phi-2\upomega_+r \Phi \phi+2(r^2+a^2-\hfol\Delta)K\widetilde{X}\phi\\
+[a^2\sin^2\theta+\hfol (\Delta \hfol- 2(r^2+a^2))]K^2\phi+2(a+\hfol\upomega_+(r^2-r_+^2)-a^2\upomega_+\sin^2\theta)K\Phi\phi+\left(2r-\frac{d}{dr}(\Delta \hfol)\right)K\phi\\
=\widetilde{X}(\Delta \widetilde{X}\phi)-2\upomega_+(r^2-r_+^2)\widetilde{X}\Phi\phi+(\slashed{\mathcal{D}}-2a\upomega_+ \Phi^2)\phi-2\upomega_+r \Phi \phi-2(r+O(r^{-1}))KX\psi\\
+[a^2\sin^2\theta+O(1)]K^2\phi+2(O(1)-a^2\upomega_+\sin^2\theta)K\Phi\phi+O(1)K\phi.
\end{multline}
\item In terms of the principal null fields $(L,\Lbar)$, we can express:
\begin{multline}
\label{eq:waveeqnull}
(r^2+a^2)^{\frac{1}{2}}\rho^2G=-4\frac{(r^2+a^2)^2}{\Delta}L\Lbar \psi+a^2\sin^2 \theta  K^2\psi+2a(1-a\upomega_+\sin^2\theta) K\Phi \psi+(\slashed{\mathcal{D}}-2a\upomega_+ \Phi^2)\psi\\
-2\frac{ar}{r^2+a^2}\Phi\psi-\frac{(r^2+a^2)^2}{2 r\Delta} \frac{d}{dr}\left(r^2(r^2+a^2)^{-3}\Delta^2\right)\psi.
\end{multline}
\end{enumerate}
\end{lemma}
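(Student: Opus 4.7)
The plan is to establish (i) by direct computation from the inverse metric \eqref{eq:invmetric}, and then derive (ii)--(v) by algebraic substitution of vector field identities and, for (iv)--(v), by conjugating with the weight $(r^2+a^2)^{1/2}$.

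First, I would compute the volume element in the ingoing Eddington--Finkelstein chart $(v,r,\theta,\varphi_*)$. From \eqref{eq:metric} a direct determinant computation gives $\sqrt{|g_{M,a}|} = \rho^2 \sin\theta$, which is independent of $v$ and $\varphi_*$. Combined with the explicit inverse metric \eqref{eq:invmetric}, the formula $\square_g \phi = |g|^{-1/2}\partial_\mu(|g|^{1/2} g^{\mu\nu}\partial_\nu \phi)$ expands, after multiplication by $\rho^2$ and using that $\partial_v$, $\partial_{\varphi_*}$ annihilate $\rho^2 \sin\theta$, into exactly the right-hand side of \eqref{eq:waveeqY}. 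The only nontrivial terms requiring careful unfolding are the $\partial_r(\Delta \partial_r \phi)$ piece from $g^{rr} = \rho^{-2}\Delta$ and the $2(r^2+a^2) TY\phi$ cross term coming from $g^{vr} = \rho^{-2}(r^2+a^2)$; the lower-order $2r\,T\phi$ arises from differentiating $(r^2+a^2)$ in the $\partial_r$-slot via the product rule, since $\partial_r[\rho^2\sin\theta \cdot \rho^{-2}(r^2+a^2)] = 2r\sin\theta$. This establishes (i).

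For (ii), I substitute $Y = X - \mathbbm{h} T$ into (i). The principal term $Y(\Delta Y\phi)$ becomes $X(\Delta X\phi) - \mathbbm{h} T(\Delta X\phi) - X(\Delta \mathbbm{h} T\phi) + \mathbbm{h} T(\Delta \mathbbm{h} T\phi)$; since $T$ commutes with $\Delta$ and $\mathbbm{h}$ (both depend only on $r$), the cross pieces combine to $-2\mathbbm{h}\Delta TX\phi - (\Delta\mathbbm{h})' T\phi$ and the diagonal piece to $\mathbbm{h}^2\Delta\, T^2\phi$. Collecting with the $2(r^2+a^2)TY\phi = 2(r^2+a^2)TX\phi - 2(r^2+a^2)\mathbbm{h} T^2\phi$ and $2aY\Phi\phi = 2aX\Phi\phi - 2a\mathbbm{h} T\Phi\phi$ terms gives the first line of \eqref{eq:waveeqX}. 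The second line is just the asymptotic form using $\frac{2(r^2+a^2)}{\Delta}-\mathbbm{h} = O_\infty(r^{-2})$ from \S\ref{sec:foliations}, which yields $2(r^2+a^2) - \mathbbm{h}\Delta = (r^2+a^2)^{1/2}\cdot 2(r^2+a^2)^{1/2}(1+O(r^{-1}))\cdot\frac{1}{(r^2+a^2)^{1/2}}$ after pulling out the conformal weight.

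For (iii), I substitute $T = K - \upomega_+ \Phi$ into (i). The angular combination produced is $\slashed{\Delta}_{\s^2}\phi + a^2\sin^2\theta\,\upomega_+^2\Phi^2\phi - 2a\upomega_+\Phi^2\phi$, which is precisely $\slashed{\mathcal{D}}\phi - 2a\upomega_+\Phi^2\phi$ by the definition \eqref{eq:anghorop} (the first two terms sum to $\slashed{\mathcal{D}}\phi$, the third is produced by the $2aT\Phi\phi$ term via $2a(K-\upomega_+\Phi)\Phi\phi$). The remaining cross terms collect into the $K\Phi$, $\Phi$, $KY$, $K^2$ contributions, with the coefficient $-2\upomega_+(r^2-r_+^2) = -2\upomega_+(r^2+a^2) + 2\upomega_+(r_+^2+a^2) = -2\upomega_+(r^2+a^2) + 2a$ arising from combining the $2(r^2+a^2)$ and $2a$ coefficients via the identity $a = \upomega_+(r_+^2+a^2)$. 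Then (iv) follows by first applying (ii)-style substitution $Y = \widetilde{X} - \mathbbm{h} K$ in (iii); the bookkeeping is identical to (ii) but with $K$ in place of $T$, and the $K\Phi$ coefficient acquires the $\hfol\upomega_+(r^2-r_+^2)$ correction from the cross term $-\mathbbm{h} K \cdot (-2\upomega_+(r^2-r_+^2)\Phi)$.

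For (v), I first verify the conjugation identity: if $\psi = (r^2+a^2)^{1/2}\phi$, a direct computation using $L(v) = 1$, $\Lbar(u) = -1$, and the explicit form of $L, \Lbar$ gives $L\Lbar\psi = (r^2+a^2)^{1/2}[\text{(radial part of }\square\phi)]/(\text{Jacobian})$ up to a potential term proportional to $\psi$ arising from $L\Lbar((r^2+a^2)^{1/2})/(r^2+a^2)^{1/2}$. Concretely, the radial d'Alembertian in $(u,v)$ coordinates has the form $-\frac{4(r^2+a^2)^2}{\Delta}L\Lbar + V(r)$, and the potential $V(r) = -\frac{(r^2+a^2)^2}{2r\Delta}\frac{d}{dr}(r^2(r^2+a^2)^{-3}\Delta^2)$ is exactly the conformal anomaly from rescaling. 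The angular and $K\Phi$ terms transport unchanged because $(r^2+a^2)^{1/2}$ is angle-independent, and the $K^2$, $K\Phi$ coefficients match (iv) with $a^2\sin^2\theta + O(1)$ reorganized as $a^2\sin^2\theta$ plus the $L\Lbar$-generated lower order. The main bookkeeping obstacle here --- and the step I would expect to take the most care --- is verifying the exact form of the potential $V(r)$, since it involves a double $r$-derivative of the weight combined with the $\Delta/(r^2+a^2)^2$ factor; I would cross-check it by restricting to $a=0$ and matching against the standard Regge--Wheeler potential on Schwarzschild.
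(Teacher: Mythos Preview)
Your proposal is correct and follows essentially the same approach as the paper: compute (i) directly from the coordinate expression $\square_g\phi=|g|^{-1/2}\partial_\mu(|g|^{1/2}g^{\mu\nu}\partial_\nu\phi)$ in the ingoing Eddington--Finkelstein chart using \eqref{eq:invmetric}, and then obtain (ii)--(v) by substituting the vector field relations $Y=X-\hfol T$, $T=K-\upomega_+\Phi$, $Y=\widetilde{X}-\hfol K$ from \S\ref{sec:vf} together with the conjugation by $(r^2+a^2)^{1/2}$. The paper's own proof is extremely terse and your outline supplies exactly the intermediate bookkeeping it omits.
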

\begin{proof}
We write:
\begin{equation*}
\square_g\phi=\frac{1}{\sqrt{-\det g_{M,a}}}\sum_{0\leq \alpha,\beta\leq 3}\partial_{\alpha}(\sqrt{-\det g_{M,a}} (g^{-1})^{\alpha \beta} \partial_{\beta}\phi)
\end{equation*}
and evaluate this expression in $(v,r,\theta,\varphi_*)$, using \eqref{eq:invmetric}, in order to obtain (i). The remaining equations follow in a straightforward manner by using the relations between $T,K,Y,Y,\widetilde{X}$ and $\Phi$ from \S \ref{sec:vf}.
\end{proof}
By standard local-in-time energy estimates for linear wave equations, we obtain global existence and uniqueness for solutions to the initial value problem corresponding to \eqref{eq:inhomwaveeq}; see also \cite[Proposition 3.4]{aagkerr}.
\begin{theorem}
Let $k\in \N_0$ and $G\in L^1_{\tau}H^{k}(\Sigma_{\tau};\C)$ and $(\phi_{\rm data},\phi_{\rm data}')\in H^{k+1}_{\rm loc}(\Sigma_0;\C)\times H^{k}_{\rm loc}(\Sigma_0;\C)$. Then there exists a solution $\phi$ to \eqref{eq:inhomwaveeq} (in the distributional sense), such that for all $\tau \geq 0$:
\begin{align*}
\phi\in &\:C^0([0,\tau ], H^{k+1}_{\rm loc}(\Sigma_0;\C) \cap C^1([0,\tau ], H^{k}_{\rm loc}(\Sigma_0;\C),\\
\phi|_{\Sigma_0}=&\: \phi_{\rm data},\\
K\phi|_{\Sigma_0}=&\:\phi_{\rm data}'.
\end{align*}
Furthermore, if $(\phi_{\rm data},\phi_{\rm data}')\in C^{\infty}_c(\Sigma_0;\C)\times C^{\infty}_c(\Sigma_0;\C)$, then $\phi\in C^{\infty}(\mathcal{R};\C)$ and for all $j,k,l\in \N_0$
\begin{equation*}
\lim_{R\to\infty} \int_{\Sigma_{\tau}\cap \{r=R\}} (r^2X)^j K^k \Phi^l\psi d\sigma<\infty,
\end{equation*}
with $\psi$ defined in \emph{(v)} of Lemma \ref{lm:exprboxg}.
\end{theorem}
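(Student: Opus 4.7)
The plan is to apply standard well-posedness theory for second-order linear hyperbolic equations on a globally hyperbolic region, combined with a separate argument using $r$-weighted estimates for the limit at infinity. First I would verify that the hypersurfaces $\Sigma_\tau$ are spacelike: using the expression \eqref{eq:invmetric} together with $d\tau = dv - \hfol\, dr$, one computes
\begin{equation*}
\rho^2\, g^{-1}(d\tau, d\tau) = a^2\sin^2\theta - \hfol\Delta\!\left(\tfrac{2(r^2+a^2)}{\Delta} - \hfol\right),
\end{equation*}
which is strictly negative by the second structural condition imposed on $\hfol$. The first condition ensures that $\tau - u = O(1)$ uniformly as $r\to\infty$, so $\Sigma_\tau$ is an asymptotically null hypersurface terminating at $\mathcal{I}^+$. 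In particular $(\mathcal{R}, g_{M,a})$ is globally hyperbolic with Cauchy hypersurface $\Sigma_0$.

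Next I would rewrite \eqref{eq:inhomwaveeq} using form (iv) of Lemma \ref{lm:exprboxg}, which expresses $\square_g$ in terms of $K = \partial_\tau$ and $\widetilde{X} = \partial_r$ in the coordinates $(\tau,r,\theta,\tphi)$. All coefficients of this form are smooth up to and including the event horizon $\mathcal{H}^+$, and the equation is uniformly hyperbolic relative to the spacelike foliation $\{\Sigma_\tau\}$ on any finite slab $\{0\le\tau\le T\}$. Existence, uniqueness, and the stated regularity $\phi\in C^0([0,\tau],H^{k+1}_{\rm loc})\cap C^1([0,\tau],H^k_{\rm loc})$ then follow from the standard local-in-time energy method: one derives a priori estimates using a timelike multiplier on the slab (any smooth future-directed timelike extension of the conormal to $\Sigma_\tau$ works on the compact spatial region, and $K$ itself suffices away from $\mathcal{H}^+$ for $|a|<M$, with an $r$-dependent modification used near the horizon in the extremal case), commutes with $T$, $\Phi$ and $\widetilde{X}$ (trading $\widetilde{X}^2$-derivatives for lower-order ones via the equation itself) to obtain higher-order bounds, and concludes by a standard approximation/functional-analytic construction; cf.\ the analogous treatment in \cite[Proposition 3.4]{aagkerr}. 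Uniqueness is immediate from the energy estimate applied to the difference of two solutions. Smoothness for $(\phi_{\rm data}, \phi_{\rm data}') \in C^\infty_c\times C^\infty_c$ follows by iterating commutations with $T$ and $\Phi$, using \eqref{eq:waveeqtildeX} to express $\widetilde{X}^2\phi$ algebraically in terms of derivatives of lower $\widetilde{X}$-order, bootstrapping to arbitrary $H^k_{\rm loc}$-regularity, and invoking Sobolev embedding.

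The main obstacle is the last assertion, namely that the spherical integrals of $(r^2 X)^j K^k \Phi^l \psi$ remain finite as $R\to\infty$ along $\Sigma_\tau$. Since $\Sigma_\tau$ asymptotes to a constant-$u$ hypersurface reaching $\mathcal{I}^+$, finite speed of propagation does not by itself provide compact spatial support, and one instead needs genuine decay of radial derivatives of $\psi$ at infinity. The plan is to use form (v) of Lemma \ref{lm:exprboxg}, which when rewritten in the variable $r^{-1}$ displays $\psi$ as satisfying a wave equation with good structure at $\mathcal{I}^+$, and to propagate the $r$-weighted hierarchy of Dafermos--Rodnianski \cite{newmethod} (as adapted to Kerr in e.g.\ \cite{moschidis1, aagkerr}) applied to $\psi$ and its commuted versions with $K$, $\Phi$, and $\widetilde{X}$. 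Inductively in $j$: given finite angular means of $(r^2 X)^{j'} K^k \Phi^l \psi$ for $j'<j$, one integrates the wave equation radially from a fixed $r_0$ outward to extract a finite limit for the next weight, with commutator errors controlled by the weighted energies just established at the lower order. The technical subtlety is that commutators $[\widetilde{X}, K]$ and $[\widetilde{X}, \Phi]$ produce lower-weight terms which must be absorbed at each level of the induction; this is routine but requires care in tracking the $r$-weights.
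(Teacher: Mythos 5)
Your proposal is correct and fills in exactly what the paper leaves implicit: the paper's ``proof'' of this theorem is a single sentence invoking standard local-in-time energy estimates with a citation to \cite[Proposition 3.4]{aagkerr}, so there is nothing substantively different to compare against beyond confirming that your reconstruction is sound. Your spacelikeness computation for $\Sigma_\tau$ from the second structural condition on $\hfol$ is correct, the reduction to standard hyperbolic theory in the $(\tau,r,\theta,\tphi)$ chart via form (iv) of Lemma \ref{lm:exprboxg} is the natural one, and you correctly identify that the final assertion about finite limits at $\mathcal{I}^+$ cannot be obtained by finite speed of propagation alone (since $\Sigma_\tau$ is asymptotically null and terminates at $\mathcal{I}^+$) but instead requires propagating an $r$-weighted hierarchy \`a la Dafermos--Rodnianski from the trivially finite weighted energies of the compactly supported data, with commutations by $r^2L$, $K$, and $\Phi$ to reach higher $j$.
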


In the $a=|M|$ case, we define the following \emph{energy densities}, with weights near $r=r_+=M$:
\begin{equation}
\label{eq:edens}
\mathcal{E}_p[\phi]:=(1-Mr^{-1})^{2-p}|X\phi|^2+|T\phi|^2+r^{-2}|\snabla_{\s^2}\phi|^2\quad \textnormal{with $0\leq p\leq 2$}.
\end{equation}

\begin{proposition}
\label{prop:locenest}
Assume $a=|M|$. Let $T_0\geq 0$ and let $\phi$ be a solution to \eqref{eq:inhomwaveeq} with $G\in H^{k}(\Sigma_{\tau};\C)$ for all $\tau\in [0,T_0]$, such that $\int_{\Sigma_0}\mathcal{E}_p[\phi]\,r^2d\sigma dr<\infty$ for some $0\leq p\leq 2$. Then there exists a constant $C_{T_0}>0$ such that for all $\tau\in [0,T_0]$
\begin{multline*}
\int_{\Sigma_{\tau}}\mathcal{E}_p[\phi]\,r^2d\sigma dr+\int_{\mathcal{I}^+\cap\{0\leq \tau'\leq\tau\}} |T\psi_m|^2\,d\sigma d\tau+\int_{\mathcal{H}^+\cap\{0\leq \tau'\leq\tau \}} |K\psi_m|^2\,d\sigma d\tau\\
\leq C_{T_0}\left[\int_{\Sigma_{0}}\mathcal{E}_p[\phi]\,r^2d\sigma dr+\int_{0}^{\tau}\int_{\Sigma_{\tau} }|G|^2 r^2\, d\sigma dr d\tau\right].
\end{multline*}
\end{proposition}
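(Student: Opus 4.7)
The plan is to apply the vector field multiplier method to the wave equation $\square_g\phi=G$. For each $p\in[0,2]$ we will construct a smooth, future-directed vector field $V_p$ on $\mathcal{R}$ and use its associated energy current
\[
J^{V_p}_\alpha[\phi] = \mathbb{T}[\phi]_{\alpha\beta}V_p^\beta,\qquad \mathbb{T}[\phi]_{\alpha\beta}=\Re(\partial_\alpha\phi\,\overline{\partial_\beta\phi})-\tfrac12 g_{\alpha\beta}\,g^{\gamma\delta}\partial_\gamma\phi\,\overline{\partial_\delta\phi},
\]
so that the divergence identity $\nabla^\alpha J^{V_p}_\alpha = \Re(G\,\overline{V_p\phi}) + \tfrac12\,\mathbb{T}[\phi]\cdot{}^{(V_p)}\!\pi$ and the divergence theorem applied over $\mathcal{R}\cap\{0\leq\tau'\leq\tau\}\cap\{r\leq R\}$ yield the identity
\[
\int_{\Sigma_\tau}\! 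J^{V_p}\cdot n_{\Sigma_\tau} + \int_{\mathcal{H}^+\cap\{0\leq\tau'\leq\tau\}}\! J^{V_p}\cdot K + \int_{\{r=R\}\cap\{0\leq\tau'\leq\tau\}}\!\!J^{V_p}\cdot X_*
= \int_{\Sigma_0}\! J^{V_p}\cdot n_{\Sigma_0} + \text{bulk},
\]
where $\text{bulk}$ collects $\Re(G\,\overline{V_p\phi})$ and $\mathbb{T}[\phi]\cdot{}^{(V_p)}\!\pi$. The $R\to\infty$ limit of the sphere integral produces the flux on $\mathcal{I}^+$ at the level of the rescaled field $\psi=(r^2+a^2)^{1/2}\phi$, by using expression (v) of Lemma \ref{lm:exprboxg}.

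The multiplier will be taken of the form $V_p=f_p(r)K+g_p(r)\widetilde{X}$, with $f_p,g_p\geq 0$ smooth, $f_p(M)>0$, and $g_p(r)$ vanishing to order $(2-p)/2$ at $r=M$ (concretely, $g_p(r)\sim(1-Mr^{-1})^{2-p}$ near the horizon, and equal to a positive constant for $r$ bounded away from $M$). Using expression (iv) of Lemma \ref{lm:exprboxg}, the boundary term on $\Sigma_\tau$ becomes coercive: a direct calculation shows that $J^{V_p}[\phi]\cdot n_{\Sigma_\tau}$ is pointwise comparable to $\mathcal{E}_p[\phi]\,r^2$ up to a bounded multiple (with constants depending on the cut-off radius beyond which $g_p$ is constant). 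On $\mathcal{H}^+$ the contribution reduces to $f_p(M)|K\phi|^2$ (and hence to $|K\psi|^2$ since $\psi$ and $\phi$ differ by a bounded multiplicative factor on $\mathcal{H}^+$) because $g_p(M)=0$. At large $r$, using the null formulation (v) and the fact that $L\psi$-terms remain controlled by $\mathcal{E}_p$ for $p\leq 2$, the flux across $\{r=R\}$ converges as $R\to\infty$ to $\int_{\mathcal{I}^+}|T\psi|^2\,d\sigma d\tau$; the finiteness of this limit follows from the hypothesis $\int_{\Sigma_0}\mathcal{E}_p[\phi]r^2\,d\sigma dr<\infty$ together with the bulk-absorption step below.

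The bulk term splits into two pieces. For the inhomogeneous piece, Cauchy--Schwarz gives $|\Re(G\,\overline{V_p\phi})|\lesssim |G|^2 r^2+r^{-2}|V_p\phi|^2$, and since $V_p\phi$ is a bounded combination of $K\phi$ and a weighted $\widetilde{X}\phi$ with exactly the right weight, $r^{-2}|V_p\phi|^2\lesssim \mathcal{E}_p[\phi]$. For the deformation tensor piece, note that both $K$ and $\widetilde{X}$ have deformation tensors whose components in the $(T,\widetilde{X},\Phi,\snabla_{\s^2})$ frame are smooth on $\mathcal{R}$, so after absorbing the degenerate weight from $g_p$ the pairing $\mathbb{T}[\phi]\cdot{}^{(V_p)}\!\pi$ is pointwise bounded by $\mathcal{E}_p[\phi]\,r^2$. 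Combining these estimates with the boundary identity, we obtain
\[
F_p(\tau)+\text{flux}(\tau)\;\lesssim\;F_p(0)+\int_0^\tau\! F_p(\tau')\,d\tau'+\int_0^\tau\!\!\int_{\Sigma_{\tau'}}|G|^2 r^2\,d\sigma dr\,d\tau',
\]
where $F_p(\tau)=\int_{\Sigma_\tau}\mathcal{E}_p[\phi]r^2\,d\sigma dr$ and $\text{flux}(\tau)$ denotes the sum of the $\mathcal{H}^+$ and $\mathcal{I}^+$ integrals on the left-hand side of the proposition. Gr\"onwall's inequality then yields the claimed estimate with constant $C_{T_0}=Ce^{CT_0}$.

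The main obstacle is the $\mathcal{I}^+$ flux: one must show that the sphere integrals at $r=R$ indeed converge as $R\to\infty$, rather than just being bounded. This requires a separate, Dafermos--Rodnianski-type $r$-weighted identity applied to $\psi=(r^2+a^2)^{1/2}\phi$ using the expression (v) of Lemma \ref{lm:exprboxg} multiplied by $\overline{L\psi}$, which produces a bulk term with a definite sign at large $r$ and controls the $\mathcal{I}^+$ flux directly; this step is standard (cf.\ the analogous argument in \cite[Proposition 3.4]{aagkerr}) and is compatible with the Gr\"onwall closure above.
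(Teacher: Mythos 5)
Your high-level strategy — a vector field multiplier with a degenerate weight near $\mathcal{H}^+$, closed by Gr\"onwall — is the same as the paper's, but the specific multiplier you propose, $V_p=f_p(r)K+g_p(r)\widetilde{X}$ with $f_p,g_p\geq 0$, is not future-directed causal near the horizon, so the coercivity $\mathbb{T}[\phi](V_p,\mathbf{n}_\tau)\sim\mathcal{E}_p[\phi]$ on which the whole argument rests is false. There are two problems. First, $Y=\partial_r$ is a \emph{past}-directed null vector in the ingoing Eddington--Finkelstein gauge (since $g(Y,T)=g(\partial_r,\partial_v)=1>0$), so a future-causal multiplier must have a \emph{non-positive} $Y$-coefficient; note that the paper's $\Lbar=-\tfrac{\Delta}{2(r^2+a^2)}Y$ is future null precisely because of the minus sign. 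Since $\widetilde{X}=Y+\hfol K$, your $V_p=(f_p+g_p\hfol)K+g_p Y$ has a non-negative $Y$-coefficient. Second, your ansatz has no $\Phi$-correction, whereas on extremal Kerr $K$ is \emph{spacelike} in part of the ergoregion: from \S\ref{sec:vf}, $g(K,K)\sim c_\theta(r-M)^2$ with $c_{\pi/2}>0$. Combining, near the equator of $\mathcal{H}^+$ one has $g(K,K)>0$, $g(K,Y)=1-a\upomega_+\sin^2\theta>0$, and $g(Y,Y)=0$, so
\begin{equation*}
g(V_p,V_p)=(f_p+g_p\hfol)^2\,g(K,K)+2(f_p+g_p\hfol)g_p\,g(K,Y)>0,
\end{equation*}
i.e.\ $V_p$ is spacelike and the flux $\mathbb{T}[\phi](V_p,\mathbf{n}_\tau)$ is not sign-definite. (You also cannot merely flip the sign of $g_p$: since $\hfol\sim(r-M)^{-2}$, the term $-g_p\hfol K$ in $-g_p\widetilde{X}$ drives the $K$-coefficient to $-\infty$ and makes the vector past-directed.)

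The paper avoids both issues by taking $V_p=L+(r-M)^{-p}\Lbar$ in $r\leq 3M$ (extended to $T$ for $r\geq 4M$): a nonnegative combination of the two \emph{future-directed} principal null vectors is automatically future-causal. The crux of the proof — which your ``direct calculation shows'' step skips, and where the gap would have surfaced — is the explicit computation $g(V_p,V_p)=-\rho^{-2}(r-M)^{2-p}\bigl(r^2+M^2\cos 2\theta+M^4(r^2+M^2)^{-1}\sin^4\theta\bigr)$; the bracketed factor equals $\rho^4/(r^2+M^2)$ and is strictly positive, so $V_p$ is timelike in $\{r>M\}$ for all $0\leq p\leq 2$, and null (resp.\ timelike) on $\mathcal{H}^+$ for $p<2$ (resp.\ $p=2$), from which $\mathbb{T}[\phi](V_p,\mathbf{n}_\tau)\sim\mathcal{E}_p[\phi]$ follows. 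Rewritten in your $(K,\widetilde{X},\Phi)$ frame, this $V_p$ has $\widetilde{X}$-coefficient of order $-(r-M)^{2-p}$ and a non-trivial negative $\Phi$-coefficient $\tfrac{a}{r^2+a^2}-\upomega_+$; both are what your ansatz is missing, and the $\Phi$-piece in particular is what keeps $V_p$ causal inside the ergoregion. With the corrected multiplier in place, the remainder of your sketch (bulk deformation terms, Cauchy--Schwarz on $G$, the $\mathcal{I}^+$ flux as a limit of sphere integrals, and Gr\"onwall) proceeds as you outline.
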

\begin{proof}
Let $V_p$ be a smooth vector field such that $V=L+(r-M)^{-p}\underline{L}$ in $r\leq 3M$ and $N=T$ in $r\geq 4M$. Then
\begin{equation*}
\begin{split}
g_{M,\pm |M|}(V_p,V_p)=&\:2(r-M)^{-p}g(L,\underline{L})=(r-M)^{-p}g(L+\underline{L},L+\underline{L})\\
=&\:(r-M)^{-p}g\left(T+\frac{M}{r^2+M^2}\Phi,T+\frac{a}{r^2+a^2}\Phi\right)\\
=&\:\rho^{-2}(r-M)^{-p}(r^2+M^2)^{-1}\big[M^2(r^2+M^2)\sin^2\theta-\Delta(r^2+M^2)-4M^3 r\sin^2\theta\\
&\:+M^2(r^2+M^2)\sin^2\theta-M^4(r^2+M^2)^{-1} \Delta\sin^4\theta\big]\\
=&\:-\rho^{-2}(r-M)^{2-p} (r^2+M^2\cos(2\theta)+M^4(r^2+M^2)^{-1}\sin^4\theta).
\end{split}
\end{equation*}
Hence, $V_p$ is timelike in $\{r>M\}$ for all $0\leq p\leq 2$, timelike along $\mathcal{H}^+$ if $p=2$ and null along $\mathcal{H}^+$ if $p<2$.

It then follows that:
\begin{equation*}
\mathbb{T}[\phi](V_p,\mathbf{n}_{\tau})\sim \mathcal{E}_p[\phi],
\end{equation*}
with $\mathbb{T}[\phi]:=\Re(d\phi\otimes d\overline{\phi})-\frac{1}{2} |\nabla \phi|^2 g$ the energy-momentum tensor corresponding to $\square_g\phi$ and $\mathbf{n}_{\tau}$ the unit normal to $\Sigma_{\tau}$.
\end{proof}

The main theorems concern solutions to:
\begin{equation}
\label{eq:waveeq}
\square_{g_{M,a}}\phi=0
\end{equation}
with $|a|=M$.

\subsection{Stationary solutions in extremal Kerr}
\label{sec:statsol}
A key role in the analysis will be played by \emph{stationary}, $K$-invariant solutions to to \eqref{eq:inhomwaveeq} with $G=0$ and $|a|=M$. If we express these with respect to the coordinates $(t,r,\theta,\varphi)$, they take the form
\begin{equation*}
\mathfrak{w}_{m \ell}(r)e^{-i\upomega_+ t+im \varphi }S_{m\ell}(\theta).
\end{equation*}
Assume for the sake of convenience that $r_+=1$, then 
\begin{equation*}
K(\mathfrak{w}_{m \ell}e^{-i\upomega_+ t+im \varphi }S_{m\ell})=0\quad \textnormal{and}\quad  \square_{g_{M,a}} (\mathfrak{w}_{m \ell}e^{-i\upomega_+ t+im \varphi }S_{m\ell})=0.
\end{equation*}

is equivalent to the following ODE for $\mathfrak{w}_{m \ell}(r(r_*))$:
\begin{align}
\label{eq:staticfreqweight}
\mathfrak{w}''-\widetilde{V}_{\rm stat}(r(r_*))\mathfrak{w}=&\:0,\\ \nonumber
\widetilde{V}_{\rm stat}(r):=&\:-\frac{1}{4}m^2+(r^2+1)^{-2}\left[m^2a^2(2r-1)+\Lambda_{+,m\ell} \Delta\right]+(r^2+1)^{-3}\Delta(3r^2-4r+1)\\ \nonumber
&-3(r^2+1)^{-4}\Delta^2r^2.
\end{align}

Denote $\alpha_{m\ell}=\Lambda_{+,m\ell}-2m^2+\frac{1}{4}$.

\begin{lemma}
\label{lm:propfracw}
Let $r_+=1$ and $a=1$. Assume that there exist $b,B>0$ such that $\Lambda_{+,m\ell}\leq Bm^2$ and $\alpha_{m\ell}\neq 0$ for all $m\in \Z\setminus\{0\}$. Then for all $m\in \Z\setminus\{0\}$, there exists two unique solutions $\mathfrak{w}_{\pm,m\ell}$ to \eqref{eq:staticfreqweight} that satisfy:
\begin{align}
\label{eq:wfreqexp}
\mathfrak{w}_{\pm,m\ell}(r)=&\:(r-1)^{-\frac{1}{2}\pm i\sqrt{-\alpha_{m\ell}}}\left[1+O_{\infty}((r-1))\right].
\end{align}
Furthermore,
\begin{align}
\label{eq:estderw}
\mathfrak{w}_{\pm,m\ell}^{-1}\frac{d\mathfrak{w}_{\pm,m\ell}}{dr}=&\:-\frac{1}{2}(r-1)^{-1}\pm i\sqrt{-\alpha_{m\ell}}(r-1)^{-1}+O(1).
\end{align}
When $\alpha_{m\ell}=0$, there exist again two unique solutions $\mathfrak{w}_{\pm}$, satisfying:
\begin{align}
\label{eq:wfreqexpexcmin}
\mathfrak{w}_{-,m\ell}(r)=&\:(r-1)^{-\frac{1}{2}}\left[1+O_{\infty}^m((r-1))\right],\\
\label{eq:wfreqexpexcplus}
\mathfrak{w}_{+,m\ell}(r)=&\:(r-1)^{-\frac{1}{2}}\log\left(\frac{1}{r-1}\right)[1+O_{\infty}^m((r-1)]\\
\label{eq:estderwexcmin}
\mathfrak{w}_{-,m\ell}^{-1}\frac{d\mathfrak{w}_{-,m\ell}}{dr}=&\:-\frac{1}{2}(r-1)^{-1}(r-1)^{-1}+O^m(1),\\
\label{eq:estderwexcplus}
\mathfrak{w}_{+,m\ell}^{-1}\frac{d\mathfrak{w}_{+,m\ell}}{dr}=&\:-\left(\frac{1}{2}+\frac{1}{\log(r-1)^{-1}}\right)(r-1)^{-1}+\log(r-1)O^m(1).
\end{align}
\end{lemma}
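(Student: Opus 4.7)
The plan is to reduce the ODE \eqref{eq:staticfreqweight}, which has an apparent irregularity at $r_*=-\infty$, to a Fuchsian equation with a \emph{regular} singular point at $r=1$, to which standard Frobenius theory then applies. The key geometric input is that $\Delta=(r-1)^2$ vanishes to exactly second order at the extremal horizon, so the singularity is regular in the $r$-variable. Concretely, I would use $\frac{d}{dr_*}=h(r)\frac{d}{dr}$ with $h(r)=\Delta/(r^2+1)$ (valid when $a=1$) to rewrite \eqref{eq:staticfreqweight} as
\begin{equation*}
 u''(r)+\frac{h'(r)}{h(r)}u'(r)-\frac{\widetilde{V}_{\rm stat}(r)}{h(r)^2}u(r)=0,
\end{equation*}
where $u(r)=\mathfrak{w}(r(r_*))$, and then Taylor-expand the coefficients at $r=1$. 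A direct computation gives $\widetilde{V}_{\rm stat}(1)=0$ and $\widetilde{V}_{\rm stat}'(1)=0$, with
\begin{equation*}
 \widetilde{V}_{\rm stat}(r)=\tfrac{1}{4}(\Lambda_{+,m\ell}-2m^2)(r-1)^2+O((r-1)^3)=\tfrac{1}{4}\bigl(\alpha_{m\ell}-\tfrac{1}{4}\bigr)(r-1)^2+O((r-1)^3)
\end{equation*}
by the definition of $\alpha_{m\ell}$, while $h'/h=2/(r-1)+O(1)$ near $r=1$. Multiplying through by $(r-1)^2$ yields the Fuchsian form
\begin{equation*}
 (r-1)^2 u''+(r-1)P(r)u'+Q(r)u=0,
\end{equation*}
with $P(r)=2+O((r-1))$ and $Q(r)=-(\alpha_{m\ell}-1/4)+O((r-1))$ real-analytic at $r=1$, so that the indicial equation $s^2+s-\alpha_{m\ell}+1/4=0$ has roots $s_\pm=-\tfrac{1}{2}\pm\sqrt{\alpha_{m\ell}}$.

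When $\alpha_{m\ell}<0$ --- the main case of interest --- the roots $-\tfrac{1}{2}\pm i\sqrt{-\alpha_{m\ell}}$ are complex conjugates with difference $2i\sqrt{-\alpha_{m\ell}}\notin\Z$, and the standard Frobenius existence theorem produces two linearly independent convergent series $u_\pm(r)=(r-1)^{s_\pm}[1+O((r-1))]$ with analytic remainders; normalizing the leading coefficient to $1$ gives the uniqueness claim in \eqref{eq:wfreqexp}, and term-wise differentiation of the series yields \eqref{eq:estderw}. In the degenerate case $\alpha_{m\ell}=0$ the indicial roots collapse at $-\tfrac{1}{2}$, and the double-root version of the Frobenius theorem produces a pure-power solution $\mathfrak{w}_{-,m\ell}(r)=(r-1)^{-1/2}[1+O((r-1))]$ together with a logarithmic second solution of the form
\begin{equation*}
 \mathfrak{w}_{+,m\ell}(r)=\mathfrak{w}_{-,m\ell}(r)\log\bigl(\tfrac{1}{r-1}\bigr)+(r-1)^{-1/2}\sum_{n\geq 1}c^{(m,\ell)}_n(r-1)^n,
\end{equation*}
establishing \eqref{eq:wfreqexpexcmin}--\eqref{eq:wfreqexpexcplus}, with \eqref{eq:estderwexcmin}--\eqref{eq:estderwexcplus} following by differentiation.

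The main technical obstacle is pinning down the $(r-1)^2$-coefficient of $\widetilde{V}_{\rm stat}$ as exactly $(\alpha_{m\ell}-1/4)/4$: this requires the order-zero cancellation $-m^2/4+m^2 a^2(2r-1)(r^2+1)^{-2}\big|_{r=1}=0$ and careful tracking of the second-order contributions from the $-m^2/4$, $m^2 a^2(2r-1)(r^2+1)^{-2}$, and $\Lambda_{+,m\ell}\Delta(r^2+1)^{-2}$ terms, since this precise coefficient determines the indicial exponents and hence the rates $v^{-1/2}$ and oscillation frequencies $e^{\pm i\sqrt{-\alpha_{m\ell}}\log v}$ entering Theorem \ref{thm:tails}. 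A minor subtlety only arises if $\alpha_{m\ell}>0$ and $2\sqrt{\alpha_{m\ell}}\in\Z_{>0}$, a resonance of indicial exponents in which the second Frobenius solution could a priori pick up a logarithmic correction; this does not affect the applications of the lemma in the present paper, where $|m|\geq 2$ and $\alpha_{m,|m|}<0$ by Lemma \ref{lm:negalpha}.
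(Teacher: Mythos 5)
Your proposal is correct and follows essentially the same route as the paper: both pass to the $r$-variable, observe that $\Delta=(r-1)^2$ at $a=1$ renders $r=1$ a \emph{regular} singular point, Taylor-expand $\widetilde{V}_{\rm stat}$ to identify the indicial equation $s^2+s-\alpha_{m\ell}+\tfrac14=0$ with roots $-\tfrac12\pm\sqrt{\alpha_{m\ell}}$, and conclude by standard ODE asymptotics. The main difference is the specific theorem invoked: you use the classical Frobenius existence theorem for both the non-degenerate and degenerate cases, whereas the paper cites Olver's Liouville--Green result \cite[\S 6.2, Theorem 2.2]{olv74} in the non-degenerate case $\alpha_{m\ell}\neq 0$ and reserves the regular-singular-point theory \cite[\S 5.4, Theorem 4.1]{olv74} for the degenerate case $\alpha_{m\ell}=0$. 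Both tools apply here since the coefficients are rational in $r$; Frobenius is arguably the more canonical choice for an analytic regular singular point, while Liouville--Green has the advantage of being designed to give error bounds phrased uniformly in parameters, which is relevant for the claimed $m$-independent $O_\infty$ remainder in \eqref{eq:wfreqexp}. Your remark about the potential resonance $2\sqrt{\alpha_{m\ell}}\in\Z_{>0}$ when $\alpha_{m\ell}>0$ is a genuine caveat that the paper's proof glosses over --- in that edge case the second Frobenius solution can pick up a logarithm and uniqueness of $\mathfrak{w}_{-}$ would fail --- though, as you note, it is harmless for the applications since the lemma is exploited precisely for $\ell$ with $\alpha_{m\ell}<0$, where the indicial exponents are complex conjugate and their difference is never an integer.
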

\begin{proof}
We will suppress the subscript $\pm, m\ell$ in $\mathfrak{w}_{\pm,m\ell}$ below. Note first that we can express \eqref{eq:wfreqexp} as follows:
\begin{align*}
\frac{d^2\mathfrak{w}}{dr^2}+2(r-1)^{-1}[1+O_{\infty}((r-1))]\frac{d\mathfrak{w}}{dr}-(r-1)^{-4}(r^2+a^2)^2\widetilde{V}_{\rm stat}\mathfrak{w}=0.
\end{align*}
By applying \eqref{eq:staticfreqweight} and expanding $\widetilde{V}_{\rm stat}$ in $r-1$ around $r=1$ (see \eqref{eq:Vstatest}), it follows that:
\begin{equation*}
\frac{d^2\mathfrak{w}}{dr^2}+2(r-1)^{-1}[1+O_{\infty}((r-1))]\frac{d\mathfrak{w}}{dr}+(r-1)^2\left[2m^2-\Lambda_{+,m\ell}+O_{\infty}(r-1)\right]\mathfrak{w}=0.
\end{equation*}
so $r=1$ is a regular singular point for \eqref{eq:staticfreqweight} and we can apply the Liouville--Green or WKB approximation in \cite[\S 6.2, Theorem 2.2]{olv74} to obtain \eqref{eq:wfreqexp} and \eqref{eq:estderw}. Here we make use of the condition $\alpha_{m\ell}\neq 0$ to ensure that the Liouville--Green approximation does not break down.

In the case $\alpha_{m\ell}\neq 0$, we instead apply standard ODE theory for expansions of solutions around regular singular points \cite[\S 5.4, Theorem 4.1]{olv74} which introduces $m$ dependence in the expansions \eqref{eq:wfreqexpexcmin}--\eqref{eq:estderwexcplus}.
\end{proof}

We now introduce the functions $w_{\pm, m \ell}$, which are related to $\mathfrak{w}_{\pm,m \ell}$, defined in Lemma \ref{lm:propfracw}, as follows:
\begin{equation*}
w_{\pm, m \ell}(r)e^{-i\upomega_+ \tau+im\tilde{\varphi}}=\frac{\sqrt{2}}{(r^2+1)^{\frac{1}{2}}}\mathfrak{w}_{\pm,m \ell}e^{-i\upomega_+ t+im \varphi }.
\end{equation*}

By Lemma \ref{lm:propfracw}, we can expand uniformly in $m$ for $\alpha_{m\ell}\neq 0$ :
\begin{align}
\label{eq:wfreqexpelliptic}
w_{\pm}(r)=&\:(r-1)^{-\frac{1}{2}-im\pm i\sqrt{2m^2-\Lambda_{+,m \ell}-\frac{1}{4}}}\left[1+O_{\infty}((r-1))\right],\\
\label{eq:estderwelliptic}
w_{\pm,m \ell}^{-1}\frac{dw_{\pm,m \ell}}{dr}=&\:-\frac{1}{2}(r-1)^{-1}+i\left(-m \pm \sqrt{2m^2-\Lambda_{+,m \ell}-\frac{1}{4}}\right)(r-1)^{-1}+O_{\infty}(1).
\end{align}
Furthermore, there exist constants $A_{\pm, m \ell}, \widetilde{A}_{\pm, m \ell}\in \C$ such that the following expansions hold uniformly in $m$ in $r\geq 2$:
\begin{equation}
\label{eq:largerasympw}
w_{\pm, m \ell}=\frac{\widetilde{A}_{\pm, m \ell}}{r}+\frac{{A}_{\pm, m \ell}e^{im r_*}}{r}+O_{\infty}(r^{-2}).
\end{equation}
When $2m^2-\Lambda_{+,m \ell} -\frac{1}{4}=0$, there exist again two unique solutions $w_{\pm,m \ell}$, satisfying \eqref{eq:largerasympw} and:
\begin{align}
\label{eq:wfreqexpexcminalt}
w_{-,m \ell}(r)=&\:(r-1)^{-\frac{1}{2}-im}\left[1+O_{\infty}^m((r-1))\right],\\
\label{eq:wfreqexpexcplusalt}
w_{+,m \ell}(r)=&\:(r-1)^{-\frac{1}{2}-im}\log\left(\frac{1}{r-1}\right)[1+O_{\infty}^m((r-1)],\\
\label{eq:estderwexcminalt}
w_{-,m \ell}^{-1}\frac{dw_{-,m \ell}}{dr}=&\:-\frac{1}{2}(1+2im)(r-1)^{-1}+O_{\infty}^m(1),\\
\label{eq:estderwexcplusalt}
w_{+,m \ell}^{-1}\frac{dw_{+,m \ell}}{dr}=&\:-\left(\frac{1}{2}+im+(\log(r-1)^{-1})^{-1})\right)(r-1)^{-1}+\log(r-1)O_{\infty}^m(1).
\end{align}
Finally, we denote with $w_{\infty}(r)$ and $\widetilde{w}_{\infty}(r)$ the linear combinations of $w_{\pm, m \ell}$ such that $w_{\infty,m \ell}(r)=r^{-1}e^{im r_*}+O(r^{-2})$ and $\widetilde{w}_{\infty,m \ell}(r)=r^{-1}+O(r^{-2})$. We can express:
\begin{align*}
w_{\infty,m \ell}(r)=&\:B_+ w_{+,m\ell}(r)+B_-w_{-,m\ell}(r),\\
\widetilde{w}_{\infty,m \ell}(r)=&\:\widetilde{B}_+ w_{+,m\ell}(r)+\widetilde{B}_-w_{-,m\ell}(r),\\
w_{\pm, m \ell}(r)=&\:\widetilde{A}_{\pm, m \ell}\widetilde{w}_{\infty,m \ell}(r)+{A}_{\pm, m \ell}w_{\infty,m \ell}(r),
\end{align*}
with $\widetilde{B}_{\pm }=\widetilde{A}_{\pm}(1+A_{\pm} B_{\pm})$ or $\widetilde{B}_{\mp }=\widetilde{A}_{\pm}^{-1}A_{\pm} B_{\mp}$.

\section{Precise statements of the main theorems}
\label{sec:precstatthm}
In this section, we will state precisely the main theorems in the present paper. To be able to state a precise version of Theorem \ref{thm:tails}, we will introduce the function $\Pi$ that will give the leading-order late-time asymptotics of $\phi$ globally. We will assume without loss of generality that $a>0$. The $a<0$ can be included in the analysis by a redefinition $\varphi_*\to-\varphi_*$. Then we set (without loss of generality) $r_+=1$ and $a=1$.

Recall the definition of $\alpha_{m\ell}$: $\alpha_{m\ell}=\Lambda_{+,m\ell}-2m^2+\frac{1}{4}$. We will consider in this section $m\in \Z\setminus \{-1,0,1\}$ and $\ell\in \N_0$, $l\geq |m|$, such that
\begin{equation}
\label{eq:asmsphev}
\alpha_{m \ell}\neq 0.
\end{equation}

Then we define the following function, which will describe the global, late-time asymptotic behaviour of $\phi$:
\begin{align*}
\Pi(\tau,r,\theta,\varphi)=&\:\sum_{\substack{|m|\leq \ell\leq l_m\\ \alpha_{m \ell}\neq 0}} \Pi_{m \ell}(\tau,r)e^{im \tphi}S_{m \ell}(\theta),
\end{align*}
where $l_m$ is the largest value of $\ell$ such that $\Lambda_{+,m \ell}\leq 2m^2$ (or equivalently, $\alpha_{m \ell}\leq \frac{1}{4}$) and $\Pi_{m \ell}$ is defined as follows: let ${\mathfrak{h}}_{m \ell}\in \C$ and let $A_{\pm},\widetilde{A}_{\pm},B_{\pm},\widetilde{B}_{\pm}$ be the constants appearing in the expansions of the stationary solutions defined in \S \ref{sec:statsol}, then
\begin{enumerate}
\item If $\alpha_{m \ell}<0$, $\widetilde{B}_-\neq 0$ and $\widetilde{B}_+\neq 0$, we define:
\begin{equation}
\label{eq:exprPi1}
 \Pi_{m \ell}(\tau,r):={\mathfrak{h}}_{m \ell}w_{\infty, m\ell}(r)\left[f_+^{m \ell}(\tau,r)+f_-^{m \ell}(\tau,r)\right]+{\mathfrak{h}}_{m \ell}\widetilde{w}_{\infty, m\ell}(r)\left[E_+f_{+,{\rm exp}}^{m \ell}(\tau,r)+E_-f_{-,{\rm exp}}^{m \ell}(\tau,r)\right],
\end{equation}
with $E_+=-\widetilde{B}_-^{-1}B_-$ and $E_-=-\widetilde{B}_+^{-1}B_+$, ${\mathfrak{h}}_{m \ell}\in \C$, and
\begin{align*}
f_{\pm}^{m \ell}(\tau,r):=\left(\tau+\frac{4}{r-1}+1\right)^{-\frac{1}{2}\mp \sqrt{\alpha_{m \ell}}+im}(\tau+1)^{-\frac{1}{2}\mp \sqrt{\alpha_{m \ell}}-im },\\
f_{\pm,{\rm exp}}^{m \ell}(\tau,r):=\left(\tau+\frac{4}{r-1}+e^{r-1}\right)^{-\frac{1}{2}\mp \sqrt{\alpha_{m \ell}}+im}(\tau+1)^{-\frac{1}{2}\mp \sqrt{\alpha_{m \ell}}-im }.
\end{align*}
\item If $\alpha_{m \ell}<0$, $\widetilde{B}_+=0$, then $\widetilde{A}_+=0$, and we define:
\begin{equation}
\label{eq:exprPi2}
 \Pi_{m \ell}(\tau,r):={\mathfrak{h}}_{m \ell}w_{\infty, m\ell}(r)f_+^{m \ell}(\tau,r)={\mathfrak{h}}_{m \ell}A_{+}^{-1}w_{+, m\ell}(r)f_+^{m \ell}(\tau,r).
\end{equation}
\item If $\alpha_{m \ell}<0$, $\widetilde{B}_-=0$, then $\widetilde{A}_-=0$ and we define:
\begin{equation}
\label{eq:exprPi3}
 \Pi_{m \ell}(\tau,r):={\mathfrak{h}}_{m \ell}w_{\infty, m\ell}(r)f_-^{m \ell}(\tau,r)={\mathfrak{h}}_{m \ell}A_{-}^{-1}w_{-, m\ell}(r)f_-^{m \ell}(\tau,r).
\end{equation}
\item If $\alpha_{m \ell}>0$, then we define:
\begin{equation}
\label{eq:exprPi4}
 \Pi_{m \ell}(\tau,r):={\mathfrak{h}}_{m \ell}w_{\infty, m\ell}(r)f_+^{m \ell}(\tau,r).
\end{equation}
\end{enumerate}
Denote
\begin{multline*}
F_{\Pi}:= 2(r^2+1-\hfol \Delta)\tX \Pi|_{\Sigma_{0}}+2\left(1+\frac{1}{2}\hfol (r^2-1)-\frac{1}{2}\sin^2\theta\right)\Phi \Pi|_{\Sigma_{0}}\\
+\left(\hfol(\Delta \hfol-2(r^2+1))-\sin^2\theta\right)K\Pi|_{\Sigma_{0}}-\left(2r-\frac{d}{dr}(\Delta \hfol)\right) \Pi|_{\Sigma_{0}}-\int_{0}^{\infty}\square_g\Pi (\tau',r,\theta,\tphi)\,d\tau'.
\end{multline*}

We consider the following initial data quantity
\begin{multline*}
F_{\rm hom}[\phi]=-2(r^2+1-\hfol \Delta)\tX\phi|_{\Sigma_{0}}-2\left(1+\frac{1}{2}\hfol (r^2-1)-\frac{1}{2}\sin^2\theta\right)\Phi \phi|_{\Sigma_{0}}\\
-\left(\hfol(\Delta \hfol-2(r^2+1))-\sin^2\theta\right)K\phi|_{\Sigma_{0}}+\left(2r-\frac{d}{dr}(\Delta \hfol)\right) \phi |_{\Sigma_{0}}
\end{multline*}
and the following initial data energy norms along $\Sigma_0$.
\begin{multline*}
 D^{(1)}[\phi]=\sum_{|m|\leq \ell \leq l_m}|{\mathfrak{h}}_{m \ell}|^2+\sum_{k+l\leq 4,l\leq 2}\int_{\Sigma_{0}} \mathcal{E}_{1+\epsilon}[K^{k+l}\phi]\, r^2d\sigma dr\\
 +\int_{\Sigma_{0} \cap\{r\geq 2\}}  \sum_{k\leq 2} r^{2-\delta} | LK^{k}\psi|^2+\sum_{k\leq 1}r^{2-\delta} | L((r^2L)K^k\psi)|^2\,d\sigma dr\\ 
 +\sum_{\substack{k\leq 2\\j\leq 1}}\sup_{r'\leq 2}\int_{\Sigma_0\cap\{r=r'\}}|\tX^j((r-1)\tX)^kF_{\rm hom}[\phi])|^2\,d\sigma+\sup_{r'\geq 2}\int_{\Sigma_0\cap\{r=r'\}}|r  F_{\rm hom}[\phi]|^2\,d\sigma\\
 +\int_{\Sigma_0\cap\{r\geq 2\}} r^2 |\tX F_{\rm hom}[\phi]|^2\,d\sigma dr
\end{multline*}
and
\begin{equation*}
 D_2^{(1)}[\phi]=\int_{\Sigma_0 \cap\{r\geq r_I\}}r  |L(r^2L)^2 K \hpsi_{|m|\geq 2}|^2\,d\sigma dr+D^{(1)}[\phi]+D^{(1)}[K\phi].
\end{equation*}

Let $r_0>1$ and let $\mathfrak{h}_{m\ell}$ be chosen such that the following cancellation occurs:
\begin{equation*}
\int_{1}^{\infty}w_{\infty,m \ell}(r)e^{-im\int_{r_0}^r \frac{x+1}{x-1}\,dx}((F_{\Pi})_{m \ell}+(F_{\rm hom})_{m \ell}[\phi])\,dr\neq 0.
\end{equation*}
To guarantee this can be done for general initial data, we need that (for $\mathfrak{h}_{m\ell}\neq 0$):
\begin{equation*}
\int_{1}^{\infty}w_{\infty,m \ell}(r)e^{-im\int_{r_0}^r \frac{x+1}{x-1}\,dx}(F_{\Pi})_{m \ell}\,dr\neq 0.
\end{equation*}
We will assume, without loss of generality, that the above condition holds for all $\ell\leq l_m$. If it did not, we could modify the functions $f_{\pm }^{m\ell}$ and $f_{\pm,{\rm exp}}^{m \ell}$ by replacing $\left(\tau+\frac{4}{r-1}+1\right)$ with $\left(\tau+\frac{4}{r-1}+\beta\right)$ and $\left(\tau+\frac{4}{r-1}+e^{r-1}\right)$ with $\left(\tau+\frac{4}{r-1}+\beta e^{r-1}\right)$, respectively, and choose $\beta\in \R$ appropriately, as this modification does not affect the leading-order asymptotic behaviour of $f_{\pm }^{m\ell}$ and $f_{\pm,{\rm exp}}^{m \ell}$.

Observe that $\mathfrak{h}_{m\ell}$ is therefore a constant multiple of $\int_{1}^{\infty}w_{\infty,m \ell}(r)e^{-im\int_{r_0}^r \frac{x+1}{x-1}\,dx}(F_{\rm hom})_{m \ell}[\phi]\,dr$.

Define
\begin{equation*}
	\mathfrak{h}_m:=\sqrt{\sum_{\substack{|m|\leq \ell\leq l_m\\ \alpha_{m\ell}<0}}|{\mathfrak{h}}_{m \ell}|^2}.
		\end{equation*}
		
		By Lemma \ref{lm:negalpha}, the set of $\ell$ for which $\alpha_{m\ell}<0$ is non-empty for all $m$, so $\mathfrak{h}_m\neq 0$ for generic data (i.e.\ when restricted to the azimuthal number $m$, this will be the complement of a codimension $l_m+1-|m|$ set of data).

We first state the following precise version of Theorem \ref{thm:tails}:
\begin{theorem}[Late-time tails and stability for azimuthal modes on extremal Kerr; precise version]
			\label{thm:latetimeasym}
			Let $|m|\geq 2$ and consider solutions $\phi_m$ to \eqref{eq:waveeq} arising from initial data satisfying\\ $\sum_{2k+l\leq 2}D_2^{(1)}[Q^lK^k\phi_m]<\infty$. Assume that $\alpha_{m\ell}\neq 0$ for all $\ell$ and define
			\begin{equation*}
			\nu_m:=\min \left( \{\alpha_{m \ell}\,|\, \alpha_{m \ell}> 0\}\cup\{1\}\right).
			\end{equation*}
			Assume that for all $1<r_0<r_1$
			\begin{align}
			\label{asmphi:thm1}
\int_{0}^{\infty}\int_{\Sigma_{\tau}\cap\{r\leq r_1\}} |\phi_m|^2\,d\sigma dr d\tau<&\:\infty,\\
	\label{asmphi:thm2}
 \int_{0}^{\infty}\int_{\Sigma_{\tau}\cap\{r_0\leq r\leq r_1\}} |X\phi_m|^2+|T \phi_m|^2+ |\snabla_{\s^2}\phi_m|^2\,d\sigma dr d\tau<&\:\infty.
\end{align}
			Let $0<\eta<\frac{\nu_m}{4}$ be arbitrarily small. Then there exists a constant $C_m>0$, such that for all $\tau\geq 0$:
			\begin{align*}
			\left|\phi_m-\Pi \right|(\tau,r,\theta,\tphi)&\\
			\leq &\:C_mr^{-1}((\tau+1)(1-r^{-1})+1)^{-\frac{1}{2}}(\tau+1)^{-\frac{1}{2}-\frac{\nu_m}{4}+\eta}\sqrt{\sum_{2k+l\leq 2}D_2^{(1)}[Q^lK^k\phi]},\\
		\int_{\Sigma_{\tau}} \mathcal{E}_{2}[\phi-\Pi]\, r^2d\sigma dr\leq&\:  C_m (1+\tau)^{-\nu_m+\eta} D^{(1)}[\phi].
			\end{align*}
			\end{theorem}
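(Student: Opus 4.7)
The strategy is to construct an explicit approximate solution $\Pi$ whose leading late-time behavior captures that of $\phi_m$ globally across $\mathcal{R}$, then show that the remainder $\phi_m - \Pi$ decays faster through an iterated application of the decay machinery developed in the earlier sections. The ansatz $\Pi$ is built from the stationary solutions $w_{\pm, m\ell}$ of \S\ref{sec:statsol} multiplied by the explicit profiles $f_{\pm}^{m\ell}$ and $f_{\pm,\mathrm{exp}}^{m\ell}$; these profiles are designed so that the arguments $\tau + \tfrac{4}{r-1} + 1$ interpolate correctly between the horizon regime (where they reduce to $\tfrac{4}{r-1}$ and reproduce the $e^{\pm i\sqrt{-\alpha_{m\ell}}\log v}$ behavior on $\mathcal{H}^+$) and the far-away regime (where they reduce to $\tau$ and produce the $u^{-1}e^{-im\upomega_+ u}$ behavior along $\mathcal{I}^+$ after combining with the oscillation inherent in $w_{\infty, m\ell}$).

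First I would establish baseline energy decay for $\phi_m$: combining the integrated energy estimates of \S\ref{sec:intenestphys} (derived from the frequency-space analysis of \S\ref{sec:freqspacean}) with the $r$-weighted and $(r-r_+)$-weighted hierarchies of \S\ref{sec:rweight} and the energy-decay estimates for $K$-derivatives of \S\ref{sec:edecay}, one obtains preliminary polynomial decay rates for $\phi_m$ and its $K$-derivatives in the norms $\mathcal{E}_p$. These rates are not yet sharp, but they suffice to make sense of the initial-data integrals defining $\mathfrak{h}_{m\ell}$ and to justify the subsequent manipulations.

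Second I would apply the time-integral construction of \S\ref{sec:ellipticKinv}. Because $K$ is tangent to $\mathcal{H}^+$ and degenerate, inverting $K$ is an elliptic-type problem on the $\Sigma_\tau$ with singular behavior at $r=1$: the solutions to $Kf = \phi_m$ can only exist with prescribed regularity once the projections of $\phi_m$ against the stationary modes $e^{-i\upomega_+ t + im\varphi}S_{m\ell}(\theta)\mathfrak{w}_{\pm,m\ell}(r)$ vanish. The constants $\mathfrak{h}_{m\ell}$ are characterized precisely as the corresponding projections of $F_{\mathrm{hom}}[\phi]$ against $w_{\infty,m\ell}$ (with the twist factor $e^{-im\int \frac{x+1}{x-1}dx}$ accounting for the horizon oscillation). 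By design, $\Pi$ is constructed so that $\phi_m - \Pi$ satisfies the vanishing projection conditions and hence admits a time integral, producing an extra factor of decay. Iterating this (as in \S\ref{sec:decay}), with the regularity upgrades permitted by the $D_2^{(1)}$-bounded derivatives, yields the rate $(\tau+1)^{-\nu_m + \eta}$ in energy and, via a Sobolev-type argument and the pointwise estimates of \S\ref{sec:decay}, the pointwise bound $r^{-1}((\tau+1)(1-r^{-1})+1)^{-1/2}(\tau+1)^{-1/2-\nu_m/4+\eta}$ for $\phi_m - \Pi$.

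The main obstacle is controlling $\square_{g_{M,a}} \Pi$ as an effective source term: one must verify that the action of the wave operator on the ansatz $\Pi$ produces only remainders that are integrable in $\tau$ against the improved decay rates dictated by $\nu_m$, uniformly in $r \in [1,\infty)$. This requires delicate cancellations between (a) the stationary equation satisfied by $w_{\pm,m\ell}$, (b) the $K$-derivatives hitting the $\tau$-dependent profiles $f_{\pm}^{m\ell}$ (which produce factors of $-\tfrac{1}{2} \mp \sqrt{\alpha_{m\ell}} \pm im$ matching those in \eqref{eq:estderwelliptic}), and (c) the $\widetilde{X}$-derivatives hitting the $\tfrac{4}{r-1}$ interpolation factor; each of these separately gives terms with borderline decay, and only their sum gains. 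The $\widetilde{w}_\infty$-correction terms with coefficients $E_\pm$ are needed precisely to cancel the leading $\widetilde{A}_\pm$-contributions that would otherwise obstruct this cancellation at the level of the radiation field along $\mathcal{I}^+$. Handling the distinct logarithmic-growth cases where $\alpha_{m\ell}=0$ is excluded by hypothesis, simplifying the matching.
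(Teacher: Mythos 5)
Your proposal captures the paper's actual argument: construct $\Pi$ from the stationary radial profiles $w_{\pm,m\ell}$ multiplied by the explicit $\tau$-dependent interpolating factors $f_\pm^{m\ell}$, $f_{\pm,\mathrm{exp}}^{m\ell}$; verify that $\square_g\Pi$ is a sufficiently decaying source through the delicate $J_1+J_2+J_3$ cancellations (Appendix A); construct $K^{-1}\hat\phi$ by solving $\mathcal{L}u=F[\phi]$ on $\Sigma_0$ with $\mathfrak{h}_{m\ell}$ fixed by the projection condition; and then run the $K$-derivative energy decay machinery of \S\ref{sec:edecay} on $K^{-1}\hat\phi$ to produce the stated rates (Propositions \ref{prop:edecayKinv}, \ref{prop:nondegedecay}, \ref{prop:improvedenergydecay}, \ref{prop:pointwest}, culminating in Corollary \ref{cor:latetimeasym}). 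This is essentially the same route as the paper.

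Two small imprecisions worth noting, neither fatal. First, the $K$-inversion is applied once (to produce $\widetilde\phi=K^{-1}\hat\phi$), not iterated; the iteration that improves rates is the $\gamma=1-2^{-l}$ interpolation hierarchy inside \S\ref{sec:edecay} applied to $K$- and $K^2$-derivatives, and the factor $\nu_m$ in the final rate enters through the decay of $\square_g\Pi$ for the modes with $\alpha_{m\ell}>0$ (Propositions \ref{prop:inhomagtr0}, \ref{prop:decayPi}), not through repeated time-integration. Second, the $E_\pm\,\widetilde w_\infty$ correction terms primarily serve to convert $\Pi_{m\ell}$ into a clean $b_+w_+f_++b_-w_-f_-$ decomposition near $r=1$ (so that the $J_1+J_2+J_3$ cancellation against the $w_\pm$ ODE works there), while the $e^{r-1}$ factor inside $f_{\pm,\mathrm{exp}}$ makes those corrections exponentially negligible as $r\to\infty$ — so the radiation-field cleanliness along $\mathcal{I}^+$ is a by-product rather than the primary mechanism.
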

			\begin{proof}
			The theorem is proved by combining Proposition \ref{prop:nondegedecay} and Corollary \ref{cor:latetimeasym}, which are proved in \S \ref{sec:decay}.
			\end{proof}

We will now arrive at a precise version of Theorem \ref{thm:azimuthal}. First, Theorem \ref{thm:preciseinstab} below shows that if the global assumptions \eqref{asmphi:thm1} and \eqref{asmphi:thm2} hold, the azimuthal instability \textbf{(a)} must hold.
\begin{theorem}[The existence of azimuthal instabilities on extremal Kerr; precise version]
	\label{thm:preciseinstab}
	Let $|m|\geq 2$ and consider solutions $\phi_m$ satisfying the assumptions from Theorem \ref{thm:latetimeasym}. If there exists an $|m|<\ell_*\leq l_m$ such that $\alpha_{m\ell_*}=0$, restrict to initial data for which correspondingly $\mathfrak{h}_{m\ell_*}=0$.
	\begin{enumerate}[label=\emph{(\roman*)}]
	\item  Let $\eta>0$ be suitably small. Then there exist constants $c_m>0$ and $C_m>0$, such that
	\begin{equation*}
	\int_{\Sigma_{\tau}}\mathcal{E}_2[\phi_m]\,r^2d\sigma dr\geq c_m\mathfrak{h}_m^2-C_m(1+\tau)^{-\nu_m+\eta}D^{(1)}[\phi_m].
	\end{equation*}
	\item Let $k\in \N$. There exists a constant $c_m=c_m(m,k)>0$ and a monotonically increasing sequence $\{\tau_n\}$ in $ \R_+$ with $\tau_n\to\infty$ along which
	\begin{equation*}
		||Y^k \phi_m||^2_{L^{\infty}(\Sigma_{\tau_n}\cap \mathcal{H}^+)}\geq\frac{1}{4\pi}\int_{\Sigma_{\tau_n}\cap \mathcal{H}^+}|Y^k \phi_m|^2\,d\sigma\geq  c_{m} \mathfrak{h}_m^2(1+\tau_n)^{1+2k}.
	\end{equation*}
		\end{enumerate}
	\end{theorem}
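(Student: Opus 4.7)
The plan is to prove (i) by combining the decay estimate of Theorem \ref{thm:latetimeasym} with an explicit lower bound on the $\mathcal{E}_2$-energy of the asymptotic profile $\Pi$, and to prove (ii) by evaluating the wave equation on $\mathcal{H}^+$ to derive a growth ODE for $Y^k\phi_m$ restricted to the horizon, driven by the precise horizon asymptotics from Theorem \ref{thm:tails}. The restriction of initial data when $\alpha_{m\ell_*}=0$ ensures that only the non-degenerate expressions \eqref{eq:exprPi1}--\eqref{eq:exprPi4} contribute to $\Pi$.

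For (i), Theorem \ref{thm:latetimeasym} yields $\int_{\Sigma_\tau}\mathcal{E}_2[\phi_m-\Pi]\,r^2\,d\sigma dr \leq C_m(1+\tau)^{-\nu_m+\eta}D^{(1)}[\phi_m]$, so the reverse triangle inequality $a^2\geq\tfrac{1}{2}b^2-2c^2$ (valid when $a\geq b-c$) reduces the problem to the uniform lower bound $\int_{\Sigma_\tau}\mathcal{E}_2[\Pi]\,r^2\,d\sigma dr\geq c_m\mathfrak{h}_m^2$. I would establish this by direct computation with the explicit formulas for $\Pi_{m\ell}$, using the near-horizon expansions \eqref{eq:wfreqexpelliptic}--\eqref{eq:estderwelliptic} of $w_{\pm,m\ell}$ and the form of $f_\pm^{m\ell}$. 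In the near-horizon strip $\{1\leq r\leq 1+\delta\}\cap\Sigma_\tau$, the combination $w_{\infty,m\ell}(f_+^{m\ell}+f_-^{m\ell})$ has bounded modulus of amplitude $\sim|\mathfrak{h}_{m\ell}|$, while $X\Pi_{m\ell}$ exhibits an oscillating $(r-1)^{-1}$ singularity of amplitude proportional to $\mathfrak{h}_{m\ell}$; integrating $|X\Pi_{m\ell}|^2\,r^2\,dr$ over the strip produces a contribution $\geq c|\mathfrak{h}_{m\ell}|^2$ uniformly in $\tau$. Orthogonality of $\{e^{im\tphi}S_{m\ell}(\theta)\}_\ell$ on $\s^2$ then yields $\int_{\Sigma_\tau}\mathcal{E}_2[\Pi]\,r^2\,d\sigma dr\geq c_m\mathfrak{h}_m^2$.

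For (ii), the extremal relation $\Delta=(r-1)^2$ gives $\Delta|_{r=1}=Y\Delta|_{r=1}=0$, so the top-order term $Y(\Delta Y\phi)$ in \eqref{eq:waveeqK} drops out at the horizon, leaving the algebraic identity
\[
4KY\phi_m\bigr|_{\mathcal{H}^+} = -(\slashed{\mathcal{D}}-\Phi^2)\phi_m + \Phi\phi_m - \sin^2\theta\bigl(K^2\phi_m - K\Phi\phi_m\bigr) - 2K\Phi\phi_m - 2K\phi_m.
\]
Projecting onto $e^{im\tphi}S_{m\ell}(\theta)$, the dominant contribution on the right is $(\Lambda_{+,m\ell}-m^2)\phi_m|_{\mathcal{H}^+}$, which by Theorem \ref{thm:tails} behaves as $v^{-1/2}$ times a non-trivial oscillating factor proportional to $\mathfrak{h}_{m\ell}$ (note $\Lambda_{+,m\ell}-m^2>0$ by \eqref{eq:mainineqLambda} for $|m|\geq 2$). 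Since $K=\partial_v$ on $\mathcal{H}^+$, integrating $\partial_v(Y\phi_m)=KY\phi_m$ in $v$ produces $Y\phi_m|_{\mathcal{H}^+}(v)$ with growing modulus and an oscillating phase. Applying $Y$ successively to the wave equation and re-evaluating at $r=1$, where the vanishing of $\Delta$ and $Y\Delta$ again suppresses the $Y^{k+1}$ contribution and the lower-order growing $Y^k$-terms now dominate the right-hand side, yields by induction on $k$ the growth of $Y^k\phi_m|_{\mathcal{H}^+}$ at the claimed rate.

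The main obstacle I anticipate is selecting the sequence $\{\tau_n\}$. The leading-order behavior of $Y^k\phi_m|_{\mathcal{H}^+}$ is a superposition over $\ell$ with $\alpha_{m\ell}<0$ of oscillating factors $e^{\pm i(\sqrt{-\alpha_{m\ell}}\pm m)\log v}$, which at generic $v$ may interfere destructively. Orthogonality of $\{S_{m\ell}(\theta)\}_\ell$ on $\s^2$ decouples the $\ell$-modes in the sphere integral, reducing the task to finding $\tau_n$ for each fixed $\ell$ (notably $\ell=|m|$, where $\alpha_{m|m|}<0$ by Lemma \ref{lm:negalpha}) along which the remaining two-term oscillation $a_{m\ell}e^{-i(\sqrt{-\alpha_{m\ell}}+m)\log v}+b_{m\ell}e^{i(\sqrt{-\alpha_{m\ell}}-m)\log v}$ has modulus bounded below. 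A pigeonhole argument on $\log\tau_n$ modulo the relevant periods $2\pi/(2\sqrt{-\alpha_{m\ell}})$ furnishes such a sequence with positive density, along which the claimed lower bound is attained.
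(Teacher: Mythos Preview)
Your proposal is correct and follows essentially the same strategy as the paper: for (i) the reverse triangle inequality reducing to a lower bound on $\int_{\Sigma_\tau}\mathcal{E}_2[\Pi]\,r^2\,d\sigma dr$ via explicit near-horizon computation (Proposition~\ref{eq:energyinstab}), and for (ii) restriction of \eqref{eq:waveeqK} to $\mathcal{H}^+$, projection onto $S_{m\ell}e^{im\tphi}$, integration in $K=\partial_v$, and induction in $k$ (Propositions~\ref{prop:pointwinstab} and~\ref{prop:hopointwinstab}); the selection of $\{\tau_n\}$ via $\ell$-orthogonality and a single-mode oscillation argument is exactly what the paper does. One caution on your heuristic for (i): $\tX\Pi_{m\ell}$ is \emph{not} singular as $r\to 1$ at fixed $\tau$ --- the paper's computation shows $|\tX\Pi_{m\ell}|^2\sim |\mathfrak{h}_{m\ell}|^2\bigl((\tau+1)(r-1)+4\bigr)^{-3}(\tau+1)$, which is bounded at $r=1$ but concentrates in a strip of width $\sim(\tau+1)^{-1}$ with amplitude $\sim(\tau+1)^{1/2}$, so the $r$-integral is uniformly positive; you will also need to check that the cross term between the $w_+f_+$ and $w_-f_-$ contributions does not cancel the diagonal part (the paper handles this by a Young inequality using $|2+\sqrt{\alpha_{m\ell}}|>1$).
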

	\begin{proof}
	The statements (i) and (ii) follow from a combination of Propositions \ref{eq:energyinstab}, \ref{prop:pointwinstab} and \ref{prop:hopointwinstab}.
	\end{proof}
	
	From Theorem \ref{thm:preciseinstab}, it follows easily deduce a general instability result, which includes the instability scenarios (b) and (c) of Theorem \ref{thm:latetimeasym}.
	\begin{corollary}
		There exist smooth, compactly supported initial data $(\phi_*|_{\Sigma},K\phi_*|_{\Sigma})$, such that the corresponding solution $\phi_*$ satisfies at least one of the following instabilities:
		\begin{enumerate}[label=\emph{(\alph*)}]
	\item Let $k\in \N$. There exists a constant $b>0$, depending on initial data, and two sequences of times $\{\tau_n\}$ and $\{\tau_n'\}$ with $\tau_n,\tau_n'\to \infty$, such that
\begin{align}
\label{eq:corinstab1}
	\int_{\Sigma_{\tau_n'}}\mathcal{E}_2[\phi_*]\,r^2d\sigma dr&\: \geq b^2,\\
	\label{eq:corinstab2}
||Y^k \phi_*||_{L^{\infty}(\Sigma_{\tau_n}\cap \mathcal{H}^+)}& \:\geq b \tau_n^{\frac{1}{2}+k}.
\end{align}
	\item There exists an $|m|\geq 2$ for which
	\begin{equation}
	\label{eq:corinstab3}
		(\phi_*)_m\notin L^2_{\tau}L^2(\Sigma_{\tau}).
	\end{equation}
	\item There exists an $|m|\geq 2$ and radii $1<r_0\leq r_1<\infty$ for which
\begin{equation}
\label{eq:corinstab4}
		|(Y\phi_*)_m|+|(K\phi_*)_m|+|(\snabla_{\s^2}\phi_*)_m|\notin L^2_{\tau}L^2(\Sigma_{\tau}\cap\{r_0\leq r\leq r_1\}).
	\end{equation}
		\end{enumerate}
Furthermore, the subset of smooth compactly supported data which does \underline{not} satisfy any of the instabilities (a), (b) and (c) has infinite codimension.
	\end{corollary}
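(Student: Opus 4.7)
The plan is to derive this corollary from Theorem \ref{thm:preciseinstab} via a case analysis on whether the qualitative global integrability assumptions \eqref{asmphi:thm1}--\eqref{asmphi:thm2} hold for a chosen single azimuthal mode of $\phi_*$.

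For the existence assertion, I first fix some $|m|\geq 2$ and observe that, by its defining formula as an integral of $F_{\rm hom}[\phi]$ against an explicit radial weight, each map $(\phi_*|_{\Sigma},K\phi_*|_{\Sigma})\mapsto \mathfrak{h}_{m\ell}$ is a non-trivial continuous linear functional on the space of smooth compactly supported initial data. I then choose data supported on the single azimuthal mode $m$ with $\mathfrak{h}_m\neq 0$ and with $\mathfrak{h}_{m\ell_*}=0$ for every $|m|<\ell_*\leq l_m$ with $\alpha_{m\ell_*}=0$ (finitely many linear conditions, which can be accommodated by a finite-dimensional perturbation without destroying $\mathfrak{h}_m\neq 0$). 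A three-way case split now produces one of (a), (b), (c). If \eqref{asmphi:thm1} fails for this mode, then $(\phi_*)_m \notin L^2_{\tau}L^2_{\rm loc}(\Sigma_\tau)$, which yields \eqref{eq:corinstab3}. If \eqref{asmphi:thm1} holds but \eqref{asmphi:thm2} fails, then using $Y=X-\hfol T$, $K=T+\upomega_+\Phi$ and the fact that $|\Phi(\phi_*)_m|=|m|\,|(\phi_*)_m|$ is controlled via \eqref{asmphi:thm1}, the failure of \eqref{asmphi:thm2} transfers to a failure for $Y(\phi_*)_m$, $K(\phi_*)_m$ or $\snabla_{\s^2}(\phi_*)_m$, yielding \eqref{eq:corinstab4}. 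Otherwise both assumptions hold and Theorem \ref{thm:preciseinstab} applies to $(\phi_*)_m$: part (i) combined with $\mathfrak{h}_m\neq 0$ delivers $\int_{\Sigma_\tau}\mathcal{E}_2[(\phi_*)_m]\,r^2\,d\sigma dr\geq \tfrac{1}{2}c_m\mathfrak{h}_m^2$ for all sufficiently large $\tau$, proving \eqref{eq:corinstab1}; while part (ii) combined with $\mathfrak{h}_m\neq 0$ yields \eqref{eq:corinstab2} after a square root. Thus (a) holds.

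For the infinite-codimension statement, note that any datum avoiding both (b) and (c) satisfies \eqref{asmphi:thm1}--\eqref{asmphi:thm2} for \emph{every} $|m|\geq 2$, so Theorem \ref{thm:preciseinstab} becomes applicable modewise whenever the side condition $\mathfrak{h}_{m\ell_*}=0$ for $\alpha_{m\ell_*}=0$ is satisfied. Avoiding (a) on top of this then forces $\mathfrak{h}_m=0$, i.e.\ $\mathfrak{h}_{m\ell}=0$ for every pair $(m,\ell)$ with $|m|\geq 2$ and $\alpha_{m\ell}<0$. By Lemma \ref{lm:negalpha} there is at least one such $\ell$ for each $|m|\geq 2$, and the functionals $\mathfrak{h}_{m\ell}$ are linearly independent across distinct $(m,\ell)$ by the $L^2(\s^2)$-orthogonality of the spheroidal basis $\{e^{im\tphi}S_{m\ell}(\theta)\}$. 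Together with the finitely many $\mathfrak{h}_{m\ell_*}=0$ side conditions for each $m$, this produces infinitely many independent codimension-one linear conditions on the data, which is the claimed infinite codimension.

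The main obstacle I anticipate is rigorously establishing the non-triviality and mutual independence of the $\mathfrak{h}_{m\ell}$ as functionals on compactly supported smooth data. This reduces to the statement that the radial weight $w_{\infty,m\ell}(r)$, which by \eqref{eq:largerasympw} has leading behaviour $r^{-1}e^{imr_*}$ near $r=\infty$ and by Lemma \ref{lm:propfracw} does not vanish identically near $r=1$, is non-zero on a set of positive measure in $(1,\infty)$; combined with the density of compactly supported azimuthal-mode profiles, this yields the existence of compactly supported data with any prescribed finite collection of $\mathfrak{h}_{m\ell}$ values and hence the independence of the corresponding linear functionals. Once this is in hand, the case analysis above and the mode-by-mode application of Theorem \ref{thm:preciseinstab} are straightforward.
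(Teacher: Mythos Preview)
Your existence argument is correct and essentially matches the paper's: you fix an azimuthal mode $|m|\geq 2$, arrange $\mathfrak{h}_m\neq 0$ together with the finitely many side conditions $\mathfrak{h}_{m\ell_*}=0$ for $\alpha_{m\ell_*}=0$, and then split on whether the qualitative integrability assumptions \eqref{asmphi:thm1}--\eqref{asmphi:thm2} hold for that mode. The conversion from $(X,T)$- to $(Y,K)$-derivatives in the middle case is fine.

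Your infinite-codimension argument, however, takes a different route from the paper and contains a gap. You want to deduce that any datum avoiding (a), (b), (c) must satisfy $\mathfrak{h}_{m\ell}=0$ for all $(m,\ell)$ with $\alpha_{m\ell}<0$, by applying Theorem~\ref{thm:preciseinstab} modewise. But Theorem~\ref{thm:preciseinstab} carries the \emph{hypothesis} that $\mathfrak{h}_{m\ell_*}=0$ whenever $\alpha_{m\ell_*}=0$; it does not assert that stable data automatically satisfy this. So for a mode $m$ at which some $\alpha_{m\ell_*}=0$, a stable datum with $\mathfrak{h}_{m\ell_*}\neq 0$ is simply outside the scope of the theorem, and you cannot conclude $\mathfrak{h}_m=0$ for it. Your phrase ``together with the finitely many $\mathfrak{h}_{m\ell_*}=0$ side conditions'' conflates hypotheses with conclusions: what you have actually shown is only that the stable set lies in the non-linear set $\{\mathfrak{h}_{m\ell_*}\neq 0\}\cup\{\mathfrak{h}_m=0\}$, which is not a codimension-one linear condition. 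Your approach would go through cleanly if one knew that $\alpha_{m\ell}\neq 0$ for all $\ell$ and infinitely many $m$, but the paper does not establish this and treats the exceptional case conditionally throughout.

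The paper instead uses a perturbation argument that bypasses the side-condition issue entirely. For each $|m|\geq 2$ it constructs an explicit single-mode datum $\phi_*$ (with $\mathfrak{h}_{m|m|}\neq 0$ and the side conditions built in by design) which exhibits one of the instabilities. Then, given any stable $\phi$, it shows via a reverse-triangle-inequality/subsequence argument applied to the seminorms $\bigl(\int\mathcal{E}_2[\cdot]\bigr)^{1/2}$ and $\|Y^k(\cdot)\|_{L^\infty(\mathcal{H}^+)}$ that $\phi+\lambda\phi_*$ is unstable for all but finitely many $\lambda\in\R$. Since the $\phi_*$'s supported on distinct modes are linearly independent, this yields infinite codimension directly, without ever requiring the stable set to sit inside a linear subspace.
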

	\begin{proof}
	Suppose that there do not exist smooth, compactly supported data for which the instabilities (b) and (c) hold. Then	 for all $|m|\geq 2$, smooth compactly supported data supported on the $m$th azimuthal mode must satisfy \eqref{asmphi:thm1} and \eqref{asmphi:thm2}. We then take $(\phi_*|_{\Sigma},K\phi_*|_{\Sigma})$ to be any smooth and compactly supported data with $\mathfrak{h}_{m|m|}\neq 0$. If $\alpha_{m\ell_*}=0$ for some $\ell_*>|m|$, we invoke the additional restriction $\mathfrak{h}_{m\ell_*}=0$. Then the instability (a) follows from Theorem \ref{thm:preciseinstab}, with $b$ proportional to $\mathfrak{h}_m$.
	
	To conclude infinite codimensionality, we first observe that $|m|\geq 2$ above was arbitrary, so we can in fact find an infinite linearly independent set of data, supported on different $m$, with each element resulting in instability (a).
	
	Consider now a solution $\phi$ arising from smooth, compactly supported data $(\phi|_{\Sigma},K\phi|_{\Sigma})$, for which none of the instabilities (a)--(c) hold. By linearity of the wave equation, the data $(\phi|_{\Sigma}+\lambda \phi_*|_{\Sigma},K\phi|_{\Sigma}+\lambda K\phi_*|_{\Sigma})$, with $\lambda\in \R$, correspond to the solution $\phi+\lambda \phi_*$. If $\phi_*$ satisfies (b) or (c), then so will $\phi+\lambda \phi_*$ for any $\lambda\neq 0$. 
	
	Suppose that $\phi_*$ satisfies (a) with the sequences $\{\tau_n\}$ and $\{\tau_n'\}$. If there exists a $\lambda_*\in \R$ such that $\phi+\lambda_* \phi_*$ does not satisfy \eqref{eq:corinstab1} (along any subsequence of $\{\tau_n'\}$), then there exists a subsequence $\{\tau_{n_l}'\}$ of $\{\tau_n'\}$ along which $\int_{\Sigma_{\tau_{n_l}'}} \mathcal{E}_2[\phi+\lambda_*\phi_*]\,r^2d\sigma dr\to 0$ as $l\to \infty$. By taking a further subsequence, it follows that $\int_{\Sigma_{\tau}} \mathcal{E}_2[\phi+\lambda\phi_*]\,r^2d\sigma dr\geq \frac{b^2(\lambda-\lambda_*)^2}{2}$ along a subsequence of $\{\tau_n'\}$. 
	
	Similarly, we can conclude that if for some $k\in \N$ and $\lambda_*\in \R$, $\phi+\lambda_* \phi_*$ does not satisfy \eqref{eq:corinstab2} (along any subsequence), then there exist a sequence of times along which \eqref{eq:corinstab2} does hold for $\phi+\lambda \phi_*$ with $b$ replaced by $\frac{b}{2}|\lambda-\lambda_*|$. Hence, the subset of initial data for which (a), (b) and (c) all do not hold has codimension at least 1. Since we can repeat this argument with $\phi_*$ supported on different azimuthal modes, this subset must have infinite codimension.
	\end{proof}

	\section{A sketch of the proofs and an overview of the main new ideas}
	\label{sec:sketchproof}
	In this section, we give an overview of the main steps in the proofs of Theorems \ref{thm:latetimeasym} and \ref{thm:preciseinstab}. We will highlight the new techniques that are introduced in the present paper and sketch how the novel difficulties listed in \S \ref{sec:maindiff} are overcome. We will again assume for the sake of convenience that $r_+=1$ and $a=1$.
	\subsection{Part I: Integrated energy estimates in frequency space}
	The first step towards proving energy decay is to control spacetime integrals of the form
	\begin{equation*}
	\int_{0}^{\infty} \int_{\Sigma_{\tau}} r^{-p}(1-r^{-1})^q [|\phi_m|^2+|\partial \phi_m|^2]\,r^2d\sigma dr,
	\end{equation*}
	with $p,q>0$ appropriate exponents, in terms of a weighted energy flux along $\Sigma_0$. In view of the presence of superradiance and trapped null geodesics that cannot be easily localized in physical space, the strategy for controlling the above integrals involves a Fourier transform in time $t$, which introduces a time frequency $\omega$. Here, we make use of the qualitative assumptions \eqref{asmphi:thm1} and \eqref{asmphi:thm2}, which guarantee sufficient integrability of $\phi$ and its derivatives to make sense of the Fourier transform. The main estimates in Part I therefore take place in \emph{Fourier space}.\\
	\\
	\paragraph{\textbf{Carter separation and the frequency triple $(\omega, m, \Lambda)$}}
	To analyze the Fourier transform, we make use of the \emph{Carter separation property} for \eqref{eq:waveeq}, which implies that if we decompose the Fourier transform of $\phi$ in $t$ into appropriate angular functions $e^{im\varphi}S_{m\ell}(\theta;a\omega)$, we can separate variables and the wave equation becomes equivalent to an ODE for the coefficients $u_{m \ell}(r_*;\omega)$ in the expansion of the Fourier transform, which takes the form of a time-independent Schr\"odinger equation:
	\begin{equation}
	\label{eq:introode}
	u''+(\omega^2-V)u=H,
	\end{equation} 
	with $V=V_{\ell m}(r_*;\omega)$ a potential function and the inhomogeneity $H=H_{m\ell}(r;\omega)$ determined by the choice of initial data for $\phi$ and $(\cdot)'=\frac{d}{dr*}(\cdot)$. The angular functions $e^{im\varphi}S_{m\ell}$ are eigenfunctions of oblate spheroidal operators with eigenvalues $\lambda_{m\ell}$, with $\lambda_{m\ell}=\ell(\ell+1)$ in the special case $a=0$; see \S \ref{sec:sphere}. The translated angular frequency $\Lambda_{m\ell}=\lambda_{m\ell}+a^2\omega^2$ forms part of the frequency triple $(\omega,m,\Lambda)$ that can be used to characterize the relevant phenomena like superradiance and trapping, which are reflected in the different shapse of the potential $V$, depending on $m,\omega,\Lambda$. 
	
	Integrated energy estimates are obtained at the level of $u$ by multiplying the equation \eqref{eq:introode} with $f(r;\omega)u+g(r;\omega)u'$, for appropriate (frequency dependent) choices of $f,g$ and then integrating by parts in $r$. Different frequency ranges are dealt with using different multipliers that are adapted to the monotonicity properties of $V$ in the frequency regimes under consideration.
	\\
	\paragraph{\textbf{Key frequency regimes}}
	The main differences with the frequency-space analysis in the sub-extremal setting from \cite{part3} appear for \emph{near-superradiant frequencies}:
	\begin{equation*}
	|\omega-m\upomega_+|\leq \epsilon \sqrt{\Lambda }
	\end{equation*}
	with $\epsilon\ll 1$. These frequencies can be separated into two sub-regimes: \textbf{1)} $\Lambda\gg m^2$ and \textbf{2)} $\Lambda \lesssim m^2$. An important fact about regime \textbf{1)} is that it does not feature the difficulties associated to the trapping of null geodesics. The main difference with the $|a|<M$ case is the need for \emph{additional degeneracies} in the estimates at the event horizon, related to the higher degeneracy of $V-\omega^2$ at $\omega=m\upomega_+$ in the $|a|=M$ case. The relevant estimates are proved in \S \ref{eq:Lambdasr}.
	
	In general, an important new difficulty in the $|a|=M$ case is that the phenomenon of trapping and superradiance are coupled in regime \textbf{2)}. However, since we are assuming $m$ to be bounded or fixed, regime  \textbf{2)} is a \underline{bounded frequency regime}. Trapping of null geodesics affects unbounded $\Lambda$ and it is therefore not an issue in regime \textbf{2)} in the present paper. See the comments on the unbounded $m$ case in \S \ref{sec:remainingq}.
	\\
	\paragraph{\textbf{Bounded, near-superradiant frequencies}}
	The main new techniques introduced in the frequency-space analysis in the present paper are concerned with the regime \textbf{2)} for bounded $m$. The strategy here is to express:
	\begin{equation}
	\label{eq:wronskidintro}
u(r)=W^{-1}\left[u_{\rm inf}(r)\int_{1}^{r} u_{\rm hor}(r')H(r')\,\frac{r'^2+a^2}{\Delta}dr'+u_{\rm hor}(r)\int_{r}^{\infty} u_{\rm inf}(r')H(r')\,\frac{r'^2+a^2}{\Delta}dr'\right],
	\end{equation}
	with $u_{\rm hor}$ and $u_{\rm inf}$ solutions to \eqref{eq:introode} with $H=0$, with appropriate boundary conditions at $r=r_+$ and $r=\infty$, respectively. Estimates for $u$ follow from estimates for $u_{\rm hor}$ and $u_{\rm inf}$, combined with an estimate for $W^{-1}$. The main difficulties, when compared with the sub-extremal setting, are the following:
	\begin{enumerate}
	\item As part of the proof of mode stability, \cite{costa20} derived uniform bounds for $W^{-1}$. In contrast with the analogous estimate for sub-extremal Kerr in \cite{sr15}, these bounds feature constants that blow up as $\omega\to m\upomega_+$. Indeed, in \cite[Proposition 6.3]{costa20} it is shown that
	\begin{equation*}
	|W^{-1}|\lesssim \frac{1}{\sqrt{|\omega-m\upomega_+|}}.
	\end{equation*}
	\item In the limit $\omega\to m\upomega_+$, $(V-\omega^2)(r)$ has a quadratic zero at $r=r_+$, whereas it is linear in the sub-extremal case. For this reason, $r=r_+$ changes character from an \emph{irregular} singular point to a \emph{regular} singular point of \eqref{eq:introode}.
	\end{enumerate}
Since the singular point $r=r_+$ or $r_*=-\infty$ of the ODE \eqref{eq:introode} changes character at $m\upomega_+$, one can not appeal directly to standard ODE estimates to derive pointwise bounds for $u_{\rm inf}$ and $u_{\rm hor}$ for $r\in (r_+,\infty)$. By considering instead the resulting ODE with respect to the rescaled variable $x:=\frac{|\omega-m\upomega_+| r_*}{|m|}$, it is possible to obtain uniform pointwise estimates in the near-horizon region $|r_*|\gtrsim \frac{|m|}{|\omega-m\upomega_+| }$ that do not breakdown in the limit $\omega\to m\upomega_+$ (however, the corresponding region in space shrinks in the limit).

We are left with obtaining estimates for the homogeneous solutions $u_{\rm inf}$ in the region $|r_*|\lesssim \frac{|m|}{|\omega-m\upomega_+| }$. Here we face the difficulty that $V-\omega^2$ \underline{changes sign arbitrarily close to $r=r_+$} for $|\omega-m\upomega_+|$ sufficiently small. For this reason, it is significantly more difficult to derive integrated estimates in the bounded frequency regime with $|\omega-m\upomega_+|\ll 1$, compared to the bounded frequency regime with $|\omega|\ll 1$. Indeed at $\omega=0$, the singular point at $r=\infty$ similarly changes character from an irregular to a regular singular point. However, in that case, $V(r)\sim \Lambda r^{-2}$ as $r\to \infty$, so one can exploit that $V-\omega^2$ is positive for large but finite $r$-intervals and $\omega^2$ suitably small.

The key idea for resolving the difficulty of the change in sign of $V-\omega^2$ is to rescale $u$ by considering
\begin{equation*}
\check{u}_{\pm,\rm inf}=\frac{u_{\rm inf}}{\mathfrak{w}_{\pm }},
\end{equation*}
	where $\mathfrak{w}_{\pm }$ give the radial part of the $K$-independent solutions to \eqref{eq:waveeq} that were introduced in \S \ref{sec:statsol}. In frequency space, $\mathfrak{w}_{\pm }$ can be considered solutions to \eqref{eq:introode} with $\omega=m\upomega_+$, with $H\equiv 0$ and appropriate boundary conditions at $r=r_+$. The effect of this rescaling is that the ODE satisfied by $\check{u}_{\pm,\rm inf}$ features a potential which factors of $\omega-m\upomega_+$ that has a good sign in the region $|r_*|\lesssim \frac{|m|}{|\omega-m\upomega_+| }$. A physical space analogue of this rescaling trick was used in the integrated energy estimates of \cite{gaj22a}. The analysis of $\check{u}_{\pm,\rm hor}$ and $\check{u}_{\pm,\rm inf}$ is carried out in \S \ref{sec:homsoln}.
	
	Employing the identity \eqref{eq:wronskidintro}, together with estimates for $u_{\rm inf}$ in the region $|r_*|\gtrsim \frac{|m|}{|\omega-m\upomega_+| }$ and global estimates for $u_{\rm inf}$, obtained via the strategy above, then results in estimates for $u(r)$ that are also valid in the region of the form $|r_*|\gtrsim \frac{|m|}{|\omega-m\upomega_+| }$. Extending these to  $|r_*|\lesssim \frac{|m|}{|\omega-m\upomega_+| }$ involves considering again the rescaled variables $\check{u}_{\pm}=\frac{u}{\mathfrak{w}_{\pm }}$, as discussed in the estimates for $\check{u}_{\pm,\rm inf}$.
	
	The main conclusion of the above estimates in frequency space, when combined with Plancherel's theorem, is the following integrated energy estimate and energy boundedness estimate: for all $\epsilon>0$, there exists a constant $C_m>0$, such that
	\begin{multline}
	\label{eq:intromorawetz1}
	\int_{\Sigma_{\tau_2}}\mathcal{E}_{1-\epsilon}[\phi]\,r^2d\sigma dr +\int_{\tau_1}^{\tau_2} \int_{\Sigma_{\tau}\cap\{r\leq 2M\}}(1-Mr^{-1})^{\epsilon}\mathcal{E}_0[\phi]+(1-Mr^{-1})^{-1+\epsilon}|K\phi|^2\,d\sigma dr d\tau\\
				\leq C_m \int_{\Sigma_{\tau_1}}\mathcal{E}_{1+\epsilon}[\phi]\,r^2d\sigma dr,
					\end{multline}
					which is derived in \S \ref{sec:intenestphys}.
					
	Note in particular that the energy controlled in the spacetime integral on the LHS degenerates at $r=M$ and the energy flux term on the LHS features an energy with an additional factor $(1-Mr^{-1})^{2\epsilon}$, when compared to the energy flux term on the RHS. There is no loss of derivatives on the RHS due to the presence of trapped null geodesics, as for bounded $m$, there is no trapping in the region $r\leq 2M$.  We complement the above estimate with an integrated energy estimate that does feature the region containing trapped null geodesics and hence loses a derivative on the RHS: let $r_I>0$ be arbitrarily large, then
	\begin{equation}
	\label{eq:intromorawetz2}
	\int_{\tau_1}^{\tau_2} \int_{\Sigma_{\tau}\cap\{2M\leq r\leq r_I\}}\mathcal{E}_2[\phi]\,d\sigma dr d\tau\leq C\sum_{k\leq 1} \int_{\Sigma_{\tau_1}}\mathcal{E}_{1+\epsilon}[K^k\phi]\,r^2d\sigma dr.
					\end{equation}

	\subsection{Part II: Energy decay, $K$-derivatives and $K$-integrals}
	The remaining part of the proof concerns energy estimates in physical space, starting from the integrated and energy boundedness estimates \eqref{eq:intromorawetz1} and \eqref{eq:intromorawetz2}.\\
	\\
	\paragraph{\textbf{More $K$-derivatives lead to more decay }}
	The first new difficulty one faces when trying to convert the integrated energy estimates \eqref{eq:intromorawetz1} and \eqref{eq:intromorawetz2} into inverse-polynomial decay estimates for (weighted) energies along $\Sigma_{\tau}$ is the fact that there is no hierarchy of estimates with weights $(r-M)^{-p}$ near $r=M$, as is the case in the region $r\gg M$ for energies with $r^p$-weights, or in the extremal Reissner--Nordstr\"om setting near $r=M$. 
	
	Indeed, by applying the mean-value theorem to \eqref{eq:intromorawetz1} and \eqref{eq:intromorawetz2}, combined with an $r^p$-estimate in $r\gg M$, the best that one can get is $\tau^{-1}$ decay for the strongly degenerate energy with energy density $(1-Mr^{-1})^{\epsilon}\mathcal{E}_0[\phi]$ along some sequence $\{\tau_n\}$ of times. To get the desired pointwise estimates, one would need decay for all times $\tau\geq 0$, also for less degenerate energies, with better decay rates.
	
	The key idea here is that there is an \underline{alternative way} to obtain a hierarchy of additional, less degenerate estimates near $r=M$, namely,  by replacing $\phi$ with the time derivative $K\phi$. Indeed, from \eqref{eq:waveeq} if follows that, schematically,
	\begin{equation*}
	\mathcal{E}_p[K\phi]\lesssim \mathcal{E}_{p-2}[(r-M)Y\phi]+\mathcal{E}_{p-2}[\snabla_{\s^2}\phi]+\mathcal{E}_{p-2}[K\phi].
	\end{equation*}
	Hence, $K$ derivatives can be traded for additional degenerate factors of $r-M$, at the expense of requiring estimates additionally estimates for the derivatives $(r-M)Y\phi$ and $\snabla_{\s^2}\phi$. Fortunately, the estimates for $\phi$ extend naturally to $(r-M)Y\phi$ and $\snabla_{\s^2}\phi$. In \cite{gaj22a}, energy decay with the desired rates is similarly obtained by replacing $\phi$ with time derivatives $T^k\phi$. The additional difficulty in the present setting is that, in contrast with \cite{gaj22a}, there is no (degenerate) energy boundedness estimate available without additional degeneracies on the LHS.
	
	Note that the decay rate obtained by repeatedly applying $K$ is limited by the length of the hierarchy of $r^p$-estimates near infinity, so it is not possible to obtain arbitrarily fast decay for a suitable number of $K$-derivatives. The energy decay estimates for $K$-derivatives of $\phi$ are derived in \S \ref{sec:edecay}. \\
	\\
		\paragraph{\textbf{The difference function $\phi-\Pi$}}
		It remains to apply the energy decay estimates for $K\phi$ to obtain energy decay estimates for $\phi$. In view of the analogous steps in \cite{gaj22a}, it would be natural to think of $\phi$ as a $K$-derivative: $\phi=K(K^{-1}\phi)$, where, formally,
		\begin{equation*}
		K^{-1}\phi(\tau,r,\theta,\tphi)=-\int_{\tau}^{\infty}\phi(\tau',r,\theta,\tphi)\,d\tau'
		\end{equation*}
		and use the energy decay estimates for $K$-derivatives from the previous step.
		
		This strategy does not quite work. Indeed, in view of the decay rates in Theorem \ref{thm:tails}, we will prove $\tau^{-1}$-decay for $\phi$ away from $r=M$, but this is not integrable, so the time integral does not make sense.
		
		Instead, we consider
		\begin{equation*}
		\hphi=\phi-\Pi
		\end{equation*}
		with $\Pi$ the function defined in \S \ref{sec:precstatthm}, which will encode the precise, global leading-order late-time behaviour of $\phi$. As can be seen in Theorem \ref{thm:latetimeasym}, $\hphi$ \emph{does} decay with an integrable rate in time and we are able to make sense of $K^{-1}\hphi$. Since $\hphi$ satisfies an inhomogeneous wave equation \eqref{eq:inhomwaveeq} with $G=-\square_g\Pi$, the inhomogeneity has to decay suitably fast in $\tau$ and $r$ in order to be able to repeat the integrated energy estimates and energy decay estimates for $\hphi$ instead of $\phi$.
		
		The consideration of $\hphi$ moreover allows us to go beyond (sharp) upper bound decay estimates for $\phi$ and to derive the \emph{precise} leading-order late-time behaviour of $\phi$. Indeed, if we can show that $\hphi$ decays \emph{faster} than $\Pi$, it means that the leading-order late-time asymptotics or $\phi$ coincide with the leading-order late-time asymptotics of $\Pi$.
		\\
		\\
			\paragraph{\textbf{$K$-inversion and elliptic estimates}}
	Since we do not know \emph{a priori} that $\hphi$ decays suitably fast to be able to make sense of $K^{-1}\phi$, defined as an integral in time, we instead construct the initial data for $K^{-1}\phi$ by solving an elliptic type equation:
	\begin{equation*}
	\mathcal{L}K^{-1}\hphi|_{\Sigma_0}=F[\phi|_{\Sigma_0}],
	\end{equation*}
	where $F[\phi|_{\Sigma_0}]$ can be split as follows:
	\begin{equation*}
	F[\phi|_{\Sigma_0}]=F_{\rm hom}[\phi]+F_{\Pi},
	\end{equation*}
	with $F_{\rm hom}[\phi|_{\Sigma_0}]$ depending on the initial data for $\phi$ along $\Sigma_0$ and $F_{\Pi}$ depending only on $\Pi$.
	
	More concretely, we construct the projections $(K^{-1}\hphi)_{\ell m}$ to the angular functions $S_{m\ell}(\theta)$, defined in \S \ref{sec:sphere}, and derive uniform elliptic estimates for the operator $\mathcal{L}$ for large values of $\ell$ to be able to sum $(K^{-1}\hphi)_{\ell m}$ over $\ell$. These elliptic estimates are adapted to the vector fields $(K,\tX)$ and require more decay of the initial data for $\phi$ as $r\to \infty$ than elliptic estimates adapted to the $(T,X)$ vector fields.
	
	The construction of $(K^{-1}\hphi)_{\ell m}$ amounts to solving an ODE. This ODE can easily be solved by considering instead the rescaled variables
	\begin{equation*}
	\frac{(K^{-1}\hphi)_{\ell m}}{w_{\infty, m \ell}},
	\end{equation*}
	with $w_{\infty, m \ell}(r)$ the radial parts of the $K$-invariant stationary solutions introduced in to \eqref{eq:waveeq} that were introduced in \S \ref{sec:statsol}, which are closely related to the functions $\mathfrak{w}_{\pm}(r)$ described in Part I above; see \S \ref{sec:statsol}. The ODE for these rescaled variables can be put in the folllowing form for $a=M=1$:
	\begin{equation*}
\tX\left( \Delta  w_{\infty,m \ell}^{2}e^{-im\int_{r_0}^r \frac{x+1}{x-1}\,dx}\tX \left(\frac{(K^{-1}\hphi)_{\ell m}}{w_{\infty, m \ell}}\right)\right)=e^{-im\int_{r_0}^r \frac{x+1}{x-1}\,dx}w_{\infty,m \ell}F_{m \ell},
\end{equation*}
	and can therefore be solved by simply integrating twice in the $\tX$-direction. The requirement that the integral of the RHS over $r\in (M,\infty)$ vanishes for $\alpha_{m\ell}\leq \frac{1}{4}$, fixes the constants $\mathfrak{h}_{m\ell}$ appearing in the definition of $\Pi$ and therefore also the initial-data-dependent factors in the coefficients in front of the late-time tails of $\phi$. The decay rates in the late-time tails for $\phi$ are then encoded in the behaviour of $(K^{-1}\hphi)$ as $r\downarrow M$. The data corresponding to the $K$-integral of $\phi_m$ and the corresponding elliptic estimates are derived in \S \ref{sec:ellipticKinv}.
	
	\subsection{Part III: Instability follows from stability}
Part I and Part II above lead to a proof of Theorem \ref{thm:latetimeasym}, which is the main ingredient in the proof of Theorem \ref{thm:preciseinstab}.

First, non-decay of the non-degenerate energy $\int_{\Sigma_{\tau}}\mathcal{E}_2[\phi]\,r^2d\sigma dr$ follows from decay of the energy of the difference function $\int_{\Sigma_{\tau}}\mathcal{E}_2[\hphi]\,r^2d\sigma dr$, combined with
non-decay of
\begin{equation*}
\int_{\Sigma_{\tau}}\mathcal{E}_2[\Pi]\,r^2d\sigma dr,
\end{equation*}
which in turn follows by plugging in the explicit expressions for $\Pi$.

To obtain pointwise blow-up of $Y\phi_m$ we use that $Y\phi_m$ satisfies the following equation along $\mathcal{H}^+$, obtained by simply evaluating \eqref{eq:waveeq} along $\mathcal{H}^+$:
\begin{equation*}
K(-4Y \phi_{m \ell})=(\Lambda_{m\ell} +m^2-im)\phi_{m \ell}+(\sin^2\theta (K^2\phi- im K\phi))_{m \ell}+2(1+ im) K\phi_{m \ell}.
\end{equation*}
Since Theorem \ref{thm:latetimeasym} gives the precise late-time tails of $\phi_{m \ell}$ and implies that $K^2\phi$ and $K\phi$ decay faster, we integrate the above equation to obtain the precise late-time behaviour of $Y\phi_m$. Due to the possible oscillation present in the late-time behaviour of $|Y\phi_m|$, it will not satisfy a uniform lower bound, but we can identify a sequence of times $\{\tau_n\}$ along which blow-up occurs. Stronger blow-up of higher-order $Y$-derivatives can similarly be obtained (along a sequence of times) by plugging in the late-time asymptotics of lower-order $Y$-derivatives into the equation along $\mathcal{H}^+$ that are obtained by commuting \eqref{eq:waveeq} with $Y^k$ and evaluating the resulting equation along $\mathcal{H}^+$.

The instability results in the paper are all derived in \S \ref{sec:instab}.
\section{Frequency-space analysis}
\label{sec:freqspacean}
In this section, we derive the estimates in frequency space that are required to obtain integrated energy estimates. In order to be able to take a Fourier transform, we will introduce a \emph{global assumption} on solutions $\phi$ arising from smooth and compactly supported initial data on $\Sigma_0$. Without loss of generality, we will assume in this section that $r_+=1$ and $a\geq 0$. Indeed, the $a<0$ can be included in the analysis by transforming $\varphi_*\to-\varphi_*$. 

First we make the following integrability assumption on the inhomogeneity $G$:
\begin{assumption}
\label{asm:G}
Let $m\in \Z$ and let $G_m$ denote the projection of $G$ to the $m$-th azimuthal mode. Then for all $n_1,n_2,n_3\in \N_0$:
\begin{align}
\label{eq:assmG}
\int_{0}^{\infty}\int_{\Sigma_{\tau}} |\rho^2G_m|^2\,d\sigma dr d\tau<&\:\infty,\\
\label{eq:assmG2}
\sup_{r\in [r_0,r_1]}\int_{0}^{\infty} \int_{\s^2} |\snabla_{\s^2}^{n_1}X^{n_2}T^{n_3}G_m|^2(\tau,r,\theta,\varphi)\,d\sigma d\tau<&\:\infty
\end{align}
\end{assumption}

We make the following global assumption on solutions $\phi$ to \eqref{eq:inhomwaveeq}:
\begin{assumption}
\label{asm:phi}
Let $\phi$ be a solution to \eqref{eq:inhomwaveeq} with $G=0$, arising from initial data $(\phi_0,\phi_0')\in C_c^{\infty}(\Sigma_0)\times C_c^{\infty}(\Sigma_0)$. Then for all $m\in \Z$ and $r_1>r_0>r_+$:
\begin{align}
\label{eq:assmphi1}
\int_{0}^{\infty}\int_{\Sigma_{\tau}\cap\{r\leq r_1\}} |\phi_m|^2\,d\sigma dr d\tau<&\:\infty,\\
\label{eq:assmphi2}
 \int_{0}^{\infty}\int_{\Sigma_{\tau}\cap\{r_0\leq r\leq r_1\}} |X\phi_m|^2+|T \phi_m|^2+ |\snabla_{\s^2}\phi_m|^2\,d\sigma dr d\tau<&\:\infty.
\end{align}
\end{assumption}

Note that for $m=0$ and $|a|=M$, the Assumptions \ref{eq:assmphi1} and \ref{eq:assmphi2} hold by the decay results in \cite{aretakis3}. For $|a|<M$, the Assumptions \ref{eq:assmphi1} and \ref{eq:assmphi2} hold for all $m\in \Z$ by the decay results in \cite{part3}.

\subsection{Carter separation}
\label{sec.carter}
We first introduce the class of \emph{sufficiently integrable functions}:
\begin{definition}
\label{def:suffint}
A function $f: \mathring{\mathcal{M}}_{M,a}\to \C$ is \emph{sufficiently integrable} if: for all $n_1,n_2,n_3\in \N_0$ and $r_1>r_0>r_+$:
\begin{align}
\label{eq:suffint1}
\sup_{r\in [r_0,r_1]}\int_{\R} \int_{\s^2}\left[ |\snabla_{\s^2}^{n_1}{\mathfrak{X}}_*^{n_2}T^{n_3}f|^2+ |\snabla_{\s^2}^{n_1}{\mathfrak{X}}_*^{n_2}T^{n_3}\square_{g_{M,a}}f|^2 \right](t,r_*,\theta,\varphi)\,d\sigma dt<&\:\infty,\\
\label{eq:suffint2}
\int_{r_+}^{r_0}\int_{\R} \int_{\s^2} |({\mathfrak{X}}_*-K)f|^2\,\frac{r^2+a^2}{\Delta}d\sigma dt dr+\int_{r_1}^{\infty}\int_{\R} \int_{\s^2} |({\mathfrak{X}}_*+T)f|^2\,d\sigma dt dr<&\:\infty.
\end{align}
\end{definition}
Now, let $\phi: \mathcal{M}_{M,a}\to \C$ be a solution to \eqref{eq:inhomwaveeq} with $G=0$ arising from smooth and compactly supported initial data, and let $\Pi: \mathcal{M}_{M,a}\to \C$ be a function that will be fixed later in \S \ref{sec:constKinvdata}. We denote $\widehat{\phi}=\phi-\Pi$ and $G=-\square_g\Pi$. Note that $\widehat{\phi}$ is a solution to  \eqref{eq:inhomwaveeq} with this choice of $G$.

Let $\xi: \mathcal{M}_{M,a}\to \R$ be a smooth cut-off function, such that $\xi\equiv 0$ in $\{\tau\leq 0\}$ and $\xi\equiv 1$ in $\{\tau\geq 1\}$. Define:
\begin{equation*}
\upphi=\xi\cdot \hphi.
\end{equation*}
Then
\begin{equation}
\label{eq:waveeqxi}
\square_{g_{M,a}}\upphi=F_{\xi}+\xi G=:F,
\end{equation}
with
\begin{multline}
\label{eq:Fxi}
\rho^2F_{\xi}=2(r^2+a^2-\hfol\Delta)T(\xi)X\phi+[a^2\sin^2\theta+\hfol (\Delta \hfol- 2(r^2+a^2))](2T(\xi)T\phi+T^2(\xi)\phi)+2a(1-\hfol)T(\xi)\Phi\phi\\
+(2r-(\Delta \hfol)')T(\xi)\phi\\
=-2 T(\xi) (r^2+a^2)^{\frac{1}{2}}(1+O(r^{-1}))X\psi+O(1)T(\xi)\phi+[a^2\sin^2\theta+\hfol (\Delta \hfol- 2(r^2+a^2))](2T(\xi)T\phi+T^2(\xi)\phi)\\
+2a(1-\hfol)T(\xi)\Phi\phi\ ,
\end{multline}
which is supported in $\{0\leq \tau\leq 1\}$.

\begin{proposition}
Assuming \eqref{eq:assmphi1} and \eqref{eq:assmphi2}, the function $\upphi_m$ satisfying \eqref{eq:waveeqxi} is sufficiently integrable if $\xi \cdot \Pi$ is sufficiently integrable and $G=-\square_g\Pi$ satisfies \eqref{eq:assmG}.
\end{proposition}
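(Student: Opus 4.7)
The plan is to decompose $\upphi_m = \xi\phi_m - \xi\Pi_m$. By hypothesis, $\xi\Pi$ is sufficiently integrable, and since the expression for $\Pi$ in \S \ref{sec:precstatthm} is a finite sum over angular labels $(m,\ell)$, its projection $\xi\Pi_m$ inherits sufficient integrability. The problem thus reduces to verifying the two conditions of Definition \ref{def:suffint} for $\xi\phi_m$. Since the data for $\phi$ is smooth and compactly supported, $\phi$ is smooth on $\mathcal{R}$ and $\xi\phi_m$ is smooth, vanishing for $\tau \leq 0$; the nontrivial content therefore concerns integrability as $\tau \to \infty$ and at the radial boundaries $r \to r_+$ and $r \to \infty$.

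For the interior derivative bound \eqref{eq:suffint1} I would proceed by commutation. Since $T$, $\Phi$ and the Carter operator $Q$ all commute with $\square_{g_{M,a}}$ (see \S \ref{sec:vf}), each $T^{n_3}Q^k\phi_m$ solves the homogeneous wave equation with smooth compactly supported initial data, and Assumption \ref{asm:phi} applies. Combined with $\Phi\phi_m = im\phi_m$ and the elliptic inequality from \S \ref{sec:vf}, this yields spacetime $L^2$ control of $\snabla^{n_1}T^{n_3}\phi_m$ on bounded $r$-intervals strictly away from $r_+$. The wave equation \eqref{eq:waveeqY} may then be solved algebraically for $Y^2\phi_m$ on $\{r > r_+\}$ (where $\Delta > 0$), and iterating this identity yields control of every $\mathfrak{X}_*^{n_2} T^{n_3}\snabla^{n_1}\phi_m$ in spacetime $L^2$. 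A standard 1D Sobolev embedding in the $r$-variable (Fubini plus the fundamental theorem of calculus) upgrades the spacetime $L^2$ control to the required $\sup_r L^2_{t,\s^2}$ form, and the corresponding bounds on $\square_{g_{M,a}}\upphi_m = F_{\xi,m} + \xi G_m$ follow from the smoothness and compact $\tau$-support of $F_\xi$ combined with Assumption \ref{asm:G}.

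For the boundary integrability \eqref{eq:suffint2}, the tangential vector field combinations have an exploitable vanishing structure. The identity $\mathfrak{X}_* - K = \frac{\Delta}{r^2+a^2}Y + \big(\frac{a}{r^2+a^2} - \upomega_+\big)\Phi$ shows that both coefficients vanish at $r = r_+$, so that after multiplication by $(r^2+a^2)/\Delta$ and the substitution $\Phi\phi_m = im\phi_m$ the integrand splits into $\frac{\Delta}{r^2+a^2}|Y\xi\phi_m|^2$ plus a bounded multiple of $m^2|\xi\phi_m|^2$; the second piece is immediate from \eqref{eq:assmphi1}. Near infinity, $\mathfrak{X}_* + T = 2L - \frac{a}{r^2+a^2}\Phi$ reduces the integrand to $|L\xi\phi_m|^2$ plus lower-order terms, which I would control via a Dafermos--Rodnianski $r^p$-weighted estimate applied to $\psi = (r^2+a^2)^{1/2}\phi$, leveraging the compact support of initial data.

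The main obstacle will be the weighted $Y$-derivative term near $\mathcal{H}^+$, since Assumption \ref{asm:phi} provides no control for $r$ arbitrarily close to $r_+$. My plan there is to evaluate \eqref{eq:waveeqY} along $\mathcal{H}^+$: in the extremal case $\Delta$ vanishes to second order at $r_+$, so $Y(\Delta Y\phi)|_{r=r_+} = 0$, and the equation reduces to a first-order $K$-transport equation for $Y\phi_m|_{\mathcal{H}^+}$ with source equal to a combination of tangential derivatives $\snabla_{\s^2}^2\phi_m$, $T\phi_m$ and $T^2\phi_m$. Propagating this identity off the horizon using the degenerate local energy estimate Proposition \ref{prop:locenest} with $p=0$, and applying \eqref{eq:assmphi1} to the $T$- and $Q$-commuted solutions to close the time integrability, completes the verification of the weighted $Y$-bound and hence of sufficient integrability.
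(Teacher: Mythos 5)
Your decomposition $\upphi_m = \xi\phi_m - \xi\Pi_m$, your treatment of the interior bound \eqref{eq:suffint1} by commutation with $T$, $Q$ and solving algebraically for $Y^2\phi_m$ on $\{r>r_+\}$, and your use of an $r^p$-estimate for the $(\mathfrak{X}_*+T)\upphi$ term near infinity all match the paper's proof in substance (the paper commutes with $K^n$ and quotes "standard elliptic estimates," which is the same thing once $\Phi\phi_m = im\phi_m$).

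The near-horizon part, however, has a genuine gap. You correctly identify the weighted $Y$-derivative integral
\[
\int_{r_+}^{r_0}\int_{\R}\int_{\s^2}\frac{\Delta}{r^2+a^2}\,|Y\upphi_m|^2\,d\sigma\,dt\,dr
\]
as the obstacle, but the argument you sketch — evaluating \eqref{eq:waveeqY} on $\mathcal{H}^+$ to get a $K$-transport equation for $Y\phi_m$, then "propagating off the horizon using Proposition \ref{prop:locenest} with $p=0$" — cannot close. First, Proposition \ref{prop:locenest} is local in time: its constant $C_{T_0}$ depends on $T_0$, so it produces boundedness on $[0,T_0]$, not the global-in-time $L^2_t$ integrability that \eqref{eq:suffint2} demands. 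Second, the transport identity only constrains $KY\phi_m$ \emph{on} $\mathcal{H}^+$, not in a spacetime neighbourhood of it, and integrating in $v$ gives pointwise information about $Y\phi_m|_{\mathcal{H}^+}$, not a bound on the weighted spacetime norm in the open region $r_+<r<r_0$. Third, even to deduce $Y\phi_m|_{\mathcal{H}^+}\in L^2_t$ from the transport equation one would need quantitative time-decay of the tangential-derivative source terms, which has not been established at this point in the paper; Assumption \ref{asm:phi} applied to $T$- and $Q$-commuted solutions controls only $|\phi_m|^2$, $|T\phi_m|^2$, $|\snabla\phi_m|^2$, never a $Y$-derivative near $r_+$.

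What the paper actually does here is invoke Corollary \ref{cor:intassmpart}. That corollary is proved by integrating the near-horizon $(r-M)$-weighted divergence identity \eqref{eq:keyidnearhor} (Proposition \ref{prop:rweightesthor}) with $p=1$ over all of $\mathcal{R}$ and absorbing the cross term $(r-M)^{-p}O(|\Phi\psi||\Lbar\psi|)$ via Young's inequality. The resulting bulk term $(r-M)^{-1-p}|\Lbar\psi|^2$ with $p=1$ is, up to constants, exactly $\frac{\Delta}{(r^2+a^2)^2}|Y\psi|^2$ in the extremal case, which is the weighted quantity you need. This is a genuine global-in-time energy estimate whose boundary terms close using precisely the qualitative inputs \eqref{eq:assmphi1} (bulk, $r\leq r_H$) and finiteness at the fixed sphere $r=r_H$; there is no local-to-global promotion step. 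Replacing your near-horizon paragraph with an appeal to Corollary \ref{cor:intassmpart} would complete the proof.
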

\begin{proof}
By standard elliptic estimates in the bounded region $1<r_0<r_1\leq \infty$ (away from the event horizon and for bounded $m$) and the fundamental theorem of calculus, we apply the assumption on $\Pi$ and \eqref{eq:assmG} to conclude that
\begin{equation*}
\sum_{n_1+n_2+n_3\leq N}\sup_{r\in [r_0,r_1]}\int_{\R} \int_{\s^2} |\snabla_{\s^2}^{n_1}\partial_{r_*}^{n_2}\partial_t^{n_3}\upphi_m|^2(t,r_*,\theta,\varphi)\,dt<\infty
\end{equation*}
holds if \eqref{eq:assmphi2} holds with $\phi_m$ replaced by $K^n\phi_m$, where $n\leq N$. Since $K^n\phi_m$ is also a solution to \eqref{eq:inhomwaveeq} with $G=0$ arising from smooth, compactly supported data, \eqref{eq:assmphi2} applies also to $K^n\phi_m$ and, applying \eqref{eq:assmG} once more to estimate $\square_g\upphi_m$, we conclude that \eqref{eq:suffint1} holds for $f=\upphi_m$.

We have additionally that
\begin{equation*}
\int_{r_1}^{\infty}\int_{\R} \int_{\s^2} |({\mathfrak{X}}_*+T)\upphi |^2\,d\sigma dt dr<\infty
\end{equation*}
by the sufficient integrability assumption on $\Pi$ together with a standard $r^p$-estimate with $p=1$ in the region $r\geq r_I$, for $r_I$ sufficiently large.

We similarly obtain 
\begin{equation*}
\int_{r_1}^{\infty}\int_{\R} \int_{\s^2} |({\mathfrak{X}}_*-K)\upphi |^2\,\frac{r^2+a^2}{\Delta}d\sigma dt dr<\infty
\end{equation*}
by applying Corollary \ref{cor:intassmpart} together with Assumption \ref{eq:assmphi1}.

\end{proof}
\subsection{The frequency-localized ODE}
For sufficiently integrable $\upphi$, the Fourier transform in $t$ along constant $(r,\theta,\varphi)$ of $\upphi$ is well-defined, and we will denote it by $\widehat{\upphi}$. Then we can express:
\begin{equation*}
\upphi(t,r,\theta,\varphi)=\frac{1}{\sqrt{2\pi}}\int_{\R} e^{i\omega t}\widehat{\upphi}(\omega,r,\theta,\varphi)\,d\omega,
\end{equation*}
where the above identity is to be interpreted in the function space $L^2_tL^2_{\s^2}$.

We will moreover define:
\begin{equation*}
\widetilde{\omega}:=\omega-m\upomega_+.
\end{equation*}
The frequency $\omega$ is the time-frequency with respect to the flow along $T$, whereas the shifted frequency $\widetilde{\omega}$ may be interpreted as the time-frequency with respect to the flow along $K$.

Subsequently, we can decompose $\widehat{\upphi}$ as follows in terms of the orthonormal basis of oblate spheroidal harmonics in $L^2(\s^2)$ (see \S \ref{sec:sphere}):
\begin{equation*}
\widehat{\upphi}(\omega,r,\theta,\varphi)=\sum_{\substack{\ell\in \N_0\\ |m|\leq \ell}} \widehat{\upphi}^{(a\omega)}_{m \ell}(r) S_{m\ell}(\theta;a\omega)e^{im\varphi},
\end{equation*}
which is to be interpreted as an identity in the function space $L^2_{\omega}L^2_{\s^2}$. To simplify the notation, we introduce the shorthand notation:
\begin{equation*}
\sum_{\substack{\ell\in \N_0\\ |m|\leq \ell}} \to \sum_{m\ell}.
\end{equation*}

Hence, we obtain the following identity in the space $L^2_tL^2_{\s^2}$:
\begin{equation*}
\upphi(t,r,\theta,\varphi)=\frac{1}{\sqrt{2\pi}}\int_{\R} \sum_{m\ell} \widehat{\upphi}^{(a\omega)}_{m \ell}(r) S_{m\ell}(\theta;a\omega)e^{i(m\varphi+\omega t)}\,d\omega.
\end{equation*}

A remarkable property of the Kerr spacetime is that the equation \eqref{eq:waveeqxi} reduces to an ODE for the variable $\widehat{\upphi}^{(a\omega)}_{m \ell}(r)$. This is known as \emph{Carter's separation}.
\begin{proposition}[Carter's separation]
\label{prop:carter}
Let $\upphi$ be a sufficiently integrable solution to \eqref{eq:waveeqxi}. Then $\widehat{\upphi}^{(a\omega)}_{m \ell}$ satisfies the following equality in $L^2_{\omega}l^2_{ml}$:
\begin{equation*}
\Delta (\rho^2 F)^{(a\omega)}_{m\ell}=\Delta\frac{d}{dr}\left(\Delta \frac{d\widehat{\upphi}^{(a\omega)}_{m \ell}}{dr}\right)+[a^2m^2+(r^2+a^2)^2\omega^2-4Mr a\omega m-\Delta(\lambda^{(a\omega)}_{m\ell}+a^2\omega^2)]\widehat{\upphi}^{(a\omega)}.
\end{equation*}
Furthermore, if we rescale
\begin{align*}
u^{a\omega}_{m\ell}:=&\:\sqrt{r^2+a^2}\widehat{\upphi}^{a\omega}_{m\ell},\\
H^{(a\omega)}_{m\ell}:=&\:\Delta (r^2+a^2)^{-\frac{3}{2}}(\rho^2 F)^{(a\omega)}_{m\ell},
\end{align*}
then $u^{a\omega}_{m\ell}$ satisfies the following in $L^2_{\omega}l^2_{ml}$:
\begin{equation}
\label{eq:odeU}
\frac{d^2 u^{(a\omega)}_{m\ell}}{dr_*^2}+(\omega^2-V_{ml}^{(a\omega)})u^{a\omega}_{m\ell}=H^{(a\omega)}_{m\ell},
\end{equation}
with
\begin{multline*}
V_{m\ell}^{(a\omega)}=(r^2+a^2)^{-2}\left[4Ma\omega r-a^2m^2+\Delta (\lambda^{(a\omega)}_{m\ell} +a^2\omega^2)\right]+(r^2+a^2)^{-3}\Delta \left[3r^2-4Mr+a^2\right]\\
-3r^2\Delta^2(r^2+a^2)^{-4}. 
\end{multline*}
\end{proposition}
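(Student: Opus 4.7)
The plan is to pass from the $(v,r,\theta,\varphi_*)$ coordinates on $\mathcal{M}_{M,a}$ to Boyer--Lindquist coordinates $(t,r,\theta,\varphi)$ on $\mathring{\mathcal{M}}_{M,a}$, where both $T=\partial_t$ and $\Phi=\partial_\varphi$ are coordinate vector fields and $t$-translations act simply, then take the Fourier transform in $t$ and project the resulting equation onto the $\omega$-dependent orthonormal basis $\{e^{im\varphi}S_{m\ell}(\theta;a\omega)\}$ of $L^2(\s^2)$. Using $\sqrt{-g_{M,a}}=\rho^2\sin\theta$ together with the Boyer--Lindquist components of the inverse metric, a direct computation records the standard identity
\begin{align*}
\rho^2\square_{g_{M,a}}\upphi=&\,\partial_r(\Delta\partial_r\upphi)+\tfrac{1}{\sin\theta}\partial_\theta(\sin\theta\partial_\theta\upphi)+\left(\tfrac{1}{\sin^2\theta}-\tfrac{a^2}{\Delta}\right)\partial_\varphi^2\upphi\\
&-\tfrac{4Mar}{\Delta}\partial_t\partial_\varphi\upphi-\tfrac{(r^2+a^2)^2-a^2\Delta\sin^2\theta}{\Delta}\partial_t^2\upphi.
\end{align*}
The sufficient integrability assumption (Definition 6.1) provides, via Plancherel, the regularity needed to commute the Fourier transform in $t$ with the spatial derivatives, so substituting $\partial_t\mapsto i\omega$ and $\partial_\varphi\mapsto im$ in the above identity yields an equation for $\widehat{\upphi}^{(a\omega)}_{m\ell}$ as an identity in $L^2_\omega \ell^2_{m\ell}$.

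The projection onto the spheroidal harmonics then uses the algebraic identity $-a^2\omega^2\sin^2\theta=-a^2\omega^2+a^2\omega^2\cos^2\theta$, which recasts the full angular operator (incorporating the $-a^2\Delta\sin^2\theta\cdot\omega^2/\Delta$ piece absorbed from $\partial_t^2$) as $-P(a\omega)$ plus a scalar multiple of the identity in $\theta$. On the mode $e^{im\varphi}S_{m\ell}(\theta;a\omega)$, $P(a\omega)$ returns the eigenvalue $\lambda_{m\ell}^{(a\omega)}$, and collecting the remaining $r$-dependent algebraic coefficients and multiplying through by $\Delta$ produces the first identity in the proposition. For the Schr\"odinger-type form, I then pass to the tortoise variable $r_*$ via $\frac{dr_*}{dr}=\frac{r^2+a^2}{\Delta}$, so that $\Delta\partial_r=(r^2+a^2)\partial_{r_*}$, and introduce $u=\sqrt{r^2+a^2}\,\widehat{\upphi}$; this rescaling is designed precisely to cancel the first-order $\partial_{r_*}$ term that arises upon expanding $(r^2+a^2)\partial_{r_*}[(r^2+a^2)\partial_{r_*}\widehat{\upphi}]$, leaving $(r^2+a^2)^{3/2}u''$ as the principal part. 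Dividing through by $(r^2+a^2)^{3/2}$ and gathering the remaining inhomogeneous and algebraic contributions yields the stated potential $V_{m\ell}^{(a\omega)}$ and the right-hand side $H_{m\ell}^{(a\omega)}=\Delta(r^2+a^2)^{-3/2}(\rho^2F)_{m\ell}^{(a\omega)}$.

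The main obstacle is not the algebra, which is classical and goes back to Carter, but the rigorous justification of the Fourier transform and the spheroidal projection at the level required by the statement. Because both the basis $\{S_{m\ell}(\theta;a\omega)\}$ and the eigenvalues $\lambda_{m\ell}^{(a\omega)}$ depend on the frequency parameter $\omega$, the combined $(\omega,\theta,\varphi)$-decomposition is not an honest tensor product, and one must appeal to Fubini together with the smoothness and uniform boundedness of the spheroidal harmonics on compact $\omega$-intervals. Condition (6.3) in Definition 6.1 supplies the uniform-in-$r$ $L^2_t$ control on spatial derivatives of $\upphi$ and $\square_{g_{M,a}}\upphi$ needed to pull the Fourier transform through the radial and angular operators; condition (6.4) controls the behaviour of $\upphi$ as $r\to r_+$ and $r\to\infty$, so that the resulting ODE indeed holds globally in $r$ as an identity in $L^2_\omega \ell^2_{m\ell}$ rather than only in a distributional sense.
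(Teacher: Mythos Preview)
Your approach is correct and is the standard classical derivation of Carter's separation: express $\rho^2\square_g$ in Boyer--Lindquist coordinates, Fourier transform in $t$, use the identity $-a^2\omega^2\sin^2\theta=-a^2\omega^2+a^2\omega^2\cos^2\theta$ to recognise $P(a\omega)$, project onto the spheroidal basis, and then pass to $r_*$ with the rescaling $u=\sqrt{r^2+a^2}\,\widehat{\upphi}$ to absorb the first-order term. The paper in fact states this proposition \emph{without proof}, treating it as a well-known result (the algebra indeed goes back to Carter and is reproduced in many references, e.g.\ \cite{part3}), so there is nothing substantive to compare against; your write-up supplies precisely the argument the paper omits, and your remarks on the analytic justification via Definition~\ref{def:suffint} are appropriate.
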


We denote:
\begin{equation*}
\Lambda_{m\ell}^{(a\omega)}=\lambda^{(a\omega)}_{m\ell}+a^2\omega^2.
\end{equation*}
From \S \ref{sec:sphere}, it follows that for $\omega=m\upomega_+$:
\begin{equation*}
\Lambda_{m\ell}^{(am\upomega_+)}=\Lambda_{+,m\ell}.
\end{equation*}
We moreover have that
\begin{align}
\label{eq:Lambdalimit1}
\lim_{m\tomega\to 0}\frac{1}{a^2m \tomega}(\Lambda_{m\ell}^{(a \omega)}-\Lambda_{m\ell}^{(am\upomega_+)})<&\:\infty,\\
\label{eq:Lambdalimit2}
\lim_{a\to M}\frac{1}{(a^2-M^2)m^2 \omega^2}\left(\Lambda_{m\ell}^{(a \omega)}-\Lambda_{m\ell}^{(\omega)}\right)<&\:\infty,
\end{align}
by the smooth dependence of $\Lambda_{m\ell}^{(a\omega)}$ on $a \omega$; see for example \cite[\S 3.22, Proposition 1]{ms54}.

\subsection{Boundary conditions}
In this section, we will make use of the regularity assumptions at $\mathcal{H}^+$ and $\mathcal{I}^+$ encoded in the definition of sufficient integrability in Definition \ref{def:suffint}, in order to derive boundary conditions for the ODE \eqref{eq:odeU}.
\begin{lemma}
\label{lm:bc}
Let $\upphi$ be a sufficiently integrable solution to \eqref{eq:waveeqxi}. Then $u^{(a\omega)}_{m\ell}$ satisfies the following boundary conditions: there exists sequences $\{r_i\}$ and $\{\tilde{r}_i\}$, with $r_i\to \infty$ and $\tilde{r_i}\downarrow 1$ as $i\to \infty$, such that:
\begin{align}
\label{eq:bcinfpre}
\lim_{i\to \infty}\left|\frac{d u^{(a\omega)}_{m\ell}}{dr_*}-i\omega u^{(a\omega)}_{m\ell}\right|(r_i)=&\:0,\\
\label{eq:bchorpre}
\lim_{i\to \infty}\left|\frac{d u^{(a\omega)}_{m\ell}}{dr_*}+i\widetilde{\omega} u^{(a\omega)}_{m\ell}\right|(\tilde{r}_i)=&\:0,
\end{align}
for almost every $\omega\in \R$.
\end{lemma}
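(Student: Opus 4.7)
The plan is to derive both boundary conditions directly from the integrability conditions \eqref{eq:suffint2} by passing to frequency space, using Plancherel in $t$ and Parseval for the $\omega$-dependent orthonormal basis $\{S_{m\ell}(\cdot;a\omega)e^{im\varphi}\}_{m,\ell}$ of $L^2(\s^2)$. First I would compute the action of $\mathfrak{X}_*+T$ and $\mathfrak{X}_*-K$ on a single mode $\widehat{\upphi}_{m\ell}^{(a\omega)}S_{m\ell}(\theta;a\omega)e^{im\varphi}$. Since $\mathfrak{X}_*=\partial_{r_*}$ is purely radial and $T$ and $K=T+\upomega_+\Phi$ act on the mode as multiplication by constants involving $\omega$ and $\widetilde{\omega}=\omega-m\upomega_+$, respectively, these combinations are identified (up to sign, dictated by the Fourier convention) with $(\partial_{r_*}-i\omega)\widehat{\upphi}_{m\ell}$ and $(\partial_{r_*}+i\widetilde{\omega})\widehat{\upphi}_{m\ell}$. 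Substituting the rescaling $u^{(a\omega)}_{m\ell}=\sqrt{r^2+a^2}\,\widehat{\upphi}^{(a\omega)}_{m\ell}$ gives, up to a correction of size $O(u/r^2)$, the expressions $\tfrac{1}{\sqrt{r^2+a^2}}(du/dr_*-i\omega u)$ and $\tfrac{1}{\sqrt{r^2+a^2}}(du/dr_*+i\widetilde{\omega}u)$.

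Applying Plancherel--Parseval to \eqref{eq:suffint2} then yields, after absorbing the $O(u/r^2)$ correction,
\[
\int_{r_1}^\infty\!\!\int_\R\sum_{m\ell}\frac{|u'-i\omega u|^2}{r^2+a^2}\,d\omega\,dr<\infty,\qquad \int_{r_+}^{r_0}\!\!\int_\R\sum_{m\ell}\frac{|u'+i\widetilde{\omega}u|^2}{\Delta}\,d\omega\,dr<\infty.
\]
The weight $1/\Delta$ in the horizon integral matches $dr_*/dr$ up to a factor bounded above and below near $r=r_+$, so this is really an $L^1_{dr_*}$ bound. By Fubini in $(\omega,m,\ell)$, for a.e.\ $\omega\in\R$ and every $m,\ell$, both integrands are $L^1$ in the appropriate radial variable.

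The horizon part follows immediately: $L^1$-ness in $dr_*$ forces $\liminf_{r_*\to-\infty}|u'+i\widetilde{\omega}u|=0$, producing the sequence $\{\tilde{r}_i\}$ with $\tilde{r}_i\downarrow r_+$. At infinity the weighted $L^1$ bound by itself only yields $\liminf_{r\to\infty}|u'-i\omega u|/r=0$. To upgrade this to $|u'-i\omega u|(r_i)\to 0$, I invoke the radial ODE \eqref{eq:odeU}: setting $w:=du/dr_*-i\omega u$, one computes $(e^{i\omega r_*}w)'=e^{i\omega r_*}(Vu+H)$. Since $V=O(r^{-2})$ at infinity and $H$ inherits integrability from $F=F_\xi+\xi G$ via Assumption~\ref{asm:G} and the sufficient integrability of $\xi\Pi$, the right-hand side is integrable in $r_*$, so $e^{i\omega r_*}w(r_*)$ converges to a limit $C_\infty=C_\infty(\omega,m,\ell)$ as $r_*\to\infty$. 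Combining this convergence with the Fubini-derived weighted $L^1$ bound rules out $C_\infty\neq 0$, producing the required sequence $\{r_i\}$.

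The main obstacle is the asymmetry between the two asymptotic regions. The horizon assumption in \eqref{eq:suffint2} carries the geometrically natural weight $(r^2+a^2)/\Delta$ and hence converts to $L^1_{dr_*}$-integrability automatically, whereas the infinity assumption is expressed in $dr$ and yields only a weighted bound too weak to conclude pointwise vanishing of $w$ on its own. Bridging this gap requires the ODE propagation argument sketched above, whose technical content is the genuine $r_*$-integrability of $Vu+H$ at infinity; this uses both the source integrability from Assumption~\ref{asm:G} and enough pointwise control on $u$, derivable from the local conditions in~\eqref{eq:suffint1} together with standard ODE estimates near infinity.
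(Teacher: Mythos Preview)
Your strategy differs from the paper's: you fix $\omega$ via Fubini and then argue radially, whereas the paper applies the mean-value theorem to the $L^2_\omega$-valued function $r\mapsto\int_\R|u'-i\omega u|^2(r)\,d\omega$ to produce a sequence along which this norm vanishes, and only then passes to an a.e.\ subsequence. Your horizon argument is correct and arguably cleaner than the paper's.

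At infinity, however, there is a genuine gap. You are right that \eqref{eq:suffint2} by itself only yields the weighted bound $\int_{r_1}^\infty|w|^2/(r^2+a^2)\,dr<\infty$ (a point the paper glosses over; it simply asserts the unweighted bound). Your ODE propagation is also correct: $(e^{i\omega r_*}w)'=e^{i\omega r_*}(Vu+H)$, and once $u$ is bounded and $H$ integrable, $e^{i\omega r_*}w$ converges to some $C_\infty$. But the final step fails: the weighted bound is \emph{compatible} with any $C_\infty$, since $|w|\to|C_\infty|$ gives $|w|^2/r^2\sim|C_\infty|^2/r^2$, which is integrable regardless. Neither your stated $\liminf|w|/r=0$ nor the sharper $\liminf|w|/\sqrt{r}=0$ (which is what the weighted bound actually gives) contradicts $|w|\to|C_\infty|\neq 0$. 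Incidentally, the rescaling correction is $O(u/r)$, not $O(u/r^2)$.

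To close the gap you need control beyond \eqref{eq:suffint2}. In the paper's setting it is available: the $r^p$-estimate with $p=1$ (invoked in the Proposition just before the Lemma to verify sufficient integrability) bounds the spacetime integral of $|L\psi|^2$, which, since $L\psi\sim\tfrac12\sqrt{r^2+a^2}\,(\mathfrak{X}_*+T)\upphi$ for large $r$, amounts to the \emph{unweighted} bound $\int_{r_1}^\infty\int_\R|u'-i\omega u|^2\,d\omega\,dr<\infty$. With that bound, either your Fubini route or the paper's mean-value route concludes directly. Alternatively one could argue that $C_\infty\neq 0$ corresponds to a nontrivial incoming component of $u$, excluded by $\supp\upphi\subset\{\tau\geq 0\}$, but that requires a separate argument.
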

\begin{proof}
By \eqref{eq:suffint2} combined with Plancherel, it follows that: for $r_0>r_+$ and $r_1<\infty$
\begin{equation*}
\int_{r_+}^{r_0}\int_{\R}\left|\frac{d u^{(a\omega)}_{m\ell}}{dr_*}+i\tomega u^{(a\omega)}_{m\ell}\right|^2\,d\omega dr+\int_{r_1}^{\infty}\int_{\R}\left|\frac{d u^{(a\omega)}_{m\ell}}{dr_*}-i\omega u^{(a\omega)}_{m\ell}\right|^2\,d\omega dr<\infty.
\end{equation*}
Let $\rho_{n}$, $i\geq 1$ be a strict monotonically increasing sequence with $\rho_1=r_1$ and $\rho_i\to \infty$ as $i\to \infty$. By applying the mean-value theorem with $r\in [\rho_i,\rho_{i+1}]$, we can find a sequence $\rho'_i$ such that $\rho_i'\to \infty$ and
\begin{equation*}
\lim_{i\to \infty}\int_{\R}\left|\frac{d u^{(a\omega)}_{m\ell}}{dr_*}-i\omega u^{(a\omega)}_{m\ell}\right|^2(\rho_i')\,d\omega= 0.
\end{equation*}
Convergence of a sequence with respect to the $L^2_{\omega}(\R)$-norm implies that there exists a subsequence along which the sequence convergence pointwise for almost every $\omega$. This concludes \eqref{eq:bcinfpre}.

The limit \eqref{eq:bchorpre} follows similarly, by taking instead $\rho_{n}$, $i\geq 1$ to be a strict monotonically decreasing sequence with $\rho_1=r_0$ and $\rho_i\to 1$ as $i\to \infty$.
\end{proof}

As the goal of Carter's separation is to arrive at integrated energy estimates, it is sufficient to restrict the consideration to a subset of frequencies $\omega$ with full Lebesgue measure. In the proposition below, we show that the limits in Lemma \ref{lm:bc} can be interpreted as boundary conditions for smooth solutions to the \eqref{eq:odeU}, for \underline{almost every} $\omega\in \R$.

\begin{proposition}
\label{prop:smoothnessu}
Let $\upphi$ be a sufficiently integrable solution to \eqref{eq:waveeqxi}. Then, for all $m,\ell$, with $\ell\in \N_0$ and $m\in \Z_{\geq -\ell}\cap\Z_{\leq \ell}$, and for almost every $\omega\in \R$:
\begin{enumerate}[label=\emph{(\roman*)}]
\item $H^{(a\omega)}_{m\ell}$ is smooth on $(\infty,\infty)_{r_*}$ and the corresponding solution $u^{(a\omega)}_{m\ell}$ to \eqref{eq:odeU} is smooth and satisfies the following boundary conditions:
\begin{align}
\label{eq:bcinf2}
\lim_{r_*\to \infty}\left|\frac{d u^{(a\omega)}_{m\ell}}{dr_*}-i\omega u^{(a\omega)}_{m\ell}\right|(r_*)=&\:0,\\
\label{eq:bchor2}
\lim_{r_*\to -\infty}\left|\frac{d u^{(a\omega)}_{m\ell}}{dr_*}+i\widetilde{\omega} u^{(a\omega)}_{m\ell}\right|(r_*)=&\:0.
\end{align}
\item $H^{(a\omega)}_{m\ell}$ satisfies:
\begin{equation}
\label{eq:Hintbound}
\int_{r_+}^{\infty} \Delta^{-2}r^4|H^{(a\omega)}_{m\ell}|^2\,dr<\infty.
\end{equation}
\end{enumerate}
\end{proposition}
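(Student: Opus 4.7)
The strategy is to prove (ii) first, and then to address (i) by separately treating the smoothness statements and the boundary conditions. For (ii), I would apply Plancherel's theorem in $t$ to the spacetime $L^2$-integrability of $\rho^2 F = \rho^2(F_\xi + \xi G)$: the contribution of $F_\xi$ is controlled because $F_\xi$ is supported in $\{0 \leq \tau \leq 1\}$, where $\phi$ (arising from compactly supported initial data) has compact spatial support by finite speed of propagation, and the contribution of $\xi G$ is controlled by \eqref{eq:assmG}. Since $H^{(a\omega)}_{m\ell} = \Delta(r^2+a^2)^{-3/2}(\rho^2 F)^{(a\omega)}_{m\ell}$, the weight $\Delta^{-2}r^4$ combines with the prefactor squared to give the bounded factor $r^4/(r^2+a^2)^3$. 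Integrating in $\omega$ first and invoking Fubini then yields \eqref{eq:Hintbound} for almost every $\omega$.

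For smoothness of $H^{(a\omega)}_{m\ell}$, I would use the regularity of $F$ encoded in \eqref{eq:suffint1} (which propagates from $\upphi$ via \eqref{eq:waveeqxi} together with the regularity assumption on $G$). Plancherel converts $L^\infty_r L^2_t$-bounds on $\partial_\theta^{n_1}\partial_r^{n_2}\partial_t^{n_3} F$ into $L^\infty_r L^2_\omega$-bounds on $\omega^{n_3}\partial_\theta^{n_1}\partial_r^{n_2}\widehat{F}$ on bounded $r$-subintervals of $(r_+,\infty)$. Applying Fubini to the countable collection of these bounds produces a full-measure set of $\omega$ on which $\widehat{F}(\omega,\cdot,\cdot,\cdot)$ is smooth on bounded $r$-intervals, and projecting onto $e^{im\varphi}S_{m\ell}(\theta;a\omega)$ gives smoothness of $H^{(a\omega)}_{m\ell}$ on $(-\infty,\infty)_{r_*}$. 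Smoothness of $u^{(a\omega)}_{m\ell}$ then follows from standard ODE regularity, since it satisfies \eqref{eq:odeU} with smooth coefficients and smooth inhomogeneity.

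For the boundary conditions, Lemma \ref{lm:bc} provides sequential convergence for almost every $\omega$. To upgrade \eqref{eq:bcinfpre} to \eqref{eq:bcinf2}, I would set $v_+ := du^{(a\omega)}_{m\ell}/dr_* - i\omega u^{(a\omega)}_{m\ell}$ and observe that
\[
\frac{d}{dr_*}\left(e^{i\omega r_*} v_+\right) = e^{i\omega r_*}\left(V_{m\ell}^{(a\omega)}\, u^{(a\omega)}_{m\ell} + H^{(a\omega)}_{m\ell}\right).
\]
Integrating from $r_i$ to $r_*$, letting $i\to\infty$, and using $v_+(r_i)\to 0$ reduces the claim to showing $V u + H \in L^1_{r_*}$ near $r_* = \infty$. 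Since $|V| = O(r^{-2})$ at infinity and $u$ is locally bounded, the $L^2$-integrability of $(\mathfrak{X}_* + T)\upphi$ from \eqref{eq:suffint2} combined with \eqref{eq:Hintbound} and Cauchy--Schwarz deliver the required $L^1$-bound, after possibly refining the full-measure set of $\omega$. A symmetric argument using $v_- := du/dr_* + i\widetilde{\omega} u$ with integrating factor $e^{-i\widetilde{\omega} r_*}$ and driving term $(V - (\omega^2 - \widetilde{\omega}^2))u + H$ (which vanishes at the endpoint since $V \to \omega^2 - \widetilde{\omega}^2$ as $r \to r_+$) establishes \eqref{eq:bchor2}. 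The main obstacle is this final upgrade from sequential to full convergence, where one must reconcile the almost-everywhere-in-$\omega$ statement of Lemma \ref{lm:bc} with uniform integrable decay at the endpoints; the cleanest route is to first extract local boundedness of $u$ from the ODE and the $L^2$-bound on $H$, and then combine this with the weighted $L^2$-decay of $V$ and $H$ via Cauchy--Schwarz.
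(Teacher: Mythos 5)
Your approach to (ii) is essentially the paper's: both pass from \eqref{eq:Hintbound} to the spacetime integral $\int_0^\infty\int_{\Sigma_\tau}r^{-2}|\rho^2 F|^2\,d\sigma dr d\tau$ via Plancherel (in $t$ and on $\s^2$), split $F = F_\xi + \xi G$, and use the compact $\tau$-support of $F_\xi$ together with \eqref{eq:assmG}. One caveat: your justification for the $F_\xi$ term via finite speed of propagation does not quite apply. The commutator term $F_\xi$ from $\square_g(\xi\hphi)$ involves $\hphi = \phi - \Pi$, and $\Pi$ is not compactly supported, so $\hphi$ has no compact spatial support on $\{0\leq\tau\leq 1\}$. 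The paper instead controls the $F_\xi$ contribution by the local-in-time weighted energy estimate of Proposition \ref{prop:locenest} applied to $\hphi$ (yielding the bound by $\int_{\Sigma_0}\mathcal{E}_2[\hphi]\,r^2d\sigma dr$), which requires only finite weighted initial energy. The paper is also more careful in extracting the a.e.-$\omega$ statement from what is a priori only an $L^2_\omega l^2_{m\ell}$-identity, using a Cauchy-in-$L^1_\omega$-plus-monotonicity argument where you invoke Fubini.

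For (i), the paper simply cites \cite[Lemma 5.4.1]{part3} and notes that one restricts to $r\in[r_0,\infty)$; your reconstruction of the smoothness part is a reasonable account of what that lemma does. Your upgrade of the boundary conditions, however, has a genuine gap. You reduce \eqref{eq:bcinf2} to showing $Vu + H \in L^1_{r_*}$ near $r_*=\infty$, arguing via the formula $e^{i\omega r_*}v_+(r_*) = -\int_{r_*}^\infty e^{i\omega s}(Vu+H)\,ds$. The $Vu$ piece is fine: $v_+\in L^2$ gives $|u| = O(\sqrt{r_*})$ by the fundamental theorem of calculus, so $|Vu| \lesssim r_*^{-3/2} \in L^1$. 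But the $H$ piece is not $L^1$: since $\Delta^{-2}r^4 \asymp 1$ for $r\geq 2$, the bound \eqref{eq:Hintbound} gives only $H\in L^2_{r_*}((R,\infty))$, and Cauchy--Schwarz on an unbounded interval cannot upgrade $L^2$ to $L^1$. The standard fix — and the argument in \cite[Lemma 5.4.1]{part3} — is to set $g := |u'-i\omega u|^2$ and observe $g' = 2\Re\bigl[(Vu+H)\overline{(u'-i\omega u)}\bigr]$, which \emph{is} in $L^1((R,\infty))$ by Cauchy--Schwarz using $L^2$ bounds on both factors; combined with $g(r_i)\to 0$ from Lemma \ref{lm:bc}, one gets $g(r_*) = -\int_{r_*}^\infty g'\to 0$. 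Your closing sentence gestures towards this, but the explicit reduction to an $L^1$-bound on $Vu+H$ itself is not correct. The same issue arises for the near-horizon condition \eqref{eq:bchor2}, where additionally the relevant weighted $L^2$-space carries the factor $\Delta^{-1}$ but the resolution is identical.
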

\begin{proof}
We refer to \cite[Lemma 5.4.1]{part3} for a proof of (i), with the minor modification that we can restrict to the region $r\in [r_0,\infty)$, since we are proving smoothness in $(-\infty,\infty)_{r_*}$.

Note that by Plancherel
\begin{equation*}
\lim_{n\to \infty}\sum_{m \ell}\int_{\R}\int_{r_++\frac{1}{n}}^{n}  \Delta^{-2}r^4|H^{(a\omega)}_{m\ell}|^2\,dr d\omega=\int_{0}^{\infty}\int_{\Sigma_{\tau}}r^{-2} |\rho^2 F|^2\,d\sigma dr d\tau.
\end{equation*}

By the local-in-time weighted energy estimates from Proposition \ref{prop:locenest} and the assumption \eqref{eq:assmG}, we can further estimate
\begin{equation*}
\begin{split}
\int_{0}^{\infty}\int_{\Sigma_{\tau}}r^{-2} |\rho^2 F|^2\,d\sigma dr d\tau\leq &\: \int_{0}^{1}\int_{\Sigma_{\tau}}r^{-2} |\rho^2 F_{\xi}|^2\,d\sigma dr d\tau+C\int_{0}^{\infty}\int_{\Sigma_{\tau}} |rG|^2\,d\sigma dr d\tau\\
\leq &\: C \int_{\Sigma_0} \mathcal{E}_2[\hphi]\, r^2d\sigma dr +C\int_{0}^{\infty}\int_{\Sigma_{\tau}} |rG|^2\,d\sigma dr d\tau<\infty.
\end{split}
\end{equation*}
Hence, $\sum_{m \ell}\int_{r_++\frac{1}{n}}^{n}  \Delta^{-2}r^4|H^{(a\omega)}_{m\ell}|^2\,dr$ is Cauchy in $L^1_{\omega}(\R)$ and therefore has a subsequence that converges pointwise for almost every $\omega$. As the original sequence is monotonically increasing in $n$, it must also converge pointwise for almost every $\omega$, which concludes \eqref{eq:Hintbound}.
\end{proof}

\subsection{The potential $V_{ml}^{(a\omega)}$}
In this section, we derive the main properties and decompositions of the potential $V_{ml}^{(a\omega)}$ that play a role in the rest of the analysis in this section.
\subsubsection{The main potentials}
\label{sec:mainpot}
With the choice $r_+=1$, we can express:
\begin{align*}
M=&\:\frac{1}{2}(1+a),\\
\upomega_+=&\:\frac{a}{1+a^2},\\
\Delta=&\: (r-1)(r-a^2).
\end{align*}
and $0\leq a\leq 1$ remains our only free parameter. In the sections below we will suppress the super- and subscripts in the notation $V_{m\ell}^{(a\omega)}$ and $\lambda_{m\ell}^{(a\omega)}$; we will simply write $V$ and $\Lambda$, respectively.\\

\paragraph{\underline{The potentials $V_0$ and $V_1$}}\mbox\\

The potential $V(r)$ can be split as follows:
\begin{align}
\label{eq:defpot}
V(r)=&\:V_0(r)+V_1(r),\quad \textnormal{with}\\
\label{eq:defpot0}
V_0(r)=&\: (r^2+a^2)^{-2}\left[2ma(1+a^2)\omega r-a^2m^2+\Lambda \Delta\right],\\
\label{eq:defpot1}
V_1(r)=&\:(r^2+a^2)^{-3}\Delta(3r^2-2(1+a^2)r+a^2)-3(r^2+a^2)^{-4}\Delta^2r^2,\\ \nonumber
\Lambda:=&\:\lambda+a^2\omega^2.
\end{align}
Note that $V_1\geq 0$.\\ 

\paragraph{\underline{The potential $\widetilde{V}$}}\mbox\\
\\

We define the shifted potential:
\begin{equation*}
\widetilde{V}-\widetilde{\omega}^2:=V-\omega^2.
\end{equation*}
We can express:
\begin{align*}
\widetilde{V}(r)=&\:\widetilde{V}_0(r)+\widetilde{V}_1(r),\quad \textnormal{with}\\
\widetilde{V}_0(r)=&\: -2(1+a^2)^{-1}ma\tomega-a^2m^2(1+a^2)^{-2}+(r^2+a^2)^{-2}\left[2ma(1+a^2)\tomega r+m^2a^2(2r-1)+\Lambda \Delta\right],\\
\widetilde{V}_1(r)=&\:V_1(r).
\end{align*}

\paragraph{\underline{The potential $V_{q,\varpi}$}}\mbox\\
\\

Let $q,\varpi\in \R$. Consider the following modified potential that arises when considering the ODE for the rescaled variable $v^{(a\omega)}_{m\ell}=e^{i(\varpi+\tomega)r_*}\Delta^{iq}u^{(a\omega)}_{m\ell}$ (see \S \ref{eq:freqloccurr} for the motivation for considering such a rescaling):
\begin{equation*}
V_{q,\varpi}(r):=\widetilde{V}-\tomega^2+\left[(\tomega+\varpi)^2+2q (\tomega+\varpi) \frac{d\Delta}{dr}(r^2+a^2)^{-1}+q^2\left(\frac{d\Delta}{dr}\right)^2(r^2+a^2)^{-2}\right]
\end{equation*}
In order for the leading-order terms in $r-1$ of $V_{q,\varpi}(r)$ to cancel, we will need to make the following choices for the constants $q$ and $\varpi$:
\begin{align}
\label{eq:choiceq}
q=&\:\frac{(3-a^2)m}{4a},\\
\label{eq:choicevarpi}
\varpi=&\:-\frac{1-a^2}{1+a^2}q=-\frac{(1-a^2)(3-a^2)m}{4a(1+a^2)^2}.
\end{align}

\paragraph{\underline{The potentials $\widetilde{V}_{\rm stat}$ and $\check{V}$}}\mbox\\
\\

We introduce the shorthand notation:
\begin{equation*}
\Lambda_+=\Lambda_{+, m\ell}
\end{equation*}
and split $\widetilde{V}$ as follows:
\begin{equation*}
	\widetilde{V}=\widetilde{V}_{\rm stat}+\check{V},
\end{equation*}
with
\begin{multline}
\label{eq:Vstat}
(1+a^2)^{2}(r^2+a^2)^{2}\widetilde{V}_{\rm stat}(r):=\sum_{k=2}^4\frac{1}{k!}\frac{d^k((r^2+a^2)^{2}(1+a^2)^{2}\widetilde{V_0})}{dr^k}\Big|_{r=1, \tomega=0, \Lambda=\Lambda_+}\cdot (r-1)^k+V_1(r)|_{a=1}
	\end{multline}
When $a=1$, the potential $
V_{\rm stat}$ may be interpreted as the potential corresponding to stationary solutions with respect to the vector field $K$ ($\tomega=0$); see \S \ref{sec:statsol}.
\subsubsection{Properties of the potentials}
\label{sec:proppots}
In this section, we will analyze the main quantitative and qualitative properties of the potentials introduced in \S \ref{sec:mainpot}. 

The lemma below concerns precise quantitiative properties of the potentials in the form of their asymptotic behaviour in the limits $r\downarrow 1$ and $r\to \infty$.
\begin{lemma}
\label{prop:potentials}
The potentials defined in \S \ref{sec:mainpot} have the following asymptotic behaviour:
\begin{align}
\label{eq:V0nearhor}
\widetilde{V}_0(r)=&\:(1+a^2)^{-3}\left[-8 a m \tomega+2(\Lambda-m^2)(1-a^2)+(2m^2-\Lambda+2a m 
\tomega)(1-a^2)^2\right](r-1)+\\ \nonumber
&+\: (1+a^2)^{-4}\Big[4 (\Lambda-2m^2) +6(3m^2-2 \Lambda+ 4 a m \tomega) (1-a^2) \\ \nonumber
&+  (5\Lambda-10m^2-12 am \tomega) (1-a^2)^2\Big](r-1)^2\\ \nonumber
&+\Lambda O((r-1)^3),\\
\label{eq:V0nearhor2}
\widetilde{V}_0(r)=&\:(1+a^2)^{-3}\left[-8 a m \tomega-2m^2(1-a^2)+2(m^2+ a m \tomega)(1-a^2)^2\right](r-1)\\ \nonumber
&+\: (1+a^2)^{-4}\left[-8m^2 +2(9m^2+12m a \tomega) (1-a^2) -2(5m^2+6am\tomega)(1-a^2)^2\right](r-1)^2\\ \nonumber
&+\: \Lambda (r^2+a^2)^{-2}\Delta+(m^2+|m\tomega|)O((r-1)^3),\\
\label{eq:V1nearhor}
\widetilde{V}_1(r)=&\:(1+a^2)^{-3}(1-a^2)^2(r-1)+2(1+a^2)^{-4}\left[3(1-a^2)-4(1-a^2)^2+(1-a^2)^3\right](r-1)^2\\ \nonumber
&+O((r-1)^3),\\
\label{eq:Vtildenearhor}
V_{q,\varpi}(r)=&\:\frac{\left(\Lambda-\frac{2a^2m^2}{(1+a^2)}\right) \Delta}{(r^2+a^2)^2}-m\tomega(r-1)^2+(r-1) \Delta O(m^2+|m\tomega|),\\
\label{eq:Vchecknearhor}
\check{V}=&\: \left[- m \tomega+\frac{1}{4}(1-a^2) (\Lambda- m^2+O(|m\tomega|))+(1-a^2)^2O(|\Lambda|)\right](r-1)\\ \nonumber
&+O(|m\tomega|+\Lambda(1-a^2)+|\Lambda-\Lambda_+|)(r-1)^2,\\
\label{eq:Vstatest}
\widetilde{V}_{\rm stat}(r)=&\:\left[\frac{1}{4}(\Lambda_+-2m^2)+\Lambda_+  O((1-a^2))\right](r-1)^2+\Lambda_+O((r-1)^3),\\
\label{eq:Vnearinf}
V(r)=&\:\Lambda (r^{-2}+O_1(r^{-3})).
\end{align}
\end{lemma}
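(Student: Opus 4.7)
The plan is a direct computation: each statement is a Taylor expansion around $r=1$ (or, for \eqref{eq:Vnearinf}, around $r=\infty$) of explicitly given rational functions, so the whole argument is bookkeeping. Throughout I use
\[
\Delta = (1-a^2)(r-1) + (r-1)^2,\quad \Delta' = (1-a^2) + 2(r-1),\quad (r^2+a^2)^{-k} = (1+a^2)^{-k}\sum_{j\geq 0} c_j^{(k)}(r-1)^j,
\]
together with $\omega = \tomega + ma(1+a^2)^{-1}$ to trade $\omega$ for $\tomega$. Substituting these into the definitions of $\widetilde V_0$, $\widetilde V_1$, $V_{q,\varpi}$, $\check V$, $\widetilde V_{\rm stat}$ and $V$ reduces every identity to expanding polynomially in $(r-1)$ and regrouping terms according to their order in $(1-a^2)$, $|m\tomega|$, $\Lambda$ and (where relevant) $\Lambda - \Lambda_+$.

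For \eqref{eq:V0nearhor} I would first verify $\widetilde V_0(1)=0$: the constant pieces $-2ma\tomega(1+a^2)^{-1} - m^2a^2(1+a^2)^{-2}$ in the definition are engineered to cancel the boundary value $(1+a^2)^{-2}[2ma(1+a^2)\tomega + m^2a^2]$ of the rational part. Computing $\widetilde V_0'(1)$ and $\widetilde V_0''(1)$ by Leibniz and rewriting the $a$-polynomial dependence in powers of $1-a^2$ yields the stated coefficients. The alternative form \eqref{eq:V0nearhor2} comes from \emph{not} expanding $\Lambda(r^2+a^2)^{-2}\Delta$ and instead Taylor expanding only $\widetilde V_0 - \Lambda(r^2+a^2)^{-2}\Delta$; since the $\Lambda$ dependence of $\widetilde V_0$ is entirely carried by that piece, the remainder is manifestly $\Lambda$-independent and of size $(m^2+|m\tomega|)(r-1)^3$. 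Equation \eqref{eq:V1nearhor} is analogous but simpler, depending only on the $a$-dependence of the rational function $V_1(r)$ and using $\Delta^2 = O((r-1)^2)$ to drop the higher term; and \eqref{eq:Vnearinf} is immediate from $V_1 = O(r^{-4})$ and $V_0 = \Lambda(r^2+a^2)^{-2}\Delta + O(r^{-3}) = \Lambda r^{-2} + O(r^{-3})$.

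The interesting case is \eqref{eq:Vtildenearhor}. The key is that the bracketed correction in the definition of $V_{q,\varpi}$ is a perfect square, $[(\tomega+\varpi) + q\Delta'/(r^2+a^2)]^2$. Using
\[
\frac{\Delta'}{r^2+a^2} = \frac{1-a^2}{1+a^2} + \frac{4a^2}{(1+a^2)^2}(r-1) + O((r-1)^2),
\]
the choice $\varpi = -q(1-a^2)/(1+a^2)$ makes the bracket equal to $\tomega$ at $r=1$, and the choice $q=(3-a^2)m/(4a)$ is then exactly what is needed so that the $(r-1)$-coefficient of the perfect square cancels the $-8am\tomega(1+a^2)^{-3}$ coefficient of $\widetilde V_0'(1)$. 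Once both cancellations are visible, the remaining leading term is the $\Lambda$-contribution from $V_0$ together with the $m^2$-contribution organized as $(\Lambda - 2a^2m^2/(1+a^2))\Delta/(r^2+a^2)^2$, and the two next orders contribute the $-m\tomega(r-1)^2$ piece together with the $(r-1)\Delta\cdot O(m^2+|m\tomega|)$ remainder.

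Finally, \eqref{eq:Vstatest} and \eqref{eq:Vchecknearhor} follow from the preceding expansions and \eqref{eq:Vstat}: the latter defines $\widetilde V_{\rm stat}$ as (up to division by $(1+a^2)^2(r^2+a^2)^2$) the degree-four Taylor polynomial at $r=1$ of $(1+a^2)^2(r^2+a^2)^2 \widetilde V_0|_{\tomega=0,\Lambda=\Lambda_+}$ together with $V_1|_{a=1}$, so one reads off \eqref{eq:Vstatest} by restricting the already established expansion of $\widetilde V_0$ to $\tomega=0$, $\Lambda=\Lambda_+$, $a=1$ and dividing through. Then \eqref{eq:Vchecknearhor} is obtained by subtracting \eqref{eq:Vstatest} from the expansion of $\widetilde V = \widetilde V_0 + \widetilde V_1$ and attributing the surviving terms to their three independent sources: the $m\tomega$-dependence of $\widetilde V_0$ (which $\widetilde V_{\rm stat}$ discards by setting $\tomega=0$), the $a$-dependence of $\widetilde V_1$ relative to $V_1|_{a=1}$, and the difference $\Lambda - \Lambda_+$. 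The main obstacle throughout is purely the combinatorial bookkeeping of tracking expansions simultaneously in the four small quantities $(r-1)$, $(1-a^2)$, $|m\tomega|$, and $\Lambda - \Lambda_+$, and verifying that the advertised cancellations hold to the stated orders rather than only to leading order.
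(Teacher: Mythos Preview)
Your approach is the same as the paper's: both proofs are direct Taylor expansions around $r=1$ (and around $r=\infty$ for \eqref{eq:Vnearinf}), and the paper's proof is in fact much terser than yours, essentially just asserting that one ``performs a Taylor expansion in $r-1$''. Your account of the cancellations in $V_{q,\varpi}$ and the organization of $\check V$ via the three sources $m\tomega$, $1-a^2$, $\Lambda-\Lambda_+$ is correct and more explicit than what the paper writes.

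One point you gloss over that the paper does spell out: in \eqref{eq:Vnearinf} the claim is $V=\Lambda\bigl(r^{-2}+O_1(r^{-3})\bigr)$ with the $O_1(r^{-3})$ \emph{independent of $m,\omega,\Lambda$}. Your line ``$V_0=\Lambda(r^2+a^2)^{-2}\Delta+O(r^{-3})$'' hides terms of size $m\omega\,r^{-3}$ and $m^2 r^{-4}$ in the $O(r^{-3})$; to absorb them into $\Lambda\cdot O(r^{-3})$ with a uniform constant you need the admissibility bounds $\Lambda\geq |m|(|m|+1)$ and $\Lambda\geq 2|m\omega|$ from \eqref{eq:angev1}--\eqref{eq:angev2}. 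The paper invokes these explicitly; you should too.
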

\begin{proof}

To obtain \eqref{eq:V0nearhor}--\eqref{eq:Vchecknearhor}, we perform a Taylor expansion in $r-1$.

Similarly, \eqref{eq:Vnearinf} follows by Taylor expanding in $\frac{1}{r}$ at $r=\infty$. The term $O(r^{-3})$ does not depend on $m,\omega,\Lambda$ because we are using that $\frac{m^2+|m\omega|+1}{\Lambda}$ is uniformly bounded, since $\Lambda\geq \max\{|m|(|m|+1),2|m\omega|\}$ by \eqref{eq:angev1} and \eqref{eq:angev2}.
\end{proof}

The following lemma concerns the qualitative behaviour of the potential $V_0$, via an investigation of the possible shapes of its graph. This lemma forms the analogue of \cite[Lemma 6.3.1]{part3} and \cite[Lemma 6.3.2]{part3} and includes a more refined analysis of the $a=1$ case.

\begin{lemma}
\label{prop:basicpropV}
The following properties hold:
\begin{enumerate}[label=\emph{(\roman*)}]
\item The potential $V_0$ is \underline{either}:
\begin{enumerate}[label=\emph{(\alph*)}]
\item strictly decreasing;
\item has a unique critical point $r^0_{\rm max}\geq 1$ corresponding to a global maximum, with equality iff $a=1$ and $m\omega=m\upomega_+=\frac{m^2}{2}$ and $\Lambda\leq 2m^2$;
\item has exactly two critical points $1\leq r^0_{\rm min}<r^0_{\rm max}$, corresponding to a local minimum and local maximum, respectively, with equality iff $a=1$ and $m\omega=m\upomega_+=\frac{m^2}{2}$ and $\Lambda>2m^2$.
\end{enumerate}
\item There exists a numerical constant $B>0$ (independent of any parameters), such that
\begin{equation*}
1\leq r^0_{\rm max}<B.
\end{equation*}
\item If we restrict to $m\omega< m\upomega_+$, then
\begin{equation*}
\frac{dV_0}{dr}(1)=\frac{dV_0}{dr}(1)\geq 0,
\end{equation*}
so $V_0(r)$ has exactly one critical point $r_{\max}^0$.
\end{enumerate}
\end{lemma}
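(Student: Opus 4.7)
The plan is to express $V_0'(r)=(r^2+a^2)^{-3}P(r)$ where $P(r):=(r^2+a^2)N'(r)-4rN(r)$ with $N(r) = \Lambda\Delta + 2ma(1+a^2)\omega r - a^2 m^2$; a direct expansion yields the cubic
\begin{equation*}
P(r) = -2\Lambda\, r^3 + 3(1+a^2)\Lambda\,\alpha\, r^2 - 2 a^2\Lambda\,\beta\, r - a^2(1+a^2)\Lambda\,\alpha,
\end{equation*}
where $\alpha := 1 - 2ma\omega/\Lambda$ and $\beta := 1 - 2m^2/\Lambda$, together with
\begin{equation*}
P(1) = \Lambda(1-a^4) - 2ma(1+a^2)(3-a^2)\omega + 4 a^2 m^2.
\end{equation*}
For part (iii), substituting the hypothesis $m\omega \leq m^2\upomega_+ = m^2 a/(1+a^2)$ into $P(1)$ and using $\Lambda \geq m(m+1) \geq m^2$ from \eqref{eq:mainineqLambda} one simplifies to $P(1) \geq (1-a^2)[\Lambda(1+a^2) - 2a^2 m^2] \geq (1-a^2)^2 m^2 \geq 0$, so $V_0'(1)\geq 0$. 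For part (ii), the inequalities $\alpha\in[0,2]$ and $\beta\in[-1,1]$ (from \eqref{eq:angev2} and \eqref{eq:angev1} respectively) bound every coefficient of $P$ by a universal multiple of $\Lambda$, so $P(r) \leq \Lambda(-2r^3 + c_1 r^2 + c_2 r + c_3)$ for absolute constants $c_i$, whence $P(r)<0$ for $r$ beyond some numerical threshold $B$.

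The crux of part (i) is to rule out three real roots of $P$ in $[1,\infty)$. I would invoke Vieta's product formula: the three (possibly complex) roots of $P$ satisfy
\begin{equation*}
r_1 r_2 r_3 \;=\; -\tfrac{1}{2}\,a^2(1+a^2)\,\alpha \;\leq\; 0,
\end{equation*}
since $\alpha\geq 0$ by \eqref{eq:angev2}. If all three were real and in $[1,\infty)$ their product would be $\geq 1 > 0$, a contradiction; hence $V_0$ has at most two critical points in $[1,\infty)$. Combined with the asymptotics $V_0(r) = \Lambda r^{-2} + O(r^{-3}) \to 0^+$, which force the rightmost critical point to be a local maximum, and with the remark that a ``max-then-min'' profile would require a third critical point to return to zero at infinity, this leaves only the three shapes (a), (b) and (c).

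The equality assertions in (b) and (c) concern when $r=1$ itself is a critical point, i.e.\ when $P(1)=0$. At $a=1$ and $\omega = m\upomega_+ = m/2$, direct substitution gives $P(1)=0$ for every $\Lambda$, together with $P'(1) = 4(\Lambda - 2m^2)$; since $V_0''(1) = (1+a^2)^{-3}P'(1)$ whenever $P(1)=0$, the second-derivative test identifies case (b) equality (local max at $r=1$ when $\Lambda\leq 2m^2$, the borderline $\Lambda=2m^2$ being handled by noting $P''(1)<0$ makes $r=1$ a local max of $P$ with $P(1)=0$, hence $P\leq 0$ on $[1,\infty)$) and case (c) equality (local min at $r=1$ with an additional maximum in $(1,\infty)$ produced by the intermediate-value theorem and $P\to-\infty$ when $\Lambda > 2m^2$). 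The main obstacle is the no-three-roots step: although the Vieta computation is a one-liner, the uniform negativity of the product of roots, driven solely by the angular eigenvalue inequality $2|ma\omega|\leq \Lambda$ of \eqref{eq:angev2}, is the structural input that closes off an otherwise delicate case analysis uniformly across both the sub-extremal and the extremal regimes.
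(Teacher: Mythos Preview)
Your proof is correct, and your handling of part (i) is genuinely different from the paper's. The paper writes $P(r)=(r^2+a^2)^3\,V_0'(r)$ as you do, but then differentiates once more: it solves the quadratic $P'(r)=0$ explicitly for its two roots $r_1<r_2$ and verifies by a case split on the sign of $m\omega$ that $r_1<1$ always. Hence $P$ has at most one critical point in $[1,\infty)$, and together with $P(r)\to-\infty$ this forces at most two zeros there. Your Vieta argument bypasses that case split entirely: since the product of the three roots of $P$ equals $-\tfrac{1}{2}a^2(1+a^2)\alpha\le 0$ by \eqref{eq:angev2}, three real roots in $[1,\infty)$ would give a positive product, a contradiction. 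This is shorter and makes the role of the spectral inequality $\Lambda\ge 2|ma\omega|$ completely transparent; the paper's approach, on the other hand, localises the obstruction (the root $r_1<1$) and thereby gives slightly more quantitative information about the second derivative of $P$, which feeds into the later estimates in Lemma~\ref{prop:highLambdapotential}.

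For parts (ii) and (iii) the two proofs are essentially equivalent: the paper bounds $r_{\max}^0$ by isolating the leading cubic and quadratic terms of $P$ and using $\Lambda\ge\max\{|m|(|m|+1),2|m\omega|\}$, which is your coefficient bound in different clothing; and for (iii) the paper reads off $V_0'(1)\ge 0$ from the expansion \eqref{eq:V0nearhor} in $\tilde\omega$, whereas your direct evaluation $P(1)\ge (1-a^2)[\Lambda(1+a^2)-2a^2m^2]\ge(1-a^2)^2 m^2$ is the same computation carried out in the $\omega$-variable. Your treatment of the equality cases via $P(1)=0$ and $P'(1)=4(\Lambda-2m^2)$ at $a=1$, $\omega=m/2$ matches the paper's, which likewise only argues the ``if'' direction explicitly.
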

\begin{figure}[h]
        \centering
        \begin{subfigure}[b]{0.2\textwidth}
            \centering
            \includegraphics[width=\textwidth]{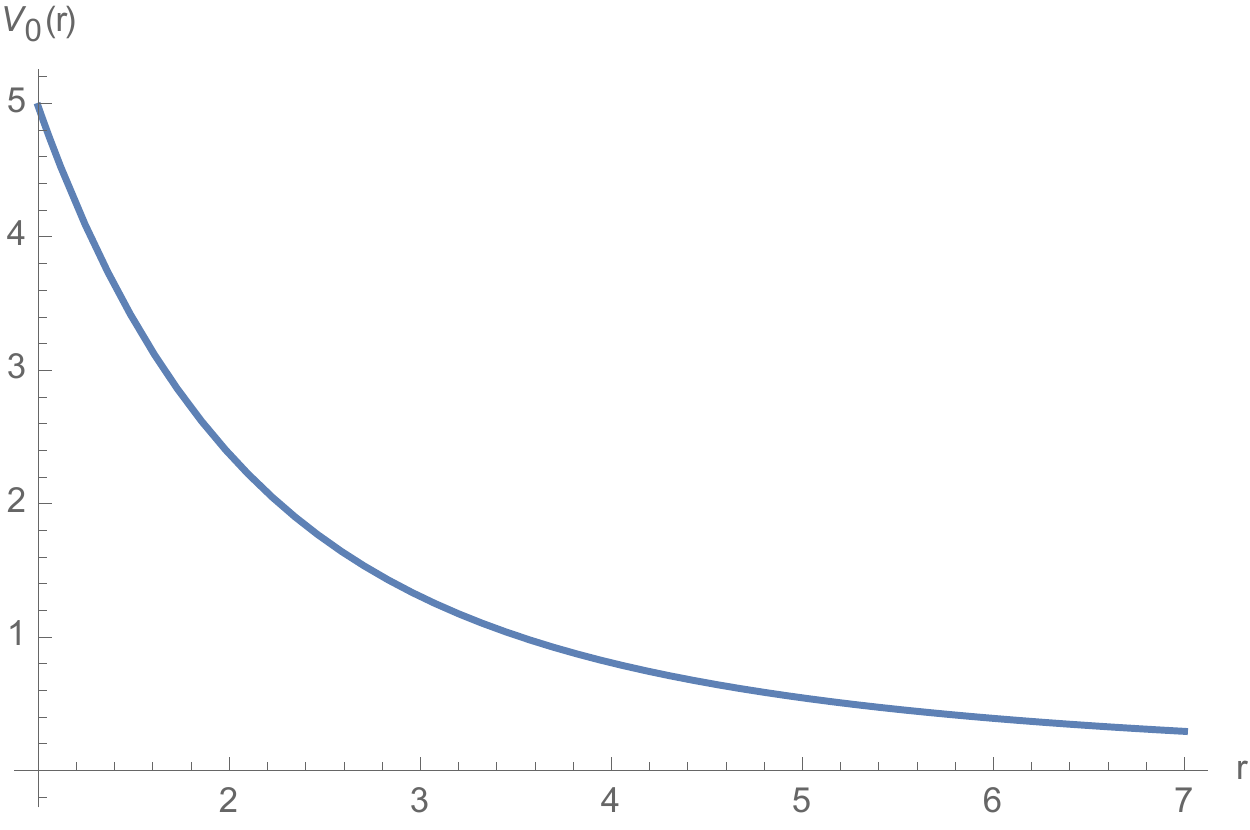}
            \caption[]%
            {{\small $V_0$ strictly decreasing (Case (i)(a)).}}    
            \label{fig:V0plot1}
        \end{subfigure}
        \hskip\baselineskip
        \begin{subfigure}[b]{0.25\textwidth}  
            \centering 
            \includegraphics[width=\textwidth]{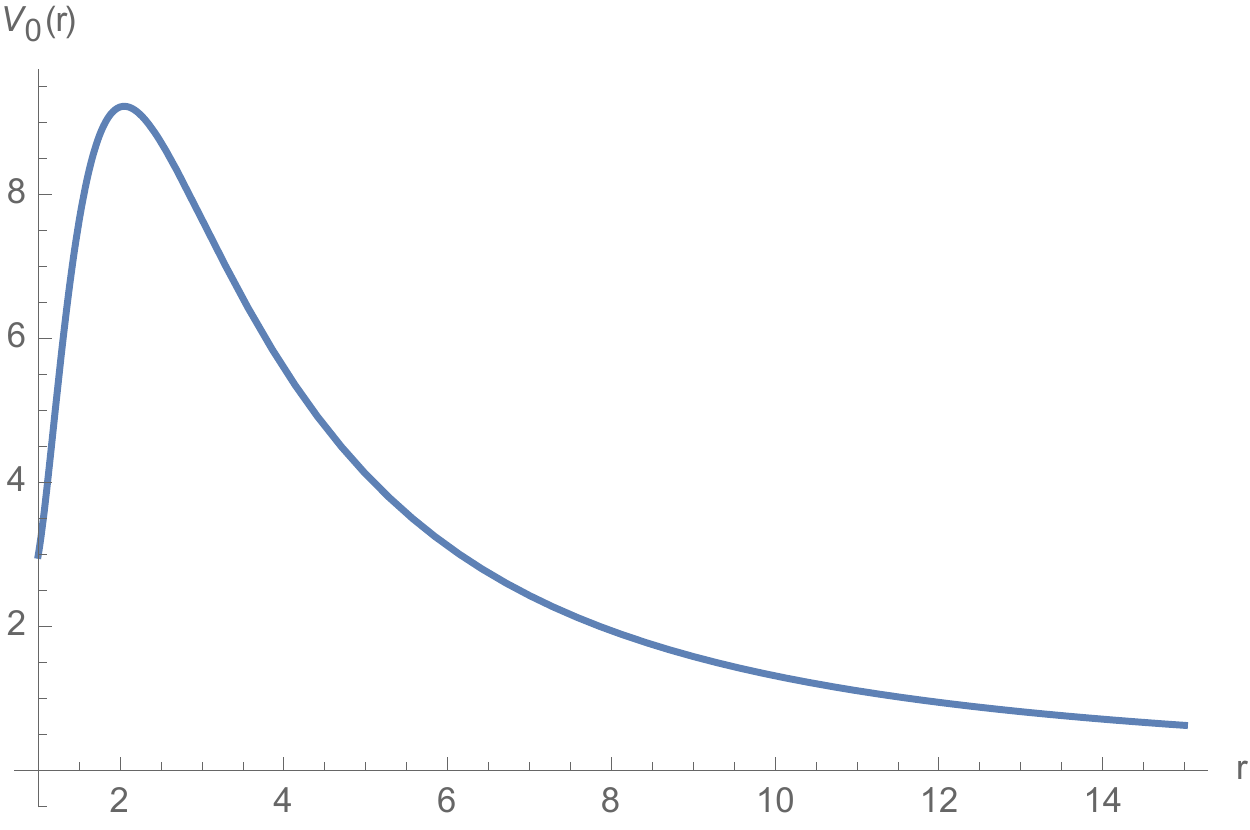}
            \caption[]%
            {{\small $V_0$ with one maximum and no minima (Case (i)(b)).}}    
            \label{fig:V0plot2}
        \end{subfigure}
        \hskip\baselineskip
            \begin{subfigure}[b]{0.25\textwidth}  
            \centering 
            \includegraphics[width=\textwidth]{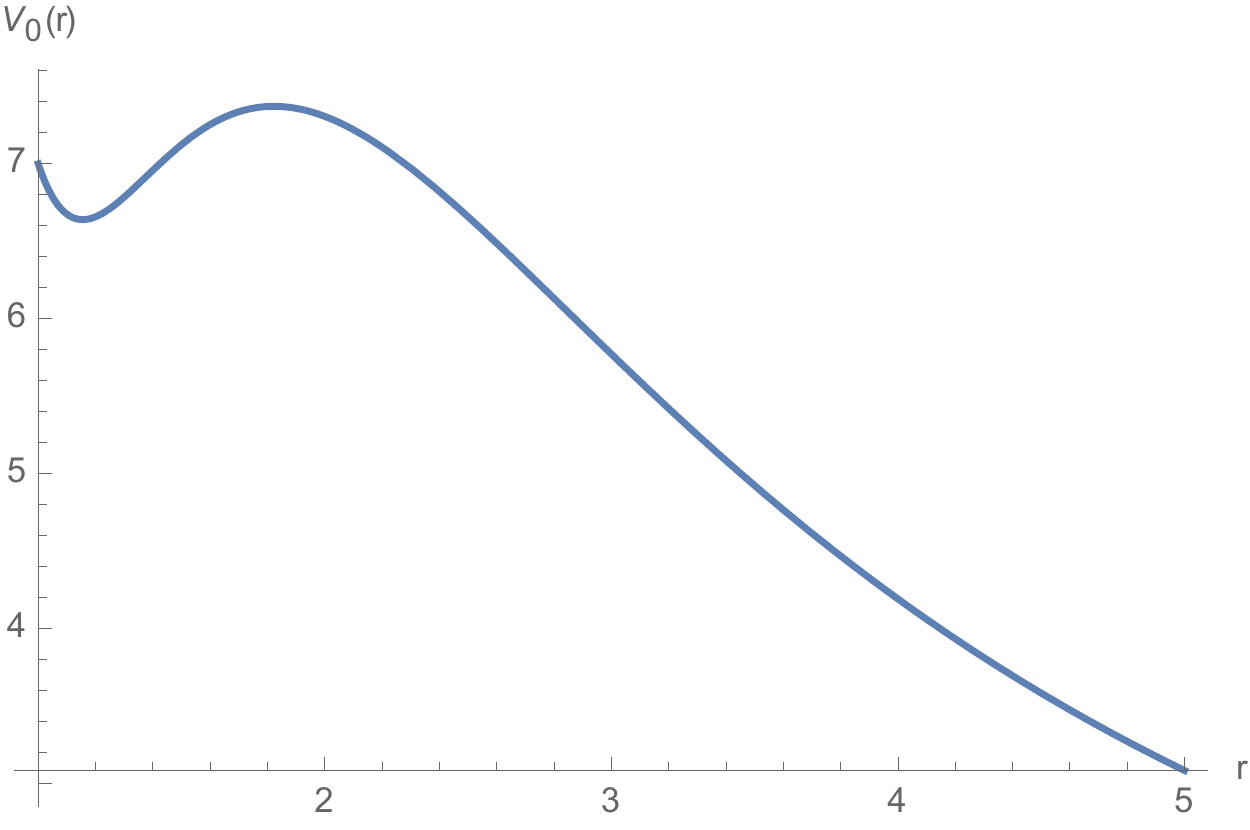}
            \caption[]%
            {{\small $V_0$ with one minimum away from $r=1$ and one maximum (Case (i)(c)). }}    
            \label{fig:V0plot3}
        \end{subfigure}
        \vskip\baselineskip
        \begin{subfigure}[b]{0.25\textwidth}   
            \centering 
            \includegraphics[width=\textwidth]{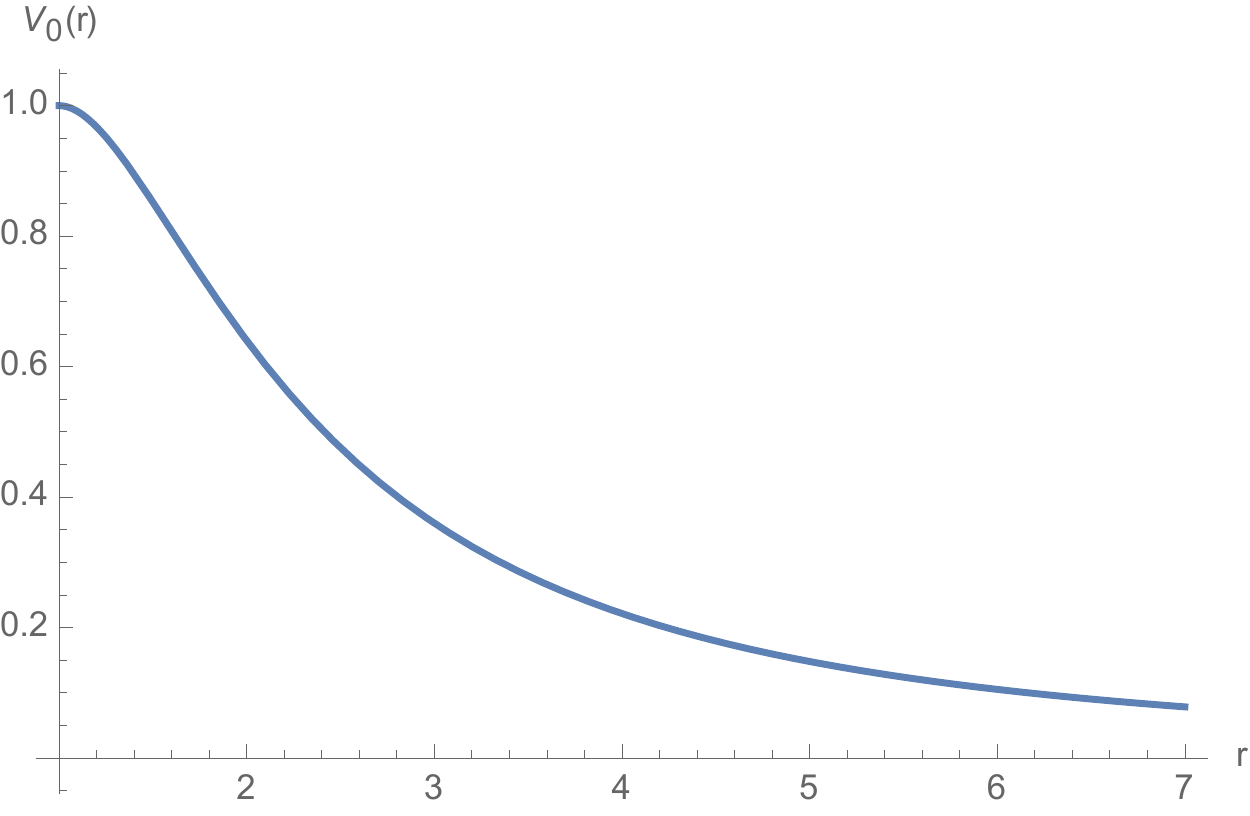}
            \caption[]%
            {{\small $V_0$ with one maximum at \\$r=1$ and no minima (Case (i)(b)).}}    
            \label{fig:V0plot4}
        \end{subfigure}
         \hskip\baselineskip
        \begin{subfigure}[b]{0.25\textwidth}   
            \centering 
            \includegraphics[width=\textwidth]{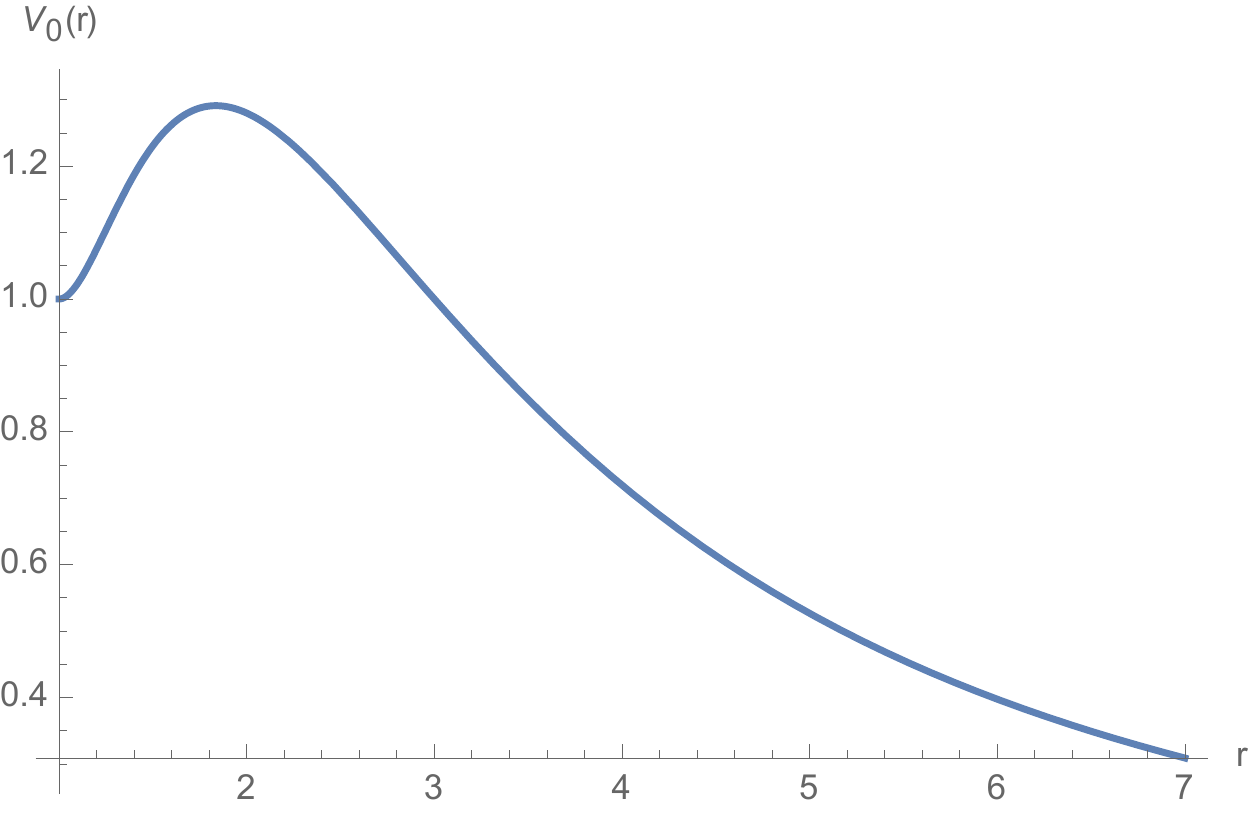}
            \caption[]%
            {{\small $V_0$ with one minimum at $r=1$ and one maximum (Case (i)(c)).}}    
            \label{fig:V0plot5}
            
        \end{subfigure}
        \caption[ The average and standard deviation of critical parameters ]
        {\small The different shapes of the graph of $V_0(r)$ corresponding to the cases outlined in Lemma \ref{prop:basicpropV}.} 
        \label{fig:V0plot}
    \end{figure}
\begin{proof}
We can write
\begin{align*}
(r^2+a^2)^{3}\frac{dV_0}{dr}=&\:2(1+a^2)am\omega (a^2-3r^2)+4m^2 a^2r+\Lambda[(3-2r)r^2-a^4+a^2(r-1)(3r+1)],\\
\frac{d}{dr}\left((r^2+a^2)^{3}\frac{dV_0}{dr}\right)=&\: -12(1+a^2) a m \omega r-6\Lambda r(r-1)+a^2 \Lambda (6r-2)+4a^2m^2.
\end{align*}
Let $r_{1}<r_2$ be the solutions of $\frac{d}{dr}\left((r^2+a^2)^{3}\frac{dV_0}{dr}\right))(r_i)=0$. Then
\begin{equation*}
r_{1,2}=\frac{1}{2}(1+a^2)(1-2m a\omega \Lambda^{-1})\mp \sqrt{\frac{1}{4}(1+a^2)^2(2ma\omega \Lambda^{-1}-1)^2-\frac{a^2}{3}(1-2m^2\Lambda^{-1})}.
\end{equation*}
Suppose $m\omega>0$. Then clearly $r_1<1$ if it is real. Suppose $m\omega\leq 0$. Then we can express:
\begin{equation*}
r_1=\frac{1}{2}(1+a^2)(1+2 a |m\omega| \Lambda^{-1})- \sqrt{\frac{1}{4}(1+a^2)^2(2a|m\omega| \Lambda^{-1}+1)^2-\frac{a^2}{3}(1-2m^2\Lambda^{-1})}
\end{equation*}
so, using that
\begin{equation*}
\begin{split}
\frac{1}{4}(1+a^2)^2(2a|m\omega| \Lambda^{-1}+1)^2-\frac{a^2}{3}(1-2m^2\Lambda^{-1})\geq &\:\frac{1}{4}(1+a^2)^2(2a|m\omega| \Lambda^{-1})^2+\frac{1}{4}(1+a^2)^2-\frac{1}{3}a^2\\
\geq &\:\frac{1}{4}(1+a^2)^2(2a|m\omega| \Lambda^{-1})^2
\end{split}
\end{equation*}
 we once again have that $r_1<\frac{1}{2}(1+a^2)<1$.

In both cases, $(r^2+a^2)^{3}\frac{dV_0}{dr}$ can therefore have at most one critical point in $[1,\infty)$. We moreover have that
\begin{equation*}
\lim_{r\to \infty}(r^2+a^2)^{3}\frac{dV_0}{dr}=-\infty.
\end{equation*}
Hence, $\frac{dV_0}{dr}$ either: A) strictly negative, B) vanishes at a unique point $r^0_{\rm max}$, C) vanishes at two points $r^0_{\rm min}<r^0_{\rm max}$. By the above observations together with the fact that $\lim_{r\to \infty}r^2V_0(r)=\Lambda$, we can moreover infer that in the case: $V_0(r)$ is 1) strictly decreasing, 2) $V_0(r)$ has a unique critical point $r^0_{\rm max}$ which is a global maximum, or 3) has two critical points, a local minimum with agrees with $r^0_{\rm min}$ and a local maximum which agrees with $r^0_{\rm max}$.

From \eqref{eq:V0nearhor} it follows that in the special case $a=1$ and $\omega=\upomega_+$: $r=1$ is either: A) a local minimum if $\Lambda>2m^2$, B) a global maximum if $\Lambda<2m^2$, C) a global maximum if $\Lambda=2m^2$.

Finally, we observe that there exists a numerical constant $A>0$, such that
\begin{equation*}
|(r^2+a^2)^{3}\frac{dV_0}{dr}+2\Lambda r^{3}-3(1+a^2)(\Lambda-2am\omega)r^2|\leq A (\Lambda+|m\omega|+m^2r^{-1})r
\end{equation*}
Hence,
\begin{equation*}
|\Lambda r^0_{\rm max}-(1+a^2)(\Lambda-2am\omega)|\leq A (\Lambda+|m\omega|+m^2).
\end{equation*}
We divide the above inequality with $\Lambda$ and use that $\Lambda\geq |m|(|m|+1)$ and $\Lambda\geq 2|m\omega|$ to conclude that there exists a numerical constant $B>0$ such that $r^0_{\rm max}<B$.

It follows from \eqref{eq:V0nearhor} and \eqref{eq:V1nearhor} that when $m\tomega<0$:
\begin{equation*}
\frac{dV_0}{dr}(1)=\frac{d\widetilde{V}_0}{dr}(1)\geq 0,
\end{equation*}
so it immediately follows that $V_0$ cannot have a minimum.
\end{proof}

The lemma below concerns the quantitative behaviour of $V$ around its extrema.

\begin{lemma}
\label{prop:highLambdapotential}
The potential $V$ has at most two critical points $r_{\min}<r_{\max}$, corresponding to a local minimum and a global maximum. Furthermore:
\begin{enumerate}[label=\emph{(\roman*)}]
\item There exists a numerical constant $C>0$, such that
\begin{align*}
|r_{\max}-(1+\sqrt{2})|\leq &\: C (|m\omega|+m^2)\Lambda^{-1}+C(1-a^2),\\
0\leq r_{\min}-1\leq &\:C (|m\omega|+m^2)\Lambda^{-1}+C(1-a^2) \qquad \textnormal{if $m\omega\geq m\upomega_+$},\\
\textnormal{$V(r)$ has no minimum in $[1,\infty)$ if  $m\omega< m\upomega_+$}.
\end{align*}
\item For sufficiently small $\frac{|m\omega|+m^2+1}{\Lambda}$ and $\frac{\omega^2}{\Lambda}$, there exist constants $b>0$ and $\delta>0$, such that
\begin{equation*}
V(r)-\omega^2\geq b \Lambda\qquad \textnormal{for all $r\in (r_{\max}-\delta,r_{\max}+\delta)$}.
\end{equation*}
\item For sufficiently small $\frac{|m\omega|+m^2+1}{\Lambda}$, there exist constants $b,B>0$, such that for all $r\geq 1$:
\begin{align*}
(r_{\max}-r)\frac{dV}{dr}(r)\geq &\: b \Lambda r^{-5}(r-1)(r-r_{\max})^2\qquad \textnormal{if $m\omega< m\upomega_+$},\\
(r_{\max}-r)\frac{dV}{dr}(r)\geq &\: b \Lambda r^{-5}(r-1)(r-r_{\max})^2-B\left(m\omega-m\upomega_+\right)r^{-4}(r-r_{\max})^2\qquad \textnormal{if $m \omega \geq m\upomega_+$}.
\end{align*}
\item Let $0<\eta<\frac{1}{4}$ and $m\neq 0$. Consider $m\omega \leq m\upomega_+-\eta m^2$. Then there exists constants $b_{\eta},\delta_{\eta}>0$ such that:
\begin{align*}
(r_{\max}-r)\frac{dV}{dr}(r)\geq &\:  b_{\eta} r^{-4}(r-r_{\max})^2 m^2  \quad \textnormal{for $r\in [1,\infty)$},\\
V(r)-\omega^2\geq &\: b_{\eta} \Lambda  \quad \textnormal{for $r\in (r_{\max}-\delta_{\eta},r_{\max}+\delta_{\eta})$},
\end{align*}
with $r_{\rm max}-\delta_{\eta}>1$.
\end{enumerate}
\end{lemma}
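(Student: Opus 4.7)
The plan is to proceed through the four parts in order, using perturbative analysis around the dominant $\Lambda$-term in $V$.

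\textbf{Part (i).} The key observation is the decomposition
\begin{equation*}
(r^2+a^2)^3 \frac{dV}{dr} = \Lambda\, Q(r) + R(r), \qquad Q(r) := -2r^3+3(1+a^2)r^2-2a^2 r-a^2(1+a^2),
\end{equation*}
where $R(r)$ is a polynomial with $|R(r)|\leq C(|m\omega|+m^2+1)r^5$ on $[1,\infty)$, independent of $\Lambda$. The cubic $Q$ satisfies $Q(1)=1+a^2-a^4\geq 0$ and $Q(\infty)=-\infty$; at $a=1$ it factors as $-2(r-1)(r-(1+\sqrt 2))(r-(1-\sqrt 2))$. A continuity-in-$a$ argument shows that for $a\in[0,1]$, $Q$ has a unique simple zero $r_3^{(a)}\in(1,\infty)$, with $r_3^{(a)}=1+\sqrt 2+O(1-a^2)$ and $Q'(r_3^{(a)})\neq 0$. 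An implicit function argument applied to $\Lambda Q+R$ near $r_3^{(a)}$ then yields a unique critical point $r_{\max}$ of $V$ near $r_3^{(a)}$ with the claimed bound. For $r_{\min}$, the expansions \eqref{eq:V0nearhor} and \eqref{eq:V1nearhor} give
\begin{equation*}
\frac{dV}{dr}(1)=-8a(1+a^2)^{-3}m\tomega+O(\Lambda(1-a^2))+O((1-a^2)^2),
\end{equation*}
which is strictly positive when $m\omega<m\upomega_+$ and $a$ is close to $1$, ruling out a minimum; for $a$ not close to $1$, Lemma \ref{prop:basicpropV}(iii) applies directly. When $m\omega\geq m\upomega_+$, the same expansion places a minimum close to $r=1$ with the error controlled by the implicit function theorem.

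\textbf{Part (ii).} Combine part (i) with a direct evaluation of the dominant term of $V$. One has $(r^2+a^2)^{-2}\Lambda\Delta\big|_{r=1+\sqrt 2,\,a=1}=c_0\Lambda$ with $c_0=2/(4+2\sqrt 2)^2>0$. Subtracting $\omega^2$ and using the smallness of $(|m\omega|+m^2+1)/\Lambda$ and $\omega^2/\Lambda$, the leading term $c_0\Lambda$ dominates, and continuity extends the bound $V(r_{\max})-\omega^2\geq \tfrac{c_0}{2}\Lambda$ to a fixed neighborhood.

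\textbf{Part (iii).} This is the main calculation. Using the factorization $Q(r)=-2(r-r_1^{(a)})(r-r_2^{(a)})(r-r_3^{(a)})$ with $r_1^{(a)}\in(-1,0)$ and $r_2^{(a)}\in(0,1]$ bounded away from $r_3^{(a)}$, one obtains on $[1,\infty)$
\begin{equation*}
(r_3^{(a)}-r)Q(r)=2(r-r_1^{(a)})(r-r_2^{(a)})(r-r_3^{(a)})^2\geq b_0(r-1)(r-r_3^{(a)})^2.
\end{equation*}
Writing $r_3^{(a)}=r_{\max}+O(\Lambda^{-1}(|m\omega|+m^2+1))$ from part (i) and expanding $(r_{\max}-r)Q(r)=(r_3^{(a)}-r)Q(r)+(r_{\max}-r_3^{(a)})Q(r)$, the main term contributes $\Lambda b_0 r^{-5}(r-1)(r-r_{\max})^2$ after dividing by $(r^2+a^2)^3$, while the correction and the $R(r)$-contribution are controlled by a small multiple of it once $(|m\omega|+m^2+1)/\Lambda$ is small. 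In the superradiant case $m\omega\geq m\upomega_+$, the explicit $m\omega$-dependent piece of $R(r)$ changes sign close to $r=1$, producing the extra $-B(m\omega-m\upomega_+)r^{-4}(r-r_{\max})^2$ error term in the statement, which reflects the possible presence of $r_{\min}$.

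\textbf{Part (iv).} The assumption $m\omega\leq m\upomega_+-\eta m^2$ forces $-m\tomega\geq \eta m^2$, and then \eqref{eq:V0nearhor}, \eqref{eq:V1nearhor} yield $\frac{dV}{dr}(1)\geq c_\eta m^2$ uniformly in $\Lambda$ (with the non-close-to-extremal case handled by the explicit expansion and Lemma \ref{prop:basicpropV}(iii)). The region $r\in[1,1+\delta_\eta]$ is then treated by continuity (no $r_{\min}$ exists by part (i)); the region $r\geq 1+\delta_\eta$ uses part (iii) together with $\Lambda\geq m^2$ from \eqref{eq:mainineqLambda}. The bound $V(r)-\omega^2\geq b_\eta\Lambda$ is obtained from the $\widetilde{V}-\tomega^2$ formulation at $r_{\max}$: the gap $|\tomega|\geq\eta|m|$ combined with $\Lambda\geq 2a|m\omega|$ ensures the $\Lambda\Delta/(r^2+a^2)^2$ term dominates at $r_{\max}$, and continuity gives the neighborhood. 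The hardest step will be part (iii), specifically the bookkeeping required to absorb the $R(r)$-error into the coercive main term and to correctly identify the extra superradiant correction when $m\omega\geq m\upomega_+$.
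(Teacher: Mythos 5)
Your Ansatz $(r^2+a^2)^3\,\tfrac{dV}{dr}=\Lambda Q(r)+R(r)$ with $Q$ cubic and $|R|\lesssim(|m\omega|+m^2+1)r^5$ is essentially what the paper's identity \eqref{eq:V0maxdeterm} encodes (expanded about $a=1$), and your treatment of parts (i), (ii), (iv) tracks the paper's argument closely; the differences there are cosmetic. Two minor slips: $Q(1)=1-a^4$, not $1+a^2-a^4$; and $(r-r_1^{(a)})(r-r_2^{(a)})\geq\tfrac{b_0}{2}(r-1)$ drops a factor of $(r+1)$ at large $r$, so after dividing by $(r^2+a^2)^3\sim r^6$ you get the rate $r^{-6}$ rather than the stated $r^{-5}$; you should keep $(r-r_1^{(a)})(r-r_2^{(a)})\geq c\,(r-1)(r+1)$.

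There is a genuine gap in part (iii), localized to the window $|r-r_{\max}|\lesssim|r_{\max}-r_3^{(a)}|$. Your main term controls $b_0\Lambda(r-1)(r+1)(r-r_3^{(a)})^2$, i.e.\ a double zero at $r_3^{(a)}$, whereas the claimed inequality requires the double zero at $r_{\max}$. Near $r_{\max}$, the correction $\Lambda(r_{\max}-r_3^{(a)})Q(r)\approx\Lambda\,Q'(r_3^{(a)})(r_{\max}-r_3^{(a)})(r-r_3^{(a)})$ has the same order of magnitude as the main term's lower bound: both scale like $\Lambda(r_{\max}-r_3^{(a)})^2$. Shrinking $(|m\omega|+m^2+1)/\Lambda$ shrinks the offending window but does \emph{not} improve the ratio of error to main term inside it, so the phrase ``controlled by a small multiple of it'' is not justified there. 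The paper avoids this issue by treating $r\geq 1+\tfrac34\sqrt 2$ separately: it shows $\tfrac{d}{dr}\bigl[(r^2+a^2)^3\tfrac{dV}{dr}\bigr]\leq -c\Lambda r^2+Cr$ on that range and integrates this derivative bound from $r_{\max}$, where $(r^2+a^2)^3\tfrac{dV}{dr}$ vanishes \emph{exactly}; the result then automatically carries the double zero at $r_{\max}$, and no delicate absorption near $r_{\max}$ is needed. To close your proof you should either adopt that integration step, or Taylor expand $(r_{\max}-r)\bigl[\Lambda Q+R\bigr]$ directly around $r_{\max}$ (using $\Lambda Q(r_{\max})+R(r_{\max})=0$ and $|\Lambda Q'(r_{\max})+R'(r_{\max})|\gtrsim\Lambda$) on a fixed neighbourhood of $r_{\max}$, and reserve the $r_3^{(a)}$-factorization for $r$ bounded away from $r_{\max}$.
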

\begin{proof}
First of all, since
\begin{equation*}
\begin{split}
(r^2+a^2)^{3}\frac{dV_0}{dr}=&\:2(1+a^2)am\omega (a^2-3r^2)+4m^2 a^2r+\Lambda[(3-2r)r^2-a^4+a^2(r-1)(3r+1)]
\end{split}
\end{equation*}
and there exists a uniform constant $A>0$ such that
\begin{equation*}
(r^2+a^2)^{3}\frac{dV_1}{dr}\leq A r^2,
\end{equation*}
it follows that there exists a numerical constant $C>0$, such that
\begin{align*}
|r^0_{\min}-r_{\min}|\leq C\frac{|m\omega|+m^2+1}{\Lambda},\\
|r^0_{\max}-r_{\max}|\leq C\frac{|m\omega|+m^2+1}{\Lambda}.
\end{align*}

We can express $(r^2+a^2)^{3}\frac{dV_0}{dr}$ as follows by expanding around $a^2=1$:
\begin{equation}
\label{eq:V0maxdeterm}
\begin{split}
(r^2+a^2)^{3}\frac{dV_0}{dr}=&\: 2\Lambda (r - 1) (1 - r^2 + 2 r)-8 am \widetilde{\omega}-2 m (r-1) (m + 3 m r + 6 a \widetilde{\omega} (1 + r))\\
&+\Lambda(1-a^2)[3+r(3r-2)-(1-a^2)] \\
&+\left[-4 m^2 r+ 6 a m \widetilde{\omega} (r^2-1) +2 m^2 \left(r^2-\frac{2}{3}\right)\right] (1-a^2) \\
&+ 2 m (a \widetilde{\omega}  + m) (1-a^2)^2.
\end{split}
\end{equation}

We use that by Lemma \ref{prop:basicpropV}, $r_{\max}^0\leq B$, so we can estimate
\begin{equation*}
(r_{\square}^0-1)((r_{\square}^0)^2-2r_{\square}^0-1)\leq C (|m\omega|+m^2)\Lambda^{-1}+C(1-a^2),
\end{equation*}
with $C>0$ and $\square\in\{\min,\max\}$.

We conclude that 
\begin{align*}
|r_{\max}^0-(1+\sqrt{2})|\leq &\:C (|m\omega|+m^2)\Lambda^{-1}+C(1-a^2),\\
|r_{\min}^0-1|\leq &\:C (|m\omega|+m^2)\Lambda^{-1}+C(1-a^2).
\end{align*}
We conclude that (i) holds by applying in addition (iii) of Lemma \ref{prop:basicpropV}. 

Then, for $1-a^2$ suitably small, (ii) follows immediately from the above estimate for $r_{\max}$ and continuity of $V(r)$, together with the smallness assumption on $\frac{|m\omega|+m^2+\omega^2+1}{\Lambda}$. For the remaining $a$, we refer to \cite[Lemma 8.3.1]{part3}.\footnote{In the present paper, only the case of small $1-a^2$ is relevant, but we include the remaining $a$ for the sake of completeness.}

We will now establish (iii). We will consider the case where $1-a^2$ is suitably small. From \eqref{eq:V0maxdeterm}, we obtain
\begin{equation*}
\begin{split}
(r^2+a^2)^3\frac{dV_0}{dr}\geq  &\:2\Lambda (r-1)(r_{\max}-r)(r+\sqrt{2}-1)-2\Lambda (r-1)(r_{\max}-(1+\sqrt{2}))(r+\sqrt{2}-1)\\
&-8 am \widetilde{\omega}-2 (r-1) (m^2 + 3 m^2 r + 6 a m\widetilde{\omega} (1 + r))+(1-a^2)\left(4\Lambda-3\frac{2}{3}m^2\right)
\end{split}
\end{equation*}
and hence, noting that $\Lambda\geq m^2$ and applying (i),
\begin{equation*}
\begin{split}
(r^2+a^2)^{3}r^{-1}(r_{\max}-r)\frac{dV_0}{dr}\geq &\:2\Lambda (r-1)(r_{\max}-r)^2-B(1+m^2+|m\omega|)|r_{\max}-r|(r-1)\\
&-8a m \widetilde{\omega}r^{-1}(r_{\max}-r)
\end{split}
\end{equation*}
For $r\in [1,1+\frac{3}{4}\sqrt{2}]$, we use that for $1-a^2$ and $\frac{|m\omega|+m^2+1}{\Lambda}$ suitably small, $r_{\max}>1+\frac{3}{4}\sqrt{2}$ and there exists a constant $b>0$ such that:
\begin{equation*}
\begin{split}
(r^2+a^2)^{3}r^{-1}(r_{\max}-r)\frac{dV}{dr}\geq &\:b\Lambda (r-1)-8a m \widetilde{\omega}r^{-1}(r_{\max}-r)
\end{split}
\end{equation*}
Note that when $r\in [1+\frac{3}{4}\sqrt{2},\infty)$, we can estimate:
\begin{equation*}
\frac{d}{dr}\left((r^2+a^2)^3\frac{dV_0}{dr}\right)\leq -c\Lambda r^2,
\end{equation*}
for some constant $c>0$, by using that the critical point $r_2$ defined in the proof of Lemma \ref{prop:basicpropV} satisfies:
\begin{equation*}
\left|r_2-\frac{1}{2}\left(1+a^2+\sqrt{1 - \frac{a^2}{3}}\right)\right|\leq C\frac{|m\omega|+m^2+1}{\Lambda},
\end{equation*}
so $r_2<1+\frac{3}{4}\sqrt{2}$ for $1-a$ and $\frac{|m\omega|+m^2+1}{\Lambda}$ sufficiently small. Since moreover
\begin{equation*}
\frac{d}{dr}\left((r^2+1)^3\frac{dV_1}{dr}\right)\leq C r,
\end{equation*}
we can integrate $\frac{d}{dr}\left((r^2+1)^3\frac{dV}{dr}\right)$ from $r=r_{\rm max}$ to conclude that for any $r\in [1+\frac{3}{4}\sqrt{2},\infty)$:
\begin{equation*}
\begin{split}
(r^2+a^2)^{3}r^{-1}(r_{\max}-r)\frac{dV}{dr}\geq &\:b\Lambda(r_{\max}-r)^2.
\end{split}
\end{equation*}
Combining the above lower bound estimates for $r^5(r_{\max}-r)\frac{dV}{dr}$ then gives (iii) for $1-a$ sufficiently small. The statement with remaining $a$ follows in particular from \cite[Lemma 8.3.1]{part3}.

We will now prove (iv). Consider first the case $m\omega>0$ and $a=1$. Since $0<m\omega<m\upomega_+-\eta m^2=\frac{1-2\eta}{2}m^2$, we have that
\begin{equation*}
r_0=\frac{m}{2\omega}=\frac{m^2}{2m\omega}>1.
\end{equation*}
Now consider the subcase $m\omega>\epsilon m^2$. Then we additionally have that $r_0<R$ for some $R$ depending on $\epsilon$.

We can write for $a=1$:
\begin{equation*}
\begin{split}
(r_0^2+1)^2(V_0(r_0)-\omega^2)=&\:4m\omega r_0-m^2+\Lambda (r_0-1)^2-\omega^2(r_0^2+1)^2\\
=&\:m^2+\Lambda \omega^{-2} \left(\frac{m}{2}-\omega\right)^2-\omega^{-2}\left(\frac{m^2}{4}+\omega^2\right)^2\\
=&\:\omega^{-2}\left(m^2\omega^2+\Lambda \tomega^2-(\tomega^2+\omega m)^2 \right)\\
=&\: \tomega^2\omega^{-2}\left(\Lambda-\tomega^2-2\omega m \right)\\
=&\: \tomega^2\omega^{-2}\left(\Lambda-\omega^2-\frac{m^2}{4}-\omega m \right)\\
\geq &\: \tomega^2\omega^{-2}\left(\Lambda-(1-\eta^2)m^2 \right).
\end{split}
\end{equation*}
By \eqref{eq:angev1}, we have that $\Lambda\geq |m|(|m|+1)$. Hence,
\begin{equation*}
(V_0(r_0)-\omega^2)\geq c_{\epsilon,\eta}\Lambda.
\end{equation*}
Since $V_1\geq 0$, the above inequality holds also with $V_0$ replaced by $V$.

Now consider the case $m\omega\leq \epsilon m^2$. Then we apply \eqref{eq:Vnearinf} to conclude that for $R_0>1$ suitably large, there exists a $\epsilon>0$ suitably small, such that
\begin{equation*}
V(R_0)-\omega^2\geq \frac{\Lambda}{2 R_0^2}.
\end{equation*}

By combining the above and applying continuity of $V$ in $a$, we can therefore conclude that there exists a $b_{\eta}>0$, such that for all $1-a\ll 1$
\begin{equation*}
V(r_{\rm max})-\omega^2\geq b_{\eta} \Lambda.
\end{equation*}

We now consider $\frac{dV}{dr}$. For $1-a\ll1$, it follows from \eqref{eq:V0nearhor} that there exists a constant $c_{\eta}>0$ such that $\frac{dV}{dr}(1)\geq c_{\eta}m^2>0$. Hence, $V$ has exactly one critical point, which is the maximum $r_{\rm max}$. Furthermore, $r_{\rm max}-1\geq r_0-1=-(m\omega)^{-1}m\tomega\geq \frac{2\eta}{1-2\eta}+O(1-a^2)>\frac{\eta}{2}$ for $1-a$ sufficiently small compared to $\eta$.

By taking $\delta_{\eta}>0$ sufficiently small, we also have that $r_{\max}-\delta_{\eta}>1$, so after applying Taylor's theorem around $r_{\rm max}$, there exists a constant $b_{\eta}>0$ such that:
\begin{equation*}
(r_{\max}-r)\frac{dV}{dr}(r)\geq  b_{\eta} r^{-4}(r-r_{\max})^2  m^2,
\end{equation*}
for $r\in (r_{\max}-\delta_{\eta}, r_{\rm max}+\delta_{\eta})$. 

We now consider the case $m\omega<0$ and $1-a\ll 1$. Suppose $\Lambda<Bm^2$ for some arbitrarily large constant $B>0$. Then by \eqref{eq:V0nearhor}
\begin{equation*}
\frac{dV}{dr}(1)=-8a m \tomega +O(1-a^2)(\Lambda+m^2+|m\tomega|)\geq \left(4+O(1-a^2)\right)\Lambda.
\end{equation*}
We infer that $V$ has a unique critical point $r_{\rm max}$ which is a maximum. Furthermore, since there exists a constant $C>0$ such that for $1\leq r\leq 2$
\begin{equation*}
\frac{d^2V}{dr^2}(r)\geq  -C \Lambda,
\end{equation*}
we have that there exists a $\delta>0$ such that $r_{\rm max}>1+\delta$. We moreover have that
\begin{equation*}
(r_{\max}-r)\frac{dV}{dr}(r)\geq  b r^{-4}(r-r_{\max})^2  m^2,
\end{equation*}
for all $r\geq 1$. If $\Lambda \geq Bm^2$ for $B>0$ arbitrarily large, we conclude that (iv) holds.

If instead $\Lambda <Bm^2$, we can simply apply (iii) to conclude (iv).

In the case $a<a_0<1$, we refer to \cite[Lemma 8.3.1]{part3} and  \cite[Proposition 8.5.1]{part3}  for a proof of (iv).
\end{proof}

\subsection{Frequency-localized estimates}
\label{sec:freqlocest}
In this section, we consider smooth solutions $u:\R_{r_*}\to \R$ to the boundary value problem:
\begin{align}
\label{eq:ode}
u''+(\omega^2-V)u=&\:H,\\
\label{eq:bchor}
[u'+i\left(\omega-m \upomega_+\right)u](-\infty)=&\:0,\\
\label{eq:bcinf}
[u'-i\omega u](+\infty)=&\:0,
\end{align}
where we used the notation $(\cdot)':=\frac{d}{dr_*}=\frac{\Delta}{r^2+a^2}\frac{d}{dr}$ and $V(r)=V(r;a,m,\omega,\Lambda)$ is the potential function satisfying \eqref{eq:defpot}--\eqref{eq:defpot1}, where we take the parameters $\omega,m,\Lambda\in \R$ to denote \emph{admissible frequencies}, according to the definition below.
\begin{definition}
We say that a frequency triple $(\omega,m,\Lambda)$, with $m\in \Z$, $|m|\leq m_0$, with $m_0\in \N$, $\Lambda,\omega\in \R$ is \underline{admissible} iff:
\begin{equation*}
\Lambda\geq \max\{|m|(|m|+1),2|m\omega|\}
\end{equation*}
and $\Lambda(a,\omega,m,\ell)\to \Lambda_+|_{a=1}$ as $|a|\uparrow 1$ and $\omega \to \upomega_+$.
\end{definition}

In the remainder of \S \ref{sec:freqlocest}, we will restrict the consideration to  the ``near-extremal'' setting, i.e.\ we will assume that $a_0\leq a\leq 1$, with $a_0$ suitably close to 1.

\subsubsection{Frequency-localized currents}
\label{eq:freqloccurr}
We wil derive integrated estimates by systematically integrating by parts via the use of \emph{currents}. We define the following currents:
\begin{align*}
j_1^h[u]:=&\:h \Re(u' \overline{u})-\frac{1}{2}h'|u|^2,\\
j_2^y[u]:=&\:y(|u'|^2+(\omega^2-V)|u|^2),\\
j_3^f[u]:=&\:j_1^{f'}[u]+j_2^f[u]=f(|u'|^2+(\omega^2-V)|u|^2)+f' \Re(u' \overline{u})-\frac{1}{2}f''|u|^2,\\
j^{g_{\infty}}_{\infty}[u]:=&\:\frac{1}{2}g_{\infty}|u'-i\omega u|^2-\frac{1}{2}g_{\infty}V|u|^2,\\
j^{g_+}_{+}[u]:=&\:\frac{1}{2}g_+|u'+i\tomega u|^2-\frac{1}{2}g_+\widetilde{V}|u|^2,\\
j^T[u]:=&\:-\omega \Re(i u'\overline{u}),\\
j^K[u]:=&\:-\tomega \Re(i u'\overline{u}).
\end{align*}
By applying \eqref{eq:ode}, we obtain the following expressions for the derivatives of the above currents:
\begin{align*}
\left(j_1^h[u]\right)'=&\:h(|u'|^2+(V-\omega^2)|u|^2)-\frac{1}{2}h''|u|^2+h \Re(u\overline{H}),\\
\left(j_2^y[u]\right)'=&\:y'(|u'|^2+(\omega^2-V)|u|^2)-yV'|u|^2+2y\Re(u' \overline{H}),\\
\left(j_3^f[u]\right)'=&\:2f'|u'|^2-fV'|u|^2-\frac{1}{2}f'''|u|^2+\Re(f'u \overline{H}+2fu' \overline{H}),\\
\left(j^{g_{\infty}}_{\infty}[u]\right)'=&\: \frac{1}{2}g_{\infty}'|u'-i\omega u|^2-\frac{1}{2}(g_{\infty}V)'|u|^2+\Re(g_{\infty} (u'-i\omega u)\overline{H}),\\
\left(j^{g_+}_{+}[u]\right)'=&\:\frac{1}{2}g_{+}'|u'+i\tomega u|^2-\frac{1}{2}(g_{+}\widetilde{V})'|u|^2+\Re(g_{+} (u'+i\tomega u)\overline{H}) ,\\
\left(j^T[u]\right)'=&\:-\omega \Re(i H \overline{u}),\\
\left(j^K[u]\right)'=&\:-\tomega \Re(i H \overline{u}).
\end{align*}

It will be convenient to define an alternative to $j_{+}^{g_+}[u]$ near $r=1$. For this purpose, we introduce the following rescaled variable
\begin{equation*}
v:= e^{i (\widetilde{\omega}+\varpi) r_*}\Delta^{iq} u
\end{equation*}
Note that
\begin{align*}
u'=(e^{-i (\widetilde{\omega}+\varpi) r_*}\Delta^{-i q} v)'=&-i\left[ \widetilde{\omega}+\varpi+q \frac{d \Delta}{dr}(r^2+a^2)^{-1}\right]e^{-i (\widetilde{\omega}+\varpi) r_*}\Delta^{-i q}v\\
&+e^{-i  (\widetilde{\omega}+\varpi) r_*}\Delta^{-i q} v',\\
e^{i (\widetilde{\omega}+\varpi) r_*}\Delta^{iq} u''=&\:v''-2i\left[  (\widetilde{\omega}+\varpi)+q\frac{d \Delta}{dr}(r^2+a^2)^{-1}\right]v'\\
&-\left[  (\widetilde{\omega}+\varpi)^2 +2q( \widetilde{\omega}+\varpi) \frac{d \Delta}{dr}(r^2+a^2)^{-1}+q^2 \left(\frac{d \Delta}{dr}\right)^2(r^2+a^2)^{-2}\right]v\\
&-iq\frac{\Delta}{(r^2+a^2)} \left[ \frac{d^2 \Delta}{dr^2}(r^2+a^2)^{-1}-2r \frac{d \Delta}{dr}(r^2+a^2)^{-2}\right]v.
\end{align*}
Recall from \S \ref{sec:mainpot} that
\begin{equation*}
\begin{split}
V_{q,\varpi}(r)=&\:\widetilde{V}(r)-\widetilde{\omega}^2+\left[  (\widetilde{\omega}+\varpi)^2 +2q (\widetilde{\omega}+\varpi) \frac{d \Delta}{dr}(r^2+a^2)^{-1}+q^2 \left(\frac{d \Delta}{dr}\right)^2(r^2+a^2)^{-2}\right].
\end{split}
\end{equation*}

Then \eqref{eq:ode} is equivalent to the following ODE:
\begin{equation}
\label{eq:ODEnearhor}
\begin{split}
v''&-2i\left[\widetilde{\omega}+\varpi+q\Delta \frac{d \Delta}{dr}(r^2+a^2)^{-1}\right]v'-V_{q,\varpi}v-iq\frac{\Delta}{(r^2+a^2)} \left[ \frac{d^2 \Delta}{dr^2}(r^2+a^2)^{-1}-2r \frac{d \Delta}{dr}(r^2+a^2)^{-2}\right]v\\
=&\: e^{i (\widetilde{\omega}+\varpi) r_*}\Delta^{iq} H.
\end{split}
\end{equation}
We will make the following choices of $q,\varpi$:
\begin{align*}
q=&\:\frac{(3-a^2)m}{4a},\\
\varpi=&\:-\frac{1-a^2}{1+a^2}q=-\frac{(1-a^2)(3-a^2)m}{4a(1+a^2)^2}.
\end{align*}

We now define the following current:
\begin{align*}
j_{++}^{g_{++}}[v]:=&\:\frac{1}{2}g_{++}|v'|^2-\frac{1}{2}g_{++} V_{q,\varpi}|v|^2,\qquad \textnormal{with}\\
\left(j_{++}^{g_{++}}[v]\right)':=&\:\frac{1}{2}g_{++}'|v'|^2-\frac{1}{2}\left(g_{++} V_{q,\varpi}\right)'|v|^2-\frac{\Delta g_{++}}{r^2+a^2}\left[ \frac{d^2 \Delta}{dr^2}(r^2+a^2)^{-1}-2r \frac{d \Delta}{dr}(r^2+a^2)^{-2}\right]\Re(iq v\frac{d \overline{v}}{dr})\\
&+\Re(g_{++} v'e^{-i (\widetilde{\omega}+\varpi) r_*}\Delta^{-iq}\overline{H}).
\end{align*}

We will define the functions $h,y,f,g_{\infty},g_+,g_{++}: \R_{r_*}\to \R$ in \S\S \ref{sec:rweightfreq}--\ref{sec:boundfreqaway}. We will occasionally abuse notation and interpret these functions as functions on $(1,\infty)_r$, implicitly interpreting $r_*$ as a function of $r$.

\subsubsection{$r$-weighted and $(r-1)$-weighted estimates}
\label{sec:rweightfreq}
In this section, we derive estimates with $r$-weights in the region $r\gg 1$ and $(r-1)$-weights in the region $r-1\ll 1$.
\begin{proposition}
\label{prop:rweightedest}
	Let $0<p< 2$. 
	\begin{enumerate}[label=\emph{(\roman*)}]
		\item  There exists a sufficiently large $R_{\infty}>1$ and a constant $C>0$ such that:
	\begin{multline}
	\label{eq:rpest}
		\int_{R_{\infty}}^{\infty} \frac{1}{2}p|u'-i\omega u|^2+\frac{1}{4}(2-p)\Lambda r^{p-3}|u|^2\,dr\leq C\int_{R_{\infty}-1}^{R_{\infty}}\Lambda r^{p-2}|u|^2\,dr-\int_{\R} r^p\chi_{\infty} \Re((u'-i\omega u)\overline{H})\,dr_*.
		\end{multline}
		\item There exists a $R_+>1$ with $R_+-1$ sufficiently small and a constant $b>0$ such that:
		\begin{multline}
	\label{eq:rmin1pestfreq}
		b\int_{1}^{R_+} p\Delta^{-\frac{p}{2}-1}(r-1)|u'+i\tomega u|^2+(2-p)\Lambda \Delta^{-\frac{p}{2}}(2(r-1)+1-a^2)|u|^2\,dr\\
		\leq \int_{1}^{R_{+}}m^2 \Delta^{-\frac{p}{2}}(2(r-1)+1-a^2)|u|^2\,dr+\int_{R_{+}}^{R_{+}+1}\Lambda r^{p-2}|u|^2\,dr\\
		-\int_{1}^{R_++1} \Delta^{-\frac{p}{2}}\chi_{+} \Re\left(\frac{dv}{dr}e^{-i (\widetilde{\omega}+\varpi) r_*}\Delta^{-iq}\overline{H}\right)\,dr.
		\end{multline}
		\item Let $\frac{m^2+|m\tomega|}{\Lambda}\ll 2-p$, then there exists a $R_+>1$ with $R_+-1$ sufficiently small and a constant $b>0$ such that:
		\begin{multline}
	\label{eq:rmin2pest}
		b\int_{1}^{R_+} p\Delta^{-\frac{p}{2}-1}(r-1)|u'+i\tomega u|^2+(2-p)\Lambda \Delta^{-\frac{p}{2}}(2(r-1)+1-a^2)|u|^2\,dr\\
		\leq \int_{R_{+}}^{R_{+}+1}\Lambda r^{p-2}|u|^2\,dr-\int_{1}^{R_++1} \Delta^{-\frac{p}{2}}\chi_+ \Re\left(\frac{dv}{dr}e^{-i (\widetilde{\omega}+\varpi) r_*}\Delta^{-iq}\overline{H}\right)\,dr.
		\end{multline}
	\end{enumerate}
\end{proposition}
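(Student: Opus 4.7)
The proposition bundles three weighted multiplier estimates for the Schr\"odinger problem \eqref{eq:ode}--\eqref{eq:bcinf}, all obtained by integrating a divergence identity over $r_*\in\R$; boundary terms at $\pm\infty$ vanish thanks to \eqref{eq:bchor}--\eqref{eq:bcinf}, the decay \eqref{eq:Vnearinf} of $V$, and (for the rescaled variable $v$) the cancellation at $r=1$ engineered by \eqref{eq:choiceq}--\eqref{eq:choicevarpi}.

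For (i), I would take $\chi_\infty$ a smooth cutoff with $\chi_\infty\equiv 0$ for $r\leq R_\infty-1$ and $\chi_\infty\equiv 1$ for $r\geq R_\infty$, and apply $j_\infty^{g_\infty}[u]$ with $g_\infty=\chi_\infty r^p$. Using $\tfrac{dr}{dr_*}=1+O(r^{-1})$ together with \eqref{eq:Vnearinf}, the leading bulk contributions in $\{r\geq R_\infty\}$ are $\tfrac{p}{2}r^{p-1}|u'-i\omega u|^2$ and $\tfrac{2-p}{2}\Lambda r^{p-3}|u|^2$, both positive for $0<p<2$. Contributions from $\chi_\infty'$ and $\chi_\infty''$, supported in $[R_\infty-1,R_\infty]$, are bounded by $C\Lambda r^{p-2}|u|^2$ there (using admissibility to control $\omega^2 r^p\leq C\Lambda r^{p-2}$). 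Taking $R_\infty$ large absorbs the $O(r^{-1})$ errors and yields \eqref{eq:rpest}, with the source term entering via $-\int_\R r^p\chi_\infty\Re((u'-i\omega u)\overline H)\,dr_*$.

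For (ii), I would pass to $v=e^{i(\tomega+\varpi)r_*}\Delta^{iq}u$ solving \eqref{eq:ODEnearhor}, with $q,\varpi$ as in \eqref{eq:choiceq}--\eqref{eq:choicevarpi}, and apply $j_{++}^{g_{++}}[v]$ with $g_{++}=\chi_+\Delta^{-p/2}$, where $\chi_+\equiv 1$ on $\{r\leq R_+\}$ and $0$ on $\{r\geq R_++1\}$. The boundary term at $r=1$ vanishes because $v'=e^{i(\tomega+\varpi)r_*}\Delta^{iq}[u'+i\tomega u+iZu]$ with $Z(r):=\varpi+q\tfrac{d\Delta/dr}{r^2+a^2}$ satisfying $Z(1)=0$ (by choice of $\varpi$), and because $V_{q,\varpi}=O(\Delta)$ by \eqref{eq:Vtildenearhor}. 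Using $\tfrac{d\Delta}{dr_*}=\tfrac{\Delta(2(r-1)+(1-a^2))}{r^2+a^2}$, the bulk contributions of $(j_{++}^{g_{++}})'$ produce, per $dr$, weights comparable to those on the left of \eqref{eq:rmin1pestfreq}, modulo errors of size $(m^2+|m\tomega|)\Delta^{-p/2}(2(r-1)+(1-a^2))|u|^2$ coming from the $-m\tomega(r-1)^2$ and $(r-1)\Delta\,O(m^2+|m\tomega|)$ remainders in \eqref{eq:Vtildenearhor}. The imaginary cross-term $-\tfrac{\Delta g_{++}}{r^2+a^2}[\cdots]\Re(iq\,v\,\tfrac{d\overline v}{dr})$ in $(j_{++}^{g_{++}})'$ is absorbed via Cauchy--Schwarz into a small fraction of the bulk, producing another error of the same type. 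The identity $|v'|^2\geq(1-\epsilon)|u'+i\tomega u|^2-C\epsilon^{-1}m^2(r-1)^2|u|^2$, combined with $(r-1)^2/(r-a^2)\leq\tfrac12(2(r-1)+(1-a^2))$, then transfers $|v'|^2$ to $|u'+i\tomega u|^2$ with an absorbable error. Cutoff contributions from $\chi_+'$ on $[R_+,R_++1]$, where $\Delta\asymp 1$, yield the $\int_{R_+}^{R_++1}\Lambda r^{p-2}|u|^2\,dr$ boundary term, completing \eqref{eq:rmin1pestfreq}. For (iii), the smallness $(m^2+|m\tomega|)/\Lambda\ll 2-p$ is exactly what is needed to absorb the $m^2$-weighted error term into the $(2-p)\Lambda$-weighted term on the left, yielding \eqref{eq:rmin2pest}.

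The main obstacle is the interplay between the degenerate weight $\Delta^{-p/2}$, the imaginary correction in \eqref{eq:ODEnearhor}, and the transfer $|v'|^2\leftrightarrow|u'+i\tomega u|^2$: the precise choice \eqref{eq:choiceq}--\eqref{eq:choicevarpi} of $(q,\varpi)$—which forces both $V_{q,\varpi}$ and $Z$ to vanish to first order at $r=1$—is exactly what ensures that every error term genuinely fits into the single $(m^2+|m\tomega|)\Delta^{-p/2}(2(r-1)+(1-a^2))|u|^2$ slot on the right of \eqref{eq:rmin1pestfreq}, uniformly as $a\uparrow 1$ where the horizon degeneracy worsens and the singular point at $r=1$ changes character.
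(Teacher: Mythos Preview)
Your approach is essentially identical to the paper's: the same currents $j_\infty^{g_\infty}$ with $g_\infty=r^p\chi_\infty$ for (i) and $j_{++}^{g_{++}}$ with $g_{++}$ a cut-off multiple of $\Delta^{-p/2}$ for (ii), followed by the Young/Cauchy--Schwarz absorption of the imaginary cross term and the conversion $|v'|^2\rightsquigarrow|u'+i\tomega u|^2$. Two small slips: first, in (ii) the weight should be $g_{++}=-\chi_+\Delta^{-p/2}$ (negative sign), since $\Delta^{-p/2}$ is \emph{decreasing} in $r_*$ near the horizon and you want $\tfrac12 g_{++}'|v'|^2\ge 0$ there; with your sign the main bulk term comes out negative. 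Second, in (i) there is no $\chi_\infty''$ term and no separate $\omega^2|u|^2$ term in $(j_\infty^{g_\infty})'$---the current already packages everything as $|u'-i\omega u|^2$---so your appeal to ``admissibility to control $\omega^2 r^p\le C\Lambda r^{p-2}$'' is both unnecessary and false (admissibility gives $\Lambda\ge 2|m\omega|$, not $\omega^2\lesssim\Lambda$; indeed $\omega^2\gg\Lambda$ in $\mathcal G^{\sharp,\omega}_{\rm nsr}$). The only cutoff error in (i) comes from $-\tfrac12(g_\infty V)'$ in $[R_\infty-1,R_\infty]$, which is genuinely $O(\Lambda r^{p-2})|u|^2$ by \eqref{eq:Vnearinf}.
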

\begin{proof}
Consider first the current $j^{g_{\infty}}_{\infty}$ with $g_{\infty}(r)=r^p\chi_{\infty}(r)$, where $\chi$ denotes a smooth cut-off function that satisfies: $\chi_{\infty}(r)=1$ for $r\geq R_{\infty}$ and $\chi_{\infty}(r)=0$ for $r\leq R_{\infty}-1$, where we will take $R_{\infty}>2$ suitably large. Integrating $(j^{g_{\infty}}_{\infty})'$ then gives the identity:
	
	\begin{multline}
	\label{eq:rpestaux}
		\int_{1}^{\infty} \frac{1}{2}\frac{d}{dr}(r^p\chi_{\infty})|u'-i\omega u|^2-\frac{1}{2}\frac{d}{dr}(r^pV\chi_{\infty})|u|^2\,dr\\
		=\frac{1}{2}\lim_{r_*\to \infty}\left[r^p|u'-i\omega u|^2(r_*)-\frac{1}{2}r^pV|u|^2\right]-\int_{\R} r^p\chi_{\infty} \Re((u'-i\omega u)\overline{H})\,dr_*.
		\end{multline}
Since $(u'-i\omega u)(\infty)=0$, we can appeal to the asymptotic expansion of $u'-i\omega u$ at $r=\infty$ to conclude that
\begin{equation*}
	\lim_{r_*\to \infty}r^q|u'-i\omega u|^2(r_*)=0\quad \textnormal{for $q<4$}.
\end{equation*}
Since $\lim_{r\to \infty}r^2V=\Lambda$, the other limit appearing on the right-hand side of \eqref{eq:rpestaux} has a good sign.

We furthermore apply \eqref{eq:Vnearinf} to conclude that:
	\begin{equation*}
	\frac{1}{2}\frac{d}{dr}(r^pV)\geq \frac{1}{4}(2-p)\Lambda r^{p-3}.
		\end{equation*}
		for $R_{\infty}$ sufficiently large. Then \eqref{eq:rpest} follows.
	
	Consider now $j^{g_{++}}_{++}$ with $g_+=-\Delta^{-\frac{p}{2}}\chi_+$, where $\chi_+$ denotes a smooth cut-off function satisfying $\chi_+(r)=1$ for $r\leq R_+$ and $\chi_+(r)=0$ for $r\geq R_{+}+1$, where we will take $R_+-1$ suitably small.
	
	Integrating $(j^{g_{++}}_{++})'$ results in the identity:
	\begin{multline}
	\label{eq:rmin1pestaux}
		\int_{1}^{\infty} \frac{1}{2}\frac{d}{dr}(-\Delta^{-\frac{p}{2}}\chi_+)|v'|^2+\frac{1}{2}\frac{d}{dr}(\Delta^{-\frac{p}{2}}V_{q,\varpi}\chi_+)|v|^2\\
		+\Delta^{-\frac{p}{2}}\chi_+ \left[ \frac{d^2 \Delta}{dr^2}(r^2+a^2)^{-1}-2r \frac{d \Delta}{dr}(r^2+a^2)^{-2}\right]\Re(iq v\frac{d \overline{v}}{dr})\,dr\\
		=\frac{1}{2}\lim_{r_*\to -\infty}\left[\Delta^{-\frac{p}{2}}|v'|^2(r_*)-\frac{1}{2}\Delta^{-\frac{p}{2}}V_{q,\varpi}|v|^2(r_*)\right]+\int_{\R} \Delta^{-\frac{p}{2}}\chi_+\Re(v'e^{-i (\widetilde{\omega}+\varpi) r_*}\Delta^{-iq}\overline{H})\,dr_*,
		\end{multline}
		with
		\begin{align*}
q=&\:\frac{(3-a^2)m}{4a},\\
\varpi=&-\frac{(1-a^2)(3-a^2)m}{4a(1+a^2)^2}.
\end{align*}
By the boundary condition on $u$ at $r=-\infty$ and \eqref{eq:Vtildenearhor}, the limit on the right-hand side of \eqref{eq:rmin1pestaux} is equal to zero for $p<2$. Furthermore, there exists a constant $b>0$ such that
\begin{equation*}
	\frac{1}{2}\frac{d}{dr}(\Delta^{-\frac{p}{2}}V_{q,\varpi})\geq \frac{1}{8}(2-p)\Lambda\left(1-\frac{m^2}{\Lambda}-4\frac{m\tomega}{\Lambda}\right )\frac{d\Delta}{dr} \Delta^{-\frac{p}{2}}.
\end{equation*}
If we assume that $\Lambda>(1+\epsilon)m^2$ for some $\epsilon>0$ and that $R_+-1$ is suitably small, and, in the case $m\tomega >0$, we assume additionally that $\frac{m\tomega}{\Lambda}\ll 1$, we in fact obtain:
\begin{equation}
\label{eq:pothighfreqLambda}
	\frac{1}{2}\frac{d}{dr}(\Delta^{-\frac{p}{2}}V_{q,\varpi})\geq b(2-p)\Lambda\frac{d\Delta}{dr}\Delta^{-\frac{p}{2}}.
\end{equation}

We apply Young's inequality to estimate
\begin{multline*}
	\left|\Delta^{-\frac{p}{2}}\left[ \frac{d^2 \Delta}{dr^2}(r^2+a^2)^{-1}-2r \frac{d \Delta}{dr}(r^2+a^2)^{-2}\right]\Re(iq v\frac{d\overline{v}}{dr})\right|\\
	\leq 2r\Delta^{-\frac{p}{2}}(1-2(r-1)-(1-a^2))\frac{|m|(1-2(1-a)+O(1-a^2))}{2}|v|\left|\frac{dv}{dr}\right|\\
	\leq r\Delta^{-\frac{p}{2}}|m||v|\left|\frac{dv}{dr}\right|\\
	\leq \frac{p}{8}\Delta^{-\frac{p}{2}-1} \frac{d\Delta}{dr}|v'|^2+ \frac{8 m^2r^2\Delta}{p(\frac{d\Delta}{dr})^2}\Delta^{-\frac{p}{2}}\frac{d\Delta}{dr}|v|^2
\end{multline*}
and note that we can bound uniformly in $a\leq 1$:  $\frac{\Delta}{(\frac{d\Delta}{dr})^2}\leq \frac{1}{4}$. We then obtain:
	\begin{multline*}
		b\int_{1}^{R_+} p\Delta^{-\frac{p}{2}-1}(2(r-1)+1-a^2)|v'|^2+(2-p)\Lambda \Delta^{-\frac{p}{2}}(2(r-1)+1-a^2)|v|^2\,dr\\
		\leq \int_{1}^{R_{+}}m^2 \Delta^{-\frac{p}{2}}(2(r-1)+1-a^2)|v|^2\,dr+\int_{R_{+}}^{R_{+}+1}\Lambda r^{p-2}|v|^2\,dr-\int_{1}^{R_++1} \Delta^{-\frac{p}{2}}\chi \Re\left(\frac{dv}{dr}e^{-i (\widetilde{\omega}+\varpi) r_*}\Delta^{-iq}\overline{H}\right)\,dr.
		\end{multline*}

We conclude \eqref{eq:rmin1pestfreq} by noting that $|v|=|u|$ and
\begin{multline*}
	\Delta^{-\frac{p}{2}-1}(r-1)|v'|^2\geq \frac{1}{2}\Delta^{-\frac{p}{2}-1}(r-1)|u'+i\tomega u|^2-C(1-a)^2m^2\Delta^{-\frac{p}{2}-1}(r-1)|u|^2-Cm^2\Delta^{-\frac{p}{2}-1}(r-1)^3|u|^2\\
\geq 	\frac{1}{2}\Delta^{-\frac{p}{2}-1}(r-1)|u'+i\tomega u|^2-Cm^2\Delta^{-\frac{p}{2}}(2(r-1)+(1-a^2))|u|^2
\end{multline*}

Finally, \eqref{eq:rmin2pest} follows from \eqref{eq:rmin1pestfreq} by absorbing the integral over $[1,R_+]$ on the right-hand side, using that $m^2\Lambda^{-1}\ll 2-p$ and applying \eqref{eq:pothighfreqLambda}.

\end{proof}

\subsubsection{Partitioning the frequency ranges}
We will split up the ranges of the frequencies $(\omega, \Lambda, m)$, in order to separate different difficulties that show up when trying to establish integrated energy estimates.

 We consider $a\geq a_0$, with $a_0$ suitably large.  Let $\alpha,\epsilon_{\rm width},\omega_{\rm high},\eta_{\flat}>0$ be constants. We will choose $\eta_{\flat}$ suitably small, $\epsilon_{\rm width}<\alpha^2\ll\eta_{\flat}^2$ and $\omega_{\rm high}^2\gg \epsilon_{\rm width}^{-1}$. Then we define:
\begin{align*}
\mathcal{G}^{\sharp}_{\rm sr}:=&\:\bigg\{(\omega,m,\Lambda)\:\textnormal{admissible}\,\big|\,  \Lambda\geq \left(\frac{a}{1+a}+\alpha\right)^{-2} \omega^2_{\rm high},\: m\omega\in \left(0,m \upomega_++\alpha \sqrt{\Lambda m^2}\right),\:\textnormal{and:} \\ 
 & \Lambda\geq \left(\frac{a}{1+a}+\alpha\right)^{-2} \omega^2_{\rm high} m^2\,\textnormal{or}\, \left|m\widetilde{\omega}\right|\geq \eta_{\flat} m^2 \bigg\},\\
\mathcal{G}^{\sharp, \omega}_{\rm nsr}:=&\:\left\{(\omega,m,\Lambda)\:\textnormal{admissible}\,\big|\,  |\omega|\geq  \omega_{\rm high},\: \Lambda<\epsilon_{\rm width} \omega^2,\: m\omega\notin \left(0,m \upomega_++\alpha  \sqrt{\Lambda m^2}\right)\right\},\\
\mathcal{G}^{\sharp, \Lambda}_{\rm nsr}:=&\:\left\{(\omega,m,\Lambda)\:\textnormal{admissible}\,\big|\, \Lambda\geq  \epsilon^{-1}_{\rm width}\omega_{\rm high}^2,\: \Lambda>\epsilon_{\rm width}^{-1}\omega^2,\: m\omega\notin \left(0,m \upomega_++\alpha  \sqrt{\Lambda m^2}\right)\right\},\\
\mathcal{G}^{\sharp, \sim}_{\rm nsr}:=&\:\bigg\{(\omega,m,\Lambda)\:\textnormal{admissible}\,\big|\, |\omega|\geq \omega_{\rm high},\:\epsilon_{\rm width} \Lambda \leq \omega^2 \leq \epsilon^{-1}_{\rm width}\Lambda, \: m\omega\notin \left(0,m \upomega_++\alpha  \sqrt{\Lambda m^2}\right)\bigg\},\\
\mathcal{G}^{\flat}:=&\:\left\{(\omega,m,\Lambda)\:\textnormal{admissible}\,\big|\,  |\omega|< \omega_{\rm high}|m|,\: \Lambda\leq \epsilon_{\rm width}^{-1}\omega^2_{\rm high}m^2\right\}\setminus (\mathcal{G}^{\sharp}_{\rm sr}\cup \mathcal{G}^{\sharp, \sim}_{\rm nsr}).
\end{align*}
It is straightforward to verify that, given the assumptions on $\alpha,\epsilon_{\rm width},\omega_{\rm high}$, the frequency ranges $\mathcal{G}^{\sharp}_{\rm sr},\mathcal{G}^{\sharp, \Lambda}_{\rm nsr},\mathcal{G}^{\sharp, \sim}_{\rm nsr}$ and $\mathcal{G}^{\flat}$ partition the full set of admissible frequencies; see \cite[Lemma 8.1.1]{part3} proving this in the case of a similar partition.

We will split $\mathcal{G}^{\flat}$ further into two parts:
\begin{align*}
	\mathcal{G}^{\flat}_{<}:=&\:\left\{(\omega,m,\Lambda)\in \mathcal{G}^{\flat}\,\big|\,  |\omega|\leq \omega_{\rm high} |m|,\: \Lambda\leq \epsilon_{\rm width}^{-1}\omega^2_{\rm high}m^2,\: \left|m\widetilde{\omega}\right|<\eta_{\flat}m^2\right\},\\
\mathcal{G}^{\flat}_{>}:=&\:\left\{(\omega,m,\Lambda)\in \mathcal{G}^{\flat}\,\big|\,  |\omega|\leq \omega_{\rm high},\: \Lambda\leq \epsilon_{\rm width}^{-1}\omega^2_{\rm high},\: \left|m\widetilde{\omega}\right|\geq \eta_{\flat} m^2\right\}.
\end{align*}
Observe that $\mathcal{G}^{\flat}_{>}$ is a bounded frequency range and $\mathcal{G}^{\flat}_{<}$ is only a bounded frequency range if $|m|$ is bounded.

We will exclude the case $m=0$, as that has already been treated in \cite{aretakis3}. Note the following points:
\begin{itemize}
\item In \S \ref{eq:Lambdasr}--\S \ref{eq:nsrtrapped}, \textbf{we will not make any further (boundedness) assumptions on $m$}. 
\item The estimates in \S \S \ref{sec:homsoln}--\ref{sec:freqnearsr} will only be useful with the additional assumption of boundedness of $m$. That is to say, \textbf{we will not keep careful track of the dependence $m$-dependence of constants in the estimates derived in those sections}.
\end{itemize}

We will employ the following notation to indicate integration and summation over the bounded frequency ranges at fixed $m$:
\begin{align*}
\int_{\mathcal{B}_<}\sum_{\ell} (\cdot)\,d\tomega=:&\:\int_{\{|\omega| \leq  \omega_{\rm high} |m|\}\cap\{\left|m\widetilde{\omega}\right|<\eta_{\flat}m^2\}}\sum_{\{\ell\in \N_{|m|}\,|\,\Lambda\leq \epsilon_{\rm width}^{-1}\omega^2_{\rm high}m^2 \}} (\cdot)\,d\tomega,\\
\int_{\mathcal{B}_>}\sum_{\ell}(\cdot)\,d\tomega=:&\:\int_{\{|\omega| \leq  \omega_{\rm high} |m|\}\cap\{\left|m\widetilde{\omega}\right|\geq \eta_{\flat}m^2\}}\sum_{\{\ell\in \N_{|m|}\,|\,\Lambda\leq \epsilon_{\rm width}^{-1}\omega^2_{\rm high}m^2\} } (\cdot)\,d\tomega.
\end{align*}

\subsubsection{$\mathcal{G}^{\sharp}_{\rm sr}$: superradiant, $\Lambda$-dominated frequencies}
\label{eq:Lambdasr}
In this section, we deal with the case of superradiant frequencies for which either $\Lambda\gg m^2$ or $m\omega$ is bounded away from the superradiant threshold frequency $m\upomega_+$.
\begin{proposition}
	\label{prop:Lambdadomsr}
	Let $(\omega,m,\Lambda)\in \mathcal{G}^{\sharp}_{\rm sr}$. For $a\leq 1$ with $1-a$ suitably small, $\epsilon>0$ and $E,\omega_{\rm high}$ suitably large, there exists a constant $c>0$ such that:
	\begin{enumerate}[label=\emph{(\roman*)}]
	\item For $\Lambda>\left(\frac{a}{1+a}+\alpha\right)\omega_{\rm high}^2m^2$:
	\begin{multline}
\label{eq:Ldomestwithbt}
	c\int_{R_+}^{R_{\infty}}|u'|^2+(\Lambda+\tomega^2)|u|^2\,dr\\
	\leq \int_{\R}\Re((f'+h)u \overline{H}+2fu' \overline{H})+(r^2-a^2)^{-1}\Delta^{\frac{\epsilon}{2}}\chi_+ \Re\left(-\frac{dv}{dr}e^{-i (\widetilde{\omega}+\varpi) r_*}\Delta^{-iq}\overline{H}\right) \,dr_*\\
	+E\int_{\R}(\chi_K\tomega+\chi_T\omega)\Re(i u \overline{H})\,dr_*,
\end{multline}
where $f,h,\chi_K,\chi_T$ are defined in the proof below and $\chi_+$ is defined in the proof of Proposition \ref{prop:rweightedest}.

	\item For $\Lambda\leq \left(\frac{a}{1+a}+\alpha\right)\omega_{\rm high}^2m^2$ and $|m\tomega|\geq \eta_{\flat}m^2$:
	\begin{equation}
\label{eq:Ldomestwithbt2}
	c\int_{1}^{R_{\infty}}|u'|^2+(\Lambda+\tomega^2) |u|^2\,dr\\
	\leq \int_{\R}\Re((f'+h)u \overline{H}+2fu' \overline{H})+E\int_{\R}(\chi_K\tomega+\chi_T\omega)\Re(i u \overline{H})\,dr_*,
\end{equation}
with the functions $f,h,\chi_K,\chi_T$ defined in the proof below.
\end{enumerate}
\end{proposition}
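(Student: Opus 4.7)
The plan is to build a coercive integrated identity by combining a Morawetz-type current $j^f_3$ adapted to the shape of $V$, a lower-order current $j^h_1$ for $|u|^2$-control, a horizon current (either $j^K$ or $j^{g_{++}}_{++}$), and a multiple of the conserved currents $j^T, j^K$ to absorb boundary terms. The overall scheme is the standard virial/Morawetz strategy, but the new point is that in extremal Kerr the boundary flux at $\mathcal{H}^+$ degenerates to order $\tomega^2$ with $\tomega\to 0$ admissible, so the horizon current must be chosen differently in the two cases (i) and (ii).

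For both cases, pick $f \in C^1(\R_{r_*})$ with $f(r_{\max}) = 0$, $f' \geq 0$ and $f\to \pm f_\infty$ as $r_*\to \pm\infty$. By Lemma~\ref{prop:highLambdapotential}(iii)--(iv), the bulk term $-fV'|u|^2$ in $(j^f_3)'$ is non-negative and, together with $2f'|u'|^2$, controls $|u'|^2 + \Lambda r^{-3}(r-1)(r-r_{\max})^2|u|^2$ away from $\{r=1\}\cup\{r=r_{\max}\}$. Add a bump $h$ supported on a small neighbourhood of $r_{\max}$, with $h' < 0$ and $-h'' > 0$ large enough that $(j^h_1)'$ gives the missing $\Lambda|u|^2$ control at $r_{\max}$ (this uses Lemma~\ref{prop:highLambdapotential}(ii): $V-\omega^2 \geq b\Lambda$ near $r_{\max}$). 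The boundary terms produced at $r_*=\pm\infty$ are $\tfrac{1}{2} f_\infty(|u'|^2+(\omega^2-V)|u|^2)$ at $+\infty$ and its negative at $-\infty$; after applying the boundary conditions \eqref{eq:bchor}--\eqref{eq:bcinf} these reduce to $f_\infty \omega^2|u(\infty)|^2$ and $f_\infty\tomega^2|u(-\infty)|^2$, with the latter having the wrong sign in the superradiant regime.

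For part (ii), $|m\tomega|\geq \eta_\flat m^2$, so $\tomega^2$ is quantitatively comparable to $m^2$ and the bulk coercivity from Lemma~\ref{prop:highLambdapotential}(iv) dominates. Add $E\cdot j^T$ globally (to absorb the $\omega^2|u(\infty)|^2$ term for $E$ large) and $E\chi_K j^K$ on a neighbourhood of the horizon (to absorb the $\tomega^2|u(-\infty)|^2$ term; note that in the superradiant range $m\omega\tomega>0$ is not guaranteed, so $j^T$ alone is insufficient). Using cutoffs $\chi_T, \chi_K$ localising near $r=\infty$ and $r=1$ respectively, the derivatives $(j^T)'$ and $(\chi_K j^K)'$ only produce commutator terms of the form $\chi_T' \omega\Im(u'\bar u)$ and $\chi_K' \tomega\Im(u'\bar u)$, which are controlled by the Morawetz bulk, plus the inhomogeneous pairings against $H$ that appear on the right-hand side of \eqref{eq:Ldomestwithbt2}. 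In part (i), $\Lambda \gtrsim \omega_{\rm high}^2 m^2$ so $\tomega$ may vanish and the preceding argument breaks down at the horizon. Replace the horizon piece of the integrated estimate with the $(r-1)$-weighted estimate \eqref{eq:rmin1pestfreq} at $p=2-\epsilon$: because $\Lambda\gg m^2$, the $m^2$ term on the right-hand side of \eqref{eq:rmin1pestfreq} is absorbable on the left, giving the rescaled-variable estimate on $v$ near $r=1$. The boundary identity at $r_*=-\infty$ then sees a $\Delta^{-p/2}|v'|^2$ term that vanishes thanks to the boundary condition on $u$, and the region $r\geq R_+$ is handled by the Morawetz estimate already constructed. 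The cut-off errors between the two regions are reabsorbed by choosing $R_+$ first and then taking $E$ large.

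The main obstacle is the coupling between the three regions (near-horizon rescaled-variable estimate, Morawetz peak at $r_{\max}$, $r^p$-decay near infinity): one must choose the smooth multipliers $f,h,\chi_T,\chi_K$ and the cutoff $\chi_+$ of Proposition~\ref{prop:rweightedest} in a single consistent way so that the cross terms in the overlap regions are non-negative (or at least absorbable by the positive bulk), and simultaneously control the commutator $\Delta g_{++}\big[\tfrac{d^2\Delta}{dr^2}(r^2+a^2)^{-1}-2r\tfrac{d\Delta}{dr}(r^2+a^2)^{-2}\big]\Re(iqv\overline{v'})$ arising in $(j^{g_{++}}_{++})'$ --- this is precisely where the conditions $\Lambda\geq (\tfrac{a}{1+a}+\alpha)^{-2}\omega_{\rm high}^2 m^2$ in (i) and $|m\tomega|\geq\eta_\flat m^2$ in (ii) are used, via \eqref{eq:pothighfreqLambda} and Lemma~\ref{prop:highLambdapotential}(iv) respectively, to force $V_{q,\varpi}$ monotonicity to beat the commutator. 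Once these choices are fixed, \eqref{eq:Ldomestwithbt} and \eqref{eq:Ldomestwithbt2} follow by integrating the total current from $r_*=-\infty$ to $+\infty$, applying the boundary conditions \eqref{eq:bchor}--\eqref{eq:bcinf}, and collecting the $H$-pairings into the displayed right-hand sides.
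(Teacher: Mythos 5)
Your proposal uses the same currents and the same potential lemmas as the paper, so the overall route is essentially the paper's: a $j_3^f$-current adapted to $r_{\max}$ via Lemma~\ref{prop:highLambdapotential}, a $j_1^h$ bump at $r_{\max}$, the $(r-1)$-weighted (i.e.\ $j_{++}^{g_{++}}$) estimate from Proposition~\ref{prop:rweightedest} near the horizon in case (i), and cut-off $j^T$, $j^K$ currents to absorb the remaining boundary fluxes.

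Two places in your write-up, however, would not close as stated. First, in case (ii) you add $Ej^T$ \emph{globally}. But in the superradiant range $\omega\tomega<0$, so the global $j^T$ identity reads $\omega^2|u(\infty)|^2+\omega\tomega|u(-\infty)|^2=-\int\omega\Re(iH\bar u)\,dr_*$, and the second, negative term leaves $|u(-\infty)|^2$ uncontrolled unless you then substitute the $\chi_K j^K$ expression for it; that works because $|\omega/\tomega|\lesssim\eta_\flat^{-1}$ in case (ii), but the constants then carry an extra factor $\eta_\flat^{-1}$, and the paper avoids this bookkeeping entirely by cutting off $j^T$ on $\{r\gtrsim r_{\max}\}$ so that only $j^T(\infty)$ (and a commutator term absorbed by the Morawetz bulk) appears. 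Second, and more substantively: in case (i) you say the $(r-1)$-weighted estimate ``replaces the horizon piece''. That estimate does give bulk control on $[1,R_+]$ with no boundary flux, since the limit at $r_*=-\infty$ vanishes by \eqref{eq:bchor}; but your $f$ has $f(-\infty)=-f_\infty\neq 0$, so the $j_3^f$ integration still produces the positive flux $(|u'|^2+\tomega^2|u|^2)(-\infty)$ on the right-hand side, and this \emph{still} must be absorbed by $\chi_K j^K$ in case (i), exactly as in case (ii). The omission of $\chi_K j^K$ from your detailed treatment of (i) is the gap, and its necessity is precisely why the term $E\chi_K\tomega\,\Re(iu\bar H)$ appears on the right of \eqref{eq:Ldomestwithbt}. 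A minor further point: the coercive contribution of $j_1^h$ near $r_{\max}$ is the term $h(V-\omega^2)|u|^2\geq bA\Lambda|u|^2$ with $h\geq 0$ a compactly supported bump (using Lemma~\ref{prop:highLambdapotential}(ii) or (iv)); it does not come from choosing $h'<0$ or $-h''$ large as you wrote, and in fact $h'$ changes sign across the bump.
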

\begin{proof}
	The proof consists of \textbf{three steps} which each feature a different type of current.\\
	
	\paragraph{\textbf{Step 1}: a $j^f_3$-current}
	For $\omega_{\rm high}$ suitably large, properties (i) and (iii) of Lemma \ref{prop:highLambdapotential} are satisfied for frequencies $(\omega,m,\Lambda)\in \mathcal{G}^{\sharp}_{\rm sr}$ with $\Lambda>\left(\frac{a}{1+a}+\alpha\right)\omega_{\rm high}^2m^2$, which implies: for $\omega_{\rm high}$ suitably large and $\alpha$ suitably small, there exists constants $b,B>0$ such that
	\begin{align*}
	(r_{\rm max}-r)\frac{dV}{dr}(r)\geq &\: b\Lambda r^{-5}(r-1)(r-r_{\rm max})^2\quad  \textnormal{if $m\tomega\leq 0$},\\
	(r_{\rm max}-r)\frac{dV}{dr}(r)\geq &\: b\Lambda r^{-5}(r-1)(r-r_{\rm max})^2-Bm\tomega r^{-4}(r-r_{\rm max})^2 \quad \textnormal{if $m\tomega>0$}
	\end{align*}
	with $r_{\rm max}$ arbitrarily close to $1+\sqrt{2}$, given $\omega_{\rm high}$ suitably large and $\alpha$ suitably small. We will refer to the condition $\Lambda>\left(\frac{a}{1+a}+\alpha\right)\omega_{\rm high}^2m^2$ as \textbf{Condition A}.
	
	We refer to Figure 1(B) for the shape of the potential when $m\tomega<0$ (as $m\tomega \uparrow 0$, the shape approaches Figure 1(E)) and Figure 1(C) when $m\tomega>0$ (approaching the Figure 1(E) as $m\tomega \downarrow 0$).
	
	If $(\omega,m,\Lambda)\in \mathcal{G}^{\sharp}_{\rm sr}$ with $|m\tomega|\geq \eta_{\flat}m^2$ and $\Lambda \leq \left(\frac{a}{1+a}+\alpha\right)\omega_{\rm high}^2m^2$, we use that $\alpha\ll\eta_{\flat}$ to infer that $m\tomega<0$, so property (iv) of Lemma \ref{prop:highLambdapotential} is satisfied. Hence,
	\begin{equation*}
	(r_{\max}-r)\frac{dV}{dr}(r)\geq  b_{\eta_{\flat}} r^{-4}(r-r_{\max})^2  \Lambda,
	\end{equation*}
	with $1<r_{\max}=r_{\max}(\eta_{\flat})<B$ for some suitably large constant $B>0$. We will refer to the condition  $|m\tomega|\geq \eta_{\flat}m^2$ and $\Lambda \leq \left(\frac{a}{1+a}+\alpha\right)\omega_{\rm high}^2m^2$ as \textbf{Condition B}.
	
	We will construct a function $f$ that is adapted to $r_{\rm max}$, but does not take into account the possible presence of $r_{\rm min}$ in the $m\tomega>0$ case given Condition A.

	Consider Condition A with $m\tomega<0$ or Condition B. We can then construct an appropriate $f:(1,\infty)\to \R$ as follows: let $R_{\infty}>3$ be arbitrarily large and define $c_0=r_0^{-1}\frac{\arctan(r_*(r_1)-r_*(r_{\rm max}))}{\arctan(r_*(r_{\rm max})-r_*(r_0))}$, then $c_0>0$ and we require
	\begin{equation*}
	f(r):=\begin{cases}
	-r^{-1} & 1\leq r\leq 1+\frac{1}{4}(r_{\rm max}-1)=:r_0\\
	\frac{1}{2}r_0^{-1}\frac{\arctan(r_*(r)-r_*(r_{\rm max}))}{\arctan(r_*(r_{\rm max})-r_*(r_0))}& 1+\frac{1}{2}(r_{\rm max}-1)\leq r\leq 1+2(r_{\rm max}-1)=:r_1 \\
	\frac{3}{4}c_0+\frac{c_0}{8}\frac{r-r_2}{R_{\infty}-r_2}&   r_2:=1+3(r_{\rm max}-1) <r\leq R_{\infty} \\
	c_0 & r> R_{\infty}+1
	\end{cases}
	\end{equation*}
	It is possible to construct a smooth $f$ satisfying the above properties with moreover $\frac{df}{dr}>0$ for $r\leq R_{\infty}$ via a standard mollification argument. 
	
	In the case of Condition A with $m\tomega>0$, we require instead
	\begin{equation*}
	f(r):=\begin{cases}
	(r-1) & 1\leq r\leq 1+\frac{1}{2}\delta,\\
	-r^{-1} & 1+\delta\leq r\leq 1+\frac{1}{4}(r_{\rm max}-1)=:r_0\\
	\frac{1}{2}r_0^{-1}\frac{\arctan(r_*(r)-r_*(r_{\rm max}))}{\arctan(r_*(r_{\rm max})-r_*(r_0))}& 1+\frac{1}{2}(r_{\rm max}-1)\leq r\leq 1+2(r_{\rm max}-1)=:r_1 \\
	\frac{3}{4}c_0+\frac{1}{8}c_0\frac{r-r_2}{R_{\infty}-r_2}&   r_2:=1+3(r_{\rm max}-1) <r\leq R_{\infty} \\
	c_0 & r> R_{\infty}+1.
	\end{cases}
	\end{equation*}
	Note that we do not have that $f'>0$ in $1\leq r\leq 1+\delta$. We do have that $|\frac{df}{dr}|\leq \frac{1}{2}B\delta^{-1}$ for some constant $B>0$.
	
	We now integrate $(j^f_3[u])'$ to obtain:
	\begin{multline}
	\label{eq:fcurrentaux}
		\int_{\R} 2f'|u'|^2-fV'|u|^2-\frac{1}{2}f'''|u|^2\,dr_*=c_0(|u'|^2+\omega^2|u|^2)(\infty)+(|u'|^2+\tomega^2|u|^2)(-\infty)\\
		+\int_{\R}\Re(f'u \overline{H}+2fu' \overline{H})\,dr_*.
	\end{multline}
	 Furthermore, there exist constants $c,C>0$ such that under Condition A
		\begin{align*}
	-f\frac{dV}{dr}(r)\geq &\: c\Lambda r^{-5}(r-1)(r-r_{\rm max})^2\quad  \textnormal{if $m\tomega\leq 0$},\\
-f\frac{dV}{dr}(r)\geq &\: c\Lambda r^{-5}(r-1)(r-r_{\rm max})^2-C\frac{m\tomega}{\Lambda} \Lambda r^{-4}(r-r_{\rm max})^2 \quad \textnormal{if $m\tomega>0$}.
	\end{align*}
	We moreover have that $f'''<0$ in $(r_{\rm max}-\delta,r_{\rm max}+\delta)$, for $\delta>0$ suitably small and $f'''$ is dominated by $-fV'$ when $r\notin (r_{\rm max}-\delta,r_{\rm max}+\delta)$, since $\Lambda\gg 1$. Note finally, that in the case $m\tomega>0$, $\frac{m\tomega}{\Lambda}\leq \frac{m\tomega}{\sqrt{m^2\Lambda}}<\alpha$ can be made arbitrarily small by taking $\alpha$ suitably small. Hence, given $R_+$ as in Proposition \ref{prop:rweightedest}, there exists $\alpha$ suitably small such that:
	\begin{align*}
	-f\frac{dV}{dr}(r)\geq &\: c\Lambda r^{-5}(r-1)(r-r_{\rm max})^2 & \textnormal{if $m\tomega>0$ and $r\geq R_+$}\\
	\left|f\frac{dV}{dr}(r)\right|\leq &\:C\alpha \Lambda & \textnormal{if $m\tomega>0$ and $1\leq r\leq R_+$}.
	\end{align*}
	
	Under Condition B, we obtain
	\begin{equation*}
	-f\frac{dV}{dr}(r)\geq  c  \Lambda r^{-4}(r-r_{\rm max})^2.
	\end{equation*}
	As in the case of Condition A,  $f'''<0$ in $(r_{\rm max}-\delta,r_{\rm max}+\delta)$, for $\delta>0$ suitably small and $f'''$ is dominated by $-fV'$ when $r\notin (r_{\rm max}-\delta,r_{\rm max}+\delta)$.\\

\paragraph{\textbf{Step 2}: a $j^h_1$-current}
By (ii) and (iv) of Lemma \ref{prop:highLambdapotential}, for either Condition A or B, there exists a $\delta>0$ and $b>0$ suitably small, such that
\begin{equation*}
	V(r)-\omega^2\geq b\Lambda \quad \textnormal{for all $r\in (r_{\rm max}-\delta,r_{\rm max}+\delta)$.}
\end{equation*}
We now let $h:(1,\infty)\to [0,\infty)$ be a smooth function, such that:
\begin{equation*}
	h(r)=\begin{cases}
		A & (r_{\rm max}-\frac{\delta}{2},r_{\rm max}+\frac{\delta}{2})\\
		0 & r\notin (r_{\rm max}-\delta,r_{\rm max}+\delta),
	\end{cases}
\end{equation*}
with $A>0$ a constant that will be chosen suitably below.
By adding $(j_1^h[u])'$ to the integral of $(j_3^f[u])'$, we obtain:
\begin{multline*}
		\int_{\R} (2f'+h)|u'|^2-fV'|u|^2+h(V-\omega^2)|u|^2-\frac{1}{2}f'''|u|^2-\frac{1}{2}h''|u|^2\,dr_*=c_0(|u'|^2+\omega^2|u|^2)(\infty)\\
		+(|u'|^2+\tomega^2|u|^2)(-\infty)+\int_{\R}\Re((f'+h)u \overline{H}+2fu' \overline{H})\,dr_*.
	\end{multline*}
Let $R_+>1$. By choosing $\omega_{\rm high}$ suitably large compared to $A$ and $\alpha$ suitably small compared to $R_+-1$, the left-hand side above is non-negative negative for $r\in [R_+,\infty)$ and we obtain particular under Condition A:
\begin{multline}
\label{eq:LdomestwithbtA0}
	c\int_{R_+}^{R_{\infty}}|u'|^2+\Lambda|u|^2\,dr+A\int_{r_{\rm max}-\frac{\delta}{2}}^{r_{\rm max}+\frac{\delta}{2}}\frac{r^2+a^2}{\Delta}(|u'|^2+b \Lambda|u|^2)\,dr\\
	\leq (|u'|^2+\omega^2|u|^2)(\infty)+(|u'|^2+\tomega^2|u|^2)(-\infty)\\
		-C(\alpha+\delta) \int_{1}^{1+\delta}\Lambda (r-1) |u|^2+ B\delta^{-1}\int_{1+\frac{\delta}{2}}^{1+\delta} |u'|^2\,dr+\int_{\R}\Re((f'+h)u \overline{H}+2fu' \overline{H})\,dr_*.
\end{multline}
Note moreover that we can further estimate
\begin{equation}
\label{eq:auxnearhorerror}
\begin{split}
B\delta^{-1}\int_{1+\frac{\delta}{2}}^{1+\delta} |u'|^2\,dr\leq &\:  2B\int_{1+\frac{\delta}{2}}^{1+\delta} \delta^{-1}|u'+i\tomega u|^2+\delta^{-1}\tomega^2 |u|^2\,dr\\
\leq &\: 2B\int_{1+\frac{\delta}{2}}^{1+\delta} \delta^{1-\epsilon}(r-1)^{-2+\epsilon}|u'+i\tomega u|^2+\alpha^2\left(\frac{\delta}{2}\right)^{-1-\epsilon}(r-1)^{\epsilon}\Lambda |u|^2 \,dr.
\end{split}
\end{equation}

Under Condition B, we obtain instead
\begin{multline}
\label{eq:LdomestwithbtB}
	c\int_{1}^{R_{\infty}}|u'|^2+\Lambda|u|^2\,dr\leq (|u'|^2+\omega^2|u|^2)(\infty)+(|u'|^2+\tomega^2|u|^2)(-\infty)+\int_{\R}\Re((f'+h)u \overline{H}+2fu' \overline{H})\,dr_*.
\end{multline}

We further estimate the right-hand side of \eqref{eq:LdomestwithbtA0}, using \eqref{eq:auxnearhorerror}, by combining it with \eqref{eq:rmin2pest} with $p=1-\epsilon$ and taking $\alpha\ll \delta$ suitably small to conclude:
\begin{multline}
\label{eq:LdomestwithbtA}
	c\int_{R_+}^{R_{\infty}}|u'|^2+\Lambda|u|^2\,dr+A\int_{r_{\rm max}-\frac{\delta}{2}}^{r_{\rm max}+\frac{\delta}{2}}\frac{r^2+a^2}{\Delta}(|u'|^2+b \Lambda|u|^2)\,dr\\
	\leq (|u'|^2+\omega^2|u|^2)(\infty)+(|u'|^2+\tomega^2|u|^2)(-\infty)\\
		+\int_{\R}\Re((f'+h)u \overline{H}+2fu' \overline{H})-(r^2-a^2)^{-1}\Delta^{\frac{\epsilon}{2}}\chi_+ \Re\left(\frac{dv}{dr}\overline{H}\right) \,dr_*.
\end{multline}
Note finally that we can add $\tomega^2|u|^2$ inside the integral on the left-hand side, since $\tomega^2\leq \alpha^2 \Lambda$.\\

\paragraph{\textbf{Step 3}: $j^K$- and $j^T$-currents}
In the final step of the proof, we estimate the boundary terms at $r=-\infty$ and $r=\infty$ appearing on the right-hand side of \eqref{eq:LdomestwithbtB} and \eqref{eq:LdomestwithbtA}. For $m\tomega>0$ this immediately follows by integrating $(j^K[u])'=0$ and $(j^T[u])'=0$. For $m\tomega<0$ we need to modify the currents.

First, we consider $(E\chi_Kj^K[u])'$, where $E\geq 2c_0$ is a constant and $\chi_K$ denotes a smooth cut-off function such that
\begin{equation*}
	\chi_K(r)=\begin{cases}
		1 & r\in [1,r_{\rm max}-\frac{\delta}{2}]\\
		0 & r\in [r_{\rm max}+\frac{\delta}{2},\infty),
	\end{cases}
	\end{equation*}
	with $\left|\frac{d\chi}{dr}\right|\leq B\delta^{-1}$ for some $B>0$ independent of $\delta$.
	Then we can express:
	\begin{multline*}
		(|u'|^2+\tomega^2|u|^2)(-\infty)=E\int_{\R}(\chi_K j^K[u])'\,dr_*=E\int_{r_{\rm max}-\frac{\delta}{2}}^{r_{\rm max}+\frac{\delta}{2}}\frac{d\chi}{dr} \tomega \Re(iu' \overline{u})\,dr+E\int_{\R}\chi_K\tomega\Re(i u \overline{H})\,dr_*\\
		\leq EB\delta^{-1}\int_{r_{\rm max}-\frac{\delta}{2}}^{r_{\rm max}+\frac{\delta}{2}} |u'|^2+\tomega^2|u|^2\,dr+E\int_{\R}\chi_K\tomega\Re(i u \overline{H})\,dr_*\\
		\leq \frac{A}{4}\int_{r_{\rm max}-\frac{\delta}{2}}^{r_{\rm max}+\frac{\delta}{2}}\frac{r^2+a^2}{\Delta}(|u'|^2+b \Lambda|u|^2)\,dr+2\int_{\R}\chi_K\tomega\Re(i u \overline{H})\,dr_*,
		\end{multline*}
		where the last inequality follows by taking $A$ suitably large compared to $EB\delta^{-1}$, which in turn requires $\omega_{\rm high}$ to be sufficiently large.
		
		Analogously, we consider we consider $(E\chi_Tj^T[u])'$, where $\chi_T$ denotes a smooth cut-off function such that
\begin{equation*}
	\chi_T(r)=\begin{cases}
		0 & r\in [1,r_{\rm max}-\frac{\delta}{2})\\
		1 & r\in [r_{\rm max}+\frac{\delta}{2},\infty),
	\end{cases}
	\end{equation*}
	with $\left|\frac{d\chi}{dr}\right|\leq B\delta^{-1}$ and we repeat the argument above to control $(|u'|^2+\omega^2|u|^2)(\infty)$.
		\end{proof}
	\subsubsection{$\mathcal{G}^{\sharp,\omega}_{\rm nsr}$: non-superradiant, $\omega$-dominated frequencies}
	In this section, we consider non-superradiant frequencies with $\omega^2\gg \Lambda$ and we make use of the favourable sign of $\omega^2-V$ in this frequency regime.
	\begin{proposition}
	\label{prop:omegadom}
	Let $(\omega,m,\Lambda)\in \mathcal{G}^{\sharp,\omega}_{\rm nsr}$. For $a\leq 1$, $E\geq 2$, $\delta>0$, and for $\epsilon_{\rm high}$ suitably small, there exists a constant $c>0$ such that
	\begin{equation}
\label{eq:Ldomestwithbtother}
	c\int_{1}^{R_{\infty}}(1-r^{-1})^{-1+\delta}r^{-1-\delta}|u'|^2+(\omega^2+\Lambda)(1-r^{-1})^{-1+\delta}r^{-3}|u|^2\,dr\\
	\leq -\int_{\R} 2y\Re(u'\overline{H})\,dr_*	+E\int_{\R}\omega \Re(i u \overline{H})\,dr_*,
\end{equation}
with the function $y$ is defined in the proof below.
\end{proposition}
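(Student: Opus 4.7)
My approach is to combine a Morawetz-type current $j_2^y$ with an auxiliary energy current $-E j^T$, exploiting that in the regime $\mathcal{G}^{\sharp,\omega}_{\rm nsr}$ the potential $V$ is uniformly dominated by $\omega^2$. I set
\[
y(r):=\int_1^r (r'-1)^{-1+\delta}(r')^{-2\delta}\,dr',
\]
so that $y$ is smooth on $[1,\infty)$, monotonically increasing from $y(1)=0$ to $y_\infty:=y(\infty)<\infty$, and
\[
\frac{dy}{dr}=(r-1)^{-1+\delta}r^{-2\delta}=(1-r^{-1})^{-1+\delta}r^{-1-\delta},
\]
which is precisely the $|u'|^2$-weight appearing in \eqref{eq:Ldomestwithbtother}. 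I then integrate the current identity for $j_2^y-Ej^T$ over $r_*\in\mathbb{R}$, with $E\geq 2y_\infty$ to be fixed momentarily.

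Using the boundary conditions \eqref{eq:bchor}--\eqref{eq:bcinf} together with the cancellation $y(1)=0$, the total boundary contribution is $(2y_\infty-E)\omega^2|u(\infty)|^2-E\omega\widetilde{\omega}\,|u(-\infty)|^2$. The first term is non-positive by the choice of $E$. For the horizon term, within $\mathcal{G}^{\sharp,\omega}_{\rm nsr}$ one has $m^2\leq\Lambda<\epsilon_{\rm width}\omega^2$, whence $|m\omega\upomega_+|\leq\upomega_+\sqrt{\epsilon_{\rm width}}\,\omega^2\leq\tfrac{1}{2}\sqrt{\epsilon_{\rm width}}\,\omega^2$, so $\omega\widetilde{\omega}=\omega^2-m\omega\upomega_+\geq\tfrac{1}{2}\omega^2>0$ once $\epsilon_{\rm width}\leq 1$. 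Both boundary terms thus have the correct sign to be dropped, yielding
\[
\int_{\mathbb{R}}\tfrac{dy}{dr}\left(|u'|^2+(\omega^2-V)|u|^2\right)dr-\int_{\mathbb{R}} y\,\tfrac{dV}{dr}|u|^2\,dr\leq -2\int_{\mathbb{R}} y\,\Re(u'\overline{H})\,dr_*+E\omega\int_{\mathbb{R}}\Re(iu\overline{H})\,dr_*.
\]

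It remains to bound the left-hand side from below by the integrand of \eqref{eq:Ldomestwithbtother}. Lemma \ref{prop:potentials}, together with the inequalities $|m\omega|+m^2+\Lambda\leq 3\sqrt{\epsilon_{\rm width}}\,\omega^2$ valid in $\mathcal{G}^{\sharp,\omega}_{\rm nsr}$, yields uniform pointwise bounds $|V|,|V'|\lesssim\sqrt{\epsilon_{\rm width}}\,\omega^2$ on $[1,R_\infty]$; hence for $\epsilon_{\rm width}$ small one has $\omega^2-V\geq c\omega^2\geq\tfrac{c}{2}(\omega^2+\Lambda)$. The indefinite-sign term $|y\,\tfrac{dV}{dr}|\leq y_\infty |V'|\lesssim\sqrt{\epsilon_{\rm width}}\,\omega^2$ is absorbed into $\tfrac{1}{2}\tfrac{dy}{dr}(\omega^2-V)$ provided $\sqrt{\epsilon_{\rm width}}$ is taken small relative to $\inf_{r\in[1,R_\infty]}\tfrac{dy}{dr}\gtrsim R_\infty^{-1-\delta}$. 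Combining with the weight comparison $(1-r^{-1})^{-1+\delta}r^{-1-\delta}\geq(1-r^{-1})^{-1+\delta}r^{-3}$ for $r\geq 1$ and the inequality $\omega^2+\Lambda\leq 2\omega^2$ then delivers \eqref{eq:Ldomestwithbtother}. The main obstacle is exactly this absorption step: the bound $|yV'|\lesssim\sqrt{\epsilon_{\rm width}}\,\omega^2$ is only uniform in $r$ and degenerates relative to the weighted main term as $r\to R_\infty$, forcing a quantitative smallness condition on $\epsilon_{\rm width}$ depending on $R_\infty$ and $\delta$, which is the role of the ``$\epsilon_{\rm width}$ suitably small'' hypothesis.
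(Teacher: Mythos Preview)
Your approach is essentially the same as the paper's: combine the Morawetz current $j_2^y$ with the $T$-energy current $j^T$, exploiting that $\omega^2-V\gtrsim\omega^2$ throughout in this regime. The paper chooses a different multiplier, with $y(1)=1$ and $y(\infty)=2$, so the horizon boundary term from $j_2^y$ is kept (with a favourable sign) rather than eliminated; your choice $y(1)=0$ is a harmless variant. Two small points you gloss over: (i) for $r>R_\infty$ your absorption of $-y\,\tfrac{dV}{dr}$ into $\tfrac12\tfrac{dy}{dr}(\omega^2-V)$ breaks down, since $\tfrac{dy}{dr}\sim r^{-1-\delta}\to 0$; there one must instead invoke $\tfrac{dV}{dr}\leq -\Lambda r^{-3}<0$ for $r\geq R_\infty$ (as the paper does) so that $-y\,\tfrac{dV}{dr}\geq 0$ directly; (ii) with your $y$, the boundary comparison at $r=\infty$ requires $E\geq 2y_\infty$, and $y_\infty=\int_0^\infty s^{-1+\delta}(1+s)^{-2\delta}\,ds=B(\delta,\delta)$ blows up as $\delta\downarrow 0$, so a rescaling $y\mapsto y/y_\infty$ (or a $\delta$-dependent choice of $E$) is needed to match the stated hypothesis $E\geq 2$. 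Neither point affects the correctness of the argument.
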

	\begin{proof}
		The proof will proceed in \textbf{two steps}.\\
		
		\paragraph{\textbf{Step 1}: a $j^y_2[u]$-current}
		By \eqref{eq:Vnearinf}, there exists a $R_{\infty}>2$, arbitrarily large and uniform in $a,\Lambda, m,\omega$, such that:
		\begin{equation*}
			\frac{dV}{dr}(r)\leq -\Lambda r^{-3}
		\end{equation*}
		for all $r\geq R_{\infty}-1$. We now define $y:(1,\infty)\to \R_{+}$ as follows:
		\begin{equation*}
			y(r)=\begin{cases}
			1+\frac{1}{2}\frac{(r-1)^{\delta}}{(R_{\infty}-1)^{\delta}} & r\leq R_{\infty}\\
			2-r^{-\delta}& r\geq R_{\infty}+1,
			\end{cases}
		\end{equation*}
		with $\delta>0$. Via a standard mollification argument, we can ensure $y$ is smooth and $\frac{dy}{dr}>0$ for $r\leq R_{\infty}$. Integrating $(j^y_2[u])'$ gives
		\begin{multline}
		\label{eq:omegadomaux}
		\int_1^{\infty} \frac{dy}{dr}(|u'|^2+(\omega^2-V)|u|^2)-y\frac{dV}{dr}|u|^2\,dr+|u'|^2(-\infty)+\tomega^2|u|^2(-\infty)=2|u'|^2(\infty)+2\omega^2|u|^2(\infty)\\
		-\int_{\R} 2y\Re(u'\overline{H})\,dr_*	
		\end{multline}
		For $\omega_{\rm high}$ sufficiently large, depending on $\delta$ and $R_{\infty}$:
		\begin{equation*}
		\frac{dy}{dr}(\omega^2-V)|u|^2\geq  c (r-1)^{-1+\delta}(\omega^2+\Lambda) \quad \textnormal{for $r\in [1,R_{\infty})$}.
		\end{equation*}
		\paragraph{\textbf{Step 2}: the $j^T[u]$-current}
		It remains to estimate the boundary terms on the right-hand side of \eqref{eq:omegadomaux}. Since the frequency range is non-superradiant, we can simply consider the current $EJ^T[u]$, with $E\geq 2$ a constant. Indeed, this gives:
		\begin{equation*}
			(|u'|^2+\omega^2|u|^2)(\infty)+E\omega \tomega |u|^2(-\infty)\leq E\int_{\R}\omega \Re(iu \overline{H})\,dr_*.
		\end{equation*}
		Note that $\omega \tomega=\omega(\omega-m\upomega_+ )>0$, since $\omega^2\gg\Lambda>m^2$, so the left-hand side above is non-negative definite.
	\end{proof}
	
	\subsubsection{$\mathcal{G}^{\sharp,\Lambda}_{\rm nsr}$: non-superradiant, $\Lambda$-dominated frequencies}
	In this section, we consider non-superradiant frequencies with $\Lambda\gg \omega^2$. The strategy for obtaining integrated estimates may be thought of as a simplification of the strategy employed in the proof of Proposition \ref{prop:Lambdadomsr}.
	\begin{proposition}
	\label{prop:Lambdadomnsr}
	Let $(\omega,m,\Lambda)\in \mathcal{G}^{\sharp, \Lambda}_{\rm nsr}$. For $a\leq 1$ with $1-a$ suitably small, $E\geq 2$ and $\omega_{\rm high}$ suitably large, there exists a constant $c>0$ such that
	\begin{equation}
\label{eq:Ldomestwithbtnsr}
	c\int_{R_+}^{R_{\infty}}|u'|^2+(\Lambda+\omega^2)|u|^2\,dr\\
	\leq \int_{\R}\Re((f'+h)u \overline{H}+2fu' \overline{H}) \,dr_*,
\end{equation}
with the functions $f$ and $h$ defined in the proof of Proposition \ref{prop:Lambdadomsr}.
\end{proposition}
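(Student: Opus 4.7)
The plan is to follow a streamlined three-step argument that mirrors the proof of Proposition \ref{prop:Lambdadomsr}, with the simplification that the non-superradiant character of $\mathcal{G}^{\sharp,\Lambda}_{\rm nsr}$ eliminates the need for the cutoff modifications of the $j^T$- and $j^K$-currents employed there.

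As a preliminary step, I will show that in this frequency regime $m\omega \leq 0$, hence $\omega\tomega \geq 0$ and, crucially, $m\omega < m\upomega_+$. Indeed, the constraint $\omega^2 < \epsilon_{\rm width}\Lambda$ together with $\epsilon_{\rm width} < \alpha^2$ and the admissibility bound $\Lambda \geq |m|(|m|+1) \geq m^2$ gives $|\omega| < \alpha\sqrt{\Lambda} \leq \alpha\sqrt{\Lambda m^2}/|m|$, and the non-superradiance exclusion $m\omega \notin (0, m\upomega_+ + \alpha\sqrt{\Lambda m^2})$ then forces $m\omega \leq 0$. Part (iii) of Lemma \ref{prop:basicpropV} implies that $V_0$ has a unique critical point, which is a global maximum $r_{\rm max}$. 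For $\omega_{\rm high}$ sufficiently large and $\epsilon_{\rm width}$ sufficiently small so that the ratios $(|m\omega| + m^2 + 1)/\Lambda$ and $\omega^2/\Lambda$ are as small as desired, Lemma \ref{prop:highLambdapotential}(ii)--(iii) yields the coercivity estimates $(r_{\rm max}-r)V'(r) \geq b\Lambda r^{-5}(r-1)(r-r_{\rm max})^2$ on $[1,\infty)$ and $V(r) - \omega^2 \geq b\Lambda$ on a neighbourhood $(r_{\rm max} - \delta, r_{\rm max} + \delta)$.

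For the bulk, I would integrate $(j_3^f[u])'$ with $f$ constructed identically to the Condition A, $m\tomega \leq 0$ case of Proposition \ref{prop:Lambdadomsr} (namely, $f = -r^{-1}$ near $r=1$, an $\arctan$-based profile vanishing at $r_{\rm max}$, and saturating to a positive constant $c_0 > 0$ for $r \geq R_\infty$), and then add $(j_1^h[u])'$ with $h$ a non-negative bump supported on $(r_{\rm max} - \delta, r_{\rm max} + \delta)$ of sufficiently large height $A$. The coercivity of $-fV'$ produces the bulk integral $c\int_{R_+}^{R_\infty}\Lambda|u|^2\,dr$, the $f'$ term produces the companion $|u'|^2$-control, and $h(V-\omega^2) \geq bA\Lambda$ absorbs both the $f'''$ and $h''$ errors near $r_{\rm max}$; inclusion of $\omega^2|u|^2$ on the left-hand side is harmless since $\omega^2 < \epsilon_{\rm width}\Lambda$ is controlled by the $\Lambda|u|^2$ estimate.

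The boundary fluxes $(|u'|^2 + \omega^2|u|^2)(\infty)$ and $(|u'|^2 + \tomega^2|u|^2)(-\infty)$ produced on the right-hand side of Steps 1--2 will be absorbed by $E$ copies of the conserved current $j^T[u]$, for $E$ sufficiently large. Since $\omega\tomega \geq 0$ here, integrating $(Ej^T[u])'$ yields $E(|u'|^2 + \omega^2|u|^2)(\infty) + E\omega\tomega|u|^2(-\infty) = E\int_\R \omega\Re(iu\overline{H})\,dr_*$ \emph{without any cutoff-derivative error terms}, and combined with the horizon boundary condition $|u'|(-\infty) = |\tomega||u|(-\infty)$ this controls both boundary fluxes in terms of the single inhomogeneity integral. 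This yields \eqref{eq:Ldomestwithbtnsr}. The principal technical subtlety I anticipate is the sub-regime where $m^2/\Lambda$ is not small, in which Lemma \ref{prop:highLambdapotential}(iii) does not apply directly; there I would invoke Lemma \ref{prop:highLambdapotential}(iv) (valid precisely because $m\omega \leq 0 < m\upomega_+$) to obtain $m^2$-coercivity globally, using the $h$-current to supplement with local $\Lambda$-coercivity near $r_{\rm max}$ and the admissibility bound $\Lambda \geq m^2$ to consolidate into the desired $\Lambda$-weighted estimate.
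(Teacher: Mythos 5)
Your overall strategy mirrors the paper's: rule out $m\omega>0$ via the definition of $\mathcal{G}^{\sharp,\Lambda}_{\rm nsr}$, invoke the potential coercivity from Lemma \ref{prop:highLambdapotential}, run the $f$- and $h$-current argument from Proposition \ref{prop:Lambdadomsr}, and estimate the boundary fluxes with conserved currents without cutoffs. Your caveat about needing Lemma \ref{prop:highLambdapotential}(iv) as a fallback when $m^2/\Lambda$ is not small is well taken; the paper points at (iv) via Condition B without spelling this out.

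The genuine gap is in the final step: the current $j^T$ alone does not control the horizon boundary flux. Integrating $(j^T[u])'$ and using the boundary conditions yields
\begin{equation*}
\omega^2|u|^2(\infty) + \omega\tomega\,|u|^2(-\infty) = \int_\R \omega\,\Re(iu\overline{H})\,dr_*.
\end{equation*}
The horizon boundary flux on the right-hand side of Steps 1--2 is $(|u'|^2 + \tomega^2|u|^2)(-\infty) = 2\tomega^2|u|^2(-\infty)$, which carries the weight $\tomega^2$, whereas the $j^T$-identity supplies it only with weight $\omega\tomega$. Since $m\omega\leq 0$ forces $\omega$ and $m\upomega_+$ to have opposite signs, one has $|\tomega| = |\omega| + |m|\upomega_+ > |\omega|$, so $\omega\tomega < \tomega^2$; in fact $|\omega/\tomega|\to 0$ as $\omega\to 0$, which is permitted in $\mathcal{G}^{\sharp,\Lambda}_{\rm nsr}$. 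Hence $\omega\tomega\,|u|^2(-\infty)$ does not control $\tomega^2|u|^2(-\infty)$, and multiplying by $E$ cannot help since it rescales both sides equally. The fix, which the paper invokes implicitly through its reference to Step 3 of Proposition \ref{prop:Lambdadomsr} (Condition B), is to also integrate $(j^K[u])'$: this gives $\omega\tomega|u|^2(\infty) + \tomega^2|u|^2(-\infty) = \int_\R \tomega\,\Re(iu\overline{H})\,dr_*$, now with the horizon term at the correct weight $\tomega^2$, and non-negativity of $\omega\tomega$ lets you drop the first term. You need both $j^T$ (for the $\infty$-flux) and $j^K$ (for the $-\infty$-flux), and the right-hand side of the final estimate should then also contain the inhomogeneity contribution $E\int_\R(\omega+\tomega)\Re(iu\overline{H})\,dr_*$, in analogy with \eqref{eq:Ldomestwithbt2}.
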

\begin{proof}
Suppose $m\omega\geq 0$. Then $m\omega\geq m\upomega_++\alpha \sqrt{\Lambda m^2}$, which implies that $\omega^2>\alpha^2\Lambda$. But we also have that $\omega^2 < \epsilon_{\rm width}\Lambda$ for $(\omega,m,\Lambda)\in \mathcal{G}^{\sharp, \Lambda}_{\rm nsr}$.  For $0<\epsilon_{\rm width}<\alpha^2$ suitably small, this is a contradiction. It must therefore hold that $m\omega<0$.

Hence, as in the proof of Proposition \ref{prop:Lambdadomsr} (Condition B), we can apply (iv) of Lemma \ref{prop:highLambdapotential} to infer that there exist $\delta>0$ and $b>0$ suitably small, such that
\begin{align*}
(r_{\max}-r)\frac{dV}{dr}(r)\geq &\:  b r^{-4}(r-r_{\max})^2 m^2\quad \textnormal{for $r\in [1,\infty)$},\\
V(r)-\omega^2\geq &\: b \Lambda\quad \textnormal{for $r\in (r_{\rm max}-\delta,r_{\rm max}+\delta)$}.
\end{align*}
We can therefore repeat the steps in the proof of Proposition \ref{prop:Lambdadomsr} (Condition B), with the additional simplification that the boundary terms at $r_*=\pm \infty$ can immediately be estimated, without the need for cut-off functions, since $m\omega<0$ guarantees that we are outside of the superradiant frequency regime.
\end{proof}
	
	\subsubsection{$\mathcal{G}^{\sharp, \sim}_{\rm nsr}$: non-superradiant, trapped frequencies}
\label{eq:nsrtrapped}
By construction, the frequencies in $\mathcal{G}^{\sharp, \sim}_{\rm nsr}$ are non-superradiant. This frequency range is affected by the existence of trapped null geodesics (away from the event horizon). The corresponding integrated estimates from \cite[Proposition 8.6.1]{part3} in the case $a<1$ hold without modification when $a\leq 1$.
\begin{proposition}
	\label{prop:trappedfreq}
	Let $(\omega,m,\Lambda)\in \mathcal{G}^{\sharp, \sim}$. For $a\leq 1$, $E\geq 2$ and $\omega_{\rm high}$ suitably large, there exist constants $b,c>0$ and $r_1\in (1+b,\infty]$ such that:
		\begin{enumerate}[label=\emph{(\roman*)}]
		\item
 For $r\in [1,r_1)$
	\begin{equation*}
	\omega^2-V(r)\geq b \Lambda.
	\end{equation*}
	Furthermore, if $r_1<\infty$, $V$ has a unique nondegenerate maximum $r_{\max}\in [r_1,\infty)$. Denote with $R_{\rm dec}>1$ a constant such that $\frac{dV}{dr}<0$ for $r\geq R_{\rm dec}$.
	\item
	\begin{multline}
\label{eq:Ldomestwithbtmore}
	c\int_{1}^{R_{\infty}}|u'|^2+(\Lambda+\omega^2)(r-1)(r-r_{\rm trap})^2|u|^2+|u|^2\,dr\\
	\leq \int_{\R}-\Re(f_{\sim}'u \overline{H}-2f_{\sim}u' \overline{H})+ 2\hat{y}\Re(u'\overline{H})\,dr_*\\
	+E\int_{\R}\omega\Re(i u \overline{H})\,dr_*,
\end{multline}
where $f_{\sim},\hat{y}: (1,\infty)\to \R$ are smooth functions that satisfy: for $r_1<R_{\rm dec}<\infty$, $r_{\rm trap}=r_{\rm max}$ and
\begin{align*}
f_{\sim}(1)=&\:0,\\
\frac{df_{\sim}}{dr}(r)\geq &\: 0\quad \textnormal{for $r\in [1,\infty)$},\\
-f_{\sim}V'-\frac{1}{2}f_{\sim}'''> &\: b \Lambda \Delta r^{-7}(r-r_{\rm max})^2\quad \textnormal{for $r\in [r_1,\infty)$},\\
f_{\sim}(r)=&\:1\quad \textnormal{for $r\in [R_{\rm dec},\infty)$},\\
|f_{\sim}|+\left|\frac{df_{\sim}}{dr}\right|\leq&\: B,\\
\hat{y}(r)=&\:1-e^{C(r_1-r)} \quad \textnormal{for $r\in [1,r_1)$},\\
\hat{y}(r)=&\:0\quad \textnormal{for $r\in [r_1,\infty)$}
\end{align*}
for appropriate positive constants $\delta,B,C>0$ with $E\gg C$ and $\delta>0$, $R_+-1$ suitably small.

For $R_{\rm dec}\leq r_1\leq \infty$, $r_{\rm trap }=0$ and 
\begin{align*}
f_{\sim}(r)=&\:0\quad \textnormal{for $r\in [1,R_{\rm dec}]$},\\
\frac{df_{\sim}}{dr}(r)\geq &\: 0\quad \textnormal{for $r\in [r_1,R_{\infty}]$},\\
f_{\sim}(r)=&\:1\quad \textnormal{for $r\in [R_{\rm dec}+1,\infty)$},\\
|f_{\sim}|+\left|f'_{\sim}\right|+\left|f'''_{\sim}\right|\leq&\: B,\\
\hat{y}(r)=&\:1-e^{\tilde{C}(R_{\rm dec}+2-r)}\quad \textnormal{for $r\in [1,R_{\rm dec}+2)$},\\
\hat{y}(r)=&\:0\quad \textnormal{for $r\in [R_{\rm dec}+2,\infty)$},
\end{align*}
for appropriate positive constants $B,\tilde{C}>0$ with $E\gg \tilde{C}$.
\item If $\Lambda > \omega_{\rm high} m^2$, then
\begin{equation*}
|r_{\rm max}-(1+\sqrt{2})|\leq C(\omega_{\rm high}^{-1}+(1-a^2)).
\end{equation*}
\end{enumerate}
\end{proposition}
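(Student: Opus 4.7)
The proof proceeds in three steps corresponding to the three parts of the proposition. For part (i), the plan is to exploit the non-superradiance condition together with the frequency-regime bound $\epsilon_{\rm width}\Lambda \leq \omega^2 \leq \epsilon_{\rm width}^{-1}\Lambda$. Since $(\omega, m, \Lambda) \in \mathcal{G}^{\sharp,\sim}_{\rm nsr}$ excludes the superradiant range $m\omega \in (0, m\upomega_+ + \alpha\sqrt{\Lambda m^2})$, either $m\omega \leq 0$ or $m\omega \geq m\upomega_+ + \alpha\sqrt{\Lambda m^2}$. Using the expansion \eqref{eq:V0nearhor} together with $\omega^2 \geq \epsilon_{\rm width}\Lambda$ and taking $\omega_{\rm high}$ suitably large, one verifies that $\omega^2 - V(r) \geq b\Lambda$ holds on $[1, 1+b]$, and then extends this to the largest interval $[1, r_1)$ on which the inequality holds. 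The existence of a unique non-degenerate maximum $r_{\max} \in [r_1, \infty)$ when $r_1 < \infty$ follows from Lemma \ref{prop:basicpropV}(i) combined with the $V_1$ contribution from Lemma \ref{prop:potentials}: in the non-superradiant regime $V$ has at most two critical points, and the sign conditions rule out an interior minimum in this range.

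For part (ii), the plan is to combine three currents: a $j_3^{f_\sim}$-current for trapping, a $j_2^{\hat{y}}$-current to capture the near-horizon regime where $\omega^2 - V \geq b\Lambda$, and an $Ej^T$-current to absorb boundary terms at $r_* = \pm\infty$. In the case $r_1 < R_{\rm dec} < \infty$, I construct $f_\sim$ as a smooth, monotonically non-decreasing function vanishing at $r=1$ and equal to $1$ for $r \geq R_{\rm dec}$, with the key property that on $[r_1, \infty)$ the quantity $-f_\sim V' - \tfrac{1}{2}f_\sim'''$ is bounded below by $b\Lambda\Delta r^{-7}(r - r_{\max})^2$. This is arranged by choosing $f_\sim$ to vanish to appropriate order at $r = r_{\max}$ (so that $(r - r_{\max})f_\sim > 0$ away from the critical point) and by exploiting the nondegenerate character of $r_{\max}$ from Lemma \ref{prop:highLambdapotential}(iii). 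In the alternative case $r_1 \geq R_{\rm dec}$, the entire interval of interest lies in the region where $\omega^2 - V \geq b\Lambda$, so the $j_2^{\hat{y}}$-current provides control and $f_\sim$ only needs to handle $r \geq R_{\rm dec}$. The exponential form $\hat{y} = 1 - e^{C(r_1-r)}$ in the near-horizon region is chosen so that $\hat{y}' > 0$ with $\hat{y}'/\hat{y}$ bounded, allowing the integrated $j_2^{\hat{y}}$-current to dominate the indefinite $j^T$-error terms provided $E \gg C$.

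Adding the three current identities $(j_3^{f_\sim}[u])' + (j_2^{\hat{y}}[u])' + E(j^T[u])'$ and integrating over $\R_{r_*}$, the LHS collects $2f_\sim'|u'|^2$, $(-f_\sim V' - \tfrac12 f_\sim''')|u|^2$, $\hat{y}'(|u'|^2 + (\omega^2 - V)|u|^2)$ and $-\hat{y}V'|u|^2$, while the boundary terms at $\pm\infty$ are controlled by the $Ej^T$ contribution exactly as in the non-superradiant step of Proposition \ref{prop:omegadom} (using that $\omega \tomega \geq 0$ outside the superradiant range). Collecting positive contributions yields the lower bound in \eqref{eq:Ldomestwithbtmore}; in particular the coercive $(r-1)(r - r_{\max})^2$ weight arises from combining the near-horizon contribution of $\hat{y}'(\omega^2 - V)$ (which gives the $(r-1)$ factor through $\Delta$) with the trapping contribution from $-f_\sim V'$. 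Part (iii) then follows directly from Lemma \ref{prop:highLambdapotential}(i), since $\Lambda > \omega_{\rm high}m^2$ forces $(|m\omega| + m^2)/\Lambda$ to be small.

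The main obstacle will be the construction of $f_\sim$ producing the precise coercive weight $\Lambda \Delta r^{-7}(r - r_{\max})^2$ uniformly in $a \leq 1$. Unlike the sub-extremal regime, the degeneration $\Delta \sim (r-1)^2$ at $a=1$ means the near-horizon weight is weaker; however, since the trapping estimate is entirely supported away from $\mathcal{H}^+$ (in $[r_1, \infty)$ with $r_1 > 1+b$), the $\Delta$-degeneracy is harmless there, and the near-horizon region is handled purely by the $\hat{y}$-current using the pointwise coercivity from part (i). Thus the extremal limit is uniform and the construction from \cite[Proposition 8.6.1]{part3} applies without substantive modification.
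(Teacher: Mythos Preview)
Your proposal is correct and follows essentially the same approach as the paper. The paper's own proof simply cites \cite[Lemma 8.6.1, Proposition 8.6.1]{part3} for parts (i) and (ii) and invokes Lemma~\ref{prop:highLambdapotential}(i) for part (iii); your sketch reconstructs the content of that citation, including the crucial observation you make at the end that trapping is localized in $[r_1,\infty)$ with $r_1>1+b$, so the degeneration of $\Delta$ at $a=1$ is irrelevant and the sub-extremal construction carries over unchanged.
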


	\begin{proof}
	Part (i) and (iii) follows from \cite[Lemma 8.6.1]{part3} and \cite[Proposition 8.6.1]{part3}. Part (iii) follows from (i) of Lemma \ref{prop:highLambdapotential}.
	\end{proof}

	\subsubsection{Homogeneous solutions}
	\label{sec:homsoln}
	In this section, we consider solutions to the homogeneous ODE:
	\begin{equation}
		\label{eq:homode}
		u''+(\tomega^2-\widetilde{V})u=0,
	\end{equation}
	with $\omega\in \R \setminus \{0,m\upomega_+\}$. We will derive weighted $L^{\infty}$ and $L^2$ estimates for solutions to \eqref{eq:homode}. 
	
	These constitute a key part towards deriving integrated estimates in the bounded frequency range $\mathcal{G}^{\flat}$ and they form the greatest departure from the frequency-space analysis in the $|a|<M$ setting in \cite{part3}.
	
	 \textbf{In this section, we will not keep track of the $m$-dependence in the uniform constants appearing in our integrated and pointwise estimates.} Because of this, our final integrated estimates in frequency space will lead to integrated energy estimates in physical space for azimuthal modes with bounded $m$, i.e.\ the estimates we derive here are not sufficient to be able to consider the sum over all $m$.
	
	We will denote with $u_{\rm hor}$ and $u_{\rm inf}$ solutions to \eqref{eq:homode} that satisfy the following boundary conditions:
	\begin{align}
		\label{eq:hombc1}
		\lim_{r_*\to -\infty} (u_{\rm hor}'+i\tomega u_{\rm hor})(r_*)=&\:0,\\
		\label{eq:hombc2}
		\lim_{r_*\to +\infty} (u_{\rm inf}'-i\omega u_{\rm inf})(r_*)=&\:0,\\
		\label{eq:hombc3}
		|u_{\rm hor}|(-\infty)=|u_{\rm inf}|(+\infty)=1.
	\end{align}
	Note in particular that the norms of $u_{\rm hor}$ and $u_{\rm inf}$ are fixed at $r_*=-\infty$ and $r_*=+\infty$, respectively.
	
		The main aim of Lemma \ref{lm:prelimhomest} below is to derive $L^{\infty}$-estimates for $u_{\rm hor}$ and $u_{\rm inf}$ sufficiently close to the event horizon at $r=1$. Since the irregular singular point $r=1$ for $\tomega\neq 0$ becomes a regular singular point in the limit $\tomega\to 0$, these $L^{\infty}$-estimates do not follow immediately from the expansions obtained from standard ODE theory. 
		
		The main strategy for resolving this difficulty is to rescale $r-1$ in such a way that the singular point stays irregular in the limit $\tomega\to 0$ in terms of the rescaled variable. As a result, the expansions obtained from standard ODE theory remain valid in the limit. This comes at the expense of the region of validity of the expansions \emph{shrinking to zero} as $\tomega\to 0$. More precisely, the $L^{\infty}$ estimates near $r=1$ obtained in this way will only be valid in the interval $(1,1+z_0|m|^{-1}|\tomega|)$, where $z_0$ can be taken arbitrarily large.
		
		Although we only require the estimates of Lemma \ref{lm:prelimhomest} for $a=1$, we will actually derive them uniformly for $a\leq 1$ with $1-a\geq 0$ sufficiently small. 
	
	\begin{lemma}
		\label{lm:prelimhomest}
Assume $|\omega|\geq \omega_0>0$ for some $\omega_0>0$. Let $z_0>0$ be arbitrarily large and assume that there exists a $B>0$ such that $\Lambda+|\omega|^2\leq Bm^2$. Then, there exists a numerical constant $B_{\rm hom}=B_{\rm hom}(B,z_0,m,\omega_0)>0$, such that for $1-a$ suitably small, solutions $u$ to \eqref{eq:homode} satisfy the bounds:
\begin{align}
\label{eq:hombound1}
||u_{\rm hor}||_{L^{\infty}_r(1,1+z_0|m|^{-1}|\tomega|)}+|\tomega|^{-1}||u'_{\rm hor}||_{L^{\infty}_r(1,1+z_0|m|^{-1}|\tomega|)}\leq &\: B_{\rm hom},\\
\label{eq:hombound2}
||u_{\rm inf}||_{L^{\infty}_r(1,1+z_0|m|^{-1}|\tomega|)}+|\tomega|^{-1}||u'_{\rm inf}||_{L^{\infty}_r(1,1+z_0|m|^{-1}|\tomega|)}\leq &\: \frac{B_{\rm hom}(|W|+\sqrt{|\omega|}\sqrt{|\tomega|})}{|\tomega|},\\
\label{eq:hombound3}
||u_{\rm inf}||_{L^{\infty}_{r}(1+z_0^{-1},\infty)}+||u_{\rm inf}'||_{L^{\infty}_{r}(1+z_0^{-1},\infty)}\leq &\: B_{\rm hom},\\
\label{eq:hombound4}
||u_{\rm hor}||_{L^{\infty}_r(1+z_0^{-1},\infty)}+|\omega|^{-1}||u'_{\rm hor}||_{L^{\infty}_r(1+z_0^{-1},\infty)}\leq &\: \frac{B_{\rm hom}(|W|+\sqrt{|\tomega|}\sqrt{|\omega|})}{|\omega|}.
\end{align}
	\end{lemma}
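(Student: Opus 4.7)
The plan is to rescale the ODE near the event horizon via $x := \tomega\, r_*/m$, which preserves the irregular character of the singular point at $r_* = -\infty$ uniformly as $\tomega \to 0$. Using the near-horizon relation $r-1 \sim -2/r_*$ for $a=1$ (and continuity in $1-a$) together with \eqref{eq:V0nearhor}--\eqref{eq:V1nearhor}, a direct computation recasts \eqref{eq:homode} as
\begin{equation*}
\frac{d^2u}{dx^2} + \left(m^2 - \frac{2m^2}{x} - \frac{\Lambda - 2m^2}{x^2} + R(x;\tomega,m,\Lambda,a)\right) u = 0,
\end{equation*}
where the error satisfies $|R(x;\cdot)| \leq C_{m,B}(|\tomega||x|^{-3} + (1-a^2)|x|^{-2})$ for $|x| \geq 2/z_0$. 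The leading-order equation is a fixed (in $\tomega$) Coulomb-type equation with Sommerfeld parameter $m$ and angular-momentum parameter $L$ determined by $L(L+1) = \Lambda - 2m^2$, whose fundamental solutions are standard Coulomb wave functions with log-modified oscillatory asymptotics $\exp(\pm i(mx + m\log|x|))$ as $|x|\to\infty$. Crucially, the region $r \in (1, 1+z_0|m|^{-1}|\tomega|)$ corresponds in $x$-coordinates to $|x| \geq 2/z_0$, bounded away from the remaining singular point $x=0$ uniformly in $\tomega$.

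The bounds \eqref{eq:hombound1} and \eqref{eq:hombound3} will follow by direct ODE theory. For \eqref{eq:hombound1}, represent $u_{\rm hor}$ in the $x$-variable via variation of parameters against the Coulomb wave basis, starting from the boundary condition \eqref{eq:hombc1}, which upon rescaling selects the Coulomb wave function whose phase matches $-\tomega r_*$ at $r_* \to -\infty$; a Gronwall argument against the integrable tail of $R$ gives uniform $L^{\infty}$ bounds on $u_{\rm hor}$ and on $du_{\rm hor}/dx = (m/\tomega)u_{\rm hor}'$, yielding \eqref{eq:hombound1}. For \eqref{eq:hombound3}, in the region $r \geq 1+z_0^{-1}$ the potential $\widetilde{V}$ in \eqref{eq:homode} is uniformly bounded (thanks to $\Lambda + \omega^2 \leq Bm^2$) and $r = \infty$ is an irregular singular point of the standard scattering type with $\widetilde{V} = O(\Lambda r^{-2})$; a Picard iteration from \eqref{eq:hombc2} against the $e^{\pm i\omega r_*}$ basis, using $|\omega| \geq \omega_0$ and the $r^{-2}$ integrability of the remainder, yields the uniform bound.

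The remaining bounds \eqref{eq:hombound2} and \eqref{eq:hombound4} combine the previous steps with a Wronskian analysis. Write $u_{\rm inf} \sim A e^{-i\tomega r_*} + B e^{i\tomega r_*}$ as $r_* \to -\infty$. Constancy of $W(u_{\rm inf}, \overline{u_{\rm inf}})$, evaluated at $+\infty$ via \eqref{eq:hombc2}--\eqref{eq:hombc3} and at $-\infty$ via the above decomposition, gives $|A|^2 - |B|^2 = -\omega/\tomega$; constancy of $W = W(u_{\rm hor}, u_{\rm inf})$, evaluated at $-\infty$ via \eqref{eq:hombc1}--\eqref{eq:hombc3}, gives $B = W/(2i\tomega)$; hence $|A|, |B| \leq C(|W|/|\tomega| + \sqrt{|\omega|/|\tomega|})$. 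Replacing $e^{\pm i\tomega r_*}$ by true solutions with these prescribed asymptotics (controlled uniformly by the argument for \eqref{eq:hombound1} with reversed-sign boundary condition) then yields \eqref{eq:hombound2}; \eqref{eq:hombound4} is symmetric, using the basis at $+\infty$ and the Wronskian computed there. The principal obstacle is the uniform-in-$\tomega$ control of the variation-of-parameters integrals on the non-compact set $|x| \geq 2/z_0$: the Coulomb term $-2m^2/x$ is not integrable at $|x| = \infty$, which precludes a naive plane-wave Picard iteration. Using Coulomb wave functions as the unperturbed basis absorbs the log phase shift, leaving only the integrable error $R$ and producing a Gronwall constant depending only on $m, B, z_0, \omega_0$, as required.
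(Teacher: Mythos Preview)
Your approach is the paper's: rescale to $x=|m|^{-1}|\tomega|r_*$, control $u_{\rm hor}$ via asymptotic ODE theory at the irregular singular point $x=-\infty$, expand $u_{\rm inf}=A_+u_{\rm hor}+A_-\overline{u_{\rm hor}}$ and determine $|A_\pm|$ from $W=2i\tomega A_-$ together with $|A_+|^2-|A_-|^2=-\omega/\tomega$ (the paper phrases the latter as the $j^T$-current identity), with the symmetric argument at $r_*=+\infty$. The only difference is that the paper cites Olver's Liouville--Green theorem---whose WKB phase $\int\sqrt{m^2(1-\tomega^{-2}\widetilde{V})}\,dx$ already absorbs the log correction from the Coulomb tail $-2m^2/x$---whereas you use Coulomb wave functions directly; your treatment is more explicit about the non-integrability issue you correctly flag, but both routes close since only modulus bounds are needed here.
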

	\begin{proof}
		We can express
\begin{equation*}
r_*=(1+a^2)\left[-\frac{a^2}{1-a^2}\log\left(1+\frac{1-a^2}{r-1}\right)+\log(r-1)\right]+r-1,
\end{equation*}
so that
\begin{equation*}
	e^{\frac{-(1-a^2)}{a^2(1+a^2)}r_*}=\left(1+\frac{1-a^2}{r-1}\right)e^{-a^{-2}(1-a^2)(\log(r-1)+(r-1))}.
	\end{equation*}
By expanding the exponentials, we obtain
\begin{multline}
\label{eq:rmin1rstar}
	(r-1)^{-1}\left[1+\frac{r-1}{1-a^2}(e^{-a^{-2}(1-a^2)(\log(r-1)+(r-1))}-1)\right]\\
	=\left[-a^{-2}(1+a^{2})^{-1}r_*\right]\left(1+\sum_{k=1}^{\infty}\frac{(-a^{-2}(1+a^{2})^{-1}(1-a^2)r_*)^k}{(k+1)!}\right).
\end{multline}
By considering separately the cases $(r-1)\leq \epsilon (1-a^2)$ and $(r-1)> \epsilon(1-a^2)$ with $\epsilon>0$ arbitrarily small, and taking $\delta$ suitably small compared to $\epsilon$, we can estimate for $r-1<\delta$:
\begin{equation*}
\frac{r-1}{1-a^2}(1-e^{-a^{-2}(1-a^2)(\log(r-1)+(r-1))})\leq \epsilon.
\end{equation*}
Therefore, for $r_*<-1$, we can express:
\begin{equation}
\label{eq:rmin1rstar2}
	(r-1)\leq \frac{1+\epsilon}{a^{-2}(1+a^2)^{-2}|r_*|\left(1+\sum_{k=1}^{\infty}\frac{(a^{-2}(1+a^{2})^{-1}(1-a^2)|r_*|)^k}{(k+1)!}\right)}\leq \frac{(1+\epsilon)a^2(1+a^2)}{|r_*|},
\end{equation}
from which it also follows that for any $k\in \N_0$:
\begin{equation*}
(1-a^2)^k(r-1)=O(|r_*|^{-k-1}).
\end{equation*}
By taking derivatives in $r_*$ on both sides of \eqref{eq:rmin1rstar}, we moreover infer that:
\begin{equation}
\label{eq:rmin1rstar2better}
(1-a^2)^k(r-1)=O_{\infty}(|r_*|^{-k-1}).
\end{equation}
Consider the rescaled variable $x=|m|^{-1}|\tomega| r_*$. Then \eqref{eq:homode} is equivalent to:
		\begin{equation}
			\label{eq:homodex}
			\frac{d^2u}{dx^2}+m^2(1-\tomega^{-2}\widetilde{V})u=0. 
		\end{equation}
By applying \eqref{eq:rmin1rstar2better}, we obtain with respect to the coordinate $x$:
\begin{equation}
\label{eq:relrx}
(1-a^2)^k(r-1)=O_{\infty}(|m|^{-1-k} |\tomega|^{1+k}|x|^{-1-k}).	
\end{equation}
So, by \eqref{eq:V0nearhor2} and \eqref{eq:V1nearhor}, it follows that
\begin{align*}
\tomega^{-2}\widetilde{V}(r)=&\:O_{\infty}(|x|^{-1})
\end{align*}

	The expressions for the rescaled potentials above imply therefore that $x=-\infty$ is a irregular singular point for \eqref{eq:homodex} for all $a\leq 1$, so by standard ODE theory for asymptotic expansions around irregular singular points, see for example \cite[\S 7.2, Theorem 2.1]{olv74}, it admits two solutions $u_{\pm}$ that can be written as follows:
	\begin{align}
	\label{eq:odeerrorest1}
	u_{\pm}(x)=&\:e^{\pm i|m| x}(1\pm \varepsilon_{\pm}(x)),\\
	\label{eq:odeerrorest2}
	|\varepsilon_{\pm}(x)|	\leq &\: C_m|x|^{-1},\\
	\label{eq:odeerrorest3}
	\left|\frac{d\varepsilon_{\pm}}{dx}(x)\right| \leq &\: C_m|x|^{-2}.
	\end{align}
	By the boundary conditions for $u_{\rm hor}$, it therefore follows immediately that:
	\begin{equation*}
	u_{\rm hor}(r_*)=e^{-i\tomega r_*}(1+O_m(|x|^{-1})).
\end{equation*}
so we can conclude that
	\begin{equation*}
	||u_{\rm hor}||_{L^{\infty}_{r_*}(-\infty,-|m||\tomega|^{-1}x_0)}+|\tomega|^{-1}||u'_{\rm hor}||_{L^{\infty}_{r_*}(-\infty,-|m||\tomega|^{-1}x_0)}\leq B_{\rm hom}.
\end{equation*}
We expand:
\begin{equation*}
	u_{\rm inf}(r_*)=A_+u_{\rm hor}(r_*)+A_-\overline{u_{\rm hor}(r_*)}=A_+e^{-i\tomega r_*}(1+O_m(|x|^{-1}))+A_-e^{+i\tomega r_*}(1+O_m(|x|^{-1})),
\end{equation*}
with $A_{\pm}(\tomega) \in \C$.

By integrating $(j^T[u_{\rm inf}]))'$ and applying the boundary conditions for $u_{\rm inf}$ we can express 
\begin{equation*}
\begin{split}
\omega^2=\omega^2|u_{\rm inf}|^2(\infty)=&-\omega \Re(i u_{\rm inf}' \overline{u}_{\rm inf})(-\infty)=-\tomega(|A_+|^2-|A_-|^2),
\end{split}
\end{equation*}
So we have that:
\begin{equation}
\label{eq:wronksrel1}
|A_+|^2-|A_-|^2=-\frac{\omega}{\tomega}.
\end{equation}

We can also express the Wronskian $W$ as follows:
\begin{equation}
\label{eq:wronksrel2}
	W=[u_{\rm inf}' u_{\rm hor}-u_{\rm hor}' u_{\rm inf}](-\infty)=2i\tomega A_-.
\end{equation}

Hence, there exists a constant $B_{\rm hom}>0$ depending only on $x_0$, such that
	\begin{equation*}
	||u_{\rm inf}||_{L^{\infty}_{r_*}(-\infty,-|m||\tomega|^{-1}x_0)}+|\tomega|^{-1}||u'_{\rm inf}||_{L^{\infty}_{r_*}(-\infty,-|m||\tomega|^{-1}x_0)}\leq \frac{B_{\rm hom}(|W|+\sqrt{|\omega|}\sqrt{|\tomega|})}{|\tomega|},
\end{equation*}
with $B_{\rm hom}>0$ a constant depending on $m$, $x_0$ and $B$.

Note further that by \eqref{eq:rmin1rstar2}, we can find a constant $C>0$ such that
\begin{equation*}
	(r-1)|r_*|\leq C,
\end{equation*}
so we can restrict the intervals in the $L^{\infty}$-estimates from $(-\infty,-|m||\tomega|^{-1}x_0)_{r_*}$ to $(1,1+z_0|m|^{-1}|\tomega|)_r$ for an appropriately large $z_0$. This concludes the proof of \eqref{eq:hombound1} and \eqref{eq:hombound2}. 

The estimate \eqref{eq:hombound3} follows easily from a similar asymptotic analysis of \eqref{eq:homode} at $r_*=\infty$, where $u_{\rm inf}$ now takes the role of $u_{\rm hor}$ above and there is no need to rescale the variable $r_*$ as we are bounded away from $\omega=0$ by the assumption $|\omega|>\omega_0$. We have that
\begin{align*}
	u_{\inf}(x)=&\:e^{ i \omega r_*}(1+\varepsilon_{+}(r_*)),\\
	|\varepsilon_{+}(r_*)|	\leq &\: C_m|r |^{-1},\\
	\left|\frac{d\varepsilon_{+}}{dr_*}(r_*)\right| \leq &\: C_m|r|^{-1}.
	\end{align*}

We can then express in the region $r\geq 1+z_0^{-1}$:      
\begin{equation*}
	u_{\rm hor}(r_*)=D_+u_{\rm inf}(r_*)+D_-\overline{u_{\rm inf}(r_*)}=D_+e^{i\omega r_*}(1+|m|^{-1}O(r^{-1}))+D_-e^{-i\tomega r_*}(1+|m|^{-1}O(r^{-1})),
\end{equation*}
with $D_{\pm}(\tomega) \in \C$.
We integrate $(j^T[u_{\rm inf}]))'$ to express:
\begin{equation*}
\begin{split}
\tomega \omega =\tomega \omega |u_{\rm hor}|^2(-\infty)=\omega \Re(i u_{\rm hor}' \overline{u}_{\rm hor})(\infty)=-\omega^2 (|D_+|^2-|D_-|^2),
\end{split}
\end{equation*}
We can also express the Wronskian $W$ as follows:
\begin{equation*}
	W=[u_{\rm inf}' u_{\rm hor}-u_{\rm hor}' u_{\rm inf}](\infty)=2i\omega D_-.
\end{equation*}
From the above, it follows that \eqref{eq:hombound4} holds with $B_{\rm hom}$ depending only on $z_0$.
		\end{proof}

We will need to establish additionally appropriate control over $u_{\rm inf}$ in the region $r\geq 1+z_0|m|^{-1}|\tomega|$. For this reason, we need to consider the potential $\check{V}$, which is defined in \S \ref{sec:mainpot}. See \eqref{eq:Vchecknearhor} in Lemma \ref{prop:potentials} for the behaviour of $\check{V}$ near $r=1$. 

\textbf{In Lemma \ref{lm:propfracw} and Proposition \ref{prop:addestuout} below, we will restrict to $a=1$.} 

We now consider the weight functions $\mathfrak{w}$ satisfying \eqref{eq:staticfreqweight}, i.e.\ solutions to
\begin{align*}
\label{eq:staticfreqweight}
\mathfrak{w}''-\widetilde{V}_{\rm stat}\mathfrak{w}=0.
\end{align*}

Define the renormalized homogeneous solutions $\check{u}_{\rm inf}$ as follows:
\begin{equation*}
\check{u}_{\rm inf,\pm}=\mathfrak{w}_{\pm}^{-1}u_{\rm inf}.
\end{equation*}
Then $\check{u}_{\rm inf,\pm}$ satisfies the following equation for $a=1$:
\begin{equation}
\label{eq:renormode}
\check{u}_{\rm inf,\pm}''+2\mathfrak{w}_{\pm}^{-1}\mathfrak{w}'_{\pm}\check{u}_{\rm inf,\pm}'+\left[ \widetilde{\omega}^2-\check{V}\right]\check{u}_{\rm inf,\pm}=0.
\end{equation}
We introduce the corresponding current $\check{j}^y[\check{u}_{\rm inf,\pm}]$ as follows:
\begin{align*}
\check{j}^y[\check{u}_{\rm inf,\pm}]=&\:y\left(|\check{u}_{\rm inf,\pm}'|^2+\left[ \widetilde{\omega}^2-\check{V}\right]|\check{u}_{\rm inf,\pm}|^2\right)\\
\frac{d}{dr}\left(\check{j}^y[\check{u}_{\rm inf,\pm}]\right)=&\: \frac{dy}{dr}\left(|\check{u}_{\rm inf,\pm}'|^2+\left[ \widetilde{\omega}^2-\check{V}\right]|\check{u}_{\rm inf,\pm}|^2\right)-y\frac{d\check{V}}{dr}|\check{u}_{\rm inf,\pm}|^2-4y\Re\left(\mathfrak{w}_{\pm}^{-1}\frac{d\mathfrak{w}_{\pm}}{dr}\right)|\check{u}_{\rm inf,\pm}'|^2.
\end{align*}

In the proposition below, we will derive integrated estimates for $\check{u}_{\rm inf,+}$ or $\check{u}_{\rm inf,-}$  in the region $1+z_0|m|^{-1}|\tomega|\leq r<\infty$, using Lemma \ref{lm:prelimhomest} to estimate the boundary term at $r=1+z_0|m|^{-1}|\tomega|$.
		
		\begin{proposition}
\label{prop:addestuout}
Let $(\omega,m,\Lambda)\in \mathcal{G}^{\flat}_<$. Let $\epsilon>0$ be arbitrarily small and $z_0$ be arbitrarily large. Then, for $\eta_{\flat}$ suitably small, there exists a constant $C=C(\eta_{\flat},\epsilon,z_0)>0$ such that
\begin{align}
\label{eq:keynearsuperrest2}
\int_{1+z_0|m|^{-1}|\tomega|}^{\infty}& r^{-1-\epsilon}(1-r^{-1})^{\epsilon}((1-r^{-1})^{-2}|u_{\rm inf}'|^2+\tomega^2 m^2|u_{\rm inf}|^2)+m^2|u_{\rm inf}|^2]\,dr\\ \nonumber
\leq &\:C B^2_{\rm hom}m^4(1+|W|^2|\omega|^{-1}|\tomega|^{-1}),\\
\label{eq:keynearsuperrest1}
\int_{1+z_0|m|^{-1}|\tomega|}^{1+\delta}& (r-1)^{-3+\epsilon-2\Im\sqrt{2m^2-\Lambda_{+}-\frac{1}{4}}}(|\check{u}_{\rm inf,\pm}'|^2+\tomega^2|\check{u}_{\rm inf,\pm}|^2)+(r-1)^{-1+\epsilon}|\check{u}_{\rm inf,\pm}|^2]\,dr\\ \nonumber
\leq &\:C B^2_{\rm hom}m^2(1+|W|^2|\omega|^{-1}|\tomega|^{-1}),\\
\label{eq:keynearsuperrest2b}
\int_{1}^{1+z_0|m|^{-1}|\tomega|} &r^{-1-\epsilon}(1-r^{-1})^{-1+\epsilon}(|m||\tomega|^{-1}|u_{\rm inf}'|^2+|m\tomega||u_{\rm inf}|^2)\,dr\\ \nonumber
\leq &\:C B^2_{\rm hom}m^2(1+|W|^2|\omega|^{-1}|\tomega|^{-1}),
\end{align}
where \eqref{eq:keynearsuperrest1} is valid with $\check{u}_{\rm inf,+}$ if $m\tomega<0$ and with $\check{u}_{\rm inf,-}$ if $m\tomega>0$.
\end{proposition}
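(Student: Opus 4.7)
The plan is to apply the current $\check{j}^y[\check{u}_{\rm inf,\pm}]$ to the renormalized equation \eqref{eq:renormode} for an appropriate weight $y(r)>0$, and to carry out separate analyses in the three regions $r\in(1,1+z_0|m|^{-1}|\tomega|)$, $r\in(1+z_0|m|^{-1}|\tomega|,1+\delta)$ and $r\in (1+\delta,\infty)$. The sign choice of $\pm$ is dictated by $\check{V}$: since \eqref{eq:Vchecknearhor} with $a=1$ gives $\check{V}=-m\tomega(r-1)+O(|m\tomega|+|\Lambda-\Lambda_+|)(r-1)^2$, the term $-y\frac{d\check{V}}{dr}|\check{u}_{\rm inf,\pm}|^2$ has a good sign when we pick $\check{u}_{\rm inf,+}$ for $m\tomega<0$ and $\check{u}_{\rm inf,-}$ for $m\tomega>0$, after absorbing the higher-order $(r-1)^2$ correction by taking $\eta_\flat$ small enough that $|m\tomega|+|\Lambda-\Lambda_+|\ll |m\tomega|$ (noting that $\Lambda\to\Lambda_+$ as $\tomega\to 0$ by \eqref{eq:Lambdalimit1}--\eqref{eq:Lambdalimit2}). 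Crucially, the zeroth-order coefficient term $-4y\Re\bigl(\mathfrak{w}_{\pm}^{-1}\frac{d\mathfrak{w}_{\pm}}{dr}\bigr)|\check{u}_{\rm inf,\pm}'|^2$ in $\frac{d}{dr}\check{j}^y$ is \emph{automatically positive} near $r=1$: by \eqref{eq:estderw}, $\Re(\mathfrak{w}_{\pm}^{-1}\frac{d\mathfrak{w}_{\pm}}{dr})=-\frac{1}{2}(r-1)^{-1}+O(1)$, so this term contributes $2y(r-1)^{-1}|\check{u}_{\rm inf,\pm}'|^2$ plus lower-order corrections, exactly the weight seen on the left-hand side of \eqref{eq:keynearsuperrest1} after converting $r_*$-derivatives to $r$-derivatives via $\frac{d}{dr_*}=\frac{(r-1)^2}{r^2+1}\frac{d}{dr}$.

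For \eqref{eq:keynearsuperrest1}, the plan is to take $y(r)=(r-1)^{\epsilon}$ (or a variant matching the $(r-1)^{\epsilon-2\Im\sqrt{2m^2-\Lambda_+-\frac14}}$ weight, which for $\alpha_{m\ell}<0$ coincides with $(r-1)^{\epsilon}$ since the square-root is real). Integrating $\frac{d}{dr}\check{j}^y[\check{u}_{\rm inf,\pm}]$ from $1+z_0|m|^{-1}|\tomega|$ to $1+\delta$, the bulk integrand becomes positive and controls the left-hand side of \eqref{eq:keynearsuperrest1}, while the boundary terms at $r=1+z_0|m|^{-1}|\tomega|$ are controlled by Lemma \ref{lm:prelimhomest} combined with the explicit asymptotics $|\mathfrak{w}_{\pm}|\sim (r-1)^{-1/2}$, giving $|\check{u}_{\rm inf,\pm}|\leq B_{\rm hom}(|W|+\sqrt{|\omega||\tomega|})|\tomega|^{-1/2}|m|^{-1/2}$ there. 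For \eqref{eq:keynearsuperrest2} in the outer region $r\geq 1+\delta$, the weight $\mathfrak{w}_{\pm}^{-1}\frac{d\mathfrak{w}_{\pm}}{dr}$ and $\check{V}$ are all bounded, so the equation becomes essentially a standard ODE with potential $\tomega^2-\check{V}$. Here one can use a Morawetz-type current directly for $u_{\rm inf}$ (with weight $y\sim r^{-\epsilon}$ far away, exploiting \eqref{eq:Vnearinf}, which guarantees the decay of $V$) and absorb the resulting boundary terms at $r=1+\delta$ using the estimate already established on the middle region.

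Finally, for \eqref{eq:keynearsuperrest2b} in the innermost region $r\in(1,1+z_0|m|^{-1}|\tomega|)$, the plan is to use Lemma \ref{lm:prelimhomest} directly: the $L^{\infty}$ bounds \eqref{eq:hombound2} give $|u_{\rm inf}|\leq B_{\rm hom}(|W|+\sqrt{|\omega||\tomega|})|\tomega|^{-1}$ pointwise, so multiplying by the integration weight and integrating over an interval of length $z_0|m|^{-1}|\tomega|$ yields exactly the claimed bound (the weight $(1-r^{-1})^{-1+\epsilon}$ produces a factor of $(|m|^{-1}|\tomega|)^{\epsilon}$ which is harmless).

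The main obstacle will be verifying the positivity of the bulk integrand of $\frac{d}{dr}\check{j}^y$ uniformly across the region $r\in (1+z_0|m|^{-1}|\tomega|,1+\delta)$: one needs to simultaneously balance three effects, namely (a) the positive contribution from $-4y\Re(\mathfrak{w}_{\pm}^{-1}\mathfrak{w}_{\pm}')|\check{u}_{\rm inf,\pm}'|^2\sim 2y(r-1)^{-1}|\check{u}_{\rm inf,\pm}'|^2$, (b) the $\tomega^2|\check{u}_{\rm inf,\pm}|^2$ zeroth-order term which has a good sign only for the correct choice of $\pm$ and only after $\frac{d\check{V}}{dr}$ is controlled by its leading $-m\tomega$ part, and (c) the cross-terms generated by $\frac{dy}{dr}\check{V}|\check{u}_{\rm inf,\pm}|^2$ which must be absorbed using smallness of $|m\tomega|\leq\eta_\flat m^2$ and of $|\Lambda-\Lambda_+|$. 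The $m$-dependence of the resulting constant is not tracked, consistent with the convention for this section.
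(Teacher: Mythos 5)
Your high-level plan is in the spirit of the paper's proof (renormalize by $\mathfrak{w}_\pm$, run a $\check{j}^y$-current, select $\pm$ by the sign of $m\tomega$, bound boundary terms via Lemma~\ref{lm:prelimhomest}, handle the innermost region by integrating the $L^\infty$ bound), but the choice of multiplier and the accompanying positivity claims do not survive scrutiny. Your proposal takes $y=(r-1)^{\epsilon}$ (or $(r-1)^{\epsilon-\beta}$); the paper takes $y_\epsilon=(r-1)^{-2+\epsilon-\beta}\chi$, and the missing factor $(r-1)^{-2}$ is essential, not cosmetic. Writing $\check{V}\approx-m\tomega(r-1)$ and $y\sim(r-1)^{p}$, the $|\check{u}|^2$-contribution is $-\frac{d}{dr}(y\check{V})\approx(p+1)\,m\tomega\,(r-1)^{p}$, which for $m\tomega<0$ is nonnegative iff $p<-1$. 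Your $p=\epsilon>-1$ (and your variant with $p=\epsilon-\beta$ still has $p>-1$ precisely when $\alpha_{m\ell}<0$, the case that matters most); so the bulk term you need to be positive is in fact negative with your weight. Separately, the assertion that $-4y\Re(\mathfrak{w}_\pm^{-1}\tfrac{d\mathfrak{w}_\pm}{dr})$ is \emph{automatically} positive is not correct: by \eqref{eq:estderw}, $\Re(\mathfrak{w}_\pm^{-1}\tfrac{d\mathfrak{w}_\pm}{dr})=\left(-\tfrac12\mp\tfrac{\beta}{2}\right)(r-1)^{-1}+O(1)$; the $\mp\tfrac{\beta}{2}$ vanishes only when $\alpha_{m\ell}<0$, and for $\mathfrak{w}_-$ with $\beta>1$ (i.e.\ $\alpha_{m\ell}>\tfrac14$) the sign flips. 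The actual mechanism in the paper is the fine-tuned cancellation \eqref{eq:dercheckucancel}: for $y_\epsilon$, the coefficients of $(r-1)^{-3+\epsilon-\beta}$ in $\tfrac{dy_\epsilon}{dr}$ and in $-4y_\epsilon\Re(\mathfrak{w}_+^{-1}\mathfrak{w}_+')$ are $-(2-\epsilon+\beta)$ and $+(2+2\beta)$, respectively, and they cancel up to a small positive remainder $(\epsilon+\beta)$. Your framing, in which the two terms are of different orders and the $\mathfrak{w}_\pm$ term is the dominant good one, misses this.

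There are two further gaps. First, for $m\tomega>0$ the paper does \emph{not} reuse the same positive-multiplier current with $\mathfrak{w}_-$; it uses the sign-reversed current $\check{j}^{-y_0}$ with $y_0=(r-1)^{-2+\beta}$, because when $\beta>1$ no choice of positive $y$ can simultaneously make the $|\check{u}'|^2$ coefficient $\tfrac{dy}{dr}-4y\Re(\mathfrak{w}_-^{-1}\mathfrak{w}_-')$ and the $\check{V}$-contribution nonnegative. Second, the negative term $\tfrac{dy_\epsilon}{dr}\,\tomega^2|\check{u}|^2$ in $\tfrac{d}{dr}(y_\epsilon(\tomega^2-\check{V}))$ is not absorbed by smallness of $|m\tomega|$ or $|\Lambda-\Lambda_+|$ (as you suggest for item (c)), but by the constraint $r-1\geq z_0|m|^{-1}|\tomega|$ for $z_0$ large, which converts $\tomega^2(r-1)^{-3+\ldots}$ into $z_0^{-1}|m\tomega|(r-1)^{-2+\ldots}$; this is precisely where the restriction to $r\geq 1+z_0|m|^{-1}|\tomega|$ earns its keep, and it must be stated.
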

\begin{proof}
Let $0\leq \epsilon<1$ and choose $y_{\epsilon}(r)=(r-1)^{-2+\epsilon-\beta}\chi(r)$, with $\beta=2\Im\sqrt{2m^2-\Lambda_{+}-\frac{1}{4}}$ and $\chi$ a smooth cut-off function such that $\chi(r)=1$ for $r\leq 1+\delta$ and $\chi(r)=0$ for $r\geq 1+2\delta$, with $\delta>0$ suitably small.

We apply \eqref{eq:Vchecknearhor} to express:
\begin{equation*}
\begin{split}
\tomega^2-\check{V}(r)=&\:\tomega^2+m \tomega(r-1)+[|m\tomega|+|\Lambda-\Lambda_+|]O((r-1)^2).
\end{split}
\end{equation*}	
Hence,
\begin{multline}
\label{eq:keyexprVcheckest}
\frac{d}{dr}\left(y_{\epsilon}(\tomega^2-\check{V}(r))\right) =-(2-\epsilon-\beta ) \chi \tomega^2(r-1)^{-3+\epsilon-\beta }+(1-\epsilon+\beta)\chi(-m \tomega)(r-1)^{-2+\epsilon-\beta}\\
+[|m\tomega|+|\Lambda-\Lambda_+|]O((r-1)^{-1+\epsilon-\beta})+ \frac{d\chi}{dr}(r-1)^{-2+\epsilon-\beta} g(r)(\tomega^2-\check{V}(r)).
\end{multline}
Suppose first that $m\tomega<0$. Using that $(r-1)\geq z_0|\tomega||m|^{-1}$ and taking $\delta$ and $\eta_{\flat}$ suitably small and $z_0$ suitably large, together with \eqref{eq:Lambdalimit1}, we can estimate
\begin{equation*}
\frac{d}{dr}\left(y_{\epsilon}(\tomega^2-\check{V}(r))\right)\geq\frac{1}{2}|m\tomega|(r-1)^{-2+\epsilon-\beta}.
\end{equation*}

Furthermore, by \eqref{eq:estderw}, we have that when $2m^2-\Lambda_+-\frac{1}{4}\neq 0$,
\begin{equation}
\label{eq:dercheckucancel}
\frac{dy_{\epsilon}}{dr}-4y_{\epsilon}\Re\left(\mathfrak{w}_+^{-1}\frac{d\mathfrak{w}_+}{dr}\right)=(\epsilon+\beta)(r-1)^{-3+\epsilon}+O((r-1)^{-2+\epsilon}).
\end{equation}
When $2m^2-\Lambda_+-\frac{1}{4}=0$, we obtain instead
\begin{equation}
\label{eq:dercheckucancelexc}
\frac{dy_{\epsilon}}{dr}-4y_{\epsilon}\Re\left(\mathfrak{w}_+^{-1}\frac{d\mathfrak{w}_+}{dr}\right)=(\epsilon+\log((r-1)^{-1})^{-1})(r-1)^{-3+\epsilon}+\log(r-1)O((r-1)^{-2+\epsilon}).
\end{equation}

We can moreover estimate:
\begin{equation*}
	\frac{d\chi}{dr}(\tomega^2-\check{V})\geq -C(|m\tomega|+|\Lambda-\Lambda_+|) |\check{u}_{\rm inf}|^2
\end{equation*}
when $r\in [1+\delta,2+2\delta]$.

Combining the above observations, we obtain the following estimate for $\epsilon>0$, $\delta>0$ suitably small and $m\tomega<0$: let $r_0=1+z_0|m|^{-1}|\tomega|$, then
\begin{multline}
\label{eq:auxhomestcheck}
\int_{r_0}^{1+2\delta}(r-1)^{-3+\epsilon-\beta}\epsilon\chi |\check{u}'_{\rm inf,+}|^2 +(r-1)^{-2+\epsilon-\beta}\chi (\tomega^2(r-1)^{-1}+|m\widetilde{\omega}| )|\check{u}_{\rm inf,+}|^2\,dr\\
\leq  C\int_{1+\delta}^{1+2\delta}(|m\tomega|+|\Lambda-\Lambda_+|) |\check{u}_{\rm inf,+}|^2\,dr+C\left|m\widetilde{\omega}\right|(r-1)^{-1+\epsilon-\beta}|\check{u}_{\rm inf,+}|^2\Big|_{r=r_0}.
\end{multline}

By a Hardy inequality in $r$, we can also estimate:
\begin{equation}
	\label{eq:hardyVcheck}
	\begin{split}
\int_{r_0}^{1+2\delta}(r-1)^{-1+\epsilon-\beta} \chi|\check{u}_{\rm inf,+}|^2\,dr\leq &\:C\int_{r_0}^{1+2\delta}(\epsilon-\beta)^{-2}(r-1)^{-3+\epsilon-\beta}\chi |\check{u}'_{\rm inf,+}|^2\,dr\\
&+C\int_{1+\delta}^{1+2\delta}|\check{u}_{\rm inf,+}|^2\,dr.
\end{split}
\end{equation}
We can combine \eqref{eq:auxhomestcheck}, \eqref{eq:auxhomestcheck2}  and \eqref{eq:hardyVcheck} and estimate the resulting boundary terms at $r=r_0$ via \eqref{eq:hombound2} and Lemma \ref{lm:propfracw}, together with \eqref{eq:hombound2} and \eqref{eq:hombound3} to estimate the boundary terms and the integrals in $[1+\delta,1+2\delta]$, in order to obtain for $z_0$ suitably large and $\eta_{\flat}$ suitably small, depending on $\epsilon$:
\begin{multline}
\label{eq:keyestucheckhom}
\int_{1+z_0|m|^{-1}|\widetilde{\omega}|}^{1+2\delta}(r-1)^{-3+\epsilon-\beta}\chi |\check{u}'_{\rm inf,+}|^2+(r-1)^{-1+\epsilon-\beta} \chi|\check{u}_{\rm inf,+}|^2+(r-1)^{-2+\epsilon-\beta}\chi (|m\widetilde{\omega}|+(r-1)^{-1}\tomega^2 |\check{u}_{\rm inf,+}|^2\,dr\\
\leq CB_{\rm \hom}^2(1+|W|^2|\omega|^{-1}|\tomega|^{-1})m^2.
\end{multline}
We conclude that \eqref{eq:keynearsuperrest1} holds. In fact, we can now repeat the argument with $\epsilon=0$ by considering $y_0$.

By using the properties of $\mathfrak{w}_+$ from Lemma \ref{lm:propfracw}, we obtain
\begin{equation*}
\begin{split}
	|{u}_{\rm inf}'|^2=&\:|\mathfrak{w}_+|^2|\check{u}_{\rm inf,+}'+\mathfrak{w}_+^{-1}\mathfrak{w}_+'\check{u}_{\rm inf,+}|^2\\
	\leq &\: C(r-1)^{-1-\beta}[|\log(r-1)| |\check{u}_{\rm inf}'|^2+Cm^2(r-1)^2|\check{u}_{\rm inf,+}|^2],
	\end{split}
\end{equation*}
where the $|\log(r-1)|$ factor is only required when $2m^2-\Lambda_+-\frac{1}{4}=0$.

Hence,
\begin{multline}
\label{eq:fromucheckinftouinf}
\int_{1+z_0|m|^{-1}|\widetilde{\omega}|}^{1+2\delta}(r-1)^{-2+\epsilon}\chi  (|{u}'_{\rm inf}|^2+\tomega^2|u_{\rm inf}|^2)+m^2(r-1)^{\epsilon}\chi|{u}_{\rm inf}|^2\,dr\\
\leq CB_{\rm \hom}^2(1+|W|^2|\omega|^{-1}|\tomega|^{-1})m^4.
\end{multline}

To conclude \eqref{eq:keynearsuperrest2} for $m\tomega<0$, we use that we can add an appropriate integral in $[1+\delta,\infty)$ to \eqref{eq:fromucheckinftouinf} after applying \eqref{eq:hombound3}.

Now suppose that  $m\tomega>0$. Then we consider instead $\check{u}_{\rm inf,-}$ together with the current $\check{j}^{-y_0}[\check{u}_{\rm inf,-}]$, with $y_0=(r-1)^{-2+\beta}$. Using that for $\delta>0$ suitably small we now have:
\begin{equation*}
\frac{d}{dr}\left(-y_0(\tomega^2-\check{V}(r))\right)\geq \frac{1}{2}|m\tomega|(r-1)^{-2+\beta}.
\end{equation*}
and by the leading-order term on the right-hand side of \eqref{eq:dercheckucancel} has a good sign when $\epsilon=0$ and $\mathfrak{w}_+$ is replaced with $\mathfrak{w}_-$, we can conclude that for $m\tomega>0$:
\begin{multline}
\label{eq:auxhomestcheck2}
\int_{r_0}^{1+2\delta}|m\widetilde{\omega}| (r-1)^{-2+\beta}|\check{u}_{\rm inf,-}|^2\,dr\\
\leq  C\int_{1+\delta}^{1+2\delta}(|m\tomega|+|\Lambda-\Lambda_+|) |\check{u}_{\rm inf,-}|^2\,dr+C|m\widetilde{\omega}|(r-1)^{-1+\beta}|\check{u}_{\rm inf,-}|^2\Big|_{r=r_0}.
\end{multline}
From this it follows that
\begin{equation*}
\int_{1+z_0|m|^{-1}|\widetilde{\omega}|}^{1+2\delta} (r-1)^{-2-\beta}\chi |m\widetilde{\omega}| |\check{u}_{\rm inf,-}|^2\,dr\leq CB_{\rm \hom}^2(1+|W|^2|\omega|^{-1}|\tomega|^{-1})m^2.
\end{equation*}
Equipped with the above estimate, we can then continue the argument as in the case $m\tomega<0$ to conclude that \eqref{eq:keynearsuperrest1} and  \eqref{eq:keynearsuperrest2} hold also when $m\tomega>0$.

Finally, \eqref{eq:keynearsuperrest2b}  follows by simply integrating the estimates in \eqref{eq:hombound2}.
\end{proof}

We will additionally need an integrated estimate for $u_{\rm hor}$ in the region $1+z_0|m|^{-1}|\tomega|\leq r\leq 1+\delta$, with $\delta>0$ arbitrarily small.

\begin{proposition}
\label{prop:addestuhor}
Let $(\omega,m,\Lambda)\in \mathcal{G}^{\flat}_<$. Let $\epsilon>0$ be arbitrarily small and $z_0$ be arbitrarily large. Then, for $\eta_{\flat}$ suitably small, there exists a constant $C=C(\eta_{\flat},\epsilon,z_0)>0$ such that
\begin{equation}
\label{eq:keynearsuperresthor1}
\begin{split}
\int_{1+z_0|m|^{-1}|\tomega|}^{\infty}& r^{-1-\epsilon}(1-r^{-1})^{\epsilon}((1-r^{-1})^{-2}|u_{\rm hor}'|^2+\tomega^2 m^2|u_{\rm hor}|^2)+m^2|u_{\rm hor}|^2]\,dr\\
\leq &\:C B^2_{\rm hom} m^2(|m\tomega|+|W|^2).
\end{split}
\end{equation}
\end{proposition}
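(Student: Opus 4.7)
The proof will follow the template of Proposition~\ref{prop:addestuout}, now applied to $u_{\rm hor}$ in the region $[1+z_0|m|^{-1}|\tomega|,\infty)$, with the key difference being that the control at the inner boundary $r=r_0:=1+z_0|m|^{-1}|\tomega|$ comes from \eqref{eq:hombound1} (which is uniformly bounded by $B_{\rm hom}$) rather than from \eqref{eq:hombound2} (which blows up as $|\tomega|\to 0$), and that the control at $r\gtrsim 1+z_0^{-1}$ comes from \eqref{eq:hombound4} (which carries the factor $|W|$). This is what accounts for the bound $|m\tomega|+|W|^2$ replacing $1+|W|^2|\omega|^{-1}|\tomega|^{-1}$.

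Concretely, I would define the renormalized quantity $\check{u}_{\rm hor,\pm}:=\mathfrak{w}_{\pm}^{-1}u_{\rm hor}$, which satisfies the renormalized equation \eqref{eq:renormode}, and apply the current $\check{j}^{y_\epsilon}[\check{u}_{\rm hor,\pm}]$ with the same weight $y_\epsilon(r)=(r-1)^{-2+\epsilon-\beta}\chi(r)$ used in Proposition~\ref{prop:addestuout}, where $\beta=2\Im\sqrt{2m^2-\Lambda_+-\frac{1}{4}}$ and $\chi$ is the cutoff supported in $[1,1+2\delta]$. The choice of sign $\pm$ matches the sign of $m\tomega$, so that the positivity computation \eqref{eq:keyexprVcheckest}--\eqref{eq:dercheckucancel} applies verbatim to $\check{u}_{\rm hor,\pm}$. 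Integrating the derivative of the current from $r_0$ to $1+2\delta$ then yields the analogue of \eqref{eq:auxhomestcheck}, with a boundary term at $r=r_0$ of the form $C|m\tomega|(r-1)^{-1+\epsilon-\beta}|\check{u}_{\rm hor,\pm}|^2\big|_{r=r_0}$ and an integral contribution over $[1+\delta,1+2\delta]$ coming from $\frac{d\chi}{dr}$.

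The boundary term at $r_0$ is estimated using Lemma~\ref{lm:propfracw}: since $|\mathfrak{w}_\pm(r_0)|^2\sim (r_0-1)^{-1}=|m|/|\tomega|$, we obtain $|\check{u}_{\rm hor,\pm}(r_0)|^2\leq C|\tomega||m|^{-1}|u_{\rm hor}(r_0)|^2\leq CB_{\rm hom}^2|\tomega||m|^{-1}$ by \eqref{eq:hombound1}. Combined with $(r_0-1)^{-1+\epsilon-\beta}\sim (|\tomega|/|m|)^{-1+\epsilon-\beta}$, this produces exactly a term of size $CB_{\rm hom}^2 m^2|m\tomega|$ (up to the harmless power $(|\tomega|/|m|)^{\epsilon-\beta}$ which is absorbed in the constant for fixed $m$ and $\eta_\flat$ small). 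The integral in $[1+\delta,1+2\delta]$ from the cut-off is estimated using \eqref{eq:hombound4}, which gives $|u_{\rm hor}|^2\leq CB_{\rm hom}^2(|W|^2|\omega|^{-2}+|\tomega||\omega|^{-1})$, producing both a $|W|^2$ contribution and a $|m\tomega|$ contribution (the latter since $|\omega|\geq c|m|$ for $m\tomega$ in the near-superradiant range of $\mathcal{G}^{\flat}_{<}$).

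The remainder of the argument is identical to Proposition~\ref{prop:addestuout}: apply the Hardy inequality \eqref{eq:hardyVcheck} to upgrade $L^2$-control of $\check{u}_{\rm hor,\pm}'$ to $L^2$-control of $\check{u}_{\rm hor,\pm}$, then pass back to $u_{\rm hor}$ via the pointwise identity used to derive \eqref{eq:fromucheckinftouinf}, and finally splice on the far-away contribution from $[1+\delta,\infty)$ using \eqref{eq:hombound4} directly. The only genuine obstacle is bookkeeping in the logarithmic borderline case $2m^2-\Lambda_+-\frac{1}{4}=0$, where \eqref{eq:dercheckucancelexc} must be used in place of \eqref{eq:dercheckucancel}; here the additional $\log$ factors are harmless since $m$ is fixed and the constants are allowed to depend on $m$.
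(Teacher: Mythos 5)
Your proof follows the same route as the paper's, which itself is only one sentence: ``We proceed exactly as in the proof of Proposition~\ref{prop:addestuout}, replacing $u_{\rm inf}$ with $u_{\rm hor}$ and applying the $L^{\infty}$ estimates for $u_{\rm hor}$ from Lemma~\ref{lm:prelimhomest} instead of those for $u_{\rm inf}$.'' You have correctly unpacked where the two $L^{\infty}$ inputs go: \eqref{eq:hombound1} controls the boundary term at $r_0=1+z_0|m|^{-1}|\tomega|$, and \eqref{eq:hombound4} (together with $|\omega|\gtrsim|m|$ in $\mathcal{G}^{\flat}_<$) controls the cutoff integral on $[1+\delta,1+2\delta]$ and the far-region contribution, and this is precisely where the structure $|m\tomega|+|W|^2$ comes from.

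There is, however, a slip in the boundary-term arithmetic that you should repair because your stated justification is not actually valid. You write $|\mathfrak{w}_\pm(r_0)|^2\sim (r_0-1)^{-1}$; but from \eqref{eq:wfreqexp}, $|\mathfrak{w}_\pm(r)|^2\sim (r-1)^{-1\mp\beta}$ with $\beta=2\Im\sqrt{2m^2-\Lambda_+-\tfrac14}$, so the $(r_0-1)^{-1}$ asymptotic is only correct when $\alpha_{m\ell}\le 0$, i.e.\ $\beta=0$. Since $\alpha_{m\ell}$ can be of order $m^2$ within $\mathcal{G}^{\flat}_<$, $\beta$ can be large, and the power $(|\tomega|/|m|)^{\epsilon-\beta}$ you declare ``harmless'' is \emph{not} absorbed into a constant --- it blows up as $\tomega\to 0$ whenever $\beta>\epsilon$. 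The estimate is nonetheless correct, because the $\beta$ dependence cancels identically: with the correct scaling
\begin{equation*}
|\check{u}_{\rm hor,+}(r_0)|^2\lesssim |\mathfrak{w}_+(r_0)|^{-2}|u_{\rm hor}(r_0)|^2 \lesssim (r_0-1)^{1+\beta}B^2_{\rm hom},
\end{equation*}
the boundary term becomes
\begin{equation*}
|m\tomega|(r_0-1)^{-1+\epsilon-\beta}|\check{u}_{\rm hor,+}(r_0)|^2 \lesssim |m\tomega|(r_0-1)^{\epsilon}B^2_{\rm hom}\lesssim (z_0\eta_\flat)^{\epsilon}|m\tomega|B^2_{\rm hom},
\end{equation*}
and similarly, for $m\tomega>0$ with the weight $-y_0=-(r-1)^{-2+\beta}$ paired with $\mathfrak{w}_-$, the boundary term reduces to $|m\tomega||u_{\rm hor}(r_0)|^2\lesssim|m\tomega|B^2_{\rm hom}$ with no leftover power at all. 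So the conclusion holds, but the reason is the exact cancellation between the renormalizing weight and the current, not that the leftover power is small. The rest of the argument --- Hardy, conversion from $\check{u}_{\rm hor,\pm}$ back to $u_{\rm hor}$, and the splicing with the $[1+\delta,\infty)$ region --- is correct as you state it, including your handling of the logarithmic borderline case.
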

\begin{proof}
We proceed exactly as in the proof of Proposition \ref{prop:addestuout}, replacing $u_{\rm inf}$ with $u_{\rm hor}$ and applying the $L^{\infty}$ estimates for $u_{\rm hor}$ from Lemma \ref{lm:prelimhomest} instead of those for $u_{\rm inf}$.
\end{proof}
		
\subsubsection{$\mathcal{G}^{\flat}_{<}$: bounded frequencies near superradiant threshold}
\label{sec:freqnearsr}
	The solution $u$ to the inhomogeneous ODE \eqref{eq:odeU} can be expressed via integrals involving the Wronskian and solutions to the homogeneous ODE:
\begin{lemma}
\label{lm:explititinhom}
Let $a\leq 1$. Then the following identities hold for all $r\geq 1$ and for almost every $\omega$:
\begin{align}
\label{eq:explinhom1}
u(r)=&\:W^{-1}\left[u_{\rm inf}(r)\int_{1}^{r} u_{\rm hor}(r')H(r')\,\frac{r'^2+a^2}{\Delta}dr'+u_{\rm hor}(r)\int_{r}^{\infty} u_{\rm inf}(r')H(r')\,\frac{r'^2+a^2}{\Delta}dr'\right],\\
\label{eq:explinhom2}
u'(r)=&\:W^{-1}\left[u_{\rm inf}'(r)\int_{1}^{r} u_{\rm hor}(r')H(r')\,\frac{r'^2+a^2}{\Delta}dr'+u_{\rm hor}'(r)\int_{r}^{\infty} u_{\rm inf}(r')H(r')\,\frac{r'^2+a^2}{\Delta}dr'\right].
\end{align}
\end{lemma}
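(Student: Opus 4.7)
The identity is a standard variation-of-parameters formula, and the plan is to introduce $u_{\rm var}(r)$ as the right-hand side of \eqref{eq:explinhom1}, verify directly that it solves the boundary value problem \eqref{eq:ode}--\eqref{eq:bcinf} with inhomogeneity $H$, and then conclude $u = u_{\rm var}$ by uniqueness. Throughout the argument we work with $\omega$ in the set of full measure provided by Proposition \ref{prop:smoothnessu}, so that $u$ is a smooth solution satisfying the two limit conditions and $H$ satisfies \eqref{eq:Hintbound}. Note that by the mode stability theorem of \cite{costa20}, the Wronskian $W = u_{\rm inf}' u_{\rm hor} - u_{\rm hor}' u_{\rm inf}$ is nonzero for $\omega \in \mathbb{R}\setminus\{0, m\upomega_+\}$, so the prefactor $W^{-1}$ is well-defined.

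First I verify that the integrals defining $u_{\rm var}$ are absolutely convergent at each $r \in (1,\infty)$. Away from the endpoints, $u_{\rm hor}$ and $u_{\rm inf}$ are smooth and uniformly bounded on any compact subinterval by standard ODE continuation; near $r = 1$ the boundary conditions \eqref{eq:hombc1}--\eqref{eq:hombc2} plus the analysis of the regular/irregular singular point ensure both remain bounded, while near infinity the WKB-type boundary conditions yield uniform bounds. Combined with Cauchy--Schwarz,
\begin{equation*}
\left|\int_{a}^{b} u_{\rm hor}(r') H(r') \frac{r'^2+a^2}{\Delta}\,dr'\right|^2 \leq (b-a) \sup_{[a,b]}|u_{\rm hor}|^2 \cdot \sup_{[a,b]} \frac{(r^2+a^2)^2}{r^4} \int_{a}^{b}\frac{r'^4}{\Delta^2}|H|^2\,dr',
\end{equation*}
which together with \eqref{eq:Hintbound} controls the integral on bounded subintervals of $(1,\infty)$; the analogous estimate at infinity (where $(r^2+a^2)/\Delta \to 1$) controls the tail.

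Next I differentiate $u_{\rm var}$ in $r_*$, using $\frac{d}{dr_*} = \frac{\Delta}{r^2+a^2}\frac{d}{dr}$, which cancels the Jacobian factor $(r^2+a^2)/\Delta$ in the boundary contributions from the limits of integration. These contributions assemble to $W^{-1}(u_{\rm inf} u_{\rm hor} - u_{\rm hor} u_{\rm inf}) H = 0$, producing exactly \eqref{eq:explinhom2}. Differentiating once more and using \eqref{eq:homode} to replace $u_{\rm hor}''$ and $u_{\rm inf}''$ by $(V-\omega^2)u_{\rm hor}$ and $(V-\omega^2)u_{\rm inf}$, the new boundary contribution is $W^{-1}(u_{\rm inf}' u_{\rm hor} - u_{\rm hor}' u_{\rm inf}) H = H$, yielding $u_{\rm var}'' + (\omega^2 - V) u_{\rm var} = H$. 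The boundary conditions are then verified by rewriting
\begin{equation*}
u_{\rm var}' + i\widetilde{\omega} u_{\rm var} = W^{-1}\Big[(u_{\rm inf}' + i\widetilde{\omega} u_{\rm inf})\!\int_{1}^{r}\!(\cdots)\,dr' + (u_{\rm hor}' + i\widetilde{\omega} u_{\rm hor})\!\int_{r}^{\infty}\!(\cdots)\,dr'\Big],
\end{equation*}
where the first term vanishes as $r \to 1^+$ because $\int_1^r \to 0$ by the integrability shown above, and the second vanishes because $u_{\rm hor}' + i\widetilde{\omega} u_{\rm hor} \to 0$ by \eqref{eq:hombc1}. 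The symmetric computation at $r_* \to +\infty$, using \eqref{eq:hombc2}, confirms the outgoing condition \eqref{eq:bcinf}.

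Finally, the difference $u - u_{\rm var}$ is a homogeneous solution satisfying both \eqref{eq:bchor} and \eqref{eq:bcinf}. Since $W \neq 0$, the pair $\{u_{\rm hor}, u_{\rm inf}\}$ spans the homogeneous solution space; writing $u - u_{\rm var} = \alpha u_{\rm hor} + \beta u_{\rm inf}$, the boundary condition at $r_* = -\infty$ forces $\beta = 0$ (as $u_{\rm inf}$ contributes an $e^{i\widetilde{\omega} r_*}$ mode that violates \eqref{eq:bchor}), and the one at $r_* = +\infty$ then forces $\alpha = 0$, establishing \eqref{eq:explinhom1}, and \eqref{eq:explinhom2} follows from the derivative computed in the second step. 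The principal subtlety in the argument is not algebraic but concerns the integrability near the event horizon: the precise weight $\Delta^{-2}r^4$ appearing in \eqref{eq:Hintbound} is exactly what Cauchy--Schwarz requires to absorb the $\Delta^{-1}$ singularity of the Jacobian in the $dr$-integration, which is the reason Proposition \ref{prop:smoothnessu}(ii) was stated in that particular form.
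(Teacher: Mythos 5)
Your proposal is correct and takes essentially the same route as the paper: both construct the right-hand side via variation of parameters, appeal to the boundedness of $u_{\rm hor}$, $u_{\rm inf}$ together with Proposition \ref{prop:smoothnessu}(ii) for well-definedness, invoke the Wronskian bound of \cite{costa20} to make $W^{-1}$ sensible, verify directly that the candidate solves \eqref{eq:odeU} with the correct boundary conditions, and conclude by uniqueness. You fill in details the paper compresses (the explicit Cauchy--Schwarz bookkeeping, the derivative computations, and the span argument for uniqueness); the one place where your argument is softer than you suggest is the convergence of the \emph{tail} integral $\int_r^{\infty} u_{\rm inf} H \,(r'^2+a^2)\Delta^{-1}\,dr'$, for which the $(b-a)$-prefactor in your displayed inequality is not finite — there one must instead use the spatial support/decay of $F_\xi$ and $G$ inherited from the initial data (or interpret the integral as a conditionally convergent oscillatory integral), a point the paper's proof also leaves implicit.
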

\begin{proof}
The right-hand sides of \eqref{eq:explinhom1} and \eqref{eq:explinhom2} are well defined since the integrals are bounded for almost every $\omega$ by (ii) of Proposition \ref{prop:smoothnessu} and the (non-quantitative) boundedness of $u_{\rm hor}$, $u_{\rm inf}$, and furthermore, $W^{-1}$ is bounded for $\omega\notin \{0,m\upomega_+\}$ by \cite{costa20}[Theorem 1.3]; see also Theorem \ref{thm:wronskianbound} below. 

By the smoothness of $H$ in $r_*$ established in (i) of Proposition \ref{prop:smoothnessu}, it therefore follows that the right-hand side of \eqref{eq:explinhom1} constitutes a smooth solution to \eqref{eq:odeU} for almost every $\omega$ that moreover satisfies the boundary conditions \eqref{eq:bchor} and \eqref{eq:bcinf}. By uniqueness, it must therefore be equal to $u$ for almost every $\omega$.
\end{proof}

To obtain estimates for \eqref{eq:odeU}, we will need the  following lower bound assumption on the Wronskian:
	\begin{equation*}
				|\omega \tomega||W^{-1}|^{2}\leq  W_0.
			\end{equation*}
			for $a=1$ and for all admissible $(\omega,m,\Lambda)$ satisfying $\Lambda\leq B_0m^2$, for all $m\in \Z\setminus\{0\}$ with $B_0>0$ a uniform constant.
			
			As an immediate corollary of \eqref{eq:wronksrel1} and \eqref{eq:wronksrel2}, we obtain the validity of the above Wronskian bound in the non-superradiant frequency regime. 
		\begin{corollary}
		Let $m\omega \notin (0,m\upomega_+)$. Then
		\begin{equation*}
		|\omega \tomega||W^{-1}|^{2}\leq \frac{1}{4}.
		\end{equation*}
		\end{corollary}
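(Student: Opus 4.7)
The plan is to unpack the two displayed identities \eqref{eq:wronksrel1}--\eqref{eq:wronksrel2} from \S \ref{sec:homsoln} and combine them directly. From \eqref{eq:wronksrel2}, one has $|W|^{2}=4\tomega^{2}|A_{-}|^{2}$, so the desired bound $|\omega\tomega||W^{-1}|^{2}\leq \tfrac{1}{4}$ is equivalent to the lower bound $|\tomega||A_{-}|^{2}\geq |\omega|$, i.e.\ $|A_{-}|^{2}\geq \omega/\tomega$ once we know that $\omega/\tomega\geq 0$.

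Thus the entire content of the corollary reduces to translating the hypothesis $m\omega\notin (0,m\upomega_{+})$ into the sign statement $\omega\tomega\geq 0$. Since $\tomega=\omega-m\upomega_{+}$, the product $\omega\tomega$ is negative precisely when $\omega$ lies strictly between $0$ and $m\upomega_{+}$, which is the superradiant range; the hypothesis rules this out, so $\omega\tomega\geq 0$, equivalently $\omega/\tomega\geq 0$ (where we may assume $\tomega\neq 0$, since otherwise the claim is vacuous).

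With that sign in hand, \eqref{eq:wronksrel1} gives
\begin{equation*}
|A_{-}|^{2}=|A_{+}|^{2}+\frac{\omega}{\tomega}\geq \frac{\omega}{\tomega}=\frac{|\omega|}{|\tomega|},
\end{equation*}
and substituting into $|W|^{2}=4\tomega^{2}|A_{-}|^{2}$ yields $|W|^{2}\geq 4|\omega||\tomega|$, which is the desired estimate. There is no real obstacle here: the Wronskian identity and the $T$-current identity from the proof of Lemma \ref{lm:prelimhomest} already carry all the information, and one simply has to verify that the hypothesis forces the sign on the right-hand side of \eqref{eq:wronksrel1} to be non-positive, so that $|A_{-}|^{2}$ dominates $|\omega/\tomega|$ rather than cancelling against it (which is exactly the superradiant scenario the hypothesis excludes).
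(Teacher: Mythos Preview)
Your proof is correct and essentially identical to the paper's. The paper combines \eqref{eq:wronksrel1} and \eqref{eq:wronksrel2} into the single line $|W|^{2}=4\tomega^{2}|A_{+}|^{2}+4\omega\tomega\geq 4\omega\tomega$ and then invokes $\omega\tomega>0$; you keep things in terms of $|A_{-}|^{2}$ and reach the same inequality, which is just a trivial algebraic rearrangement of the same two identities.
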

		\begin{proof}
		We combine \eqref{eq:wronksrel1} and \eqref{eq:wronksrel2} to obtain:
		\begin{equation*}
		|W|^{2}=4\tomega^2|A_+|^2+4\omega \tomega\geq 4\omega \tomega.
		\end{equation*}
		We conclude the bound on $|W|^{-1}$ by using that the assumption on $\omega$ implies that $\omega \tomega>0$.
		\end{proof}

In \cite{costa20}, a bound has been obtained for $W^{-1}$ that applies also to superradiant frequencies in the case $a=1$ and with a constant that may depend on $m$:
\begin{theorem}[Proposition 6.3 in \cite{costa20}]
\label{thm:wronskianbound}
Let $a=1$ and $(\omega,m,\Lambda)\in \mathcal{G}^{\flat}$ with $|\omega|\geq \omega_0$. Then there exists a constant $W_0(\omega_0,\omega_{\rm high}, \epsilon_{\rm width}, m)>0$ such that
\begin{equation}
\label{assm:wronk}
|\omega \tomega||W^{-1}|^{2}\leq W_0.
\end{equation}
\end{theorem}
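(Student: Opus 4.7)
My approach is to reduce the uniform bound to two separate facts: (i) pointwise non-vanishing of $W(\omega,m,\Lambda)$ for real $\omega \notin \{0, m\upomega_+\}$ (``mode stability'' at real frequencies), and (ii) a quantitative degeneracy estimate of the form $|W|\gtrsim \sqrt{|\tomega|}$ as $\omega \to m\upomega_+$. Once these are established, uniformity follows from a continuity/compactness argument: on the bounded frequency region $\{(\omega,m,\Lambda)\in\mathcal{G}^{\flat}\colon |\omega|\ge \omega_0\}$, the map $\omega\mapsto W$ is continuous and nowhere zero away from $m\upomega_+$, while near $m\upomega_+$ we have $|\omega\tomega||W|^{-2}\le |\omega||W|^{-2}|\tomega|$ bounded by (ii). Since $\Lambda$ ranges over a finite set for fixed $m$ in $\mathcal{G}^{\flat}$ (bounded by $\epsilon_{\rm width}^{-1}\omega_{\rm high}^2 m^2$), compactness yields the uniform constant $W_0$.

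For step (i), assume for contradiction that $W(\omega_*,m,\Lambda)=0$ for some real $\omega_*\ne 0, m\upomega_+$. Then $u_{\rm hor}$ and $u_{\rm inf}$ are proportional, yielding a nontrivial solution $u$ of \eqref{eq:homode} satisfying both boundary conditions \eqref{eq:hombc1} and \eqref{eq:hombc2}. The standard microlocal energy identity $\Im(\overline{u}u')' = 0$ integrates, after using the boundary conditions, to $\omega_*|u(+\infty)|^2 + \tomega_*|u(-\infty)|^2 = 0$. In the non-superradiant range $m\omega_*\notin(0,m\upomega_+)$ this already forces $u\equiv 0$, a contradiction. In the superradiant range, I would adapt Whiting's transformation to the extremal setting: conjugate the radial ODE by an appropriate complex-valued integrating factor (explicitly chosen so that the Teukolsky--Starobinsky-type transformation of the angular part converts the superradiant energy identity into a definite one), then repeat the boundary identity for the transformed variable to obtain a coercive estimate forcing $u\equiv 0$. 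The extremal modification of Whiting's transformation requires care because $r=1$ becomes a confluence of singular points.

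For step (ii), I would use precisely the asymptotic analysis already developed in \S\ref{sec:homsoln}. Writing $u_{\rm inf} = A_+ u_{\rm hor} + A_- \overline{u_{\rm hor}}$ near $r_*=-\infty$, relation \eqref{eq:wronksrel2} gives $W = 2i\tomega A_-$, while \eqref{eq:wronksrel1} gives $|A_+|^2 - |A_-|^2 = -\omega/\tomega$. Thus $|W|^2 = 4\tomega^2|A_-|^2$ and it suffices to bound $|A_-|^2$ below by $|A_+|^2+|\omega/\tomega|$ uniformly, up to a ratio depending only on $m$, as $\omega\to m\upomega_+$. This requires showing that the ``reflection coefficient'' $A_-/A_+$ stays bounded away from unity (and from zero) in this limit, which, again by continuity in $\omega$ and the pointwise non-vanishing in (i), holds on compacta away from $m\upomega_+$; at $m\upomega_+$ itself one invokes the change of character of the singular point at $r=1$ from irregular to regular, so the natural scaling $|W|\sim\sqrt{|\tomega|}$ emerges from matching the two Frobenius exponents $-\tfrac12\pm i\sqrt{-\alpha_{m\ell}}$ of Lemma~\ref{lm:propfracw} against the oscillatory modes $e^{\pm i\tomega r_*}$.

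The main obstacle is the superradiant piece of step (i): the direct energy identity fails, so one must construct an extremal analogue of Whiting's transformation and verify that its image potential admits a coercive multiplier. This is technically the heart of Costa's argument in \cite{costa20} and is where all the delicate behavior near $\omega = m\upomega_+$ enters. The compactness step and the degeneracy analysis near $m\upomega_+$ are comparatively routine once (i) is in hand, but the uniformity in $\Lambda$ (for fixed $m$) also relies implicitly on continuous dependence of the homogeneous solutions on the angular eigenvalue, which is standard ODE perturbation theory.
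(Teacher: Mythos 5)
The paper does not prove Theorem~\ref{thm:wronskianbound}; it cites it verbatim as Proposition~6.3 of~\cite{costa20}, and only reproduces (in the corollary immediately preceding it) the easy non-superradiant case, where the identities \eqref{eq:wronksrel1}, \eqref{eq:wronksrel2} give $|W|^2 = 4\tomega^2|A_+|^2 + 4\omega\tomega \geq 4\omega\tomega > 0$. So there is no ``paper's own proof'' to compare against -- your proposal is an attempted reconstruction of an external argument.

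As such a reconstruction, your outline is directionally sound -- the reduction to real-frequency mode stability plus a degeneracy estimate near $\omega = m\upomega_+$, and the compactness argument to get uniformity, are indeed the skeleton of Costa's proof -- but both critical steps are asserted rather than established. For step~(i), the superradiant case requires constructing an extremal analogue of Whiting's integral transformation where the confluence of $r=r_\pm$ at $r=1$ changes the kernel structure; you flag this but do not carry it out, and the non-superradiant energy identity is the only part you actually verify. For step~(ii), the claim that $|A_-/A_+|$ ``stays bounded away from unity'' by continuity plus non-vanishing is circular: away from $m\upomega_+$ it is subsumed by compactness, and at $m\upomega_+$ it is precisely the content of the theorem, not something continuity supplies. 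The identity $|A_+|^2 - |A_-|^2 = -\omega/\tomega$ alone does \emph{not} give a lower bound on $|A_-|^2$ in the superradiant range (there $\omega\tomega<0$, so $|A_-|^2 = |A_+|^2 - |\omega/\tomega|$ could in principle collapse). Costa extracts the rate $|W|\gtrsim \sqrt{|\tomega|}$ via a genuine matched-asymptotics analysis across the three regions $|r_*|\lesssim |\tomega|^{-1}$, $|r_*|\sim |\tomega|^{-1}$, and $|r_*|\gtrsim |\tomega|^{-1}$, not merely by reading off the Frobenius exponents of the limiting $\tomega=0$ problem; the exponent matching heuristic you invoke skips exactly the connection problem that produces the $\sqrt{|\tomega|}$ scale.
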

		
		The aim of Proposition \ref{prop:horest} below is to combine the homogeneous estimates from Lemma \ref{lm:prelimhomest} and Proposition \ref{prop:addestuout} with the Wronskian bound \eqref{assm:wronk} and obtain estimates for solutions to the inhomogeneous ODE that are valid in the near-horizon region $r\leq 1+|m|^{-1}|\tomega| z_0$. 
		
		So far, we have made no assumptions on $H$ in the integrated, fixed-frequency estimates. Now, will assume that $H$ can be related to $F=F_{\xi}+G$ appearing on the right-hand side of \eqref{eq:waveeqxi}, using that $u$ can be related to $\upphi$ via a Fourier transform.
		
		\begin{proposition}
			\label{prop:horest}
			Let $a=1$ and $(\omega, m,\Lambda)\in \mathcal{G}^{\flat}$. Assume that $u$ is a solution to \eqref{eq:odeU}.
				
			Then for $\epsilon>0$ arbitrarily small and $z_0$ arbitrarily large, there exists a constant $C_m>0$ such that
			\begin{equation}
			\label{eq:nearhorest}
			\begin{split}
				\int_{{\mathcal{B}_<}}&\sum_{\ell}\mathbf{1}_{|\tomega|\geq z_0^{-1}|m|(r-1)}(r-1)(|u|^2+|\tomega|^{-2}|u'|^2)(r)\,d\tomega\\
				\leq &\: C_mB_{\rm hom}^4 W_0 \left[\int_{0}^1 \int_{\Sigma_{\tau}}\Delta^{\frac{1-\epsilon}{2}}r^{-1+\epsilon}r^2|F_{
		\xi}|^2\,d\sigma dr d\tau+\int_0^{\infty}\int_{\Sigma_{\tau}}(1+\tau)^{1+\delta}\Delta^{\frac{1-\epsilon}{2}}r^{-1+\epsilon}r^2|G_{
		\xi}|^2\,d\sigma drd\tau\right],
							\end{split}
			\end{equation}
			where $\mathbf{1}_{|\tomega|\geq z_0^{-1}|m|(r-1)}$ denotes an indicator function.
		\end{proposition}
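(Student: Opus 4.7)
The plan is to start from the representation formula of Lemma \ref{lm:explititinhom}, which gives $u(r)=W^{-1}[u_{\rm inf}(r)I_1(r)+u_{\rm hor}(r)I_2(r)]$ and the analogous formula for $u'$, where $I_1(r)=\int_1^r u_{\rm hor}H\tfrac{r'^2+a^2}{\Delta}dr'$ and $I_2(r)=\int_r^\infty u_{\rm inf}H\tfrac{r'^2+a^2}{\Delta}dr'$. In the region $1\leq r\leq 1+z_0|m|^{-1}|\tomega|$ selected by the indicator function, the pointwise bounds \eqref{eq:hombound1}--\eqref{eq:hombound2} of Lemma \ref{lm:prelimhomest} give $|u_{\rm hor}|+|\tomega|^{-1}|u_{\rm hor}'|\lesssim B_{\rm hom}$ and $|u_{\rm inf}|+|\tomega|^{-1}|u_{\rm inf}'|\lesssim B_{\rm hom}(|W|+\sqrt{|\omega||\tomega|})/|\tomega|$. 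Combined with the Wronskian bound $|\omega\tomega||W|^{-2}\leq W_0$ of Theorem \ref{thm:wronskianbound}, one finds $(r-1)|W|^{-2}|u_{\rm inf}|^2\lesssim B_{\rm hom}^2(1+W_0)z_0/(|m\tomega|)$ and $(r-1)|W|^{-2}|u_{\rm hor}|^2\lesssim B_{\rm hom}^2 W_0 z_0/(|m\omega|)$, with corresponding bounds for $(r-1)|\tomega|^{-2}|W|^{-2}|u_{\rm inf}'|^2$ and $(r-1)|\tomega|^{-2}|W|^{-2}|u_{\rm hor}'|^2$.

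The integrals $I_j$ are estimated by Cauchy--Schwarz, $|I_j(r)|^2\leq A_j(r)\cdot B_j(r)$, with the weight chosen so that, after substituting $H=\Delta(r^2+a^2)^{-3/2}\rho^2 F$, the factor $B_j(r)=\int w\,|H|^2\,\tfrac{r'^2+a^2}{\Delta}dr'$ reproduces the density $\Delta^{(1-\epsilon)/2}r^{-1+\epsilon}r^2|F|^2$ appearing on the right-hand side of \eqref{eq:nearhorest}. A short calculation shows that $w(r')\sim \Delta(r')^{-(1-\epsilon)/2}(r')^{1-\epsilon}$ achieves this. The dual factor $A_j(r)=\int w^{-1}|u_\square|^2\tfrac{r'^2+a^2}{\Delta}dr'$ is then split into the inner region $[1,1+z_0|m|^{-1}|\tomega|]$, where the $L^\infty$ bounds of Lemma \ref{lm:prelimhomest} and the integrable weight $(r-1)^{-1+\epsilon/2}$ close the integral, and the complementary outer region, where the weighted $L^2$ bounds \eqref{eq:keynearsuperresthor1} on $u_{\rm hor}$ and \eqref{eq:keynearsuperrest2} on $u_{\rm inf}$ apply directly. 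The outcome is $A_j(r)\lesssim B_{\rm hom}^2(|m\tomega|+|W|^2)$ (up to polynomial factors in $m$), so that all surviving $|W|^2$ factors cancel the $|W|^{-2}$ prefactors above, while the $|m\tomega|$ terms pair against the $1/(|m\tomega|)$ prefactor from the pointwise bounds on $u_{\rm inf}$, all within the budget $B_{\rm hom}^4 W_0$.

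Summing the resulting pointwise-in-$r$ estimate for $(r-1)(|u|^2+|\tomega|^{-2}|u'|^2)$ over $\ell$ and integrating over $\tomega\in\mathcal{B}_<$, Plancherel converts $\int\sum_\ell|(\rho^2F)_{m\ell}^{(a\omega)}|^2\,d\omega$ into the spacetime integral $\int_\R\int_{\mathbb{S}^2}|\rho^2F|^2\,d\sigma dt$. Using $\rho^2\sim r^2$ and $F=F_\xi+G$ yields the desired RHS: the contribution from $F_\xi$ is supported in $\{0\leq\tau\leq 1\}$ by the definition \eqref{eq:Fxi} of the cutoff, while the $G$-contribution, which is a priori an $L^2_t$-quantity, is controlled by $\int_0^\infty(1+\tau)^{1+\delta}\|G\|^2d\tau$ after a weighted Cauchy--Schwarz/Plancherel step that absorbs $\int(1+\tau)^{-1-\delta}d\tau<\infty$.

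The principal technical obstacle is Step 2, namely the simultaneous calibration of the weight $w$ so that: (i) $w^{-1}(r')\tfrac{r'^2+a^2}{\Delta}$ is integrable near $r'=1$ against the $L^\infty$ bounds (where the factor $\tfrac{r'^2+a^2}{\Delta}\sim(r'-1)^{-1}$ is borderline divergent), (ii) $w^{-1}(r')\tfrac{r'^2+a^2}{\Delta}$ matches the positive weight appearing inside \eqref{eq:keynearsuperresthor1}--\eqref{eq:keynearsuperrest2} on the complementary region, and (iii) $w(r')\cdot\rho^4/(r^2+a^2)^2$ produces the target density $\Delta^{(1-\epsilon)/2}r^{-1+\epsilon}r^2$. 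The small $\epsilon>0$ in the final estimate quantifies the gap tolerated between these three constraints, and is also the reason the argument requires $\eta_\flat$ (which controls the size of $\mathcal{B}_<$) to be taken sufficiently small so that the outer integrated bounds of Propositions \ref{prop:addestuout}--\ref{prop:addestuhor} apply uniformly.
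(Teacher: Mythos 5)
Your overall scaffolding (representation formula, $L^\infty$ bounds of Lemma \ref{lm:prelimhomest}, the Wronskian bound of Theorem \ref{thm:wronskianbound}, weighted $L^2$ bounds of Propositions \ref{prop:addestuout}--\ref{prop:addestuhor}, and a final Plancherel step) is aligned with the paper's proof, but the central step is missing and, as written, the argument cannot close.

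The critical gap is the treatment of the $\tomega$-integral. After inserting the pointwise bound $|m\tomega|\,|W^{-1}u_{\rm inf}|^2(r_0)\lesssim B_{\rm hom}^2 W_0\, |m|\,|\tomega|^{-1}$, one is left with $\int_{\mathcal{B}_<}\sum_\ell |\tomega|^{-1}|I_1(r_0)|^2\,d\tomega$, and the factor $|\tomega|^{-1}$ is \emph{not} integrable at $\tomega=0$. If you estimate $|I_1(r_0)|^2\leq A_1 B_1$ by pointwise Cauchy--Schwarz at fixed frequency, then $B_1(r_0;\omega)$ is simply a weighted $L^2_r$ norm of $H$ at fixed $\omega$; summing over $\ell$ and integrating in $\tomega$ gives a spacetime integral of $|F|^2$ with \emph{no} smallness in $r_0-1$. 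Meanwhile the indicator only upgrades $|\tomega|^{-1}\leq z_0 |m|^{-1}(r_0-1)^{-1}$, so you are left with an uncontrolled factor $(r_0-1)^{-1+\epsilon}$ (the $(r_0-1)^\epsilon$ from $\int_1^{r_0}w^{-1}(r^2+a^2)^{-1}dr'$ does not suffice). The paper avoids this by \emph{not} applying Cauchy--Schwarz at fixed frequency: it first integrates by parts to isolate the oscillating factor $e^{-i\tomega r_*}$ in $u_{\rm hor}$ (via $u_{\rm hor}=e^{-i\tomega r_*}(1+\varepsilon_+)$) and the antiderivative $\int_1^r e^{-i\tomega r_*}\frac{H}{\Delta}(r'^2+a^2)\,dr'$; it then substitutes the inverse Fourier representation of $H$ in $t$, changes variables to $(u,\tau)$, and applies Plancherel in $u$. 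This converts the troublesome $\int_{\mathcal{B}_<}|m||\tomega|^{-1}|\cdots|^2\,d\tomega$ into $(r_0-1)^{-1}\int_{\R_\tau}\int_{\R_u}\mathbf{1}_{r\leq r_0}|F|^2\Delta^2(r^2+a^2)^{-4}\,d\sigma du d\tau$, where the \emph{positive} power $\Delta^2$ together with the restriction $r\leq r_0$ precisely absorbs the $(r_0-1)^{-1}$ via $(r_0-1)^{-1}(r-1)^2\leq (r_0-1)^q (r-1)^{1-q}$. Without this exchange of oscillation for $\Delta$-degeneracy the estimate fails.

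A second, related issue: your claimed bound $A_j\lesssim B_{\rm hom}^2(|m\tomega|+|W|^2)$ does not follow from Propositions \ref{prop:addestuout}--\ref{prop:addestuhor}. The estimates \eqref{eq:keynearsuperresthor1} and \eqref{eq:keynearsuperrest2} control $|u_{\rm hor}|^2,|u_{\rm inf}|^2$ with weight $r^{-1-\epsilon}(1-r^{-1})^\epsilon$ (and additional degenerate or nondegenerate factors), which near $r=1$ is $(r-1)^{\epsilon}$, whereas your $A_j$ requires the much more singular weight $w^{-1}(r^2+a^2)^{-1}\sim (r-1)^{-1+\epsilon}$. These weighted $L^2$ bounds are used in the paper \emph{after} the integration by parts, against the term $\int_{1+z_0|m|^{-1}|\tomega|}^{R_0}(r-1)^{\frac{q+1}{2}}\left|\frac{d}{dr}(e^{i\tomega r_*}u_{\rm inf})\right|\,dr$ (and its $u_{\rm hor}$ analogue); they are not applicable in the form you propose.
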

		\begin{proof}
			Let $m$ be fixed. We will first estimate:
			\begin{equation*}
				\int_{{\mathcal{B}_<}}|m\tomega|\sum_{\ell}\left|W^{-1}u_{\rm inf}(r_0)\int_{1}^{r_0} u_{\rm hor}\frac{H}{\Delta}(r^2+a^2)\,dr\right|^2\,d\tomega.
			\end{equation*}
			for $r_0-1\leq z_0|m|^{-1}|\tomega|$.
			
			We can alternatively write this as
			\begin{equation}
			\label{eq:inhomesthor}
				\int_{{\mathcal{B}_<}}\mathbf{1}_{|\tomega|\geq z_0^{-1}|m|(r_0-1)}\sum_{\ell}|m\tomega|\left| W^{-1}u_{\rm inf}(r_0)\int_{1}^{r_0} u_{\rm hor}\frac{H}{\Delta}(r^2+a^2)\,dr\right|^2\,d\tomega.
			\end{equation}
			
			We first apply \eqref{eq:hombound2} and assumption \eqref{assm:wronk} to estimate:
			\begin{equation*}
				|m\tomega|| W^{-1}u_{\rm inf}|^2(r_0)\leq B_{\rm hom}^2W_0|m||\tomega|^{-1}.
			\end{equation*}
			
			By splitting $u_{\rm hor}=e^{-i\tomega r_*}(1+\varepsilon_+)$ according to \eqref{eq:odeerrorest1}, we will first exploit the oscillation of $e^{-i\tomega r_*}$ to estimate \eqref{eq:inhomesthor}.
			
			We apply an inverse Fourier transform and a change of variables, to obtain:
			\begin{equation*}
			\begin{split}
				&\left|\int_{1}^{r_0} e^{-i\tomega r_*}\frac{H}{\Delta}(r^2+a^2)\,dr\right|^2=\frac{1}{2\pi}\left|\int_{\s^2}\int_{1}^{r_0} \int_{\R_t}e^{-i\tomega r_*+i\omega t}F\, dt dr\, e^{-im\varphi}S_{m\ell}(\theta;a\omega)d\sigma\right|^2\\
				=&\:\frac{1}{2\pi}\left|\int_{\s^2}\int_{\R_u} \int_{\R_{\tau}}e^{i\tomega u+im\upomega_+ (u+v)}\mathbf{1}_{r\leq r_0}F\Delta(r^2+a^2)^{-1}\,du d\tau\, e^{-im\varphi}S_{m\ell}(\theta;a\omega)d\sigma\right|^2.
				\end{split}
			\end{equation*}
			Hence, by applying Plancherel on $\s^2$, we obtain:
			\begin{equation*}
			\begin{split}
				\int_{{\mathcal{B}_<}}&\mathbf{1}_{|\tomega|\geq z_0^{-1}|m|(r_0-1)}|m||\tomega|^{-1}\sum_{\ell}\left|\int_{1}^{r_0} e^{-i\tomega r_*}\frac{H}{\Delta}(r^2+a^2)\,dr\right|^2\,d\tomega\\
				=&\:\frac{1}{2\pi}\int_{\s^2}\int_{{\mathcal{B}_<}}\mathbf{1}_{|\tomega|\leq z_0^{-1}|m|(r_0-1)}|m||\tomega|^{-1}\left|\int_{\R_u} \int_{\R_{\tau}}e^{i\tomega u+im\omega_+ (u+v)}\mathbf{1}_{r\leq r_0}F\Delta(r^2+a^2)^{-1}\, du d\tau\right|^2\,d\tomega d\sigma.
				\end{split}
			\end{equation*}
			We now use that
			\begin{equation*}
				|m||\tomega|^{-1}\mathbf{1}_{|\tomega|\geq z_0^{-1}|m|(r_0-1)}\leq z_0^{-1}(r_0-1)^{-1}
			\end{equation*}
			and we split $F=F_{\xi}+\xi G$ and apply Cauchy--Schwarz in $\tau$, using that $\xi$ is supported in $\tau\geq 0$ and $\dot{\xi}$ is supported in $\{0\leq \tau\leq 1\}$ to further estimate the right-hand side above and obtain
			\begin{equation*}
			\begin{split}
				\int_{{\mathcal{B}_<}}&\sum_{\ell}\mathbf{1}_{|\tomega|\geq z_0^{-1}|m|(r_0-1)}|m||\tomega|^{-1}\left|\int_{1}^{r_0} e^{-i\tomega r_*}\frac{H}{\Delta}(r^2+a^2)\,dr\right|^2\,d\tomega\\
				\leq &\: C(r_0-1)^{-1}\int_{\R_{\tau}}\int_{\s^2}\int_{{\mathcal{B}_<}}\left|\int_{\R_u}e^{i(\tomega+m\omega_+)u}\mathbf{1}_{r\leq r_0}F_{\xi}\,\Delta(r^2+a^2)^{-1} du \right|^2\, d\tomega d\sigma d\tau\\
				+&\:C(r_0-1)^{-1}\int_{\R_{\tau}}(1+\tau^2)^{\frac{1+\delta}{2}}\int_{\s^2}\int_{{\mathcal{B}_<}} \left|\int_{\R_u}e^{i(\tomega+m\omega_+) u}\mathbf{1}_{r\leq r_0}\xi G\Delta(r^2+a^2)^{-1}\, du \right|^2\, d\tomega d\sigma d\tau.
				\end{split}
			\end{equation*}
			We apply Plancherel in $u$ to estimate the right-hand side further and conclude:
			\begin{equation}
			\label{eq:keyestHint}
			\begin{split}
				\int_{{\mathcal{B}_<}}&\mathbf{1}_{|\tomega|\geq z_0^{-1}|m|(r_0-1)}|m||\tomega|^{-1}\sum_{\ell}\left|\int_{1}^{r_0} e^{-i\tomega r_*}\frac{H}{\Delta}(r^2+a^2)\,dr\right|^2\,d\tomega\\
				\leq &\: C(r_0-1)^{-1}\int_{\R_{\tau}}\int_{\R_u}\int_{\s^2}	\mathbf{1}_{r\leq r_0}(|F_{\xi}|^2+(1+\tau^2)^{\frac{1+\delta}{2}}\xi^2|G|^2)\Delta^2(r^2+a^2)^{-4}\,d\sigma du d\tau\\
				\leq&\: C(r_0-1)^{-1}\int_{0}^1 \int_{\Sigma_{\tau}\cap\{r\leq r_0\}}\Delta(r^2+a^2)^{-1}|F_{
		\xi}|^2\,d\sigma dr d\tau\\
		&+ C(r_0-1)^{-1}\int_0^{\infty}\int_{\Sigma_{\tau}\cap\{r\leq r_0\}}\Delta(r^2+a^2)^{-1}(1+\tau)^{1+\delta}|G|^2\,d\sigma drd\tau\\
		\leq&\: C(r_0-1)^{q}\int_{0}^1 \int_{\Sigma_{\tau}\cap\{r\leq r_0\}}\Delta^{\frac{1-q}{2}}(r^2+a^2)^{-1}|F_{
		\xi}|^2\,d\sigma dr d\tau\\
		&+ C(r_0-1)^{q}\int_0^{\infty}\int_{\Sigma_{\tau}\cap\{r\leq r_0\}}\Delta^{\frac{1-q}{2}}(r^2+a^2)^{-1}(1+\tau)^{1+\delta}|G|^2\,d\sigma drd\tau.
							\end{split}
			\end{equation}

			We now estimate the contribution of $\epsilon_+$. We first integrate by parts to write:
			\begin{equation}
			\label{eq:inhomodeesterror}
			\begin{split}
				\int_{1}^{r_0} e^{-i\tomega r_*}\varepsilon_+(r)\frac{H}{\Delta}(r^2+a^2)\,dr=&\:\int_{1}^{r_0} \varepsilon_+(r)\frac{d}{dr}\int_1^r\frac{He^{-i\tomega r_*}}{\Delta}(r^2+a^2)\,dr\\
				=&\:\varepsilon_+(r_0)\int_1^{r_0}\frac{He^{-i\tomega r_*} }{\Delta}(r^2+a^2)\,dr\\
				&-\int_{1}^{r_0} \frac{d\varepsilon_+}{dr}(r)\int_1^r\frac{He^{-i\tomega r_*}}{\Delta}(r^2+a^2)\,dr.
				\end{split}
			\end{equation}
			
			By \eqref{eq:odeerrorest2} and \eqref{eq:odeerrorest3}, it follows that for $r_0-1\leq z_0\frac{|\tomega|}{|m|}$
			\begin{align*}
			|\varepsilon_+(r_0)|\leq &\:C_m ,\\
			\left|\frac{d\varepsilon_+}{dr_*}(r_*)\right|\leq &\:C_m|r_*|^{-1}.
			\end{align*}
			
			Hence, the first term on the right-hand side of \eqref{eq:inhomodeesterror} can be bounded via \eqref{eq:keyestHint} after integrating over $\mathcal{B}_<$. We estimate the second term on the right-hand side of \eqref{eq:inhomodeesterror} as follows:
			\begin{equation*}
			\begin{split}
			\Bigg|\int_{1}^{r_0}& \frac{d\varepsilon_+}{dr}(r)\int_1^r\frac{He^{-i\tomega r_*}}{\Delta}(r^2+a^2)\,dr'\,dr\Bigg|\\
			\leq &\: \int_{1}^{r_0} (r-1)^{\frac{1+q}{2}}\left|\frac{d\varepsilon_+}{dr}(r)\right|\,dr\cdot \sup_{1\leq r\leq r_0} (r-1)^{-\frac{1+q}{2}}\left|\int_1^r\frac{He^{-i\tomega r_*}}{\Delta}(r'^2+a^2)\,dr'\right|\\
			\leq &\: \int_{-\infty}^{r_*(r_0)} (r-1)^{\frac{q}{2}}\left|\frac{d\varepsilon_+}{dr_*}(r_*)\right|\,dr_*\cdot \sup_{1\leq r\leq r_0} (r-1)^{-\frac{q}{2}}\left|\int_1^r\frac{He^{-i\tomega r_*}}{\Delta}(r'^2+a^2)\,dr'\right|
			\end{split}
			\end{equation*}
which can again be bounded via \eqref{eq:keyestHint} if we take $q>0$ and integrate over $\mathcal{B}_<$.

			We will next estimate:
			\begin{equation*}
				\int_{{\mathcal{B}_<}}(r_0-1)|\sum_{\ell}\left|W^{-1}u_{\rm hor}(r_0)\int_{r_0}^{\infty} u_{\rm inf}\frac{H}{\Delta}(r^2+a^2)\,dr\right|^2\,d\tomega.
			\end{equation*}

			We apply \eqref{eq:hombound1} :
			\begin{equation*}
				| u_{\rm hor}|^2(r_0)\leq B_{\rm hom}^2.
			\end{equation*}
			
			We therefore need to estimate
	\begin{equation}
			\label{eq:inhomestinf}
				\int_{{\mathcal{B}_<}}(r_0-1)|W|^{-2}\sum_{\ell}\left|\int_{r_0}^{\infty} u_{\rm inf}\frac{H}{\Delta}(r^2+a^2)\,dr\right|^2\,d\tomega.
			\end{equation}

			We will use the shorthand notation $\mathfrak{w}$ for $\mathfrak{w}_{\pm}$ and $\check{u}_{\inf}$ for $\check{u}_{\inf,\pm}$. We first integrate by parts and denote $\beta=2\Im\sqrt{2m^2-\Lambda_{+}-\frac{1}{4}}$ to obtain:
			\begin{equation*}
			\begin{split}
				\int_{r_0}^{R_0} u_{\rm inf}\frac{H}{\Delta}(r^2+a^2)\,dr=&\:\int_{r_0}^{R_0} e^{i\tomega r_*}{u}_{\rm inf}\frac{d}{dr}\int_1^r\frac{He^{-i\tomega r_*}}{\Delta}(r^2+a^2)\,dr\\
				=&\:(e^{i\tomega r_*}{u}_{\rm inf})(R_0)\int_1^{R_0}\frac{He^{-i\tomega r_*}}{\Delta}(r^2+a^2)\,dr\\
				&-(e^{i\tomega r_*}{u}_{\rm inf} )(r_0)\int_1^{r_0}\frac{He^{-i\tomega r_*}}{\Delta}(r^2+a^2)\,dr\\
				&-\int_{r_0}^{R_0} \frac{d}{dr}(e^{i\tomega r_*}{u}_{\rm inf})\int_1^r\frac{He^{-i\tomega r_*}}{\Delta}(r^2+a^2)\,dr.
				\end{split}
			\end{equation*}
							Observe that the contribution of the first two terms on the very right-hand side to the integral over $\int_{{\mathcal{B}_<}}\sum_{\ell}$ can be estimated as above, exploiting the presence of the oscillating factor $e^{-i\tomega r_*}$ multiplying $H$.
				
				We estimate the third term on the right-hand side of the equation above as follows:
				\begin{multline*}
					(r_0-1)|W|^{-2}\Bigg|\int_{r_0}^{R_0} \frac{d}{dr}(e^{i\tomega r_*}{u}_{\rm inf})\int_1^r\frac{He^{-i\tomega r_*}}{\Delta}(r^2+a^2)\,dr\Bigg|^2\\
					\leq (r_0-1)|W|^{-2}\left(\int_{r_0}^{R_0}(r-1)^{\frac{q+1}{2}} \left|\frac{d}{dr}(e^{i\tomega r_*}{u}_{\rm inf} )\right|\,dr\right)^2\\
					 \times \sup_{r_0\leq r\leq R_0}(r-1)^{-1-q}\left|\int_1^r\frac{He^{-i\tomega r_*}}{\Delta}(r'^2+a^2)\,dr'\right|^2.
				\end{multline*}
				By the estimates above, we obtain:
		\begin{multline*}
				\int_{{\mathcal{B}_<}}(r-1)^{-1-q}\sup_{r_0\leq r\leq R_0}\sum_{\ell}\left|\int_{1}^{r} e^{-i\tomega r_*}\frac{H}{\Delta}(r^2+a^2)\,dr'\right|^2\,d\tomega\\
				\leq C\int_{0}^{\infty}\int_{1}^{r_0}\int_{\s^2}(|F_{\xi}|^2+(1+\tau)^{1+\delta}\xi^2|G|^2)\Delta^{\frac{1-q}{2}}(r^2+a^2)^{-2}\,d\sigma dr d\tau. 
		\end{multline*}
		
				It remains to estimate the integral involving ${u}_{\rm inf}$. We apply Lemma \ref{lm:prelimhomest} and the estimates \eqref{eq:keynearsuperrest2} and \eqref{eq:keynearsuperrest2b} to obtain:
		\begin{equation*}
		\begin{split}
			(r_0-1)|W|^{-2}&\Bigg(\int_{r_0}^{R_0} (r-1)^{\frac{q+1}{2}}\left|\frac{d}{dr}(e^{i\tomega r_*}{u}_{\rm inf} )\right|\,dr\Bigg)^2\\
			\leq&\:2(r_0-1)|W|^{-2}\Bigg(\int_{r_0}^{1+z_0|m|^{-1}|\tomega|} (r-1)^{\frac{q+1}{2}}\left|\frac{d}{dr}(e^{i\tomega r_*}{u}_{\rm inf})\right|\,dr\Bigg)^2\\
			&\:+2(r_0-1)|W|^{-2}\Bigg(\int_{1+z_0|m|^{-1}|\tomega|}^{R_0} (r-1)^{\frac{q+1}{2}}\left|\frac{d}{dr}(e^{i\tomega r_*}{u}_{\rm inf})\right|\,dr\Bigg)^2\\
			\leq &\: C_m\tomega (r_0-1) |W|^2\int_{r_0}^{1+z_0|m|^{-1}|\tomega|}(r-1)^{-2+q-\epsilon'}\,dr\\
			&\times \int_{1}^{1+z_0|m|^{-1}|\tomega|}(r-1)^{-1+\epsilon'}\frac{|m|}{|\tomega|}(\tomega^2|{u}_{\rm inf}|^2+|{u}_{\rm inf}'|^2)\,dr\\
			&\:+C_m(r_0-1)|W|^{-2}\int_{1+z_0|m|^{-1}|\tomega|}^{R_0}(r-1)^{-1+q-\epsilon'}\,dr\int_{1+z_0|m|^{-1}|\tomega|}^{R_0}(r-1)^{-2+\epsilon'}(\tomega^2|{u}_{\rm inf}|^2+|{u}_{\rm inf}'|^2)\\
			\leq &\: C_m (r_0-1)^{q-\epsilon'}W_0^2B_{\rm hom}^2 |m\tomega|+C_m(R_0-1)^{q-\epsilon'}B_{\rm hom}^2W_0^2\\
			\leq &\: C_mW_0^2B_{\rm hom}^2
			\end{split}
		\end{equation*}
		if $q>\epsilon'>0$.
		
		We can finally estimate
		\begin{multline*}
				(r_0-1)\int_{{\mathcal{B}_<}} |W|^{-2}\sum_{\ell}\left|\int_{R_0}^{\infty} u_{\rm inf}\frac{H}{\Delta}(r^2+a^2)\,dr\right|^2\,d\tomega\leq C_m W_0^2\int_{0}^1 \int_{\Sigma_{\tau}\cap\{r\geq R_0\}}r^2|F_{
		\xi}|^2\,d\sigma dr d\tau\\
		+ C_m W_0^2(r_0-1)^{-1}\int_0^{\infty}\int_{\Sigma_{\tau}\cap\{r\geq R_0\}}(1+\tau)^{1+\delta}r^2|G|^2\,d\sigma drd\tau,
			\end{multline*}
			with $R_0$ arbitrarily large, by writing $u_{\rm inf}=e^{i\omega r_*}+O(r^{-1})$ and exploiting oscillations as above.

		We combine the above estimates and to conclude the estimate involving $u$ in \eqref{eq:nearhorest}. The estimate for $u'$ can be obtained by using \eqref{eq:explinhom2} and repeating the steps above.
		\end{proof}
		
		In the proposition below, we derive additional estimates away from the horizon, in the region $r>1+z_0^{-1}$ with $z_0>0$ arbitrarily small.
		
		\begin{proposition}
			\label{prop:infest}
			Let $a=1$ and $(\omega, m,\Lambda)\in \mathcal{G}^{\flat}_<$ and assume that the Wronskian assumption \eqref{assm:wronk} holds.
				
			Then for $\epsilon>0$ arbitrarily small and $z_0$ arbitrarily large, there exists a constant $C_m>0$ such that for $r\geq 1+z_0^{-1}$:
			\begin{equation}
			\label{eq:nearinfest}
			\begin{split}
				\int_{{\mathcal{B}_<}}&\sum_{\ell} (|u|^2+|\omega|^{-2}|u'|^2)(r)\,d\tomega\\
				\leq &\: C_mB_{\rm hom}^4 W_0 \left[\int_{0}^1 \int_{\Sigma_{\tau}}\Delta^{\frac{1-\epsilon}{2}}r^{-1+\epsilon}r^2|F_{
		\xi}|^2\,d\sigma dr d\tau+\int_0^{\infty}\int_{\Sigma_{\tau}}(1+\tau)^{1+\delta}\Delta^{\frac{1-\epsilon}{2}}r^{-1+\epsilon}r^2|G|^2\,d\sigma drd\tau\right].
							\end{split}
			\end{equation}
		\end{proposition}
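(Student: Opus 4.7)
The plan is to mirror the strategy of Proposition \ref{prop:horest}, starting from the Wronskian representations \eqref{eq:explinhom1} and \eqref{eq:explinhom2}, now evaluated at radii $r \geq 1 + z_0^{-1}$ bounded away from the event horizon. For such $r$, the pointwise $L^{\infty}$ bounds \eqref{eq:hombound3} and \eqref{eq:hombound4} of Lemma \ref{lm:prelimhomest} give $|u_{\rm inf}(r)| \leq B_{\rm hom}$ and $|W^{-1} u_{\rm hor}(r)| \leq B_{\rm hom}(1 + \sqrt{|\tomega \omega|}|W|^{-1})|\omega|^{-1}$, and the Wronskian assumption \eqref{assm:wronk} yields $|W|^{-2}|\omega \tomega| \leq W_0$. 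Consequently the prefactors $|W^{-1} u_{\rm inf}(r)|$ and $|W^{-1} u_{\rm hor}(r)|$ are controlled, uniformly in the frequency range $\mathcal{G}^{\flat}_<$, by constants depending on $B_{\rm hom}$, $W_0$ and $m$.

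First I would treat the ``horizon-side'' integral $\int_1^r u_{\rm hor}(r') H(r') \frac{r'^2+a^2}{\Delta}\,dr'$. This is split at $r_1 = 1 + z_0 |m|^{-1}|\tomega|$. On $[1, r_1]$, write $u_{\rm hor} = e^{-i\tomega r_*}(1 + \varepsilon_-)$ as in \eqref{eq:odeerrorest1}--\eqref{eq:odeerrorest3} and exploit the oscillating factor $e^{-i\tomega r_*}$ exactly as in \eqref{eq:keyestHint}: inserting the definition of $H$ in terms of $F = F_\xi + \xi G$, performing an inverse Fourier transform in $t$ and a change of variables to $u$, then applying Plancherel in $u$ produces an integral of the form $\int \Delta^{\frac{1-\epsilon}{2}} r^{-1+\epsilon} r^2 (|F_\xi|^2 + (1+\tau)^{1+\delta} |G|^2)$, with the error term involving $\varepsilon_-$ handled via integration by parts as in \eqref{eq:inhomodeesterror}. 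On $[r_1, r]$, I would use Cauchy--Schwarz together with the weighted $L^2$ bound \eqref{eq:keynearsuperresthor1} of Proposition \ref{prop:addestuhor}, so that $u_{\rm hor}$ is absorbed by the weight $r^{-1-\epsilon}(1-r^{-1})^{\epsilon-2}$ and only a power of $(r-1)$ times $|m\tomega| + |W|^2$ remains, which after multiplying by $|W|^{-2}$ is bounded using \eqref{assm:wronk}.

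Next I would estimate the ``infinity-side'' integral $\int_r^\infty u_{\rm inf}(r') H(r') \frac{r'^2+a^2}{\Delta}\,dr'$. On a bounded region $[r, R_0]$, $u_{\rm inf}$ is uniformly bounded by $B_{\rm hom}$ in view of \eqref{eq:hombound3}, so Cauchy--Schwarz and Plancherel give a contribution controlled by $\int \Delta^{\frac{1-\epsilon}{2}} r^{-1+\epsilon} r^2 (|F_\xi|^2 + (1+\tau)^{1+\delta} |G|^2)$. For $r' \geq R_0$, I would write $u_{\rm inf}(r') = e^{i\omega r_*}(1 + O(r'^{-1}))$ using the large-$r$ asymptotics from the proof of Lemma \ref{lm:prelimhomest}, and exploit the oscillation of $e^{i\omega r_*}$ as in the estimate for \eqref{eq:keyestHint}, integrating by parts against the $O(r'^{-1})$ error as in \eqref{eq:inhomodeesterror}.

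Finally, I would integrate over $\mathcal{B}_<$ and sum over $\ell$, using Plancherel on $\mathbb{S}^2$ in the angular variables. The bound for $u'(r)$ then follows by the analogous estimate using \eqref{eq:explinhom2}, where in the region $r \geq 1 + z_0^{-1}$ the derivatives $u_{\rm inf}'(r)$ and $u_{\rm hor}'(r)$ satisfy bounds analogous to those above, multiplied by $|\omega|$. The main obstacle is the estimate of the ``horizon-side'' integral in the near-horizon sub-region $[1, r_1]$: here the good oscillation of $u_{\rm hor}$ is the same mechanism used in Proposition \ref{prop:horest}, but because the evaluation point $r$ lies in the complementary region, one must additionally control the transition at $r_1$ and ensure that the weighted estimate \eqref{eq:keynearsuperresthor1} is compatible with the prefactor $|W^{-1} u_{\rm inf}(r)|$ without further loss; this is where the smallness $\eta_\flat$ of the frequency window and the largeness of $z_0$ must be tuned as in Proposition \ref{prop:addestuout}.
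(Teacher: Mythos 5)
Your outline correctly identifies the ingredients — the Wronskian representations \eqref{eq:explinhom1}–\eqref{eq:explinhom2}, the pointwise bounds \eqref{eq:hombound3}–\eqref{eq:hombound4}, the Wronskian bound \eqref{assm:wronk}, Plancherel, and Proposition \ref{prop:addestuhor} — and matches the paper's general plan, but the treatment of the transition region $[r_1, r_0]$ with $r_1 = 1 + z_0|m|^{-1}|\tomega|$ has a gap. You propose a direct Cauchy--Schwarz of $u_{\rm hor}$ against $H\Delta^{-1}(r^2+a^2)$ there, with \eqref{eq:keynearsuperresthor1} absorbing $u_{\rm hor}$. But \eqref{eq:keynearsuperresthor1} controls $\int m^2 (1-r^{-1})^{\epsilon} r^{-1-\epsilon}|u_{\rm hor}|^2\,dr$, so the dual factor in Cauchy--Schwarz carries a weight of order $(r-1)^{-\epsilon}\Delta^{-2}(r^2+a^2)^2|H|^2 \sim (r-1)^{-\epsilon}|\rho^2 F|^2$ after substituting $H = \Delta(r^2+a^2)^{-3/2}\rho^2 F$. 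After Plancherel this yields a weight $\sim (r-1)^{-\epsilon}$ on the spacetime integral of $|F|^2$ near the horizon, while the right-hand side of \eqref{eq:nearinfest} only permits the weight $\Delta^{(1-\epsilon)/2}r^{1+\epsilon}\sim (r-1)^{1-\epsilon}$, which \emph{vanishes} at $r=1$. Equivalently, this Cauchy--Schwarz generates a factor of order $(r_1-1)^{-2}\sim|\tomega|^{-2}$ which diverges as $|\tomega|\to 0$ over $\mathcal{B}_<$; the argument does not close.

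The missing mechanism is an integration by parts against $He^{-i\tomega r_*}$ performed on the \emph{entire} interval $[1,r_0]$ before any splitting: one writes $u_{\rm hor}\frac{H}{\Delta}(r^2+a^2) = (e^{i\tomega r_*}u_{\rm hor})\,\frac{d}{dr}\!\int_1^r\frac{He^{-i\tomega r_*}}{\Delta}(r'^2+a^2)\,dr'$ and integrates by parts, producing a boundary term at $r_0$ (handled by \eqref{eq:hombound4}, \eqref{assm:wronk}, and the $\sup$-version of the Plancherel estimate \eqref{eq:keyestHintv2}) and the bulk term $\int_1^{r_0}\frac{d}{dr}(e^{i\tomega r_*}u_{\rm hor})\cdot\int_1^r\frac{He^{-i\tomega r_*}}{\Delta}(r'^2+a^2)\,dr'\,dr$. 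The inner antiderivative carries the uniform $\sup$-bound of \eqref{eq:keyestHintv2}; what remains is a weighted integral of $\frac{d}{dr}(e^{i\tomega r_*}u_{\rm hor})$, which \emph{is} amenable to the split at $r_1$ — the $[1,r_1]$ piece is controlled as in Proposition \ref{prop:horest} and the $[r_1,r_0]$ piece by Cauchy--Schwarz against \eqref{eq:keynearsuperresthor1}. The integration by parts is what converts the $\Delta^{-1}$-weighted $H$ into its Plancherel-controllable antiderivative; without it, the split-first-then-Cauchy--Schwarz plan fails. Your final paragraph correctly flags the transition as the delicate point, but the resolution requires this integration by parts, not just tuning of $\eta_{\flat}$ and $z_0$. (Also, the near-horizon error for $u_{\rm hor}$ in \eqref{eq:odeerrorest1} is $\varepsilon_+$, not $\varepsilon_-$.)
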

\begin{proof}
We will make use of the identity \eqref{eq:explinhom1} with $r>1+z_0^{-1}$ to derive \eqref{eq:nearinfest}. We estimate
\begin{equation}
\label{eq:keyestuhorinft}
\begin{split}
\int_{\mathcal{B}_{<}}\sum_{\ell}(|u|^2+\omega^{-2}|u'|^2)(r_0)\,d\tomega\leq &\:\int_{\mathcal{B}_{<}}\sum_{\ell}\frac{|u_{\rm inf}|^2(r_0)+\omega^{-2}|u_{\rm inf}'|^2(r_0)}{|W|^2}\left(\int_1^{r_0} u_{\rm hor}(r) \frac{H}{\Delta}(r^2+a^2)(r)\,dr\right)^2\\
+&\:\frac{|u_{\rm hor}|^2+\omega^{-2}|u_{\rm hor}'|^2(r_0)}{|W|^2}\left(\int_{r_0}^{\infty} u_{\rm inf}(r) \frac{H}{\Delta}(r^2+a^2)(r)\,dr\right)^2\,d\tomega.
\end{split}
\end{equation}
with $r_0>1+z_0^{-1}$. The derivation of \eqref{eq:nearinfest} now proceeds much like the derivation of \eqref{eq:nearhorest}, with the key difference being the application of Proposition \ref{prop:addestuhor} instead of Proposition \ref{prop:addestuout}. 

In particular, we integrate by parts to obtain:
			\begin{equation*}
			\begin{split}
				W^{-1}&\int_{1}^{r_0} u_{\rm hor}\frac{H}{\Delta}(r^2+a^2)\,dr=W^{-1}\int_{1}^{r_0} e^{i\tomega r_*}{u}_{\rm hor}\frac{d}{dr}\int_1^{r}\frac{He^{-i\tomega r_*}}{\Delta}(r'^2+a^2)\,dr'dr\\
				=&\:W^{-1}(e^{i\tomega r_*}{u}_{\rm hor})(r_0)\int_1^{r_0}\frac{He^{-i\tomega r_*}}{\Delta}(r'^2+a^2)\,dr'-W^{-1}\int_{1}^{r_0} \frac{d}{dr}(e^{i\tomega r_*}{u}_{\rm hor})\int_1^r\frac{He^{-i\tomega r_*}}{\Delta}(r'^2+a^2)\,dr'dr
				\end{split}
			\end{equation*}
			
			It follows from \eqref{eq:hombound4} and \eqref{assm:wronk} that:
			\begin{equation*}
			|W|^{-2}\left|e^{i\tomega r_*}{u}_{\rm hor}\right|^2(r_0)\leq C W_0 B_{\rm hom}^2 m^{-2}.
\end{equation*}
From the proof of Proposition \ref{prop:horest}, it follows moreover that for any $1\leq r_1\leq r_0$ and $q>0$:
\begin{equation}
			\label{eq:keyestHintv2}
			\begin{split}
				\int_{{\mathcal{B}_<}}&\sum_{\ell}(r_1-1)^{-1-q}\left|\int_{1}^{r_1} e^{-i\tomega r_*}\frac{H}{\Delta}(r^2+a^2)\,dr\right|^2\,d\tomega\\
				\leq &\: C(r_1-1)^{-1-q}\int_{\R_{\tau}}\int_{\R_u}\int_{\s^2}	\mathbf{1}_{r\leq r_1}(|F_{\xi}|^2+(1+\tau^2)^{\frac{1+\delta}{2}}\xi^2|G|^2)\Delta^2(r^2+a^2)^{-4}\,d\sigma du d\tau\\
				\leq&\: C(r_1-1)^{-1-q}\int_{0}^1 \int_{\Sigma_{\tau}\cap\{r\leq r_1\}}\Delta(r^2+a^2)^{-1}|F_{
		\xi}|^2\,d\sigma dr d\tau\\
		&+ C(r_1-1)^{-1-q}\int_0^{\infty}\int_{\Sigma_{\tau}\cap\{r\leq r_0\}}\Delta(r^2+a^2)^{-1}(1+\tau)^{1+\delta}|G|^2\,d\sigma drd\tau\\
		\leq&\: C\int_{0}^1 \int_{\Sigma_{\tau}\cap\{r\leq r_0\}}\Delta^{\frac{1-q}{2}}(r^2+a^2)^{-1}|F_{
		\xi}|^2\,d\sigma dr d\tau\\
		&+ C\int_0^{\infty}\int_{\Sigma_{\tau}\cap\{r\leq r_0\}}\Delta^{\frac{1-q}{2}}(r^2+a^2)^{-1}(1+\tau)^{1+\delta}|G|^2\,d\sigma drd\tau.
							\end{split}
			\end{equation}
Hence we can control the first term on the very right-hand side of \eqref{eq:keyestuhorinft}. We control the second term on the very right-hand side by observing that:
\begin{multline*}
					\Bigg|W^{-1}\int_{1}^{r_0} \frac{d}{dr}(e^{i\tomega r_*}{u}_{\rm hor})\int_1^r\frac{He^{-i\tomega r_*} }{\Delta}(r^2+a^2)\,dr'dr\Bigg|^2\\
					\leq |W|^{-2}\left(\int_{1}^{r_0}(r-1)^{\frac{q}{2}} \left|\frac{d}{dr}(e^{i\tomega r_*}{u}_{\rm hor} )\right|\,dr\right)^2\\
					\times  \sup_{1\leq r\leq r_0}(r-1)^{-1-q}\left|\int_1^r\frac{He^{-i\tomega r_*}}{\Delta}(r'^2+a^2)\,dr'\right|^2.
				\end{multline*}
The second factor on the right-hand side above can be estimated via \eqref{eq:keyestHintv2}, whereas the first factor can be estimated as follows:
	\begin{equation*}
		\begin{split}
			|W|^{-2}\Bigg(\int_{1}^{r_0}& (r-1)^{\frac{q}{2}}\left|\frac{d}{dr}(e^{i\tomega r_*}{u}_{\rm hor} )\right|\,dr\Bigg)^2\\
			=&\:|W|^{-2}\Bigg(\int_{1}^{1+z_0|m|^{-1}|\tomega|} (r-1)^{\frac{q}{2}}\left|\frac{d}{dr}(e^{i\tomega r_*}{u}_{\rm hor})\right|\,dr\Bigg)^2\\
			&+|W|^{-2}\Bigg(\int_{1+z_0|m|^{-1}|\tomega|}^{r_0} (r-1)^{\frac{q}{2}}\left|\frac{d}{dr}(e^{i\tomega r_*} u_{\rm hor} )\right|\,dr\Bigg)^2.
						\end{split}
		\end{equation*}
		
		The first term on the right-hand side has already been estimated in the proof of Proposition \ref{prop:horest}. The second term, we estimate further by applying Cauchy--Schwarz together with \eqref{eq:keynearsuperresthor1}:
		\begin{equation*}
		\begin{split}
		|W|^{-2}\Bigg(&\int_{1+z_0|m|^{-1}|\tomega|}^{r_0} (r-1)^{\frac{q}{2}}\left|\frac{d}{dr}(e^{i\tomega r_*}{u}_{\rm hor})\right|\,dr\Bigg)^2\\
			\leq &\: C|W|^{-2}\int_{1+z_0|m|^{-1}|\tomega|}^{r_0}(r-1)^{-1+q-\epsilon'}\,dr\\
			&\:\times \int_{1+z_0|m|^{-1}|\tomega|}^{r_0}(r-1)^{-4+\epsilon'}(\tomega^2+(r-1)^2)|{u}_{\rm hor}|^2+(r-1)^{-4+\epsilon'}|{u}_{\rm hor}'|^2\,dr\\
		\leq &\: C_m B^2_{\rm hom}W_0^2,
			\end{split}
		\end{equation*}
		if $q>\epsilon'>0$.
\end{proof}

We now obtain the desired integrated estimate for the frequency range $\mathcal{G}^{\flat}_{<}$.
	
		\begin{proposition}
			Let $a=1$ and $(\omega, m,\Lambda)\in \mathcal{G}^{\flat}_{<}$. Then for any $\delta,\epsilon>0$ arbitrarily small, there exists a constant $c=c(\epsilon, \delta,\omega_{\rm high},\eta_{\flat})>0$ and $C_m= C_m(\epsilon,\delta,\omega_{\rm high},\eta_{\flat},m)>0$ such that
			\begin{equation}
			\label{eq:freqmornearsr}
			\begin{split}
				c\int_{{\mathcal{B}_<}}&\sum_{\ell}\int_{1}^{\infty}r^{-1-\delta}(1-r^{-1})^{\epsilon}\left(\Lambda|u|^2+\frac{1}{\tomega^2+(r-1)^2}|u'|^2\right)\,drd\tomega\\
				\leq&\: \int_{{\mathcal{B}_<}} \int_{1+z_0|m|^{-1}|\tomega|}^{\infty}\chi_+\mathbf{1}_{m\tomega<0}(\tomega)y_{+}\Re\left(\frac{d(\mathfrak{w}_+^{-1}u)}{dr}\overline{\mathfrak{w}_+^{-1}H}\right)-\chi_+\mathbf{1}_{m\tomega>0}(\tomega)y_{-}\Re\left(\frac{d(\mathfrak{w}_-^{-1}u)}{dr}\overline{\mathfrak{w}_-^{-1}H}\right)\,dr d\tomega\\
				&+C_mB_{\rm hom}^4 \left[ \int_{0}^1 \int_{\Sigma_{\tau}}\Delta^{\frac{1-\epsilon}{2}}r^{-1+\epsilon}r^2|F_{
		\xi}|^2\,d\sigma dr d\tau+ \int_0^{\infty}\int_{\Sigma_{\tau}}(1+\tau)^{1+\delta}\Delta^{\frac{1-\epsilon}{2}}r^{-1+\epsilon}r^2|G|^2\,d\sigma drd\tau\right],
		\end{split}
			\end{equation}
			with $\chi_+$ a smooth cut-off function such that $\chi_+(r)=1$ for $r\leq 1+\delta$ and $\chi_+(r)=0$ for $r\geq 1+2\delta$, $y_+$ and $y_-$ behave to leading order like $(r-1)^{-2\mp 2\Im\sqrt{2m^2-\Lambda_{+}-\frac{1}{4}}}$ as $r\downarrow 1$ and  $\mathbf{1}_{r\geq 1+\delta}$, $\mathbf{1}_{m\tomega>0}$ and $\mathbf{1}_{m\tomega<0}$ denoting indicator functions.
		\end{proposition}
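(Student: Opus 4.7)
The plan is to decompose the radial integral into three regions adapted to the change of character of the singular point at $r=1$ in the limit $\tomega \to 0$, and to combine the pointwise-in-$r$ estimates already derived in Propositions \ref{prop:horest} and \ref{prop:infest} with a renormalized current estimate on the intermediate region. Concretely, fix $\delta>0$ small and $z_0>0$ large, and split $[1,\infty)$ into
\[
I_{\mathrm{hor}}:=[1,\,1+z_0|m|^{-1}|\tomega|],\qquad I_{\mathrm{mid}}:=[1+z_0|m|^{-1}|\tomega|,\,1+2\delta],\qquad I_{\infty}:=[1+\delta,\infty),
\]
with $I_{\mathrm{mid}}$ and $I_\infty$ overlapping. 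On $I_{\mathrm{hor}}$ the weight $r^{-1-\delta}(1-r^{-1})^\epsilon \frac{1}{\tomega^2+(r-1)^2} \sim (r-1)^\epsilon \tomega^{-2}$ (since $(r-1)\lesssim |\tomega|$ there), so the pointwise bound \eqref{eq:nearhorest} multiplied by the indicator $\mathbf{1}_{|\tomega|\geq z_0^{-1}|m|(r-1)}$ and integrated in $r$ yields directly the contribution on $I_{\mathrm{hor}}$, after summing the $\Lambda |u|^2$ piece using $\Lambda\lesssim m^2$ and the bound $\int_1^{1+z_0|m|^{-1}|\tomega|}(r-1)^\epsilon\,dr\lesssim |\tomega|^{1+\epsilon}|m|^{-1-\epsilon}$. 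On $I_\infty$ the pointwise estimate \eqref{eq:nearinfest} plays the analogous role, with the weight $r^{-1-\delta}(1-r^{-1})^\epsilon(\tomega^2+(r-1)^2)^{-1}$ bounded by $C\omega^{-2} r^{-1-\delta}$ after noting that for $(\omega,m,\Lambda)\in \mathcal{G}^{\flat}_<$ one has $|\omega|\sim |m\upomega_+|$ and $(r-1)\gtrsim z_0^{-1}$.

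The real work is on $I_{\mathrm{mid}}$, where neither pointwise bound is sharp enough and where we must exploit the renormalization $\check{u}_\pm = \mathfrak{w}_\pm^{-1} u$. Since $u$ solves the inhomogeneous ODE, $\check{u}_\pm$ satisfies
\[
\check{u}_\pm'' + 2\mathfrak{w}_\pm^{-1}\mathfrak{w}_\pm'\,\check{u}_\pm' + [\tomega^2-\check V]\check{u}_\pm = \mathfrak{w}_\pm^{-1} H,
\]
and we apply the current $\check{j}^{y_\pm}$ from \S\ref{sec:homsoln} with $y_+ = \chi_+(r-1)^{-2-\beta}$ for $m\tomega<0$ and $y_- = -\chi_+(r-1)^{-2+\beta}$ for $m\tomega>0$, where $\beta = 2\Im\sqrt{2m^2-\Lambda_+-\frac14}$ and $\chi_+$ localizes to $r\leq 1+2\delta$. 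Reproducing the computation \eqref{eq:keyexprVcheckest}–\eqref{eq:dercheckucancel} verbatim (now with a nontrivial right-hand side $\mathfrak{w}_\pm^{-1} H$) produces, after taking $\eta_\flat,\delta$ small and $z_0$ large, the coercive bulk term
\[
c\int_{1+z_0|m|^{-1}|\tomega|}^{1+\delta}(r-1)^{-3+\epsilon \mp \beta}|\check{u}_\pm'|^2 + |m\tomega|(r-1)^{-2+\epsilon\mp\beta}|\check{u}_\pm|^2\,dr,
\]
balanced against the inhomogeneity $\int \chi_+ y_\pm \Re(\tfrac{d\check{u}_\pm}{dr}\,\overline{\mathfrak{w}_\pm^{-1}H})\,dr$ and two boundary contributions at $r=1+z_0|m|^{-1}|\tomega|$ and $r=1+\delta$. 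The boundary term at the inner endpoint is dominated by $|m\tomega|(r-1)^{-1+\epsilon\mp\beta}|\check{u}_\pm|^2 = |m\tomega|(r-1)^{-1+\epsilon}|\tomega|^{\pm\beta}|u|^2$, which by Proposition \ref{prop:horest} integrates to the data-controlled right-hand side of \eqref{eq:nearhorest}; the outer boundary term at $r\sim 1+\delta$ is absorbed using Proposition \ref{prop:infest}. Translating the $\check u_\pm$ estimate back to $u$ via $|u'|^2\leq C(r-1)^{-1\mp\beta}(|\check u_\pm'|^2 + m^2(r-1)^2|\check u_\pm|^2)$ (cf.\ \eqref{eq:fromucheckinftouinf}) converts the weight $(r-1)^{-3+\epsilon\mp\beta}$ on $|\check{u}_\pm'|^2$ into the desired $(r-1)^{-2+\epsilon}$ weight on $|u'|^2$, i.e.\ $\frac{(1-r^{-1})^\epsilon}{\tomega^2+(r-1)^2}$ in the regime $(r-1)\gtrsim |\tomega|$; and analogously for $|u|^2$ weighted by $\Lambda$.

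The main obstacle, and the reason the right-hand side of \eqref{eq:freqmornearsr} keeps the unabsorbed integral $\int y_\pm \Re(\tfrac{d(\mathfrak{w}_\pm^{-1}u)}{dr}\overline{\mathfrak{w}_\pm^{-1}H})$ rather than estimating it in terms of data, is that the weight $y_\pm$ is sign-selected by $m\tomega$ and the corresponding inhomogeneous term cannot be dominated by a Cauchy–Schwarz-plus-absorption against the coercive bulk without losing the sharp degeneracy $|m\tomega|$ in the zeroth-order term and the sharp power of $(r-1)$ in the derivative term. This matches the way the proposition is stated: the rescaled inhomogeneity integral is kept as a $H$-dependent source term to be exploited later (via an $L^2$-in-$\omega$ reconstruction in physical space, using Plancherel and the relation between $H$ and $F_\xi+\xi G$), while the pointwise-at-a-point boundary terms at the gluing radii are the ones absorbed by Propositions \ref{prop:horest}–\ref{prop:infest}. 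Once the three regional estimates are added, summed in $\ell$ using the basic angular bounds $\Lambda\leq \epsilon_{\mathrm{width}}^{-1}\omega_{\mathrm{high}}^2 m^2$, and integrated in $\tomega$ over $\mathcal{B}_<$, one obtains \eqref{eq:freqmornearsr} with constants depending on $m,\epsilon,\delta,\omega_{\mathrm{high}},\eta_\flat$ through $B_{\mathrm{hom}}$ and $W_0$.
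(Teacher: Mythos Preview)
Your proposal is correct and follows essentially the same approach as the paper. The paper's proof is organized as two steps --- Proposition~\ref{prop:infest} for $r\geq 1+\delta$, then the renormalized current $\check{j}^{y_\pm}[\check u_\pm]$ on $[1+z_0|m|^{-1}|\tomega|,1+2\delta]$ with inner boundary terms handled by Proposition~\ref{prop:horest} and outer boundary terms by Step~1 --- whereas you make the three-region decomposition explicit and separately integrate the pointwise bound \eqref{eq:nearhorest} over $I_{\rm hor}$; this is a cosmetic difference, and your discussion of why the $y_\pm$-inhomogeneity term is retained rather than absorbed is an accurate elaboration of what the paper leaves implicit.
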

		\begin{proof}
		The proof proceeds in \textbf{two steps}. First, we will control the left-hand side of \eqref{eq:freqmornearsr} away from $r=1$ by using Proposition \ref{prop:infest}. We will then derive estimates near $r=1$.

\paragraph{\textbf{Step 1}: estimates away from $r=1$}
We establish control in the region $r\in [1+\delta, \infty)$, for $\delta>0$ arbitrarily small by applying Proposition \ref{prop:infest}:
\begin{equation}
\label{eq:inhomestintinft}
\begin{split}
	\int_{\mathcal{B}_{<}}&\sum_{\ell}\int_{1+\delta}^{\infty}r^{-1-\delta'}(|u'|^2+\Lambda|u|^2)\,dr\leq  C \sup_{r\geq 1+\delta}\int_{\mathcal{B}_{<}}\sum_{\ell} m^2(|u|^2+\omega^{-2}|u'|^2)(r)\\
	\leq &\: CB^4_{\rm hom}W_0m^2\left[ \int_{0}^1 \int_{\Sigma_{\tau}}\Delta^{\frac{1-\epsilon}{2}}r^{-1+\epsilon}r^2|F_{
		\xi}|^2\,d\sigma dr d\tau+ \int_0^{\infty}\int_{\Sigma_{\tau}}(1+\tau)^{1+\delta}\Delta^{\frac{1-\epsilon}{2}}r^{-1+\epsilon}r^2|G_{
		\xi}|^2\,d\sigma drd\tau\right].
	\end{split}
\end{equation}
\paragraph{\textbf{Step 2}: estimates near $r=1$}
We proceed exactly as in the proof of Proposition \ref{prop:addestuout} by considering the renormalized variables $\check{u}_{\pm}=\mathfrak{w}_{\pm}^{-1}u$, which satisfies the ODE:
\begin{align}
\label{eq:inhomrenormode1}
\check{u}_{\pm}''+2\mathfrak{w}_{\pm}^{-1}\mathfrak{w}'_{\pm}\check{u}_{\pm}'+\left[ \widetilde{\omega}^2-\check{V}-\frac{1}{4}(1-a^2)^2(1+O(r-1))+(1-a^2)O((r-1)^2)\right]\check{u}_{\pm}=&\:\mathfrak{w}_{\pm}^{-1}H
\end{align}
and we additionally keep track of the terms involving $H$. The boundary terms at $r=1+z_0|m|^{-1}|\tomega|$ are estimated via Proposition \ref{prop:horest} and we apply \eqref{eq:inhomestintinft} to estimate the terms in the region where $r\in [1+\delta,1+2\delta]$.
		\end{proof}

	\subsubsection{$\mathcal{G}^{\flat}_{>}$: bounded frequencies away from the superradiant threshold}
	\label{sec:boundfreqaway}	
		
		In this section, we consider the case of bounded frequencies with $|m \tomega|>\eta_{\flat}m^2$. The cases $|\omega|<\omega_{\rm low}$ and $|\omega|>\omega_{\rm low}$, with $\omega_{\rm low}$ a suitably small constant, need to be done separately. The estimates here follow directly from analogous estimates in \cite{part3} in the case $a<1$, so we will omit their derivation.
		
		\begin{proposition}
			Let $0<a_0<a\leq 1$, $E\geq 2$ and $R_{\infty}>1$ arbitrarily large. Let $(\omega,m,\Lambda)\in \mathcal{G}^{\flat}_{>}$.
			\begin{enumerate}[label=\emph{(\roman*)}]
				\item For $\omega_{\rm low}$ suitably small depending on $a_0$ and $E$ and $R_{\infty}$ sufficiently large, there exists constants $c(a_0),C(a_0)>0$ and functions $y_{\flat,1},\tilde{y}_{\flat},h_{\flat},\chi_1,\chi_2$, such that for $|\omega|\leq \omega_{\rm low}$
				\begin{equation*}
					c\int_1^{R_{\infty}} |u|^2+|u'|^2\,dr\leq \int_{\R}\Re(iE\omega \chi_2 u+2i\tomega \chi_1u-h_{\flat}u-2(y_{\flat,1}+\tilde{y}_{\flat})u'\overline{H})\,dr_*,
				\end{equation*}
				and
				\begin{align*}
					|\tilde{y}_{\flat}|+|y_{\flat,1}|+|h_{\flat}|+|\chi_1|+|\chi_2|\leq&\: C,\\
					|\tilde{y}_{\flat}|\leq&\: C e^{-b (r-1)}\\
					y_{\flat,1}=&\:1\:\textnormal{and}\:h_{\flat }=0\quad \textnormal{for $r\geq R_{\infty}$}.
				\end{align*}
				\item Let $|\omega|> \omega_{\rm low}$. Then there exists constants $c,C>0$ depending on $\omega_{\rm low},\omega_{\rm high},\epsilon_{\rm width}$ and a function $y_{\flat,2}$, such that for $|\omega|> \omega_{\rm low}$:
				\begin{equation*}
					c\int_1^{R_{\infty}} |u|^2+|u'|^2\,dr\leq E|\omega||\tomega||u|^2(1)+\int_{\R}\Re(iE\omega u-2y_{\flat,2}u'\overline{H})\,dr_*,
				\end{equation*}
				with
				\begin{equation*}
					|y_{\flat,2}|\leq C\:\textnormal{and}\: y_{\flat,2}=1\:\quad \textnormal{for $r\geq R_{\infty}$}.
				\end{equation*}
				Furthermore,
					\begin{multline}
					\label{eq:addestr1boundfreq}
		|\omega||\tomega||u|^2(1)\leq C B_{\rm hom}^4 W_0 m^2\Bigg[\int_{0}^1 \int_{\Sigma_{\tau}}\Delta^{\frac{1-\epsilon}{2}}r^{-1+\epsilon}r^2|F_{
		\xi}|^2\,d\sigma dr d\tau\\
		+\int_0^{\infty}\int_{\Sigma_{\tau}}(1+\tau)^{1+\delta}\Delta^{\frac{1-\epsilon}{2}}r^{-1+\epsilon}r^2|G_{
		\xi}|^2\,d\sigma drd\tau\Bigg].
		\end{multline}
			\end{enumerate}
		\end{proposition}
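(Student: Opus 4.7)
The plan is to adapt the multiplier constructions from \cite{part3}[\S8.4] for the corresponding sub-extremal bounded-frequency regime. The key observation is that for $(\omega,m,\Lambda)\in \mathcal{G}^\flat_>$ we have $|m\tomega|\geq \eta_\flat m^2$, so we are uniformly bounded away from the degenerate frequency $\omega=m\upomega_+$ at which the analysis becomes delicate. Consequently, the potential $V$ behaves, near $r=1$ and in the rest of the region $r\in[1,R_\infty]$, as in the sub-extremal setting, and the multipliers used in \cite{part3} transfer essentially verbatim to $a\leq 1$.

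For part (i), since $|\omega|\leq \omega_{\rm low}$ is very small, one exploits the monotonicity and positivity of $V-\omega^2$ provided by the leading behaviour $V\sim \Lambda r^{-2}$ at infinity (see \eqref{eq:Vnearinf}) to build a coercive Morawetz-type current $j_1^{h_\flat}[u]+j_2^{y_{\flat,1}+\tilde y_\flat}[u]$, where $\tilde y_\flat$ provides a red-shift-like near-horizon contribution (justifying the exponential bound $|\tilde y_\flat|\leq C e^{-b(r-1)}$). The boundary terms at $r_*=\pm\infty$ are absorbed by the cut-off $T$- and $K$-currents, $\chi_2j^T[u]$ and $\chi_1j^K[u]$, whose contributions near the horizon are controlled by the coercive near-horizon term $\eta_\flat^2m^4|u|^2$ obtained from $|\omega||\tomega|\geq \eta_\flat^2 m^4$. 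For part (ii), since $|\omega|>\omega_{\rm low}$ is bounded below, one may use the simpler current $j_2^{y_{\flat,2}}[u]+Ej^T[u]$; however, $y_{\flat,2}(1)\neq 0$, so a horizon boundary term $|\omega||\tomega||u|^2(1)$ is produced and must be estimated separately.

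The additional estimate \eqref{eq:addestr1boundfreq} for the boundary term is the only genuinely new ingredient. I would prove it by evaluating the representation formula \eqref{eq:explinhom1} at $r=1$:
\begin{equation*}
u(1)=W^{-1}u_{\rm hor}(1)\int_{1}^{\infty} u_{\rm inf}(r')H(r')\,\frac{r'^2+a^2}{\Delta}\,dr',
\end{equation*}
using the normalization $|u_{\rm hor}(1)|=1$ from \eqref{eq:hombc3} together with the Wronskian bound \eqref{assm:wronk} from Theorem \ref{thm:wronskianbound} to obtain
\begin{equation*}
|\omega||\tomega||u|^2(1)\leq W_0\left|\int_1^\infty u_{\rm inf}H\,\tfrac{r^2+a^2}{\Delta}\,dr\right|^2.
\end{equation*}
The remaining integral is then estimated by splitting $r\in[1,1+z_0|m|^{-1}|\tomega|]$ and $r\in[1+z_0|m|^{-1}|\tomega|,\infty)$, integrating by parts to exploit the oscillations of $u_{\rm inf}\sim e^{i\omega r_*}$ near infinity and of $u_{\rm inf}\sim A_+e^{-i\tomega r_*}+A_-e^{i\tomega r_*}$ near the horizon, exactly as in the proof of Proposition \ref{prop:horest}. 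The pointwise bounds of Lemma \ref{lm:prelimhomest} and the integrated homogeneous estimates of Proposition \ref{prop:addestuout} control the resulting terms involving $u_{\rm inf}$, and a Plancherel-Cauchy-Schwarz step then transfers the bound from frequency to physical space in terms of $F_\xi$ and $G$, with the degenerate $\Delta^{(1-\epsilon)/2}$ weight absorbing the $|W|^{-2}$ factor.

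The main obstacle is the combination of the vanishing Wronskian in the superradiant limit and the change of character of $r=1$ from irregular to regular singular point as $\tomega\to 0$. In the range $\mathcal{G}^\flat_>$ this is harmless because $|m\tomega|\geq \eta_\flat m^2$ keeps $|W|$ bounded below by $c(\eta_\flat,m)$, so the bounded-$m$ constants $C_m$ are genuinely finite; the whole difficulty is postponed to the range $\mathcal{G}^\flat_<$ handled in the previous sections.
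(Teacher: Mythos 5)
Your overall strategy is the same as the paper's: parts (i) and (ii) are Propositions 8.7.3 and 8.7.4 of \cite{part3} (not \S 8.4), which the paper simply observes continue to hold without modification as $a\uparrow 1$ in the regime $\mathcal{G}^{\flat}_>$, and the boundary term estimate \eqref{eq:addestr1boundfreq} is obtained from the representation formula, the Wronskian bound and the oscillation argument, i.e.\ the content of Proposition \ref{prop:horest} / \eqref{eq:nearhorest}, which the paper simply cites. So your reconstruction of the argument for \eqref{eq:addestr1boundfreq} from \eqref{eq:explinhom1}, the normalization $|u_{\rm hor}(1)|=1$, \eqref{assm:wronk}, and the near-horizon/far-field split is precisely what is being invoked.

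Two imprecisions worth flagging. First, in your discussion of part (i), the claim that $|\omega||\widetilde\omega|\geq\eta_\flat^2 m^4$ is false: when $|\omega|\leq\omega_{\rm low}$, $|\omega|$ may be arbitrarily small (indeed $\omega=0$ lies in $\mathcal{G}^{\flat}_>$), so the product $|\omega\widetilde\omega|$ degenerates. What is uniformly bounded below in $\mathcal{G}^{\flat}_>$ is $|\widetilde\omega|\geq\eta_\flat|m|$, hence $\widetilde\omega^2$; the relevant near-horizon coercivity for the $j_2^{y}$-current comes from $\omega^2-V(r)\to\widetilde\omega^2$ as $r\downarrow 1$, not from $|\omega\widetilde\omega|$. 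This degeneration is also why part (i), unlike part (ii), carries no boundary term $E|\omega||\widetilde\omega||u|^2(1)$ that needs a Wronskian estimate: the $K$- and $T$-currents with cut-offs suffice there, which is precisely the content of \cite{part3}[Proposition 8.7.3]. Second, in part (ii) the Wronskian lower bound is not elementary in $\mathcal{G}^\flat_>$: since $\mathcal{G}^\flat_>$ can still intersect the superradiant range $0<m\omega<m^2\upomega_+$, one needs Theorem \ref{thm:wronskianbound}, not the non-superradiant corollary; the combination $|\omega|>\omega_{\rm low}$, $|\widetilde\omega|\geq\eta_\flat|m|$ then makes the bound $|W|^{-2}\leq W_0/|\omega\widetilde\omega|$ genuinely finite. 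Neither issue alters the argument; they are matters of attribution and of identifying the correct coercive quantity.
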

		\begin{proof}
		Part (i) and (ii) follow from \cite[Propositions 8.7.3]{part3} and \cite[Propositions 8.7.4]{part3}, respectively, which work without modification even when $a\uparrow 1$. In the case of (ii), we can additionally apply \eqref{eq:nearhorest} to further estimate the boundary term at $r=1$ and obtain \eqref{eq:addestr1boundfreq}.
		\end{proof}
		
\subsubsection{Combining the frequency ranges}
		We can combine all the estimates the frequency ranges $\mathcal{G}^{\sharp}_{\rm sr}, \mathcal{G}^{\sharp, \omega}_{\rm nsr}, \mathcal{G}^{\sharp, \Lambda}_{\rm nsr}, \mathcal{G}^{\sharp, \sim}_{\rm nsr}, \mathcal{G}^{\flat}$ together with the $(r-1)$-weighted estimates near $r=1$ from \S \ref{sec:rweightfreq}.
		
		In order to convert the resulting estimate in frequency space to an estimate in physical space, it is advantageous to minimize the need for frequency restrictions in front of the multipliers corresponding to the different frequency ranges. Since we have already closed the relevant estimates in each frequency range separately, it is becomes much easier to consider the same currents in the different frequency ranges and thus make the use of currents more uniform in frequency. However, as the proposition below demonstrates, we cannot completely remove frequency restrictions on the right-hand side of \eqref{eq:mainfreqmorest}.
		
		\begin{proposition}
		\label{prop:mainfreqmorest}	
		Let $a=1$. Denote with $u$ a solution to \eqref{eq:odeU}, corresponding to a fixed azimuthal mode solution $\phi_m$ to \eqref{eq:inhomwaveeq} satisfying Assumptions \ref{asm:G} and \ref{asm:phi}. Then, for $\epsilon>0$ arbitrarily small, there exists a constant $c_m>0$ such that:
		\begin{equation}
		\label{eq:mainfreqmorest}	
		\begin{split}
			c_m\int_{\R_{\omega}}\sum_{\ell} \int_{1}^{\infty}&\chi_+(1-r^{-1})^{-2+\epsilon}|u'+i\tomega u|^2+(1-r^{-1})^{-1+\epsilon}r^{-1-\epsilon}|u'|^2\\
			&\:+(\Lambda r^{-3} +\tomega^2(1-r^{-1})^{-1}r^{-3} +\omega r^{-1-\epsilon})\zeta_{\rm trap}(1-r^{-1})^{\epsilon}|u|^2\,dr\\
			\leq &\: \int_{\R_{\omega}}\sum_{\ell}\int_1^{\infty}\mathbf{1}_{\mathcal{G}}\cdot \Re( \mathbf{f} \cdot(u,u')\cdot \overline{H})\,dr_* d\omega\\
			& -\int_{\R_{\omega}} \sum_{\ell}\int_{\R_{r_*}}2y\Re(u'\overline{H})\,dr_*+  \int_{\R}E(\chi_K\tomega+\omega)\Re(i u \overline{H})\\
			&\:+(r-1)^{-1+\epsilon}\chi_+ \Re\left(v'e^{-i \widetilde{\omega} r_*}\Delta^{-\frac{ im}{2}}\overline{H}\right)\,dr_* d\omega\\
			&\:+ C\int_{\R_{\omega}}\int_1^{\infty}\sum_{\ell}\mathbf{1}_{|\tomega|\leq z_0^{-1}|m| (r-1)}\mathbf{1}_{m\tomega<0} y_{+}\Re\left(\frac{d(\mathfrak{w}_+^{-1}u)}{dr}\overline{\mathfrak{w}_+^{-1}H}\right)\\
			&-\chi_+\mathbf{1}_{|\tomega|\leq z_0^{-1}|m| (r-1)}\mathbf{1}_{m\tomega>0}y_{-}\Re\left(\frac{d(\mathfrak{w}_-^{-1}u)}{dr}\overline{\mathfrak{w}_-^{-1}H}\right)\,dr d\omega\\
		&\:+ C\int_{0}^1 \int_{\Sigma_{\tau}}\Delta^{\frac{1-\epsilon}{2}}r^{-1}r^2|F_{
		\xi}|^2\,d\sigma dr d\tau+C\int_0^{\infty}\int_{\Sigma_{\tau}}(1+\tau)^{1+\delta}\Delta^{\frac{1-\epsilon}{2}}r^{-1}r^2|G_m|^2\,d\sigma drd\tau,
		\end{split}
			\end{equation}
			where
			\begin{equation*}
				\zeta_{\rm trap}(r):= 1-\mathbf{1}_{r_{{\rm trap},1}\leq r\leq r_{{\rm trap},2}}(r),
			\end{equation*}
			and we can take $r_{{\rm trap},i}$, with $i=1,2$, arbitrarily close to $1+\sqrt{2}$. The term $\mathbf{1}_{\mathcal{G}}\cdot \Re( \mathbf{f} \cdot(u,u')\cdot \overline{H})$ is schematic notation for products involving $u$ or $u'$ with $\overline{H}$ and a function $\mathbf{f}$ belonging to the set:
			\begin{equation*}
				\mathbf{f}\in \{f,f_{\sim} , \hat{y},\tilde{y}_{\flat},{y}_{\flat,1},{y}_{\flat,2}, h,h_{\flat},\chi_1,\chi_2\},
			\end{equation*}
			where the functions $\mathbf{f}$ are bounded near $r=1$.
			The factor $\mathbf{1}_{\mathcal{G}}$ denotes an indicator function supported on one of the frequency ranges  $\mathcal{G}^{\sharp}_{\rm sr}, \mathcal{G}^{\sharp, \omega}_{\rm nsr}, \mathcal{G}^{\sharp, \Lambda}_{\rm nsr}, \mathcal{G}^{\sharp, \sim}_{\rm nsr}, \mathcal{G}^{\flat}_<$, or $\mathcal{G}^{\flat}_>$. All the relevant functions are precisely defined in \S\S \ref{sec:rweightfreq}--\ref{sec:boundfreqaway}.
		\end{proposition}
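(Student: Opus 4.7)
The plan is to combine the separate integrated frequency-space estimates from Propositions \ref{prop:rweightedest}, \ref{prop:Lambdadomsr}, \ref{prop:omegadom}, \ref{prop:Lambdadomnsr}, \ref{prop:trappedfreq}, and the estimates in \S\S \ref{sec:freqnearsr}--\ref{sec:boundfreqaway} by multiplying each by the indicator function of its respective frequency range and summing. Since $\mathcal{G}^{\sharp}_{\rm sr}, \mathcal{G}^{\sharp,\omega}_{\rm nsr}, \mathcal{G}^{\sharp,\Lambda}_{\rm nsr}, \mathcal{G}^{\sharp,\sim}_{\rm nsr}, \mathcal{G}^{\flat}_<, \mathcal{G}^{\flat}_>$ partition the admissible frequencies, the resulting right-hand side naturally takes the form of a single schematic current $\mathbf{1}_{\mathcal{G}} \cdot \Re(\mathbf{f} \cdot (u,u') \cdot \overline{H})$, since each multiplier function $f, f_{\sim}, \hat{y}, \tilde{y}_{\flat}, y_{\flat,1}, y_{\flat,2}, h, h_{\flat}, \chi_1, \chi_2$ arises in exactly one of the frequency regimes. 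The boundary terms at $r=1$ appearing in the bounded-frequency estimates in \S \ref{sec:boundfreqaway} are absorbed using \eqref{eq:addestr1boundfreq}, which contributes the spacetime integrals of $|F_{\xi}|^2$ and $|G_m|^2$ on the right-hand side.

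First, I would add the $r^p$-weighted estimate \eqref{eq:rpest} and the $(r-1)^{-p/2}$-weighted estimate \eqref{eq:rmin2pest} with $p=1-\epsilon$, multiplied by indicator functions in the frequency-regimes where $\Lambda$ is suitably large relative to $m^2$ (so that the absorption of the $m^2$-term in \eqref{eq:rmin1pestfreq} is unproblematic). In the trapping frequency regime $\mathcal{G}^{\sharp,\sim}_{\rm nsr}$, the degenerate factor $(r-r_{\rm trap})^2$ appearing on the LHS of \eqref{eq:Ldomestwithbtmore} is precisely what produces the $\zeta_{\rm trap}$ factor on the LHS of \eqref{eq:mainfreqmorest} after choosing $r_{{\rm trap},1}, r_{{\rm trap},2}$ so that the interval $(r_{{\rm trap},1}, r_{{\rm trap},2})$ contains all possible values of $r_{\rm trap}$; by Proposition \ref{prop:trappedfreq}(iii) combined with Lemma \ref{prop:highLambdapotential}(i), $r_{\rm trap}$ lies arbitrarily close to $1+\sqrt{2}$ for $\omega_{\rm high}$ suitably large.

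To obtain the near-horizon degenerate-weighted term $(1-r^{-1})^{-2+\epsilon}\chi_+ |u'+i\tomega u|^2$ on the LHS, I rely on \eqref{eq:rmin2pest} in the $\Lambda$-dominated regimes, and on \eqref{eq:freqmornearsr} (via the renormalized variables $\check{u}_\pm$, whose control translates back to control of $u'+i\tomega u$) in $\mathcal{G}^{\flat}_<$. The renormalized-current terms involving $\mathfrak{w}_{\pm}^{-1}u$ and $\mathfrak{w}_{\pm}^{-1}H$ appear only from the $\mathcal{G}^\flat_<$ estimate and are retained on the RHS with the appropriate indicator $\mathbf{1}_{|\tomega| \leq z_0^{-1} |m|(r-1)} \mathbf{1}_{m\tomega \lessgtr 0}$, reflecting the frequency-space region where the rescaling trick is used.

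The main obstacle is compatibility between the different regimes where the $T$- and $K$-currents are used with cut-offs versus without. In the superradiant regime one needs the current $E(\chi_K \tomega + \chi_T \omega) \Re(iu\overline{H})$ with cut-offs to control the boundary terms at $r_* = \pm \infty$, whereas in the non-superradiant regimes a pure $E\omega \Re(iu\overline{H})$ current suffices. To have a uniform expression on the right-hand side, I would globalize by always using $E(\chi_K \tomega + \omega)\Re(iu\overline{H})$ and absorbing the resulting extra non-superradiant contributions using the $\mathcal{G}^{\sharp}_{\rm nsr}$ and $\mathcal{G}^\flat$ estimates combined with an arbitrarily small coefficient (taking $E$ large and then $\epsilon$ small accordingly). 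Finally, the $y$-current term arises from combining \eqref{eq:Ldomestwithbtother} in $\mathcal{G}^{\sharp,\omega}_{\rm nsr}$ with the $\hat{y}$-current in \eqref{eq:Ldomestwithbtmore}, which can be absorbed into a single $y$-current supported away from the trapped set. Summing the resulting estimates yields \eqref{eq:mainfreqmorest}, with the $m$-dependence of the constant $c_m$ tracking the $m$-dependence introduced in \S \ref{sec:homsoln}--\S \ref{sec:freqnearsr}.
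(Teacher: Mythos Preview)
Your overall plan—summing the frequency-localized estimates from \S\S\ref{sec:rweightfreq}--\ref{sec:boundfreqaway}—is the right starting point, but the paper's execution differs in a way you only partially capture. Several terms on the right-hand side of \eqref{eq:mainfreqmorest} appear \emph{without} frequency indicators: the $y$-current term $2y\Re(u'\overline{H})$, the $K/T$-current term $E(\chi_K\tomega+\omega)\Re(iu\overline{H})$, and the $(r-1)$-weighted term $(r-1)^{-1+\epsilon}\chi_+\Re(v'e^{-i\tomega r_*}\Delta^{-im/2}\overline{H})$. A pure ``multiply by $\mathbf{1}_{\mathcal{G}}$ and sum'' strategy would leave indicators on all of these, which does not match the statement and would obstruct the Plancherel step in \S\ref{sec:intenestphys}.

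The paper instead applies each of these three currents \emph{for all admissible frequencies}, and then absorbs the resulting bad-sign bulk terms (arising outside the current's natural regime) using the already-established frequency-specific estimates. You do recognize this ``globalize then absorb'' mechanism for the $K/T$-currents, but your treatment of the $y$-current (``absorbed into a single $y$-current'') is too vague, and your restriction of the $(r-1)$-weighted estimate to regimes with $\Lambda\gg m^2$ would leave an unwanted indicator on that term.

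A related point: the indicator $\mathbf{1}_{|\tomega|\leq z_0^{-1}|m|(r-1)}$ on the $\mathfrak{w}_\pm$-terms is not a remnant of the $\mathcal{G}^\flat_<$ restriction, as you suggest. The paper extends the $\check{j}^{y_\pm}$ current to \emph{all} frequencies in the spatial region $(r-1)\geq z_0|m|^{-1}|\tomega|$, and it is this extension that removes the $\mathbf{1}_{\mathcal{G}^\flat_<}$ factor while retaining the $r$-dependent indicator you see in the statement.
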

		\begin{proof}
		To obtain control over $(1-r^{-1})^{-1+\epsilon} r^{-1-\epsilon}|u'|^2$ and $\tomega^2(1-r^{-1})^{-1}$ near $r=1$ and
		\begin{equation*}
		r^{-1-\delta}(|u'|^2+\omega^2|u|^2)+r^{-3}\Lambda |u|^2
		\end{equation*}
		as required to estimate the left-hand side of \eqref{eq:mainfreqmorest}, we apply the $y$-current from the proof of Proposition \ref{prop:omegadom} for all admissible frequencies and then apply the already established estimates for each frequency range to absorb any terms with a bad sign. 
		
		We can remove the frequency restrictions in the $T$-energy and $K$-energy currents by also considering $T$- and $K$-energy currents (with cut-offs) for $\mathcal{G}^{\flat}_<$ and absorbing the resulting boundary terms using the estimates already established for $\mathcal{G}^{\flat}_<$. 
		
		Similarly, we apply the $(r-1)$-weighted estimates with $p=1-\epsilon$ for all frequencies and absorb the resulting error terms using the already established estimates. Note that we do note entirely remove the restriction to frequency ranges on the right-hand side of \eqref{eq:mainfreqmorest} in the estimates involving $f$, $y$ and $h$-currents, as indicated by the schematic notation $\mathbf{1}_{\mathcal{G}}$.
		
		Finally, we note that the estimates involving the current $\check{j}^{y_{\pm}}[\check{u}_{\rm inf}]$ can be extended to frequencies $(\omega,m,\Lambda)\notin \mathcal{G}^{\flat}_{<}$ in the region $(r-1)\geq z_0 |m|^{-1}|\tomega|$. This restriction is the reason why we need to include the indicator function $\mathbf{1}_{|\tomega|\leq z_0^{-1}|m| (r-1)}$ on the right-hand side of \eqref{eq:mainfreqmorest}.
		\end{proof}
		
		\section{Integrated energy estimates}
		\label{sec:intenestphys}
		In this section, we will convert the integrated estimates in \emph{frequency space} from Proposition \ref{prop:mainfreqmorest} to local integrated energy estimates in \emph{physical space}.
		\begin{proposition}
		Let $\phi_m$ be a fixed azimuthal mode solution to \eqref{eq:inhomwaveeq} satisfying Assumptions \ref{asm:G} and \ref{asm:phi}. Let $\epsilon>0$ be arbitrarily small. Then there exists a $C_m=C_m(m,\mathbbm{h},M, W_0, \epsilon)>0$ such that:
		\begin{multline}
				\label{eq:mainmorawetz}
			\int_{0}^{\infty}\int_{\Sigma_{\tau}}[\chi_+(1-Mr^{-1})^{2-\epsilon}|X\hphi_m|^2+(1-Mr^{-1})^{-1+\epsilon}r^{1-\epsilon}|{\mathfrak{X}}_*\hphi_m|^2\\
			+(1-Mr^{-1})^{\epsilon}(r^{-1}|\hphi_m|^2+\zeta_{\rm trap}(1-Mr^{-1})^{-1}r^{-1}|K\hphi_m|^2+\zeta_{\rm trap}r^{1-\epsilon}|T\hphi_m|^2+r^{-1}\zeta_{\rm trap}|\snabla_{\s^2}\hphi_m|^2)]d\sigma dr\,d\tau\\
				\leq C\int_{\Sigma_{0}}\left[(1-Mr^{-1})^{1-\epsilon}|X\hphi_m|^2+|T\hphi|^2+r^{-2}|\snabla_{\s^2} \hphi_m|^2\right]\,r^2d\sigma dr\\
				+C \int_{0}^{\infty}\int_{\Sigma_{\tau}}[(1+\tau)^{1+\delta}(1-Mr^{-1})^{1-\epsilon}+r^2\chi_{0\leq \tau\leq 1}+r^{1+\delta}]|G_m|^2\,r^2d\sigma drd\tau,
			\end{multline}
			with $\chi_{\tau_1\leq \tau\leq \tau_1+1}$ the indicator function supported in $\tau_1\leq \tau\leq \tau_1+1$.
					\end{proposition}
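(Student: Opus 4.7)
The plan is to obtain \eqref{eq:mainmorawetz} by applying Plancherel's theorem to the frequency-localized Morawetz estimate \eqref{eq:mainfreqmorest} and then removing the cut-off $\xi$ introduced in the construction of $\upphi = \xi \hphi$. First I would expand $\upphi_m$ in the spheroidal harmonic basis $e^{im\varphi}S_{m\ell}(\theta;a\omega)$, so that the sum $\int d\omega \sum_\ell \int dr$ of quantities quadratic in $u_{m\ell}^{(a\omega)} = \sqrt{r^2+a^2}\,\hat{\upphi}_{m\ell}^{(a\omega)}$ converts, via Parseval in $t$ and orthonormality on $\mathbb{S}^2$, into a spacetime integral $\int d\tau \int d\sigma dr$ of the corresponding quadratic expression in $\upphi_m$ and its derivatives. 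The angular-eigenvalue factor $\Lambda$ matches $|\widetilde{\snabla}_{\mathbb{S}^2}\upphi_m|^2$ up to harmless lower-order terms absorbed using the estimate \eqref{eq:mainineqLambda}. The trapping degeneracy $\zeta_{\rm trap}$ and the horizon degeneracy $(1-Mr^{-1})^{\pm\cdot+\epsilon}$ transfer verbatim from frequency to physical space.

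Next I would bound the right-hand side of \eqref{eq:mainfreqmorest} term by term. Each integral of the schematic form $\int \mathbf{1}_{\mathcal{G}} \,\mathbf{f}\cdot(u,u')\cdot\overline{H}\,dr_* d\omega$ becomes, after an inverse Fourier transform, a spacetime pairing between a multiplier current for $\upphi_m$ and $F_m = (F_\xi)_m + \xi G_m$; since the multipliers $\mathbf{f},y,\chi_T,\chi_K$ and the frequency indicators $\mathbf{1}_{\mathcal{G}}$ are uniformly bounded, Cauchy--Schwarz with a small absorption parameter produces a bulk term with a spacetime density bounded by a small multiple of the LHS of \eqref{eq:mainmorawetz}, plus a bulk integral of $|F_m|^2$ with weight $\Delta^{\frac{1-\epsilon}{2}}r^{-1+\epsilon}$, which accounts for the $|G_m|^2$ contribution on the RHS of \eqref{eq:mainmorawetz} once one uses that $(F_\xi)_m$ is supported in $\{0\le\tau\le 1\}$.

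The main technical obstacle is Step 3: the near-horizon terms involving the renormalized variables $\mathfrak{w}_\pm^{-1} u$ and the weights $y_\pm$. Here I would use the pointwise behaviour $|\mathfrak{w}_\pm|^{-2}\sim (r-1)$ and $y_\pm \sim (r-1)^{-2\mp 2\Im\sqrt{2m^2-\Lambda_+-\frac14}}$ from \S\ref{sec:statsol} to convert the weighted estimates on $\check{u}_{\pm}$ back into estimates for $\upphi_m$ with the correct $(1-Mr^{-1})^{-1+\epsilon}r^{1-\epsilon}$ weight on $|\mathfrak{X}_*\upphi_m|^2$. After Cauchy--Schwarz, the resulting $|G_m|^2$ weight must be compared with $\Delta^{\frac{1-\epsilon}{2}}r^{-1+\epsilon}$; the exponent matching relies crucially on the identity \eqref{eq:dercheckucancel} that produced the sharp $(r-1)^{-3+\epsilon-\beta}$ weight in Proposition \ref{prop:addestuout}, and on picking the exponent $\epsilon>0$ in the $(r-1)$-weighted estimate \eqref{eq:rmin2pest} consistently throughout. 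The frequency-space indicator $\mathbf{1}_{|\tomega|\le z_0^{-1}|m|(r-1)}$ depends on $r$, but since it is a bounded symbol in $\omega$ at fixed $r$, Plancherel in $t$ still applies pointwise in $r$ and produces local-in-$r$ spacetime integrals which are summed against the established weights.

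Finally I would remove the cut-off $\xi$: in the region $\{\tau\geq 1\}$ one has $\upphi_m=\hphi_m$, so the LHS integrand already controls $\hphi_m$ there. On $\{0\leq \tau\leq 1\}$, I would invoke the local-in-time weighted energy estimate of Proposition \ref{prop:locenest} (with $p=1-\epsilon$) to bound the integrand of \eqref{eq:mainmorawetz} for $\hphi_m$ by the initial data norm on $\Sigma_0$ plus $\int_0^1\int_{\Sigma_\tau} r^2|G_m|^2\,d\sigma dr d\tau$, exactly matching the $r^2\chi_{0\leq \tau\leq 1}$ weight on the RHS of \eqref{eq:mainmorawetz}. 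The bulk $(F_\xi)_m$ contribution produced in Step 2 is likewise bounded by this local-in-time energy, which closes the estimate.
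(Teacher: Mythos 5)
Your high-level plan matches the paper's: apply Plancherel to transfer the frequency-localized estimate \eqref{eq:mainfreqmorest} to physical space, absorb the inhomogeneity terms via Cauchy--Schwarz or direct integration by parts, and use Proposition \ref{prop:locenest} to handle the $\{0\leq\tau\leq 1\}$ region where the cut-off term $F_\xi$ lives.

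However, there is a genuine gap in your treatment of the $\mathfrak{w}_\pm$-renormalized near-horizon term, which is the one carrying the indicator $\mathbf{1}_{|\tomega|\leq z_0^{-1}|m|(r-1)}\mathbf{1}_{m\tomega<0}$. Your sentence ``since it is a bounded symbol in $\omega$ at fixed $r$, Plancherel in $t$ still applies pointwise in $r$ and produces local-in-$r$ spacetime integrals'' mistakes boundedness of the symbol for the ability to pass the indicator through the pairing $\Re(\mathbf{f}\cdot(u,u')\cdot\overline{H})$ unchanged. In fact, applying Plancherel to this pairing produces the spacetime integral of $\uppsi_m^{\rm conv}:=\mathcal{F}^{-1}_\omega(\mathbf{1}_{|\tomega|\leq z_0^{-1}|m|(r-1)}\mathbf{1}_{m\tomega<0})*\uppsi_m$, not of $\uppsi_m$ itself; the sharp cut-off in $\tomega$ has an inverse Fourier transform of the form $\frac{e^{i\frac{m}{2}t}}{i\sqrt{2\pi}\,t}(1-e^{iz_0^{-1}m(r-1)t})$, which is not integrable in time, so the convolution does not disappear and cannot be bounded by $|\uppsi_m|$ pointwise. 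The paper's proof devotes several paragraphs to controlling $\uppsi_m^{\rm conv}$ and $X\uppsi_m^{\rm conv}$ via Cauchy--Schwarz in the kernel variable, splitting the $|s|\leq 1$ and $|s|>1$ regions of the convolution and producing $\int_0^\infty (r-1)^2|\psi_m|^2\,d\tau'$ and $\int_0^2|\psi_m|^2\,d\tau'$, followed by a Hardy inequality to absorb the resulting $(r-1)^{-1+\epsilon}m^2|\psi_m|^2$ term into the main estimate. Your proposal focuses instead on matching the exponents in $(r-1)$ via \eqref{eq:dercheckucancel} and \eqref{eq:rmin2pest}, but that is the secondary issue; the structural obstruction is that the $r$-dependent $\omega$-localization obstructs the naive inverse Fourier transform and requires the explicit convolution analysis. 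You also leave implicit how the $(1+\tau)^{1+\delta}$ weight on $|G_m|^2$ in \eqref{eq:mainmorawetz} is propagated — it is already present in the frequency estimate via the Cauchy--Schwarz in $\tau$ performed inside Proposition \ref{prop:horest}, not something produced anew in this step.

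A smaller imprecision: for the generic $\mathbf{1}_{\mathcal{G}}$-restricted terms you say you would pass to physical space by inverse Fourier transform; the paper instead applies Young's inequality \emph{in frequency space} (exploiting that $g_1$, $g_4$ there are compactly supported and bounded), absorbing the $|u|^2$ or $|u'|^2$ part directly into the LHS of \eqref{eq:mainfreqmorest} and only converting the remaining $|H|^2$ weight to physical space via Plancherel. This avoids the convolution issue for those terms, which is precisely why those indicators are innocuous while the $r$-dependent indicator in the $\mathfrak{w}_\pm$-term is not.
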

					\begin{proof}
					For the sake of notational convenience, we will omit the hat in the notation $\hphi$, $\hpsi$ and simply write $\phi$ and $\psi$, respectively. We set $r_+=1$. The key ingredients in the derivation of \eqref{eq:mainmorawetz} are the estimate \eqref{eq:mainfreqmorest}, Plancherel identities and the local energy estimate from Proposition \ref{prop:locenest}. Let $g_i: (1,\infty)\to \R$ with $i=1,2,3$ be arbitrary continuous functions and $w:(1,\infty)\to \R$ a weight function. By Plancherel:
					\begin{align}
					\label{eq:planch1a}
					\int_{\R_{\omega}}&\sum_{\ell\geq |m|} \int_{\R_{r_*}}\Re((g_1+i\omega g_2+i\tomega g_3) u \cdot \overline{u})+g_4 |u'|^2\,dr_* d\omega=\int_{\mathcal{R}} g_1 |\uppsi_m|^2+g_2 \Re( T\uppsi_m\cdot \uppsi_m)\\ \nonumber
					&\:+g_3 \Re( K\uppsi_m\cdot \uppsi_m)+g_4|{\mathfrak{X}}_*\uppsi_m|^2\,d\sigma drdt,\\
					\label{eq:planch1b}
					\int_{\R_{\omega}}&\sum_{\ell\geq |m|} \int_{\R_{r_*}} w(r)|H|^2\,dr_*=\int_{\mathcal{R}} (r^2+1)^{-2} w(r) \Delta  |\rho^2 F_m|^2\,d\sigma dr dt
					\end{align}
and
					\begin{multline}
					\label{eq:planch2}
					\int_{\R_{\omega}}\sum_{\ell\geq |m|} \int_{\R_{r_*}}\Re((g_1+i\omega g_2+i\tomega g_3) u \cdot \overline{H})+\Re(g_4 u' \cdot \overline{H})\,dr_* d\omega\\
					=\int_{\mathcal{R}} (r^2+1)^{-\frac{1}{2}}g_1 \Re( \uppsi_m \cdot \rho^2\overline{F}_m)+(r^2+1)^{-\frac{1}{2}}g_2 \Re( T\uppsi_m\cdot \rho^2\overline{F}_m)+(r^2+1)^{-\frac{1}{2}}g_3 \Re( K\uppsi_m\cdot \rho^2\overline{F}_m)\\
					+(r^2+1)^{-\frac{1}{2}}g_4\Re({\mathfrak{X}}_*\uppsi_m\cdot \rho^2\overline{F}_m)\,d\sigma drdt.
					\end{multline}
					
					Note moreover that $dt=d\tau+\left(\hfol-\frac{\Delta}{r^2+1}\right)dr$. Hence the left-hand side of \eqref{eq:mainfreqmorest} controls:
					\begin{multline}
					\label{eq:lhsmoraw}
					\tilde{c}_m\int_{1}^{\infty}\int_{\Sigma_{\tau}\cap\{r\leq r_I\}} \Big[(1-Mr^{-1})^{2+\epsilon}|X\phi_m|^2+(1-Mr^{-1})^{\epsilon}(r^{-2}|\phi_m|^2\\
					+\zeta_{\rm trap}(1-Mr^{-1})^{-1+\epsilon}|K\phi_m|^2+r^{-2}\zeta_{\rm trap}|\snabla_{\s^2}\phi_m|^2)\Big]d\sigma dr\,d\tau
					\end{multline}
					for some suitably small constant $\tilde{c}_m$.
					
					We will now estimate the terms involving restrictions to different frequency ranges
					\begin{equation*}
					 \int_{\R_{\omega}}\sum_{\ell\geq |m|}\int_{\R_{r_*}}\mathbf{1}_{\mathcal{G}}\cdot \Re( \mathbf{f} \cdot(u,u')\cdot \overline{H})\,dr_* d\omega
					\end{equation*}
					by a weighted energy along $\Sigma_0$. Since these terms involve indicator functions restricting to subsets of frequencies, we cannot apply \eqref{eq:planch2}.
					
					We will encounter in particular terms as above involving the function $g_1$, with $g_1$ compactly supported and $|g_1|\leq C(1-r^{-1})^2$. Hence, we can apply Young's inequality to conclude that there exists an $R>1$ such that
					\begin{equation*}
					\left|\int_{\R_{r_*}}\Re(g_1 u \cdot \overline{H})\,dr_*\right| \leq \int_{1}^R \eta (1-r^{-1})^{\epsilon}|u|^2+C\eta^{-1}(1-r^{-1})^{-\epsilon}|H|^2 \,dr.
					\end{equation*}
					By \eqref{eq:planch1a}, we can estimate
					\begin{equation*}
					\int_{\R_{\omega}}\sum_{\ell\geq |m|}  \int_{1}^R (1-r^{-1})^{-\epsilon}|H|^2 \,dr\leq C \int_{0}^{\infty} \int_{\Sigma_{\tau}\cap\{r\leq R\}}(1-r^{-1})^{2-\epsilon}(|F_{\xi}|^2+|G_m|^2)\,d\sigma dr d\tau.
					\end{equation*}
					Applying Proposition \ref{prop:locenest} with $p=0$, gives moreover:
					\begin{equation*}
					 \int_{0}^{\infty}\int_{\Sigma_{\tau}\cap\{r\leq R\}} (1-r^{-1})^{2-\epsilon}|F_{\xi}|^2\,d\sigma dr d\tau\leq C \int_{\Sigma_0}\mathcal{E}_{\epsilon}[\phi]\,r^2d\sigma dr+C  \int_{0}^{1}\int_{\Sigma_{\tau}} |G_m|^2\,r^2d\sigma dr d\tau.
					\end{equation*}
					
					We will also encounter terms involving $g_4$, with $g_4$ either compactly supported or exponentially decaying in $r$, which can be similarly estimated via Young's inequality, as above.

					We consider now the integral of $E\omega\Re(i u \overline{H})$, which is supported on all frequency ranges, so we can apply Plancherel in the form \eqref{eq:planch2}. We then split via \eqref{eq:Fxi}
					\begin{multline*}
					\int_{\mathcal{R}} (r^2+1)^{-\frac{1}{2}}g_2 \Re( T\uppsi_m \cdot \rho^2\overline{F}_{\xi})\,d\sigma drdt= \int_{\mathcal{R}} -2g_2 \Re( T(\xi \psi_m) \cdot T(\xi)\overline{X\psi_m})\,d\sigma drdt\\
					+ \int_{\mathcal{R}} \Re( T\upphi_m \cdot [O(1)T(\xi)\overline{X\psi_m}+O(1)T(\xi)\overline{T\phi_m}+O(1)m T(\xi)\overline{\phi_m}+O(1)T(\xi)\overline{\phi_m}+ O(1)T^2(\xi)\overline{\phi_m})\,d\sigma drdt.
					\end{multline*}
					The second integral on the right-hand side above can easily be estimated via Proposition \ref{prop:locenest} with $p=0$. We rewrite the first integral on the right-hand side above:
					\begin{equation*}
					\begin{split}
					\int_{\mathcal{R}}& -2g_2 \Re( T(\xi \psi_m) \cdot T(\xi)\overline{X\psi_m})\,d\sigma drdt=\int_{\mathcal{R}} -g_2X(T(\xi)^2  |\psi_m |^2)-g_2T(\xi^2)\Re( T\psi_m \cdot \overline{X\psi_m})\,d\sigma drdt\\
					=&\: \int_{\mathcal{R}} -X\left(g_2T(\xi)^2  |\psi_m |^2-\frac{1}{2}g_2T(\xi^2) T(|\psi_m|^2)\right)+g_2T(\xi^2) \Re( XT\psi_m \cdot \overline{\psi_m})\\
					&\:+\frac{dg_2}{dr}\left[T(\xi)^2  |\psi_m |^2+\frac{1}{2}T(\xi^2) T(|\psi_m|^2)\right]\,d\sigma drdt.
					\end{split}
										\end{equation*}
					Note that $\frac{dg_2}{dr}=0$ in this case. We use the compact support of $T(\xi)$ in $\tau$ and the fundamental theorem of calculus to infer that the total $X$-derivative can be controlled by the boundary terms
					\begin{equation*}
					\int_{\mathcal{I}^+\cap \{0\leq\tau \leq 1\}} e^{-\tau}|\psi_m|^2+|T\psi_m|^2\,d\sigma d\tau+\int_{\Sigma_0\cap\{r=1\}}|\psi_m|^2\,d\sigma+\int_{\mathcal{H}^+\cap\{0\leq \tau\leq 1\}} e^{-\tau}|\psi_m|^2+|K\psi_m|^2\,d\sigma d\tau,
					\end{equation*}
					where the exponential weight is chosen for convenience.
					
					By the fundamental theorem of calculus and a Hardy inequality, we can control the second integral above by
					\begin{equation*}
					\int_{\Sigma_0\cap \{r\leq R\}} |X\psi_m||\psi_m|\,d\sigma dr d\tau\leq C\int_{\Sigma_0\leq \{r\leq R\}} \mathcal{E}_{1+\epsilon}[\phi]\,d\sigma dr.
					\end{equation*}
					Using that $\psi_m$ is compactly supported along $\Sigma_0$, we can also apply a Hardy inequality in $\tau$ to estimate the above boundary term along $\mathcal{I}^+$ by $\int_{\mathcal{I}^+\cap \{0\leq\tau \leq 1\}} |T\psi_m|^2\,d\sigma d\tau$ and the boundary term along $\mathcal{H}^+$ by 
					\begin{equation*}
					\int_{\mathcal{H}^+\cap \{0\leq\tau \leq 1\}} |K\psi_m|^2\,d\sigma d\tau+\int_{\Sigma_0\cap\{r=1\}}|\psi_m|^2\,d\sigma.
					\end{equation*}
					The integrals of the time derivatives along $\mathcal{H}^+$ and $\mathcal{I}^+$ can then be estimated via Proposition \ref{prop:locenest} with $p=0$.
					
					Finally, the term involving a factor $XT\psi_m$ can now be estimated by using that $\square_g\psi_m=G_m$ and applying \eqref{eq:waveeqX}. The resulting terms that do not involve $G_m$ can then be estimated straightforwardly by integrating by parts in the $X$- and $T$-directions and applying Proposition \ref{prop:locenest} with $p=0$. The term involving $G_m$ leads to the following additional term on the right-hand side:
					\begin{equation*}
					\int_0^1\int_{\Sigma_{\tau}}|\rho^2G|^2\,d\sigma drd\tau.
					\end{equation*}
					
					When estimating the integral
					\begin{equation*}
					\left|\int_{\mathcal{R}\cap\{r\geq R\}} (r^2+1)^{-\frac{1}{2}}g_2 \Re( T\uppsi_m \cdot \rho^2G_m)\,d\sigma drdt\right|,
					\end{equation*}
					we cannot make use of compact support in $\tau$. Instead, we apply Young's inequality to control this integral by
					\begin{equation*}
					\int_0^{\infty} \int_{\Sigma_{\tau}\cap\{r\geq R\}}\eta r^{1-\delta}|T\phi_m|^2+C \eta^{-1}r^{1+\delta}|rG_m|^2\,d\sigma drd\tau,
					\end{equation*}

					The integral of $\chi_K\Re(i\tomega \psi_m\cdot \overline{H})$ can be estimated in an analogous way, with the role of $T$ taken on by $K$ and where we apply \eqref{eq:waveeqtildeX} instead of \eqref{eq:waveeqX}. The only difference here, is that we estimate the integral
					\begin{equation*}
					\left|\int_{\mathcal{R}\cap\{r\leq R_+\}} (r^2+1)^{-\frac{1}{2}}g_3 \Re( K\uppsi_m \cdot \rho^2G_m)\,d\sigma drdt\right|,
					\end{equation*}
					via Young's inequality by
					\begin{equation*}
					\int_0^{\infty} \int_{\Sigma_{\tau}\cap\{r\leq R_+\}}\eta (r-1)^{\epsilon-1}|K\phi_m|^2+C \eta^{-1}(r-1)^{1-\epsilon}|G_m|^2\,d\sigma drd\tau.
					\end{equation*}
					
					We now turn to the integral of $-2y\Re(u'\overline{H})$. We apply Plancherel in the form \eqref{eq:planch2} with the function $g_4=-2y$ and we use that $|g_4|\leq C$ to infer that the integral is bounded by:
					\begin{equation*}
					\begin{split}
					\int_{\mathcal{R}}& (r^2+1)^{-\frac{1}{2}}g_4 \Re( {\mathfrak{X}}_*\uppsi_m \cdot \rho^2\overline{F}_{\xi})\,d\sigma drdt\\
					\leq &\:C \int_{0}^{\infty} \int_{\Sigma_{\tau}} r^{-1}|{\mathfrak{X}}_*\psi_m|(|\rho^2F_{\xi}|+|\rho^2G_m|)\,d\sigma dr d\tau\\
					\leq &\:C \int_0^1\int_{\Sigma_{\tau}}  \mathcal{E}_0[\phi]\,r^2d\sigma dr d\tau+C\int_{0}^{\infty} \int_{\Sigma_{\tau}}  \eta (1-r^{-1})^{-1+\epsilon} r^{-1-\epsilon}|{\mathfrak{X}}_*\psi_m|^2\\
					+&\:\eta^{-1}(1-r^{-1})^{1-\epsilon}r^{-1+\epsilon}|\rho^2G|^2\,d\sigma dr d\tau.
					\end{split}
					\end{equation*}
					We can absorb the term with the factor $\eta$ for $\eta$ suitably small and estimate the integral in $\{0\leq \tau\leq 1\}$ via Proposition \ref{prop:locenest} with $p=0$.
					
					Consider now the integral of $(r-1)^{-1+\epsilon}\chi_+ \Re\left(v'e^{-i \widetilde{\omega}r_*}\Delta^{-\frac{i m}{2}}\overline{H}\right)$. Recall the definition of $v$ for $a=1$:
					\begin{equation*}
v= e^{i \widetilde{\omega} r_*}\Delta^{\frac{im}{2}} u.
\end{equation*}
Hence
\begin{equation*}
v'e^{-i \widetilde{\omega} r_*}\Delta^{-\frac{im}{2}}\overline{H}=\left[u'+i\tomega u-im(r^2+1)^{-1} (r-1)u\right]\cdot \overline{H}.
\end{equation*}
Note moreover that
\begin{equation*}
{\mathfrak{X}}_*=\frac{\Delta}{r^2+1}Y+K-\frac{r-1+\frac{1}{2}(r-1)^2}{r^2+1}\Phi.
\end{equation*}

By Plancherel in the form of \eqref{eq:planch2} with $g_1=im (r^2+1)^{-1} (r-1)$, $g_3=1$ and $g_4=1$ we can therefore estimate the corresponding integral by:
\begin{equation*}
\int_{\mathcal{R}\cap\{r\leq R_+\}} (r-1)^{-1+\epsilon}(r^2+1)^{-\frac{1}{2}}\Re\left( \frac{\Delta}{r^2+1}Y\uppsi_m \cdot \rho^2\overline{F}\right)+O((r-1)^{1+\epsilon})\Re( \Phi\uppsi_m \cdot \rho^2\overline{F})\,d\sigma drdt,
\end{equation*}
with $\chi_+$ supported in $\{r\leq R_+-1\}$.

This integral can be further controlled by:
\begin{equation*}
\begin{split}
\int_0^{\infty} &\int_{\Sigma_{\tau}} (r-1)^{1+\epsilon} (|X\psi_m|+|T\psi_m|+|\Phi \psi_m|)(|F_{\xi}|+|G_m|)\,d\sigma drd\tau\leq C \int_0^1\int_{\Sigma_{\tau}\cap \{r\leq R_+\}} \mathcal{E}_{1-\epsilon}[\phi_m]\,d\sigma dr d\tau\\
&\: + \int_0^{\infty}\int_{\Sigma_{\tau}\cap \{r\leq R_+\}} \eta (r-1)^{2+\epsilon}(|X\psi_m|^2+|K\psi_m|^2+|\Phi \psi_m|^2)+ C \eta^{-1}|G_m|^2\,d\sigma drd\tau,
\end{split}
\end{equation*}
where, as above, we absorb the term with a factor $\eta$ and estimate the integral in $\{0\leq \tau\leq 1\}$ via Proposition \ref{prop:locenest} with $p=1-\epsilon$.

Finally, we will estimate
\begin{multline*}
\int_{\R_{\omega}}\int_1^{\infty}\sum_{\ell}\mathbf{1}_{|\tomega|\leq z_0^{-1}|m| (r-1)}\mathbf{1}_{m\tomega<0} \chi_+y_{+}\Re\left(\frac{d(\mathfrak{w}_+^{-1}u)}{dr}\overline{\mathfrak{w}_+^{-1}H}\right)\,dr d\omega.
			\end{multline*}
			The integral involving $\mathfrak{w}_-$ can be estimated analogously. We would like to apply Plancherel \eqref{eq:planch2} with a suitable function $g_4$, but due to the presence of $\mathbf{1}_{|\tomega|\leq z_0^{-1}|m| (r-1)}\mathbf{1}_{m\tomega<0}$, we need to replace $\uppsi_m$ on the right-hand side of \eqref{eq:planch2} with the convolution:
			\begin{equation*}
			\uppsi_m^{\rm conv}:=\mathcal{F}^{-1}_{\omega}(\mathbf{1}_{|\tomega|\leq z_0^{-1}|m| (r-1)}\mathbf{1}_{m\tomega<0})*\uppsi_m,
			\end{equation*}
			where $\mathcal{F}^{-1}_{\omega}$ denotes the inverse Fourier transform with respect to the frequency $\omega$.
			
			Note that for $m<0$,
			\begin{equation*}
			\mathcal{F}^{-1}_{\omega}(\mathbf{1}_{|\tomega|\leq z_0^{-1}|m| (r-1)}\mathbf{1}_{m\tomega<0})(t,r)=\frac{e^{i\frac{m}{2}t}}{\sqrt{2\pi}}\int_{0}^{z_0^{-1}|m| (r-1)}e^{i\tomega t}\,d\tomega=\frac{e^{i\frac{m}{2}t}}{i \sqrt{2\pi} t}(1-e^{iz_0^{-1}m(r-1) t}).
			\end{equation*}
			For $m>0$, we similarly obtain:
			\begin{equation*}
			\mathcal{F}^{-1}_{\omega}(\mathbf{1}_{|\tomega|\leq z_0^{-1}|m| (r-1)}\mathbf{1}_{m\tomega<0})(t,r)=-\frac{e^{i\frac{m}{2}t}}{i \sqrt{2\pi} t}(1-e^{iz_0^{-1}m(r-1) t}).
			\end{equation*}
			By Plancherel, we need to control
			\begin{equation*}
\int_{\mathcal{R}\cap\{r\leq R_+\}} (r^2+1)^{-\frac{1}{2}}\chi_+y_+\Re\left({\mathfrak{X}}_*\left(\mathfrak{w}_+^{-1}\uppsi_m^{\rm conv}\right) \cdot \overline{\mathfrak{w}_+}^{-1}\rho^2\overline{F}\right)\,d\sigma drdt,
\end{equation*}
		which can be estimated by
		\begin{equation*}
		\int_0^{\infty}\int_{\Sigma_{\tau}\cap\{r\leq R_+\}} (r-1)^{-1}((r-1)^2|X\uppsi_m^{\rm conv}|+|m|(r-1)|\uppsi_m^{\rm conv}|)(|F_{\xi}|+|G_m|)\,d\sigma dr d\tau,
		\end{equation*}
		where we moreover used that $|K\uppsi_m^{\rm conv}|^2\leq z_0^{-2}m^2(r-1)^2|\uppsi_m^{\rm conv}|$ by $|\tomega|\leq z_0^{-1}|m| (r-1)$.
		
		Note that $|\uppsi_m^{\rm conv}|$ can be estimated as follows:  by Cauchy--Schwarz together with
		\begin{equation*}
		|1-e^{iz_0^{-1}m(r-1) s}|\leq C |m|(r-1)\quad \textnormal{for $|s|\leq 1$}
		\end{equation*}
		we obtain for $0\leq \tau\leq 1$:
		\begin{equation*}
		\begin{split}
		|\uppsi_m^{\rm conv}|^2(\tau,r,\theta,\varphi)\leq&\:\left(\int_{\R}\frac{1}{\sqrt{2\pi} |s|}|1-e^{iz_0^{-1}m(r-1) s}||\uppsi_m|(\tau-s,r,\theta,\varphi)\,ds\right)^2\\
		\leq&\: Cm^2\int_{\R\setminus(-1,1)} |s|^{-2}\,ds\int_{\R} (r-1)^2\xi^2|\psi_m|^2(\tau-s,r,\theta,\varphi)\,ds\\
		+&\:C\int_{-1}^1 \xi^2|\psi_m|^2(\tau-s,r,\theta,\varphi)\,ds\\
		\leq&\: C\int_0^{\infty} (r-1)^2|\psi_m|^2(\tau',r,\theta,\varphi)\,d\tau'+C\int_0^{2 }|\psi_m|^2(\tau',r,\theta,\varphi)\,d\tau'.
		\end{split}
		\end{equation*}
		Similarly,
		\begin{equation*}
		\begin{split}
		|X\uppsi_m^{\rm conv}|^2(\tau,r,\theta,\varphi)\leq&\: C\int_0^{\infty} (r-1)^2|X\psi_m|^2(\tau',r,\theta,\varphi)\,d\tau'+C\int_0^{2 }|X\psi_m|^2(\tau',r,\theta,\varphi)\,d\tau'.
		\end{split}
		\end{equation*}
		Therefore
		\begin{equation*}
		\begin{split}
		\int_0^{1}&\int_{\Sigma_{\tau}\cap\{r\leq R_+\}} (r-1)^{-1}((r-1)^2|X\uppsi_m^{\rm conv}|+|m|(r-1)|\uppsi_m^{\rm conv}|)|F_{\xi}|\,d\sigma dr d\tau\\
		\leq&\: \eta  \int_0^{2}\int_{\Sigma_{\tau}\cap\{r\leq R_+\}}(r-1)^{-1+\epsilon}m^2|\psi_m|^2+(r-1)^{1+\epsilon}|X\psi_m|^2\,d\sigma drd\tau\\
		+&\: \eta \int_0^{\infty}\int_{\Sigma_{\tau}\cap\{r\leq R_+\}}(r-1)^{1+\epsilon}m^4|\psi_m|^2+(r-1)^{3+\epsilon}|X\psi_m|^2\,d\sigma drd\tau\\
		+&\:\int_0^{1}\int_{\Sigma_{\tau}\cap\{r\leq R_+\}}C\eta^{-1}(r-1)^{1-\epsilon}|F_{\xi}|^2\,d\sigma dr d\tau.
		\end{split}
		\end{equation*}
		By taking $\eta$ suitably small (depending also on $|m|$) and applying a Hardy inequality to control the $(r-1)^{-1+\epsilon}m^2|\psi_m|^2$ term by the $(r-1)^{1+\epsilon}|X\psi_m|^2$, we can either absorb the terms with a factor $\eta$ or control them via the local energy estimates via Proposition \ref{prop:locenest} with $p=1-\epsilon$. The $F_{\xi}$ term can similarly be estimated via Proposition \ref{prop:locenest} with $p=1+\epsilon$.
		
		To estimate the terms involving $|G_m|$, we use that for all $\tau\geq 0$:
		\begin{equation*}
		\begin{split}
		\int_{\R}|\uppsi_m^{\rm conv}|^2(t,r,\theta,\varphi)\,dt\leq&\: \int_0^{\infty} |\psi_m|^2(\tau,r,\theta,\varphi)\,d\tau,
		\end{split}
		\end{equation*}
		so
		\begin{equation*}
		\begin{split}
		\int_0^{\infty}&\int_{\Sigma_{\tau}\cap\{r\leq R_+\}} (r-1)^{-1}((r-1)^2|X\uppsi_m^{\rm conv}|+|m|(r-1)|\uppsi_m^{\rm conv}|)|G_m|\,d\sigma dr d\tau\\
		\leq&\: C\int_0^{\infty}\int_{\Sigma_{\tau}\cap\{r\leq R_+\}} \eta (r-1)^{2+\epsilon}|X\psi_m|^2+\eta m^2(r-1)^{\epsilon}|\psi_m|^2+C \eta^{-1}(r-1)^{1-\epsilon}|G_m|^2\,d\sigma dr d\tau.
		\end{split}
		\end{equation*}
	We absorb the terms with a factor $\eta$.
		\end{proof}

				We can apply \eqref{eq:mainmorawetz} to obtain different (higher-order) integrated energy estimates in different spacetime regions. 
				
				\begin{theorem}
				\label{thm:morawetz}
				Assume that \eqref{assm:wronk} holds and let $0<\epsilon<1$. Let $\phi$ supported on azimuthal modes with azimuthal number $|m|\leq m_0$.
				\begin{enumerate}[label=\emph{(\roman*)}]
				\item There exists a $C=C(m_0,\mathbbm{h},M, \epsilon)>0$ such that such that for all $0\leq \tau_1< \tau_2$:
				\begin{multline}
				\label{eq:mornearhor}
				\int_{\Sigma_{\tau_2}}\mathcal{E}_{1-\epsilon}[\phi]\,r^2d\sigma dr +\int_{\tau_1}^{\tau_2} \int_{\Sigma_{\tau}\cap\{r\leq 2M\}}(1-Mr^{-1})^{\epsilon}\mathcal{E}_0[\phi]+(1-Mr^{-1})^{-1+\epsilon}|K\phi|^2\,d\sigma dr d\tau\\
				\leq C \int_{\Sigma_{\tau_1}}\mathcal{E}_{1+\epsilon}[\phi]\,r^2d\sigma dr+C  \int_{\tau_1}^{\infty}\int_{\Sigma_{\tau}}[(1+\tau)^{1+\epsilon}(1-Mr^{-1})^{1-\epsilon}+r^2\chi_{\tau_1\leq \tau\leq \tau_2+1}+r^{1+\epsilon}]|G|^2\,r^2d\sigma drd\tau.
				\end{multline}
			
					\item Let $M<r_H<2M<r_I$. Then there exists a $C=C(m,\mathbbm{h},M, \epsilon, r_H,r_I)>0$ such that such that for all $0\leq \tau_1< \tau_2$:
				\begin{multline}
				\label{eq:morneartrapps}
				\int_{\tau_1}^{\tau_2} \int_{\Sigma_{\tau}\cap\{r_H\leq r\leq r_I\}}(|\phi|^2+\mathcal{E}_2[\phi])\,r^2d\sigma dr d\tau\\
				\leq C\sum_{k\leq 1}\int_{\Sigma_{\tau_1}}\mathcal{E}_{1+\epsilon}[K^{k}\phi]\,r^2d\sigma dr\\
				+C \int_{\tau_1}^{\infty}\int_{\Sigma_{\tau}}[(1+\tau)^{1+\epsilon}(1-Mr^{-1})^{1-\epsilon}+r^2\chi_{\tau_1\leq \tau\leq \tau_2+1}+r^{1+\epsilon}]|K^{k}G|^2\,r^2d\sigma drd\tau.
				\end{multline}

				\item Let $M<r_H<2M$ and $4M<r_I<\infty$. Denote with $Z$ an element of $\{(1-Mr^{-1})\Lbar,rL,\Phi, \slashed{\nabla}_{\s^2}\}$. Then there exists a $C=C(\mathbbm{h},M, W_0, \epsilon, r_H,r_I)>0$ such that such that for all $0\leq \tau_1< \tau_2$:
				\begin{multline}
				\label{eq:mornearho}
				\sum_{|I|\leq N}\int_{\tau_1}^{\tau_2} \int_{\Sigma_{\tau}\cap\{r_I-M\leq r\leq r_I\}\cup\{r_H\leq r\leq r_H+M\}}(|Z^I\phi|^2+\mathcal{E}_2[Z^I\phi])\,r^2d\sigma dr d\tau\\
				\leq C\sum_{k\leq N}\int_{\Sigma_{\tau_1}}\mathcal{E}_{1+\epsilon}[T^{k}\phi]\,r^2d\sigma dr\\
				+C \int_{\tau_1}^{\infty}\int_{\Sigma_{\tau}}[(1+\tau)^{1+\epsilon}(1-Mr^{-1})^{1-\epsilon}+r^2\chi_{\tau_1\leq \tau\leq \tau_2+1}+r^{1+\epsilon}]|K^{k}G|^2\,r^2d\sigma drd\tau.
				\end{multline}
				\end{enumerate}
				\end{theorem}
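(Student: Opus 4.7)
I would apply the proposition containing \eqref{eq:mainmorawetz} with $\Pi\equiv 0$ (so that $\hphi=\phi$, and the $F_{\xi}$-contribution reduces to data-dependent terms localized in $\{0\leq\tau\leq 1\}$); by time-translation invariance of $T,K$, the cut-off $\xi$ can be shifted to localize near $\tau_1$ instead of $0$. On the near-horizon slab $\{r\leq 2M\}$ we have $\zeta_{\rm trap}\equiv 1$, since the trapping radius $r_{\rm trap}\approx 1+\sqrt{2}$ lies strictly outside $[M,2M]$ for $a=M=1$, so \eqref{eq:mainmorawetz} delivers the entire spacetime integral on the LHS of (i) directly. To upgrade this to a flux bound $\int_{\Sigma_{\tau_2}}\mathcal{E}_{1-\epsilon}[\phi]\,r^2 d\sigma dr$, I would use the causal vector field multiplier $V_{1-\epsilon}$ from the proof of Proposition~\ref{prop:locenest}: it is smooth, future-directed, timelike in $\{r>M\}$, and null on $\mathcal{H}^+$ (since $p=1-\epsilon<2$), with $\mathbb{T}[\phi](V_{1-\epsilon},\mathbf{n}_{\Sigma_\tau})\sim\mathcal{E}_{1-\epsilon}[\phi]$. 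The associated energy identity exchanges the fluxes at $\tau_1$ and $\tau_2$ for a bulk deformation term $\int\!\!\int {}^{(V_{1-\epsilon})}\pi\cdot \mathbb{T}[\phi]$ and a source term $\int\!\!\int V_{1-\epsilon}(\phi)\cdot G$; ${}^{(V_{1-\epsilon})}\pi$ is tuned so that its density is pointwise dominated by the already-controlled Morawetz integrand $(1-Mr^{-1})^\epsilon\mathcal{E}_0[\phi]+(1-Mr^{-1})^{-1+\epsilon}|K\phi|^2$ near the horizon, and by a far-away $r^{-1-\delta}$-weighted density handled by a standard $r^p$-estimate with $p=1-\epsilon$. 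The source term is absorbed via Cauchy--Schwarz against the $(1+\tau)^{1+\delta}|G|^2$-type weights in \eqref{eq:mainmorawetz}.

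\textbf{Part (ii).} I would apply \eqref{eq:mainmorawetz} to $\phi$ and separately to $K\phi$, using $\square_g(K\phi)=KG$ from $[K,\square_g]=0$. Restricting to $\{r_H\leq r\leq r_I\}$ — bounded strictly away from $r=M$ — makes $(1-Mr^{-1})$ bounded below, so the degenerate $X$-weights trivialize and $\mathcal{E}_0[\phi]\sim\mathcal{E}_2[\phi]$ off the trapping set. The only obstruction to non-degenerate control is the $\zeta_{\rm trap}$ factor on the $|K\phi|^2$, $|T\phi|^2$ and $|\snabla_{\s^2}\phi|^2$ terms, where $1-\zeta_{\rm trap}$ is supported on a small fixed interval around $r_{\rm trap}\in (r_H,r_I)$. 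Since the commuted estimate controls $|TK\phi|^2=|KT\phi|^2+O(m)K\Phi\phi$ and $|K\snabla_{\s^2}\phi|^2$ without the $\zeta_{\rm trap}$ factor on any open subinterval abutting the trapping set, a one-dimensional fundamental-theorem-of-calculus argument in $r$ — starting from the edges of the trapping interval, where $\zeta_{\rm trap}\equiv 1$ and the $\phi$-estimate already furnishes control — transfers the control across the trapping interval at the cost of exactly one $K$-derivative. The missing $|X\phi|^2$ in $\mathcal{E}_2[\phi]$ is recovered algebraically from \eqref{eq:waveeqX}, which expresses $X(\Delta X\phi)$ in terms of $K$-, $T$-, $\Phi$- and $\snabla_{\s^2}$-derivatives already bounded, plus $\rho^2 G$.

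\textbf{Part (iii).} On each of the two compact annuli $\{r_H\leq r\leq r_H+M\}$ and $\{r_I-M\leq r\leq r_I\}$, which are bounded away from $\mathcal{H}^+$, $\mathcal{I}^+$ and the trapping radius, the wave equation in the form \eqref{eq:waveeqnull} is (up to smooth, $O(1)$ coefficients on the annulus) elliptic in the mixed null pair $(L,\Lbar)$ after angular commutation: it expresses $L\Lbar\psi$ in terms of $K^2\psi$, $K\Phi\psi$, $\Phi^2\psi$, $\snabla_{\s^2}^2\psi$ and lower-order terms plus a multiple of $\rho^2 G$. Iterating this identity trades each application of $(1-Mr^{-1})\Lbar$ or $rL$ for one $T$-derivative or a tangential $\Phi,\snabla_{\s^2}$-derivative, which are themselves controlled by commuting \eqref{eq:waveeq} with the Killing vector fields $T,\Phi$ and the Carter-type angular operator of \S \ref{sec:vf}. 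Combining $|I|$ such trades with (ii) applied to $T^k\phi$ (legitimate since $[T,\square_g]=0$) yields (iii); the linear growth $k\leq N$ of the $T$-derivative budget matches the statement, and each $T^k\phi$ estimate costs exactly one extra $K$-derivative from the trapping loss in (ii).

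\textbf{Main obstacle.} The delicate technical point is in part (i): verifying that the deformation density ${}^{(V_{1-\epsilon})}\pi\cdot \mathbb{T}[\phi]$ is pointwise absorbable against the Morawetz integrand of \eqref{eq:mainmorawetz} near $\mathcal{H}^+$, where $V_{1-\epsilon}$ becomes null. The exponent $p=1-\epsilon$ (rather than $1+\epsilon$) is essential: a less degenerate multiplier would generate a bulk term with $(1-Mr^{-1})^{-1-\epsilon}$-type weights that the Morawetz bulk cannot absorb, and it is precisely this trade-off — bounded $\mathcal{E}_{1-\epsilon}$-flux at $\tau_2$ controlled by bounded $\mathcal{E}_{1+\epsilon}$-flux at $\tau_1$ — that encodes the loss of horizon regularity from initial data to later times.
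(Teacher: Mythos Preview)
Your treatment of parts (i) and (iii) is essentially correct and matches the paper. For (i), the paper uses precisely the multiplier $V=L+(r-M)^{-1+\epsilon}\Lbar$ (i.e.\ your $V_{1-\epsilon}$), glued to $T$ for $r\geq 2\tfrac14 M$ so that the deformation tensor is supported in $\{r\leq 2\tfrac14 M\}$, and then verifies $|\nabla^\mu(\mathbb{T}_{\mu\nu}V^\nu)|\leq C(r-M)^\epsilon\mathcal{E}_0[\phi]+(r-1)^{-1+\epsilon}|K\phi|^2$, which is absorbed by the Morawetz bulk since $\zeta_{\rm trap}\equiv 1$ in this region. Your identification of the key trade-off (the $\mathcal{E}_{1-\epsilon}$ flux controlled by $\mathcal{E}_{1+\epsilon}$ data) is exactly the point. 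For (iii), the paper simply observes that on the two annuli, commutation with the Killing fields $K,\Phi$ together with standard elliptic estimates recovers all derivatives; your argument via \eqref{eq:waveeqnull} is a concrete version of the same.

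There is a genuine confusion in your part (ii). You claim the commuted estimate controls $|TK\phi|^2$ and $|K\snabla_{\s^2}\phi|^2$ without $\zeta_{\rm trap}$ ``on any open subinterval abutting the trapping set,'' and then propose to propagate this inward by a fundamental-theorem-of-calculus argument in $r$. But the Morawetz bulk in \eqref{eq:mainmorawetz} already controls the \emph{zeroth-order} term $(1-Mr^{-1})^{\epsilon}r^{-1}|\phi|^2$ \emph{without} any $\zeta_{\rm trap}$ factor. Hence applying \eqref{eq:mainmorawetz} to $T\phi$ (equivalently $K\phi$, since $|m|$ is bounded) gives non-degenerate spacetime control of $|T\phi|^2$ directly, including across the trapping interval---no FTC-in-$r$ is needed, and as stated your FTC argument would require pointwise-in-$\tau$ control at a fixed radius, which you do not have. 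Once $|T\phi|^2$ is controlled, the paper uses that trapping lies outside the ergoregion, so $T$ is timelike there and standard elliptic estimates for $\square_g\phi=G$ recover $|\snabla_{\s^2}\phi|^2$ and $|X\phi|^2$. This is both simpler and the actual mechanism.
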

				\begin{proof}
				Let $V$ be a timelike vector field in $\{r>M\}$ such that $V=L+(r-M)^{-1+\epsilon}\Lbar$ for $r\leq 2M$ and $V=T$ for $r\geq 2\frac{1}{4}M$. Since the ergoregion is contained in $\{r\leq 2M\}$, such a vector field exists. Since $V$ is Killing, $\nabla_{\mu}(\mathbb{T}^{\mu \nu}[\phi] V_{\nu})$ is supported in $\{r\leq 2\frac{1}{4}M\}$.
				
				We can write
				\begin{equation*}
				V=K+O_1(r-M)\Phi+\frac{1}{4M^2}(r-M)^{1+\epsilon}(1+O_1((r-M)^{1-\epsilon})\underline{Y}
				\end{equation*}
	
				\begin{equation*}
				|\nabla_{\mu}(\mathbb{T}^{\mu \nu}[\phi] V_{\nu})|\leq C (r-M)^{\epsilon}\mathcal{E}_0[\phi]+|K\phi||\Phi\phi|\leq C(r-M)^{\epsilon}\mathcal{E}_0[\phi]+(r-1)^{-\epsilon}|K\phi|^2
				\end{equation*}
				By construction, $\zeta_{\rm trap}=1$ in this region. We can therefore apply \eqref{eq:mainmorawetz} with $\Sigma_0$ replaced by $\Sigma_{\tau_1}$ to control the spacetime integral of  $|\nabla_{\mu}(\mathbb{T}^{\mu \nu}[\phi] V_{\nu})|$ and conclude \eqref{eq:mornearhor}.
				
				Consider now (ii). Since the integral to be estimated includes values of $r$ for which $\zeta_{\rm trap}$ vanishes, the left-hand side of \eqref{eq:mainmorawetz} fails to control $|T\phi|^2+|\snabla_{\s^2}\phi|^2$. By replacing $\phi$ with $T\phi$, we can obtain control over $|T\phi|^2$ without degeneracy at the expense of an extra $T$-derivative on the right-hand side. Since the degeneracy of $\zeta_{\rm trap}$ occurs outside the ergoregion where $T$ is timelike, we can apply standard elliptic estimates to control also obtain moreover $|\snabla_{\s^2}\phi|^2$. Since $|m|$ is bounded, we can equivalently express the estimate in terms of extra $K$-derivatives on the right-hand side and we obtain \eqref{eq:morneartrapps}.
				
				Finally, \eqref{eq:mornearho} follows by observing that the integration region is bounded, excludes the values of $r$ for which $\zeta_{\rm trap}$ vanishes and lies strictly outside the event horizon. The result then follows from the observation that commutation with $K$ and $\Phi$ controls all other derivatives via standard elliptic estimates.
				\end{proof}
				
			\section{$r$-weighted and $(r-M)^{-1}$-weighted energy estimates}
			\label{sec:rweight}
			In this section we derive additional $r$-weighted estimates in far-away regions and $(r-M)^{-1}$-weighted energy estimates near the event horizon. These weighted estimates will play a key role in deriving quantitative energy decay estimates and the powers appearing in the weight factors determine the corresponding decay rates. We will assume that $|a|=M$ in this section.
			
			\textbf{Throughout this section and \S \ref{sec:edecay}, $\phi$ will denote a solution to \eqref{eq:inhomwaveeq} such that Theorem \ref{thm:morawetz} applies to $\phi$.} We will later apply these estimates to $\hphi$ (and its $K$-integral) instead of $\phi$.
			
			\subsection{Estimates near $\mathcal{H}^+$}
			\label{sec:rmin1weight}
			We first consider $(r-M)^{-1}$-weighted energy estimates near the event horizon. These rely on the following key identity:
			\begin{proposition}
			\label{prop:rweightesthor}
			The following identity holds in $M\leq r\leq r_H$ for solutions $\psi$ to \eqref{eq:waveeqnull}:
			\begin{multline}
			\label{eq:keyidnearhor}
			K\Bigg[\left(1-\frac{\mathbbm{h}\Delta}{r^2+a^2}\right) (r^2+a^2)^{-1}(r-M)^{-2-p}|\Lbar \psi|^2+\mathbbm{h}\frac{(r-M)^{2-p}}{4}\left( |\widetilde{\snabla}_{\s^2}\psi|^2+2a\upomega_+ |\Phi \psi|^2\right) \\
+\frac{\mathbbm{h}}{4}a^2\sin^2 \theta(r-M)^{2-p} |K\psi|^2+(r-M)^{-p}O((|\Lbar\psi| +(r-M)|\Phi \psi|+\sin^2\theta (r-M)|K \psi|)^2)\Bigg]\\
+\widetilde{X}\Bigg[(r^2+a^2)^2(r-M)^{-p}|\Lbar \psi|^2-\frac{(r-M)^{2-p}}{4}\left( |\widetilde{\snabla}_{\s^2}\psi|^2+2a\upomega_+ |\Phi \psi|^2\right)-\frac{1}{4}a^2\sin^2 \theta(r-M)^{2-p} |K\psi|^2\\
+(r-M)^{2-p} O( |\Phi \psi| |\Lbar \psi|)\Bigg]+\Phi(\ldots)+\Divs(\ldots)\\
+\left[\frac{p (r^2+a^2)^2}{2}(r-M)^{-1-p} +O((r-M)^{-p})\right]|\Lbar \psi|^2+\frac{2-p}{8}(r-M)^{1-p}\left( |\widetilde{\snabla}_{\s^2}\psi|^2+2a\upomega_+ |\Phi \psi|^2\right)\\
+\frac{1}{8(r^2+a^2)^2}a^2\sin^2 \theta(r-M)^{1-p} |K\psi|^2\\
=+(r-M)^{-p} O(|\Phi \psi| |\Lbar \psi|)+O((r-M)^{2-p} )|\Phi \psi|^2+(r-M)^{2-p} O(|\Phi \psi||G|)+(r-M)^{-p} O(|\Lbar \psi||G|).
			\end{multline}
			\end{proposition}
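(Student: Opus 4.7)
The plan is to derive the identity \eqref{eq:keyidnearhor} by applying a weighted multiplier to the form \eqref{eq:waveeqnull} of the wave equation. Specifically, I would multiply \eqref{eq:waveeqnull} by $-\tfrac{1}{4}h(r)\overline{\Lbar\psi}$, where the weight $h(r)$ behaves like a smooth $r$-dependent multiple of $(r-M)^{-p}$ near the horizon (the prefactor adjusted so that all explicit constants in \eqref{eq:keyidnearhor} match), and take real parts. The principal term $-4\tfrac{(r^2+a^2)^2}{\Delta}L\Lbar\psi$ in \eqref{eq:waveeqnull} then contributes $\tfrac{1}{2}h\tfrac{(r^2+a^2)^2}{\Delta}L(|\Lbar\psi|^2)$, which splits via Leibniz into
\begin{equation*}
\tfrac{1}{2}L\!\left[h\tfrac{(r^2+a^2)^2}{\Delta}|\Lbar\psi|^2\right]-\tfrac{1}{2}L\!\left[h\tfrac{(r^2+a^2)^2}{\Delta}\right]|\Lbar\psi|^2.
\end{equation*}
Using $L(F(r))=\tfrac{\Delta}{2(r^2+a^2)}F'(r)$ together with $\Delta=(r-M)^2$, a direct calculation shows that the latter bulk coefficient produces the sought-after positive term $\tfrac{p(r^2+a^2)^2}{2}(r-M)^{-1-p}|\Lbar\psi|^2+O((r-M)^{-p})|\Lbar\psi|^2$ on the LHS of \eqref{eq:keyidnearhor}. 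The total $L$-derivative will then be decomposed into $K$-, $\widetilde{X}$- and $\Phi$-derivatives via
\begin{equation*}
L=\left(1-\tfrac{\mathbbm{h}\Delta}{2(r^2+a^2)}\right)K+\tfrac{\Delta}{2(r^2+a^2)}\widetilde{X}+\left(\tfrac{a}{r^2+a^2}-\upomega_+\right)\Phi,
\end{equation*}
which generates the $|\Lbar\psi|^2$-contributions to the $K$-, $\widetilde{X}$- and $\Phi$-flux terms appearing on the LHS of \eqref{eq:keyidnearhor}.

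The tangential terms on the RHS of \eqref{eq:waveeqnull} --- the $a^2\sin^2\theta\,K^2\psi$, $2a(1-a\upomega_+\sin^2\theta)K\Phi\psi$, $(\slashed{\mathcal{D}}-2a\upomega_+\Phi^2)\psi$, $-\tfrac{2ar}{r^2+a^2}\Phi\psi$ and $\psi$-coefficient pieces --- will be handled by integration by parts. Since $[K,\Lbar]=0$ (as $K=\partial_v$ and $\Lbar$ is a $K$-invariant multiple of $Y=\partial_r$ in $(v,r,\theta,\tphi)$ coordinates), one has
\begin{equation*}
h a^2\sin^2\theta\,\Re(K^2\psi\,\overline{\Lbar\psi})=\tfrac{1}{2}K\bigl[h a^2\sin^2\theta\bigl(\Re(K\psi\,\overline{\Lbar\psi})+\Re(\psi\,\overline{\Lbar K\psi})\bigr)\bigr]-\tfrac{1}{2}h a^2\sin^2\theta\,\Lbar(|K\psi|^2).
\end{equation*}
Rewriting the last $\Lbar$-derivative via $\Lbar=\tfrac{\mathbbm{h}\Delta}{2(r^2+a^2)}K-\tfrac{\Delta}{2(r^2+a^2)}\widetilde{X}$ contributes the angular density $\mathbbm{h}\tfrac{(r-M)^{2-p}}{4}a^2\sin^2\theta|K\psi|^2$ inside the $K$-flux, the corresponding $\widetilde{X}$-flux counterpart, and the positive bulk term $\tfrac{a^2\sin^2\theta}{8(r^2+a^2)^2}(r-M)^{1-p}|K\psi|^2$ in \eqref{eq:keyidnearhor}. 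An entirely analogous manipulation for $h\,\Re(\slashed{\mathcal{D}}\psi\,\overline{\Lbar\psi})$, using self-adjointness of $\slashed{\mathcal{D}}$ on $\s^2$ (which produces the $\Divs$-terms on the sphere) and combining with the $-2a\upomega_+\Phi^2\psi$ piece, will produce the $K$- and $\widetilde{X}$-flux angular densities $\mathbbm{h}\tfrac{(r-M)^{2-p}}{4}(|\widetilde{\snabla}_{\s^2}\psi|^2+2a\upomega_+|\Phi\psi|^2)$ and the positive bulk $\tfrac{2-p}{8}(r-M)^{1-p}(|\widetilde{\snabla}_{\s^2}\psi|^2+2a\upomega_+|\Phi\psi|^2)$.

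The remaining subprincipal contributions --- the $K\Phi\psi$, $\Phi\psi$ and $\psi$-coefficient pieces from \eqref{eq:waveeqnull}, together with cross terms generated by the decomposition $\Lbar=\tfrac{\mathbbm{h}\Delta}{2(r^2+a^2)}K-\tfrac{\Delta}{2(r^2+a^2)}\widetilde{X}$ and by the $L$-decomposition of the principal flux --- will yield only $(r-M)^{-p}O(|\Phi\psi||\Lbar\psi|)$, $(r-M)^{2-p}O(|\Phi\psi|^2)$, and $(r-M)^{-p}O(|\Lbar\psi||G|)$-type expressions, which can be absorbed into the schematic $O(\ldots)$ expressions on either side of \eqref{eq:keyidnearhor} via Cauchy--Schwarz. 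The main obstacle will be the careful coefficient bookkeeping: one must verify that the $|\Lbar\psi|^2$-coefficient of the $K$-flux emerges as the full $(1-\mathbbm{h}\Delta/(r^2+a^2))(r^2+a^2)^{-1}(r-M)^{-2-p}$ --- featuring twice the $\mathbbm{h}$-contribution of the naive $L$-decomposition of the principal term alone --- after correctly accounting for the additional $\tfrac{\mathbbm{h}\Delta}{2(r^2+a^2)}K$-pieces produced by the $\Lbar(|K\psi|^2)$ and analogous angular $\Lbar$-integrations by parts, and that the precise $\mathbbm{h}/4$ prefactors in the angular $K$-flux densities match the identity as stated.
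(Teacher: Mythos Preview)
Your overall strategy matches the paper's: multiply \eqref{eq:waveeqnull} by a weighted $\overline{\Lbar\psi}$ and integrate by parts, handling the $L\Lbar\psi$, $K^2\psi$, and angular terms essentially as you describe. The paper uses the multiplier $-\tfrac{1}{2}(r-M)^{2-p}\overline{\Lbar\psi}$, writes the resulting identity with total $L$- and $\Lbar$-derivatives (labelled $J_1,\ldots,J_6$), and only then multiplies everything by $\tfrac{r^2+a^2}{\Delta}$ before converting into $K,\widetilde X,\Phi$-derivatives; this ordering is what makes the $\widetilde X$-flux emerge as a clean total derivative, whereas your direct $L$-decomposition would require extra Leibniz bookkeeping to arrive at the same form.

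There is, however, a genuine gap in your treatment of the $2a(1-a\upomega_+\sin^2\theta)K\Phi\psi$ contribution, which you classify as subprincipal and claim reduces to $(r-M)^{-p}O(|\Phi\psi||\Lbar\psi|)$ type remainders. After one $K$-integration by parts this term leaves behind $\Re(\Phi\psi\,\overline{K\Lbar\psi})$, containing the second-order quantity $K\Lbar\psi$ that is not of any schematic form you list and cannot be absorbed by any positive bulk term in the identity. The paper resolves this by writing $K=L+\Lbar+c(r)\Phi$, so that $K\Lbar\psi = L\Lbar\psi+\Lbar^2\psi+c(r)\Phi\Lbar\psi$; the latter two pieces become total derivatives after further integration by parts, while for $L\Lbar\psi$ one \emph{reinserts the wave equation} \eqref{eq:waveeqnull}. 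This reinsertion is precisely the source of the $(r-M)^{2-p}O(|\Phi\psi||G|)$ term on the right-hand side of \eqref{eq:keyidnearhor} --- a remainder type your list omits and which would otherwise have no origin in a pure multiplier computation. (A smaller point: your displayed $K$-integration-by-parts identity for the $K^2\psi$ term should have only $2\Re(K\psi\,\overline{\Lbar\psi})$ in the bracket, not the symmetric combination; and your final-paragraph worry about the factor of two in the $\mathbbm h$-coefficient cannot be resolved by the angular or $K^2\psi$ manipulations, since those never feed back into $|\Lbar\psi|^2$.)
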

			\begin{proof}
			We multiply both sides of \eqref{eq:waveeqnull} with $-\frac{1}{2}(r-M)^{2-p}\overline{\Lbar \psi}$ to obtain:
			\begin{multline}
			\label{eq:keyidrweightesthor}
-\frac{(r-M)^{2-p}}{2}\Re(\rho^2G\overline{\Lbar \psi})=\overbrace{2(r^2+a^2)^2(r-M)^{-p}\Re(L\Lbar \psi\overline{\Lbar \psi})}^{=:J_1}\overbrace{-\frac{1}{2}a^2\sin^2 \theta(r-M)^{2-p}  \Re(K^2\psi\overline{\Lbar \psi})}^{=:J_2}\\
\overbrace{-a(1-a\upomega_+\sin^2\theta)(r-M)^{2-p} \Re(K\Phi \psi\overline{\Lbar \psi})}^{=:J_3}\overbrace{-\frac{(r-M)^{2-p}}{2}\Re((\slashed{\mathcal{D}}-2a\upomega_+ \Phi^2)\psi\overline{\Lbar \psi})}^{=:J_4}\\
\overbrace{- \frac{a r (r-M)^{2-p}}{r^2+a^2}\Re(\Phi\psi\overline{\Lbar \psi})}^{=:J_5}+\overbrace{\frac{(r-M)^{2-p}}{2}\frac{(r^2+a^2)^2}{2 r\Delta} \frac{d}{dr}\left(r^2(r^2+a^2)^{-3}\Delta^2\right)\Re(\psi \overline{\Lbar \psi})}^{=:J_6}.
\end{multline}

We start by rewriting $J_1$, using that $L(r)=\frac{\Delta}{2(r^2+a^2)}$:
\begin{equation*}
\begin{split}
J_1=&\:2(r^2+a^2)^2(r-M)^{-p}\Re(L\Lbar \psi\overline{\Lbar \psi})=(r^2+a^2)^2(r-M)^{-p}L(|\Lbar \psi|^2)\\
=&\: L((r^2+a^2)^2(r-M)^{-p}|\Lbar \psi|^2)+\left[\frac{p (r^2+a^2)}{2}(r-M)^{1-p} +O((r-M)^{2-p})\right]|\Lbar \psi|^2.
\end{split}
\end{equation*}
Now, we turn to $J_2$, using that $\Lbar(r)=-\frac{\Delta}{2(r^2+a^2)}$:
\begin{equation*}
\begin{split}
J_2=&\:-\frac{1}{2}a^2\sin^2 \theta(r-M)^{2-p}  \Re(K^2\psi\overline{\Lbar \psi})\\
=&\: K\left((r-M)^{2-p}\sin^2\theta O(|K\psi| |\Lbar \psi|)  \right)+\frac{1}{4}a^2\sin^2 \theta(r-M)^{2-p}\Lbar\left( |K\psi|^2\right)\\
=&\: K\left((r-M)^{2-p}\sin^2\theta O(|K\psi| |\Lbar \psi|)  \right)+\Lbar\left(\frac{1}{4}a^2\sin^2 \theta(r-M)^{2-p} |K\psi|^2\right)\\
+&\:\frac{1}{8(r^2+a^2)}a^2\sin^2 \theta(r-M)^{3-p} |K\psi|^2
\end{split}
\end{equation*}
To rewrite $J_3$, we will use the schematic notation $\Phi(\ldots)$ and $\Divs(\ldots)$ to indicate total $\Phi$-derivative terms and divergence terms on $\s^2$, respectively, which both vanish after integration on $\s^2$. We will moreover use that: 
\begin{equation*}
K=L+\Lbar-a\frac{r^2-r_+^2}{(r^2+a^2)(r_+^2+a^2)}\Phi.
\end{equation*}
We then obtain
\begin{equation*}
\begin{split}
J_3=&\: -a(1-a\upomega_+\sin^2\theta)(r-M)^{2-p} \Re(K\Phi \psi\overline{\Lbar \psi})\\
=&\: -K\left( a(1-a\upomega_+\sin^2\theta)(r-M)^{2-p}  \Re(\Phi \psi\overline{\Lbar \psi})\right)+a(1-a\upomega_+\sin^2\theta)(r-M)^{2-p} \Re(\Phi \psi K\overline{\Lbar \psi})\\
=&\: K\left((r-M)^{2-p}O(|\Phi\psi| |\Lbar \psi|)  \right)+a(1-a\upomega_+\sin^2\theta)(r-M)^{2-p} \Re(\Phi \psi L \overline{\Lbar \psi})\\
&\:+a(1-a\upomega_+\sin^2\theta)(r-M)^{2-p} \Re(\Phi \psi \overline{\Lbar ^2 \psi})+\Phi(\ldots).
\end{split}
\end{equation*}
We estimate the term involving $L\Lbar \psi$ by applying \eqref{eq:waveeqnull} again:
\begin{multline*}
a(1-a\upomega_+\sin^2\theta)(r-M)^{2-p} \Re(\Phi \psi L \overline{\Lbar \psi})= a(1-a\upomega_+\sin^2\theta)O((r-M)^{4-p})\Big[  a^2 \sin^2\theta\Re(\Phi \psi \overline{K^2\psi})\\
+2a(1-a\upomega_+\sin^2\theta)\Re(\Phi \psi  \overline{K\Phi \psi})+\Re(\Phi \psi  \overline{(\slashed{\mathcal{D}}-2a \upomega_+ \Phi^2)\psi})+\Re(\Phi \psi \rho^2\overline{G})\Big] \\
+O((r-M)^{4-p} )|\Phi \psi|^2+\Phi(\ldots).
\end{multline*}
Hence,
\begin{multline*}
a(1-a\upomega_+\sin^2\theta)(r-M)^{2-p} \Re(\Phi \psi L \overline{\Lbar \psi})=K((r-M)^{4-p}O(|\Phi \psi|(\sin^2\theta|K\psi|+|\Phi \psi|))+\Phi(\ldots)\\
+\Divs(\ldots)+O((r-M)^{4-p} )|\Phi \psi|^2+O((r-M)^{4-p} )|\Phi \psi||G|.
\end{multline*}
We conclude that
\begin{multline*}
J_3=K((r-M)^{2-p}O( |\Phi \psi|((r-M)^2\sin^2\theta|K\psi|+(r-M)^2|\Phi \psi|+|\Lbar \psi|))+O((r-M)^{4-p} )|\Phi \psi|^2\\
+\Lbar((r-M)^{2-p} O( |\Phi \psi| |\Lbar \psi|))+\Phi(\ldots)+\Divs(\ldots)+O((r-M)^{4-p} )|\Phi \psi|^2+(r-M)^{3-p} O(|\Phi \psi| |\Lbar \psi|)\\
+(r-M)^{4-p} O(|\Phi \psi||G|).
\end{multline*}
We integrate by parts on $\s^2$ to rewrite $J_4$ as follows:
\begin{equation*}
\begin{split}
J_4=&\: -\frac{(r-M)^{2-p}}{2}\Re((\slashed{\mathcal{D}}-2a\upomega_+ \Phi^2)\psi\overline{\Lbar \psi})\\
=&\:\frac{(r-M)^{2-p}}{4}\Lbar\left( |\widetilde{\snabla}_{\s^2}\psi|^2+2a\upomega_+ |\Phi \psi|^2\right)\\
=&\: \Lbar\left[\frac{(r-M)^{2-p}}{4}\left( |\widetilde{\snabla}_{\s^2}\psi|^2+2a\upomega_+ |\Phi \psi|^2\right)\right]+\frac{2-p}{8}\frac{\Delta}{r^2+a^2}(r-M)^{1-p}\left( |\widetilde{\snabla}_{\s^2}\psi|^2+2a\upomega_+ |\Phi \psi|^2\right)\\
&+\Phi(\ldots)+\Divs(\ldots).
\end{split}
\end{equation*}
Since $J_5$ only involves first-order derivatives, we can directly write:
\begin{equation*}
J_5=(r-M)^{2-p} O(|\Phi \psi| |\Lbar \psi|).
\end{equation*}
\textbf{This term cannot be absorbed into the non-negative definite terms in $J_1-J_6$!} For this reason, the spacetime integral of this term cannot be estimated directly and is the \textbf{key source of difficulties} for obtaining $(r-M)$-weighted estimates near $\mathcal{H}^+$ in extremal Kerr.

We can also immediately write:
\begin{equation*}
J_6=(r-M)^{3-p} O(| \psi| |\Lbar \psi|).
\end{equation*}

In order to integrate in spacetime, we combine all $J_1--J_6$ and we express $\Lbar$ and $L$ as follows in terms of $K$ and $\widetilde{X}$:
\begin{align*}
\Lbar=&\: -\frac{\Delta}{r^2+a^2}(\widetilde{X}-\mathbbm{h} K),\\
L=&\: \left(1-\frac{\mathbbm{h}\Delta}{r^2+a^2}\right)K+\frac{\Delta}{r^2+a^2}\widetilde{X}+a\frac{r^2-r_+^2}{(r^2+a^2)(r_+^2+a^2)}\Phi.
\end{align*}
We obtain:
\begin{multline*}
\frac{r^2+a^2}{\Delta}\sum_{i=1}^6J_i=K\Bigg[\left(1-\frac{\mathbbm{h}\Delta}{r^2+a^2}\right) (r^2+a^2)^{-1}(r-M)^{-2-p}|\Lbar \psi|^2\\
+\mathbbm{h}\frac{(r-M)^{2-p}}{4}\left( |\widetilde{\snabla}_{\s^2}\psi|^2+2a\upomega_+ |\Phi \psi|^2\right) \\
+\frac{\mathbbm{h}}{4}a^2\sin^2 \theta(r-M)^{2-p} |K\psi|^2+(r-M)^{-p}O((|\Lbar\psi| +(r-M)|\Phi \psi|+\sin^2\theta (r-M)|K \psi|)^2)\Bigg]\\
+\widetilde{X}\Bigg[(r^2+a^2)^2(r-M)^{-p}|\Lbar \psi|^2-\frac{(r-M)^{2-p}}{4}\left( |\widetilde{\snabla}_{\s^2}\psi|^2+2a\upomega_+ |\Phi \psi|^2\right)-\frac{1}{4}a^2\sin^2 \theta(r-M)^{2-p} |K\psi|^2\\
+(r-M)^{2-p} O( |\Phi \psi| |\Lbar \psi|)\Bigg]+\Phi(\ldots)+\Divs(\ldots)\\
+\left[\frac{p (r^2+a^2)^2}{2}(r-M)^{-1-p} +O((r-M)^{-p})\right]|\Lbar \psi|^2+\frac{2-p}{8}(r-M)^{1-p}\left( |\widetilde{\snabla}_{\s^2}\psi|^2+2a\upomega_+ |\Phi \psi|^2\right)\\
+\frac{1}{8(r^2+a^2)^2}a^2\sin^2 \theta(r-M)^{1-p} |K\psi|^2+(r-M)^{-p} O(|\Phi \psi| |\Lbar \psi|)+O((r-M)^{2-p} )|\Phi \psi|^2\\
+(r-M)^{2-p} O(|\Phi \psi||G|).
\end{multline*}
Then \eqref{eq:keyidnearhor} follows immediately.
			\end{proof}
			
			The following corollary allows us to conveniently characterize solutions to \eqref{eq:inhomwaveeq} with $G=0$ that satisfy the qualitative integrability properties needed to apply the frequency-localized estimates in \S \ref{sec:freqspacean}.
			
			\begin{corollary}
			\label{cor:intassmpart}
			Assume that there exists a $r_H>M$ such that
			\begin{align*}
			\int_{0}^{\infty}\int_{\Sigma_{\tau} \cap\{r\leq r_H\}} |\phi_m|^2\,d\sigma dr d\tau&\:<\infty,\\
			\int_{0}^{\infty}\int_{\Sigma_{\tau} \cap\{r= r_H\}} |\snabla_{\s^2}\phi_m|^2+|K\phi_m|^2\,d\sigma dr d\tau&\:<\infty.
			\end{align*}
			Then, for $r_H$ suitably small, 
			\begin{equation*}
			\int_{0}^{\infty}\int_{\Sigma_{\tau} \cap\{r\leq r_H\}} |\underline{L}\phi_m|^2\frac{r^2+a^2}{\Delta}\,d\sigma dr d\tau<\infty.
			\end{equation*}
			\end{corollary}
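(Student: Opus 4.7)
The plan is to apply the identity \eqref{eq:keyidnearhor} of Proposition \ref{prop:rweightesthor} with $p=1$, since then the positive bulk coefficient $\tfrac{(r^2+a^2)^2}{2}(r-M)^{-2}|\Lbar\psi|^2$ matches the target weight: using $\psi=(r^2+a^2)^{1/2}\phi$, one has $(r-M)^{-2}|\Lbar\psi|^2\sim |\Lbar\phi|^2\,\tfrac{r^2+a^2}{\Delta}$ modulo lower-order terms in $|\phi|^2$ that are integrable by the first hypothesis. Since $\phi_m$ solves the homogeneous equation, the $G$-dependent terms on the RHS of \eqref{eq:keyidnearhor} vanish. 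I would multiply the identity by a smooth cutoff $\chi(r)$ with $\chi\equiv 1$ on $[M,r_1]$ and $\chi\equiv 0$ on $[r_H-\delta,r_H]$ (for some $M<r_1<r_H-\delta$), and integrate the resulting identity over $\{M\le r\le r_H\}\cap\{0\le\tau\le T\}\times\mathbb{S}^2$.

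The boundary contributions are handled as follows. The $K$-boundary at $\tau=0$ is finite because the data is smooth and compactly supported. The $K$-boundary density at $\tau=T$ is, for an appropriate choice of $\hfol$ and $r_H-M$ sufficiently small, non-negative after absorbing the $O$-correction term, so it may simply be discarded. The $\widetilde{X}$-boundary at $r=M$ vanishes, since every term in the $\widetilde{X}$-density carries a factor $(r-M)^k$ with $k\ge 1$ once one uses $\Lbar\psi=O((r-M)^2)Y\psi$. The $\widetilde{X}$-boundary at $r=r_H$ vanishes by $\chi(r_H)=0$, and the $\Phi(\ldots)$ and $\Divs(\ldots)$ contributions integrate away on $\mathbb{S}^2$. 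On the RHS of \eqref{eq:keyidnearhor}, the cross term $(r-M)^{-1}O(|\Phi\psi||\Lbar\psi|)$ is absorbed by a Cauchy--Schwarz with small parameter into the main positive bulk term, and $O((r-M))|\Phi\psi|^2=m^2O((r-M))|\psi|^2$ is controlled by the first hypothesis.

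The main obstacle is the error produced by $\widetilde{X}\chi$, which is a bulk integral supported on the strip $[r_1,r_H-\delta]$ of a combination of $|\Lbar\psi|^2$, $|\snabla_{\s^2}\psi|^2$ and $|K\psi|^2$ with weights that are bounded in this strip. The $|\snabla_{\s^2}\psi|^2$ and $|K\psi|^2$ contributions are bounded by the second hypothesis (read as bulk control in a slab containing $\{r=r_H\}$, which is the only interpretation consistent with the $d\sigma\,dr\,d\tau$ measure). The $|\Lbar\psi|^2$ contribution, which is not directly supplied by the hypotheses, is recovered by a local estimate on the strip $[r_1,r_H]$: since this region is strictly away from $\mathcal{H}^+$, $\Delta$ is bounded below and $|\Lbar\psi|^2\sim |Y\psi|^2$. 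I would pair the wave equation \eqref{eq:waveeqY} (for the fixed mode $\phi_m$) with $\tilde\chi(r)\overline{\phi_m}$ for a slightly enlarged cutoff $\tilde\chi$, and integrate by parts in $Y$; the term $Y(\Delta Y\phi_m)\,\overline{\phi_m}$ produces $\int\Delta|Y\phi_m|^2$, while the tangential-derivative and cross terms are absorbed using Cauchy--Schwarz into the already-controlled quantities $|\phi_m|^2$, $|\snabla_{\s^2}\phi_m|^2$ and $|K\phi_m|^2$ on the enlarged strip, and the $T$-symmetric part is handled by a $T$-differentiation trick that generates only finite boundary terms at $\tau=0,T$.

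Combining everything and letting $T\to\infty$ by monotone convergence yields the desired uniform bound $\int_0^\infty\int_{\Sigma_\tau\cap\{r\le r_1\}}|\Lbar\phi_m|^2\,\tfrac{r^2+a^2}{\Delta}\,d\sigma\,dr\,d\tau<\infty$; the statement on $\{r\le r_H\}$ then follows by choosing $r_H$ itself sufficiently close to $M$ so that the original cutoff argument can be set up with $r_1=r_H$ at the expense of a slightly enlarged outer slab, which is the role of the ``$r_H$ suitably small'' clause.
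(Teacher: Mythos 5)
Your overall strategy (integrating the identity \eqref{eq:keyidnearhor} with $p=1$, $G=0$) is the one the paper uses, and most of the boundary-term bookkeeping is right. But you have misidentified the role of the $|\Lbar\psi|^2$ part of the $\widetilde{X}$-bracket, and as a result you introduce two unnecessary pieces of machinery: the radial cutoff $\chi$, and the entire local Lagrangian estimate that you append to control $|\Lbar\psi|^2$ on the strip $[r_1,r_H]$.

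Look at the sign structure. The leading term in the $\widetilde{X}$-bracket is $+(r^2+a^2)^2(r-M)^{-p}|\Lbar\psi|^2$, which is nonnegative. Without a cutoff, integrating $\widetilde{X}[\cdot]$ over $\{M\leq r\leq r_H\}$ produces $-\int_{\{r=r_H\}}(r^2+a^2)^2(r-M)^{-p}|\Lbar\psi|^2\leq 0$ once moved to the right-hand side, so the $|\Lbar\psi|^2$ outer boundary flux can simply be \emph{dropped}; the only outer boundary fluxes that need to be estimated are those of $|\widetilde{\snabla}_{\s^2}\psi|^2$ and $|K\psi|^2$, which are precisely what the second hypothesis supplies at $r=r_H$. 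Likewise, in your cutoff version the commutator error is $\widetilde{X}(\chi)\cdot\text{stuff}_2$, with $\widetilde{X}\chi\leq 0$ on the transition strip; the $\widetilde{X}(\chi)\cdot(r^2+a^2)^2(r-M)^{-p}|\Lbar\psi|^2$ piece is therefore $\leq 0$ and can be discarded, while the $\widetilde{X}(\chi)$ acting on the $-|\widetilde{\snabla}_{\s^2}\psi|^2$ and $-|K\psi|^2$ pieces is the only part requiring the hypothesis. So there is no uncontrolled $|\Lbar\psi|^2$ bulk term on the strip at all, and the additional Lagrangian estimate you build — pairing \eqref{eq:waveeqY} with $\tilde\chi\overline{\phi_m}$ — is not needed. (It is also the weakest part of the argument as written: the $2(r^2+a^2)\Re(\overline{\phi_m}\,TY\phi_m)$ term generates a $\tau=T$ boundary contribution $\Re(\tilde\chi\,\overline{\phi_m}\,Y\phi_m)|_{\tau=T}$ of indefinite sign, which the phrase ``only finite boundary terms at $\tau=0,T$'' glosses over; one would need an additional mean-value/pigeonhole argument in $T$ or a more careful multiplier to make that step honest.)

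A second, smaller point: in your final paragraph you obtain the bound only on $\{r\leq r_1\}$ and then appeal to re-choosing $r_H$, but the hypotheses are stated for a fixed $r_H$ and the conclusion is for the same $r_H$, so this rearrangement is not quite available; with the correct sign observation this issue disappears, since the direct (cutoff-free) integration already yields the bound on the full region $\{M\leq r\leq r_H\}$. The ``$r_H$ suitably small'' clause is there so that \eqref{eq:keyidnearhor} is valid and the Young absorption of $(r-M)^{-p}O(|\Phi\psi||\Lbar\psi|)$ into the positive bulk terms closes, not to give slack for a cutoff.
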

			\begin{proof}
			We integrate \eqref{eq:keyidnearhor} with $p=1$ and $G=0$ in $\mathcal{R}$ and use that the initial data for $\phi_m$ is smooth and compactly supported to conclude that the integral along $\Sigma_0$ is finite. We moreover apply Young's inequality to absorb the integral of $(r-M)^{-p} O(|\Phi \psi| |\Lbar \psi|)$ into the integrals of $|\underline{L}\phi_m|^2$ and $|\snabla_{\s^2}\phi_m|^2$. 
			\end{proof}

			Before we apply Proposition \ref{prop:rweightesthor} to derive weighted energy estimates near $r=M$, we will derive some commutation relations that we will need when considering appropriate, weighted, higher-order energies.
			
			\begin{lemma}
			Let $M=1$ and $N\in \N$. Denote $Z=\frac{r-1}{2(r^2+1)}Y$. Then we have in the region $r\leq 2$:
			\begin{multline}
			\label{eq:commrelZ1}
			\left[Z^N,-\frac{(r^2+1)^2}{\Delta} L\Lbar \right](\cdot)=-\frac{N}{4}(1+O_{\infty}(r-1))\left(-\frac{r^2+1}{\Delta} L\Lbar Z^{N-1}(\cdot) \right)+\frac{N(r^2+1)}{2}Z^{N+1}(\cdot)\\
			+\sum_{k=1}^NO_{\infty}(1)Z^k(\cdot)+\sum_{k=1}^NO_{\infty}(1)Z^k\Phi(\cdot)+\sum_{k=0}^{N-1}O_{\infty}((r-1)^{-2})L\underline{L}Z^k(\cdot)
			\end{multline}
			\end{lemma}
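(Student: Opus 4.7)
The proof proceeds by induction on $N$, preceded by a structural decomposition of the operator $\mathcal{L}:=-\frac{(r^2+1)^2}{\Delta}L\underline{L}$ in the $Z$-hierarchy. With $a=M=1$ we have $\Delta=(r-1)^2$ and, crucially, $\underline{L}=-(r-1)Z$. Expanding $L=K-\frac{(r-1)(r+1)}{2(r^2+1)}\Phi-\underline{L}$ and computing $L\underline{L}\psi$ directly, the apparent $(r-1)^{-1}$ singularities cancel pairwise and one obtains an operator identity of the form
\begin{equation*}
\mathcal{L} \;=\; (r^2+1)^2\,Z^2 \;+\; \tfrac{r^2+1}{2}\,KY \;+\; A(r)\,\Phi Z \;+\; B(r)\,Z,
\end{equation*}
with $A,B$ smooth on $[1,2]$. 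This identifies $(r^2+1)^2 Z^2$ as the principal part and exhibits the lower-order corrections as first-order operators of the types appearing on the right-hand side of the claim. A second preliminary observation is that $-\frac{r^2+1}{\Delta}L\underline{L}=(r^2+1)^{-1}\mathcal{L}$, so the term $-\frac{r^2+1}{\Delta}L\underline{L}Z^{N-1}$ on the right-hand side equals $(r^2+1)Z^{N+1}+(\text{lower})$, which will be essential for matching coefficients.

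\textbf{Base case $N=1$.} Compute $[Z,\mathcal{L}]$ term by term using the decomposition. The four summands contribute, respectively, $Z((r^2+1)^2)Z^2=2r(r-1)Z^2$, a function times $KY$ arising from $[Z,KY]=K[Z,Y]$, a function times $\Phi Z$ arising from $Z(A)\Phi Z$, and a function times $Z$ arising from $Z(B)Z$. The $Z^2$-term $2r(r-1)Z^2$, together with the $KY$-contribution re-expressed as a multiple of $\mathcal{L}$ via the decomposition (i.e.\ solved for $KY$ in terms of $\mathcal{L}$ and the lower-order pieces), produces both the prefactor $-\tfrac{1}{4}(1+O(r-1))$ in front of $-\frac{r^2+1}{\Delta}L\underline{L}$ and the coefficient $\frac{r^2+1}{2}Z^2$; the remaining contributions fall into $O_\infty(1)Z$ and $O_\infty(1)Z\Phi$. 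This establishes the $N=1$ identity.

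\textbf{Inductive step.} Use the derivation identity
\begin{equation*}
[Z^{N+1},\mathcal{L}] \;=\; Z\,[Z^N,\mathcal{L}] \;+\; [Z,\mathcal{L}]\,Z^N.
\end{equation*}
The second summand is the base case applied to $Z^N(\cdot)$ and contributes $\frac{r^2+1}{2}Z^{N+2}$, the $N=1$ prefactor in front of the $L\underline{L}Z^N$ operator, and error terms already of the declared form. For the first summand, substitute the inductive hypothesis and apply $Z$ to each summand. Two calculations dominate: (i) $Z$ hitting the operator $-\frac{r^2+1}{\Delta}L\underline{L}Z^{N-1}$ produces, after unfolding $-\frac{r^2+1}{\Delta}L\underline{L}=(r^2+1)Z^2+(\text{lower})$, a contribution of shape $(r^2+1)Z^{N+1}+(\text{schematic})$ that upgrades the $L\underline{L}Z^{N-1}$-prefactor to $-\frac{N+1}{4}(1+O(r-1))$ in front of $L\underline{L}Z^N$; and (ii) $Z$ hitting $\frac{N(r^2+1)}{2}Z^{N+1}$ together with (i)'s $Z^{N+2}$-contribution combine with the base-case increment to yield the correct net coefficient $\frac{(N+1)(r^2+1)}{2}Z^{N+2}$. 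The error terms $O_\infty(1)Z^k$ and $O_\infty(1)Z^k\Phi$ close under a single application of $Z$, and $Z((r-1)^{-2})=-\frac{1}{(r^2+1)(r-1)^2}\in O_\infty((r-1)^{-2})$ closes the third class as well.

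\textbf{Main obstacle.} The most delicate point is controlling the singular factor $(r-1)^{-2}$ in the $L\underline{L}Z^k$-terms across the induction: one must verify that commuting $Z$ past $(r-1)^{-2}$ and re-expanding $L\underline{L}$ does not create uncontrolled divergences or spurious $Z^{N+2}$-contributions that would spoil the matching of the coefficient $\frac{N(r^2+1)}{2}$. This is handled by the built-in $(r-1)^2$-vanishing of $\underline{L}$, which ensures that $(r-1)^{-2}L\underline{L}$ is a regular second-order operator whose leading $Z$-content is already captured by the principal part of $\mathcal{L}$. The secondary subtlety is the precise identification of the numerical prefactor $-\frac{N}{4}$: this factor tracks the single combinatorial choice (at each inductive step) of where the $Z$-derivative lands on the $Z^{N-1}$-string inside $L\underline{L}Z^{N-1}$, together with the normalization $\Delta=(r-1)^2$ relating $\mathcal{L}$ and $-\frac{r^2+1}{\Delta}L\underline{L}$, which differ by the factor $(r^2+1)$.
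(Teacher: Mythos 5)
Your proposal takes essentially the same route as the paper's own proof: both argue by induction on $N$, establish the base case by expanding the operator explicitly after substituting $\underline{L}=-(r-1)Z$, and pass from $N$ to $N+1$ by applying $Z$ and reusing the hypothesis. Your structural decomposition $\mathcal{L}=(r^2+1)^2Z^2+\frac{r^2+1}{2}KY+A\Phi Z+BZ$ is precisely what the paper's base-case computation produces in slightly different notation, and your derivation identity $[Z^{N+1},\mathcal{L}]=Z[Z^N,\mathcal{L}]+[Z,\mathcal{L}]Z^N$ is an equivalent repackaging of the paper's step of applying $Z$ to the expanded form of $Z^N(\mathcal{L}f)$ and reapplying the hypothesis.

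Two things to be careful about, however. The unfold/refold bookkeeping in (i)--(ii) is where your write-up is fragile. When you unfold $-\frac{r^2+1}{\Delta}L\underline{L}=(r^2+1)Z^2+(\text{lower})$, the lower part contains $\frac{1}{2}KY=\frac{r^2+1}{r-1}KZ$, a singular-coefficient $K$-derivative for which the statement provides no error class; such pieces must refold exactly into $-\frac{r^2+1}{\Delta}L\underline{L}Z^N$. As written, (i) and (ii) risk double-counting: the $(r^2+1)Z^{N+2}$ produced by unfolding and differentiating is exactly what gets reabsorbed into the $L\underline{L}Z^N$-term, so it contributes nothing net to the $Z^{N+2}$-coefficient (which is entirely $\frac{N(r^2+1)}{2}$ from $Z$ hitting the $Z^{N+1}$-term plus $\frac{r^2+1}{2}$ from $[Z,\mathcal{L}]Z^N$). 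A cleaner way to avoid this, and what the paper's step effectively does, is never to unfold: write $Z\bigl(-\frac{r^2+1}{\Delta}L\underline{L}Z^{N-1}f\bigr)=Z\bigl((r^2+1)^{-1}\mathcal{L}Z^{N-1}f\bigr)$, pull out the $Z$-derivative of the scalar $(r^2+1)^{-1}$ (which is $O_\infty(r-1)$), and then use $Z\mathcal{L}Z^{N-1}=\mathcal{L}Z^N+[Z,\mathcal{L}]Z^{N-1}$ with the base case; the $K$-derivatives then remain hidden inside $L\underline{L}$ throughout. Separately, you should double-check the numerical constant: with $-\frac{r^2+1}{\Delta}L\underline{L}$ appearing on the right-hand side as literally printed, the prefactor would have to be $-\frac{N}{2}$, not $-\frac{N}{4}$, because $-\frac{(r^2+1)^2}{\Delta}L\underline{L}$ and $-\frac{r^2+1}{\Delta}L\underline{L}$ differ by $r^2+1=2(1+O(r-1))$, and that factor of $2$ cannot be absorbed into $(1+O_\infty(r-1))$ without changing the constant. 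One computes directly that $Z\bigl(\frac{(r^2+1)^2}{r-1}\bigr)=-\frac{1}{4}(1+O_\infty(r-1))\frac{(r^2+1)^2}{r-1}$, so $-\frac{N}{4}$ is the correct constant when the operator on the right is $-\frac{(r^2+1)^2}{\Delta}L\underline{L}Z^{N-1}$; the paper's own proof display for $N=1$ has $(r^2+1)^2$ there, which strongly suggests the statement carries a typo that your proof attempt should note rather than silently reproduce.
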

			\begin{proof}
			We will prove \eqref{eq:commrelZ1} via an induction argument. We will first establish \eqref{eq:commrelZ1} for $N=1$. Let $f$ be an arbitrary function on $\mathcal{R}$. Then we can express:
			\begin{equation*}
			\begin{split}
			-\frac{(r^2+1)^2}{\Delta} L\Lbar f=&\:\frac{(r^2+1)^2}{(r-1)^2}\left( T+\frac{a}{r^2+a^2}\Phi+\frac{(r-1)^2}{2(r^2+1)}Y\right)\frac{(r-1)^2}{2(r^2+1)}Yf\\
								=&\:\frac{(r^2+1)^2}{(r-1)^2}\left( T+\frac{a}{r^2+a^2}\Phi+(r-1)Z\right)((r-1)Zf)\\
								=&\:\frac{(r^2+1)^2}{r-1}\left( T+\frac{a}{r^2+a^2}\Phi+(r-1)Z^2f+\frac{r-1}{2(r^2+1)}Zf\right).
			\end{split}
			\end{equation*}
			Acting with $Z$ on both sides of the equation above, we obtain:
			\begin{multline*}
			Z\left(-\frac{(r^2+1)^2}{\Delta} L\Lbar f\right)=-\frac{(r^2+1)^2}{\Delta} L\Lbar Zf-\frac{1}{4}(1+O_{\infty}(r-1))\left(-\frac{(r^2+1)^2}{\Delta} L\Lbar f\right)+\frac{r^2+1}{2}Z^2f\\
			+O_{\infty}(1)Zf+O_{\infty}(1)\Phi f.
			\end{multline*}
			Now we take our that \eqref{eq:commrelZ1} holds with $N$ replaced by $k$ and for all $1\leq k\leq N$. By applying \eqref{eq:commrelZ1}, we obtain
			\begin{multline*}
			ZZ^{N}\left(-\frac{(r^2+1)^2}{\Delta} L\Lbar f\right)=Z\left(-\frac{(r^2+1)^2}{\Delta} L\Lbar Z^Nf\right)-\frac{N}{4}Z\left((1+O_{\infty}(r-1))\left(-\frac{r^2+1}{\Delta} L\Lbar Z^{N-1}(\cdot) \right)\right)\\
			+Z\left(\frac{N(r^2+1)}{2}Z^{N+1}f\right)+\sum_{k=1}^{N+1}O_{\infty}(1)Z^k f+\sum_{k=1}^{N+1}O_{\infty}(1)Z^k\Phi f+\sum_{k=0}^{N}O_{\infty}((r-1)^{-2})L\underline{L}Z^kf
			\end{multline*}
			Subsequently, we apply \eqref{eq:commrelZ1} holds with $N$ replaced by $1\leq k\leq N$ to conclude that:
			\begin{multline*}
			ZZ^{N}\left(-\frac{(r^2+1)^2}{\Delta} L\Lbar f\right)=-\frac{(r^2+1)^2}{\Delta} L\Lbar Z^{N+1}f-\frac{N+1}{4}(1+O_{\infty}(r-1))\left(-\frac{r^2+1}{\Delta} L\Lbar Z^{N}f \right)\\
			+\frac{(N+1)(r^2+1)}{2}Z^{N+2}f+\sum_{k=1}^{N+1}O_{\infty}(1)Z^k f+\sum_{k=1}^{N+1}O_{\infty}(1)Z^k\Phi f+\sum_{k=0}^{N}O_{\infty}((r-1)^{-2})L\underline{L}Z^kf.
			\end{multline*}
			so \eqref{eq:commrelZ1} holds also with $N$ replaced by $N+1$, which concludes the induction argument.
			\end{proof}
			
			\begin{proposition}
			\label{prop:commZ}
			Let $Z\in \{K,\Phi, (r-M) Y \}$. Then for all $N\in \N_0$ and $0<p\leq 2$, there exists $r_H$ suitably small and a constant $C=C(r_H,\mathbbm{h},p,N)>0$, such that:
			\begin{equation}
			\label{eq:horminMestpre1}
			\begin{split}
		\sum_{0\leq k\leq N}&\int_{\Sigma_{\tau_2} \cap\{r\leq r_H\}} \mathcal{E}_{p}[Z^k\phi]\,d\sigma dr +	\int_{\tau_1}^{\tau_2}\int_{\Sigma_{\tau} \cap\{r\leq r_H\}} (r-M)^{p-1}\mathcal{E}_{0}[Z^N\phi]\,d\sigma dr d\tau \\
		\leq&\: C 	\sum_{0\leq k\leq N}\int_{\Sigma_{\tau_1} \cap\{r\leq r_H\}}\mathcal{E}_{p}[Z^k\phi]\,d\sigma dr + C\int_{0}^{\infty}\int_{\Sigma_{\tau} \cap\{r= r_H\}} |\snabla_{\s^2}Z^k\phi|^2+|KZ^k\phi|^2\,d\sigma dr d\tau\\
		+&\:C \sum_{0\leq k+l\leq N}\int_{\tau_1}^{\infty}\int_{\Sigma_{\tau} \cap\{r\leq r_H\}}(r-M)^{1-p}|\snabla_{\s^2}^lZ^kG|^2\,d\sigma drd\tau\\
		+&\:C\sum_{0\leq k_1+k_2\leq N+1}\int_{0}^{\infty}\int_{\Sigma_{\tau} \cap\{r\leq r_H\}} (r-M)^{1-p} |K^{k_1}\Phi^{k_2}\phi|^2\,d\sigma dr d\tau
		\end{split}
			\end{equation}
			or, alternatively,
			\begin{equation}
				\label{eq:horminMestpre2}
			\begin{split}
		\sum_{0\leq k\leq N}&\int_{\Sigma_{\tau_2} \cap\{r\leq r_H\}} \mathcal{E}_{p}[Z^k\phi]\,d\sigma dr +	\int_{\tau_1}^{\tau_2}\int_{\Sigma_{\tau} \cap\{r\leq r_H\}} (r-M)^{1-p}\mathcal{E}_{0}[Z^N\phi]\,d\sigma dr d\tau \\
		\leq&\: C 	\sum_{0\leq k\leq N}\int_{\Sigma_{\tau_1} \cap\{r\leq r_H\}}\mathcal{E}_{p}[Z^k\phi]\,d\sigma dr + C\int_{\tau_1}^{\infty}\int_{\Sigma_{\tau} \cap\{r= r_H\}} |\snabla_{\s^2}Z^k\phi|^2+|KZ^k\phi|^2\,d\sigma dr d\tau\\
		+&\:C \sum_{0\leq k+l\leq N}\int_{\tau_1}^{\infty}\int_{\Sigma_{\tau} \cap\{r\leq r_H\}}(r-M)^{1-p}|\snabla_{\s^2}^lZ^kG|^2\,d\sigma drd\tau\\
		+&\:C\sum_{0\leq k_1+k_2\leq N}\int_{0}^{\infty}\int_{\Sigma_{\tau} \cap\{r\leq r_H\}} (r-M)^{-1-p} |\underline{L}K^{k_1}\Phi^{k_2}\phi|^2+ (r-M)^{1-p} |K^{k_1+1}\Phi^{k_2}\phi|^2\,d\sigma dr d\tau.
		\end{split}
			\end{equation}
			
				\end{proposition}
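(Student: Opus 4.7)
The argument is by induction on $N$, with base case $N=0$ given by integrating the identity \eqref{eq:keyidnearhor} of Proposition \ref{prop:rweightesthor} over $\mathcal{R}\cap\{\tau_1\leq \tau\leq \tau_2\}\cap\{r\leq r_H\}$ and applying the divergence theorem. The boundary terms at $\tau=\tau_1,\tau_2$ give $\int_{\Sigma_\tau\cap\{r\leq r_H\}}\mathcal{E}_p[\phi]$ (using that the coefficients of the $K$-flux in \eqref{eq:keyidnearhor}, after accounting for $\hfol$, are equivalent to $\mathcal{E}_p$ near $\mathcal{H}^+$ once $r_H-M$ is small). The boundary term at $r=r_H$ is bounded by $\int\int_{\{r=r_H\}}|K\phi|^2+|\snabla_{\s^2}\phi|^2\,d\sigma d\tau$ after a standard application of Cauchy--Schwarz. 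The key terms to absorb are those written as $O$ in the identity: most delicate is $(r-M)^{-p}O(|\Phi\psi||\Lbar\psi|)$ from $J_3$ and $J_5$, which I would handle by Young's inequality with a small parameter $\eta$,
\begin{equation*}
(r-M)^{-p}|\Phi\psi||\Lbar\psi|\leq \eta(r-M)^{-1-p}|\Lbar\psi|^2+C\eta^{-1}(r-M)^{1-p}|\Phi\psi|^2,
\end{equation*}
absorbing the first summand into the main bulk term $(r-M)^{-1-p}|\Lbar\psi|^2$ on the left of \eqref{eq:keyidnearhor} for $\eta$ small, and bounding the second summand by the $(r-M)^{1-p}|K^{k_1}\Phi^{k_2}\phi|^2$ spacetime integral on the right-hand side of \eqref{eq:horminMestpre1}. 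The terms from $J_6$ of the form $(r-M)^{3-p}O(|\psi||\Lbar\psi|)$ are absorbed analogously, using in addition a one-dimensional Hardy inequality in $r$ anchored at $r=r_H$.

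For the inductive step, commutation with $K$ and $\Phi$ is immediate: both are Killing and commute with all vector fields appearing in \eqref{eq:waveeqnull} in $(v,r,\theta,\tphi)$ coordinates, so $K^{k_1}\Phi^{k_2}\phi$ solves \eqref{eq:inhomwaveeq} with inhomogeneity $K^{k_1}\Phi^{k_2}G$ and the base case applies verbatim. The non-trivial commutation is with $Z=(r-M)Y$, which up to the smooth rescaling factor $2(r^2+1)$ coincides with the operator $Z$ of \eqref{eq:commrelZ1}. Applying $Z^N$ to \eqref{eq:waveeqnull} and using \eqref{eq:commrelZ1}, the rescaled iterate $\psi^{(N)}:=Z^N\psi$ satisfies a wave equation whose inhomogeneity contains the original $Z^N G$ plus commutator terms schematically of the form
\begin{equation*}
\sum_{k<N}O(1)Z^k\psi+\sum_{k<N}O(1)Z^k\Phi\psi+(r^2+1)Z^{N+1}\psi+\sum_{k<N}O((r-M)^{-2})L\Lbar Z^k\psi.
\end{equation*}
The first three families are absorbed by the inductive hypothesis applied to $Z^k\phi$ with $k<N$. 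The dangerous $(r-M)^{-2}L\Lbar Z^k\psi$ terms must be re-expressed by solving \eqref{eq:waveeqnull} for $L\Lbar Z^k\psi$ in terms of $K,\Phi$ and $\widetilde{\snabla}_{\s^2}$ derivatives of $Z^k\psi$; the resulting contributions are again controlled by the inductive hypothesis together with the $K^{k_1}\Phi^{k_2}\phi$ spacetime integral on the right-hand side.

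The second form \eqref{eq:horminMestpre2} follows from the first by a further integration by parts in the $\tX$-direction in the bulk term $(r-M)^{-1-p}|\Lbar\psi^{(N)}|^2$, which trades one power of $(r-M)^{-1}$ for an additional $K,\Phi$ derivative at the expense of the boundary integral at $r=r_H$ (controlled as above) and the $(r-M)^{-1-p}|\Lbar K^{k_1}\Phi^{k_2}\phi|^2$ integral on the right-hand side. The main obstacle is the last step of the inductive argument: one must verify that when the commutator terms $O((r-M)^{-2})L\Lbar Z^k\psi$ are multiplied by the integrating factor $(r-M)^{2-p}\overline{\Lbar\psi^{(N)}}$ inherited from the proof of Proposition \ref{prop:rweightesthor}, the resulting cross terms can, after substitution via \eqref{eq:waveeqnull} and one more application of Young's inequality, be simultaneously absorbed into both the top-order bulk term for $Z^N\phi$ on the left and the strictly lower-order energies on the right without producing any residue that cannot be closed by the induction — this is a bookkeeping issue that forces $r_H-M$ to be chosen small in a manner depending on $N$ and $p$.
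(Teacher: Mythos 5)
Your proposal follows essentially the same route as the paper: $N=0$ by integrating \eqref{eq:keyidnearhor} and using Young's inequality on the uncancellable $J_5$ cross term; trivial commutation with $K,\Phi$; commutation with $Z=(r-M)Y$ handled by applying $Z^N$ to \eqref{eq:waveeqnull}, invoking \eqref{eq:commrelZ1}, and re-expressing the $L\Lbar Z^k\psi$ terms through \eqref{eq:waveeqnull} before closing by induction. The one point where your presentation diverges from the paper's (extremely terse) proof is the derivation of the second form \eqref{eq:horminMestpre2}: you obtain it from \eqref{eq:horminMestpre1} by a further $\tX$-integration by parts, whereas the paper's structure suggests instead that the two forms arise directly from the same identity via the two possible distributions of the Young parameter in the $J_5$ term, with the $\Phi$-term absorbed into the $(r-M)^{1-p}|\widetilde{\snabla}_{\s^2}\psi|^2$ bulk in one case and the $\Lbar$-term retained on the right (applied to $K^{k_1}\Phi^{k_2}$-commuted quantities) in the other. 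Both routes would close; yours trades a somewhat different accounting of the $r=r_H$ boundary, but the inequality it lands on is the same. Your closing remark that the absorbability of the commutator residues is a bookkeeping point which constrains $r_H-M$ depending on $N$ and $p$ is consistent with the dependence $C=C(r_H,\mathbbm{h},p,N)$ in the statement and, if anything, makes the step explicit where the paper simply asserts that the extra terms for $N>0$ ``either result in a good sign or can be straightforwardly estimated.''
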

			\begin{proof}
			Suppose $N=0$. Then the estimates follows directly by integrating \eqref{eq:keyidnearhor}. Furthermore, for any $k_1,k_1\in \N_0$ we can replace $\psi$ by $K^{k_1}\Phi^{k_2}\psi$ and apply \eqref{eq:keyidnearhor} again, since $\Phi$ and $K$ are Killing.
			Now let $Z=(r-M)^{-1}\underline{L}=\frac{r-M}{2(r^2+M^2)}Y$ and consider $Z^N\phi$. By acting with $Z^N$ on both sides of \eqref{eq:waveeqnull}, we obtain
			\begin{multline}
			\label{eq:commeqrmin1est}
Z^N((r^2+a^2)^{\frac{1}{2}}\rho^2G)=-4Z^N\left(\frac{(r^2+a^2)^2}{\Delta}L\Lbar \psi\right)+a^2\sin^2 \theta  K^2Z^N\psi+2a(1-a\upomega_+\sin^2\theta) K\Phi Z^N\psi\\
+(\slashed{\mathcal{D}}-2a\upomega_+ \Phi^2)Z^N\psi+\sum_{k\leq N}O(1)\Phi Z^k\psi+O((r-M))\sum_{k\leq N}Z^k\psi.
\end{multline}
For the sake of convenience, we now set $M=1$. By \eqref{eq:commrelZ1}, we can write:
\begin{multline}
-4Z^N\left(\frac{(r^2+1)^2}{\Delta}L\Lbar \psi\right)=-4\frac{(r^2+1)^2}{\Delta}L\Lbar Z^N\psi-N(1+O_{\infty}(r-1))\left(-\frac{r^2+1}{\Delta} L\Lbar Z^{N-1}\psi \right)\\
			+2N(r^2+1)(r-1)^{-1}\underline{L}Z^{N}\psi+\sum_{k=1}^NO_{\infty}(1)Z^k\psi+\sum_{k=1}^NO_{\infty}(1)Z^k\Phi \psi+\sum_{k=0}^{N-1}O_{\infty}((r-1)^{-2})L\underline{L}Z^k\psi.
\end{multline}
We then proceed inductively by assuming that the estimates \eqref{eq:horminMestpre1} and \eqref{eq:horminMestpre2} hold and then considering the case where $N$ is replaced by $N+1$. We then proceed as in the proof of Proposition \ref{prop:rweightesthor} by multiplying both sides of \eqref{eq:commeqrmin1est} by $-\frac{1}{2}(r-M)^{2-p}\overline{\Lbar Z^{N+1}\psi}$ and taking the real part. 

We use \eqref{eq:waveeqnull} and \eqref{eq:commeqrmin1est} to express $L\underline{L}Z^k\psi$ with $k\leq N$ in terms of $T$, $\underline{L}$ and angular derivatives.

The extra terms present when $N>0$ either result in a a good sign or can be straightforwardly estimated via Young's inequality or integration by parts in the $T$ or $\s^2$ directions together via an application of the induction assumption.
			\end{proof}
				
				In the corollary below, we combine the $(r-M)^{-1}$-weighted estimates from Proposition \ref{prop:commZ} with the integrated estimates from Theorem \ref{thm:morawetz}.
				
			\begin{corollary}
			\label{cor:horminMest}
		For all $N\in \N_0$ and $l\in \N_0$ and $\delta,\epsilon>0$, there exists $r_H$ suitably small and a constant $C=C(r_H,\mathbbm{h},\epsilon,\delta,N,l)>0$, such that for  $Z\in \{K,\Phi, (r-M) Y \}$:
			\begin{equation}
				\label{eq:horminMest}
			\begin{split}
		\sum_{0\leq k\leq N}&\int_{\Sigma_{\tau_2} \cap\{r\leq r_H\}} \mathcal{E}_{1-\epsilon}[\snabla_{\s^2}^lZ^k\phi]\,d\sigma dr +	\int_{\tau_1}^{\tau_2}\int_{\Sigma_{\tau} \cap\{r\leq r_H\}} (r-M)^{\epsilon}\mathcal{E}_{0}[\snabla_{\s^2}^lZ^k\phi]\\
		+&\:(r-M)^{-1+\epsilon}|\snabla_{\s^2}^lZ^kK\phi|^2\,d\sigma dr d\tau \\
		\leq&\: C 	\sum_{0\leq k\leq N}\int_{\Sigma_{\tau_1} \cap\{r\leq r_H\}} \mathcal{E}_{1-\epsilon}[\snabla_{\s^2}^lZ^k\phi]\,d\sigma dr+C\sum_{0\leq k_1+k_2\leq N}\int_{\Sigma_{\tau_1}} \mathcal{E}_{1+\epsilon}[K^{k_1} \Phi^{k_2}\phi]\,d\sigma dr\\
		+&\:C \sum_{0\leq k\leq N}\int_{\tau_1}^{\infty}\int_{\Sigma_{\tau} \cap\{r\leq r_H\}}(r-M)^{\epsilon}|\snabla_{\s^2}^lZ^kG|^2\,d\sigma drd\tau\\
		&\:+ C \sum_{0\leq k_1+k_2\leq N}\int_{\tau_1}^{\infty}\int_{\Sigma_{\tau}}[(1+\tau)^{1+\epsilon}(1-Mr^{-1})^{1-\epsilon}+r^2\chi_{\tau_1\leq \tau\leq \tau_2+1}+r^{1+\epsilon}]|K^{k_1} \Phi^{k_2}G|^2\,r^2d\sigma drd\tau.
		\end{split}
			\end{equation}
				\end{corollary}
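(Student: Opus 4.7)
The plan is to combine the near-horizon $(r-M)$-weighted identity of Proposition \ref{prop:commZ} with the integrated energy estimate Theorem \ref{thm:morawetz}: Proposition \ref{prop:commZ} supplies the commuted $(r-M)$-weighted flux and bulk estimate in $\{r\leq r_H\}$ modulo a boundary flux through $\{r=r_H\}$ and a bulk integral of $(r-M)^{1-p}|K^{k_1}\Phi^{k_2}\phi|^2$, and both of these error terms will be absorbed using Theorem \ref{thm:morawetz}. The key structural observation is that the vector fields in $\{K,\Phi,(r-M)Y\}$ mutually commute (in $(v,r,\theta,\tphi)$ coordinates $K=\partial_v$ and $Y=\partial_r$, while $[K,\Phi]=0$), so any monomial $Z^k$ can be reordered as $((r-M)Y)^c K^a \Phi^b$ with $a+b+c=k$; in particular $KZ^k=Z^kK$, so the bulk term $(r-M)^{-1+\epsilon}|KZ^k\phi|^2$ will be obtained by running the same argument with $\phi$ replaced by $K\phi$ (which solves the inhomogeneous wave equation with source $KG$ since $K$ is Killing).

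The proof proceeds by induction on $c$, the number of $(r-M)Y$ commutators. The base case $c=0$ is immediate: $K^a\Phi^b\phi$ satisfies \eqref{eq:inhomwaveeq} with source $K^a\Phi^b G$ and Theorem \ref{thm:morawetz} \eqref{eq:mornearhor} applied to this solution delivers the $\mathcal{E}_{1-\epsilon}$-flux on $\Sigma_{\tau_2}$, the $(1-Mr^{-1})^{\epsilon}\mathcal{E}_0$-bulk integral, and the $(1-Mr^{-1})^{-1+\epsilon}|K(\cdot)|^2$-bulk integral in $\{r\leq 2M\}$ (hence in $\{r\leq r_H\}$), all bounded by $\mathcal{E}_{1+\epsilon}$-initial data along $\Sigma_{\tau_1}$ and the weighted $G$-integrals displayed in \eqref{eq:mornearho}. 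For the inductive step $c\geq 1$, apply Proposition \ref{prop:commZ} \eqref{eq:horminMestpre1} with $p=1-\epsilon$ to $K^a\Phi^b\phi$ and $N=c$: the spacetime integral $\int(r-M)^{-\epsilon}\mathcal{E}_0[Z^c(K^a\Phi^b\phi)]$ that appears on the left is stronger than the target $(r-M)^{\epsilon}\mathcal{E}_0$-integral and the boundary flux on $\Sigma_{\tau_2}$ agrees with the target; the right-hand side of \eqref{eq:horminMestpre1} produces (i) the $\mathcal{E}_{1-\epsilon}[Z^j K^a \Phi^b\phi]|_{\Sigma_{\tau_1}}$ term with $j\leq c$, which is part of the $\mathcal{E}_{1+\epsilon}$-initial-data bound after using $\mathcal{E}_{1-\epsilon}\leq \mathcal{E}_{1+\epsilon}$; (ii) the boundary flux $\int|\snabla_{\s^2} Z^j K^a\Phi^b\phi|^2+|KZ^j K^a\Phi^b\phi|^2$ on $\{r=r_H\}$; and (iii) the bulk term $(r-M)^{\epsilon}|K^{k_1}\Phi^{k_2}\phi|^2$.

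Item (ii) is handled via \eqref{eq:morneartrapps} (or \eqref{eq:mornearho}) applied to $K^{a+a'}\Phi^{b+b'}\phi$ for suitable $a',b'$: since $r_H$ is strictly outside the horizon and outside the trapped set $\{r_{{\rm trap},1}\leq r\leq r_{{\rm trap},2}\}$, a standard mean-value argument in $r$ produces an $r_H'\in[r_H/2+M/2,r_H]$ for which $\int_{\tau_1}^{\tau_2}\!\!\int_{\Sigma_\tau\cap\{r=r_H'\}}(\cdot)\,d\sigma d\tau$ is bounded by $\sum_{k\leq N}\int_{\Sigma_{\tau_1}}\mathcal{E}_{1+\epsilon}[K^k\Phi^{k'}\phi]\,r^2d\sigma dr$ plus the weighted $G$-integrals; the estimate in $\{r\leq r_H\}$ is then obtained with this new $r_H'$ in place of $r_H$, and the constants are adjusted accordingly. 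Item (iii) is handled by the base case $c=0$ (equivalently, by \eqref{eq:mornearhor} with $\phi$ replaced by $K^{k_1-1}\Phi^{k_2}\phi$) since $(r-M)^\epsilon\leq (1-Mr^{-1})^\epsilon$ up to constants in $\{r\leq r_H\}$. The $(r-M)^{-1+\epsilon}|KZ^k\phi|^2$ term on the left-hand side of the corollary is finally produced by rerunning the whole argument with $\phi$ replaced by $K\phi$, using $KZ^k=Z^kK$ and the fact that $K\phi$ satisfies the wave equation with $KG$. Angular derivatives $\snabla_{\s^2}^l$ are incorporated by using the Carter-operator elliptic estimate recorded in \S \ref{sec:vf} together with commutation with the Killing fields $K,\Phi$: after at most finitely many applications, every occurrence of $\snabla_{\s^2}$ is traded either for $Q$ (which commutes with $K,\Phi$) or for $K,\Phi$-derivatives, up to absorbable lower-order terms.

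The main obstacle is not analytic but organizational: one has to verify that every lower-order or reduced-order term in the right-hand side of \eqref{eq:horminMestpre1} is either an ingredient of the final inequality or already controlled at an earlier step of the induction (lower value of $c$, or $\phi$ in place of $K\phi$). In particular, the angular boundary flux on $\{r=r_H\}$ that appears inside \eqref{eq:horminMestpre1} must be traded via \eqref{eq:morneartrapps}, which costs one extra power of $K$ on the right-hand side; this is why the final bound involves $\sum_{k_1+k_2\leq N}\mathcal{E}_{1+\epsilon}[K^{k_1}\Phi^{k_2}\phi]$ rather than a strictly lower-order quantity. No new multiplier or identity is needed beyond those already introduced in Sections \ref{sec:freqspacean}--\ref{sec:intenestphys} and \S \ref{sec:rmin1weight}.
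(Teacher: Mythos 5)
Your overall strategy is the same as the paper's: combine the $(r-M)$-weighted commuted estimate \eqref{eq:horminMestpre1} of Proposition \ref{prop:commZ} with the near-horizon integrated estimate \eqref{eq:mornearhor} of Theorem \ref{thm:morawetz}, absorbing the boundary flux at $\{r=r_H\}$ and the $(r-M)^{1-p}|K^{k_1}\Phi^{k_2}\phi|^2$ bulk error via Theorem \ref{thm:morawetz} applied to Killing commutants. The paper states this in one sentence for $l=0$; you correctly fill in the bookkeeping (commuting $K,\Phi,(r-M)Y$, inducting on the number of $(r-M)Y$ factors, mean-value argument in $r$ at a nearby $r_H'$), and those steps are all in the spirit of what the paper intends.

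Two points where you diverge or where there is a genuine issue. First, the angular commutation: for $l>0$ the paper replaces $\psi$ by $(-1)^l\slashed{\Delta}_{\s^2}^l\psi$ and integrates by parts on $\s^2$, citing the inequalities of \cite[Lemma 2.5]{aagkerr}; you instead commute with the Carter operator $Q$ together with $K,\Phi$ and invoke the elliptic estimate from \S\ref{sec:vf}. Both routes deliver the needed angular control and are more or less interchangeable here, though the $Q$-approach introduces $T^2=\,$(time)$^2$ terms that must then be folded back into the $K,\Phi$-count; the paper's $\slashed{\Delta}_{\s^2}^l$ replacement is cleaner because it does not touch the time-derivative bookkeeping.

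Second, and more substantively, your explanation of where the bulk term $(r-M)^{-1+\epsilon}|\snabla_{\s^2}^l Z^k K\phi|^2$ on the left-hand side of \eqref{eq:horminMest} comes from does not hold up when $Z=(r-M)Y$ and $k\ge 1$. You claim it is "produced by rerunning the whole argument with $\phi$ replaced by $K\phi$." But rerunning \eqref{eq:horminMestpre1} with $p=1-\epsilon$ on $K\phi$ yields bulk control at weight $(r-M)^{-\epsilon}\mathcal{E}_0[Z^k K\phi]$, and even with $p=2-\epsilon$ you only reach $(r-M)^{1-\epsilon}\mathcal{E}_0[Z^k K\phi]$ in the form \eqref{eq:horminMestpre2}; neither gives the more singular $(r-M)^{-1+\epsilon}$ weight. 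That weight is available from \eqref{eq:mornearhor}, but \eqref{eq:mornearhor} is an estimate for \emph{solutions} of \eqref{eq:inhomwaveeq}, so it applies directly to $K^a\Phi^b\phi$ (Killing commutants) and does give $(r-M)^{-1+\epsilon}|K^{a+1}\Phi^b\phi|^2$; it does \emph{not} apply to $((r-M)Y)^k\phi$, which fails to satisfy the wave equation. So your argument, as written, covers the case $Z\in\{K,\Phi\}$ of that bulk term but does not actually produce it for $Z=(r-M)Y$, $k\ge 1$ — "rerunning with $K\phi$" only shifts the $K$-count and never changes the power of $r-M$. (This point is also glossed over in the paper's own one-line proof, so you are not alone; but if you state the mechanism, it should be the right one, and the one you gave is not.)
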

			\begin{proof}
			Consider first the case $l=0$. Then the desired estimate follows by using \eqref{eq:mornearhor} together with \eqref{eq:horminMestpre1}. The case $l>0$ can be estimated by replacing $\psi$ with $(-1)^l\slashed{\Delta}_{\s^2}\psi$ and integrating an additional $l$ times by parts on $\s^2$. We omit the details of this additional step and refer to \cite[Lemma 2.5]{aagkerr} for inequalities on $\s^2$ required to obtain the necessary control of the angular derivatives.
				\end{proof}

				\subsection{Estimates near $\mathcal{I}^+$}
				\label{sec:rweightinf}
		In this section, we state the $r$-weighted estimates in regions $r>r_I$, with $r_I\gg M$ required for energy decay estimates. These do not depend on special geometric properties of extremal Kerr spacetimes, but are valid much more generally on asymptotically flat spacetimes. They involve ``standard'' $r$-weighted estimates, introduced in \cite{newmethod} as well as $r$-weighted estimates with an additional commutation $r^2L$, introduced in \cite{paper1}, which do not hold for general $\phi$, but they do hold for $\phi$ supported on azimuthal modes with $|m|\geq 1$ or $|m|\geq 2$. 
		
		In the context of rotating Kerr, with $G\equiv 0$,the estimates in the proposition below fall under the special case $\ell=0$ and $n\leq 1$ in \cite[Proposition 5.7]{aagkerr}. The generalization to $G\not \equiv0$ is straightforward.
			\begin{proposition}
			\label{prop:rpest}
			Let $0<p<2$ and $\epsilon>0$ suitably small. Then there exists a constant $C>0$ such that for all $0\leq \tau_1<\tau_2$:
			\begin{multline}
			\label{eq:rpinf}
			\int_{\Sigma_{\tau_2} \cap\{r\geq r_I\}} r^p | L\psi|^2\,d\sigma dr+\int_{\tau_1}^{\tau_2}\int_{\Sigma_{\tau} \cap\{r\geq r_I\}} pr^{p-1} | L\psi|^2+(2-p)r^{p-3}|\snabla_{\s^2}\psi|^2,d\sigma dr d\tau\\
			\leq  C\int_{\Sigma_{\tau_1} \cap\{r\geq r_I\}} r^p | L\psi|^2\,d\sigma dr+C\int_{\Sigma_{\tau_1}} \mathcal{E}_{1+\epsilon}[\Phi \phi]\,d\sigma dr+ C\int_{\tau_1}^{\infty}\int_{\Sigma_{\tau} \cap\{r \geq r_I\}} r^{p+1}|rG|^2\,d\sigma drd\tau\\
			+ C \int_{\tau_1}^{\infty}\int_{\Sigma_{\tau}} [(1+\tau)^{1+\epsilon}(1-Mr^{-1})^{1-\epsilon}+r^2\chi_{\tau_1\leq \tau\leq \tau_2+1}+r^{1+\epsilon}]| \Phi G|^2\,r^2d\sigma drd\tau
			\end{multline}
		and
		\begin{multline}
			\label{eq:rpinfcomm}
			\int_{\Sigma_{\tau_2} \cap\{r\geq r_I\}} r^p | L(r^2L)\psi|^2\,d\sigma dr+\int_{\tau_1}^{\tau_2}\int_{\Sigma_{\tau} \cap\{r\geq r_I\}} r^{p-1} | L(r^2L)\psi|^2+(2-p)r^{p-3}|\snabla_{\s^2}(r^2L)\psi|^2\,d\sigma dr d\tau \\
			\leq C\int_{\Sigma_{\tau_1} \cap\{r\geq r_I\}} r^p | L(r^2L)\psi|^2\,d\sigma dr+C\sum_{0\leq k_1+k_2\leq 1}\int_{\Sigma_{\tau_1}} \mathcal{E}_{1+\epsilon}[K^{k_1}\Phi^{k_2} \phi]\,d\sigma dr\\
			+ C\int_{\tau_1}^{\infty}\int_{\Sigma_{\tau} \cap\{r \geq r_I\}} r^{p+1}|L(r^3G)|^2\,d\sigma drd\tau\\
			+ C \sum_{0\leq k_1+k_2\leq 1}\int_{\tau_1}^{\infty}\int_{\Sigma_{\tau}}[(1+\tau)^{1+\epsilon}(1-Mr^{-1})^{1-\epsilon}+r^2\chi_{\tau_1\leq \tau\leq \tau_2+1}+r^{1+\epsilon}]|K^{k_1}\Phi^{k_2} G|^2\,r^2d\sigma drd\tau.
			\end{multline}
			\end{proposition}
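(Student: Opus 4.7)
The plan is to derive \eqref{eq:rpinf} by multiplying the principal-null form \eqref{eq:waveeqnull} of the wave equation by $r^{p-2}\chi(r)\overline{L\psi}$, with $\chi$ a smooth cut-off satisfying $\chi\equiv 0$ for $r\leq r_I-1$ and $\chi\equiv 1$ for $r\geq r_I$, taking the real part, and integrating over $\{\tau_1\leq \tau\leq \tau_2\}\cap\{r\geq r_I-1\}$. Writing $-\frac{4(r^2+a^2)^2}{\Delta}L\Lbar\psi=-4r^2(1+O(r^{-2}))L\Lbar\psi$ and using $2\Re(L\Lbar\psi\,\overline{L\psi})=\Lbar|L\psi|^2+O(r^{-2})|L\psi|^2$, the principal term produces, after integration, the flux difference $\int_{\Sigma_{\tau_2}\cap\{r\geq r_I\}} r^p|L\psi|^2-\int_{\Sigma_{\tau_1}\cap\{r\geq r_I\}} r^p|L\psi|^2$ together with the coercive bulk contribution $pr^{p-1}|L\psi|^2$ coming from $\Lbar(r^p)=-\tfrac{1}{2}pr^{p-1}+O(r^{p-2})$.

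The angular contribution $-\Re\bigl(\slashed{\mathcal{D}}\psi\cdot r^{p-2}\chi\overline{L\psi}\bigr)$ is handled by integrating by parts on $\s^2$ and then in the $L$-direction, producing (after using \eqref{eq:anghorop} to relate $\slashed{\mathcal{D}}$ to the spherical Laplacian plus a bounded, non-negative perturbation) a boundary contribution of the form $L\bigl(\tfrac{1}{2}r^{p-2}|\widetilde{\snabla}_{\s^2}\psi|^2\bigr)$ together with the second coercive bulk term $(2-p)r^{p-3}|\widetilde{\snabla}_{\s^2}\psi|^2$ of \eqref{eq:rpinf}. The remaining terms from \eqref{eq:waveeqnull}---those containing $K^2\psi$, $K\Phi\psi$, $\Phi^2\psi$ and $r^{-1}\Phi\psi$---carry the factor $\Delta/(r^2+a^2)^2\sim r^{-2}$ and do not have favourable signs; they are treated by integration by parts in the Killing directions $K,\Phi$, which produces flux terms on $\Sigma_{\tau_i}$ and bulk terms of schematic form $r^{p-2}|\Phi\phi|\cdot|\tX\phi|$. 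These are absorbed via Cauchy--Schwarz and the Morawetz estimates \eqref{eq:mornearho}, \eqref{eq:morneartrapps} applied to the commuted quantity $\Phi\phi$, which is the origin of the $\mathcal{E}_{1+\epsilon}[\Phi\phi]$ and $\Phi G$ terms on the right-hand side of \eqref{eq:rpinf}. The cut-off error, concentrated in the compact strip $\{r_I-1\leq r\leq r_I\}$, is likewise absorbed by the Morawetz estimate \eqref{eq:morneartrapps}, as is the $G$-contribution in the interior region.

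For the commuted estimate \eqref{eq:rpinfcomm}, one first commutes \eqref{eq:waveeqnull} with $r^2L$: the commutator $[r^2L,\Box_g]$ generates first-order terms of the form $r L(r^2L)\psi$, $rK\psi$, $r\Phi\psi$ with coefficients bounded uniformly in $r$ (after using that $[L,\Lbar]$ is a bounded first-order operator on the $r$-weighted fields). Repeating the procedure of the preceding paragraph for $(r^2L)\psi$ in place of $\psi$ then yields \eqref{eq:rpinfcomm}; the extra commutator terms and angular errors force the initial data on the right-hand side to control $\mathcal{E}_{1+\epsilon}[K^{k_1}\Phi^{k_2}\phi]$ for $k_1+k_2\leq 1$, as well as $L(r^3G)$ and $K^{k_1}\Phi^{k_2}G$. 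The main obstacle throughout is the careful handling of the unsigned $K^2\psi$ and $K\Phi\psi$ contributions: the factor $\Delta/(r^2+a^2)^2$ makes them ostensibly lower order near $\mathcal{I}^+$, but one must nevertheless integrate by parts in $K,\Phi$ so that the resulting spacetime terms fall within the scope of the Morawetz estimate for $\Phi\phi$ (and $K\phi$) rather than $\phi$ itself---this is precisely the step that dictates the structure of the right-hand sides of \eqref{eq:rpinf} and \eqref{eq:rpinfcomm}.
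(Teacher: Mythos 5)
The paper's own ``proof'' of Proposition \ref{prop:rpest} is a one-line citation to \cite[Proposition 5.7]{aagkerr} (the case $\ell=0$, $n\leq 1$), so there is no in-paper argument to compare against. Your proposal correctly identifies the same route that the cited reference takes: the Dafermos--Rodnianski $r^{p}$-multiplier $r^{p-2}\chi\overline{L\psi}$ applied to the principal-null form \eqref{eq:waveeqnull}, with the flux and the two coercive bulk terms produced by $\Lbar(r^{p})$ and by the angular integration by parts, and cut-off errors in the transition strip absorbed by the Morawetz estimates \eqref{eq:morneartrapps}, \eqref{eq:mornearho}. For \eqref{eq:rpinf} your account is essentially right in outline, though the claim that integration by parts in $K,\Phi$ yields bulk terms ``of schematic form $r^{p-2}|\Phi\phi|\cdot|\tX\phi|$'' does not match the actual structure: the $K^{2}\psi$ and $K\Phi\psi$ contributions, after integrating by parts in the Killing directions and then in $L$, give good-signed bulk terms with weight $r^{p-3}$ (plus flux terms controlled by the non-degenerate energy), not mixed products of $\Phi\phi$ with $\tX\phi$; the $\Phi\phi$ on the right-hand side enters because the (trapped) Morawetz estimate for the compact region requires one extra tangential derivative, and for a fixed azimuthal mode $\Phi\phi_{m}=im\phi_{m}$ controls $\phi_{m}$.

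The genuine gap is in your treatment of the commuted estimate \eqref{eq:rpinfcomm}. You write that the commutator $[r^{2}L,\square_{g}]$ ``generates first-order terms of the form $rL(r^{2}L)\psi$, $rK\psi$, $r\Phi\psi$ with coefficients bounded uniformly in $r$,'' but this omits the single term that makes the $r^{2}L$-commutation nontrivial: setting $\widetilde{\psi}=r^{2}L\psi$, the commuted equation takes the form
\begin{equation*}
L\Lbar\widetilde{\psi}\;=\;\frac{c}{r^{2}}\slashed{\Delta}_{\s^2}\widetilde{\psi}\;+\;\frac{c'}{r^{2}}\,\widetilde{\psi}\;+\;O(r^{-1})L\widetilde{\psi}\;+\;\text{(lower order)}\;+\;\text{(inhomogeneity)},
\end{equation*}
with a \emph{zeroth-order} term $c'r^{-2}\widetilde{\psi}$ of the wrong sign. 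After multiplying by $r^{p}\overline{L\widetilde{\psi}}$ and integrating, this term produces a bulk contribution $\sim r^{p-3}|\widetilde{\psi}|^{2}$ that must be dominated by the angular bulk term $(2-p)r^{p-3}|\snabla_{\s^2}\widetilde{\psi}|^{2}$. This absorption requires the Poincar\'e inequality on $\s^2$ with eigenvalue $\ell(\ell+1)$, and only closes for $\ell\geq 1$ (equivalently $|m|\geq 1$). This is precisely what the paper alludes to just before the statement---``which do not hold for general $\phi$, but they do hold for $\phi$ supported on azimuthal modes with $|m|\geq 1$ or $|m|\geq 2$''---and it is the fundamental obstruction that your description of the commutator does not capture. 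Without identifying and absorbing this zeroth-order commutator term, the argument for \eqref{eq:rpinfcomm} does not close, even schematically.
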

		
		When we restrict to $\phi_{|m|\geq 2}$, we can obtain additional, higher-order estimates involving $(r^2L)^2\phi_{|m|\geq 2}$, which were also introduced in \cite{paper1}; see the $\ell=0$ and $n\leq 2$ case in \cite[Proposition 5.7]{aagkerr}.
		\begin{proposition}
			Let $0<p<2$ and $\epsilon>0$ suitably small. Then there exists a constant $C>0$ such that for all $0\leq \tau_1<\tau_2$:
					\begin{multline}
			\label{eq:rpinfcommmgeq2}
			\int_{\Sigma_{\tau_2} \cap\{r\geq r_I\}} r^p | L(r^2L)^2\psi_{|m|\geq 2}|^2\,d\sigma dr\\
			+\int_{\tau_1}^{\tau_2}\int_{\Sigma_{\tau} \cap\{r\geq r_I\}} r^{p-1} | L(r^2L)^2\psi_{|m|\geq 2}|^2+(2-p)r^{p-3}|\snabla_{\s^2}(r^2L)^2\psi_{|m|\geq 2}|^2\,d\sigma dr d\tau \\
			\leq C\int_{\Sigma_{\tau_1} \cap\{r\geq r_I\}} r^p | L(r^2L)^2\psi_{|m|\geq 2}|^2\,d\sigma dr+C\sum_{0\leq k_1+k_2\leq 2}\int_{\Sigma_{\tau_1}} \mathcal{E}_{1+\epsilon}[K^{k_1}\Phi^{k_2} \phi_{|m|\geq 2}]\,d\sigma dr\\
			+ C\int_{\tau_1}^{\infty}\int_{\Sigma_{\tau} \cap\{r \geq r_I\}} r^{p+1}|(L(r^2L(r^3G_{|m|\geq 2}))|^2\,d\sigma drd\tau\\
			+ C \sum_{0\leq k_1+k_2\leq 2}\int_{\tau_1}^{\infty}\int_{\Sigma_{\tau}}[(1+\tau)^{1+\epsilon}(1-Mr^{-1})^{1-\epsilon}+r^2\chi_{\tau_1\leq \tau\leq \tau_2+1}+r^{1+\epsilon}]|K^{k_1}\Phi^{k_2} G_{|m|\geq 2}|^2\,r^2d\sigma drd\tau.
			\end{multline}
			\end{proposition}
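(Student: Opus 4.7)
The plan is to imitate the proof of \eqref{eq:rpinfcomm} with one additional commutation by $r^2L$, exploiting the spectral gap $\Lambda_{+,m\ell}\geq |m|(|m|+1)\geq 6$ for $|m|\geq 2$. Concretely, set $\Psi^{(2)}:=(r^2L)^2\psi_{|m|\geq 2}$. Starting from the null form \eqref{eq:waveeqnull} of $\square_{g_{M,a}}\phi=G$, I first commute once with $r^2L$ to obtain an inhomogeneous equation for $\Psi^{(1)}:=(r^2L)\psi_{|m|\geq 2}$ of Regge--Wheeler schematic shape
\begin{equation*}
-4\frac{(r^2+a^2)^2}{\Delta}L\Lbar\Psi^{(1)}+(\slashed{\mathcal D}-2a\upomega_+\Phi^2-2)\Psi^{(1)}+a^2\sin^2\theta K^2\Psi^{(1)}+2a(1-a\upomega_+\sin^2\theta)K\Phi\Psi^{(1)}=(r^2+a^2)^{1/2}r^2 L(r^3 G)+\mathcal{R}^{(1)},
\end{equation*}
where $\mathcal{R}^{(1)}$ gathers commutator errors that are subleading in the region $r\geq r_I$ and involve at most one $r^2L$-derivative of $\psi$ with $K,\Phi$-tangential weights compatible with \eqref{eq:rpinf}. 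A second commutation with $r^2L$ yields an analogous equation for $\Psi^{(2)}$, with source $L(r^2L(r^3G))$ and with the angular eigenvalue shifted by a further constant of order unity (so by an integer $\leq 6$ in absolute value on each azimuthal mode).

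The second step is to execute the standard $r^p$-multiplier estimate on this commuted equation: multiply by $r^p\chi_I L\overline{\Psi^{(2)}}$, where $\chi_I$ is a cutoff with $\chi_I\equiv 1$ for $r\geq r_I+1$ and $\chi_I\equiv 0$ for $r\leq r_I$, take real parts, and integrate over $\{\tau_1\leq \tau\leq\tau_2\}\cap\{r\geq r_I\}$ using the identity $L(r)=\Delta/(2(r^2+a^2))=1/2+O(r^{-1})$. The $L\Lbar \Psi^{(2)}$ term produces the boundary flux $r^p|L\Psi^{(2)}|^2$ and the bulk term $p\,r^{p-1}|L\Psi^{(2)}|^2$; the angular term produces $(2-p)r^{p-3}|\snabla_{\s^2}\Psi^{(2)}|^2$ together with lower-order terms that, restricted to $|m|\geq 2$, can be combined into the expression
\begin{equation*}
\int_{\s^2}\Re\big([\slashed{\mathcal D}-2a\upomega_+\Phi^2-c_2]\Psi^{(2)}\cdot\overline{\Psi^{(2)}}\big)\,d\sigma\geq c\int_{\s^2}|\snabla_{\s^2}\Psi^{(2)}|^2\,d\sigma
\end{equation*}
using \eqref{eq:mainineqLambda} and $\Lambda_{+,m\ell}-c_2\gtrsim |m|(|m|+1)\geq 6$ for $|m|\geq 2$ with the constant $c_2$ (at most $6$) arising from the two commutations. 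This is precisely the step at which the restriction $|m|\geq 2$ is indispensable; for $|m|=0,1$ the constant $c_2$ would dominate $\Lambda_{+,m\ell}$ and the angular term would lose its sign.

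The third step is to close the estimate by absorbing the remaining error terms. The commutator terms in $\mathcal R^{(1)},\mathcal R^{(2)}$ that involve fewer derivatives of $\psi_{|m|\geq 2}$, weighted appropriately in $r$, can be estimated by Young's inequality either by the principal bulk terms above (when the $r$-weight is favourable) or by the bulk of the already-established $r^p$-estimates \eqref{eq:rpinfcomm} and \eqref{eq:rpinf} applied to $\phi_{|m|\geq 2}$ and $K^{k_1}\Phi^{k_2}\phi_{|m|\geq 2}$ with $k_1+k_2\leq 2$; this is why the right-hand side of \eqref{eq:rpinfcommmgeq2} includes the sum over $k_1+k_2\leq 2$. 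Near $r=r_I$, the cutoff $\chi_I$ produces terms that are supported in a bounded $r$-region, hence can be controlled by Theorem \ref{thm:morawetz} applied with up to two commutations by $K$ and $\Phi$. The $G$-source terms are estimated directly by Cauchy--Schwarz, producing $r^{p+1}|L(r^2L(r^3 G_{|m|\geq 2}))|^2$ and, via Theorem \ref{thm:morawetz}, the Morawetz-type $G$ contributions on the right-hand side of \eqref{eq:rpinfcommmgeq2}.

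The main obstacle I anticipate is bookkeeping: extracting the correct form of the commuted equation for $\Psi^{(2)}$ in Kerr is more intricate than in Schwarzschild, because $[r^2L,\square_{g_{M,a}}]$ contains Kerr-specific terms involving $K$, $\Phi$ and $a\upomega_+\sin^2\theta$ that must be written in a form amenable to the $r^p$-multiplier. The key technical point is showing that each such error term, after integration by parts in $L$ or on $\s^2$, either produces a bulk term with a definite sign (absorbable into the principal bulk) or a term controlled by the previously established $r^p$-estimate on $\Psi^{(1)}$ with one extra factor of $\Phi$ or $K$. Once the structure of the commuted equation is in hand, the $r^p$-multiplier argument itself is essentially routine and mirrors the derivation of \eqref{eq:rpinfcomm}.
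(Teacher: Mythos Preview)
Your proposal is correct and follows essentially the same approach as the paper, which does not give a self-contained proof but refers to \cite{paper1} and \cite[Proposition 5.7]{aagkerr} for the $G\equiv 0$ case and notes that the extension to inhomogeneous $G$ is straightforward. The key mechanism you identify---commuting twice with $r^2L$, running the $r^p$-multiplier, and using the Poincar\'e-type gap $\Lambda_{+,m\ell}\geq |m|(|m|+1)\geq 6$ for $|m|\geq 2$ to keep the angular bulk term coercive after the eigenvalue shift from commutation---is exactly what the cited references do and what the paper alludes to in the remark following Proposition~\ref{prop:mgeq2}.
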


			\section{Energy decay for $K$-derivatives}
			\label{sec:edecay}
			In this section, we consider hierarchies of $(r-M)^{-1}$- and $r$-weighted energy estimates, as derived in \S \ref{sec:rmin1weight} and \S \ref{sec:rweightinf} and use them to derive decay-in-time estimates for (weighted) energies. Rather than considering $\phi$, we derive decay for $K\phi$ and $K^2\phi$. In \S \ref{sec:decay}, we will use these decay estimates in combination with the $K$-inversion theory from \S \ref{sec:ellipticKinv} to infer energy decay estimates at the level of $\phi$, without additional $K$-derivatives.
			
		The following lemma illustrates that $K$ derivatives can be ``exchanged'' for factors $(r-1)^2$.
				\begin{lemma}
		Let $M\leq r\leq r_H$. Then there exists a $C=C(r_H)>0$, such that
		\begin{multline}
		\label{eq:convKder}
		(r-M)^{-2}|\Lbar K\psi|^2\leq C \sum_{0\leq k\leq 1}|\Lbar Z^k\psi|^2\\
		+C(r-M)^2\sum_{0\leq k_1+k_2\leq 1}\left[ |\sD_{\s^2}\psi |^2+ |K^{k_1}\Phi^{1+k_2} \psi|^2+\sin^4 \theta |K^2\psi|^2+|G|^2\right]+(r-M)^3|\psi|^2.
		\end{multline}
		\end{lemma}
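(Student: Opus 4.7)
The plan is to trade one $K$-derivative on $\Lbar K\psi$ for two factors of $(r-M)$ by using the wave equation \eqref{eq:waveeqnull} to solve for $L\Lbar\psi$, together with the fact that $K$ and $L+\Lbar$ differ by an $O(r-M)$ multiple of $\Phi$ near $\mathcal H^+$.

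First, specialize $a=M=1$ so that $\upomega_+=\tfrac12$ and $\Delta=(r-1)^2$. From the definitions in \S \ref{sec:vf} a direct computation gives
\begin{equation*}
K=L+\Lbar+\beta(r)\Phi,\qquad \beta(r):=-\frac{(r-1)(r+1)}{2(r^2+1)}=O_\infty(r-1).
\end{equation*}
In $(v,r,\theta,\tphi)$ coordinates $K=\partial_v$ and $\Lbar=-\tfrac{\Delta}{2(r^2+a^2)}\partial_r$, so $[K,\Lbar]=0$. Hence $\Lbar K\psi=K\Lbar\psi=L\Lbar\psi+\Lbar^2\psi+\beta(r)\Phi\Lbar\psi$, and by Cauchy--Schwarz,
\begin{equation*}
(r-1)^{-2}|\Lbar K\psi|^2\leq 3(r-1)^{-2}|L\Lbar\psi|^2+3(r-1)^{-2}|\Lbar^2\psi|^2+3|\Phi\Lbar\psi|^2.
\end{equation*}

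Next I bound each of the three terms. For the first, multiply \eqref{eq:waveeqnull} through by $-\tfrac{\Delta}{4(r^2+a^2)^2}$ to express $L\Lbar\psi$ as a combination of $G$, $K^2\psi$, $K\Phi\psi$, $\sD\psi$, $\Phi^2\psi$, $\Phi\psi$, and $\psi$. Because $\tfrac{\Delta}{(r^2+a^2)^2}=O((r-1)^2)$ and the $\psi$-coefficient carries the extra factor $\tfrac{1}{r\Delta}\tfrac{d}{dr}(r^2(r^2+a^2)^{-3}\Delta^2)=O(r-1)$, one gets
\begin{equation*}
(r-1)^{-2}|L\Lbar\psi|^2\leq C(r-1)^2\bigl[|G|^2+\sin^4\theta|K^2\psi|^2+|K\Phi\psi|^2+|\sD\psi|^2+|\Phi^2\psi|^2+|\Phi\psi|^2\bigr]+C(r-1)^4|\psi|^2.
\end{equation*}
For the second term I use that $(r-M)Y=-\tfrac{2(r^2+1)}{r-1}\Lbar$, so a direct computation yields $\Lbar((r-M)Y\psi)=-\tfrac{2(r^2+1)}{r-1}\Lbar^2\psi+O_\infty(1)\Lbar\psi$, which rearranges to
\begin{equation*}
(r-1)^{-2}|\Lbar^2\psi|^2\leq C\bigl(|\Lbar((r-M)Y\psi)|^2+|\Lbar\psi|^2\bigr).
\end{equation*}
For the third term, $\Phi$ is Killing and $\Lbar$ has no $\Phi$-component, so $[\Phi,\Lbar]=0$ and $|\Phi\Lbar\psi|^2=|\Lbar\Phi\psi|^2$.

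Combining, and absorbing $(r-1)^4|\psi|^2\leq (r-1)^3|\psi|^2$ for $r\leq r_H$, gives
\begin{align*}
(r-M)^{-2}|\Lbar K\psi|^2&\leq C\bigl(|\Lbar\psi|^2+|\Lbar\Phi\psi|^2+|\Lbar((r-M)Y\psi)|^2\bigr)\\
&\quad+C(r-M)^2\bigl[|\sD\psi|^2+|\Phi\psi|^2+|\Phi^2\psi|^2+|K\Phi\psi|^2+\sin^4\theta|K^2\psi|^2+|G|^2\bigr]+C(r-M)^3|\psi|^2,
\end{align*}
which is the claimed estimate, the first sum being $\sum_{0\leq k\leq 1}|\Lbar Z^k\psi|^2$ over $Z\in\{\Phi,(r-M)Y\}$. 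No step is delicate; the only point requiring care is the bookkeeping of $(r-1)$-weights, in particular verifying the cubic (not merely quadratic) decay of the zeroth-order coefficient in $L\Lbar\psi$ coming from the final term of \eqref{eq:waveeqnull}.
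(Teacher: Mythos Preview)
Your proof is correct and is precisely the natural argument; the paper itself states this lemma without proof, so there is nothing to compare against. The decomposition $K=L+\Lbar+\beta(r)\Phi$ with $\beta=O(r-M)$, combined with the wave equation \eqref{eq:waveeqnull} to handle $L\Lbar\psi$ and the rewriting of $\Lbar^2\psi$ via $\Lbar((r-M)Y\psi)$, is exactly what one expects, and your weight-tracking (in particular the $O((r-M)^3)$ vanishing of the zeroth-order coefficient) is accurate.
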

		We will employ the notation $\widetilde{Z}$ for differential operators in $\{ K,\Phi, (1-Mr^{-1})\Lbar, \snabla_{\s^2}\}$ and we set $M\equiv 1$ in the remainder of this section.
		
		We will first make the following \textbf{additional assumptions} on the inhomogeneity $G$: for all suitably small $\epsilon>0$ and $k,l\in \N$:
		\begin{align}
		\label{eq:decayassmG1}
		\sum_{k,l}&\int_{\tau}^{\infty}\int_{\Sigma_{\tau}\cap\{r\leq 2\}}[(1+\tau')^{1+\epsilon}(1-r^{-1})^{1-\epsilon}+1\\ \nonumber
		&+(1-r^{-1})^{-1+\epsilon}(1+\tau')^{-1}](1+\tau')^{2k}|K^k\tZ^lG|^2\,d\sigma dr d\tau<\infty,\\
		\label{eq:decayassmG2}
		\sum_{k,l}&\int_{\tau}^{\infty}\int_{\Sigma_{\tau}\cap\{r\geq 2\}}r^{2-\epsilon}(\tau'+r)(1+\tau')^{2k}|K^{k+1} (Xr^2)^lrG|^2\,d\sigma dr d\tau\\
		&+\sup_{\tau\in [0,\infty)}\int_{\Sigma_{\tau}\cap\{r\geq 2\}} |r^2G|^2\,d\sigma dr<\infty.
		\end{align}
		\begin{proposition}
		\label{prop:edecay}
		Let $\eta>0$ be arbitrarily small and assume \eqref{eq:decayassmG1} and \eqref{eq:decayassmG1}. Then there exist $\epsilon>0$ suitably small, a $r_I>1$ suitably small and a constant $C>0$, such that:
		\begin{align}
\label{eq:step2edecaylessdeg}
\int_{\Sigma_{\tau}} \mathcal{E}_{1+\epsilon}[K\phi]\, r^2d\sigma dr\leq&\:  C (1+\tau)^{-2+\eta}D_{\rm inhom}[\phi](0),\\
\label{eq:step3edecay}
\int_{\Sigma_{\tau}} \mathcal{E}_{1+\epsilon}[K^2\phi]\, r^2d\sigma dr\leq &\: C (1+\tau)^{-4+\eta}D_{{\rm inhom},1}[\phi](0),\\
\label{eq:step3edecayrsq}
\int_{\Sigma_{\tau} \cap\{r\geq r_I\}} r^{2} | LK\psi|^2\,d\sigma dr\leq &\: C (1+\tau)^{-2+\eta}D_{\rm inhom,2}[\phi](0).
\end{align}
with
\begin{align}
\label{eq:indatanormhom0}
D_{\rm hom}[\phi](\tau)=&\:\sum_{k+l\leq 2,l\leq 1}\int_{\Sigma_{\tau}} \mathcal{E}_{1+\epsilon}[K^{k+l+1}\phi]\, r^2d\sigma dr+\int_{\Sigma_{\tau} \cap\{r\leq r_H\}} \mathcal{E}_{1-\epsilon}[\tZ^lK^k\phi]\,d\sigma dr\\ \nonumber
+&\:\int_{\Sigma_{\tau} \cap\{r\geq r_I\}} r^{2-\epsilon} | LK\psi|^2\,d\sigma dr,\\ 
\label{eq:indatanorminhom0}
D_{\rm inhom}[\phi](\tau)=&\:D_{\rm hom}[\phi](\tau)\\ \nonumber
+&\:\sum_{k\leq 1}\sum_{l\leq 1} \int_{\tau}^{\infty}\int_{\Sigma_{\tau'}}[(1+\tau')^{1+\epsilon}(1-r^{-1})^{1-\epsilon}+r^2\chi_{\tau\leq \tau'\leq \tau+1}+r^{1+\epsilon}]r^2(1+\tau')^{2k}|K^{k+l}G|^2\\ \nonumber
+&\: r^{2-\epsilon}(\tau'r^{\epsilon}+r)|rKG|^2+(1-r^{-1})^{\epsilon}|\tZ^lK^kG|^2d\sigma drd\tau',\\
\label{eq:indatanormhom1}
D_{{\rm hom},1}[\phi](\tau)=&\:\sum_{k+l\leq 4,l\leq 2}\int_{\Sigma_{\tau}} \mathcal{E}_{1+\epsilon}[K^{k+l+1}\phi]\, r^2d\sigma dr+\int_{\Sigma_{\tau} \cap\{r\leq r_H\}} \mathcal{E}_{1-\epsilon}[\tZ^lK^k\phi]\,d\sigma dr\\ \nonumber
+&\:\sum_{k\leq 2}\int_{\Sigma_{\tau} \cap\{r\geq r_I\}}  r^{2-\delta} | LK^{k+1}\psi|^2+r^{2-\delta} | L((r^2L)K^2\psi)|^2\,d\sigma dr,\\
\label{eq:indatanorminhom1}
D_{{\rm inhom},1}[\phi](\tau)=&\:D_{{\rm hom},1}[\phi](\tau)\\ \nonumber
+&\:\sum_{k\leq 2}\sum_{l\leq 2}\int_{\tau}^{\infty}\int_{\Sigma_{\tau'}}[(1+\tau')^{1+\delta}(1-r^{-1})^{1-\epsilon}+r^2\chi_{\tau\leq \tau\leq \tau+1}+r^{1+\epsilon}]r^2(1+\tau')^{2k}|K^{k+l}G|^2\\ \nonumber
+&\: r^{2-\delta}(\tau'+r)(1+\tau)^{2k}|rK^kG|^2+r^{2-\delta}(r^{\delta}\tau'+r)|L(r^3K^2 G)|^2\\  \nonumber
+&\:(1-r^{-1})^{\epsilon}|\tZ^lK^kG|^2+(1-r^{-1})^{-1+\epsilon}(1+\tau')|\tZ^lKG|^2d\sigma drd\tau',\\
D_{{\rm hom},2}[\phi](\tau)=&\: D_{\rm hom}[\phi](\tau)+D_{\rm hom}[K\phi](\tau)+\int_{\Sigma_{\tau}\cap\{r\geq r_I\}}r^{2-\eta} | L(r^2LK\psi)|^2\,d\sigma dr,\\ 
D_{{\rm inhom},2}[\phi](\tau)=&\: D_{\rm inhom}[\phi]+D_{\rm inhom}[K\phi]+\int_{\Sigma_{0}} r^{2-\eta}  |L(r^2 LK\psi)|^2\,d\sigma dr\\ \nonumber
			&\:+\int_{\tau}^{\infty}\int_{\Sigma_{\tau'} \cap\{r \geq r_I\}} r^{3-\eta}(1+\tau)^2|rKG|^2+ r^{2-\eta}(r^{\eta}\tau+r)|L(r^3KG)|^2\,d\sigma drd\tau'.
\end{align}
		\end{proposition}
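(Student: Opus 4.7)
The argument proceeds by assembling a weighted-energy hierarchy for $K\phi$ and then extracting inverse-polynomial decay via a standard dyadic mean-value (pigeonhole) argument on time intervals $[\tau_n, 2\tau_n]$. The proofs of \eqref{eq:step3edecay} and \eqref{eq:step3edecayrsq} then come by iterating the same scheme with the output of \eqref{eq:step2edecaylessdeg} as input. Throughout, we exploit that $K^k\phi$ solves \eqref{eq:inhomwaveeq} with inhomogeneity $K^k G$, so every estimate in \S \ref{sec:intenestphys}--\S \ref{sec:rweight} applies verbatim to $K^k\phi$.

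\textbf{Assembling the hierarchy.} For $K\phi$ we combine three types of estimates: the integrated energy estimates \eqref{eq:mornearhor}--\eqref{eq:mornearho} of Theorem \ref{thm:morawetz}, together with their $\tZ$-commuted versions; the near-horizon $(r-M)^{-1}$-weighted energy estimate \eqref{eq:horminMest} of Corollary \ref{cor:horminMest} for $p$ ranging in $(0,2)$; and the far-away $r^p$-weighted estimates \eqref{eq:rpinf}--\eqref{eq:rpinfcomm} of Proposition \ref{prop:rpest}. Matching these across an intermediate region $r\sim r_I$ via \eqref{eq:mornearho} produces a two-parameter hierarchy in which each lower-order weighted flux on $\Sigma_{\tau_2}$ and spacetime integral on $[\tau_1,\tau_2]$ is bounded by a higher-order weighted flux on $\Sigma_{\tau_1}$, modulo inhomogeneous terms absorbable by \eqref{eq:decayassmG1}--\eqref{eq:decayassmG2}.

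\textbf{The dyadic mean-value step and the role of \eqref{eq:convKder}.} The key mechanism, absent in the sub-extremal setting, is the trade provided by \eqref{eq:convKder}: near $\mathcal{H}^+$, every $K$-derivative can be converted into a factor $(r-M)^2$ modulo lower-order terms controlled by the Morawetz spacetime integrand and angular/time derivatives of $\phi$. This effectively doubles the length of the near-horizon $(r-M)$-hierarchy and brings it into balance with the $r^p$-hierarchy at infinity. A standard pigeonhole on dyadic intervals (as in \cite{newmethod,paper1,aagkerr}) along the resulting two-step hierarchy then upgrades a bounded top-level flux into decay: starting from the flux $\mathcal{E}_{1+\epsilon}[K\phi]$, two dyadic reductions yield \eqref{eq:step2edecaylessdeg} with rate $(1+\tau)^{-2+\eta}$, the $\eta$-loss absorbing the $\epsilon$-degeneracies at both ends (the degenerate Morawetz weight in \eqref{eq:mornearhor} and the restriction $p<2$ in \eqref{eq:rpinf}--\eqref{eq:rpinfcomm} and \eqref{eq:horminMest}).

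\textbf{Iteration and $r^2$-flux.} The estimate \eqref{eq:step3edecay} is obtained by applying the same dyadic machinery to $K^2\phi$, now seeded by the decay \eqref{eq:step2edecaylessdeg} of $K\phi$ (which enters via the extra terms produced by \eqref{eq:convKder} when trading the second $K$-derivative for a $(r-M)^2$-weight); two further dyadic steps promote $\tau^{-2+\eta}$ to $\tau^{-4+\eta}$. The $r^{2-\eta}$-weighted flux bound \eqref{eq:step3edecayrsq} is a direct consequence of \eqref{eq:rpinfcomm} applied to $K\phi$ with $p=2-\eta$, using \eqref{eq:step2edecaylessdeg} to bound the generated bulk terms and \eqref{eq:decayassmG2} together with the data term in $D_{{\rm hom},2}$ for the inhomogeneous contributions.

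\textbf{Main obstacle.} The principal difficulty is that the near-horizon hierarchy of Corollary \ref{cor:horminMest} is short in the absence of a red-shift, so by itself it yields a decay rate no better than $\tau^{-1+\eta}$. Using \eqref{eq:convKder} to trade $K$-derivatives for $(r-M)^2$-weights is what restores the balance with the $r^p$-hierarchy at infinity, but the bookkeeping is delicate: the conversion generates $L\Lbar$ and cross terms which must be re-expressed via \eqref{eq:waveeqnull} in terms of $K$-derivatives, angular derivatives and $G$, and the resulting spacetime errors must be reabsorbed using \eqref{eq:mornearhor}, its $\tZ$-commuted version \eqref{eq:mornearho}, and the near-horizon bound \eqref{eq:horminMest}. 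Ensuring that this reabsorption is compatible with the dyadic iteration at each step, without accumulating uncontrollable top-order loss, is the technical core of the proof.
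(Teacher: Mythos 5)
Your proposal correctly identifies the central new mechanism — the use of \eqref{eq:convKder} to trade $K$-derivatives for factors of $(r-M)^2$ near $\mathcal{H}^+$ — and you correctly note that $K^k\phi$ satisfies the inhomogeneous equation with source $K^kG$ so that the near-horizon, Morawetz, and far-field $r^p$-estimates apply verbatim after commutation. However, there is a genuine gap in the pigeonhole step: you assert that ``two dyadic reductions'' upgrade the bounded top-level flux to $(1+\tau)^{-2+\eta}$ decay. This is the picture from the sub-extremal Dafermos--Rodnianski scheme, where a full $r^p$-hierarchy with $p$ ranging over $(0,2)$ is available at both ends. Here there is no analogous $(r-M)^{-p}$-hierarchy near the horizon (as the paper emphasises in \S \ref{sec:maindiff}), and \eqref{eq:convKder} does not by itself ``double the length'' of that hierarchy — it merely permits a trade between one $K$-derivative and one degenerate Morawetz factor. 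Two fixed dyadic steps along what you describe would only give a rate in the neighbourhood of $\tau^{-3/2}$, not $\tau^{-2+\eta}$.

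The missing ingredient in the paper's argument is the interpolation inequality obtained by splitting the spatial slice at the \emph{time-dependent} radius $r-1 = (1+\tau_A)^{-\gamma}$: away from $\mathcal{H}^+$, the degenerate Morawetz weight $(1-r^{-1})^\epsilon$ is bounded below by $(1+\tau_A)^{-\gamma\epsilon}$ at the cost of a factor $(1+\tau_A)^{\gamma(1+2\epsilon)}$ on the flux, while in the thin region $r-1 \le (1+\tau_A)^{-\gamma}$ the conversion \eqref{eq:convKder} together with Corollary \ref{cor:horminMest} yields a factor $(1+\tau_A)^{-\gamma(1-2\epsilon)}$. This interpolation inequality is then \emph{iterated}, with $\gamma_l = 1-2^{-l}$ for $l=0,1,\dots,L$; each step feeds the decay established at the previous $\gamma$ into the flux term on the right-hand side, and only after $L$ steps (with $L$ chosen large depending on $\eta$) does one reach the rate $(1+\tau)^{-2+\eta}$. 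Without this $\gamma$-bootstrap your scheme stalls well short of the claimed rate. The same comment applies to \eqref{eq:step3edecay}, whose proof in the paper requires a separate decay argument for $\int_{\Sigma_\tau\cap\{r\le r_H\}}\mathcal{E}_{1-\epsilon}[\tZ K\phi]$ and for the $r^2$-weighted flux $\int r^2|LK^2\psi|^2$ before the $\gamma$-iteration can be re-run, not merely ``two further dyadic steps.''
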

		\begin{proof}
		We will start by deriving decay estimates for the spacetime integral of $\mathcal{E}_{1+\epsilon}[K\phi]$. The proof proceeds in \textbf{four steps} and revolves around the repeated application of an interpolation inequality (Step 0).
		
		\paragraph{\textbf{Step 0}: an interpolation inequality}
		By using \eqref{eq:mornearhor} directly, we can only obtain a (boundedness) estimate for the energy corresponding to the energy density $\mathcal{E}_{1-\epsilon}[K\phi]$. We will use that the presence of a $K$-derivative allows us to gain factors of $r-1$ via \eqref{eq:convKder}. We apply the following interpolation inequality:
		let $0\leq \tau_A<\tau_B$ and $\gamma\geq 0$, then:
		\begin{multline*}
		\int_{\tau_A}^{\tau_B}\int_{\Sigma_{\tau}} \mathcal{E}_{1+\epsilon}[K\phi]\, r^2d\sigma dr=\int_{\tau_A}^{\tau_B}\int_{\Sigma_{\tau}\cap\{ (1+\tau_A)^{-\gamma}\leq r-1\leq r_I-1\}} \mathcal{E}_{1+\epsilon}[K\phi]\, r^2d\sigma dr\\
		+\int_{\tau_A}^{\tau_B}\int_{\Sigma_{\tau}\cap\{r-1\leq (1+\tau_A)^{-\gamma}\}} \mathcal{E}_{1+\epsilon}[K\phi]\, r^2d\sigma dr+\int_{\tau_A}^{\tau_B}\int_{\Sigma_{\tau}\cap\{ r\geq r_I\}} \mathcal{E}_{1+\epsilon}[K\phi]\, r^2d\sigma dr\\
		\lesssim 	(1+\tau_A)^{\gamma(1+2\epsilon)}\int_{\tau_A}^{\tau_B}\int_{\Sigma_{\tau}} (1-r^{-1})^{\epsilon}\mathcal{E}_{0}[K\phi]\, r^2d\sigma dr\\
		+\int_{\tau_A}^{\tau_B}\int_{\Sigma_{\tau}\cap\{r-1\leq (1+\tau_A)^{-\gamma}\}} (1-r^{-1})^{-3-\epsilon}|\Lbar K\phi|^2+|K^2\phi|^2+|\snabla_{\s^2}\phi|^2\, d\sigma dr d\tau\\
		+\int_{\tau_A}^{\tau_B}\int_{\Sigma_{\tau}\cap\{r\geq r_I\}} |LK\phi|^2+|K^2\phi|^2+|\snabla_{\s^2}\phi|^2\, d\sigma drd\tau\\
		\lesssim 	\underbrace{(1+\tau_A)^{\gamma(1+2\epsilon)}\int_{\tau_A}^{\tau_B}\int_{\Sigma_{\tau}\cap \{r\leq r_I\}} (1-r^{-1})^{\epsilon}\mathcal{E}_{0}[K\phi]\, r^2d\sigma dr}_{=:J_1}\\
		+\underbrace{(1+\tau_A)^{-\gamma(1-2\epsilon)}\int_{\tau_A}^{\tau_B}\int_{\Sigma_{\tau}\cap\{r\leq r_H\}} (r-1)^{-4+\epsilon}|\Lbar K\phi|^2\, d\sigma dr}_{=:J_{2,a}}\\
		+\underbrace{(1+\tau_A)^{-\gamma(1-2\epsilon)}\int_{\tau_A}^{\tau_B}\int_{\Sigma_{\tau}\cap\{r\leq r_H\}} (r-1)^{-1+\epsilon}(|K^2\phi|^2+|\snabla_{\s^2}K\phi|^2)\, d\sigma dr}_{=:J_{2,b}}\\
		+\underbrace{\int_{\tau_A}^{\tau_B}\int_{\Sigma_{\tau}\cap\{r\geq r_I\}} |LK\phi|^2+|K^2\phi|^2+|\snabla_{\s^2}\phi|^2\, d\sigma drd\tau}_{=:J_{3}},
		\end{multline*}
		where we used that $r-1\leq (1+\tau_A)^{-\gamma}$ implies that $r<r_H$, for $r_H$ suitably small and $\gamma\geq 0$.
		
				We start by estimating $J_{2,a}$. By applying \eqref{eq:convKder} and \eqref{eq:horminMest}, we can estimate
		\begin{multline}
		\label{eq:J2aest}
		J_{2,a}\lesssim (1+\tau_A)^{-\gamma(1-2\epsilon)}\sum_{k\leq 1} \int_{\tau_A}^{\tau_B}\int_{\Sigma_{\tau}\cap\{r\leq r_H\}}(r-1)^{\epsilon}\mathcal{E}_{0}[\tZ^k \phi] \,d\sigma dr \\
		 \leq C 	(1+\tau_A)^{-\gamma(1-2\epsilon)}\Bigg\{\sum_{k\leq 1}\int_{\Sigma_{\tau_A} \cap\{r\leq r_H\}} \mathcal{E}_{1-\epsilon}[\tZ^k\phi]\,d\sigma dr+\sum_{ k\leq 1}\int_{\Sigma_{\tau_A}} \mathcal{E}_{1+\epsilon}[K^{k}\phi]\,d\sigma dr\\
		+ \sum_{k\leq 1}\int_{\tau_A}^{\infty}\int_{\Sigma_{\tau} \cap\{r\leq r_H\}}(r-1)^{\epsilon}|\tZ^kG|^2\,d\sigma drd\tau\\
		+ \sum_{ k\leq 1}\int_{\tau_A}^{\infty}\int_{\Sigma_{\tau}}[(1+\tau)^{1+\epsilon}(1-r^{-1})^{1-\epsilon}+r^2\chi_{\tau_A\leq \tau\leq \tau_A+1}+r^{1+\epsilon}]|K^{k} G|^2\,r^2d\sigma drd\tau\Bigg\}\\
		\leq C (1+\tau_A)^{-\gamma(1-2\epsilon)} D_{2}[\phi](\tau_A),
		\end{multline}
		with
		\begin{multline*}
		 D_{2}[\phi](\tau_A):=\sum_{k\leq 1}\int_{\Sigma_{\tau_A} \cap\{r\leq r_H\}} \mathcal{E}_{1-\epsilon}[\tZ^k\phi]\,d\sigma dr+ \int_{\Sigma_{\tau_A}}\mathcal{E}_{1+\epsilon}[K^k\phi]\,r^2d\sigma dr\\
		 \sum_{k\leq 1}\int_{\Sigma_{\tau_A} \cap\{r\leq r_H\}} (r-1)^{\epsilon}|\tZ^kG|^2\,d\sigma dr\\
		 +\int_{\tau_A}^{\infty} \int_{\Sigma_{\tau}}[(1+\tau)^{1+\epsilon}(1-r^{-1})^{1-\epsilon}+r^2\chi_{\tau_A\leq \tau\leq \tau_A+1}+r^{1+\epsilon}]|K^{k} G|^2\,r^2d\sigma dr d\tau.
		\end{multline*}
	We estimate $J_{2,b}$ directly by applying \eqref{eq:horminMest}:
	\begin{equation*}
	\begin{split}
		J_{2,b} \leq &\:C(1+\tau_A)^{-\gamma(1-2\epsilon)} D_2[\phi](\tau_A).
		\end{split}
		\end{equation*}
		
		We can estimate $J_1$, by directly applying \eqref{eq:mornearhor} and \eqref{eq:morneartrapps}:
		\begin{equation*}
		\begin{split}
		J_1\lesssim &\: 	(1+\tau_A)^{\gamma(1+2\epsilon)}\int_{\Sigma_{\tau_A}}\mathcal{E}_{1+\epsilon}[K\phi]\,r^2d\sigma dr\\
		&\:+	(1+\tau_A)^{\gamma(1+2\epsilon)}\sum_{k\leq 1}\int_{\tau_A}^{\infty}\int_{\Sigma_{\tau}}[(1+\tau)^{1+\epsilon}(1-r^{-1})^{1-\epsilon}+r^2\chi_{\tau_A\leq \tau\leq \tau_A+1}+r^{1+\epsilon}]|KG|^2\,r^2d\sigma drd\tau \\
		\leq & \:(1+\tau_A)^{\gamma(1+2\epsilon)} \Bigg [ \sum_{k\leq 1}\int_{\Sigma_{\tau_A}}\mathcal{E}_{1+\epsilon}[K\phi]\,r^2d\sigma dr+C D_1[\phi](\tau_A)\Bigg],
		\end{split}
		\end{equation*}
		with
		\begin{equation*}
		D_1[\phi](\tau_A):=\int_{\tau_A}^{\infty}\int_{\Sigma_{\tau}}[(1+\tau)^{1+\epsilon}(1-r^{-1})^{1-\epsilon}+r^2\chi_{\tau_A\leq \tau\leq \tau_A+1}+r^{1+\epsilon}]|KG|^2\,d\sigma dr d\tau.		\end{equation*}
		Finally, we estimate $J_3$. By \eqref{eq:rpinf} with $p=1$, we obtain:
		\begin{multline*}
		J_{3}\lesssim \int_{\Sigma_{\tau_A}} \mathcal{E}_{1-\epsilon}[K\phi]\,d\sigma dr+C\int_{\Sigma_{\tau_1} \cap\{r\geq r_I\}} r | LK\psi|^2\,d\sigma dr+ C\int_{\tau_A}^{\infty}\int_{\Sigma_{\tau} \cap\{r \geq r_I\}} r^{2}|rKG|^2\,d\sigma drd\tau\\
			+ C \int_{\tau_A}^{\infty}\int_{\Sigma_{\tau}}[(1+\tau)^{1+\epsilon}(1-r^{-1})^{1-\epsilon}+r^2\chi_{\tau_A\leq \tau\leq \tau_A+1}+r^{1+\epsilon}]| K G|^2\,r^2d\sigma drd\tau\\
			=: C\int_{\Sigma_{\tau_A}} \mathcal{E}_{1-\epsilon}[K\phi]\,d\sigma dr+C\int_{\Sigma_{\tau_A} \cap\{r\geq r_I\}} r | LK\psi|^2\,d\sigma dr+CD_1[\phi](\tau_A)+CD_3[\phi](\tau_A),
		\end{multline*}
		with
			\begin{equation*}
		D_3[\phi](\tau_A):=C\int_{\tau_A}^{\infty}\int_{\Sigma_{\tau} \cap\{r \geq r_I\}} r^{2}|rKG|^2\,d\sigma drd\tau.
				\end{equation*}

Putting together the estimates for $J_1$, $J_{2,a}$ and $J_{2,b}$, we are left with:
\begin{multline}
\label{eq:mainrweightinterpolineq}
\int_{\tau_A}^{\tau_B}\int_{\Sigma_{\tau}} \mathcal{E}_{1+\epsilon}[K\phi]\, r^2d\sigma dr\leq (1+\tau_A)^{\gamma(1+2\epsilon)} \int_{\Sigma_{\tau_A}}\mathcal{E}_{1+\epsilon}[K\phi]\,r^2d\sigma dr+ C\int_{\Sigma_{\tau_A} \cap\{r\geq r_I\}} r | LK\psi|^2\,d\sigma dr\\
+C(1+\tau_A)^{\gamma(1+2\epsilon)}D_1[\phi](\tau_A)+C(1+\tau_A)^{-\gamma(1-2\epsilon)} D_2[\phi](\tau_A)+CD_3[\phi](\tau_A).
\end{multline}
The next two steps of the proof consist of starting with $\gamma=0$ and then taking $\gamma$ larger and larger at each step.\\
\\
\paragraph{\textbf{Step 1}: $\gamma=0$}
Let $\{\tau_i^{(0)}\}$ be a dyadic sequence. We first apply \eqref{eq:mainrweightinterpolineq} with $\gamma=0$ and $\tau_A=\tau_i^{(0)}$, $\tau_B=\tau_{i+1}^{(0)}$, together with the mean value theorem to obtain another dyadic sequence $\{\tau_i^{(1)}\}$ with $\tau^{(0)}_{2i}\leq \tau^{(1)}_i\leq \tau^{(0)}_{2i+1}$
\begin{multline}
\label{eq:mvtstep1}
\int_{\Sigma_{\tau_{i}^{(1)}}} \mathcal{E}_{1+\epsilon}[K\phi]\, r^2d\sigma dr\leq C \sum_{k\leq 1}(\tau_i^{(1)})^{-1}\Bigg[\int_{\Sigma_{\tau_{2i}^{(0)}}} \mathcal{E}_{1+\epsilon}[K^{k+1}\phi]\, r^2d\sigma dr+C\int_{\Sigma_{\tau_{2i}^{(0)}} \cap\{r\geq r_I\}} r | LK\psi|^2\,d\sigma dr\\
+D_1[\phi](\tau_{2i}^{(0)})+D_2[\phi](\tau_{2i}^{(0)})+D_3[\phi](\tau_{2i}^{(0)})\Bigg].
\end{multline}
To conclude that the integral of $\mathcal{E}_{1+\epsilon}[K\phi]$ decay along a dyadic sequence, we need to estimate the RHS by an an analogous expression evaluated at $\tau=0$.

Note that for all $\tau_A\geq 0$
\begin{equation*}
\int_{\tau_A}^{\tau_B}\int_{\Sigma_{\tau}} \mathcal{E}_{1+\epsilon}[K\phi]\, r^2d\sigma dr\leq \int_{0}^{\tau_B}\int_{\Sigma_{\tau}} \mathcal{E}_{1+\epsilon}[K\phi]\, r^2d\sigma dr,
\end{equation*}
so we can alternatively replace $\tau=\tau_{2i}^{(0)}$ on the RHS of \eqref{eq:mvtstep1} with $\tau=0$ to obtain
\begin{multline}
\label{eq:step1edecay}
\int_{\Sigma_{\tau_{i}^{(1)}}} \mathcal{E}_{1+\epsilon}[K\phi]\, r^2d\sigma dr\leq C (\tau_i^{(1)})^{-1}\sum_{k\leq 1}\Bigg[\int_{\Sigma_{0}} \mathcal{E}_{1+\epsilon}[K^{k+1}\phi]\, r^2d\sigma dr+C\int_{\Sigma_{0} \cap\{r\geq r_I\}} r | LK\psi|^2\,d\sigma dr\\
+D_1[\phi](0)+D_2[\phi](0)+D_3[\phi](0)\Bigg].
\end{multline}

\paragraph{\textbf{Step 2}: $\gamma=1-\frac{1}{2^l}$}
By \eqref{eq:rpinf} with $p=2-\delta$ together with the mean-value theorem and a standard interpolation argument (to interchange powers of $\tau$ and $r$), we can ensure that along the dyadic sequence $\{\tau_i^{(1)}\}$ from Step 1:
\begin{multline}
\label{eq:rweightestinfKpsi}
\int_{\Sigma_{\tau_i^{(1)}} \cap\{r\geq r_I\}} r | LK\psi|^2\,d\sigma dr\leq C (\tau_i^{(1)})^{-1+\delta}\Bigg[\int_{\Sigma_{0}} \mathcal{E}_{1+\epsilon}[K\phi]\, r^2d\sigma dr+\int_{\Sigma_{0} \cap\{r\geq r_I\}} r^{2-\delta} | LK\psi|^2\,d\sigma dr\\
+\int_{0}^{\infty}\int_{\Sigma_{\tau} \cap\{r \geq r_I\}} r^{3-\delta}|rKG|^2\,d\sigma drd\tau\Bigg].
\end{multline}
To extend the above estimates for all times $\tau$, we consider $\tau_i^{(1)}\leq \tau \leq \tau_{i+1}^{(1)}$ and apply \eqref{eq:rpinf} with $p=1$ to obtain:
\begin{equation}
\label{eq:rweightestinfKpsi2}
\begin{split}
\int_{\Sigma_{\tau} \cap\{r\geq r_I\}} r | LK\psi|^2\,d\sigma dr\leq &\: C \int_{\Sigma_{\tau_i^{(1)}} \cap\{r\geq r_I\}} r | LK\psi|^2\,d\sigma dr+\int_{\Sigma_{\tau_i^{(1)}}} \mathcal{E}_{1+\epsilon}[K\phi]\, r^2d\sigma dr\\
&\:+C\int_{\tau_i^{(1)}}^{\tau} \int_{\Sigma_{\tau} \cap\{r \geq r_I\}} r^{2}|rKG|^2\,d\sigma drd\tau\\
\substack{\eqref{eq:rweightestinfKpsi} \& \eqref{eq:step1edecay} \\ \leq }&\: C(1+\tau)^{-1+\delta}\sum_{k\leq 1}\Bigg[\int_{\Sigma_{0}} \mathcal{E}_{1+\epsilon}[K^{k+1}\phi]\, r^2d\sigma dr+C\int_{\Sigma_{0} \cap\{r\geq r_I\}} r^2 | LK\psi|^2\,d\sigma dr\\
&\:+D_1[\phi](0)+D_2[\phi](0)+D_3[\phi](0)\\
&\:+\int_{0}^{\infty}\int_{\Sigma_{\tau} \cap\{r \geq r_I\}} r^{2-\delta}(r^{\delta}\tau+r)|rKG|^2\,d\sigma drd\tau\Bigg].
\end{split}
\end{equation}

 Consider the dyadic sequence $\{\tau_i^{(1)}\}$ from Step 1. We apply \eqref{eq:rweightestinfKpsi2} in combination with \eqref{eq:mainrweightinterpolineq} with $\gamma=\frac{1}{2}$ and $\tau_A=\tau_i^{(1)}$, $\tau_B=\tau_{i+1}^{(1)}$, together with the mean value theorem to obtain another dyadic sequence $\{\tau_i^{(2)}\}$ with $\tau^{(1)}_{2i}\leq \tau^{(2)}_i\leq \tau^{(2)}_{2i+1}$, such that
 \begin{multline*}
\int_{\Sigma_{\tau_{i}^{(2)}}} \mathcal{E}_{1+\epsilon}[K\phi]\, r^2d\sigma dr\leq C \sum_{k\leq 1}(\tau_i^{(2)})^{-\frac{1}{2}(1+2\epsilon)}\Bigg[\int_{\Sigma_{\tau_{2i}^{(1)}}} \mathcal{E}_{1+\epsilon}[K\phi]\, r^2d\sigma dr+D_1[\phi](\tau_{2i}^{(1)})\\
+C(\tau_i^{(2)})^{-2+\frac{1}{2}(1+2\epsilon)}\int_{\Sigma_{0} \cap\{r\geq r_I\}} r^2 | LK\psi|^2\,d\sigma dr\\
+C(\tau_i^{(2)})^{-2+\frac{1}{2}(1+2\epsilon)}\int_0^{\infty}\int_{\Sigma_{\tau} \cap\{r \geq r_I\}} r^{2-\delta}(r^{\delta}\tau+r)|rKG|^2\,d\sigma drd\tau\\
+(D_2[\phi](\tau_{2i}^{(1)})+D_3[\phi](\tau_{2i}^{(1)})).
\end{multline*}
Proceeding as in Step 1, but appealing additionally to \eqref{eq:step1edecay} and using that
\begin{equation*}
D_1[\phi](\tau_A)\leq (1+\tau_A)^{-2}\int_{\tau_A}^{\infty}\int_{\Sigma_{\tau}}[(1+\tau)^{1+\epsilon}(1-r^{-1})^{1-\epsilon}+r^2\chi_{\tau_A\leq \tau\leq \tau_A+1}+r^{1+\epsilon}](1+\tau)^{2}|KG|^2\,d\sigma dr d\tau
\end{equation*}
we conclude that:
\begin{multline}
\label{eq:step2edecay}
\int_{\Sigma_{\tau_{i}^{(2)}}} \mathcal{E}_{1+\epsilon}[K\phi]\, r^2d\sigma dr\leq C (\tau_i^{(2)})^{-\frac{3+2\epsilon}{2}}\sum_{k\leq 1}\Bigg[\int_{\Sigma_{0}} \mathcal{E}_{1+\epsilon}[K^{k+1}\phi]\, r^2d\sigma dr+C\int_{\Sigma_{0} \cap\{r\geq r_I\}} r^2 | LK\psi|^2\,d\sigma dr\\
\int_{0}^{\infty}\int_{\Sigma_{\tau} \cap\{r \geq r_I\}} r^{2}(\tau+r)|rKG|^2\,d\sigma drd\tau\\
+\int_{\tau_A}^{\infty}\int_{\Sigma_{\tau}}[(1+\tau)^{1+\epsilon}(1-r^{-1})^{1-\epsilon}+r^2\chi_{\tau_A\leq \tau\leq \tau_A+1}+r^{1+\epsilon}](1+\tau)^{2}|K^{k+1}G|^2\,d\sigma dr d\tau\\
+D_2[\phi](0)+D_3[\phi](0)\Bigg].
\end{multline}
We can repeat this procedure by taking $\gamma=1-2^{-l}$ with $l\in \N$, $l\geq 2$. Given $\eta>0$ arbitrarily small, we can find a $L$ sufficiently large such that repeating the procedure until $l=L$ gives:
\begin{multline}
\label{eq:step2edecaygen}
\int_{\Sigma_{\tau_{i}^{(L+1)}}} \mathcal{E}_{1+\epsilon}[K\phi]\, r^2d\sigma dr\leq C (\tau_i^{(L+1)})^{-2+\eta+\epsilon}\sum_{k\leq 1}\Bigg[\int_{\Sigma_{0}} \mathcal{E}_{1+\epsilon}[K^{k+1}\phi]\, r^2d\sigma dr\\
+\int_{\Sigma_{0} \cap\{r\geq r_I\}} r^2 | LK\psi|^2\,d\sigma dr+\int_{0}^{\infty}\int_{\Sigma_{\tau} \cap\{r \geq r_I\}} r^{2-\delta}(r^{\delta}\tau+r)|rKG|^2\,d\sigma drd\tau\\
+\int_{\tau_A}^{\infty}\int_{\Sigma_{\tau}}[(1+\tau)^{1+\epsilon}(1-r^{-1})^{1-\epsilon}+r^2\chi_{\tau_A\leq \tau\leq \tau_A+1}+r^{1+\epsilon}](1+\tau)^{2k}|K^{k}G|^2\,d\sigma dr d\tau\\
+D_2[\phi](0)+D_3[\phi](0)\Bigg].
\end{multline}
By applying \eqref{eq:mornearhor}, we can extend the estimate \eqref{eq:step2edecay} to all $\tau$ at the expense of additional degenerate factors. We obtain:
\begin{multline}
\label{eq:step2edecaydeg}
\int_{\Sigma_{\tau}} \mathcal{E}_{1-\epsilon}[K\phi]\, r^2d\sigma dr\leq C (1+\tau)^{-2+\eta+\epsilon}\sum_{k\leq 1}\Bigg[\int_{\Sigma_{0}} \mathcal{E}_{1+\epsilon}[K^{k+1}\phi]\, r^2d\sigma dr+\int_{\Sigma_{0} \cap\{r\geq r_I\}} r^2 | LK\psi|^2\,d\sigma dr\\
\int_{0}^{\infty}\int_{\Sigma_{\tau} \cap\{r \geq r_I\}} r^{2-\delta}(r^{\delta}\tau+r)|rKG|^2\,d\sigma drd\tau\\
+\int_{\tau_A}^{\infty}\int_{\Sigma_{\tau}}[(1+\tau)^{1+\delta}(1-r^{-1})^{1-\epsilon}+r^2\chi_{\tau_A\leq \tau\leq \tau_A+1}+r^{1+\epsilon}](1+\tau)^{2}|K^{k+1}G|^2\,d\sigma dr d\tau+D_2[\phi](0)+D_3[\phi](0)\Bigg].
\end{multline}			
To obtain decay for the integral of $\mathcal{E}_{1+\epsilon}[K\phi]$ at all times $\tau$, we consider $\tau_{i}^{(L+1)}\leq \tau \leq \tau_{i+1}^{(L+1)}$.

We interpolate again to obtain:
\begin{equation*}
\begin{split}
\int_{\Sigma_{\tau}} \mathcal{E}_{1+\epsilon}[K\phi]\, r^2d\sigma dr=&\: \int_{\Sigma_{\tau}\cap\{r-1\geq (1+\tau)^{-1}\}} \mathcal{E}_{1+\epsilon}[K\phi]\, r^2d\sigma dr+ \int_{\Sigma_{\tau}\cap\{r-1\leq (1+\tau)^{-1}\}} \mathcal{E}_{1+\epsilon}[K\phi]\, r^2d\sigma dr\\
\leq &\: (1+\tau)^{2\epsilon} \int_{\Sigma_{\tau}} \mathcal{E}_{1-\epsilon}[K\phi]\, r^2d\sigma dr\\
&\:+(1+\tau)^{-2+\epsilon}\int_{\Sigma_{\tau}\cap\{r-1\leq (1+\tau)^{-1}\}} (r-1)^{-5+\epsilon}|\Lbar K\phi|^2\,d\sigma dr
\end{split}
\end{equation*}
The first term on the very RHS we can estimate using \eqref{eq:step2edecaydeg}. The second term on the very RHS we can estimate analogously to $J_{2,a}$ (which is a spacetime integral). We apply \eqref{eq:convKder} and \eqref{eq:horminMest} to obtain:
\begin{equation*}
(1+\tau)^{-2+\epsilon}\int_{\Sigma_{\tau}\cap\{r-1\geq \tau\}} (r-1)^{-5+\epsilon}|\Lbar K\phi|^2\,d\sigma dr\leq C(1+\tau)^{-2+\epsilon}D_2[\phi](0).
\end{equation*}
We conclude \eqref{eq:step2edecaylessdeg}.

In the next step, we will show that we can obtain faster decay by considering $K^2\phi$ instead of $K\phi$.

\paragraph{\textbf{Step 3}: Stronger decay for $\mathcal{E}_{1+\epsilon}[K^2\phi]$.}
To obtain stronger decay for $\mathcal{E}_{1+\epsilon}[K^2\phi]$, we need to establish appropriate decay of the terms $D_1[K\phi]$, $D_2[K\phi]$ and $D_3[K\phi]$ on the right-hand side of \eqref{eq:mvtstep1} and improve the decay when $\phi$ is replaced by $K\phi$. We also need to improve the decay rate of $\int_{\Sigma_{\tau_{2i}^{(0)}} \cap\{r\geq r_I\}} r | LK^2\psi|^2\,d\sigma dr$ by two powers.

Note first of all that
\begin{equation*}
D_1[K\phi](\tau_A)\leq (1+\tau_A)^{-4} \int_{\tau_A}^{\infty}\int_{\Sigma_{\tau}}[(1+\tau)^{1+\delta}(1-r^{-1})^{1-\epsilon}+r^2\chi_{\tau_A\leq \tau\leq \tau_A+1}+r^{1+\epsilon}](1+\tau)^{4}|K^2G|^2\,d\sigma dr d\tau.
\end{equation*}
Furthermore,
\begin{equation*}
		D_3[\phi](\tau_A)\leq C(1+\tau_A)^{-2}\int_{\tau_A}^{\infty}\int_{\Sigma_{\tau} \cap\{r \geq r_I\}} r^{2}(1+\tau)^2|rK^2G|^2\,d\sigma drd\tau.
				\end{equation*}
To obtain sufficient decay for $D_2[K\phi](\tau_A)$ and the integral of $r | LK^2\psi|^2$, we will establish in particular decay for
\begin{equation*}
\int_{\Sigma_{\tau} \cap\{r\leq r_H\}} \mathcal{E}_{1-\epsilon}[\tZ K \phi]\,d\sigma dr.
\end{equation*}

We apply \eqref{eq:horminMestpre2} with $N=1$ and $p=2-\epsilon$, together with \eqref{eq:rpinfcomm} with $p=1$ and \eqref{eq:mornearho} in order to obtain:
\begin{multline}
\int_{\tau_A}^{\tau_B}\int_{\Sigma_{\tau} \cap\{r\leq r_H\}} \mathcal{E}_{1-\epsilon}[\tZ K\phi]\,d\sigma dr\,d\tau\leq C\sum_{k\leq 1} \int_{\Sigma_{\tau_A} \cap\{r\leq r_H\}} \mathcal{E}_{2-\epsilon}[\tZ^kK\phi]\,d\sigma dr\\
+C\int_{\tau_A}^{\infty}\int_{\Sigma_{\tau} \cap\{r\leq r_H\}}(r-1)^{-1+\epsilon}|\widetilde{Z}^kKG|^2+r^2[(1+\tau)^{1+\delta}(1-r^{-1})^{1-\epsilon}+1]|K^{k+1} G|^2\,d\sigma drd\tau\\
		+C\int_{\tau_A}^{\infty}\int_{\Sigma_{\tau} \cap\{r\leq r_H\}}  \mathcal{E}_{1-\epsilon}[K^{1+k}\phi]\,d\sigma dr d\tau.
\end{multline}

We now apply the mean-value theorem along a dyadic sequence, together with \eqref{eq:horminMestpre2} with $p=1-\epsilon$ and \eqref{eq:mornearho}, to conclude that for all $\tau\geq 0$:
\begin{multline}
\int_{\Sigma_{\tau} \cap\{r\leq r_H\}} \mathcal{E}_{1-\epsilon}[\tZ K\phi]\,d\sigma dr\\
\leq C(1+\tau)^{-1} \sum_{k\leq 1}\Bigg[ \int_{\Sigma_{0} \cap\{r\leq r_H\}} \mathcal{E}_{2-\epsilon}[\tZ^kK\phi]\,d\sigma dr+(1+\tau)^{-1}\sup_{\tau'\geq \tau}\int_{\Sigma_{\tau}}(1+\tau')\mathcal{E}_{1+\epsilon}[K^{k+1}\phi]\,r^2d\sigma dr\\
+(1+\tau)^{-1}\int_{0}^{\infty}\int_{\Sigma_{\tau} \cap\{r\leq r_H\}}(r-1)^{-1+\epsilon}(1+\tau)|\widetilde{Z}KG|^2+r^2[(1+\tau)^{1+\delta}(1-r^{-1})^{1-\epsilon}\\
+r^2\chi_{0\leq \tau\leq 1}+r^{1+\epsilon}](1+\tau)|K^{k+1} G|^2\,d\sigma drd\tau\\
+C(1+\tau)^{-\delta}\sup_{\tau'\geq \tau}(1+\tau')^{\delta+1}\int_{\Sigma_{\tau'} \cap\{r\leq r_H\}}  \mathcal{E}_{1-\epsilon}[K^{1+k}\phi]\,d\sigma dr d\tau.\Bigg].
\end{multline}
In order to improve the decay above from $(1+\tau)^{-1}$ to $(1+\tau)^{-2+\eta}$, we estimate further the spacetime integral of  $\mathcal{E}_{2-\epsilon}[\tZ K\phi]$. We first use \eqref{eq:convKder} to estimate
\begin{equation*}
\begin{split}
\int_{\Sigma_{\tau} \cap\{r\leq r_H\}} \mathcal{E}_{2-\epsilon}[\tZ^kK\phi]\,d\sigma dr=&\: \sum_{k\leq 1}\int_{\Sigma_{\tau} \cap\{r\leq r_H\}} \mathcal{E}_{1-\epsilon}[\tZ^kK\phi]+(r-1)^{-4+\epsilon}|\Lbar \tZ^kK\phi|\,d\sigma dr\\
\leq &\: \sum_{k\leq 1}\int_{\Sigma_{\tau} \cap\{r\leq r_H\}} \mathcal{E}_{1-\epsilon}[\tZ^kK\phi]\,d\sigma dr\\
&\:+ C\sum_{k\leq 2}\int_{\Sigma_{\tau} \cap\{r\leq r_H\}} (r-1)^{\epsilon}\mathcal{E}_{0}[\tZ^k\phi]\,d\sigma dr.
\end{split}
\end{equation*}
Note that we have already established $\tau^{-1}$ decay for the first term on the very right-hand side and the second term can also be shown to decay like $\tau^{-1}$ by applying \eqref{eq:horminMest} with $N=2$ and $p=1-\epsilon$. We then conclude that:
\begin{multline}
\int_{\Sigma_{\tau} \cap\{r\leq r_H\}} \mathcal{E}_{1-\epsilon}[\tZ K\phi]\,d\sigma dr\\
\leq C(1+\tau)^{-1} \sum_{k,l,m\leq 1}\Bigg[(1+\tau)^{-1} \int_{\Sigma_{0} \cap\{r\leq r_H\}} \mathcal{E}_{1-\epsilon}[\tZ^{k+l}\phi]\,d\sigma dr+(1+\tau)^{-1} \int_{\Sigma_{0} \cap\{r\leq r_H\}} \mathcal{E}_{1+\epsilon}[K^{k+l}\phi]\,d\sigma dr\\
+(1+\tau)^{-1}\sup_{\tau'\geq \tau}\int_{\Sigma_{\tau}}(1+\tau')\mathcal{E}_{1+\epsilon}[K^{k+1}\phi]\,r^2d\sigma dr\\
+(1+\tau)^{-1}\int_{0}^{\infty}\int_{\Sigma_{\tau} \cap\{r\leq r_H\}}(r-1)^{\epsilon}(1+\tau)^k|\widetilde{Z}^{l+m}K^kG|^2+(r-1)^{-1+\epsilon}(1+\tau)|\widetilde{Z}KG|^2\\
+r^2[(1+\tau)^{1+\delta}(1-r^{-1})^{1-\epsilon}+r^2\chi_{0\leq \tau\leq 1}+r^{1+\epsilon}](1+\tau)^k|K^{k+l} G|^2\,d\sigma drd\tau\\
+C(1+\tau)^{-\delta}\sup_{\tau'\geq \tau}(1+\tau')^{\delta+1}\int_{\Sigma_{\tau'} \cap\{r\leq r_H\}}  \mathcal{E}_{1-\epsilon}[K^{1+k}\phi]\,d\sigma dr d\tau\Bigg].
\end{multline}
From the above estimate, we obtain:
\begin{multline}
\label{eq:decayD2}
D_2[K\phi](\tau)\leq C(1+\tau)^{-2+\eta} \sum_{k\leq 1}D_{\rm hom}[K^{k+1}\phi]+\sum_{k+l\leq 4,l\leq 2}\int_{0}^{\infty}\int_{\Sigma_{\tau}}(1-r^{-1})^{\epsilon}(1+\tau)^{2k}|\widetilde{Z}^{l}K^kG|^2 \\
+(1-r^{-1})^{-1+\epsilon}(1+\tau)|\widetilde{Z}KG|^2+r^2[(1+\tau)^{1+\delta}(1-r^{-1})^{1-\epsilon}+r^2\chi_{0\leq \tau\leq 1}+r^{1+\epsilon}](1+\tau)^{2k}|K^{k} G|^2\\
+r^{2-\delta}(r^{\delta}\tau+r)(1+\tau)^{2l}|rK^{k}G|^2\,d\sigma drd\tau.
\end{multline}
We are left with improving the decay of the integral of $r|LK^2\psi|^2$. To improve the decay rate in \eqref{eq:rweightestinfKpsi2}, we need to establish decay of the integral of $r^2|LK^2\psi|^2$.

Note first of all that by an application of a Hardy inequality and an averaging of the boundary terms, we obtain
\begin{equation*}
\int_{\Sigma_{\tau} \cap\{r\geq r_I\}} r^2 | LK^2\psi|^2\,d\sigma dr\leq C\int_{\Sigma_{\tau} \cap\{r_I-M\leq r\leq r_I\}}  |LK^2\psi|^2\,d\sigma dr+C\int_{\Sigma_{\tau} \cap\{r\geq r_I\}}  |L(r^2 LK^2\psi)|^2\,d\sigma dr.
\end{equation*}
Now we apply \eqref{eq:rpinfcomm} with $p=1$ and then $p=2-\eta$ to $K^2\phi$, together with a standard interpolation argument. We obtain:
\begin{multline*}
\int_{\Sigma_{\tau} \cap\{r\geq r_I\}} r | LK^2\psi|^2\,d\sigma dr\leq C(1+\tau)^{-3+\eta}\Bigg[\int_{\Sigma_{\tau} \cap\{r\geq r_I\}} r^{2-\delta}|L(r^2 LK^2\psi)|^2\,d\sigma dr\\
+C\int_{0}^{\infty}\int_{\Sigma_{\tau}\cap\{r\geq r_I\}} r^{2-\delta}(r^{\delta}\tau+r)|r^2L(rK^2 G)|^2\,d\sigma drd\tau+ \ldots\Bigg],
\end{multline*}
with $\ldots$ denoting the terms that are already contained on the right-hand side of \eqref{eq:decayD2}.

To improve the decay rate for $\mathcal{E}_{1+\epsilon}[K^2\phi]$, it is also necessary to obtain better decay for the term $J_3$ when $K\phi$ is replaced by $K^2\phi$.

We can improve the decay rate for $\mathcal{E}_{1+\epsilon}[K^2\phi]$ by repeating the estimates in Step 1 and Step 3, taking $\gamma=1-2^{-l}$ and starting from the already established $\tau^{-2+\eta}$ decay for $\mathcal{E}_{1+\epsilon}[K^2\phi]$ and the decay estimates for the remaining terms on the RHS of  \eqref{eq:mvtstep1} that follow from Step 3.

Via a similar argument to the one above for improving decay for $\int_{\Sigma_{\tau} \cap\{r\geq r_I\}} r | LK^2\psi|^2\,d\sigma dr$, we can also obtain decay with a rate $(1+\tau)^{-2+\eta}$ for $\int_{\Sigma_{\tau} \cap\{r\geq r_I\}} r^2 | LK^2\psi|^2\,d\sigma dr$. We first observe, as above, that
\begin{equation*}
\begin{split}
\int_{\tau_A}^{\tau_B}&\int_{\Sigma_{\tau} \cap\{r\geq r_I\}} r^2 | LK\psi|^2\,d\sigma dr\leq  C\int_{\tau_A}^{\tau_B}\int_{\Sigma_{\tau} \cap\{r_I-M\leq r\leq r_I\}}  |LK\psi|^2\,d\sigma dr\\
&\:+C\int_{\tau_A}^{\tau_B}\int_{\Sigma_{\tau} \cap\{r\geq r_I\}}  |L(r^2 LK\psi)|^2\,d\sigma dr\\
\substack{ \eqref{eq:mornearho} \& \eqref{eq:rpinfcomm}  \\ \leq }&\: C(\mathcal{E}_{1+\epsilon}[K\phi](\tau_A)+\mathcal{E}_{1+\epsilon}[K^2\phi](\tau_A))+C\int_{\Sigma_{\tau} \cap\{r\geq r_I\}}  r |L(r^2 LK\psi)|^2\,d\sigma dr\\
&\:+C\int_{\tau_A}^{\infty}\int_{\Sigma_{\tau} \cap\{r \geq r_I\}} r^{2}|L(r^3KG)|^2\,d\sigma drd\tau\\
			&\:+ C \sum_{k\leq 1}\int_{\tau_A}^{\infty}\int_{\Sigma_{\tau}}[(1+\tau)^{1+\delta}(1-r^{-1})^{1+\epsilon}+r^2\chi_{0\leq \tau\leq 1}+r^{1+\epsilon}]|K^{k+1}G|^2\,r^2d\sigma drd\tau
\end{split}
\end{equation*}
The only term for which we cannot immediately conclude $(1+\tau)^{-1+\eta}$ decay from the above estimates is $\int_{\Sigma_{\tau} \cap\{r\geq r_I\}}  r |L(r^2 LK\psi)|^2\,d\sigma dr$. To estimate this term, we apply \eqref{eq:rpinfcomm} with $N=1$ and $p=2-\eta$. We obtain
\begin{equation*}
\begin{split}
\int_{\Sigma_{\tau_B\cap\{r\geq r_I\}}}& r^{2-\eta}  |L(r^2 LK\psi)|^2\,d\sigma dr+ \int_{\tau_A}^{\tau_B}\int_{\Sigma_{\tau} \cap\{r\geq r_I\}}  r^{1-\eta}  |L(r^2 LK\psi)|^2\,d\sigma dr d\tau\\
\substack{ \leq }&\: C(\mathcal{E}_{1+\epsilon}[K\phi](\tau_A)+\mathcal{E}_{1+\epsilon}[K^2\phi](\tau_A))+C\int_{\Sigma_{\tau_A} \cap\{r\geq r_I\}}  r^{2-\eta} |L(r^2 LK\psi)|^2\,d\sigma dr\\
+&\:C\int_{\tau_A}^{\infty}\int_{\Sigma_{\tau} \cap\{r \geq r_I\}} r^{3-\eta}|L(r^3KG)|^2\,d\sigma drd\tau\\
			+&\: C \sum_{k\leq 1}\int_{\tau_A}^{\infty}\int_{\Sigma_{\tau}}[(1+\tau)^{1+\delta}(1-r^{-1})^{1+\epsilon}+r^2\chi_{\tau_A\leq \tau\leq \tau_A+1}+r^{1+\epsilon}]|K^{k+1}G|^2\,r^2d\sigma drd\tau.
\end{split}
\end{equation*}
We apply the mean-value theorem together with an interpolation argument to obtain $(1+\tau)^{-1+\eta}$ decay for $\int_{\Sigma_{\tau} \cap\{r\geq r_I\}}  r |L(r^2 LK\psi)|^2\,d\sigma dr$. We then conclude that
\begin{equation*}
\begin{split}
\int_{\Sigma_{\tau} \cap\{r\geq r_I\}}& r^2 | LK\psi|^2\,d\sigma dr\leq C(1+\tau)^{-2+\eta}\sup_{\tau'}\left[(1+\tau')^{2-\eta}\mathcal{E}_{1+\epsilon}[K\phi](\tau')+(1+\tau')^{1-\eta}\mathcal{E}_{1+\epsilon}[K^2\phi](\tau')\right]\\
+&\:C(1+\tau)^{-2+\eta}\int_{\Sigma_{0}} r^{2-\eta}  |L(r^2 LK\psi)|^2\,d\sigma dr\\
+&\:C(1+\tau)^{-2+\eta}\int_{0}^{\infty}\int_{\Sigma_{\tau'} \cap\{r \geq r_I\}} r^{3-\eta}(1+\tau)^2|rKG|^2\,d\sigma drd\tau'\\
+&\:C(1+\tau)^{-2+\eta}\int_{0}^{\infty}\int_{\Sigma_{\tau} \cap\{r \geq r_I\}} r^{2-\eta}(r^{\eta}\tau+r)|L(r^3KG)|^2\,d\sigma drd\tau\\
			+&\: C(1+\tau)^{-2+\eta} \sum_{k\leq 1}\int_{0}^{\infty}\int_{\Sigma_{\tau}}[(1+\tau')^{1+\delta}(1-r^{-1})^{1+\epsilon}+r^2\chi_{0\leq \tau\leq 1}+r^{1+\epsilon}]\\
			\times&\:(1+\tau')^{2-\eta}|K^{k+1}G|^2\,r^2d\sigma drd\tau\\
			\leq &\: C(1+\tau)^{-2+\eta}\Big[D_{\rm inhom}[\phi]+D_{\rm inhom}[K\phi]+\int_{\Sigma_{0}} r^{2-\eta}  |L(r^2 LK\psi)|^2\,d\sigma dr\\
			&\:+\int_{0}^{\infty}\int_{\Sigma_{\tau'} \cap\{r \geq r_I\}} r^{3-\eta}(1+\tau)^2|rKG|^2+ r^{2-\eta}(r^{\eta}\tau+r)|L(r^3KG)|^2\,d\sigma drd\tau'\Big]. \qedhere
\end{split}
\end{equation*}

\end{proof}
\section{Elliptic estimates and $K$-inversion theory}
\label{sec:ellipticKinv}
In this section, we will set $M\equiv 1$ for convenience. Let $u:  \Sigma\to \R$ and consider the differential operator
			\begin{equation}
			\label{def:L}
			\mathcal{L}u:=\tX(\Delta \tX u)-(r^2-1)\tX\Phi u-r\Phi u+(\slashed{\mathcal{D}}-\Phi^2)u.
			\end{equation}
			We will assume that $u\in C_c^{\infty}(\Sigma)$ in this section. The operator $\mathcal{L}$ can be obtained by considering all the terms in the wave operator $\square_g$ without $K$-derivatives; see \eqref{eq:waveeqtildeX}.
			 
			We will construct the corresponding inverse operator $\mathcal{L}^{-1}$ and use it to construct the initial data for \eqref{eq:inhomwaveeq} corresponding to the \emph{$K$-integral} of solutions $\phi$ to \eqref{eq:inhomwaveeq}, with an appropriate choices of inhomogeneity $G$. For this construction, we will need to derive appropriate, weighted elliptic estimates.\footnote{In contrast with the standard elliptic estimates, the estimates in this section will only control (weighted) $L^2$ norms of derivatives up to order 1 in terms of a (weighted) $L^2$ norm of $\mathcal{L}u$, rather than derivatives up to order 2. This is related to the fact that $\mathcal{L}$ is not really an elliptic operator, since $K$ is not a timelike vector field. Nevertheless, we will refer to these estimates as elliptic estimates.} 
			
			\subsection{Elliptic estimates}
			We first derive a preliminary, \emph{global} estimate for $u$. For this, we introduce the rescaled variable:
			\begin{equation*}
			\widetilde{u}=\sqrt{r^2-1}u.
			\end{equation*}
\begin{proposition}
\label{prop:elliptictilde}
Let $m\in \N_1$ and $0< p\leq 2$. Consider the rescaled variable $\widetilde{u}=\sqrt{r^2-1}u$. Then for $r_H>1$ arbitrarily small and $\delta\geq 0$, $\epsilon>0$, there exists a constant $C=C(r_H)>1$ such that
\begin{multline}
	\label{eq:elliptictilde}
	\int_{\Sigma} pr ^{p-1}(1-r^{-1})^2|\tX \widetilde{u}|^2+ (2-p)r^{p-3}|\widetilde{\snabla}_{\s^2}\widetilde{u}|^2\,d\sigma dr\leq \epsilon  \int_{\Sigma\cap\{r\leq r_H\}} (1-r^{-1})^{1+\delta}|\tX \widetilde{u}|^2\,d\sigma dr\\
	+C\epsilon^{-1}\int_{\Sigma} r^{p-1}(1-r^{-1})^{-\delta}|\mathcal{L}u|^2\,d\sigma dr.
	\end{multline}
\end{proposition}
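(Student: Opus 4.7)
The plan is to rewrite $\mathcal{L}u$ in terms of the rescaled variable $\widetilde{u}=\sqrt{r^2-1}\,u$ and then apply an integration-by-parts identity using a carefully designed radial multiplier. A direct computation (with $M=1$, so $\Delta=(r-1)^2$) yields
\begin{equation*}
(r^2-1)^{1/2}\mathcal{L}u=(r-1)^2\tX^2\widetilde{u}+\frac{2(r-1)}{r+1}\tX\widetilde{u}+\frac{1-2r}{(r+1)^2}\widetilde{u}-(r^2-1)\Phi\tX\widetilde{u}+(\slashed{\mathcal{D}}-\Phi^2)\widetilde{u}.
\end{equation*}
A pleasant cancellation occurs here: the $-r\Phi u$ term and the $\widetilde{u}$-proportional piece of $-(r^2-1)\tX\Phi u$ add to zero, so only the second-order cross term $-(r^2-1)\Phi\tX\widetilde{u}$ survives from the two first-order $\Phi$-terms.

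Next, I would multiply both sides by $W(r)\overline{\tX\widetilde{u}}$ for a real weight $W$ to be chosen, integrate over $\Sigma$ against $d\sigma\,dr$, and take real parts. The crucial structural observation is that the cross term drops out: since $[\tX,\Phi]=0$ and $\Phi$ is anti-self-adjoint on $L^2(\mathbb{S}^2)$, for any $f$ one has $\Re\int_{\mathbb{S}^2}\overline{f}\,\Phi f\,d\sigma=0$, and applying this pointwise in $r$ to $f=\tX\widetilde{u}$ gives $\Re\int W(r^2-1)\overline{\tX\widetilde{u}}\,\Phi\tX\widetilde{u}\,d\sigma\,dr=0$. The remaining terms are handled by integration by parts in $r$: the second-derivative term produces $-\int\bigl(\tfrac12 W'(r-1)^2+W(r-1)\bigr)|\tX\widetilde{u}|^2\,d\sigma\,dr$, while $\Re(\overline{\tX\widetilde{u}}(\slashed{\mathcal{D}}-\Phi^2)\widetilde{u})=-\tfrac12\tX\bigl(|\widetilde{\snabla}_{\s^2}\widetilde{u}|^2-|\Phi\widetilde{u}|^2\bigr)$, which after a further IBP in $r$ yields $\tfrac12\int W'\bigl(|\widetilde{\snabla}_{\s^2}\widetilde{u}|^2-|\Phi\widetilde{u}|^2\bigr)\,d\sigma\,dr$. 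Boundary terms at $r=1$ vanish because $\widetilde{u}(1,\cdot)=0$ and $(r-1)^2|\tX\widetilde{u}|^2\to 0$; at $r=\infty$ they vanish by the compact support of $u$.

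For $u$ supported on the $m$-th azimuthal mode, the lower bound \eqref{eq:mainineqLambda} applied pointwise in $r$ gives $\int_{\mathbb{S}^2}(|\widetilde{\snabla}_{\s^2}\widetilde{u}|^2-|\Phi\widetilde{u}|^2)\,d\sigma\geq (|m|+1)^{-1}\int_{\mathbb{S}^2}|\widetilde{\snabla}_{\s^2}\widetilde{u}|^2\,d\sigma$, which absorbs the $|\Phi\widetilde{u}|^2$ term. The weight $W$ is then chosen so that, away from $r=1$ (say for $r\geq r_H$), the two inequalities $\tfrac12 W'+\tfrac{W}{r+1}\geq p r^{p-3}$ and $\tfrac{W'}{2(|m|+1)}\geq(2-p)r^{p-3}$ both hold; a base choice like $W(r)=K-\tfrac{2(|m|+1)}{2-p}r^{p-2}$ with $K$ sufficiently large works. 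Bounding the right-hand side $\Re\int W(r^2-1)^{1/2}\overline{\tX\widetilde{u}}\,\mathcal{L}u\,d\sigma\,dr$ by Young's inequality then delivers the claimed estimate, with the source-term weight $\tfrac{W^2(r+1)}{(r-1)}$ on $|\mathcal{L}u|^2$.

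The main obstacle is the near-horizon balance. A bounded choice of $W$ produces, via Young's inequality, a weight of order $W^2(r+1)/(r-1)\sim(r-1)^{-1}$ on $|\mathcal{L}u|^2$, which is strictly more singular than the target $r^{p-1}(1-r^{-1})^{-\delta}\sim(r-1)^{-\delta}$ whenever $\delta<1$. I would resolve this by multiplying the base $W$ by a cutoff forcing $W$ to vanish like $(r-1)^{(1-\delta)/2}$ near $r=1$, which makes the Young-weight compatible with $(r-1)^{-\delta}$. The positive-definite LHS contribution $\int_{\Sigma\cap\{r\leq r_H\}}pr^{p-3}(r-1)^2|\tX\widetilde{u}|^2\,d\sigma\,dr$ that is discarded by this cutoff is then bounded by the $\epsilon$-error term on the right, using the elementary inequality $pr^{p-3}(r-1)^2\leq C_{r_H,\delta}(r-1)^{1+\delta}$ on $\{r\leq r_H\}$ (valid since $0<r-1\leq r_H-1$ is bounded and $\delta<1$) and choosing the overall constants so the multiplicative factor reduces to $\epsilon$.
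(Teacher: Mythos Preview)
Your strategy is essentially the paper's: rewrite $\mathcal{L}u$ in terms of $\widetilde{u}$, multiply by a radial weight times $\tX\overline{\widetilde{u}}$, and integrate by parts. The equation you derive for $\widetilde{u}$ is correct, and the observation that the $\Phi\tX\widetilde{u}$ cross term integrates to zero is exactly what the paper uses (written there simply as $\Phi(\ldots)$).

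However, your choice of $W$ does not work. Combining the second-order and first-order $\tX$-terms, the bulk coefficient of $|\tX\widetilde{u}|^2$ is $-(r-1)^2\bigl[\tfrac{1}{2}W'+\tfrac{W}{r+1}\bigr]$, so positivity requires $\tfrac{1}{2}W'+\tfrac{W}{r+1}\leq 0$, the \emph{opposite} of your stated condition. Since you also need $W'\geq 0$ for the angular term, this forces $W\leq 0$. Your base choice $W=K-cr^{p-2}$ with $K>0$ violates this; moreover, the constant $K$ produces, via the zero-order piece $\tfrac{1-2r}{(r+1)^2}\widetilde{u}$ that you have not discussed, an unabsorbable negative $|\widetilde{u}|^2$ bulk contribution of order $-Kr^{-2}$ at infinity. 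The paper takes instead $W=-(r+1)^{p-2}$: with this choice one has the exact identity $\tfrac{1}{2}W'+\tfrac{W}{r+1}=-\tfrac{p}{2}(r+1)^{p-3}$, and the zero-order bulk term $\tfrac{1}{2}(r+1)^{p-5}\bigl((6-2p)(r-1)-p\bigr)|\widetilde{u}|^2$ is negative only near $r=1$, where it is absorbed into the angular term using $|m|\geq 1$ (first for $p$ small, then extended to all $0<p\leq 2$).

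Your near-horizon analysis is also overcomplicated. With the paper's $W$, the source side is $\int(r+1)^{p-3/2}(r-1)^{1/2}|\tX\widetilde{u}||\mathcal{L}u|$, and a single asymmetric Young inequality, splitting $(r-1)^{1/2}=(r-1)^{(1+\delta)/2}(r-1)^{-\delta/2}$, produces \emph{directly} the two right-hand terms of \eqref{eq:elliptictilde}: the $\epsilon(1-r^{-1})^{1+\delta}|\tX\widetilde{u}|^2$ term (absorbed for $r\geq r_H$, kept for $r\leq r_H$) and the $C\epsilon^{-1}(1-r^{-1})^{-\delta}|\mathcal{L}u|^2$ term. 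No cutoff on $W$ is needed, and there is no ``choosing constants so the multiplicative factor reduces to $\epsilon$''---the $\epsilon$-term in the statement is precisely the unabsorbed near-horizon residual from Young's inequality, with $\epsilon$ the free Young parameter.
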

	\begin{proof}
	We start by deriving an equation for $\widetilde{u}$ in terms of $\mathcal{L}u$. We have that
	\begin{equation*}
	\begin{split}
	(r^2-1)^{-\frac{1}{2}}\tX(\Delta \tX u)=&\:(r^2-1)^{-\frac{1}{2}}\tX(\Delta \tX((r^2-1)^{-\frac{1}{2}}  \widetilde{u}))=(r^2-1)^{-\frac{1}{2}}\tX\left(\frac{\Delta}{r^2-1}(r^2-1)^{\frac{1}{2}} \tX\widetilde{u}\right)\\
	-&(r^2-1)^{-\frac{1}{2}}\tX(r\Delta (r^2-1)^{-\frac{3}{2}} \widetilde{u} )\\
	=&\:\tX\left(\frac{\Delta}{r^2-1} \tX\widetilde{u}\right)-(r^2-1)^{-\frac{1}{2}}\frac{d}{dr}(r\Delta (r^2-1)^{-\frac{3}{2}} ) \widetilde{u}.
		\end{split}
	\end{equation*}
	We moreover obtain:
	\begin{equation*}
	-(r^2-1)^{-\frac{1}{2}}(r^2-1)\tX\Phi ((r^2-1)^{-\frac{1}{2}}  \widetilde{u})=-\tX\Phi\widetilde{u}+(r^2-1)^{-1}r\Phi \widetilde{u}.
	\end{equation*}
	Note that $-(r^2-1)^{-\frac{1}{2}}r\Phi u=-(r^2-1)^{-1}r\Phi \widetilde{u}$, which cancels with the last term on the RHS above.
	Hence, we obtain
	\begin{equation*}
	(r^2-1)^{-\frac{1}{2}}\mathcal{L}u=\tX\left(\frac{\Delta}{r^2-1} \tX\widetilde{u}\right)- \tX\Phi\widetilde{u}+(r^2-1)^{-1}(\slashed{\mathcal{D}}-\Phi^2)\widetilde{u}-(r^2-1)^{-\frac{1}{2}}\frac{d}{dr}(r\Delta (r^2-1)^{-\frac{3}{2}} ) \widetilde{u}.
	\end{equation*}
	We now consider the vector field multiplier $-(r+1)^p\frac{r-1}{r+1}\tX \overline{\widetilde{u}}$. We obtain the identity:
	\begin{multline*}
	\Re\left(-(r+1)^p\frac{r-1}{r+1}\tX \overline{\widetilde{u}}(r^2-1)^{-\frac{1}{2}}\mathcal{L}u\right)=\overbrace{-(r+1)^p\Re\left(\frac{r-1}{r+1}\tX \overline{\widetilde{u}}\tX\left(\frac{\Delta}{r^2-1} \tX\widetilde{u}\right)\right)}^{=:J_1}+\Phi(\ldots)\\
	\overbrace{-(r+1)^{p-2}\Re(\tX \overline{\widetilde{u}}(\slashed{\mathcal{D}}-\Phi^2)\widetilde{u})}^{=:J_2}\\
	+\overbrace{(r+1)^{p-\frac{3}{2}}(r-1)^{\frac{1}{2}} \frac{d}{dr}(r\Delta (r^2-1)^{-\frac{3}{2}} ) \widetilde{u}\tX \overline{\widetilde{u}}}^{=:J_3}.
	\end{multline*}
	We first rewrite $J_1$. We have that
	\begin{equation*}
	\begin{split}
	J_1=-\frac{1}{2}(r+1)^p\tX\left(\frac{(r-1)^2}{(r+1)^2} |\tX \widetilde{u}|^2\right)=\widetilde{X}\left(-\frac{1}{2}(r-1)^{p-2}(r-1)^2|\tX\widetilde{u}|^2\right)+\frac{p}{2}(r+1)^{p-1}(1-r^{-1})^2|X \widetilde{u}|^2.
	\end{split}
	\end{equation*}
	We then rewrite $J_2$ by applying the Leibniz rule on $\s^2$ and in the $\tX$ direction:
	\begin{equation*}
	\begin{split}
	J_2=&\:\frac{1}{2}(r+1)^{p-2}\tX(|\widetilde{\snabla}_{\s^2}\widetilde{u}|^2-|\Phi\widetilde{u}|^2)+\Phi(\ldots)+\Divs(\ldots)\\
	=&\:\tX\left(\frac{1}{2}(r+1)^{p-2}|\widetilde{\snabla}_{\s^2}\widetilde{u}|^2-|\Phi\widetilde{u}|^2\right)+\frac{1}{2}(2-p)(r+1)^{p-3}(|\widetilde{\snabla}_{\s^2}\widetilde{u}|^2-|\Phi\widetilde{u}|^2).
	\end{split}
	\end{equation*}
	Finally, we consider $J_3$. We use that
	\begin{equation*}
	\frac{d}{dr}(r\Delta (r^2-1)^{-\frac{3}{2}} )=\frac{2r-1}{(r+1)^{\frac{5}{2}}(r-1)^{\frac{1}{2}} }
	\end{equation*}
	to obtain
	\begin{equation*}
		\begin{split}
	J_3=&\:\frac{1}{2}(r+1)^{p-\frac{3}{2}}(r-1)^{\frac{1}{2}}\frac{d}{dr}(r\Delta (r^2-1)^{-\frac{3}{2}} ) \widetilde{X}(|\widetilde{u}|^2)=\frac{1}{2}(r+1)^{p-4}(2r-1) \widetilde{X}(|\widetilde{u}|^2)\\
	=&\:\frac{1}{2}\widetilde{X}((r+1)^{p-4}(2r-1) |\widetilde{u}|^2)+\frac{1}{2}(r+1)^{p-5}((6-2p)(r-1)-p)|\widetilde{u}|^2.
		\end{split}
	\end{equation*}
	Note that only term in $J_1,J_2,J_3$ that is not manifestly non-negative definite is $\frac{1}{2}(r+1)^{p-5}((6-2p)(r-1)-p)|\widetilde{u}|^2$ for $r-1<\frac{p}{6-2p}$. However, since
	\begin{equation*}
	\int_{\s^2}\frac{1}{2}(2-p)(r+1)^{p-3}(|\widetilde{\snabla}_{\s^2}\widetilde{u}|^2-|\Phi\widetilde{u}|^2)\geq\frac{1}{8}(2-p)(r+1)^{p-3}\int_{\s^2}|\Phi \widetilde{u}|^2\,d\sigma,
	\end{equation*}
	we have that $J_2+J_3$ is non-negative definite for $p$ suitably small, after integrating over $\s^2$.
	
	After integrating $J_1+J_2+J_3$ and using that all boundary terms vanish (using the compactness of the support of $\widetilde{u}$), we therefore conclude that for $p>0$ sufficiently small,
	\begin{equation}
	\label{eq:auxelliptictilde}
	\int_{\Sigma} pr ^{p-1}(1-r^{-1})^2|\tX \widetilde{u}|^2+ r^{p-3}|\widetilde{\snabla}_{\s^2}\widetilde{u}|^2\,d\sigma dr\leq C \int_{\Sigma} (1-r^{-1})^{\frac{1}{2}}r^{p-1}|\tX\widetilde{u}||\mathcal{L}u|\,d\sigma dr.
	\end{equation}
	Then, equipped with \eqref{eq:auxelliptictilde}, we can immediately extend \eqref{eq:auxelliptictilde} to $0<p\leq 2$, by estimating the non-negative definite term in $J_3$ with \eqref{eq:auxelliptictilde} for $p$ small.
	
	By applying Young's inequality, we can further estimate, for $\epsilon>0$ arbitrarily small and $\delta\geq 0$,
	\begin{equation*}
	(1-r^{-1})^{\frac{1}{2}}r^{p-1}|\tX\widetilde{u}||\mathcal{L}u|\leq \epsilon r ^{p-1}(1-r^{-1})^{1+\delta}|\tX \widetilde{u}|^2+\frac{1}{4}\epsilon^{-1}(1-r^{-1})^{1-\delta}r^{p-1}|\mathcal{L}u|^2.
	\end{equation*}
	Note that away from the horizon, we can absorb the $|\tX \widetilde{u}|^2$ term above into the LHS of \eqref{eq:auxelliptictilde} and conclude \eqref{eq:elliptictilde}.
	\end{proof}
	
To improve \eqref{eq:elliptictilde} and absorb the term on the RHS with a factor $\epsilon$, we will first assume that $u$ is supported on the $m$th azimuthal mode and decompose $u(r,\theta,\tilde{\varphi})=\sum_{\ell\in \N_{|m|}} u_{m \ell}(r)S_{m \ell}(\theta)e^{im \tilde{\varphi}}$. Let 
\begin{equation*}
u^{\flat}:=\sum_{|m|\leq \ell\leq L_m} u_{m \ell}(r)S_{m \ell}(\theta)e^{im \tilde{\varphi}},
\end{equation*}
where $L_m\geq |m|$ will be chosen suitably large later on. Denote $u^{\sharp}=u-u^{\flat}$. Recall from \S \ref{sec:sphere} that
\begin{equation*}
\slashed{\mathcal{D}}(S_{m \ell}(\theta)e^{im \tilde{\varphi}})=-\Lambda_{+,m \ell}S_{m \ell}(\theta)e^{im \tilde{\varphi}}.
\end{equation*}

We first derive an analogue of \eqref{eq:elliptictilde} for $u^{\sharp}$ with fewer degenerate factors at $r=1$ on the LHS and expressed in terms of $u$.
\begin{proposition}
\label{prop:elliptichighfreq}
For $0\leq p\leq 2$ and $L_m$ suitably large, there exists a constant $C>0$ such that
\begin{equation}
	\label{eq:elliptichighfreq}
	\int_{\Sigma} pr ^{p-1}(1-r^{-1})^2|\tX (r u^{\sharp})|^2+ r^{p-1}|\widetilde{\snabla}_{\s^2}u^{\sharp}|^2\,d\sigma dr\leq C \int_{\Sigma} r^{p-1}|(\mathcal{L}u)^{\sharp}|^2\,d\sigma dr.
	\end{equation}
\end{proposition}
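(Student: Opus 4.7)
The plan is to test the equation
\begin{equation*}
(\mathcal{L}u)^{\sharp} = \tX(\Delta \tX u^{\sharp}) - (r^2-1)\tX\Phi u^{\sharp} - r\Phi u^{\sharp} + (\slashed{\mathcal{D}}-\Phi^2)u^{\sharp}
\end{equation*}
against the multiplier $r^{p-1}\overline{u^{\sharp}}$, integrate over $\Sigma$, and take the real part. The $-r\Phi u^{\sharp}$ term contributes nothing by skew-adjointness of $\Phi$ on $\s^2$. Integrating by parts in $\tX$ the term $-(r^2-1)\tX\Phi u^{\sharp}$ produces a derivative-on-the-weight contribution whose real part again vanishes by skew-adjointness of $\Phi$, leaving only the cross-term $\int r^{p-1}(r^2-1)\,\Re(\overline{\tX u^{\sharp}}\,\Phi u^{\sharp})$. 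The principal term $\tX(\Delta \tX u^{\sharp})$ yields, after integration by parts, $-r^{p-1}\Delta|\tX u^{\sharp}|^2$ together with a contribution proportional to $r^{p-2}\Delta\,\Re(\overline{u^{\sharp}}\,\tX u^{\sharp}) = \tfrac{1}{2} r^{p-2}\Delta\,\tX|u^{\sharp}|^2$, which after a further integration by parts becomes a benign $O(r^{p-1})|u^{\sharp}|^2$ term. Integration by parts on $\s^2$ of the angular term gives $-|\widetilde{\snabla}_{\s^2} u^{\sharp}|^2 + |\Phi u^{\sharp}|^2$.

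Combining the above contributions, applying Young's inequality to the $|\tX u^{\sharp}||\Phi u^{\sharp}|$ cross term (using $(r^2-1)^2 = \Delta(r+1)^2$ so that one factor of $\Delta$ goes into $|\tX u^{\sharp}|^2$ and the remaining $r^{p-1}(r+1)^2$ into $|\Phi u^{\sharp}|^2$), and applying Cauchy--Schwarz to the source term $\int r^{p-1}\overline{u^{\sharp}}(\mathcal{L}u)^{\sharp}$, one arrives at an inequality of the schematic form
\begin{equation*}
\int_{\Sigma} r^{p-1}\Delta|\tX u^{\sharp}|^2 + r^{p-1}|\widetilde{\snabla}_{\s^2} u^{\sharp}|^2 \,d\sigma dr
\leq C\int_{\Sigma} r^{p-1}\bigl(|\Phi u^{\sharp}|^2 + |u^{\sharp}|^2\bigr)\,d\sigma dr + C\int_{\Sigma} r^{p-1}|(\mathcal{L}u)^{\sharp}|^2\,d\sigma dr.
\end{equation*}
Closing the estimate requires absorbing the lower-order terms $|\Phi u^{\sharp}|^2 + |u^{\sharp}|^2$ on the right by the angular term on the left; this is exactly where the high-mode restriction enters. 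For $\ell > L_m$, the bound $\Lambda_{+, m\ell} \geq (\ell+1)\ell$ from \eqref{eq:mainineqLambda} yields $\int_{\s^2} |\widetilde{\snabla}_{\s^2} u^{\sharp}|^2 \,d\sigma \geq \Lambda_{+,m\ell} \int_{\s^2} |u^{\sharp}|^2 \,d\sigma$, and since $|\Phi u^{\sharp}|\leq |m||u^{\sharp}|$ on each azimuthal mode, choosing $L_m$ sufficiently large (depending on $m$) allows the absorption to proceed with $C L_m^{-1}$ coefficient.

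Finally, to convert the LHS from $r^{p-1}\Delta|\tX u^{\sharp}|^2 = r^{p+1}(1-r^{-1})^2 |\tX u^{\sharp}|^2$ to the stated form $pr^{p-1}(1-r^{-1})^2 |\tX(r u^{\sharp})|^2$, one uses $\tX(r u^{\sharp}) = r \tX u^{\sharp} + u^{\sharp}$, so
\begin{equation*}
r^{p-1}(1-r^{-1})^2 |\tX(r u^{\sharp})|^2 \lesssim r^{p+1}(1-r^{-1})^2|\tX u^{\sharp}|^2 + r^{p-1}|u^{\sharp}|^2 = r^{p-1}\Delta|\tX u^{\sharp}|^2 + r^{p-1}|u^{\sharp}|^2,
\end{equation*}
and both terms on the right have already been controlled. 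The main obstacle is ensuring that the cross term produced by the $-(r^2-1)\tX\Phi u^{\sharp}$ integration by parts can be absorbed \emph{without} reintroducing a degeneracy at $r=1$ on the right-hand side; the factorisation $(r^2-1)^2 = \Delta(r+1)^2$ is precisely what makes this possible, and then the high-mode coercivity handles the residual $|\Phi u^{\sharp}|^2$ term.
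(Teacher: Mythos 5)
There is a genuine gap in your argument at large $r$. You correctly identify that after integrating by parts in $\tX$, the cross term $\int r^{p-1}(r^2-1)\Re(\overline{\tX u^{\sharp}}\Phi u^{\sharp})$ remains, and you correctly note that Young's inequality with the split $(r^2-1)^2=\Delta(r+1)^2$ puts a weight $r^{p-1}(r+1)^2\sim r^{p+1}$ on $|\Phi u^{\sharp}|^2$. But your displayed schematic inequality then quietly drops two powers of $r$, writing $r^{p-1}|\Phi u^{\sharp}|^2$ on the right. With the correct weight, you must absorb $r^{p+1}|\Phi u^{\sharp}|^2\sim m^2 r^{p+1}|u^{\sharp}|^2$ into $r^{p-1}|\widetilde{\snabla}_{\s^2}u^{\sharp}|^2\geq \Lambda_{+,m\ell}r^{p-1}|u^{\sharp}|^2$ on the left; the absorption coefficient is $\sim m^2 r^2/L_m^2$, which is unbounded as $r\to\infty$ no matter how large a finite $L_m$ you choose. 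So your global multiplier $r^{p-1}\overline{u^{\sharp}}$ fails precisely because the $(r^2-1)$ factor in front of $\tX\Phi$ in $\mathcal{L}$ grows like $r^2$.

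The paper handles this in two steps and your proof is missing both. First, in Proposition \ref{prop:elliptictilde}, the rescaling $\widetilde{u}=\sqrt{r^2-1}\,u$ converts the cross term $-(r^2-1)\tX\Phi u$ into $-\tX\Phi\widetilde{u}$ with \emph{no} $r$-weight (and simultaneously cancels $-r\Phi u$ exactly); with a $\tX\overline{\widetilde{u}}$ multiplier, $\Re(\tX\Phi\widetilde u\cdot\tX\overline{\widetilde u})=\tfrac12\Phi(|\tX\widetilde u|^2)$ is a total $\Phi$-derivative that integrates to zero, so the cross term disappears globally. This yields the weak global estimate with the small-$\epsilon$ near-horizon loss. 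Second, in the proof of Proposition \ref{prop:elliptichighfreq}, the $\overline{u}^{\sharp}$-type multiplier is deployed only with a compactly supported cut-off $\chi$ on $r\leq r_H+1$, so the problematic $(r+1)^2$ factor is bounded by $(r_H+2)^2$ and the high-mode coercivity can absorb it; the resulting near-horizon estimate is then stitched to the global tilde estimate. If you want a single global multiplier argument, you must first do the $\widetilde{u}$ rescaling; otherwise you need the localization to $r\leq r_H+1$. As written, your proof does neither and cannot close.
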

\begin{proof}
 We restrict to $r\leq r_H$, where $r_H$ will be chosen suitably small, by considering the vector field multiplier $-\chi (r-1)^q \overline{u}^{\sharp}$, with $\chi: [1,\infty]\to \R_{\geq 0}$ a smooth cut-off function such that $\chi(r)=1$ for $r\leq r_H$ and $\chi(r)=0$ for $r\geq r_H+1$.
 
 Then
 \begin{equation*}
 \begin{split}
 -\chi  (r-1)^q\Re(\overline{u}^{\sharp}(\mathcal{L}u)^{\sharp})=&-\Re(\chi \overline{u}^{\sharp} \tX(\Delta \tX u^{\sharp}))+(r^2-1)(r-1)^q\chi \Re(\overline{u}^{\sharp} \tX \Phi u^{\sharp})\\
 &\:+(r-1)^q(|\widetilde{\snabla}_{\s^2}u^{\sharp}|^2-|\Phi u^{\sharp}|^2)+ \Phi(\ldots)+\Divs(\ldots).
 \end{split}
 \end{equation*}
 We further rewrite
 \begin{multline*}
 -\Re(\chi  (r-1)^q\overline{u}^{\sharp} \tX(\Delta \tX u^{\sharp}))=-\tX( \chi (r-1)^q \overline{u}^{\sharp} \Delta \tX u^{\sharp})+\chi (r-1)^{2+q}| \tX u^{\sharp}|^2\\
 +\left(\frac{d\chi}{dr}(r-1)^{q+2}+q\chi (r-1)^{q+1}\right)\Re(\overline{u}^{\sharp}  \tX u^{\sharp}).
 \end{multline*}
 We estimate $(r^2-1)(r-1)^q\chi \Re(\overline{u}^{\sharp} \tX \Phi u^{\sharp})$ via Young's inequality:
 \begin{equation*}
 (r^2-1)(r-1)^q\chi |\Re(\overline{u}^{\sharp} \tX \Phi u^{\sharp})|\leq \frac{1}{2}\chi(r-1)^{2+q}| \tX u^{\sharp}|^2+\frac{1}{2}(r+1)^2(r-1)^q | \Phi u^{\sharp}|^2+\Phi(\ldots).
 \end{equation*}
 The first term on the RHS can be absorbed into $\chi (r-1)^q | \tX u^{\sharp}|^2$. To absorb the second term into $(r-1)^q(|\widetilde{\snabla}_{\s^2}u^{\sharp}|^2-|\Phi u^{\sharp}|^2)$, we need to take $L_m$ appropriately large, so that $\Lambda_{+,m\ell}$ is suitably large for all $\ell> L_m$.
 
 Note that we can estimate the term involving $\overline{u}^{\sharp}  \tX u^{\sharp}$ further by applying the Leibniz rule:
 \begin{equation*}
 \begin{split}
& \left(\frac{d\chi}{dr}(r-1)^{q+2}+q\chi(r-1)^{q+1}\right)\Re(\overline{u}^{\sharp}  \tX u^{\sharp})=\frac{1}{2} \left(\frac{d\chi}{dr}(r-1)^{q+2}+q\chi(r-1)^{q+1}\right)\tX(| u^{\sharp}|^2)\\
 =&\:\tX\left(\left(\frac{d\chi}{dr}(r-1)^{q+2}+q\chi(r-1)^{q+1}\right)| u^{\sharp}|^2\right)\\
 &-\frac{1}{2}\left[\frac{d^2\chi}{dr^2}(r-1)^{q+2}+(2q+2)\frac{d\chi}{dr} (r-1)^{q+1}+\chi q(q+1)(r-1)^{q}\right]| u^{\sharp}|^2.
 \end{split}
 \end{equation*}
 We now fix $q=0$. Integrating over $\Sigma$ and combining with \eqref{eq:elliptictilde} applied to $u^{\sharp}$ with $\epsilon$ suitably small. Here, we use that can absorb the term involving derivatives of $\chi$ into the LHS of \eqref{eq:elliptictilde} and the term involving an $\epsilon$ on the RHS of \eqref{eq:elliptictilde} can similarly be absorbed (for $\epsilon>0$ suitably small). We conclude \eqref{eq:elliptichighfreq}.
\end{proof}
To improve \eqref{eq:elliptictilde} for $u^{\flat }$, we introduce the rescaled variables $\check{u}_{\pm,m \ell}=\frac{u_{m \ell}}{ w_{\pm,  m \ell}}$, with $w_{\pm, m \ell}$ defined in \S \ref{sec:statsol}. Then $ w_{\pm,  m \ell}$ are solutions to:
\begin{equation}
\label{eq:eqwellm}
\mathcal{L}(w_{ \pm, m \ell} S_{m \ell} e^{im\tilde{\varphi}})=0,
\end{equation}
with boundary conditions imposed at $r=1$.

By using that $\mathcal{L}(w_{\pm, m \ell} S_{m \ell} e^{im\tilde{\varphi}})=0$, it follows easily that $\check{u}_{\pm,m \ell}$ must satisfy:
\begin{equation}
\label{eq:ellipticcheck}
w_{\pm,m \ell}^{-1}(\mathcal{L}u)_{m \ell}=\tX( \Delta \tX \check{u}_{\pm,m \ell})+2\Delta w_{\pm,m \ell}^{-1}\frac{dw_{\pm,m \ell}}{dr} \tX \check{u}_{\pm,m \ell}-im (r^2-1)\tX \check{u}_{\pm,m \ell}.
\end{equation}
\begin{proposition}
\label{prop:ellipticlowfreq}
Let $\delta>0$, then there exists a constant $C=C(\delta)>0$, such that 
\begin{equation}
	\label{eq:ellipticlowfreq}
	\int_{\Sigma} pr ^{p-1}(1-r^{-1})^{2+\delta}|\tX (r u^{\flat})|^2+ r^{p-1}(1-r^{-1})^{\delta}|\widetilde{\snabla}_{\s^2}u^{\flat}|^2\,d\sigma dr\leq C \int_{\Sigma} r^{p-1}(1-r^{-1})^{-\delta}|(\Phi\mathcal{L}u)^{\flat}|^2\,d\sigma dr.
	\end{equation}
\end{proposition}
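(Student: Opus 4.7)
The strategy is to reduce to the rescaled variables $\check{u}_{\pm, m\ell} := u_{m\ell}/w_{\pm, m\ell}$ of \S\ref{sec:statsol}, which satisfy the first-order ODE \eqref{eq:ellipticcheck}
\begin{equation*}
w_{\pm, m\ell}^{-1}(\mathcal{L}u)_{m\ell} = \tX(\Delta \tX \check{u}_{\pm,m\ell}) + \mathcal{C}_{\pm, m\ell}(r)\,\tX\check{u}_{\pm,m\ell}, \qquad \mathcal{C}_{\pm, m\ell}:= 2\Delta\, w_{\pm,m\ell}^{-1}\frac{dw_{\pm,m\ell}}{dr} - im(r^2-1).
\end{equation*}
The key point, which is independent of the sign of $\alpha_{m\ell}$ and of the choice of $\pm$, is that by \eqref{eq:estderwelliptic},
\begin{equation*}
\Re\,\mathcal{C}_{\pm, m\ell}(r) = -(r+1) + O(r-1),
\end{equation*}
since the singular $-\tfrac{1}{2}(r-1)^{-1}$ part of $w_{\pm}^{-1}w'_\pm$ multiplies the vanishing factor $\Delta$ to produce exactly the favorable real coefficient $-(r+1)$. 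Both $w_\pm$ are allowed as rescalings; for each mode $(m,\ell)$ we pick whichever one does not lead to a degenerate denominator in what follows (with the obvious logarithmic modifications \eqref{eq:wfreqexpexcminalt}--\eqref{eq:estderwexcplusalt} in the exceptional case $\alpha_{m\ell}=0$).

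Testing the equation against $-\chi_+(r-1)^{q}(1-r^{-1})^{\delta}\overline{\check{u}_{\pm,m\ell}}$, for a small parameter $q\ge 0$ and a cut-off $\chi_+$ localized to $r\le r_H$, and integrating by parts in $\tX$ and on $\s^2$, I would use: (i) the $\tX(\Delta \tX\check{u})$-term yields, modulo boundary and lower-order contributions, a non-negative $\Delta(r-1)^{q}(1-r^{-1})^{\delta}|\tX\check{u}_{\pm,m\ell}|^2$ and the angular term $(r-1)^q(1-r^{-1})^{\delta}\bigl(|\widetilde{\snabla}_{\s^2}\check{u}|^2-|\Phi\check{u}|^2\bigr)$, exactly as in Proposition~\ref{prop:elliptichighfreq}; (ii) the real part of $\mathcal{C}_{\pm,m\ell}$ contributes $-\tfrac{1}{2}(r+1)\,(r-1)^{q}(1-r^{-1})^{\delta}\,\tX(|\check{u}_\pm|^2) + O(r-1)\,(\cdot)$, which after one more integration by parts produces a positive $(r-1)^{q-1}|\check{u}_{\pm,m\ell}|^2$ zeroth-order term. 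Translating $\check{u}_{\pm,m\ell}\leftrightarrow u_{m\ell}$ via \eqref{eq:wfreqexpelliptic}--\eqref{eq:estderwelliptic} (i.e.\ $w_{\pm,m\ell}\sim (r-1)^{-1/2}$ times oscillating or power factors) converts these bounds into bounds on $(1-r^{-1})^{\delta}|\widetilde{\snabla}_{\s^2}u_{m\ell}|^2$ and $(1-r^{-1})^{2+\delta}|\tX(r u_{m\ell})|^2$, which, after summing over the finite range $|m|\le \ell\le L_m$, are exactly the LHS of \eqref{eq:ellipticlowfreq} restricted to $r\le r_H$. The region $r\ge r_H$ is then handled directly by Proposition~\ref{prop:elliptictilde} applied to $u^\flat$: the $(1-r^{-1})^2$ factor is harmless away from the horizon, and the error term on its RHS carrying an $\epsilon$-prefactor can be absorbed by the near-horizon estimate just established, for $\epsilon$ small.

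The main obstacle is the imaginary part of $\mathcal{C}_{\pm,m\ell}$, which by \eqref{eq:estderwelliptic} equals
\begin{equation*}
2\bigl(-m\pm\sqrt{2m^2-\Lambda_{+,m\ell}-\tfrac{1}{4}}\bigr)(r+1) - m(r-1)(r+1) + O(r-1),
\end{equation*}
and produces, when paired with $\overline{\check{u}_\pm}\,\tX\check{u}_\pm$, a term of the form $m\,\Im(\overline{\check{u}_\pm}\,\tX\check{u}_\pm)$ that is not sign-definite and cannot be absorbed into the positive LHS terms. The resolution is to exploit that $u$ is supported on a single azimuthal mode, so $\Phi u_{m\ell}=im u_{m\ell}$; writing $(\mathcal{L}u)_{m\ell}=(im)^{-1}(\Phi \mathcal{L}u)_{m\ell}$ on the RHS of the multiplier identity and applying Cauchy--Schwarz with a $(1-r^{-1})^{\pm\delta}$-split balances the $m$-loss and yields the factor $(1-r^{-1})^{-\delta}|(\Phi \mathcal{L}u)^\flat|^2$ appearing in \eqref{eq:ellipticlowfreq}, at the cost of a constant $C$ that depends on $L_m$ (and hence on $m$). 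This is the structural reason why only the $\Phi$-derivative of $\mathcal{L}u$ enters the low-frequency estimate, in contrast with the $\mathcal{L}u$ itself in the high-frequency estimate \eqref{eq:elliptichighfreq}. The exceptional case $\alpha_{m\ell}=0$ needs a separate but entirely analogous treatment in which the multiplier is further weighted by a power of $\log(r-1)^{-1}$, the real-part computation being unaffected to leading order.
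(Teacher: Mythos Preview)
Your proposal contains a genuine gap stemming from the choice of multiplier. You test the renormalized equation \eqref{eq:ellipticcheck} against a weighted $\overline{\check{u}_{\pm,m\ell}}$, whereas the paper tests against a weighted $\tX\overline{\check{u}_{\pm,m\ell}}$. With the paper's choice, the first-order term $\mathcal{C}_{\pm,m\ell}\tX\check{u}$ pairs with $\tX\overline{\check{u}}$ to give $\mathcal{C}_{\pm,m\ell}|\tX\check{u}|^2$, whose real part is simply $\Re(\mathcal{C}_{\pm,m\ell})|\tX\check{u}|^2$; the imaginary part of $\mathcal{C}_{\pm,m\ell}$ drops out identically. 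Choosing the weight exponent $q = 2\beta + \eta$ with $\beta=\Im\sqrt{2m^2-\Lambda_{+,m\ell}-\tfrac14}$ then yields a clean coercive estimate for $\int(r-1)^{-2\beta+\eta+1}|\tX\check{u}|^2$, which is converted back to $\tX\widetilde{u}$ and coupled with Proposition~\ref{prop:elliptictilde} via a Hardy inequality. The factor $|(\Phi\mathcal{L}u)^\flat|^2$ on the right of \eqref{eq:ellipticlowfreq} enters at this last stage, through the $m^2\sim|\Lambda_{+,m\ell}-m^2+\tfrac14|$ in the conversion between $\tX\check{u}$ and $\tX\widetilde{u}$ and the need to take $\epsilon\lesssim m^{-2}$ in \eqref{eq:elliptictilde}; it is not tied to the imaginary part of $\mathcal{C}_{\pm,m\ell}$.

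With your choice, the pairing instead produces a bulk cross term $\Im(\mathcal{C}_{\pm,m\ell})\,\Im(\overline{\check{u}}\,\tX\check{u})$ of size $\sim m(r-1)^{q+\delta}|\check{u}||\tX\check{u}|$ near $r=1$. Your favorable terms are $(r-1)^{q+\delta+2}|\tX\check{u}|^2$ from the principal part and $(r-1)^{q+\delta-1}|\check{u}|^2$ from the real part of $\mathcal{C}_{\pm,m\ell}$; any Young-inequality split of the cross term leaves one piece with a strictly more singular $(r-1)$-weight than either good term, so it cannot be absorbed as $r\downarrow 1$. Your proposed fix---rewriting $(\mathcal{L}u)_{m\ell}=(im)^{-1}(\Phi\mathcal{L}u)_{m\ell}$---only touches the inhomogeneity and does nothing to this bulk term. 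A smaller correction: equation \eqref{eq:ellipticcheck} for $\check{u}_{\pm,m\ell}$ is purely radial (the angular eigenvalue has been absorbed into $w_{\pm,m\ell}$), so the ``angular term'' you claim in (i) does not appear there; the angular control on the left of \eqref{eq:ellipticlowfreq} is recovered only after passing back to $u^\flat$ and invoking Proposition~\ref{prop:elliptictilde}.
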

\begin{proof}
 We restrict to $r\leq r_H$, where $r_H$ will be chosen suitably small, by considering $\chi: [1,\infty]\to \R_{\geq 0}$, a smooth cut-off function such that $\chi(r)=1$ for $r\leq r_H$ and $\chi(r)=0$ for $r\geq r_H+\eta$, with $\eta>0$ suitably small.
 
 We will suppress the subscripts in the notation by writing $\check{u}=\check{u}_{\pm, m \ell}$ and $w=w_{\pm,m \ell}$. Consider the vector field multiplier $-\chi(r-1)^q \tX \overline{\check{u}}$ acting on both sides of \eqref{eq:ellipticcheck}:
 \begin{equation*}
 \begin{split}
 -\chi(r-1)^q\Re(\tX \overline{\check{u}} w^{-1}(\mathcal{L}u)_{m \ell})=&\: -\chi(r-1)^{2+q}\Re(\tX \overline{\check{u}} \tX^2\check{u})-2(r-1)^{q+1}\left[1+\Re\left( w^{-1}\frac{dw}{dr} \right)\right]|\tX \check{u}|^2\\
 =&\:-\frac{1}{2}\tX\left(\chi(r-1)^{2+q}|\tX \check{u}|^2\right)\\
 &\:+(r-1)^{q+1}\left[\frac{q}{2}-1-2\Re\left( w^{-1}\frac{dw}{dr} \right)\right]|\tX \check{u}|^2+\frac{d\chi}{dr}(r-1)^{2+q}|\tX \check{u}|^2.
 \end{split}
 \end{equation*}
 By \eqref{eq:estderwelliptic}, it follows that
 \begin{equation*}
 \frac{q}{2}-1-2\Re\left( w^{-1}\frac{dw}{dr} \right)=\frac{q}{2}\pm 2\beta+O(r-1),
 \end{equation*}
with $\beta=\Re(\sqrt{\Lambda_{m\ell}-2m^2+\frac{1}{4}})$. If $\beta\in \R$, we will only consider $w_{+, m \ell}$ and not $w_{-, m \ell}$.

Hence, for $q=2\beta+\eta$, we can integrate along $\Sigma$ to obtain
\begin{multline}
\label{eq:ellipticcheckaux1}
\int_{\Sigma \cap\{r\leq r_H\}} (r-1)^{-2\beta+\eta+1}|\tX \check{u}|^2\,d\sigma dr\leq C\int_{\Sigma\cap\{r_H\leq r\leq r_H+1\}}|\tX \check{u}|^2\\
+C\int_{\Sigma\cap \{r\leq r_H+1\}}(r-1)^{-2\beta+\eta-1}|w|^{-2}|(\mathcal{L}u)_{m \ell}|^2\,d\sigma dr.
\end{multline}
We have that in $r\leq 2$:
\begin{align*}
\tX \check{u}=&\:\tX \left(\frac{(r^2-1)^{\frac{1}{2}}}{w}\widetilde{u}\right),\\
(r-1)^{-2\beta}|\tX \check{u}|^2\leq &\:C |\tX \widetilde{u}|^2+C\left|\Lambda_{m\ell}-m^2+\frac{1}{4}\right|(r-1)^{-2+2\beta}|\widetilde{u}|^2,\\
|\tX \widetilde{u}|^2\leq &\: C(r-1)^{-2\beta }|\tX \check{u}|^2+C\left|\Lambda_{m\ell}-m^2+\frac{1}{4}\right|(r-1)^{-2-2\beta}|\check{u}|^2.
\end{align*}
The restriction $\ell\leq L_m$ ensures moreover that $|\Lambda_{m\ell}-m^2+\frac{1}{4}|\leq Bm^2$ for some $B>0$.

Hence, we can combine \eqref{eq:ellipticcheckaux1} with \eqref{eq:elliptictilde} to obtain:
\begin{multline}
\label{eq:ellipticcheckaux2}
\int_{\Sigma \cap\{r\leq r_H\}} (r-1)^{-2\beta+1+\eta}|\tX \check{u}|^2+(r-1)^{1+\eta}|\tX \widetilde{u}|^2\,d\sigma dr\leq C\epsilon m^2\int_{\Sigma\cap\{r\leq r_H\}}(r-1)^{-1-2\beta+\eta}|\check{u}|^2\\
+Cm^2\int_{\Sigma}(r-1)^{-\eta}|(\mathcal{L}u)_{m \ell}|^2\,d\sigma dr.
\end{multline}
To absorb the first term on the RHS of \eqref{eq:ellipticcheckaux2} into the LHS, we apply the following Hardy inequality: for $\beta=0$
\begin{equation*}
\int_{\Sigma \cap\{r\leq r_H\}} (r-1)^{-1+\eta}|\check{u}|^2\,d\sigma dr\leq \int_{\Sigma\cap\mathcal{H}^+} (r-1)^{-\eta}| \check{u}|^2\,d\sigma+C\int_{\Sigma \cap\{r\leq r_H\}} (r-1)^{1+\eta}|\tX\check{u}|^2\,d\sigma dr
\end{equation*}
and for $\beta>0$ and $\eta<2 \beta$:
\begin{equation*}
\int_{\Sigma \cap\{r\leq r_H\}} (r-1)^{-1+\eta}|\check{u}|^2\,d\sigma dr\leq \int_{\Sigma\cap\{r=r_H\}} (r-1)^{-\eta}| \check{u}|^2\,d\sigma+C\int_{\Sigma \cap\{r\leq r_H\}} (r-1)^{1+\eta}|\tX\check{u}|^2\,d\sigma dr.
\end{equation*}
The boundary term at $\mathcal{H}^+$ vanishes and the boundary term at $r=r_H$ can be estimated via \eqref{eq:elliptictilde}. Hence, we conclude that
\begin{multline}
\label{eq:ellipticcheckaux3}
\int_{\Sigma \cap\{r\leq r_H\}} (r-1)^{-2\beta+1+\eta}|\tX \check{u}|^2+(r-1)^{1+\eta}|\tX \widetilde{u}|^2+(r-1)^{-1-2\beta+\eta}|\check{u}|^2\,d\sigma dr\\
\leq C\epsilon m^2\int_{\Sigma\cap\{r\leq r_H\}}(r-1)^{-1-2\beta+\eta}|\check{u}|^2+Cm^2\int_{\Sigma}(r-1)^{-\eta}|(\mathcal{L}u)_{m \ell}|^2\,d\sigma dr.
\end{multline}
To absorb, we therefore need $\epsilon\leq C^{-1}m^{-2}$ and then \eqref{eq:ellipticlowfreq} follows after combining \eqref{eq:ellipticcheckaux3} with \eqref{eq:elliptictilde}.
\end{proof}

\subsection{Additional estimates for $u^{\sharp}$}
In this section, we derive additional higher-order estimates for $u^{\sharp}$ by commuting $\mathcal{L}$ with $\tX$.
\begin{lemma}
The following commutation properties holds:
\begin{align}
\label{eq:ellipticcomm1}
[\mathcal{L},\tX] u=& -2(r-1)\tX^2u-2\tX u+2r\Phi \tX u+\Phi u\\
\label{eq:ellipticcomm2}
[\mathcal{L},\tX^2] u=& -4(r-1)\tX^3u-6\tX^2 u+4r\Phi \tX^2 u+4\Phi \tX u.
\end{align}
\end{lemma}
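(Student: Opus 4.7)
The proof will be a direct calculation exploiting the following structural observations. First, in the extremal case $r_+=a=1$ we have $\Delta=(r-1)^2$, so $\tX \Delta = 2(r-1)$ and $\tX^2\Delta = 2$. Second, in the coordinates $(\tau,r,\theta,\tphi)$ of \S\ref{sec:vf}, the vector fields $\tX=\partial_r$ and $\Phi=\partial_{\varphi_*}$ commute, and the purely angular operator $\slashed{\mathcal{D}}$ also commutes with $\tX$. Consequently, among the four summands of $\mathcal{L}$ in \eqref{def:L}, the only terms contributing to $[\mathcal{L},\tX]u$ are the principal part $\tX(\Delta\tX u)$ and the lower-order terms $-(r^2-1)\tX\Phi u$ and $-r\Phi u$, via the Leibniz rule applied to the $r$-dependent coefficients.

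Concretely, I would expand
\[
\tX(\Delta \tX^2u)-\tX^2(\Delta \tX u)= (\tX\Delta)\tX^2u - \bigl(\tX^2\Delta\bigr)\tX u - 2(\tX\Delta)\tX^2u = -2(r-1)\tX^2 u - 2\tX u,
\]
and similarly
\[
\tX\bigl((r^2-1)\Phi \tX u\bigr) - (r^2-1)\Phi \tX^2 u = 2r\,\Phi\tX u,\qquad \tX(r\Phi u) - r\Phi \tX u = \Phi u.
\]
Combining these three pieces and noting that the angular terms $(\slashed{\mathcal{D}}-\Phi^2)u$ commute with $\tX$, yields the first identity
\[
[\mathcal{L},\tX]u = -2(r-1)\tX^2 u - 2\tX u + 2r\Phi \tX u + \Phi u.
\]

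For \eqref{eq:ellipticcomm2}, rather than redoing the calculation I would use the derivation property
\[
[\mathcal{L},\tX^2] = [\mathcal{L},\tX]\tX + \tX\,[\mathcal{L},\tX].
\]
The first summand, by applying \eqref{eq:ellipticcomm1} to $\tX u$, equals $-2(r-1)\tX^3u - 2\tX^2 u + 2r\Phi\tX^2u + \Phi\tX u$. For the second summand, applying $\tX$ to the right-hand side of \eqref{eq:ellipticcomm1} and using $\tX r = 1$, $[\tX,\Phi]=0$ gives $-2(r-1)\tX^3u - 4\tX^2u + 2r\Phi\tX^2 u + 3\Phi\tX u$. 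Adding these produces exactly the claimed $-4(r-1)\tX^3u - 6\tX^2 u + 4r\Phi\tX^2 u + 4\Phi\tX u$.

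There is no genuine obstacle here; the entire content is algebraic and relies only on the commutativity of $\tX$ with $\Phi$ and $\slashed{\mathcal{D}}$ in the coordinates $(\tau,r,\theta,\tphi)$, together with the explicit form $\Delta=(r-1)^2$ in extremal Kerr. The only point requiring care is the bookkeeping of Leibniz-rule terms — in particular ensuring no stray contribution is missed from the $-(r^2-1)\tX\Phi u$ term when commuting with $\tX^2$ — which is why the recursive formula above is a cleaner path than a direct expansion.
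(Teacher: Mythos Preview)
Your proposal is correct and follows essentially the same approach as the paper: a direct Leibniz-rule computation for \eqref{eq:ellipticcomm1}, and for \eqref{eq:ellipticcomm2} the same derivation identity $[\mathcal{L},\tX^2]u = [\mathcal{L},\tX](\tX u) + \tX([\mathcal{L},\tX]u)$ that the paper uses. Your term-by-term organization (isolating the angular piece as commuting) is a slight repackaging of the paper's expansion of $\tX\mathcal{L}u$, but the content is identical.
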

\begin{proof}
We have that
\begin{equation*}
\tX \mathcal{L}u=\tX^2(\Delta \tX u)-(r^2-1)\tX^2\Phi u-2r \tX \Phi u-r\tX\Phi u-\Phi u+(\slashed{\mathcal{D}}-\Phi^2)\tX u
\end{equation*}
We can further rewrite:
\begin{equation*}
\tX^2(\Delta \tX u)=\tX(\Delta \tX^2 u)+\tX(2(r-1) \tX u)=\tX(\Delta \tX^2 u)+2(r-1)\tX^2u+2\tX u.
\end{equation*}
Combining the above gives \eqref{eq:ellipticcomm1}.

To obtain \eqref{eq:ellipticcomm2}, we use that
\begin{equation*}
[\mathcal{L},\tX^2] u=[\mathcal{L},\tX](\tX u)+\tX([\mathcal{L},\tX]u)
\end{equation*}
and we apply \eqref{eq:ellipticcomm1}.
\end{proof}

\begin{proposition}
\label{prop:elliptichonearhor}
Let $1\leq n\leq 2$. Then for sufficiently large $L_m$ depending on $n$,
\begin{multline}
\label{eq:elliptichonearhor}
\sum_{k\leq n}\int_{\Sigma\cap\{r\leq r_I\}} (r-1)^{1-\epsilon}|\tX((1-r^{-1})\tX )^{k} u^{\sharp}|^2+ |\widetilde{\snabla}_{\s^2}((1-r^{-1})\tX )^k u^{\sharp}|^2\,d\sigma dr\\
\leq C\sum_{ k\leq n-1}\int_{\Sigma\cap\{r\leq r_I+1\}} (r-1)^{1-\epsilon}|\tX((1-r^{-1})\tX )^k(\mathcal{L}u)^{\sharp}|^2\,d\sigma dr+C\int_{\Sigma}|(\mathcal{L}u)^{\sharp}|^2\,d\sigma dr.
\end{multline}
\end{proposition}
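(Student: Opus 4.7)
I would prove Proposition~\ref{prop:elliptichonearhor} by induction on $n$, combining Proposition~\ref{prop:elliptichighfreq} with the commutation identities \eqref{eq:ellipticcomm1}--\eqref{eq:ellipticcomm2}, together with a refined multiplier estimate that upgrades the near-horizon weight.

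For the base case $n=1$, set $v_1 := (1-r^{-1})\tX u^{\sharp}$, which still belongs to the sharp class because both $\tX$ and multiplication by $(1-r^{-1})$ commute with projection onto the angular modes $\ell>L_m$. Writing
\begin{equation*}
\mathcal{L} v_1 = (1-r^{-1})\tX(\mathcal{L}u)^{\sharp} + (1-r^{-1})[\mathcal{L},\tX]u^{\sharp} + [\mathcal{L},(1-r^{-1})]\tX u^{\sharp}
\end{equation*}
and inserting \eqref{eq:ellipticcomm1}, the remainders reduce to terms linear in $\tX u^{\sharp}$, $(r-1)\tX^2 u^{\sharp}$, $\Phi u^{\sharp}$, $\Phi\tX u^{\sharp}$ (here $[\mathcal{L},(1-r^{-1})]$ produces only factors proportional to $\Delta\,\tX(r^{-1}) = \Delta r^{-2}$, with a quadratic zero at the horizon). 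Applying Proposition~\ref{prop:elliptichighfreq} to $v_1$ controls $\int (1-r^{-1})^2|\tX(rv_1)|^2 + |\widetilde{\snabla}_{\s^2}v_1|^2$ by $\int|\mathcal{L}v_1|^2$; the principal part of the latter contributes $(1-r^{-1})^2|\tX(\mathcal{L}u)^{\sharp}|^2$, while the higher-order error $(r-1)\tX^2 u^{\sharp}$ is rewritten using the identity $(r-1)\tX^2 u^{\sharp} = r\,\tX v_1 - r^{-1}\tX u^{\sharp}$ and absorbed back into the LHS of the same estimate. The angular errors $|\Phi u^{\sharp}|^2$ are absorbed using the coercivity $|\Phi u^{\sharp}|^2 \leq C L_m^{-1}|\widetilde{\snabla}_{\s^2}u^{\sharp}|^2$ valid for $L_m$ sufficiently large.

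The $k=0$ contribution $\int (r-1)^{1-\epsilon}|\tX u^{\sharp}|^2$ requires a sharper weight than Proposition~\ref{prop:elliptichighfreq} provides (which yields the weaker $(r-1)^2|\tX u^{\sharp}|^2$). I would obtain it by a separate multiplier estimate: multiply $\mathcal{L}u^{\sharp} = (\mathcal{L}u)^{\sharp}$ by $-\chi(r-1)^{-\epsilon}\tX\bar u^{\sharp}$, with $\chi$ a horizon-localised cutoff, and take the real part. Integration by parts in $\tX$ of $\tX\bar u^{\sharp}\,\tX(\Delta\tX u^{\sharp})$, using $\Delta = (r-1)^2$, produces precisely the positive definite bulk term $\tfrac{2+\epsilon}{2}\chi(r-1)^{1-\epsilon}|\tX u^{\sharp}|^2$; the cross term $(r^2-1)\tX\bar u^{\sharp}\,\tX\Phi u^{\sharp}$ equals $\tfrac12(r^2-1)\Phi(|\tX u^{\sharp}|^2)$ and integrates to zero on $\s^2$; and the lower-order contributions from $r\Phi u^{\sharp}$, $(\mathcal{L}u)^{\sharp}$, and the angular mass term $(m^2-\Lambda_{+,m\ell})u^{\sharp}$ are controlled via Young's inequality, with the dangerous $(r-1)^{-1-\epsilon}|\Phi u^{\sharp}|^2$ Young remainder dominated by $L_m^{-2}|\widetilde{\snabla}_{\s^2}u^{\sharp}|^2$ (after Poincar\'e on $\s^2$) once $L_m$ is large.

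The inductive step ($n=2$) applies the same argument to $v_2 := ((1-r^{-1})\tX)^2 u^{\sharp}$, using \eqref{eq:ellipticcomm2} in place of \eqref{eq:ellipticcomm1}, Proposition~\ref{prop:elliptichighfreq} for $v_2$, and the $n=1$ case to absorb all commutator errors. The main obstacle throughout is converting the $(r-1)^2$ horizon weight of Proposition~\ref{prop:elliptichighfreq} into the stronger $(r-1)^{1-\epsilon}$ one, which forces the dedicated multiplier argument above and relies crucially both on the angular-coercivity gain afforded by the large $L_m$ and on the $\s^2$-cancellation of the $\tX\Phi$ cross term.
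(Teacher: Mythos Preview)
Your dedicated multiplier argument for the sharpened $k=0$ weight has a sign error that cannot be repaired. With the multiplier $-\chi(r-1)^{-\epsilon}\tX\bar u^{\sharp}$ applied to $\tX(\Delta\tX u^{\sharp})$, integration by parts gives (writing $\Delta=(r-1)^2$)
\[
-\int(r-1)^{-\epsilon}\Re\bigl(\tX\bar u^{\sharp}\,\tX(\Delta\tX u^{\sharp})\bigr)\,dr
= -\tfrac{2+\epsilon}{2}\int\chi(r-1)^{1-\epsilon}|\tX u^{\sharp}|^2\,dr+\ldots,
\]
i.e.\ the principal bulk is \emph{negative}, not $+\tfrac{2+\epsilon}{2}$. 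If you flip the sign of the multiplier to make this term positive, the angular term $(\slashed{\mathcal{D}}-\Phi^2)u^{\sharp}=(m^2-\Lambda_{+,m\ell})u^{\sharp}$ produces, after the same integration by parts, a bulk contribution $-\tfrac{\epsilon}{2}\chi(r-1)^{-1-\epsilon}(\Lambda_{+,m\ell}-m^2)|u^{\sharp}|^2$: this is negative with a weight that is non-integrable at $r=1$ for smooth $u^{\sharp}$, so nothing on the left can absorb it. The Young remainder from the $r\Phi u^{\sharp}$ term carries the same $(r-1)^{-1-\epsilon}$ weight and suffers the same obstruction; your proposed Poincar\'e absorption $|\Phi u^{\sharp}|^2\leq CL_m^{-2}|\widetilde{\snabla}_{\s^2}u^{\sharp}|^2$ does not help because you have no control of $(r-1)^{-1-\epsilon}|\widetilde{\snabla}_{\s^2}u^{\sharp}|^2$ anywhere. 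So a first-order multiplier with weight $(r-1)^{-\epsilon}$ cannot upgrade $(r-1)^2$ to $(r-1)^{1-\epsilon}$ here.

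The paper avoids this entirely: it never uses a first-order multiplier. Instead it repeats the \emph{zero-order} multiplier of Proposition~\ref{prop:elliptichighfreq}, i.e.\ $-\chi(r-1)^{q}\,\overline{\tX u^{\sharp}}$, applied to $(\mathcal{L}\tX u)^{\sharp}$ with $q=1-\epsilon$. This yields positive bulk terms $(r-1)^{3-\epsilon}|\tX^2 u^{\sharp}|^2$ and $(r-1)^{1-\epsilon}|\widetilde{\snabla}_{\s^2}\tX u^{\sharp}|^2$, both with integrable weights; the commutator \eqref{eq:ellipticcomm1} converts the source to $(\tX\mathcal{L}u)^{\sharp}$ plus lower-order errors. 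The improved radial weight $(r-1)^{1-\epsilon}|\tX u^{\sharp}|^2$ then comes \emph{from the angular bulk} via Poincar\'e on $\s^2$: since $u^{\sharp}$ is supported on $\ell>L_m$, one has $\int_{\s^2}|\widetilde{\snabla}_{\s^2}\tX u^{\sharp}|^2\geq c\,\Lambda_{+,m(L_m+1)}\int_{\s^2}|\tX u^{\sharp}|^2$, and this simultaneously absorbs the bad-sign $|\tX u^{\sharp}|^2$ terms from the commutator. In short, the weight improvement is borrowed from angular coercivity on the commuted equation, not from any direct radial mechanism on the uncommuted one.
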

\begin{proof}
We restrict to $r\leq r_I$, where $r_I$ will be chosen suitably small, by considering the vector field multiplier $-\chi (r-1)^q \tX\overline{u}^{\sharp}$, with $\chi: [1,\infty]\to \R_{\geq 0}$ a smooth cut-off function such that $\chi(r)=1$ for $r\leq r_I$ and $\chi(r)=0$ for $r\geq r_I+1$.

By repeating the proof of Proposition \ref{prop:elliptichighfreq}, but with $u^{\sharp}$ replaced by $\tX u^{\sharp}$, we obtain the identity:
\begin{equation*}
 \begin{split}
 -\chi  (r-1)^q\Re(\tX \overline{u}^{\sharp}(\mathcal{L}\tX u)^{\sharp})=&\tX\left[\left(\frac{d\chi}{dr}(r-1)^{q+2}+q\chi(r-1)^{q+1}\right)| \tX u^{\sharp}|^2 -\chi (r-1)^q \tX\overline{u}^{\sharp} \Delta \tX^2 u^{\sharp}\right]\\
&\: +\Phi(\ldots)+\Divs(\ldots)+\chi (r-1)^{2+q}| \tX^2 u^{\sharp}|^2+(r-1)^q(|\widetilde{\snabla}_{\s^2}\tX u^{\sharp}|^2-|\Phi \tX u^{\sharp}|^2)\\
&\:+(r^2-1)(r-1)^q\chi \Re(\tX \overline{u}^{\sharp} \tX^2 \Phi u^{\sharp})-\frac{1}{2}\chi q(q+1)(r-1)^{q}| \tX u^{\sharp}|^2\\
&-\frac{1}{2}\left(\frac{d^2\chi}{dr^2}(r-1)^{q+2}+(2q+2)\frac{d\chi}{dr} (r-1)^{q+1}\right)| \tX u^{\sharp}|^2.
 \end{split}
 \end{equation*}
By \eqref{eq:ellipticcomm1}, we also have that
\begin{equation*}
 \begin{split}
 \chi  (r-1)^q\Re(\tX \overline{u}^{\sharp}(\mathcal{L}\tX u)^{\sharp})=&\:\chi  (r-1)^q\Re(\tX \overline{u}^{\sharp}(\tX\mathcal{L} u)^{\sharp})-2\chi  (r-1)^{q+1} \Re(\tX \overline{u}^{\sharp} \tX^2u^{\sharp})-2 \chi  (r-1)^q|\tX u^{\sharp}|^2\\
 &\:+\Phi(\ldots)+\chi  (r-1)^q\Re(\tX \overline{u}^{\sharp} \Phi u^{\sharp})\\
 =&\:(q-1)\chi  (r-1)^q|\tX u^{\sharp}|^2+\chi  (r-1)^q\Re(\tX \overline{u}^{\sharp}(\tX\mathcal{L} u)^{\sharp})+\chi  (r-1)^q\Re(\tX \overline{u}^{\sharp} \Phi u^{\sharp})\\
& - \tX\left[(r-1)^{q+1}  \chi |\tX u^{\sharp}|^2\right]+\frac{d\chi}{dr}(r-1)^{q+1} |\tX u^{\sharp}|^2.
 \end{split}
\end{equation*}
Hence after integrating over $\Sigma\cap\{r\leq r_I+1\}$ and taking $L_m$ sufficiently large to absorb the $| \tX u^{\sharp}|^2$ terms with a bad sign, we obtain
\begin{equation*}
 \begin{split}
	\int_{\Sigma\cap\{r\leq r_I\}} &(1-r^{-1})^{2+q}|\tX^2 u^{\sharp}|^2+(1-r^{-1})^q|\widetilde{\snabla}_{\s^2} \tX u^{\sharp}|^2\,d\sigma dr\\
	\leq&\: C \int_{\Sigma\cap\{r\leq r_I+1\}} (1-r^{-1})^q |\Phi u^{\sharp}|^2+ (1-r^{-1})^q|(\tX\mathcal{L} u)^{\sharp}|^2\,d\sigma dr+ C\int_{\Sigma\cap\{r_I\leq r\leq r_I+1\}} |\tX u^{\sharp}|^2\,d\sigma dr.
	 \end{split}
	\end{equation*}
	
	Hence, taking $q=1-\epsilon$, we obtain \eqref{eq:elliptichonearhor} for $n=1$ by combining the above with \eqref{eq:elliptichighfreq} . The $n=2$ case follows similarly by considering an extra $\tX$ above and applying the commutation formula \eqref{eq:ellipticcomm2}.
	\end{proof}
	
In order to obtain sufficiently strong weighted $L^2$ estimates for $u^{\sharp}$ and its derivatives away from the horizon in terms of the derivative $X$ rather than $\tX$, we need additional estimates, which require more decay of $(\mathcal{L}u)^{\sharp}$ as $r\to \infty$.
	\begin{proposition}
Let $1\leq n\leq 2$. Then for sufficiently large $L_m$ depending on $n$,
\begin{multline}
\label{eq:elliptichonearinf}
\sum_{k\leq n}\int_{\Sigma\cap\{r\geq r_I\}} r^{4+4k}|\tX^{k+1} u^{\sharp}|^2+r^{2+4k}|\widetilde{\snabla}_{\s^2}\tX ^k u^{\sharp}|^2\,d\sigma dr\\
\leq C\sum_{k\leq n}\int_{\Sigma } (r-1)^{2+4k}|\tX^k(\mathcal{L}u)^{\sharp}|^2\,d\sigma dr+C\int_{\Sigma}|(\mathcal{L}u)^{\sharp}|^2\,d\sigma dr.
\end{multline}
and
\begin{multline}
\label{eq:elliptichonearinf2}
\sum_{k\leq n}\int_{\Sigma\cap\{r\geq r_I\}} r^2 |X(r^2X)^{k} u^{\sharp}|^2+|{\snabla}_{\s^2}(r^2X)^k u^{\sharp}|^2\,d\sigma dr\\
\leq C\sum_{k\leq n}\int_{\Sigma } (r-1)^{2+4k}|\tX^k(\mathcal{L}u)^{\sharp}|^2\,d\sigma dr+C\int_{\Sigma}|(\mathcal{L}u)^{\sharp}|^2\,d\sigma dr.
\end{multline}
\end{proposition}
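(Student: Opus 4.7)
The proof proceeds by induction on $n$, using weighted multiplier estimates adapted to the far region combined with the commutation relations \eqref{eq:ellipticcomm1}--\eqref{eq:ellipticcomm2}. For the base case $n = 0$, I would multiply the identity $\mathcal{L}u^\sharp = (\mathcal{L}u)^\sharp$ by a carefully designed multiplier supported in $\{r \geq r_I - 1\}$, of the form $-\chi(r) r^2 \overline{u}^\sharp$, possibly combined with a derivative multiplier of the form $\chi(r) r^2 \tX\overline{u}^\sharp$ to balance contributions. Integrating by parts in the $\tX$-direction and on $\s^2$, the principal symbol $\tX(\Delta \tX u^\sharp) + (\slashed{\mathcal{D}} - \Phi^2) u^\sharp$ yields the positive bulk $\chi r^2 \Delta |\tX u^\sharp|^2 \sim r^4 |\tX u^\sharp|^2$ (using $\Delta \sim r^2$ at infinity) together with $r^2 |\widetilde{\snabla}_{\s^2} u^\sharp|^2 \sim r^2 (\Lambda_{+,m\ell} - m^2) |u_{m\ell}|^2$ on each mode (using the eigenvalue relation $\slashed{\mathcal{D}}(S_{m\ell}e^{im\widetilde{\varphi}}) = -\Lambda_{+,m\ell} S_{m\ell}e^{im\widetilde{\varphi}}$, which is coercive on high-$\ell$ modes for $L_m$ sufficiently large). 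The boundary contribution at $r = r_I$ is absorbed using Proposition \ref{prop:elliptichighfreq}, producing the $\int |(\mathcal{L}u)^\sharp|^2$ term on the right-hand side.

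For the inductive step, assuming the estimate holds for $k \leq n-1$, the same base-case multiplier argument is applied to $\tX^n u^\sharp$ in place of $u^\sharp$, now with the heavier weight $r^{4n+2}$ in place of $r^2$. The equation satisfied by $\tX^n u^\sharp$ is obtained by iterating the commutation relations, giving $\mathcal{L}(\tX^n u^\sharp) = \tX^n (\mathcal{L}u)^\sharp + [\mathcal{L}, \tX^n] u^\sharp$, where $[\mathcal{L}, \tX^n] u^\sharp$ is a linear combination of $(r-1)\tX^{n+1}u^\sharp$, $\tX^n u^\sharp$, $r \Phi \tX^n u^\sharp$ and $\Phi \tX^{n-1} u^\sharp$. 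The top-order $(r-1)\tX^{n+1}u^\sharp$ contribution has weight matching the LHS of the target estimate and is absorbed by enlarging $L_m$ so that the angular coercivity provides a sufficient smallness factor; the remaining lower-order commutator terms are handled via the inductive hypothesis at $k = n-1$. The second estimate \eqref{eq:elliptichonearinf2} then follows from \eqref{eq:elliptichonearinf} by expressing $X = \tX - \upomega_+ \mathbbm{h} \Phi$: on the $m$-th azimuthal mode, iterating this relation together with $\mathbbm{h}(r) = 2 + O(r^{-2})$ at infinity expands $X(r^2 X)^k u^\sharp$ as a linear combination of $r^{2j}\tX^{j+1} u^\sharp$ and $r^{2j-1}\tX^j u^\sharp$ with coefficients bounded in $m$, and Hardy inequalities control the lower-order terms to yield the claimed weighted estimate.

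The main obstacle is the treatment of the cross terms arising from the first-order $\Phi$-derivative terms $-(r^2-1)\tX\Phi u^\sharp$ and $-r\Phi u^\sharp$ in $\mathcal{L}$. These generate contributions in the multiplier identity that scale as $m\, r^{q+2}|u^\sharp||\tX u^\sharp|$ at infinity, and a naive Young's inequality absorption into the angular coercivity bulk $r^q (\Lambda_{+,m\ell} - m^2)|u^\sharp|^2$ would require $\Lambda_{+,m\ell} \gtrsim m^2 r^2$ uniformly in $r$, which is not uniformly achievable. The resolution is to exploit the pure-imaginary nature of $\overline{u}^\sharp \Phi u^\sharp$ on fixed azimuthal modes to eliminate one source of cross contributions via cancellation after taking real parts, together with a derivative multiplier chosen so that the residual contribution from $-(r^2-1)\tX\Phi u^\sharp$ takes the form $m r^q(r^2-1)|\tX u^\sharp|^2$ (which is pure imaginary and vanishes in the real part), leaving only a tractable term from $-r \Phi u^\sharp$ that is absorbed using both the $r^4 |\tX u^\sharp|^2$ bulk and the coercivity with $L_m$ large. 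The requirement that $L_m$ grow with $n$ reflects the accumulation of such cross-term and commutator contributions as the number of $\tX$-commutations increases.
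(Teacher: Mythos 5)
Your overall strategy is the right one and tracks the paper's: the paper's (very terse) proof says to repeat the multiplier arguments of Propositions \ref{prop:elliptichighfreq} and \ref{prop:elliptichonearhor} for $(\mathcal{L}\tX^k u)^\sharp$ with the weight exponent $q = 2+4k$, $k=0,1,2$, which matches your $r^{4n+2}$ weight. Your observation that the first-order $\Phi$-terms produce a cross contribution $\sim m\,r^{q+2}|u^\sharp||\tX u^\sharp|$ that cannot be naively absorbed into $r^q(\Lambda_{+,m\ell}-m^2)|u^\sharp|^2$ uniformly in $r$ is correct and genuinely important — it is precisely the reason the cutoff $\chi$ in Proposition \ref{prop:elliptichighfreq} is essential there (it renders $(r+1)^2$ bounded), and the paper's one-sentence proof here, ``omit $\chi$ and take $q=2+4k$,'' leaves this point implicit. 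Your reduction of \eqref{eq:elliptichonearinf2} to \eqref{eq:elliptichonearinf} via $X = \tX - \upomega_+\mathbbm{h}\Phi$ and Hardy inequalities is fine.

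However, your proposed resolution of the cross-term does not close as stated. A \emph{derivative} multiplier $\chi r^q \tX\overline{u}^\sharp$ does annihilate the contribution from $-(r^2-1)\tX\Phi u^\sharp$ via $\Re\bigl(i m (r^2-1)|\tX u^\sharp|^2\bigr)=0$, but it also fails to produce the angular coercivity bulk $r^q(|\widetilde{\snabla}_{\s^2}u^\sharp|^2 - |\Phi u^\sharp|^2)$ that appears on the left-hand side of \eqref{eq:elliptichonearinf} (that bulk is generated by the \emph{zeroth-order} multiplier $r^q\overline{u}^\sharp$). And simply adding the zeroth-order multiplier back in reintroduces the bad cross-term unchanged: the two multipliers' contributions to it do not cancel against each other, since $\Re(\overline{u}^\sharp\tX\Phi u^\sharp) = -m\,\Im(\overline{u}^\sharp\tX u^\sharp)$ is independent of $\Re(\tX\overline{u}^\sharp\tX\Phi u^\sharp)=0$. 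What the paper actually relies on — implicitly, through the proofs of Propositions \ref{prop:elliptictilde} and \ref{prop:elliptichighfreq}, and explicitly in the $\mathcal{L}^{-1}$-construction of \S\ref{sec:ellipticKinv} — is a \emph{normalization of the variable}: the amplitude rescaling $\widetilde{u}=\sqrt{r^2-1}\,u$ turns the coefficient $(r^2-1)$ in front of $\tX\Phi$ into $1$ (so that the derivative-multiplier cancellation then dovetails with the correct weights), and the phase factor $e^{-im\int_{r_0}^r \frac{x+1}{x-1}\,dx}$ appearing in \eqref{eq:ellipticcheck2} removes the $\Phi$-drift entirely. You should either invoke that rescaled variable/integrating factor directly, or otherwise explain how the $m^2 r^{q+2}|u^\sharp|^2$ contribution is controlled for large $r$; as written, this step is a gap.
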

\begin{proof}
We repeat the estimates in the proof of Propositions \ref{prop:elliptichighfreq} and \ref{prop:elliptichonearhor}, but we consider the estimates globally on $\Sigma$ by omitting $\chi$ and taking $q=2+4k$ when considering $(\mathcal{L}\tX^ku)^{\sharp}$ with $k=0,1,2$.
\end{proof}

\subsection{Construction of $\mathcal{L}^{-1}$}
The aim of this section is to construct the time integral
\begin{equation*}
K^{-1}\phi(\tau,r,\theta,\widetilde{\varphi})=-\int_{\tau}^{\infty}\phi(\tau',r,\theta,\widetilde{\varphi})\,d\tau'
\end{equation*}
corresponding to solutions $\phi$ to \eqref{eq:inhomwaveeq} with suitable $G$. We will construct time integrals by solving \eqref{eq:inhomwaveeq} with $G$ replaced by $G^{(1)}=-\int_{\tau}^{\infty}G(\tau',r,\theta,\widetilde{\varphi})\,d\tau'$ and initial data $(K^{-1}\phi|_{\Sigma},\phi|_{\Sigma})$. We construct $K^{-1}\phi|_{\Sigma}$ by inverting the operator $\mathcal{L}$, i.e.\ by solving $\mathcal{L}u=F$ for appropriate choices of $F$, depending on the initial data for $\phi$ and the inhomogeneity $G$.

We start by constructing the coefficients $u_{m \ell}$ with $\ell\leq L_m$ corresponding to solutions of
\begin{equation*}
\mathcal{L}u=F
\end{equation*}
 via integration in $r$.
\begin{proposition}
\label{eq:invLlowfreq}
We define
\begin{equation}
u_{m \ell}(r):=-w_{\infty,m \ell}(r)\int_{1}^{r}(r'-1)^{-2}w_{\infty,m \ell}^{-2}(r')e^{im\int_{r_0}^{r'} \frac{x+1}{x-1}\,dx}\left[\int_{r'}^{\infty}w_{\infty,m \ell}(x)e^{-im\int_{r_0}^{x} \frac{y+1}{y-1}\,dy}F_{m \ell}(x)\,dx\right]\,dr',
\end{equation}
where we assume enough regularity and decay towards $r=\infty$ for $F_{m \ell}$ and for the RHS above to be well-defined and in the case $\Lambda_{+,m \ell}\leq 2m^2$, we assume additionally that
\begin{equation}
\label{eq:condF}
\int_{1}^{\infty}w_{\infty,m \ell}(r)e^{-im\int_{r_0}^r \frac{x+1}{x-1}\,dx}F_{m \ell}(r)\,dr=0.
\end{equation}
Then $u_{m \ell}\in C^2((1,\infty))$ and $\mathcal{L}u_{m \ell}=F_{m \ell}$. 

Furthermore, there exists a constant $C>0$ such that for $r\geq 2$ and $q\geq 0$:
\begin{equation}
\label{eq:timeintboundlinft}
|ru_{m \ell}|(r)+|rX (ru_{m \ell})|(r)\leq r^{-q}\sup_{r'\geq 2}|r^{1+q} F_{m \ell}(r')|,
\end{equation}
where $X=\tX-im\frac{\mathbbm{h}}{2}$.
\begin{enumerate}[label=\emph{(\roman*)}]
\item When $\Lambda_{+,m \ell}\geq 2m^2$, we have that for $r\leq 2$:
\begin{align*}
\sum_{k\leq 2}|((r-1)\tX)^ku_{m \ell}(r)|\leq &\:C(|B_+|+|B_-|+1)\sum_{k\leq 2}\sup_{r'\leq 2}|((r-1)\tX)^kF_{m \ell})(r')|\\
\sum_{k\leq 2}|X((r-1)\tX)^ku_{m \ell}(r)|\leq &\:C(|B_+|+|B_-|+1)(r-1)^{-\frac{3}{2}+\sqrt{\alpha_{m \ell}}}\sum_{\substack{k\leq 2\\j\leq 1}}\sup_{r'\leq 2}|\tX^j((r-1)\tX)^kF_{m \ell})(r')|.
\end{align*}
\item When $\Lambda_{+,m \ell}<2m^2$, we have that for $r\leq 2$:
\begin{align*}
\sum_{k\leq 2}|((r-1)\tX)^ku_{\pm,m \ell}(r)|\leq &\: C\sum_{k\leq 2}\sup_{r'\leq 2}|((r-1)\tX)^kF_{m \ell})(r')|,\\
\sum_{k\leq 2}|\tX ((r-1)\tX)^ku_{\pm,m \ell}(r)|\leq &\: C\sum_{\substack{k\leq 2\\j\leq 1}}\sup_{r'\leq 2}|\tX^j((r-1)\tX)^kF_{m \ell})(r')|.
\end{align*}
\end{enumerate}
\end{proposition}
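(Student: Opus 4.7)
The plan is to reduce $\mathcal{L}u = F_{m\ell}(r)\,S_{m\ell}(\theta)\,e^{im\tphi}$ to a linear second-order ODE in $r$ and to invert it explicitly through two nested integrations, using the homogeneous solution $w_{\infty,m\ell}$ as an integrating factor. Setting $\check{u}_{m\ell} = w_{\infty,m\ell}^{-1} u_{m\ell}$ and combining \eqref{eq:ellipticcheck} with the fact that $\mathcal{L}(w_{\infty,m\ell} S_{m\ell} e^{im\tphi}) = 0$, together with the identity $\tX(e^{-im\int_{r_0}^{r}\frac{x+1}{x-1}\,dx}) = -im\,\frac{r+1}{r-1}\, e^{-im\int_{r_0}^{r}\frac{x+1}{x-1}\,dx}$ and the relation $(r^{2}-1) = \tfrac{r+1}{r-1}\Delta$ valid for $a=M=1$, one arrives at the divergence-form identity
\begin{equation*}
\tX\!\left(\Delta\, w_{\infty,m\ell}^{2}\, e^{-im\int_{r_0}^{r}\frac{x+1}{x-1}dx}\, \tX \check{u}_{m\ell}\right) = e^{-im\int_{r_0}^{r}\frac{x+1}{x-1}dx}\, w_{\infty,m\ell}(r)\, F_{m\ell}(r).
\end{equation*}
Integrating once from $r'$ to $\infty$ with vanishing flux at infinity yields $\tX \check{u}_{m\ell}$, and a second integration from $1$ to $r$ with $\check{u}_{m\ell}(1)=0$ recovers precisely the formula in the statement. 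The regularity $u_{m\ell} \in C^{2}((1,\infty))$ and the identity $\mathcal{L}u_{m\ell} = F_{m\ell}$ then follow from differentiating under the integral sign.

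The nontrivial task is to verify convergence of both integrals near $r=1$ and to extract the pointwise bounds. Convergence at infinity is immediate from \eqref{eq:largerasympw}, which gives $|w_{\infty,m\ell}(r)| \lesssim r^{-1}$, combined with the decay assumed on $F_{m\ell}$. Near $r=1$, Lemma \ref{lm:propfracw} shows that in case (ii) (and in the threshold case $\alpha_{m\ell}=0$) the outer weight $(r'-1)^{-2}\,w_{\infty,m\ell}^{-2}(r')$ is of size $(r'-1)^{-1}$ and is not integrable on its own. The cancellation condition \eqref{eq:condF} forces
\begin{equation*}
I(r') := \int_{r'}^{\infty} w_{\infty,m\ell}(x)\, e^{-im\int_{r_0}^{x}\frac{y+1}{y-1}dy}\, F_{m\ell}(x)\, dx = -\int_{1}^{r'} w_{\infty,m\ell}(x)\, e^{-im\int_{r_0}^{x}\frac{y+1}{y-1}dy}\, F_{m\ell}(x)\, dx,
\end{equation*}
and combined with the bound $|w_{\infty,m\ell}(x)| \lesssim (x-1)^{-1/2}$ this yields $|I(r')| \lesssim (r'-1)^{1/2}$, turning the outer integrand into an integrable $(r'-1)^{-1/2}$. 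In case (i) with $\alpha_{m\ell} > 0$, the homogeneous solution $w_{\infty,m\ell}$ is dominated near $r=1$ by its slower-growing component, which already makes the outer weight integrable without invoking \eqref{eq:condF}; this is why \eqref{eq:condF} is only required when $\Lambda_{+,m\ell}\leq 2m^{2}$.

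Once existence is secured, the pointwise bounds are obtained by inserting the explicit asymptotics of $w_{\infty,m\ell}$ into the double integral. In $r \geq 2$ one applies $|w_{\infty,m\ell}(r)| \leq Cr^{-1}$ and the analogous derivative estimates from \eqref{eq:largerasympw}, reducing \eqref{eq:timeintboundlinft} to straightforward manipulations of the nested integrals together with the assumed weighted $L^{\infty}$-bound on $F_{m\ell}$. In $r \leq 2$, the expansions \eqref{eq:wfreqexpelliptic}--\eqref{eq:estderwexcplusalt} produce the explicit weight $(r-1)^{-3/2+\sqrt{\alpha_{m\ell}}}$ in case (i) and boundedness in case (ii); the bounds for $((r-1)\tX)^{k} u_{m\ell}$ follow by commuting $\tX$ through the integral formula and observing that $(r-1)\tX$ preserves the weights of $w_{\infty,m\ell}^{\pm 1}$ up to bounded factors, while an unweighted $\tX$ creates exactly one factor $(r-1)^{-1}$, consistent with the stated one-derivative loss on $F_{m\ell}$. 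The principal obstacle will be case (ii), where the marginal non-integrability of the outer weight must be overcome via the sharp cancellation in \eqref{eq:condF}; extracting the sharp growth of $\tX u_{m\ell}$ here rests on using the next-to-leading terms in the asymptotic expansions from Lemma \ref{lm:propfracw}, together with the oscillation of the exponential phase $e^{-im\int_{r_0}^{r}(x+1)/(x-1)dx}$, in order to produce the additional half-power of $(r-1)$ from $I(r')$ that the crude bound used in existence would not suffice to deliver.
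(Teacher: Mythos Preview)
Your reduction to the divergence form and the double-integration formula is correct and matches the paper. The treatment of $r\geq 2$ and the convergence analysis in case (i) are also essentially right (though in case (i) it is the \emph{faster}-growing component $B_+(r-1)^{-1/2-\sqrt{\alpha_{m\ell}}}$ that dominates $w_{\infty,m\ell}$ near $r=1$, not the slower-growing one).

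The real gap is in case (ii). Your crude bound $|I(r')|\lesssim (r'-1)^{1/2}$ together with $(r'-1)^{-2}|w_{\infty,m\ell}|^{-2}\lesssim (r'-1)^{-1}$ only yields $|\tX u_{m\ell}|\lesssim (r-1)^{-1}$, not boundedness. You recognize this and propose to gain a half-power from the oscillation of the phase $e^{-im\int (x+1)/(x-1)\,dx}$, but this cannot work: integrating $(x-1)^{-1/2-3im}$ over $[1,r']$ gives exactly $(r'-1)^{1/2-3im}/(1/2-3im)$, of modulus $\sim(r'-1)^{1/2}$, so no extra decay comes from the phase alone. A further issue is that $|w_{\infty,m\ell}|$ need not be bounded below by $c(r-1)^{-1/2}$: when $\alpha_{m\ell}<0$ the exponents $-\tfrac12-im\pm i\sqrt{|\alpha_{m\ell}|}$ have equal real part, and the superposition $B_+ w_+ + B_- w_-$ can oscillate and come arbitrarily close to zero as $r\to 1$, so your bound on the outer weight is not uniform.

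The paper resolves both issues by switching the representation near $r=1$ from $w_{\infty,m\ell}$ to $w_{\pm,m\ell}$, for which $|w_{\pm,m\ell}(r)|\sim (r-1)^{-1/2}$ is controlled from above \emph{and} below. One first checks (using the crude $w_\infty$-estimates on $\check{u}_{\infty,m\ell}$) that both $\check{u}_{\pm,m\ell}$ and $\Delta\, w_{\pm}^2\, \tX\check{u}_{\pm,m\ell}$ vanish at $r=1$, which allows the double integral to be rewritten with $w_{\pm,m\ell}$ in place of $w_{\infty,m\ell}$ and both integrals starting from $r=1$. The sharp bound on $\tX u_{m\ell}$ then comes not from phase oscillation but from Taylor-expanding the source: write $F_{m\ell}(x)=F_{m\ell}(1)+\int_1^x\tX F_{m\ell}$, and compute the contribution of the constant $F_{m\ell}(1)$ to the double integral \emph{explicitly} using the exact asymptotics of $w_{\pm,m\ell}$. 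The resulting leading term of $u_{m\ell}$ is a constant multiple of $F_{m\ell}(1)\cdot(1+O_\infty(r-1))$, whose $\tX$-derivative is manifestly bounded; the remainder involving $\tX F_{m\ell}$ is one order better and is bounded directly. This mechanism is what forces $\sup|\tX F_{m\ell}|$ onto the right-hand side in case (ii).
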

\begin{proof}
We consider $(\mathcal{L}u_{m \ell})=F_{m \ell}$. Denote
\begin{equation*}
\check{u}_{\infty,m \ell}=\frac{u_{m\ell}}{w_{\infty,m \ell}}.
\end{equation*}
We then rewrite \eqref{eq:ellipticcheck} with $\check{u}_{\infty,m \ell}$ replacing $\check{u}_{\pm,m \ell}$ and $w_{\infty,m \ell}$ replacing $w_{\pm,m \ell}$ as the following ODE:
\begin{equation}
\label{eq:ellipticcheck2}
\tX( \Delta  w_{\infty,m \ell}^{2}e^{-im\int_{r_0}^r \frac{x+1}{x-1}\,dx}\tX \check{u}_{\infty,m \ell})=e^{-im\int_{r_0}^r \frac{x+1}{x-1}\,dx}w_{\infty,m \ell}F_{m \ell}=:\check{F}_{\infty,m \ell},
\end{equation}
with $r_0>1$ arbitrary.

We assume that $\lim_{r\to \infty}\widetilde{X}\check{u}_{\infty,m \ell}(r)=0$ and integrate from $r=\infty$ to obtain
\begin{equation*}
 \tX  \check{u}_{\infty,m \ell}(r)=-(r-1)^{-2}w_{\infty,m \ell}^{-2}(r)e^{im\int_{r_0}^r \frac{x+1}{x-1}\,dx}\int_r^{\infty}\check{F}_{\infty,m \ell }(r')\,dr'.
\end{equation*}
Integrating once more, starting from $r=1$ and assuming that $\check{u}_{\infty,m \ell}(1)=0$, we obtain:
\begin{equation*}
\check{u}_{\infty,m \ell}(r)=-\int_{1}^{r}(r'-1)^{-2}w_{\infty,m \ell}^{-2}(r')e^{im\int_{r_0}^{r'} \frac{x+1}{x-1}\,dx}\int_{r'}^{\infty}\check{F}_{\infty, m \ell}(x)\,dx\,dr'.
\end{equation*}
Note that
\begin{equation*}
|X (ru_{m \ell})|=|X (r w_{\infty}(r)\check{u}_{\infty,m \ell})|=|X(e^{imr_*}(1+O(r^{-1})) \check{u}_{\infty,m \ell})|\leq C |(X+im) \check{u}_{\infty,m \ell}|+Cr^{-2}|\check{u}_{\infty,m \ell}|.
\end{equation*}
Furthermore,
\begin{equation*}
|(X+im) \check{u}_{\infty,m \ell}|=\left|\left(\tX+im\left(1-\frac{\mathbbm{h}}{2}\right)\right) \check{u}_{\infty,m \ell}\right|=\left|\left(\tX+imO(r^{-1})\right) \check{u}_{\infty,m \ell}\right|.
\end{equation*}
Hence, when $r\geq 2$, we can estimate for $q\geq 0$
\begin{equation*}
\begin{split}
|X (ru_{m \ell})|\leq &\:Cr^{-1-q}\sup_{r'\geq 2} | r^{1+q} \tX  \check{u}_{\infty,m \ell}(r')|+Cr^{-1-q}\sup_{r'\geq 2}|r^{q}\check{u}_{\infty,m \ell}(r')|\\
\leq &\: Cr^{-1-q}\sup_{r'\geq 2}|r^{1+q} F_{m \ell}(r')|.
\end{split}
\end{equation*}
Suppose $\alpha_{m \ell}:=\Lambda_{+,m \ell}+\frac{1}{4}-2m^2\leq 0$. Then, in fact, in view of the asymptotics of $w_{\infty,m \ell}$ near $r=1$, to conclude that the above integral is well-defined, it is necessary to impose
\begin{equation}
\label{eq:auxcondF}
\int_{1}^{\infty}w_{\infty,m \ell}(r)e^{-im\int_{r_0}^r \frac{x+1}{x-1}\,dx}F_{m \ell}(r)\,dr=0,
\end{equation}
so that we can express:
\begin{equation}
\check{u}_{\infty,m \ell}(r)=\int_{1}^{r}(r'-1)^{-2}w_{\infty,m \ell}^{-2}(r')e^{im\int_{r_0}^{r'} \frac{x+1}{x-1}\,dx}\int_{1}^{r'}\check{F}_{\infty,m \ell}(x)\,dx\,dr',
\end{equation}
and hence,
\begin{equation*}
u_{m \ell}(r)=w_{\infty,m \ell}(r)\int_{1}^{r}(r'-1)^{-2}w_{\infty,m \ell}^{-2}(r')e^{im\int_{r_0}^{r'} \frac{x+1}{x-1}\,dx}\int_{1}^{r'}w_{\infty,m \ell}(x)e^{-im\int_{r_0}^{x} \frac{y+1}{y-1}\,dy}F_{m \ell}(x)\,dx\,dr'.
\end{equation*}
Suppose $0<\alpha_{m \ell}<\frac{1}{4}$. Then we will still impose \eqref{eq:auxcondF}. We can immediately estimate in $r\leq 2$:
\begin{align*}
|\check{u}_{\infty,m \ell}(r)|\leq &\:C(r-1)^{\frac{1}{2}+\sqrt{\alpha_{m \ell}}}\sup_{r'\leq 2}|F_{m \ell}(r')|,\\
|\tX \check{u}_{\infty,m \ell}(r)|\leq &\:C(r-1)^{-\frac{1}{2}+\sqrt{\alpha_{m \ell}}}\sup_{r'\leq 2}|F_{m \ell}(r')|.
\end{align*}
In order to obtain boundedness for $u_{m \ell}$ and $\tX u_{m \ell}$ near $r=1$, we now consider $u_{\pm,m \ell}$. We have that in $r\leq 2$
\begin{equation*}
\begin{split}
|\Delta w_{\pm}^2\tX \check{u}_{\pm,m \ell}(r)|=&\:\left|(r-1)^2w_{\pm}^2\tX \left(\frac{w_{\infty}}{w_{\pm}} \check{u}_{\infty,m \ell}\right)(r)\right|\\
\leq&\: C|B_{\mp }| |\check{u}_{\infty,m \ell}(r)|+C(|B_{+ }|+|B_{- }|)(r-1)^{1-\Re(\sqrt{\alpha_{m \ell}})\mp \Re(\sqrt{\alpha_{m \ell}})} |\tX\check{u}_{\infty,m \ell}(r)|\\
\leq &\: C(|B_{+ }|+|B_{- }|)  (r-1)^{\frac{1}{2}\pm \Re(\sqrt{\alpha_{m \ell}})}\sup_{r'\leq 2}|F_{m \ell}(r')|.
\end{split}
\end{equation*}
Hence, $\Delta w_{\pm}^2\tX \check{u}_{\pm,m \ell}$ vanishes at $r=1$ for $\alpha_{m \ell}<\frac{1}{4}$ and we can integrate \eqref{eq:ellipticcheck2} starting from $r=1$ to obtain:
\begin{equation}
\label{eq:derodepm}
 \tX  \check{u}_{\pm,m \ell}(r)=(r-1)^{-2}w_{\pm,m \ell}^{-2}(r)e^{im\int_{r_0}^r \frac{x+1}{x-1}\,dx}\int_1^r{F}_{\infty,m \ell }(r')w_{\pm,m \ell}(r')e^{-im\int_{r_0}^{r'} \frac{y+1}{y-1}\,dy}\,dr'.
\end{equation}
Since
\begin{equation*}
\begin{split}
|\check{u}_{\pm,m \ell}(r)|=\left|\frac{w_{\infty}}{w_{\pm}} \check{u}_{\infty,m \ell}(r)\right|\leq&\: C(|B_{+ }|+|B_{- }|)(1+(r-1)^{\pm 2\sqrt{\alpha}})\left| \check{u}_{\infty,m \ell}(r)\right|\\
\leq&\: C(|B_{+ }|+|B_{- }|)(r-1)^{\frac{1}{2}+\Re(\sqrt{\alpha})\pm 2\Re(\sqrt{\alpha}) }\sup_{r'\leq 2}|F_{m \ell}(r')|,
\end{split}
\end{equation*}
we have that $\check{u}_{\pm,m \ell}(1)=0$. And hence, we can integrate \eqref{eq:derodepm} to obtain for all $\alpha_{m\ell}<\frac{1}{4}$ an alternative expression for $u_{m \ell}(r)$:
\begin{equation*}
u_{m \ell}(r)=w_{\pm,m \ell}(r)\int_{1}^{r}(r'-1)^{-2}w_{\pm,m \ell}^{-2}(r')e^{im\int_{r_0}^{r'} \frac{x+1}{x-1}\,dx}\int_{1}^{r'}w_{\pm,m \ell}(x)e^{-im\int_{r_0}^{x} \frac{y+1}{y-1}\,dy}F_{m \ell}(x)\,dx\,dr'.
\end{equation*}
The above expression of $u_{m \ell}(r)$ will allow us to obtain more refined estimates for $u_{m \ell}(r)$ and to obtain a bound on $|\tX u_{m \ell}(r)|$.

We first expand:
\begin{equation*}
F_{m \ell}(r)=F_{m \ell}(1)+\int_1^r\tX F_{m \ell}(r')dr'.
\end{equation*}
Hence, we can expand
\begin{multline*}
w_{\pm,m \ell}(x)e^{-im\int_{r_0}^{x} \frac{y+1}{y-1}\,dy}F_{m \ell}(x)\,dx=e^{-im(r_0-1)}(r-1)^{-\frac{1}{2}\mp \sqrt{\alpha_{m \ell}}-3im}(1+O_{\infty}(r-1))\\
\times \left(F_{m \ell}(1)+\int_1^x\tX F_{m \ell}(y)dy\right).
\end{multline*}
We integrate the the above expansion to obtain:
\begin{equation*}
\begin{split}
u_{m \ell}(r)=&\:e^{-im(r_0-1)}(r-1)^{-\frac{1}{2}-im\mp \sqrt{-\alpha}}(1+O((r-1))\\
&\:\times\int_1^r(r'-1)^{-1\pm 2\sqrt{\alpha_{m \ell}}+4im}(1+O_{\infty}(r'-1))\frac{1}{\frac{1}{2}\mp \sqrt{\alpha}-3im}(r'-1)^{\frac{1}{2}\mp \sqrt{\alpha_{m \ell}}-3im}F_{m \ell}(1)\,dr'\\
&\:+w_{\pm,m \ell}(r)\int_{1}^{r}(r'-1)^{-2}w_{\pm,m \ell}^{-2}(r')e^{im\int_{r_0}^{r'} \frac{x+1}{x-1}\,dx}\int_{1}^{r'}w_{\pm,m \ell}(x)e^{-im\int_{r_0}^{x} \frac{y+1}{y-1}\,dy}\int_1^x \tX F_{m \ell}(y)dy\,dx\,dr'\\
=&\: \frac{\frac{1}{2}+im\pm \sqrt{\alpha_{m \ell}}}{\frac{1}{2}\mp \sqrt{\alpha_{m \ell}}-3im}e^{-im(r_0-1)}F_{m \ell}(1)(1+O_{\infty}(r-1))\\
&\:+w_{\pm,m \ell}(r)\int_{1}^{r}(r'-1)^{-2}w_{\pm,m \ell}^{-2}(r')e^{im\int_{r_0}^{r'} \frac{x+1}{x-1}\,dx}\int_{1}^{r'}w_{\pm,m \ell}(x)e^{-im\int_{r_0}^{x} \frac{y+1}{y-1}\,dy}\int_1^x \tX F_{m \ell}(y)dy\,dx\,dr'.
\end{split}
\end{equation*}
From this, we can estimate in $r\leq 2$:
\begin{equation*}
|\tX u_{m \ell}(r)|\leq  C\sup_{r'\leq 2}|\tX F_{m \ell}(r')|.
\end{equation*}
It is straightforward to commute the above estimates with $(r-1)\tX$ and obtain for $r\leq 2$:
\begin{equation*}
\sum_{k\leq 2}|\tX ((r-1)\tX)^ku_{m \ell}(r)|\leq  C\sum_{k\leq 2}\sup_{r'\leq 2}|\tX((r-1)^kF_{m \ell})(r')|.
\end{equation*}

For $\alpha_{m \ell}\geq  \frac{1}{4}$, we will not impose \eqref{eq:auxcondF}, as this would require $F_{m \ell}(1)=0$. Instead, we express
\begin{equation*}
u_{m \ell}(r)=-w_{\infty,m \ell}(r)\int_{1}^{r}(r'-1)^{-2}w_{\infty,m \ell}^{-2}(r')e^{im\int_{r_0}^{r'} \frac{x+1}{x-1}\,dx}\int_{r'}^{\infty}w_{\infty,m \ell}(x)e^{-im\int_{r_0}^{x} \frac{y+1}{y-1}\,dy}F_{m \ell}(x)\,dx\,dr'.
\end{equation*}
Note that for $\alpha_{m \ell}\geq  \frac{1}{4}$
\begin{equation*}
w_{\infty,m \ell}(r)=B_+ (r-1)^{-\frac{1}{2}-\sqrt{\alpha_{m \ell}}}+(|B_+|+|B_-|)O((r-1)^{\frac{1}{2}-\sqrt{\alpha_{m \ell}}}).
\end{equation*}
We first split:
\begin{multline*}
\int_{r'}^{\infty}w_{\infty,m \ell}(x)e^{-im\int_{r_0}^{x} \frac{y+1}{y-1}\,dy}F_{m \ell}(x)\,dx=\int_{r'}^{2}w_{\infty,m \ell}(x)e^{-im\int_{r_0}^{x} \frac{y+1}{y-1}\,dy}F_{m \ell}(x)\,dx\\
+\int_{2}^{\infty}w_{\infty,m \ell}(x)e^{-im\int_{r_0}^{x} \frac{y+1}{y-1}\,dy}F_{m \ell}(x)\,dx
\end{multline*}
and use that
\begin{equation*}
\left|\int_{2}^{\infty}w_{\infty,m \ell}(x)e^{-im\int_{r_0}^{x} \frac{y+1}{y-1}\,dy}F_{m \ell}(x)\,dx\right|\leq C \sup_{r'\geq 2}|rF_{m \ell}|(r').
\end{equation*}

We then write $F_{m \ell}(x)=F_{m \ell}(1)+\int_1^x F_{m \ell}(y)\,dy$ and use the asymptotic properties of $w_{\infty,m \ell}$ near $r=1$ to obtain for $r\leq 2$:
\begin{multline*}
\left|\tX^k\left(\int_{r'}^{2}w_{\infty,m \ell}(x)e^{-im\int_{r_0}^{x} \frac{y+1}{y-1}\,dy}F_{m \ell}(x)\,dx-\frac{B_+}{-\frac{1}{2}+\sqrt{\alpha_{m \ell}} +2im}(r'-1)^{\frac{1}{2}-\sqrt{\alpha_{m \ell}} -2im}F_{m \ell}(1)\right)\right|\\
\leq C(|B_+|+|B_-|) \tX^k(1+(r'-1)^{\frac{3}{2}-\sqrt{\alpha_{m \ell}}})\sup_{r'\leq 2}(|F_{m \ell}(r')|+|\tX  F_{m \ell}(r')|),
\end{multline*}
with $k=0,1$.

Hence, by plugging the above expression into $u_{m \ell}$ as above, we obtain:
\begin{align*}
|u_{m \ell}(r)|\leq &\:C(|B_+|+|B_-|)\sup_{r'}| r'F_{m \ell}(r')|\\
|Xu_{m \ell}(r)|\leq &\:C(|B_+|+|B_-|)(r-1)^{-\frac{3}{2}+\sqrt{\alpha_{m \ell}}}(\sup_{r'}|r'F_{m \ell}(r')|+\sup_{r'\leq 2}|\tX  F_{m \ell}(r')|).
\end{align*}
It is straightforward to commute the above estimates with $(r-1)\tX$ and $((r-1)\tX)^2$ in order to derive in the region $r\leq 2$:
\begin{align*}
\sum_{k\leq 2}|((r-1)\tX)^ku_{m \ell}(r)|\leq &\:C(|B_+|+|B_-|)\sum_{k\leq 2}\sup_{r'}|r' F_{m \ell}(r')|\\
\sum_{k\leq 2}|X((r-1)\tX)^ku_{m \ell}(r)|\leq &\:C(|B_+|+|B_-|)(r-1)^{-\frac{3}{2}+\sqrt{\alpha_{m \ell}}}(\sup_{r'}|F_{m \ell}(r')|+\sup_{r'\leq 2}|\tX  F_{m \ell}(r')|).
\end{align*}
\end{proof}

By combining the above estimates, we arrive at a weighted energy estimate for $u$ along $\Sigma$, which will be important for establishing the finiteness of appropriate initial energy norms for $K^{-1}\phi$ in terms of initial data for $\phi$.

\begin{proposition}
\label{prop:mainestKint}
Let $u_{m \ell}$ with $\ell\geq |m|$ be the functions defined in Proposition \ref{eq:invLlowfreq}. Then
\begin{align*}
u=&\:u^{\flat}+u^{\sharp}\in L^2(\Sigma),\quad \textnormal{with}\\
u^{\flat}(r,\theta,\tphi)=&\: \sum_{|m|\leq \ell\leq L_m} u_{m \ell}(r)S_{m \ell}(\theta)e^{im \tilde{\varphi}},\\
u^{\sharp}(r,\theta,\tphi)=&\: \sum_{L_m< \ell<\infty} u_{m \ell}(r)S_{m \ell}(\theta)e^{im \tilde{\varphi}}
\end{align*}
is well-defined. For $\delta>0$ arbitrarily small, 
\begin{equation*}
\begin{split}
\sum_{k\leq 2}&\int_{\Sigma\cap{\{r\leq 2\}}}(1-r^{-1})^{1+\delta}|\tX((1-r^{-1})\tX)^k u^{\flat}|^2+(1-r^{-1})^{-1+\delta}|((1-r^{-1})\tX)^k u^{\flat}|^2\,d\sigma dr\\
&\:+\int_{\Sigma\cap{\{r\geq 2\}}} |u^{\flat}|^2+r^2 |Xu^{\flat}|^2\,d\sigma dr\\
\leq&\: C(|B_+|+|B_-|+1)\Bigg[\sum_{\substack{k\leq 2\\j\leq 1}}\sup_{r'\leq 2}\int_{\Sigma\cap\{r=r'\}}|\tX^j((r-1)\tX)^kF^{\flat})(r')|^2\,d\sigma\\
+&\:\sup_{r'\geq 2}\int_{\Sigma\cap\{r=r'\}}|r F^{\flat}(r')|^2\,d\sigma\Bigg].
\end{split}
\end{equation*}
and
\begin{equation*}
\begin{split}
\sum_{k\leq 2}\int_{\Sigma\cap{\{r\leq 2\}}}&(1-r^{-1})^{1-\epsilon}|\tX((1-r^{-1})\tX)^k u^{\sharp}|^2+|\snabla_{\s^2}((1-r^{-1})\tX)^k u^{\sharp}|^2\,d\sigma dr\\
&+\int_{\Sigma\cap{\{r\geq 2\}}} |u^{\sharp}|^2+r^2 |Xu^{\sharp}|^2\,d\sigma dr\\
\leq&\: C\sum_{k\leq 1}\int_{\Sigma\cap\{r\leq 2\}} (1-r^{-1})^{1-\epsilon}|\tX((1-r^{-1})\tX )^kF^{\sharp}|^2+|F^{\sharp}|^2\,d\sigma dr\\
+&\:C\int_{\Sigma\cap\{r\geq 2\}} |F^{\sharp}|^2+r^2 |\tX F^{\sharp}|^2\,d\sigma dr.
\end{split}
\end{equation*}
\end{proposition}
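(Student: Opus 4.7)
The proof naturally splits along the decomposition $u=u^{\flat}+u^{\sharp}$, with the low-frequency piece (finite sum) estimated by the pointwise bounds from Proposition \ref{eq:invLlowfreq}, and the high-frequency piece (infinite tail) estimated by combining the weighted elliptic inequalities of Propositions \ref{prop:elliptichighfreq}, \ref{prop:elliptichonearhor}, and \eqref{eq:elliptichonearinf2}.

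\textbf{Low-frequency estimate.} Since the sum defining $u^{\flat}$ runs over the finite range $|m|\leq \ell\leq L_m$, it suffices to obtain the bound for each $u_{m\ell}$ individually, absorbing the $L_m$-dependent factor into the constant $C$. In $r\leq 2$, I apply the pointwise bounds of parts (i) and (ii) of Proposition \ref{eq:invLlowfreq}, square them, multiply by the weight $(1-r^{-1})^{\pm 1+\delta}$, and integrate in $r$. The delicate point is integrability at $r=1$: in case (i) the worst combination $(r-1)^{-3+2\sqrt{\alpha_{m\ell}}}\cdot(1-r^{-1})^{1+\delta}\sim (r-1)^{-2+2\sqrt{\alpha_{m\ell}}+\delta}$ is integrable precisely because $\alpha_{m\ell}\geq \tfrac14$ in this case so $2\sqrt{\alpha_{m\ell}}\geq 1$, and $\delta>0$. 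To control $(1-r^{-1})^{-1+\delta}|((1-r^{-1})\tX)^k u_{m\ell}|^2$, I use that $u_{m\ell}(1)=0$ (a consequence of $\check{u}_{\infty,m\ell}(1)=0$ in the construction) and invoke a Hardy inequality anchored at $r=1$ to reduce to the $\tX$-derivative estimate already obtained. In $r\geq 2$, the pointwise bound \eqref{eq:timeintboundlinft} with $q>0$ immediately yields the required $L^2$-control after integration in $r$. The angular integration on $\s^2$ is handled by orthonormality of $\{S_{m\ell}(\theta)e^{im\tphi}\}$, which converts sup-norms on $\s^2$ on the right-hand side into $L^2(\s^2)$-norms.

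\textbf{High-frequency estimate.} I first work with the partial sum $u^{\sharp}_N:=\sum_{L_m<\ell\leq N}u_{m\ell}S_{m\ell}e^{im\tphi}$, which satisfies $\mathcal{L}u_N^{\sharp}=F_N^{\sharp}$, and show the estimate holds uniformly in $N$. Proposition \ref{prop:elliptichighfreq} (with $p$ chosen suitably) combined with a Hardy inequality gives baseline $L^2$-control of $u_N^{\sharp}$, $\tX u_N^{\sharp}$, and $\widetilde{\snabla}_{\s^2}u_N^{\sharp}$ in terms of $\int |F_N^{\sharp}|^2\,d\sigma dr$, which in particular accounts for the $k=0$ term on the left-hand side. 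Proposition \ref{prop:elliptichonearhor} then supplies the higher-order near-horizon estimates for $\tX((1-r^{-1})\tX)^k u_N^{\sharp}$ and $\widetilde{\snabla}_{\s^2}((1-r^{-1})\tX)^k u_N^{\sharp}$ with $k\leq 2$, after a cut-off in $r\leq r_I$, bounded by the near-horizon contribution of $F_N^{\sharp}$. The complementary estimate \eqref{eq:elliptichonearinf2} with a cutoff in $r\geq r_I$ controls $X(r^2X)^k u_N^{\sharp}$ in the asymptotic region, yielding the $r^2|Xu_N^{\sharp}|^2$ term via $\int r^2|\tX F^{\sharp}|^2$. Adding these three estimates gives the full bound for $u^{\sharp}_N$ with a constant independent of $N$. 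Applying the same estimate to differences $u^{\sharp}_N-u^{\sharp}_{N'}$ shows the partial sums are Cauchy in the relevant weighted norm, giving convergence to the $u^{\sharp}$ defined in the statement.

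\textbf{Main obstacle.} The most delicate step will be the near-horizon treatment of $u^{\flat}$: matching the weight $(1-r^{-1})^{-1+\delta}$ on the left-hand side requires absorbing one negative power of $(r-1)$ via a Hardy inequality, which in turn demands careful tracking of the vanishing orders of $u_{m\ell}$ and its derivatives at $r=1$. These follow from the explicit representation in the proof of Proposition \ref{eq:invLlowfreq} together with the asymptotic behaviour of $w_{\pm,m\ell}$ from Lemma \ref{lm:propfracw}, but the bookkeeping must be done separately in the three distinct regimes $\alpha_{m\ell}<0$, $0<\alpha_{m\ell}<\tfrac14$, and $\alpha_{m\ell}\geq \tfrac14$. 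The hypothesis $\alpha_{m\ell}\neq 0$ (which is imposed throughout \S \ref{sec:precstatthm}) is essential for the Hardy inequality to close, since in the borderline case $\alpha_{m\ell}=0$ the logarithmic corrections in \eqref{eq:wfreqexpexcplus}--\eqref{eq:estderwexcplus} would destroy the integrability margin.
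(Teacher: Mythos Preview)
Your overall strategy matches the paper's: the $u^{\flat}$ estimate follows from the pointwise bounds in Proposition~\ref{eq:invLlowfreq}, and the $u^{\sharp}$ estimate from combining Propositions~\ref{prop:elliptichighfreq} and~\ref{prop:elliptichonearhor} (your additional invocation of \eqref{eq:elliptichonearinf2} for the far region is a sensible supplement, since the $r^2|\tX F^{\sharp}|^2$ term on the right-hand side does call for an estimate with growing $r$-weights).

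However, your treatment of the zeroth-order near-horizon term for $u^{\flat}$ contains an error: the claim $u_{m\ell}(1)=0$ is false. You correctly note that $\check u_{\infty,m\ell}(1)=0$, but $u_{m\ell}=w_{\infty,m\ell}\,\check u_{\infty,m\ell}$ with $|w_{\infty,m\ell}|\sim(r-1)^{-1/2}$ near $r=1$, so the product is a $0\cdot\infty$ indeterminate form. In fact the expansion derived in the proof of Proposition~\ref{eq:invLlowfreq} (the case $\alpha_{m\ell}<\tfrac{1}{4}$) shows that $u_{m\ell}(1)$ equals a nonzero constant multiple of $F_{m\ell}(1)$, so the Hardy inequality you propose cannot be anchored at $r=1$. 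The good news is that this step is unnecessary: parts~(i) and~(ii) of Proposition~\ref{eq:invLlowfreq} already give $|((r-1)\tX)^k u_{m\ell}|\leq C$ uniformly for $r\leq 2$, and the weight $(1-r^{-1})^{-1+\delta}\sim(r-1)^{-1+\delta}$ is integrable on $(1,2)$ for any $\delta>0$, so the left-hand side is bounded by direct integration of the pointwise estimate. Your identified ``main obstacle'' is therefore a non-issue, and the case-splitting into three $\alpha_{m\ell}$-regimes that you anticipate is already absorbed into the two cases of Proposition~\ref{eq:invLlowfreq}; the paper's one-line proof for $u^{\flat}$ reflects this.
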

\begin{proof}
The estimate for $u^{\flat}$ follows directly from the pointwise estimates in Proposition \ref{eq:invLlowfreq}. The estimates for $u^{\sharp}$ are the result of combining Proposition \ref{prop:elliptichighfreq} with $p=1$ with Proposition \ref{prop:elliptichonearhor}.
\end{proof}

\subsection{An inhomogeneous wave equation}
\label{sec:constKinv}
In this section, we will apply the elliptic theory from \S \ref{eq:invLlowfreq} to construct $K$-integrals of solutions to appropriate inhomogeneous wave equations.

We start by considering the homogeneous wave equation \eqref{eq:waveeq} with $|a|=M$:
\begin{equation*}
\square_{g_{M,\pm |M|}}\phi=0.
\end{equation*}

Recall that $\alpha_{m \ell}=\Lambda_{+,m \ell}+\frac{1}{4}-2m^2$. We will consider in this section $m,\ell\in \Z$ such that \eqref{eq:asmsphev} holds, i.e.\
\begin{equation*}
\alpha_{m \ell}\neq 0.
\end{equation*}

Recall the definitions of $\Pi$ and $\hphi$ from \S \ref{sec:precstatthm}. Then $\hphi$ satisfies the following inhomogeneous wave equation:
\begin{equation}
\label{eq:waveeqwPi}
\square_g\hphi=-\square_g\Pi.
\end{equation}

In Proposition \ref{prop:decayPi} below, we establish suitable decay of $\square_g\Pi$ in $\tau$ and $r$ that will be required to be able to apply the energy decay estimates from \S \ref{sec:edecay} to the time integral $K^{-1}\hphi$ that we will construct in \S \ref{sec:constKinvdata}.

\begin{proposition}
\label{prop:decayPi}
Let $K,L\in \N_0$. Then there exists a $C=C(K,N,m)>0$, such that for all $k\leq K$ and $l\leq L$:
\begin{align}
\label{eq:boxPi1}
\sup_{(\theta,\widetilde{\varphi})\in \s^2}|((r-1)\tX)^kK^l\square_g\Pi|(\tau,r,\theta,\widetilde{\varphi})\leq &\:C\left((\tau+1)(r-1)+4\right)^{-\frac{1}{2}} (\tau+1)^{-1-\frac{1}{2}\nu_m-l}\quad \textnormal{for $r\leq 2$},\\
\label{eq:boxPi2}
\sup_{(\theta,\widetilde{\varphi})\in \s^2} |(Xr^2)^kK^l\square_g\Pi|(\tau,r,\theta,\widetilde{\varphi})\leq &\:Cr^{-3}(1+\tau)^{-2-l}\quad \textnormal{for $r\geq 2$},
\end{align}
with $X=\tX-im\frac{\mathbbm{h}}{2}$ and
\begin{equation*}
\nu_m=\min \left( \{\alpha_{m \ell}\,|\, \alpha_{m \ell}>0\}\cup\{1\}\right).
\end{equation*}
\end{proposition}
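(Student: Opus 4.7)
The plan is to exploit the fact that $\Pi$ is assembled from products of $K$-invariant stationary solutions and time-dependent profiles, and to compute $\square_g$ explicitly on such products. Write $\Pi_{m\ell} = \mathfrak{h}_{m\ell}\sum_j c_j w_j(r) g_j(\tau,r)$, where $w_j\in\{w_{\infty,m\ell},\widetilde{w}_{\infty,m\ell}\}$ and $g_j\in\{f^{m\ell}_{\pm},f^{m\ell}_{\pm,\mathrm{exp}}\}$, and set $v_j=w_j(r)e^{im\tphi}S_{m\ell}(\theta)$. Since $v_j$ is $K$-invariant with $\mathcal{L}v_j=0$ (equivalently $\rho^2\square_g v_j=0$; see \S \ref{sec:statsol}), applying \eqref{eq:waveeqtildeX} to $g_j v_j$ and collecting the terms in which all derivatives fall on $v_j$ (which then vanish) yields the identity
\begin{multline*}
\rho^2\square_g(g_j v_j)=(\tX g_j)\bigl[2\Delta\tX v_j-(r^2-1)im\,v_j\bigr]+[\tX(\Delta\tX g_j)]v_j\\
+2(r^2+1-\hfol\Delta)\bigl[(Kg_j)\tX v_j+(\tX Kg_j)v_j\bigr]
+[a^2\sin^2\theta+\hfol(\Delta\hfol-2(r^2+1))](K^2g_j)v_j\\
+im\bigl[2+\hfol(r^2-1)-\sin^2\theta\bigr](Kg_j)v_j+\bigl[2r-\tfrac{d}{dr}(\Delta\hfol)\bigr](Kg_j)v_j.
\end{multline*}
The right-hand side is linear in first- and second-order derivatives of $g_j$ in $(\tau,r)$ and in $v_j,\tX v_j$; applying $K^l\bigl((r-1)\tX\bigr)^k$ preserves this structure.

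The next step is a careful bookkeeping of derivative bounds of $g_j$. Writing $U=\tau+4/(r-1)+1$ and $V=\tau+1$, the crucial algebraic identity is
$$(r-1)^{-1}U^{-1}=\bigl((\tau+1)(r-1)+4\bigr)^{-1},$$
(with the obvious analogue for $f^{m\ell}_{\pm,\mathrm{exp}}$ using $4/(r-1)+e^{r-1}$). Together with $\partial_r U=-4/(r-1)^2$, this yields, e.g., $|\Delta\tX g_j|\leq C U^{-3/2}V^{-1/2}|U|^{-\mathrm{Re}\sqrt{\alpha_{m\ell}}}|V|^{-\mathrm{Re}\sqrt{\alpha_{m\ell}}}$ and $|\tX(\Delta\tX g_j)|\leq C(r-1)^{-2}U^{-5/2}V^{-1/2}|U|^{-\mathrm{Re}\sqrt\alpha}|V|^{-\mathrm{Re}\sqrt\alpha}$. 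Combining with the pointwise asymptotics from Lemma \ref{lm:propfracw} and the expansions \eqref{eq:wfreqexpelliptic}--\eqref{eq:largerasympw}, which give $|w_j|\lesssim(r-1)^{-1/2-\mathrm{Re}\sqrt\alpha}$ and $|\tX w_j|\lesssim(r-1)^{-3/2-\mathrm{Re}\sqrt\alpha}$ near $r=1$, the combination $\Delta\tX v_j-(r^2-1)im\,v_j$ picks up an extra factor $(r-1)^{1/2-\mathrm{Re}\sqrt\alpha}$. Translating all the factors of $(r-1)^{-1}$ into $U^{-1}$ via the identity above, each term in $\rho^2\square_g(g_j v_j)$ is bounded by
$$C\bigl((\tau+1)(r-1)+4\bigr)^{-3/2}(\tau+1)^{-1/2-\nu_m/2}\leq C\bigl((\tau+1)(r-1)+4\bigr)^{-1/2}(\tau+1)^{-1-\nu_m/2},$$
the last inequality using $\nu_m\leq 1$. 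Commutation with $K^l$ contributes $l$ further powers of $V^{-1}=(\tau+1)^{-1}$, and with $((r-1)\tX)^k$ it preserves the structure because $(r-1)\tX U=-4/(r-1)$ is bounded by $U$. This establishes \eqref{eq:boxPi1}.

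For \eqref{eq:boxPi2}, use the far-field expansions $w_{\infty,m\ell}(r)=r^{-1}e^{imr_*}+O(r^{-2})$ and $\widetilde{w}_{\infty,m\ell}(r)=r^{-1}+O(r^{-2})$. The contributions from $f^{m\ell}_{\pm,\mathrm{exp}}$ decay exponentially at $r=\infty$ because of the $e^{r-1}$ inside their base, so they are immediately $\lesssim r^{-N}(1+\tau)^{-2-l}$ for any $N$. For the $f^{m\ell}_{\pm}$ contributions, on $r\geq 2$ we have $\partial_rU=O(1)$, so $|\tX g_j|\lesssim U^{-3/2-\mathrm{Re}\sqrt\alpha}V^{-1/2-\mathrm{Re}\sqrt\alpha}\lesssim(\tau+1)^{-2}$, and likewise $|Kg_j|\lesssim(\tau+1)^{-2}$; the formula in Step 1 then produces a factor $r^{-3}$ via $\rho^{-2}\sim r^{-2}$ and the $r^{-1}$ decay of $v_j$. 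Replacing $(r-1)\tX$ by $Xr^2$ in the commutation step requires checking that $Xr^2$ commutes with $\square_g$ modulo lower-order operators on the asymptotic scale, which is a routine computation.

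The main technical obstacle will be the bookkeeping near $r=1$: one must ensure that each additional factor of $(r-1)^{-1}$ arising from an $r$-differentiation of $U$ is balanced by a corresponding $U^{-1}$ (or $V^{-1}$), so that the powers of $((\tau+1)(r-1)+4)^{-1}$ accumulate rather than degenerate. This is made delicate by the fact that $\Pi$ is split into four cases depending on the sign of $\alpha_{m\ell}$ and on whether $\widetilde{B}_\pm$ vanish; in the $\alpha_{m\ell}>0$ cases the factors $U^{-\sqrt{\alpha}}$ yield additional decay (giving a cleaner bound), whereas the $\alpha_{m\ell}<0$ cases are the ones saturating $\nu_m$. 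The exp variant $f^{m\ell}_{\pm,\mathrm{exp}}$ requires only the verification that the $e^{r-1}$ perturbation to the phase $4/(r-1)+1$ does not spoil the identity linking $(r-1)^{-1}$ and $U^{-1}$ in a neighbourhood of $r=1$, which follows since $e^{r-1}=1+O(r-1)$ there.
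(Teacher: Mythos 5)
Your decomposition of $\Pi_{m\ell}$ into products $w_j(r)g_j(\tau,r)$ with stationary $v_j = w_je^{im\tphi}S_{m\ell}$, and the product-rule identity for $\rho^2\square_g(g_jv_j)$ obtained from \eqref{eq:waveeqtildeX} together with $\mathcal{L}v_j=0$, is exactly the structure the paper exploits (via Propositions \ref{prop:squarePial0}--\ref{prop:inhomestrgtr2}). However, there is a genuine gap in the estimation step: you claim that \emph{each} individual term on the right-hand side of your product-rule identity is bounded by $C\bigl((\tau+1)(r-1)+4\bigr)^{-3/2}(\tau+1)^{-1/2-\nu_m/2}$, and then invoke the inequality
\[
\bigl((\tau+1)(r-1)+4\bigr)^{-3/2}(\tau+1)^{-1/2-\nu_m/2}\leq \bigl((\tau+1)(r-1)+4\bigr)^{-1/2}(\tau+1)^{-1-\nu_m/2}.
\]
This inequality is equivalent to $(\tau+1)(r-1)+4\geq(\tau+1)^{1/2}$, which fails near $r=1$ once $\tau$ is large (at $r=1$ it reads $4\geq(\tau+1)^{1/2}$, i.e.\ $\tau\leq 15$). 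Moreover, the per-term bound itself cannot carry the factor $(\tau+1)^{-\nu_m/2}$: for the relevant modes with $\alpha_{m\ell}<0$ (the ones saturating the estimate), $\Re\sqrt{\alpha_{m\ell}}=0$, so the factors $U^{-\sqrt\alpha}$ and $V^{-\sqrt\alpha}$ are unimodular and contribute no extra decay, and the naked sizes of your terms are $U^{-3/2}V^{-1/2}$ and $U^{-5/2}V^{-1/2}$ (with $U=(\tau+1)(r-1)+4$, $V=\tau+1$). In the regime $r-1\lesssim(\tau+1)^{-1}$ these are comparable to $V^{-1/2}$, a full power of $\tau$ worse than the target $U^{-1/2}V^{-3/2}\sim V^{-3/2}$.

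The missing ingredient is the cancellation among the leading-order parts of the terms containing $\tX(\Delta\tX g_j)v_j$, $(\tX g_j)[2\Delta\tX v_j-(r^2-1)im\,v_j]$, and $2(r^2+1-\hfol\Delta)[(K\tX g_j)v_j+(Kg_j)\tX v_j]$ (the $J_1$, $J_2$, $J_3$ of Proposition \ref{prop:squarePial0}). The paper computes the explicit constants in front of $U^{-5/2}V^{-1/2}$ and $U^{-3/2}V^{-1/2}$ and shows these sum to zero, leaving precisely $O(1)\,U^{-1/2}V^{-3/2}$; the structure of $f^{m\ell}_\pm$ (in particular the specific factor $4$ in $\tau+4/(r-1)+1$, which encodes the characteristics of the near-horizon equation) is designed to produce this cancellation. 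Without observing it, the bound \eqref{eq:boxPi1} is unreachable. An analogous cancellation is required for \eqref{eq:boxPi2}: on $r\geq 2$, the terms involving $(Kg_j)\tX v_j$ and $(Kg_j)v_j$ individually carry prefactors of order $r^2$, so with $v_j\sim r^{-1}$ and $Kg_j\sim(\tau+1)^{-2}$ each is of size $r(\tau+1)^{-2}$, two powers of $r$ worse than the target $r^{-1}(\tau+1)^{-2}$ for $\rho^2\square_g\Pi$; the paper's $J_4+J_5$ cancellation is what brings these down. Your Step-1 identity is the right starting point, but the argument does not close without tracking and cancelling the leading coefficients.
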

\begin{proof}
We apply Proposition \ref{prop:squarePial0} and \ref{prop:inhomagtr0} to estimate $|(\tX^kK^l\square_g\Pi)_{m \ell}|^2$ and then sum over $\ell$ to obtain the estimate in $r\leq 2$. Similarly, we apply Proposition \ref{prop:inhomestrgtr2} to estimate $|(\tX^kK^l\square_g\Pi)_{m \ell}|^2$ and then sum over $\ell$ to obtain the estimate in $r\geq 2$.
\end{proof}

\subsection{Construction of $K$-integral initial data}
\label{sec:constKinvdata}
In this section, we consider the equation \eqref{eq:waveeqwPi} and we apply $\mathcal{L}^{-1}$ to appropriate functions of the initial data for $\hphi$, in order to construct the initial data along $\Sigma_0$ that evolves into the $K$-integral $K^{-1}\hphi$.

Let $\widetilde{\phi}$ be a solution to
\begin{equation}
\label{eq:waveeqinhomtimeint}
\square_g \widetilde{\phi}= -\int_{\tau}^{\infty}\square_g\Pi(\tau',r,\theta,\widetilde{\varphi})\,d\tau'
\end{equation}
such that $K\widetilde{\phi}=\phi$. Then we must have that for all $\tau\geq 0$:
\begin{equation*}
\begin{split}
\mathcal{L}\widetilde{\phi}|_{\Sigma_{\tau}}=&-2(r^2+1-\hfol \Delta)\tX(\phi-\Pi)|_{\Sigma_{\tau}}-2\left(1+\frac{1}{2}\hfol (r^2-1)-\frac{1}{2}\sin^2\theta\right)\Phi(\phi-\Pi)|_{\Sigma_{\tau}}\\
&-\left(\hfol(\Delta \hfol-2(r^2+1))-\sin^2\theta\right)K(\phi-\Pi)|_{\Sigma_{\tau}}+\left(2r-\frac{d}{dr}(\Delta \hfol)\right) (\phi-\Pi)|_{\Sigma_{\tau}}\\
&\:-\int_{\tau}^{\infty}\square_g\Pi (\tau',r,\theta,\varphi)\,d\tau'.
\end{split}
\end{equation*}
We define:
\begin{align*}
F_{\rm hom}[\phi]:=&-2(r^2+1-\hfol \Delta)\tX\phi|_{\Sigma_{0}}-2\left(1+\frac{1}{2}\hfol (r^2-1)-\frac{1}{2}\sin^2\theta\right)\Phi \phi|_{\Sigma_{0}}\\
&-\left(\hfol(\Delta \hfol-2(r^2+1))-\sin^2\theta\right)K\phi|_{\Sigma_{0}}+\left(2r-\frac{d}{dr}(\Delta \hfol)\right) \phi |_{\Sigma_{0}},\\
F_{\Pi}:=& 2(r^2+1-\hfol \Delta)\tX \Pi|_{\Sigma_{0}}+2\left(1+\frac{1}{2}\hfol (r^2-1)-\frac{1}{2}\sin^2\theta\right)\Phi \Pi|_{\Sigma_{0}}\\
&\:+\left(\hfol(\Delta \hfol-2(r^2+1))-\sin^2\theta\right)K\Pi|_{\Sigma_{0}}-\left(2r-\frac{d}{dr}(\Delta \hfol)\right) \Pi|_{\Sigma_{0}}-\int_{0}^{\infty}\square_g\Pi (\tau',r,\theta,\tphi)\,d\tau'
\end{align*}
and write $F[\phi]:=F_{\rm hom}[\phi]+F_{\Pi}$. Then
\begin{equation}
\label{eq:maineqindataKint}
\mathcal{L}\widetilde{\phi}|_{\Sigma_0}=F[\phi].
\end{equation}

In view of \eqref{eq:maineqindataKint}, we consider the following data that we will evolve into a solution $K^{-1}\phi$ to \eqref{eq:waveeqinhomtimeint} with $K(K^{-1} \phi)=\phi$ in a suitable spacetime region:
\begin{equation}
\label{eq:idKint}
(\widetilde{\phi}|_{\Sigma_0},(K\widetilde{\phi})|_{\Sigma_0})=(\chi_{*}\mathcal{L}^{-1}(F[\phi]),\phi|_{\Sigma_0}),
\end{equation}
where $\chi_{*}$ is a smooth cut-off function in $r$ such that $\chi_*(r)=1$ for $r-1\geq r_{\rm initial}$ and $\chi_*(r)=0$ for $r-1\leq \frac{1}{2}(r_{\rm initial}-1)$, with $r_{\rm initial}-1>0$ arbitrarily small. The cut-off function $\chi_*$ will ensure that the initial energies are finite and we will be able to interpolate additional growth of initial energies of $\mathcal{L}^{-1}(F[\phi])$ as $r\downarrow 1$ into growth in $\tau$.

We let $\widetilde{\phi}$ be the solution to the initial value problem corresponding to the initial data \eqref{eq:idKint} and the equation:
\begin{align}
\label{eq:waveeqinhomcutoff}
\square_g\widetilde{\phi}=&\:G,\\
\label{eq:choiceG}
G(\tau,r,\theta,\widetilde{\varphi})=&\:-\xi_*(\tau)\int_{\tau}^{\infty}\square_g\Pi(\tau',r,\theta,\widetilde{\varphi})\,d\tau',
\end{align}
with $\xi_*(\tau)$ a smooth cut-off function, satisfying $\xi_*(\tau)=1$ for $\tau\leq \tau_*$ and $\xi_*(\tau)=0$, with $\tau_*\geq 1$ arbitrarily large.

We first verify the assumptions \eqref{eq:assmG}, \eqref{eq:decayassmG1} and \eqref{eq:decayassmG2} on $G$ and establish $\tau_*$-dependent bounds on the integrals involving $G$ in $D_{1,{\rm inhom}}[\widetilde{\phi}]$ and $D_{2,{\rm inhom}}[\widetilde{\phi}]$.
			\begin{lemma}
			\label{lm:estGcutoff}
			Let $G$ be defined by \eqref{eq:choiceG}. Then $G$ satisfies \eqref{eq:assmG}, \eqref{eq:decayassmG1} and \eqref{eq:decayassmG2}. Furthermore, for $\delta>0$ arbitrarily small, there exists a constant $C>0$ (independent of $\tau_*$) such that
			\begin{align}
			\label{eq:Kinvinhomest1}
			D_{1,{\rm inhom}}[\widetilde{\phi}]-D_{1,{\rm hom}}[\widetilde{\phi}]\leq C\tau_*^{1-\nu_m+\delta}|{\mathfrak{h}}_{m \ell}|^2,\\
			\label{eq:Kinvinhomest2}
			D_{2,{\rm inhom}}[\widetilde{\phi}]-D_{2,{\rm hom}}[\widetilde{\phi}]\leq C\tau_*^{1-\nu_m+\delta}|{\mathfrak{h}}_{m \ell}|^2
			\end{align}
			\end{lemma}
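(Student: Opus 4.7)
The strategy is to reduce everything to the pointwise bounds on $\square_g\Pi$ provided by Proposition~\ref{prop:decayPi}, using the explicit structure $G(\tau,r,\theta,\tphi) = -\xi_*(\tau)\,H(\tau,r,\theta,\tphi)$ with $H(\tau,\cdot)=\int_\tau^\infty \square_g\Pi(\tau',\cdot)\,d\tau'$. The key observation is that $KH=-\square_g\Pi$, while the tangential derivatives $\tX, \snabla_{\s^2}, \Phi$ commute with the $\tau$-integral. Therefore, by Leibniz,
\begin{equation*}
K^l\tZ^p G = -\sum_{j=0}^{l}\binom{l}{j}K^{j}(\xi_*)\,K^{l-j}\tZ^p H,
\end{equation*}
where $K^{l-j}\tZ^p H$ for $l-j\geq 1$ equals $-\tZ^p K^{l-j-1}\square_g\Pi$, and for $l-j=0$ equals $\int_\tau^\infty \tZ^p\square_g\Pi\,d\tau'$. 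Integrating the pointwise bounds \eqref{eq:boxPi1}--\eqref{eq:boxPi2} in $\tau'$ (which is legal because $\nu_m>0$) gives uniform-in-$\tau$ pointwise bounds for all components of $K^l\tZ^p G$, with a polynomial decay $(1+\tau)^{-\nu_m/2-l}$ near the horizon (sharpened by the factor $((1+\tau)(r-1)+4)^{-1/2}$) and $r^{-3}(1+\tau)^{-1-l}$ away from the horizon. The derivatives $K^j(\xi_*)$ with $j\geq 1$ are supported in $[\tau_*-1,\tau_*]$ and are $O(1)$ pointwise.

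Once these pointwise bounds are in hand, the integrability assumptions \eqref{eq:assmG}, \eqref{eq:decayassmG1} and \eqref{eq:decayassmG2} follow essentially for free: the cutoff $\xi_*$ ensures that every $\tau$-integral is over a bounded interval $[0,\tau_*]$, so weighted $L^2$ integrability reduces to verifying finiteness on a single time slice, which is immediate from the pointwise bounds (the $r$-integrals converge near $r=1$ because the weights $(1-r^{-1})^{1-\epsilon}$ and $(1-r^{-1})^{-1+\epsilon}$ are locally integrable, and near $r=\infty$ because the $r^{-3}$ decay dominates).

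For the quantitative bounds \eqref{eq:Kinvinhomest1} and \eqref{eq:Kinvinhomest2}, the proportionality of $\Pi$ (and hence $\square_g\Pi$) to $\mathfrak{h}_{m\ell}$ translates Proposition~\ref{prop:decayPi} into squared bounds carrying $|\mathfrak{h}_{m\ell}|^2$. I then plug these squared bounds into each of the integrals appearing in $D_{1,\rm inhom}-D_{1,\rm hom}$ and $D_{2,\rm inhom}-D_{2,\rm hom}$. The dominant contribution is the near-horizon piece
\begin{equation*}
\int_0^{\tau_*}\int_{\Sigma_{\tau}\cap\{r\leq 2\}}(1-r^{-1})^{-1+\epsilon}(1+\tau)|\tZ^l K G|^2\,d\sigma dr\,d\tau,
\end{equation*}
on which one substitutes $s=(1+\tau)(r-1)$ so that the $r$-integral decouples and the remaining $\tau$-integral becomes $\int_0^{\tau_*}(1+\tau)^{-\nu_m-\epsilon}\,d\tau \lesssim \tau_*^{1-\nu_m+\delta}$; all other terms in $D_{i,\rm inhom}-D_{i,\rm hom}$ contribute strictly less because either their time decay is faster (thanks to additional $K$-derivatives) or their $r$-weights are less singular. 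The contributions from the cutoff-supported terms $K^j(\xi_*)\int_\tau^\infty \square_g\Pi$ contribute at worst $O(|\mathfrak{h}_{m\ell}|^2)$, which is absorbed into $\tau_*^{1-\nu_m+\delta}|\mathfrak{h}_{m\ell}|^2$.

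The main obstacle is bookkeeping: one must verify that \emph{every} integrand appearing in the definitions of $D_{1,\rm inhom}$ and $D_{2,\rm inhom}$ (including the $r$-weighted terms at large $r$ such as $r^{3-\eta}(1+\tau)^2|rKG|^2$ and $r^{2-\eta}(r^\eta\tau+r)|L(r^3KG)|^2$) is controlled by the single bound $\tau_*^{1-\nu_m+\delta}|\mathfrak{h}_{m\ell}|^2$. The large-$r$ integrals are actually the easier ones, since \eqref{eq:boxPi2} provides genuine $r^{-3}$ decay and $(1+\tau)^{-2}$ time decay of $\square_g\Pi$; the resulting integrands are integrable in spacetime and contribute $O(|\mathfrak{h}_{m\ell}|^2)$ uniformly in $\tau_*$. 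What requires care is the region $r\leq 2$ for the most singular weight $(1-r^{-1})^{-1+\epsilon}$, where one uses the $\delta$ to absorb a logarithmic term produced in the change of variables and the fact that the derivatives of $\xi_*$ are uniformly bounded and supported in a unit time interval.
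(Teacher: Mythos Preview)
Your overall strategy is correct and matches the paper's (very terse) proof: integrate the pointwise bounds of Proposition~\ref{prop:decayPi} in $\tau'$ to bound $H=\int_\tau^\infty\square_g\Pi\,d\tau'$, use $KH=-\square_g\Pi$ and the Leibniz rule for the cutoff, and then feed the resulting pointwise bounds into every term of $D_{i,\rm inhom}-D_{i,\rm hom}$. The paper does exactly this, replacing your change of variables $s=(1+\tau)(r-1)$ by the equivalent interpolation
\[
\left((\tau+1)(r-1)+4\right)^{-1/2}\leq C(r-1)^{-\frac12+\frac{\delta}{2}}(1+\tau)^{-\frac12+\frac{\delta}{2}},
\]
after which all integrals separate in $r$ and $\tau$.

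There is, however, a bookkeeping slip in your identification of the dominant term. The integral you single out,
\[
\int_0^{\tau_*}\!\!\int_{\{r\leq 2\}}(1-r^{-1})^{-1+\epsilon}(1+\tau)\,|\tZ^l KG|^2\,d\sigma\,dr\,d\tau,
\]
is in fact \emph{uniformly bounded} in $\tau_*$: the main part of $KG$ is $\xi_*\square_g\Pi$, which by \eqref{eq:boxPi1} carries the faster decay $(1+\tau)^{-1-\nu_m/2}$ (not $(1+\tau)^{-\nu_m/2}$), and after your substitution the $\tau$-integrand is $(1+\tau)^{-1-\nu_m-\epsilon}$, not $(1+\tau)^{-\nu_m-\epsilon}$. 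The term that actually produces $\tau_*^{1-\nu_m+\delta}$ is the one with \emph{no} $K$-derivative on $G$ and the growing time weight, namely
\[
\int_0^{\tau_*}\!\!\int_{\{r\leq 2\}}(1+\tau)^{1+\epsilon}(1-r^{-1})^{1-\epsilon}|G|^2\,d\sigma\,dr\,d\tau,
\]
since here $|G|^2\lesssim |H|^2\lesssim((\tau+1)(r-1)+4)^{-1}(1+\tau)^{-\nu_m}$ has only the slow $(1+\tau)^{-\nu_m}$ decay. After the substitution $s=(1+\tau)(r-1)$ the $s$-integral grows like $(1+\tau)^{1-\epsilon}$ and the remaining $\tau$-integral is $\int_0^{\tau_*}(1+\tau)^{\epsilon-\nu_m}\,d\tau\sim\tau_*^{1-\nu_m+\epsilon}$. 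So your claim that ``all other terms contribute strictly less'' is inverted, but the final bound is unaffected once you correct which term saturates it.
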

			\begin{proof}
			By integrating \eqref{eq:boxPi1} and \eqref{eq:boxPi2} in $\tau$, we obtain
			\begin{align*}
\int_{\tau}^{\infty}\sup_{(\theta,\widetilde{\varphi})\in \s^2}|\tX^kK^l\square_g\Pi|(\tau',r,\theta,\widetilde{\varphi})\,d\tau'\leq &\:C\left((\tau+1)(r-1)+4\right)^{-\frac{1}{2}} (\tau+1)^{-\frac{1}{2}\nu_m-l}\quad \textnormal{($r\leq 2)$}\\
\leq &\:C(r-1)^{-\frac{1}{2}+\frac{1}{2}\delta} (\tau+1)^{-\frac{1}{2}+\frac{1}{2}\delta-\frac{1}{2}\nu_m-l},\\
\int_{\tau}^{\infty}\sup_{(\theta,\widetilde{\varphi})\in \s^2} |(Xr^2)^kK^l\square_g\Pi|(\tau',r,\theta,\widetilde{\varphi})\,d\tau'\leq &\:Cr^{-3}(1+\tau)^{-1-l}\quad \textnormal{($r\geq 2$)},
\end{align*}
with $0<\delta<\nu_m$. Then \eqref{eq:assmG}, \eqref{eq:decayassmG1} and \eqref{eq:decayassmG2}, as well as the estimates \eqref{eq:Kinvinhomest1} and \eqref{eq:Kinvinhomest1}, follow immediately.
			\end{proof}

In the proposition below, we show that $\mathcal{L}^{-1}(F[\phi])$ is well-defined.
\begin{proposition}
\label{prop:idkint}
Let $\phi$ be a solution to \eqref{eq:waveeq} arising from initial data in $C_c^{\infty}(\Sigma_0)$ and assume that $\alpha_{m\ell}\neq 0$ for all $\ell\geq |m|$. Then the expressions
\begin{equation}
u_{m \ell}(r):=-w_{\infty,m \ell}(r)\int_{1}^{r}(r'-1)^{-2}w_{\infty,m \ell}^{-2}(r')e^{im\int_{r_0}^{r'} \frac{x+1}{x-1}\,dx}\left[\int_{r'}^{\infty}w_{\infty,m \ell}(x)e^{-im\int_{r_0}^{x} \frac{y+1}{y-1}\,dy}F_{m \ell}[\phi]\,dx\right]\,dr'
\end{equation}
are well-defined, and in the case $\Lambda_{+,m \ell}<2m^2$, there exists a unique ${\mathfrak{h}}_{m \ell}\in \C$ for each $\ell, m$ such that
\begin{equation}
\label{eq:condFphi}
\int_{1}^{\infty}w_{\infty,m \ell}(r)e^{-im\int_{r_0}^r \frac{x+1}{x-1}\,dx}F_{m \ell}[\phi]\,dr=0,
\end{equation}
with ${\mathfrak{h}}_{m \ell}=0$ if and only if the initial data satisfy
\begin{equation*}
\int_{1}^{\infty}w_{\infty,m \ell}(r)e^{-im\int_{r_0}^r \frac{x+1}{x-1}\,dx}(F_{\rm hom})_{m \ell}[\phi]\,dr=0.
\end{equation*}
If we assume instead that there exists an $\ell_*\in \N_0$ such that $\alpha_{m\ell_*}= 0$, then we impose the following restriction on the initial data for $\phi$:
\begin{equation}
\label{eq:restrinitialdataalpha0}
\int_{1}^{\infty}w_{\infty,m \ell_*}(r)e^{-im\int_{r_0}^r \frac{x+1}{x-1}\,dx}(F_{\rm hom})_{m \ell_*}[\phi]\,dr=0\quad \forall \ell\geq |m|.
\end{equation}

Furthermore, $F_{m \ell}\in C^1(1,\infty)$, $u_{m \ell}\in C^2(1,\infty)$ and $\mathcal{L}u_{m \ell}=F_{m \ell}[\phi]$. 

Define $\mathcal{L}^{-1}(F[\phi])=\sum_{|m|\leq \ell}u_{m \ell}$. Then $\mathcal{L}^{-1}(F[\phi])\in L^2(\Sigma_0)$ and $\widetilde{\phi}$ satisfies: let $\epsilon>0$, then there exists a constant $C_m>0$, such that
\begin{equation}
\label{eq:estKintenergy}
D_{1,{\rm inhom}}[\widetilde{\phi}]+D_{2,{\rm inhom}}[\widetilde{\phi}]\leq C_m (\tau_*^{1-\nu_m+2\epsilon}+(1-r_{\rm initial})^{-2\epsilon})D^{(1)}[\phi],
\end{equation}
with
\begin{multline*}
 D^{(1)}[\phi]:=\sum_{|m|\leq l\leq l_m}|{\mathfrak{h}}_{m \ell}|^2+\sum_{k+l\leq 4,l\leq 2}\int_{\Sigma_{0}} \mathcal{E}_{1+\epsilon}[K^{k+l}\phi]\, r^2d\sigma dr\\
 +\int_{\Sigma_{0} \cap\{r\geq r_I\}}  \sum_{k\leq 2} r^{2-\delta} | LK^{k}\psi|^2+\sum_{k\leq 1}r^{2-\delta} | L((r^2L)K^k\psi)|^2\,d\sigma dr\\ 
 +\sum_{\substack{k\leq 2\\j\leq 1}}\sup_{r'\leq 2}\int_{\Sigma_0\cap\{r=r'\}}|\tX^j((r-1)\tX)^kF_{\rm hom}[\phi])|^2\,d\sigma+\sup_{r'\geq 2}\int_{\Sigma_0\cap\{r=r'\}}|r F_{\rm hom}[\phi]|^2\,d\sigma\\
 +\int_{\Sigma_0\cap\{r\geq 2\}} r^2 |\tX F_{\rm hom}[\phi]|^2\,d\sigma dr.
\end{multline*}
\end{proposition}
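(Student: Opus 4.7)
The plan is to reduce the construction of $\mathcal{L}^{-1}(F[\phi])$ at the level of each azimuthal-angular mode to the explicit inversion formula of Proposition \ref{eq:invLlowfreq}. The first step is to verify that $F_{m\ell}[\phi]$ has the regularity and decay needed to apply that formula. Since $\phi$ arises from $C^{\infty}_{c}$ data, $(F_{\rm hom}[\phi])_{m\ell}$ is smooth and compactly supported in $r$; by Proposition \ref{prop:decayPi} the integral $\int_0^{\infty}\square_g\Pi\,d\tau'$ in $F_\Pi$ converges absolutely and decays like $r^{-3}$ as $r\to\infty$ uniformly in $(\theta,\tphi)$, so combined with the $\tX$-, $K$-, $\Phi$-boundary terms of $\Pi|_{\Sigma_0}$, the function $F_\Pi$ is $C^\infty$ on $(1,\infty)_r$ with integrable $r$-decay. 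Projecting onto the spheroidal basis and controlling $L^\infty_r$-norms in each mode by suprema of $\s^2$-derivatives then places us exactly in the hypothesis of Proposition \ref{eq:invLlowfreq}, and the stated formula defines $u_{m\ell}$ uniquely modulo the cancellation condition required when $\alpha_{m\ell}\leq \tfrac14$.

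For the parameters with $\Lambda_{+,m\ell}<2m^2$, i.e.\ $\alpha_{m\ell}<0$, the formula in Proposition \ref{eq:invLlowfreq} is only well-defined once \eqref{eq:condFphi} holds. I would split $F[\phi]=F_{\rm hom}[\phi]+F_\Pi$ so that the left-hand side of \eqref{eq:condFphi} decomposes as the corresponding sum; by the very construction of $\Pi$ in \S \ref{sec:precstatthm}, and in particular the genericity assumption
\begin{equation*}
\int_1^{\infty}w_{\infty,m\ell}(r)\,e^{-im\int_{r_0}^r\frac{x+1}{x-1}dx}(F_\Pi)_{m\ell}\,dr\neq 0,
\end{equation*}
the $\Pi$-contribution is a nonzero constant multiple of $\mathfrak{h}_{m\ell}$. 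Imposing \eqref{eq:condFphi} is therefore a scalar linear equation in $\mathfrak{h}_{m\ell}$ that uniquely determines it as a constant multiple of the integral of $w_{\infty,m\ell}e^{-im\int_{r_0}^r\frac{x+1}{x-1}dx}(F_{\rm hom})_{m\ell}[\phi]$, establishing both the existence claim and the vanishing characterization. In the degenerate case $\alpha_{m\ell_*}=0$, the expansion \eqref{eq:wfreqexpexcplus} of $\mathfrak{w}_+$ contains a logarithm that prevents matching against the ansatz for $\Pi_{m\ell_*}$, so no choice of $\mathfrak{h}_{m\ell_*}$ in $\Pi$ can enforce \eqref{eq:condFphi}; this is precisely why we impose \eqref{eq:restrinitialdataalpha0} directly on the initial data. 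The regularity $F_{m\ell}\in C^1(1,\infty)$, $u_{m\ell}\in C^2(1,\infty)$ and the identity $\mathcal{L}u_{m\ell}=F_{m\ell}[\phi]$ then follow by differentiating the explicit formula under the integral sign and reading off \eqref{eq:ellipticcheck2}.

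Finally, for \eqref{eq:estKintenergy} the plan is to combine the weighted $L^2$-estimates of Proposition \ref{prop:mainestKint} applied to $u=\mathcal{L}^{-1}(F[\phi])$ with the inhomogeneity bounds in Lemma \ref{lm:estGcutoff}. The $D_{{\rm hom},j}$ parts of the norms $D_{{\rm inhom},j}[\widetilde{\phi}]$ involve $K^{k+1}\widetilde{\phi}=K^k\phi$ and are directly controlled by terms in $D^{(1)}[\phi]$; the spacetime-integrals of $G=-\xi_*\int_\tau^{\infty}\square_g\Pi\,d\tau'$ produce the $\tau_*^{1-\nu_m+2\epsilon}|\mathfrak{h}_{m\ell}|^2$ contribution through Lemma \ref{lm:estGcutoff}. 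The main obstacle will be bounding the initial energies of $\widetilde{\phi}|_{\Sigma_0}=\chi_*\mathcal{L}^{-1}(F[\phi])$: Proposition \ref{prop:mainestKint} gives $L^2$-type bounds for the full $\mathcal{L}^{-1}(F[\phi])$ in terms of weighted norms of $F[\phi]$ that match the last three lines of $D^{(1)}[\phi]$, but the un-cutoff $u_{m\ell}$ may have non-integrable $(r-1)^{-\frac{1}{2}\pm \sqrt{\alpha_{m\ell}}}$-type blow-up near $r=1$ coming from $w_{\infty,m\ell}$ when $\alpha_{m\ell}<0$. The cutoff $\chi_*$ truncates this singularity at the cost of commutator terms $\tX\chi_*,\tX^2\chi_*$ supported where $r-1\sim r_{\rm initial}-1$, which I will bound pointwise using the $L^\infty_r$ estimates of Proposition \ref{eq:invLlowfreq} and sum over $\ell$ via the angular decomposition. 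Interpolating the resulting $(r-1)^{-1+2\sqrt{-\alpha_{m\ell}}}$-type singular integrand against the cutoff width $r_{\rm initial}-1$ for a loss of $(1-r_{\rm initial})^{-2\epsilon}$ yields \eqref{eq:estKintenergy}.
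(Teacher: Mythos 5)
Your proof follows essentially the same route as the paper's: reduce modewise to Proposition~\ref{eq:invLlowfreq}, verify the required decay hypotheses on $F_{m\ell}[\phi]$ via the estimates for $\square_g\Pi$ (Proposition~\ref{prop:decayPi} and the underlying appendix lemmas), fix $\mathfrak{h}_{m\ell}$ from the cancellation condition using the genericity of the $F_\Pi$ integral, and obtain~\eqref{eq:estKintenergy} by combining Proposition~\ref{prop:mainestKint} and Lemma~\ref{lm:estGcutoff} with the $\chi_*$-cutoff interpolation near the horizon. One small correction: $\Lambda_{+,m\ell}<2m^2$ is equivalent to $\alpha_{m\ell}<\tfrac14$, not $\alpha_{m\ell}<0$; the cancellation condition~\eqref{eq:condFphi} and the choice of $\mathfrak{h}_{m\ell}$ apply across the entire range $\alpha_{m\ell}<\tfrac14$, $\alpha_{m\ell}\neq 0$ (with~\eqref{eq:exprPi4} supplying the $\Pi_{m\ell}$ term when $0<\alpha_{m\ell}<\tfrac14$), which is in fact what your argument correctly handles.
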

\begin{proof}
The statements in the proposition follow from an application Propositions \ref{eq:invLlowfreq} and \ref{prop:mainestKint}. We therefore need to show that the assumptions in Proposition \ref{eq:invLlowfreq} hold. That is to say, we need to show that
\begin{equation}
\label{eq:finiteFint}
\int_{1}^{\infty}w_{\infty,m \ell}(r)e^{-im\int_{r_0}^r \frac{x+1}{x-1}\,dx}(F_{\Pi})_{m \ell}[\phi]\,dr<\infty
\end{equation}
is well-defined. We split the integral into a $\int_1^2$ and $\int_2^{\infty}$ part. For $r\leq 2$, we have by Propositions \ref{prop:squarePial0} and \ref{prop:inhomagtr0} that
\begin{equation*}
\int_{\tau}^{\infty}\left|\square_g\Pi_{m \ell}\right|(\tau',r)\leq C_m |{\mathfrak{h}}_{m \ell}|.
\end{equation*}
We can directly estimate the  remaining terms in $F_{\Pi}$ from the definition of $\Pi$ to obtain: for $r\leq 2$
\begin{equation*}
|(F_{\Pi})_{m \ell}|(r)\leq C_m |{\mathfrak{h}}_{m \ell}|.
\end{equation*}
In order to estimate $|(F_{\Pi})_{m \ell}|(r)$ for $r\geq 2$, we first observe that we can rewrite:
\begin{equation*}
\begin{split}
F_{\Pi}=& 2r^2(1+O(r^{-2}))\left(\tX-\frac{\hfol}{2}\Phi \right) \Pi|_{\Sigma_{\tau}}+(2r+O(r^{-1})) \Pi|_{\Sigma_{\tau}}+2\left(O(1)-\frac{1}{2}\sin^2\theta\right)\Phi \Pi|_{\Sigma_{\tau}}\\
&\:+\left(O(1)-\sin^2\theta\right)K\Pi|_{\Sigma_{\tau}}+\int_{0}^{\infty}\square_g\Pi (\tau',r,\theta,\widetilde{\varphi})\,d\tau'\\
=& 2r(1+O(r^{-2}))\left(\tX-\frac{\hfol}{2}\Phi \right) (r\Pi|_{\Sigma_{\tau}})+O(r^{-1}) \Pi|_{\Sigma_{\tau}}+2\left(O(1)-\frac{1}{2}\sin^2\theta\right)\Phi \Pi|_{\Sigma_{\tau}}\\
&\:+\left(O(1)-\sin^2\theta\right)K\Pi|_{\Sigma_{\tau}}-\int_{0}^{\infty}\square_g\Pi (\tau',r,\theta,\widetilde{\varphi})\,d\tau'.
\end{split}
\end{equation*}
For $r\geq 2$, we apply Proposition \ref{prop:inhomestrgtr2} to obtain
\begin{equation*}
\int_{\tau}^{\infty}\left|\square_g\Pi_{m \ell}\right|(\tau',r,\theta,\widetilde{\varphi})\leq C_mr^{-3} |{\mathfrak{h}}_{m \ell}|.
\end{equation*}
We can directly estimate the remaining terms in $F_{\Pi}$ in order to obtain
\begin{equation*}
|(F_{\Pi})_{m \ell}[\phi]|(r)\leq C_m r^{-1}|{\mathfrak{h}}_{m \ell}|.
\end{equation*}
We conclude that \eqref{eq:finiteFint} holds.

We may assume without loss of generality that
\begin{equation*}
\int_{1}^{\infty}w_{\infty,m \ell}(r)e^{-im\int_{r_0}^r \frac{x+1}{x-1}\,dx}(F_{\Pi})_{m \ell}\,dr\neq 0,
\end{equation*}
for ${\mathfrak{h}}_{m \ell}\neq 0$. Indeed, if the above integral were vanishing, we could redefine $f_{\pm}^{m \ell}$ by replacing $\left(\tau+\frac{4}{r-1}+1\right)$ with  $\left(\tau+\frac{4}{r-1}+\beta\right)$ with $\beta>0$, $\beta \neq 1$ and similarly, we could redefine $f_{\pm}^{m \ell,{\rm exp}}$ by replacing $\left(\tau+\frac{4}{r-1}+e^{r-1}\right)$ with  $\left(\tau+\frac{4}{r-1}+\beta e^{r-1}\right)$. As these modifications occur at subleading order in $\tau$, $r$ and $\frac{1}{r-1}$, they do not affect the estimates involving $\Pi$.

Hence, we can choose ${\mathfrak{h}}_{m \ell}$ such that
\begin{equation*}
\int_{1}^{\infty}w_{\infty,m \ell}(r)e^{-im\int_{r_0}^r \frac{x+1}{x-1}\,dx}F_{m \ell}[\phi]\,dr=\int_{1}^{\infty}w_{\infty,m \ell}(r)e^{-im\int_{r_0}^r \frac{x+1}{x-1}\,dx}((F_{\Pi})_{m \ell}+(F_{\rm hom})_{m \ell}[\phi])\,dr=0.
\end{equation*}
The estimate \eqref{eq:estKintenergy} then follows by using that $K^{k+1}\widetilde{\phi}|_{\Sigma_0}=K^k\phi|_{\Sigma_0}$ and applying Proposition \ref{prop:mainestKint} to estimate the terms involving $\widetilde{\phi}$ in $D_{1,{\rm inhom}}[\widetilde{\phi}]+D_{2,{\rm inhom}}[\widetilde{\phi}]$ that do not involve $K\phi$ and Lemma \ref{lm:estGcutoff} to estimate the terms involving $G$. 

Note that we pick up a factor $\tau_*^{2\epsilon}$ from the terms involving $\widetilde{\phi}$ since in $r\leq 2$
\begin{equation*}
\begin{split}
(r-1)^{1-\epsilon}|\tX\widetilde{\phi}|_{\Sigma_0}|^2=&\:(r-1)^{1-\epsilon}|X(\chi_{* }\mathcal{L}^{-1}(F))|^2\\
\leq&\: \frac{1}{2}(r-1)^{1-\epsilon}|\chi_{* }|^2|\tX(\mathcal{L}^{-1}(F))|^2+\frac{1}{2}(r-1)^{1-\epsilon}|\frac{d\chi_*}{dr}| |\mathcal{L}^{-1}(F)|^2\\
\leq&\: (1-r_0)^{-2\epsilon}(r-1)^{-1+\epsilon}|\tX(\mathcal{L}^{-1}(F))|^2+(1-r_0)^{-\epsilon}|\mathcal{L}^{-1}(F)|^2.
\end{split}
\end{equation*}
We similarly pick up a factor $\tau_*^{1-\nu_m+\delta}$ from Lemma \ref{lm:estGcutoff} and we set $\delta=2\epsilon$.
\end{proof}

We now introduce in the following subset of the domain of dependence of $\Sigma_0\cap\{r\geq r_{\rm initial}\}$: 
\begin{equation*}
\{\tau\leq \tau_*\}\cap D^+(\Sigma_0\cap\{r\geq r_{\rm initial}\}),
\end{equation*}
the $K$-integral as follows:
\begin{equation*}
K^{-1}\widehat{\phi}:=\widetilde{\phi}
\end{equation*}
as $K\widetilde{\phi}=\phi$ in this region, by construction.

			\section{Decay estimates and late-time asymptotics}
			\label{sec:decay}
			In this section, we will apply the energy decay estimates for $K\phi$ and $K^2\phi$ from Proposition \ref{prop:edecay} to $K^{-1}\hphi$ for $0\leq \tau\leq \tau_*$, in order to obtain energy decay for $\hphi$ and $K\hphi$. We recall that
			\begin{equation*}
\nu_m=\min \left( \{\alpha_{m \ell}\,|\, \alpha_{m \ell}> 0\}\cup\{1\}\right).
\end{equation*}
We will moreover assume that either the condition \eqref{eq:asmsphev} holds on $\Lambda_{m\ell}$ or that the initial data for $\phi$ satisfy the restriction \eqref{eq:restrinitialdataalpha0}.

\subsection{Energy decay estimates}
We first establish appropriately weighted energy decay estimates for $\hphi$.
			\begin{proposition}
		\label{prop:edecayKinv}
		Let $\eta>0$ be arbitrarily small and $r_I>1$ suitably large. Then there exist $\delta,\epsilon>0$ suitably small, such that
		\begin{align}
\label{eq:step2edecaylessdegKinv}
\int_{\Sigma_{\tau}} (1-r^{-1})^{\delta}\mathcal{E}_{1+\epsilon}[\hphi]\, r^2d\sigma dr\leq&\:  C (1+\tau)^{-1-\nu_m+\eta} D^{(1)}[\phi],\\
\label{eq:step3edecaylessdegKKinv}
\int_{\Sigma_{\tau}}  (1-r^{-1})^{\delta}\mathcal{E}_{1+\epsilon}[K\hphi]\, r^2d\sigma dr\leq &\: C (1+\tau)^{-3-\nu_m+\eta} D^{(1)}[\phi],\\
\label{eq:step3edecayrsqKinv}
\int_{\Sigma_{\tau} \cap\{r\geq r_I\}} r^{2} | L\hpsi|^2\,d\sigma dr\leq &\: C (1+\tau)^{-1-\nu_m+\eta} D^{(1)}[\phi].
\end{align}
			\end{proposition}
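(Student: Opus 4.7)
The strategy is to apply the $K$-derivative decay estimates of Proposition~\ref{prop:edecay} to the time integral $\widetilde{\phi}=K^{-1}\hphi$ constructed in \S \ref{sec:ellipticKinv}, which satisfies $\square_g\widetilde{\phi}=G$ for the inhomogeneity $G$ of \eqref{eq:choiceG}. The hypotheses are verified as follows: Lemma~\ref{lm:estGcutoff} confirms \eqref{eq:decayassmG1}--\eqref{eq:decayassmG2} for $G$, and Proposition~\ref{prop:idkint} supplies the initial data bound
\begin{equation*}
D_{\rm inhom}[\widetilde{\phi}](0)+D_{{\rm inhom},1}[\widetilde{\phi}](0)+D_{{\rm inhom},2}[\widetilde{\phi}](0)\leq C_m\bigl(\tau_*^{1-\nu_m+2\epsilon}+(1-r_{\rm initial})^{-2\epsilon}\bigr)D^{(1)}[\phi].
\end{equation*}
Applying Proposition~\ref{prop:edecay} to $\widetilde{\phi}$ and using $K\widetilde{\phi}=\hphi$, $K^2\widetilde{\phi}=K\hphi$ on $\Sigma_\tau\cap D^+(\Sigma_0\cap\{r\geq r_{\rm initial}\})$ for all $\tau\leq \tau_*$ then yields bounds of the form $C(1+\tau)^{-2+\eta}D_{\rm inhom}[\widetilde\phi](0)$, $C(1+\tau)^{-4+\eta}D_{{\rm inhom},1}[\widetilde\phi](0)$, and $C(1+\tau)^{-2+\eta}D_{{\rm inhom},2}[\widetilde\phi](0)$, respectively.

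The central step is the choice $\tau_*=\tau$ and $r_{\rm initial}-1=c\tau^{-1}$ for a small constant $c>0$. The double degeneracy $\Delta\sim(r-1)^2$ of the extremal horizon implies, via integration of the ingoing null generator $\underline{L}$, that $\Sigma_\tau\cap\{r-1\geq c'\tau^{-1}\}$ lies within $D^+(\Sigma_0\cap\{r\geq r_{\rm initial}\})$ for a suitable $c'>c$, while both factors $\tau_*^{1-\nu_m+2\epsilon}$ and $(1-r_{\rm initial})^{-2\epsilon}$ become of size $\tau^{1-\nu_m+O(\epsilon)}$. Multiplying by $(1+\tau)^{-2+\eta}$ produces the asserted rate $(1+\tau)^{-1-\nu_m+\eta}$ on this far-from-horizon subset, on which $(1-r^{-1})^\delta$ is bounded; the analogous argument at the level of $K\hphi$ and of the $r^2$-weighted quantity at infinity yields \eqref{eq:step3edecaylessdegKKinv} and \eqref{eq:step3edecayrsqKinv}, the latter being supported in $r\geq r_I$ and therefore requiring no near-horizon correction.

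The remaining near-horizon sliver $\Sigma_\tau\cap\{r-1<c'\tau^{-1}\}$ is handled by an interpolation plus bootstrap argument. Theorem~\ref{thm:morawetz} applied to $\hphi$, with inhomogeneity $-\square_g\Pi$ whose global integrability follows from Proposition~\ref{prop:decayPi}, yields the uniform-in-$\tau$ bound $\int_{\Sigma_\tau}\mathcal{E}_{1-\epsilon}[\hphi]\,r^2\,d\sigma dr\leq CD^{(1)}[\phi]$. Using the pointwise inequality $(1-Mr^{-1})^{2\epsilon}\mathcal{E}_{1+\epsilon}[\hphi]\leq \mathcal{E}_{1-\epsilon}[\hphi]$ together with $(1-Mr^{-1})^{\delta-2\epsilon}\lesssim \tau^{-\delta+2\epsilon}$ on this region, one converts this bound into a weighted rate; combining it with the decay $(1+\tau)^{-3-\nu_m+\eta}$ for $K\hphi$ already established on the far region, propagated to the near-horizon region via the identity $\hphi(\tau,\cdot)=-\int_\tau^\infty K\hphi(\tau',\cdot)\,d\tau'$ and Cauchy--Schwarz in $\tau'$, produces the required rate on the whole of $\Sigma_\tau$ after a finite iteration.

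The principal obstacle is the geometric compatibility of the two cutoffs: the finiteness of $D_{\rm inhom}[\widetilde{\phi}](0)$ keeps $r_{\rm initial}$ bounded away from $r=1$, whereas the decay statement at time $\tau$ requires $\Sigma_\tau$ to lie inside $D^+(\Sigma_0\cap\{r\geq r_{\rm initial}\})$. The double-degenerate character of the extremal horizon is precisely what permits the scaling $r_{\rm initial}-1\sim\tau^{-1}$, and the combination of the time growth $\tau_*^{1-\nu_m}$ from Lemma~\ref{lm:estGcutoff} with the spatial growth $(1-r_{\rm initial})^{-2\epsilon}$ from Proposition~\ref{prop:idkint} is exactly what degrades the natural $K$-derivative rate $\tau^{-2}$ to the final rate $\tau^{-1-\nu_m+\eta}$.
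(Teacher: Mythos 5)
Your treatment of the far-from-horizon region and of \eqref{eq:step3edecayrsqKinv} matches the paper: set $\tau_*=\tau$, take $r_{\rm initial}-1\sim\tau^{-1}$, invoke Propositions \ref{prop:edecay} and \ref{prop:idkint} together with Lemma \ref{lm:estGcutoff}, and balance the loss $\tau_*^{1-\nu_m+2\epsilon}+(r_{\rm initial}-1)^{-2\epsilon}$ against the $(1+\tau)^{-2+\eta}$ (resp. $(1+\tau)^{-4+\eta}$) decay from Proposition \ref{prop:edecay}. You also correctly note that \eqref{eq:step3edecayrsqKinv} is localized to $r\geq r_I$ and needs no near-horizon correction.

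The gap is in your treatment of the near-horizon sliver $\{r-1<c'\tau^{-1}\}$, where you diverge from the paper. Your ``interpolation plus bootstrap'' argument does not close. The interpolation step gives only $\int_{\text{near}}(1-r^{-1})^\delta\mathcal{E}_{1+\epsilon}[\hphi]\lesssim\tau^{-\delta+2\epsilon}D^{(1)}[\phi]$, far short of the target rate $(1+\tau)^{-1-\nu_m+\eta}$. The proposed remedy via $\hphi(\tau,\cdot)=-\int_\tau^\infty K\hphi(\tau',\cdot)\,d\tau'$ fails because for a fixed $r$ with $r-1<c'\tau^{-1}$, the far-region decay of $K\hphi$ on $\Sigma_{\tau'}$ only applies once $\tau'\geq c'/(r-1)$; for $\tau'\in[\tau,c'/(r-1)]$ the point $r$ remains in the near-horizon region of $\Sigma_{\tau'}$, which is exactly what you have not yet controlled, and as $r\downarrow M$ this uncontrolled interval becomes arbitrarily long. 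There is no well-defined step size that would make the ``finite iteration'' converge. The paper instead resolves the sliver by a dyadic decomposition in $r-1$: it writes $r_j=1+2^{-j}(\tau+1)^{-1}$, and for each shell $[r_{j+1},r_j]$ uses a \emph{shell-adapted} cutoff $r_{\rm initial}$ with $(r_{\rm initial}-1)^{-1}\sim 2^{j}(\tau+1)$, so that the shell lies in $D^+(\Sigma_0\cap\{r\geq r_{\rm initial}\})$. The cost from Proposition \ref{prop:idkint} is then $(2^{j+1}\tau)^{2\epsilon}$, which the weight $(1-r^{-1})^\delta\lesssim 2^{-j\delta}(\tau+1)^{-\delta}$ overcompensates, producing a geometric series in $j$ that sums as soon as $\delta>2\epsilon$. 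This shell-dependent choice of the $K$-inversion cutoff is the key idea missing from your proposal.
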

			\begin{proof}
			Let $\tau_*=\tau$. There exists a constant $B>0$ such that with the choice $(r_{\rm initial}-1)^{-1}:=\frac{1}{(r_1-1)}+B\tau_*$, we have for all $0\leq \tau\leq \tau_*$ that $\Sigma_{\tau}\cap \{r\geq r_1\}\subset D_+(\Sigma_0\cap\{r\geq r_{\rm initial}\})$. Hence, $K(K^{-1}\phi)=\phi$ on $\Sigma_{\tau}\cap \{r\geq r_1\}$.

			We first take $r_1=1+\tau^{-1}$ and apply Proposition \ref{prop:edecay} together with Proposition \ref{prop:idkint} to obtain:
			\begin{equation*}
			\begin{split}
			\int_{\Sigma_{\tau}\cap\{r\geq 1+(\tau+1)^{-1}\}} (1-r^{-1})^{\delta}\mathcal{E}_{1+\epsilon}[\hphi]\, r^2d\sigma dr\leq &\: (\tau+1)^{-\delta}\int_{\Sigma_{\tau}\cap\{r\geq r_1\}} \mathcal{E}_{1+\epsilon}[K \widetilde{\phi}]\, r^2d\sigma dr\\
			\leq &\:  C (1+\tau)^{-1-\nu_m+\eta-\delta} D_{\rm inhom}[\widetilde{\phi}]\\
			\leq &\: C (1+\tau)^{-1-\nu_m+\eta-\delta+2\epsilon} D^{(1)}[\phi].
			\end{split}
			\end{equation*}
			We can take $\delta>2\epsilon$. It remains to estimate the integral when $r\in [1,1+(\tau+1)^{-1})$. We define $r_1=1+(\tau+1)^{-1}$ and $r_j=1+2^{-j}(\tau+1)^{-1}$ for $j\geq 1$. Note that $\{\frac{1}{r_j-1}\}_{j\in \N}$ is dyadic. We then have the following estimate in the dyadic intervals $[r_{j+1},r_j]$:
			\begin{equation*}
						\begin{split}
			\int_{\Sigma_{\tau}\cap\{r_{j+1}\leq r\leq r_{j}\}} (1-r^{-1})^{\delta}\mathcal{E}_{1+\epsilon}[\hphi]\, r^2d\sigma dr\leq &\: 2^{-j\delta}(\tau+1)^{-\delta}\int_{\Sigma_{\tau}\cap\{r\geq r_1\}} \mathcal{E}_{1+\epsilon}[K \widetilde{\phi}]\, r^2d\sigma dr\\
			\leq &\:  C 2^{-j\delta}(1+\tau)^{-1-\nu_m+\eta-\delta} D_{\rm inhom}[\widetilde{\phi}]\\
			\leq &\:  C 2^{-j\delta}(\tau+1)^{-\delta} (2^{j+1}+B)^{2\epsilon}(\tau+1)^{2\epsilon} D^{(1)}[\phi]\\
			\leq &\: 2^{-j(\delta-2\epsilon)}C (1+\tau)^{-1-\nu_m+\eta-\delta+2\epsilon} D^{(1)}[\phi].
			\end{split}
			\end{equation*}
			Using that $\delta>\epsilon$, we can sum over $j$, since the right-hand side forms a geometric series in $j$. We obtain:
			\begin{equation*}
			\int_{\Sigma_{\tau}\cap\{r\leq 1+(\tau+1)^{-1}\}} (1-r^{-1})^{\delta}\mathcal{E}_{1+\epsilon}[\hphi]\, r^2d\sigma dr\leq C (1+\tau)^{-1-\nu_m+\eta-\delta+2\epsilon} D^{(1)}[\phi].
			\end{equation*}
			We therefore conclude \eqref{eq:step2edecaylessdegKinv}.
			
			The estimate \eqref{eq:step3edecaylessdegKKinv} follows in the same manner, since $K\phi=K^2\widetilde{\phi}$. Finally, \eqref{eq:step3edecayrsqKinv} follows similarly with $(r_{\rm initial}-1)^{-1}:=B\tau_*$ as it concerns the region $\Sigma_{\tau}\cap \{r\geq r_I\}$, which is contained in $D_+(\Sigma_0\cap\{r\geq r_{\rm initial}\})$.
			\end{proof}

			We can obtain an additional energy decay estimate for the non-degenerate energy with energy density $\mathcal{E}_2[\hphi]$.
		\begin{proposition}
		\label{prop:nondegedecay}
		Let $\eta>0$ be arbitrarily small.  Then there exists a constant $C>0$ such that
		\begin{equation}
		\label{eq:nondegedecay}
		\int_{\Sigma_{\tau}} \mathcal{E}_{2}[\hphi]\, r^2d\sigma dr\leq  C (1+\tau)^{-\nu_m+\eta} D^{(1)}[\phi].
		\end{equation}
		Furthermore,
		\begin{equation}
		\label{eq:non1degdecay}
		\int_{\Sigma_{\tau}} \mathcal{E}_{1+\epsilon}[\hphi]\, r^2d\sigma dr\leq  C (1+\tau)^{-1-\nu_m+\eta} D^{(1)}[\phi].
		\end{equation}
		\end{proposition}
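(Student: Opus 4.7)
The plan is to upgrade the spatially degenerate decay estimates of Proposition~\ref{prop:edecayKinv} to the (less) non-degenerate bounds \eqref{eq:nondegedecay} and \eqref{eq:non1degdecay} by interpolating between the already-known decay of the weighted energy $\int(1-r^{-1})^{\delta}\mathcal{E}_{1+\epsilon}[\hphi]$ and uniform-in-$\tau$ boundedness statements for the corresponding less-degenerate energies. The former is supplied by \eqref{eq:step2edecaylessdegKinv}, while the latter will be extracted from the $(r-M)^{-1}$-weighted hierarchies of Section~\ref{sec:rmin1weight} near the horizon and from the $r$-weighted hierarchy of Proposition~\ref{prop:rpest} away from it.

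For \eqref{eq:non1degdecay}, I first apply Corollary~\ref{cor:horminMest} with $N=0$ (together with the bulk and infinity estimates) to establish $\int_{\Sigma_\tau}\mathcal{E}_{1+\epsilon}[\hphi]\,r^2\,d\sigma\,dr \leq C\,D^{(1)}[\phi]$ uniformly in $\tau$. The spacetime integrals of $G=-\square_g\Pi$ on the right-hand side are controlled by Proposition~\ref{prop:decayPi}, while the spacetime integrals of the form $\int\int(1-Mr^{-1})^{-1+\epsilon}|K\hphi|^2$ are handled by the decay estimate \eqref{eq:step3edecaylessdegKKinv} for $K\hphi$ together with a Hardy inequality. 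I then interpolate dyadically: setting $\lambda(\tau) = (1+\tau)^{-\gamma}$, I split $\Sigma_\tau = \{1-r^{-1}\geq\lambda\}\cup\{1-r^{-1}<\lambda\}$. On the first region the missing weight $(1-r^{-1})^{-\delta}$ is bounded by $\lambda^{-\delta}$, so the contribution is at most $C\lambda^{-\delta}(1+\tau)^{-1-\nu_m+\eta}D^{(1)}[\phi]$ by \eqref{eq:step2edecaylessdegKinv}. On the shrinking near-horizon region I invoke a mild negative-weight boundedness $\int(r-M)^{-\mu}\mathcal{E}_{1+\epsilon}[\hphi]\lesssim D^{(1)}[\phi]$ derived from Proposition~\ref{prop:commZ} after commutation by the vector field $Z=(r-M)Y$, yielding a contribution of order $\lambda^{\mu}D^{(1)}[\phi]$. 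Choosing $\gamma$ so that $\gamma(\delta+\mu)\leq\eta$ balances the two pieces and delivers the claimed rate. The argument for \eqref{eq:nondegedecay} is structurally identical, with $\mathcal{E}_{1+\epsilon}[\hphi]$ replaced by $\mathcal{E}_2[\hphi]$ and uniform boundedness obtained from Proposition~\ref{prop:commZ} with $p=2$; the larger weight $(1-r^{-1})^{1+\delta-\epsilon}$ that needs to be removed in this case accounts for the loss of one power of decay in the final rate.

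The principal technical obstacle lies in establishing the uniform boundedness statements above, since the right-hand sides of Proposition~\ref{prop:commZ} with $p$ close to $2$ involve spacetime integrals with strongly singular near-horizon weights, in particular $\int\int(r-M)^{-3}|\underline{L}K\hphi|^2$ and $\int\int(r-M)^{-1}|K^2\hphi|^2$ appearing in \eqref{eq:horminMestpre2}. These cannot be absorbed and must instead be controlled by combining the decay of $K\hphi$ and $K^2\hphi$ from Proposition~\ref{prop:edecayKinv} with Hardy inequalities that trade $\underline{L}$-derivatives for $(r-M)$-weights via \eqref{eq:convKder}, and with the pointwise bounds on $\square_g\Pi$ from Proposition~\ref{prop:decayPi}, while keeping careful bookkeeping to ensure that all constants close against the norm $D^{(1)}[\phi]$. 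The auxiliary $(r-M)^{-\mu}$-weighted boundedness needed on the near-horizon side of the interpolation similarly requires iterating the commutation by $(r-M)Y$ and propagating initial data quantities of the form present in $D^{(1)}[\phi]$.
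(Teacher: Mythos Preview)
Your interpolation strategy is natural in spirit, but there is a genuine gap in the uniform-boundedness input you need on the near-horizon side. Concretely, to run the interpolation for \eqref{eq:nondegedecay} you require a uniform-in-$\tau$ bound of the type $\int_{\Sigma_\tau}(r-M)^{-\mu}\mathcal{E}_{1+\epsilon}[\hphi]\lesssim D^{(1)}[\phi]$ (or, equivalently, $\int_{\Sigma_\tau}\mathcal{E}_2[\hphi]\lesssim D^{(1)}[\phi]$). You propose to obtain this from Proposition~\ref{prop:commZ} with $p$ close to $2$, but the right-hand sides of \eqref{eq:horminMestpre1}--\eqref{eq:horminMestpre2} then contain, at $N=0$, spacetime terms such as $\int\!\!\int (r-M)^{-1}|\Phi\hphi|^2$ (version \eqref{eq:horminMestpre1}) or $\int\!\!\int (r-M)^{-3}|\underline{L}\hphi|^2\sim\int\!\!\int (r-M)|Y\hphi|^2$ (version \eqref{eq:horminMestpre2}). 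These carry \emph{no} $K$-derivative and therefore cannot be handled by the decay of $K\hphi,K^2\hphi$ from Proposition~\ref{prop:edecayKinv}; the Morawetz estimate \eqref{eq:mornearhor} only yields the much weaker weight $(r-M)^{\epsilon}$ on $|\Phi\hphi|^2$, and the second term is essentially the spacetime analogue of the very quantity you are trying to bound. Commuting further with $Z=(r-M)Y$ does not help: the analogous error terms simply reappear one order higher.

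The paper resolves this by a mechanism you do not invoke: it passes to the $K$-integral $\widetilde{\phi}=K^{-1}\hphi$ constructed in \S\ref{sec:constKinvdata}. In the domain of dependence one has $\hphi=K\widetilde{\phi}$, so \eqref{eq:convKder} gives $\mathcal{E}_2[\hphi]=\mathcal{E}_2[K\widetilde{\phi}]\lesssim\sum_{k\leq 1}\mathcal{E}_0[\widetilde{Z}^k\widetilde{\phi}]$, i.e.\ two powers of $(r-M)$ are \emph{gained}. One can then apply Corollary~\ref{cor:horminMest} to $\widetilde{\phi}$ rather than $\hphi$; the only price is the controlled blow-up $(r_{\mathrm{initial}}-1)^{-2\epsilon}$ of the initial data for $\widetilde{\phi}$, which is absorbed into the dyadic summation over near-horizon shells $r_j-1\sim 2^{-j}(1+\tau)^{-1}$. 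Once \eqref{eq:nondegedecay} is established this way, \eqref{eq:non1degdecay} follows by direct interpolation with \eqref{eq:step2edecaylessdegKinv}. Your proposal does not exploit the $K$-integral and therefore lacks the structural source of the extra $(r-M)^2$ needed to close.
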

		\begin{proof}
		We let $r_1$ and $r_{\rm initial}$ be as in the proof of Proposition \ref{prop:edecayKinv}. Let $r_1-1\geq (\tau+1)$. Then it follows immediately from \eqref{eq:step2edecaylessdegKinv} that
		\begin{equation*}
		\int_{\Sigma_{\tau}\cap \{r\geq r_1\}} \mathcal{E}_{2}[\hphi]\, r^2d\sigma dr\leq  (\tau+1)^{1-\epsilon-\delta}\int_{\Sigma_{\tau}\cap \{r\geq r_1\}} (1-r^{-1})^{\delta}\mathcal{E}_{1+\epsilon}[\hphi]\, r^2d\sigma dr\leq C (1+\tau)^{-\nu_m+\eta} D^{(1)}[\phi].
		\end{equation*}
		for a suitably small $\eta>0$.
		
		Now replace $r_1$ with $r_j=1+2^j(1+\tau)$ for $j\geq 1$ and consider the corresponding $r_{\rm initial}$. We apply \eqref{eq:convKder} to express $(r-1)^{-2}\underline{L}\phi$ in terms of derivatives of $K^{-1}\phi$: let $\widetilde{Z}\in \{(r-1)\tX, K, \snabla_{\s^2}\}$, then,
		\begin{equation*}
		\begin{split}
		\int_{\Sigma_{\tau}\cap\{r_{j+1}\leq r\leq r_{j}\}} \mathcal{E}_{2}[\hphi]\, r^2d\sigma dr\leq&\:  C\sum_{k\leq 1}\int_{\Sigma_{\tau}\cap\{r_{j+1}\leq r\leq r_{j}\}} \mathcal{E}_{0}[\widetilde{Z}^k\widetilde{\phi}]\, r^2d\sigma dr\\
		\leq &\:  C2^{-j}(1+\tau)^{-1+\epsilon}\sum_{k\leq 1}\int_{\Sigma_{\tau}\cap\{r_{j+1}\leq r\leq r_{j}\}} \mathcal{E}_{1-\epsilon}[\widetilde{Z}^k\widetilde{\phi}]\, r^2d\sigma dr.
		\end{split}
		\end{equation*}
		We apply Corollary \ref{cor:horminMest} with $\phi$ replaced by $\widetilde{\phi}$ to estimate the right-hand side further and use that the term on the right-hand side of \eqref{eq:horminMest} is bounded by $((r_{\rm initial}-1)^{-2\epsilon}+(1+\tau)^{-1+\nu_m+\delta})D^{(1)}[\phi]$. This gives:
		\begin{equation*}
		\begin{split}
		\int_{\Sigma_{\tau}\cap\{r_{j+1}\leq r\leq r_{j}\}} \mathcal{E}_{2}[\hphi]\, r^2d\sigma dr\leq&\:  C2^{-j}(1+\tau)^{-1+\epsilon}[(2^{j}(1+\tau)+B)^{2\epsilon}+(1+\tau)^{-1+\nu_m+\delta}]D^{(1)}[\phi]\\
		\leq&\: C 2^{-(1-\epsilon)j}(1+\tau)^{-1+\nu_m+\delta}D^{(1)}[\phi],
		\end{split}
		\end{equation*}
		for suitable $\delta>0$. We conclude the argument by summing over $j$ and combining with the estimate in the region $r\geq 1+(1+\tau)$.
		
		We obtain \eqref{eq:non1degdecay} by interpolating \eqref{eq:nondegedecay} with \eqref{eq:step2edecaylessdegKinv} to get rid of the degenerate factor $(1-r^{-1})^{\delta}$ in \eqref{eq:step2edecaylessdegKinv}.
		\end{proof}
			
			In the proposition below, we show that energies with growing $r$-weights, restricted to far-away regions $r>r_I$ also decay suitably fast, provided we restrict to $|m|\geq 2$. This will be important for deriving the precise late-time asymptotics for $r\phi$ all the way up to $\mathcal{I}^+$.
			
			\begin{proposition}
			\label{prop:mgeq2}
			Let $\eta>0$ be arbitrarily small. Then there exist a constant $C>0$, such that
			\begin{align}
			\label{eq:extradecKderinfty}
			\int_{\Sigma_{\tau} \cap\{r\geq r_I\}} r^{2} | LK\hpsi_{|m|\geq 2}|^2\,d\sigma dr\leq C (1+\tau)^{-3} D_2^{(1)}[\phi],
			\end{align}
			with
			\begin{equation*}
			D_2^{(1)}[\phi]:=\int_{\Sigma_0 \cap\{r\geq r_I\}}r  |L(r^2L)^2 K \hpsi_{|m|\geq 2}|^2\,d\sigma dr+D^{(1)}[\phi]+D^{(1)}[K\phi].
			\end{equation*}
			\end{proposition}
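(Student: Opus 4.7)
The plan is to iterate the $r^p$-weighted hierarchy of Proposition \ref{prop:edecay} one additional level, exploiting the fact that, for solutions restricted to $|m|\geq 2$, the higher-order estimate \eqref{eq:rpinfcommmgeq2} controlling $L(r^2L)^2\psi_{|m|\geq 2}$ is available. I will apply this three-rung hierarchy to $K\hphi_{|m|\geq 2}$, which satisfies an inhomogeneous wave equation whose source $-K\square_g\Pi_{|m|\geq 2}$ has the decay in $\tau$ and $r$ required for the $r$-weighted machinery of \S \ref{sec:rweightinf}, as a consequence of Proposition \ref{prop:decayPi}. A preliminary step is to verify that the spacetime integrals of this inhomogeneity appearing on the right-hand sides of \eqref{eq:rpinf}, \eqref{eq:rpinfcomm} and \eqref{eq:rpinfcommmgeq2} are bounded by $D_2^{(1)}[\phi]$; this follows from Proposition \ref{prop:decayPi} together with the fact that each commutation with $r^2L$ probes one additional order in the $\frac{1}{r}$-expansion of $\Pi$.

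Next, decay will propagate up the hierarchy by iterated mean-value theorem and interpolation arguments analogous to Step 3 in the proof of Proposition \ref{prop:edecay}. Starting from the initial bound $\int_{\Sigma_0\cap\{r\geq r_I\}} r|L(r^2L)^2 K\hpsi_{|m|\geq 2}|^2\,d\sigma dr \lesssim D_2^{(1)}[\phi]$, the top rung \eqref{eq:rpinfcommmgeq2} applied with $p$ close to $2$, combined with a dyadic mean-value argument, yields decay of $\int|L(r^2L)^2 K\hpsi_{|m|\geq 2}|^2$; this bounds the boundary flux in the middle rung \eqref{eq:rpinfcomm} applied to $K\hphi_{|m|\geq 2}$, producing an additional power of $(1+\tau)^{-1}$ in $\int r|L(r^2L K\hpsi_{|m|\geq 2})|^2$; iterating once more with the bottom rung \eqref{eq:rpinf} propagates the decay to $\int r|L K\hpsi_{|m|\geq 2}|^2$. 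A final application of the Hardy-type identity
\begin{equation*}
\int_{\Sigma_{\tau}\cap\{r\geq r_I\}} r^2|LK\hpsi_{|m|\geq 2}|^2\,d\sigma dr \lesssim \int_{\Sigma_{\tau}\cap\{r_I-M\leq r\leq r_I\}} |LK\hpsi_{|m|\geq 2}|^2\,d\sigma dr + \int_{\Sigma_{\tau}\cap\{r\geq r_I\}} |L(r^2 LK\hpsi_{|m|\geq 2})|^2\,d\sigma dr,
\end{equation*}
together with the averaging step at the end of the proof of Proposition \ref{prop:edecay}, upgrades this to the target estimate. The boundary term near $r=r_I$ is absorbed using \eqref{eq:mornearho} applied to $K\phi$ (contained in $D^{(1)}[K\phi]\subset D_2^{(1)}[\phi]$), combined with the energy decay of Proposition \ref{prop:edecayKinv} for $K\hphi$.

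The main obstacle will be recovering the clean $(1+\tau)^{-3}$ rate stated, without the $\eta$-loss that would naively arise from iterating $p=2-\eta$ at each of the three rungs. This necessitates a more delicate interpolation scheme in which $p$ is chosen at the endpoint $p=1$ for the top rung, so that the initial bound $\int_{\Sigma_0} r|L(r^2L)^2 K\hpsi_{|m|\geq 2}|^2\,d\sigma dr$ from $D_2^{(1)}[\phi]$ is directly consumed, while the gain in decay at subsequent rungs is extracted through integrated-in-time bounds combined with Hardy-type inequalities on $\Sigma_{\tau}$ rather than by raising $p$ towards $2$. A careful bookkeeping of the interpolation exponents, in the spirit of the closing argument in Step 2 of the proof of Proposition \ref{prop:edecay}, then yields the result.
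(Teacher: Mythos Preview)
Your proposal is correct and matches the paper's approach: apply \eqref{eq:rpinfcommmgeq2} with $p=1$ at the top rung (as you identify in your final paragraph), then descend via Hardy's inequality and two applications of \eqref{eq:rpinfcomm} (with $p=2$ then $p=1$), controlling the energy flux terms on the right-hand side via the already-established decay of $\mathcal{E}_{1+\epsilon}[K^{k+1}\hphi]$ from Proposition~\ref{prop:edecayKinv} applied both to $\hphi$ and to $K\hphi$ (whence $D^{(1)}[K\phi]$). The only small correction is that the bottom rung \eqref{eq:rpinf} is not used in the main descent---the paper stays on the middle rung \eqref{eq:rpinfcomm} and converts between rungs by Hardy---and \eqref{eq:rpinf} with $p=2$ is invoked only at the very end to pass from the dyadic sequence to all $\tau$.
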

			\begin{proof}
			We apply \eqref{eq:rpinfcommmgeq2} with $p=1$ (and $\psi$ replaced by $\hpsi$), and we estimate the terms on the RHS of \eqref{eq:rpinfcommmgeq2} using \eqref{eq:boxPi2} and \eqref{eq:mornearho}  to conclude that for all $0\leq  \tau_B$
			\begin{equation*}
			\int_{0}^{\tau_B}\int_{\Sigma_{\tau} \cap\{r\geq r_I\}} |L(r^2L)^2 K\hpsi_{|m|\geq 2}|^2\,d\sigma dr\leq C \int_{\Sigma_{0} \cap\{r\geq r_I\}}r  |L(r^2L)^2 K\hpsi_{|m|\geq 2}|^2\,d\sigma dr+C D^{(1)}[\phi].
			\end{equation*}
			By the mean-value theorem, there exists a dyadic sequence such that
			\begin{equation*}
			\int_{\Sigma_{\tau_n} \cap\{r\geq r_I\}} |L(r^2L)^2 K\hpsi_{|m|\geq 2}|^2\,d\sigma dr\leq C(1+\tau_n)^{-1}\left[\int_{\Sigma_0 \cap\{r\geq r_I\}}r  |L(r^2L)^2 K\hpsi_{|m|\geq 2}|^2\,d\sigma dr+D^{(1)}[\phi]\right].
			\end{equation*}
			By a Hardy inequality, this implies in particular that
			\begin{equation}
			\label{eq:intmeddecaymgeq2}
			\int_{\Sigma_{\tau_n} \cap\{r\geq r_I\}}r^2 |L(r^2L) K\hpsi_{|m|\geq 2}|^2\,d\sigma dr\leq C(1+\tau_n)^{-1}\left[\int_{\Sigma_0 \cap\{r\geq r_I\}}r  |L(r^2L)^2 K \hpsi_{|m|\geq 2}|^2\,d\sigma dr+D^{(1)}[\phi]\right].
			\end{equation}
			
			We now apply \eqref{eq:rpinfcomm} with $p=2$ and $0\leq \tau_A\leq \tau_B$ to estimate
			\begin{multline*}
			\int_{\tau_A}^{\tau_B}\int_{\Sigma_{\tau} \cap\{r\geq r_I\}} r|L(r^2L) K\hpsi_{|m|\geq 2}|^2\,d\sigma dr\leq C \int_{\Sigma_{\tau_A} \cap\{r\geq r_I\}}r^2  |L(r^2L) K\hpsi_{|m|\geq 2}|^2\,d\sigma dr\\
			+C\sum_{k\leq 1} \int_{\Sigma_{\tau_A}}\mathcal{E}_{1+\epsilon}[K^{k+1}\hphi]\,d\sigma dr.
			\end{multline*}
			
			By interpolating between \eqref{eq:step3edecaylessdegKKinv} and \eqref{eq:nondegedecay} applied to $K^k\hphi$ instead of $\hphi$, we can estimate for all $\tau \geq 0$
			\begin{equation*}
			\sum_{k\leq 1} \int_{\Sigma_{\tau_A}}\mathcal{E}_{1+\epsilon}[K^{k+1}\hphi]\,d\sigma dr\leq C(1+\tau)^{-3-\nu_m+\eta}(D^{(1)}[\phi]+D^{(1)}[K\phi]).
			\end{equation*}
			
			Hence, after another application of the mean-value theorem along an appropriate dyadic sequence, together with \eqref{eq:intmeddecaymgeq2}, we obtain
			\begin{multline*}
			\int_{\Sigma_{\tau_n} \cap\{r\geq r_I\}}r |L(r^2L) K\hpsi_{|m|\geq 2}|^2\,d\sigma dr\\
			\leq C(1+\tau_n^{(1)})^{-2}\left[\int_{\Sigma_0 \cap\{r\geq r_I\}}r  |L(r^2L)^2 K \hpsi_{|m|\geq 2}|^2\,d\sigma dr+D^{(1)}[\phi]+D^{(1)}[K\phi]\right],
			\end{multline*}
			with $\tau_n^{(1)}$ another dyadic sequence.
			
			We repeat the last step by applying \eqref{eq:rpinfcomm} with $p=1$ to obtain along a dyadic sequence $\{\tau_n^{(2)}\}$:
			\begin{multline*}
			\int_{\Sigma_{\tau_n} \cap\{r\geq r_I\}} |L(r^2L) K\hpsi_{|m|\geq 2}|^2\,d\sigma dr\\
			\leq C(1+\tau_n^{(2)})^{-3}\left[\int_{\Sigma_0 \cap\{r\geq r_I\}}r  |L(r^2L)^2 K \hpsi_{|m|\geq 2}|^2\,d\sigma dr+D^{(1)}[\phi]+D^{(1)}[K\phi]\right]
			\end{multline*}
			
			We then infer that \eqref{eq:extradecKderinfty} holds along the sequence $\{\tau_n^{(2)}\}$ by once more applying a Hardy inequality and applying \eqref{eq:step3edecaylessdegKKinv} to estimate the resulting boundary terms. We then obtain \eqref{eq:extradecKderinfty} for all $\tau\geq 0$ by taking $\tau_n^{(2)}<\tau\leq \tau_n^{(2)}$ and applying \eqref{eq:rpinf} with $p=2$ in the interval $[\tau_n^{(2)}, \tau]$.
			\end{proof}
			
			\begin{remark}
			The restriction $|m|\geq 2$ in Proposition \ref{prop:mgeq2} is in fact not necessary. It is made to ensure a long enough hierarchy of $r$-weighted energy estimates in the region $r>r_I$, obtained by commuting with $(r^2L)^2$. In the $|m|=1$ case, one cannot directly commute with $(r^2L)^2$, as in \eqref{eq:rpinfcommmgeq2}, as this estimate relies on a Poincar\'e inequality on $\s^2$ with constants that need to be suitably large. Instead, it is possible to obtain a suitably long hierarchy of $r$-weighted energy estimates by considering Newman--Penrose quantities; see \cite[\S 5.2]{aagkerr}. 
			
			Note that the expected sharp decay rates for $|m|=1$ would be larger than the decay rates for $|m|\geq 2$, since $\alpha_{m|m|}>0$ for $|m|=1$ and $<0$ for $|m|\geq 2$, so the $|m|=1$ part of $\phi$ would contribute to sub-leading order in the late-time asymptotics of $\phi$. For this reason, we have decided to exclude it from the decay estimates in this section.
			\end{remark}
			
				\begin{proposition}
				\label{prop:improvedenergydecay}
				Let $\eta>0$ be arbitrarily small and let $r_I>1$. Then there exists a constant $C>0$ such that
\begin{align}
\label{eq:rwdecaymgeq2b}
\int_{\Sigma_{\tau}}\mathcal{E}_{0}[\hphi_{|m|\geq 2}]\, r^2d\sigma dr+\int_{\Sigma_{\tau}\cap\{r\geq r_I\}}r^{1+\delta} |L\hphi_{|m|\geq 2}|^2\, d\sigma dr\leq C (1+\tau)^{-2-\frac{\nu_m}{2}+\eta} D_2^{(1)}[\phi].
\end{align}
			\end{proposition}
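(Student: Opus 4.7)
The plan is to iterate the strategy of Proposition \ref{prop:edecay} once more, but applied directly to $\hphi_{|m|\geq 2}$ (rather than to $K\hphi$), with the essential new input being the improved $K$-derivative decay at infinity from Proposition \ref{prop:mgeq2}. Specifically, I would combine the $r^{p}$-hierarchy \eqref{eq:rpinf} in the far region with the near-horizon hierarchy \eqref{eq:horminMest} and the Cauchy-Schwarz/interpolation trick that was used to pass from Proposition \ref{prop:edecayKinv} to Proposition \ref{prop:nondegedecay}, leveraging the additional decay available on azimuthal modes with $|m|\geq 2$.

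First, I would handle the $r$-weighted term at infinity. Applying \eqref{eq:rpinf} with $p=1+\delta$ to $\hpsi_{|m|\geq 2}$ (noting that $\square_g\hphi = -\square_g\Pi$ satisfies the required decay by Proposition \ref{prop:decayPi}), the initial flux $\int_{\Sigma_0\cap\{r\geq r_I\}}r^{2-\delta}|L\hpsi|^2\,d\sigma dr$ is controlled by $D_2^{(1)}[\phi]$. Together with the mean-value theorem along a dyadic sequence and interpolation (as in Step 2 of the proof of Proposition \ref{prop:edecay}), and using Proposition \ref{prop:mgeq2} to absorb the $K$-derivative terms that arise in the iteration, this produces decay of $\int_{\Sigma_\tau\cap\{r\geq r_I\}}r^{1+\delta}|L\hpsi_{|m|\geq 2}|^2\,d\sigma dr$ along the dyadic sequence; one then extends to all $\tau$ and rewrites in terms of $L\hphi$ using $\hpsi = (r^2+a^2)^{1/2}\hphi$ together with a Hardy inequality.

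Second, I would treat the non-degenerate energy $\int_{\Sigma_\tau}\mathcal{E}_0[\hphi_{|m|\geq 2}]\,r^2 d\sigma dr$ by splitting $\Sigma_\tau$ as in the proof of Proposition \ref{prop:nondegedecay} into an intermediate region $r\geq 1+(1+\tau)^{-1}$, a near-horizon region $r-1\in[r_{j+1},r_j]$ on a dyadic scale, and a far region $r\geq r_I$. In the intermediate and near-horizon regions, the commutator identity \eqref{eq:convKder} trades the missing power of $(r-1)$ for an extra $K$-derivative, so that the near-horizon $(r-1)^{-1}$-weighted estimates from \eqref{eq:horminMest}, applied to $\widetilde{\phi}$ with $K\widetilde{\phi}=\hphi$ (as constructed in \S\ref{sec:ellipticKinv}), together with the improved decay \eqref{eq:step3edecaylessdegKKinv} for $\int\mathcal{E}_{1+\epsilon}[K\hphi]\, r^2 d\sigma dr$, feed into the dyadic-sum argument. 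The target rate $(1+\tau)^{-2-\nu_m/2+\eta}$ is obtained by an interpolation (a geometric-mean Cauchy--Schwarz) between the baseline decay $\int\mathcal{E}_2[\hphi]\lesssim \tau^{-\nu_m+\eta}$ from Proposition \ref{prop:nondegedecay} and the improved $(1+\tau)^{-3}$ rate of Proposition \ref{prop:mgeq2} for $LK\hpsi_{|m|\geq 2}$, with the mean of the exponents $-1-\nu_m$ (from the $\hphi$-energy) and $-3$ (from the $|m|\geq 2$ input) yielding exactly $-2-\nu_m/2$.

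The main obstacle I expect is obtaining the precise half-integer exponent $\nu_m/2$. This fractional rate signals that neither of the existing estimates for $\hphi$ or $K\hphi$ alone is sufficient; a genuine interpolation argument is needed, and the splitting parameter (the dyadic scale $r-1\sim 2^{-j}(1+\tau)^{-1}$ used in the near-horizon region, or an equivalent $\tau^{-\alpha}$ split) must be tuned to equalise the contributions of the slowly-decaying $\mathcal{E}_2$-bound near the horizon and the fast-decaying $K$-energy bound. Coordinating this across the three spatial regions while keeping the initial-data norm $D_2^{(1)}[\phi]$ finite, and making sure that the $|m|\geq 2$ hypothesis is used wherever Proposition \ref{prop:mgeq2} is invoked, is the delicate part of the argument.
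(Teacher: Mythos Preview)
Your proposal assembles the right ingredients (Propositions \ref{prop:edecayKinv}, \ref{prop:nondegedecay}, \ref{prop:mgeq2}) and the arithmetic $\frac{1}{2}((-1-\nu_m)+(-3))=-2-\tfrac{\nu_m}{2}$ is exactly what drives the rate. However, the paper obtains this exponent through a different mechanism than the near-horizon dyadic splitting you describe.

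The paper first combines the Morawetz estimate \eqref{eq:morneartrapps} with the $r^p$-estimate \eqref{eq:rpinfcomm} at $p=2-\delta$ to bound the \emph{spacetime} integral of $(1-r^{-1})^{\delta}\mathcal{E}_0[\hphi]+\mathbf{1}_{\{r\geq r_I\}}r^{1-\delta}|L\hpsi|^2$ in terms of the fluxes $\int\mathcal{E}_{1+\epsilon}[K^k\hphi]$ and $\int r^{2-\delta}|L\hpsi|^2$ at $\tau_A$. The mean-value theorem then gives decay $(1+\tau_n)^{-2-\nu_m+\eta}$ along a dyadic sequence, using \eqref{eq:non1degdecay} and \eqref{eq:step3edecayrsqKinv}. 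To extend to all $\tau$, the paper does \emph{not} re-run a dyadic near-horizon decomposition; instead it applies the fundamental theorem of calculus in the $K$-direction to the full energy density and Young-splits the resulting cross terms. In the bulk this is done with weights $\tau_n^{\mp 1}$, but in the far region the split is
\[
2r^{1-\delta}\,\Re\bigl(L\hpsi\,\overline{LK\hpsi}\bigr)\ \leq\ \tau_n^{-1+\nu_m/2}\,r^{1-\delta}|L\hpsi|^2\ +\ \tau_n^{\,1-\nu_m/2}\,r^{1-\delta}|LK\hpsi|^2.
\]
The spacetime integrals of the two pieces are then bounded via \eqref{eq:rpinfcomm} by the fluxes $\int r^{2-\delta}|L\hpsi|^2\lesssim\tau_n^{-1-\nu_m}$ (from \eqref{eq:step3edecayrsqKinv}) and $\int r^{2-\delta}|LK\hpsi|^2\lesssim\tau_n^{-3}$ (from \eqref{eq:extradecKderinfty}), and the weight choice equalises them at $\tau_n^{-2-\nu_m/2}$. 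So the half-integer loss occurs in the \emph{far} region, not near the horizon: the bulk Young split with weights $\tau_n^{\mp 1}$ actually produces the better rate $\tau_n^{-2-\nu_m}$, since $\int\mathcal{E}_{1+\epsilon}[K\hphi]\lesssim\tau^{-3-\nu_m}$ by \eqref{eq:step3edecaylessdegKKinv}. A final interpolation with \eqref{eq:step3edecayrsqKinv} and \eqref{eq:step2edecaylessdegKinv} removes the $(1-r^{-1})^{\delta}$ and upgrades $r^{1-\delta}$ to $r^{1+\delta}$.

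Your near-horizon proposal (splitting $r-1\sim 2^{-j}(1+\tau)^{-1}$ and invoking \eqref{eq:convKder}) is modelled on Proposition \ref{prop:nondegedecay}, but that argument only produced $\tau^{-\nu_m}$ for $\mathcal{E}_2$; to reach $\tau^{-2-\nu_m/2}$ for $\mathcal{E}_0$ you would still need the far-region mechanism above. Also, the sentence ``interpolation between $\int\mathcal{E}_2[\hphi]\lesssim\tau^{-\nu_m}$ and $(1+\tau)^{-3}$ for $LK\hpsi$'' conflates two different quantities in different regions; it is the pair $\bigl(\int r^{2-\delta}|L\hpsi|^2,\,\int r^{2-\delta}|LK\hpsi|^2\bigr)$, with rates $-1-\nu_m$ and $-3$, that actually gets balanced. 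Your fourth paragraph correctly anticipates a tuned splitting parameter, but the tuning lives in the Young weight for the far-region FTC step, not in the near-horizon dyadic scale.
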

			\begin{proof}
			Let $\phi$ be restricted to $|m|\geq 2$. We apply \eqref{eq:morneartrapps} and \eqref{eq:rpinfcomm} with $p=2-\delta$ to estimate for all $0\leq \tau_A\leq \tau_B$:
			\begin{multline*}
			\int_{\tau_A}^{\tau_B}\int_{\Sigma_{\tau}} (1-r^{-1})^{\delta} \mathcal{E}_0[\hphi]\,r^2d\sigma dr+\int_{\tau_A}^{\tau_B}\int_{\Sigma_{\tau}\cap\{r\geq r_I\} } r^{1-\delta}|L\hpsi^2|^2d\sigma drd\tau\\
			\leq C\sum_{k\leq 1}\int_{\Sigma_{\tau_A}}\mathcal{E}_{1+\epsilon}[K^k\phi]\,r^2d\sigma dr+C\int_{\Sigma_{\tau}\cap\{r\geq r_I\}}r^{2-\delta} |L \hpsi|^2\,d\sigma dr\\
			+C(1+\tau_A)^{-1-\nu_m}D^{(1)}[\phi].
			\end{multline*}
			
			By the mean-value theorem and application of \eqref{eq:non1degdecay} and \eqref{eq:step3edecayrsqKinv}, we therefore obtain that along a suitable dyadic sequence $\{\tau_n\}$
			\begin{equation*}
			\int_{\Sigma_{\tau_n}} (1-r^{-1})^{\delta} \mathcal{E}_0[\hphi]\,r^2d\sigma dr+\int_{\Sigma_{\tau_n}\cap\{r\geq r_I\} } r^{1-\delta}|L\hpsi|^2d\sigma dr\leq C (1+\tau_n)^{-2-\nu_m+\eta}D^{(1)}[\phi].
			\end{equation*}
			It remains to extend the above estimate to all $\tau\geq 0$. Let $\tau_n\leq \tau\leq \tau_{n+1}$. Then we apply the fundamental theorem of calculus in the $K$-direction to obtain:
			\begin{equation*}
			\begin{split}
		\int_{\Sigma_{\tau}} &(1-r^{-1})^{\delta} \mathcal{E}_0[\hphi]\,r^2d\sigma dr+\int_{\Sigma_{\tau}\cap\{r\geq r_I\} } r^{1-\delta}|L\hpsi|^2d\sigma dr\\
		\leq &\: \int_{\Sigma_{\tau_n}} (1-r^{-1})^{\delta} \mathcal{E}_0[\hphi]\,r^2d\sigma dr+\int_{\Sigma_{\tau_n}\cap\{r\geq r_I\} } r^{1-\delta}|L\hpsi|^2d\sigma dr+\int_{\tau_n}^{\tau}\int_{\Sigma_{\tau}} K((1-r^{-1})^{\delta} \mathcal{E}_0[\hphi])\,r^2\,d\sigma dr d\tau\\
		&\:+\int_{\tau_n}^{\tau} \int_{\Sigma_{\tau}\cap \{r\geq r_I\}}K(r |L\hpsi|^2)\,\,d\sigma dr d\tau\\
		\leq &\:  \int_{\Sigma_{\tau_n}} (1-r^{-1})^{\delta} \mathcal{E}_0[\hphi]\,r^2d\sigma dr+\int_{\Sigma_{\tau_n}\cap\{r\geq r_I\} } r^{1-\delta}|L\hpsi|^2d\sigma dr\\
		&\:+C \int_{\tau_n}^{\tau} \int_{\Sigma_{\tau}}(1-r^{-1})^{\delta} (\tau_n^{-1}\mathcal{E}_0[\hphi]+\tau_n \mathcal{E}_0[K\hphi])\,r^2\,d\sigma dr d\tau\\
		&\:+C\int_{\tau_n}^{\tau} \int_{\Sigma_{\tau}\cap \{r\geq r_I\}}\tau_n^{-1+\frac{\nu_m}{2}}r^{1-\delta} |L\hpsi|^2+\tau_n^{1-\frac{\nu_m}{2}} r^{1-\delta} |LK\hpsi|^2\,\,d\sigma dr d\tau.
		\end{split}
			\end{equation*}
			We estimate the spacetime integrals on the very RHS above via \eqref{eq:morneartrapps} and \eqref{eq:rpinfcomm} with $p=2-\delta$ by applying additionally \eqref{eq:extradecKderinfty} and \eqref{eq:step3edecaylessdegKKinv} to the terms involving $K$-derivatives:
			\begin{equation*}
			\begin{split}
		\int_{\Sigma_{\tau}} (1-r^{-1})^{\delta} \mathcal{E}_0[\hphi]\,r^2d\sigma dr+\int_{\Sigma_{\tau}\cap\{r\geq r_I\} } r^{1-\delta}|L\hpsi|^2d\sigma dr\leq C(1+\tau)^{-2-\frac{\nu_m}{2}+\eta}(D^{(1)}[\phi]+D^{(1)}[K\phi]).
		\end{split}
\end{equation*}
			We finally interpolate the above estimate with \eqref{eq:step3edecayrsqKinv} and \eqref{eq:step2edecaylessdegKinv}.
			\end{proof}

			\subsection{Pointwise decay estimates}
			
			We can apply the energy decay estimates from Propositions \ref{prop:nondegedecay} and \ref{prop:improvedenergydecay} to obtain pointwise estimates for $\hphi$. We will assume in this section that $|m|\geq 2$.
						\begin{proposition}
						\label{prop:pointwest}
			Let $\eta>0$ be arbitrarily small and $r_I>1$. Then there exist $\epsilon>0$ suitably small, such that for $\delta,\eta>0$ arbitrarily small
			\begin{align}
			\label{eq:pointw1}
			||r^{\frac{1}{2}}(\phi-\Pi)||_{L^{\infty}(\Sigma_{\tau})}= &\:||r^{\frac{1}{2}}\hphi||_{L^{\infty}(\Sigma_{\tau})}\leq C (1+\tau)^{-\frac{1}{2}-\frac{\nu_m}{2}+\eta}\sqrt{\sum_{2k+l\leq 2}D^{(1)}[Q^lK^k\phi]},\\
			\label{eq:pointw2}
			||(1-r^{-1})^{\frac{1}{2}}(\phi-\Pi)||_{L^{\infty}(\Sigma_{\tau})}= &\:||(1-r^{-1})^{\frac{1}{2}}\hphi||_{L^{\infty}(\Sigma_{\tau}\cap\{r\geq r_I\})}\\ \nonumber
			\leq&\: C(1+\tau)^{-1-\frac{\nu_m}{4}+\eta}\sqrt{\sum_{2k+l\leq 2}D_2^{(1)}[Q^lK^k\phi]},\\
			\label{eq:pointw3}
			||r(\phi-\Pi)||_{L^{\infty}(\Sigma_{\tau}\cap\{r\geq r_I\})}= &\:||r\hphi||_{L^{\infty}(\Sigma_{\tau}\cap\{r\leq r_I\})}\\ \nonumber
			\leq&\: C(1+\tau)^{-1-\frac{\nu_m}{4}+\eta}\sqrt{\sum_{2k+l\leq 2}D_2^{(1)}[Q^lK^k\phi]},
			\end{align}
			with $Q$ the Carter operator.
			\end{proposition}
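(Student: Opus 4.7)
The proof will proceed by combining three standard ingredients: a Sobolev embedding on 2-spheres, the elliptic inequality from \S\ref{sec:sphere} that controls angular derivatives via the Carter operator $Q$ and the Killing fields $K,\Phi$, and a fundamental-theorem-of-calculus argument in $r$ supplemented by a Hardy inequality. These are then paired with the energy decay estimates already established in Propositions \ref{prop:nondegedecay}, \ref{prop:edecayKinv}, and \ref{prop:improvedenergydecay}, applied not only to $\hphi$ but also to the commuted quantities $Q^{l}K^{k}\hphi$. The latter makes sense because $K$ is Killing and $Q$ is (the angular piece of) the Carter symmetry operator, so $Q^{l}K^{k}\phi$ satisfies the homogeneous wave equation whenever $\phi$ does; Proposition \ref{prop:nondegedecay} and its companions apply verbatim to $Q^{l}K^{k}\hphi$, up to replacing $\phi$ by $Q^{l}K^{k}\phi$ in the data norms.

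For a single azimuthal mode $\phi_m$ with $|m|\geq 2$, I would first use the Sobolev embedding $H^{2}(\s^{2})\hookrightarrow L^{\infty}(\s^{2})$ followed by the elliptic estimate of \S\ref{sec:sphere} to obtain, at any fixed $(\tau,r)$,
\begin{equation*}
|\hphi|^{2}(\tau,r,\theta,\tphi)\leq C_{m}\sum_{2k+l\leq 2}\int_{\s^{2}}|Q^{l}K^{k}\hphi|^{2}(\tau,r,\theta',\tphi')\,d\sigma,
\end{equation*}
where the contributions of $\Phi^{2}\hphi$ and $K\Phi\hphi$ terms appearing in the elliptic estimate are absorbed using $\Phi^{2}\hphi=-m^{2}\hphi$ and the restriction $|m|\geq 2$ (which also lets us absorb the zeroth-order $|\hphi|^{2}$ term via $16|\hphi|^{2}\leq |\Phi^{2}\hphi|^{2}$). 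Next, since the assumed regularity ensures $r|\hphi|^{2}\to 0$ as $r\to\infty$, FTC in $r$ together with Young's inequality and the Hardy inequality $\int_{r_{0}}^{\infty}|f|^{2}\,dr\leq 4\int_{r_{0}}^{\infty}r^{2}|\partial_{r}f|^{2}\,dr$ yield
\begin{equation*}
r_{0}|\hphi|^{2}(\tau,r_{0},\theta,\tphi)\leq C_{m}\sum_{2k+l\leq 2}\int_{\Sigma_{\tau}\cap\{r\geq r_{0}\}}r^{2}\,\mathcal{E}_{2}[Q^{l}K^{k}\hphi]\,d\sigma dr.
\end{equation*}

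To conclude \eqref{eq:pointw1}, I would feed in Proposition \ref{prop:nondegedecay} applied to $Q^{l}K^{k}\phi$, obtaining the right-hand side $\leq C_{m}(1+\tau)^{-\nu_{m}+\eta}\sum_{2k+l\leq 2}D^{(1)}[Q^{l}K^{k}\phi]$; taking square roots already gives a weighted pointwise bound, but only with rate $(1+\tau)^{-\nu_{m}/2+\eta/2}$. The extra $(1+\tau)^{-1/2}$ needed to reach the sharp rate $(1+\tau)^{-\frac{1}{2}-\frac{\nu_{m}}{2}+\eta}$ will come from interpolating against the faster energy decay $(1+\tau)^{-1-\nu_{m}+\eta}$ for $\mathcal{E}_{1+\epsilon}$ from \eqref{eq:non1degdecay}. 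Concretely, in Step~2 I would apply the Young's inequality $2r|f||\partial_{r}f|\leq \lambda|f|^{2}+\lambda^{-1}r^{2}|\partial_{r}f|^{2}$ with a $\tau$-dependent choice $\lambda\sim(1+\tau)^{1/2}$ and control $\int|f|^{2}$ by the $\mathcal{E}_{1+\epsilon}$-flux (using Hardy once more, with the degeneracy $(1-r^{-1})^{1-\epsilon}$ only hurting in a near-horizon strip whose contribution is handled separately by $\mathcal{E}_{2}$); optimising in $\lambda$ produces the geometric mean of the two decay rates.

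For \eqref{eq:pointw2} and \eqref{eq:pointw3}, the same Sobolev+FTC machinery applies, but with the $\mathcal{E}_{0}[\hphi_{|m|\geq 2}]$ decay $(1+\tau)^{-2-\nu_{m}/2+\eta}$ from Proposition~\ref{prop:improvedenergydecay} instead. The $(1-r^{-1})^{1/2}$ weight in \eqref{eq:pointw2} is precisely what is needed to swallow the $(1-r^{-1})^{2}$ degeneracy of $\mathcal{E}_{0}$ when converting an energy flux into a pointwise bound. For \eqref{eq:pointw3}, which requires an $r^{-1}$ pointwise bound near infinity, the additional input is the $r^{1+\delta}$-weighted control $\int_{r\geq r_{I}}r^{1+\delta}|L\hphi_{|m|\geq 2}|^{2}\,d\sigma dr$ from the same proposition: applying FTC to $r\hphi$ in the $L$ direction out to $\mathcal{I}^{+}$ produces the required pointwise bound on $r\hphi$ with decay $(1+\tau)^{-1-\nu_{m}/4+\eta}$. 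The main technical obstacle in the whole argument is the interpolation in Step~3 above: one must balance the $r$-weight against the $(1-r^{-1})$-degeneracy of $\mathcal{E}_{1+\epsilon}$ near the event horizon in such a way that no logarithmic loss in $\tau$ appears and the rate $(1+\tau)^{-\frac{1}{2}-\frac{\nu_{m}}{2}+\eta}$ is achieved uniformly in $r$.
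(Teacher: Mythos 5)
Your overall architecture (fundamental theorem of calculus in $r$, Sobolev on $\s^2$ with $H^2(\s^2)$ controlled via $Q,K,\Phi^2$, feed in the energy decay from \S\ref{sec:decay}) matches the paper for \eqref{eq:pointw2} and \eqref{eq:pointw3}, and your use of the $r^{1+\delta}|L\hpsi|^2$ flux from Proposition~\ref{prop:improvedenergydecay} for \eqref{eq:pointw3} is essentially the paper's argument. However, your route to \eqref{eq:pointw1} contains a genuine gap.

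The paper does not use Young's inequality with a $\tau$-dependent splitting parameter. Instead it applies Cauchy--Schwarz to $\int_r^\infty |\tX\hphi|\,dr'$ with the degenerate weight distributed asymmetrically:
\begin{equation*}
|\hphi|(\tau,r,\theta,\tphi)\leq \sqrt{\int_r^\infty r'^{-2}(1-r'^{-1})^{-1+\epsilon}\,dr'}\;\sqrt{\int_r^\infty r'^2(1-r'^{-1})^{1-\epsilon}|\tX\hphi|^2\,dr'}.
\end{equation*}
The first factor is $O(r^{-1/2})$ because $(1-r'^{-1})^{-1+\epsilon}$ is integrable at $r'=1$, and the second factor is exactly the $\mathcal{E}_{1+\epsilon}$-flux. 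Feeding in \eqref{eq:non1degdecay}, which decays at rate $(1+\tau)^{-1-\nu_m+\eta}$, and taking a square root immediately yields \eqref{eq:pointw1}. No interpolation is needed; the ``main technical obstacle'' you flag at the end of your argument does not arise.

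Your Step~2/3 would not reproduce the correct rate even if the details were filled in. The Young split $2r|\hphi||\partial_r\hphi|\leq \lambda|\hphi|^2 + \lambda^{-1}r^2|\partial_r\hphi|^2$ produces the unweighted term $\int|\hphi|^2\,dr$, and the Hardy inequality you quote, $\int|f|^2\,dr\leq 4\int r^2|\partial_r f|^2\,dr$, bounds it by the non-degenerate $\mathcal{E}_2$-flux, not $\mathcal{E}_{1+\epsilon}$. A degenerate Hardy inequality of the type $\int(1-r^{-1})^{-1+\epsilon}|\hphi|^2\,dr\lesssim\int(1-r^{-1})^{1-\epsilon}r^2|\tX\hphi|^2\,dr$ would require a boundary term at the horizon (since the exponent $-1+\epsilon$ is on the side requiring vanishing at $r=1$, and $\hphi$ does not vanish there), so your near-horizon strip fix is not the one-liner it appears to be. Even setting this aside, if you interpolate between what you can actually access via Hardy, namely $\mathcal{E}_2$ (rate $\tau^{-\nu_m}$) and $\mathcal{E}_0$ (rate $\tau^{-2-\nu_m/2}$), the geometric mean gives $\tau^{-1-\frac{3\nu_m}{4}}$ for $r|\hphi|^2$, which is strictly slower than the required $\tau^{-1-\nu_m}$. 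The essential insight you are missing is that the degenerate weight must be placed inside the Cauchy--Schwarz so that $\mathcal{E}_{1+\epsilon}$ is produced directly.
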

			\begin{proof}
			We use that $\lim_{r\to \infty} r\phi|_{\Sigma_{\tau}}(r,\theta,\tphi)=0$ and apply the fundamental theorem of calculus and Cauchy--Schwarz to estimate
			\begin{equation*}
			|\hphi|(\tau,r,\theta,\tphi)\leq \int_r^{\infty} |\tX \hphi |(\tau,r,\theta,\tphi)\,dr'\leq \sqrt{\int_{r}^{\infty}r'^{-2}(1-r'^{-1})^{-1+\epsilon}\,dr'}\sqrt{\int_r^{\infty} r^2(1-r^{-1})^{1-\epsilon}|\tX \hphi|^2\,dr'}.
			\end{equation*}
			Hence,
			\begin{equation*}
			\int_{\s^2_{\theta,\tphi}} |\hphi|^2(\tau,r,\theta,\tphi)\,d\sigma\leq C r^{-1}\int_{\Sigma_{\tau}} \mathcal{E}_{1+\epsilon}[\hphi]\,r^2d\sigma dr.
			\end{equation*}
			We conclude \eqref{eq:pointw1} by applying \eqref{eq:step2edecaylessdegKinv} and a Sobolev inequality on $\s^2$ with the fact that second-order derivatives on $\s^2$ are controlled by $|Q\hphi|^2+|K\hphi|^2+|\Phi^2 \hphi|^2$. Since $Q,K,\Phi$ are Killing operators, we can repeat all estimates with $\hphi$ replaced by $Q\phi,\Phi^2\phi, K\phi$ to obtain additionally control over second-order derivatives on $\s^2$.
			
			We apply the fundamental theorem of calculus again, together with Young's inequality, to estimate
			\begin{equation*}
			\begin{split}
			r^{-2+\epsilon} &(1-r^{-1})^{1+\epsilon} |r\hphi|^2(\tau,r,\theta,\tphi)=\int_{1}^{r}\tX(r^{-2+\epsilon}(1-r^{-1})^{1+\epsilon} |r\hphi|^2)(\tau,r',\theta,\tphi)\,dr'\\
			\leq &\: \int_{1}^{r}(2 r^{-2+\epsilon} (1-r^{-1})^{1+\epsilon} |r\hphi ||\tX(r\hphi)|+(1+\epsilon)r^{-4+\epsilon}(1-r^{-1})^{\epsilon}|\hphi|^2)(\tau,r',\theta,\tphi)\,dr'\\
			\leq &\: C\int_r^{\infty}( r^{-3+\epsilon}(1-r^{-1})^{\epsilon}|r\hphi |^2+r^{-1+\epsilon}(1-r^{-1})^{2+\epsilon}|\tX(r\hphi)|^2)(\tau,r',\theta,\tphi)\,dr'.
			\end{split}
			\end{equation*}
			Hence,
			\begin{equation*}
			\int_{\s^2_{\theta,\tphi}} r^{-2+\epsilon}(1-r^{-1})^{1+\epsilon}|\hphi|^2(\tau,r,\theta,\tphi)\,d\sigma\leq C \int_{\Sigma_{\tau}} (1-r^{-1})^{\epsilon}r^{-1+\epsilon}\left[(1-r^{-1})^{2}|\tX (r\hphi)|^2+|\snabla_{\s^2}\hphi|^2\right]\, d\sigma dr.
			\end{equation*}

			We derive \eqref{eq:pointw2} by proceeding as in the derivation of \eqref{eq:pointw1}, but distributing the weights in $(1-r^{-1})$ differently, so that we see an integral of $\mathcal{E}_0[\hphi]$ instead of  $\mathcal{E}_{1+\epsilon}[\hphi]$:
			\begin{equation*}
			|\hphi|(r,\theta,\tphi)\leq \int_r^{\infty} |\tX \hphi |(\tau,r,\theta,\tphi)\,dr'\leq \sqrt{\int_{r}^{\infty}r'^{-2}(1-r'^{-1})^{-2}\,dr'}\sqrt{\int_r^{\infty} r^2(1-r^{-1})^{2}|\tX \hphi|^2\,dr'}.
			\end{equation*}
			Hence,
			\begin{equation*}
			\int_{\s^2_{\theta,\tphi}} |\hphi|^2(\tau,r,\theta,\tphi)\,d\sigma\leq C (r-1)^{-1}\int_{\Sigma_{\tau}} \mathcal{E}_{0}[\hphi]\,r^2d\sigma dr.
			\end{equation*}
			Then \eqref{eq:pointw2} follows after applying \eqref{eq:rwdecaymgeq2b} together with a Sobolev inequality on $\s^2$ as above.
			
			We can improve \eqref{eq:pointw2} with a growing $r$-weight on the LHS by applying the fundamental theorem of calculus to $|\hpsi|^2$ (with respect to $X$ instead of $\tX$):
			\begin{equation*}
			\begin{split}
			|\hpsi|^2(\tau,r,\theta,\varphi_*)\leq &\:|\hpsi|^2(r_I,\theta,\varphi_*)+2\int_R^{r}|\hpsi| |X \hpsi |(\tau,r,\theta,\varphi_*)\,dr'\\
			\leq &\:|\hpsi|^2(\tau,r_I,\theta,\varphi_*)+\int_R^{\infty}r^{-1-\delta}|\hpsi|^2(\tau,r,\theta,\varphi_*)\,dr'+\int_R^{\infty}r^{1+\delta}|X\hpsi|^2(\tau,r,\theta,\varphi_*)\,dr'\\
			\leq &\:|\hpsi|^2(\tau,r_I,\theta,\varphi_*)+C\int_R^{\infty}r^{1+\delta}|X\hpsi|^2(\tau,r,\theta,\varphi_*)\,dr'\\
			\leq &\:|\hpsi|^2(\tau,r_I,\theta,\varphi_*)+C\int_R^{\infty}r^{1+\delta}|X\hpsi|^2(\tau,r,\theta,\varphi_*)\,dr'\\
			\leq &\:|\hpsi|^2(\tau,r_I,\theta,\varphi_*)+C\int_R^{\infty}(r^{1+\delta}|L\hpsi|^2+r^{-1+\delta} |K\phi|^2+r^{-1+\delta}|\Phi \phi|^2 )(\tau,r,\theta,\varphi_*)\,dr'
			\end{split}
			\end{equation*}
			Hence, after integrating over $\s^2$, we can estimate
			
			\begin{equation*}
			\int_{\s^2_{\theta,\phi_*}} |\hpsi|^2(\tau,r,\theta,\tphi)\,d\sigma\leq r_I^2\int_{\s^2_{\theta,\phi_*}} |\hphi|^2(\tau,r_I,\theta,\tphi)\,d\sigma+  C\int_{\Sigma_{\tau}\cap\{r\geq r_I\}} r^2\mathcal{E}_{0}[\hphi]+r^{1+\delta}|L\hpsi|^2\,d\sigma dr.
			\end{equation*}
			The first term on the RHS above can be estimated by \eqref{eq:pointw2} and the second term by \eqref{eq:rwdecaymgeq2b}. We conclude \eqref{eq:pointw3} by applying a Sobolev inequality on $\s^2$ again. 
					\end{proof}
			
			We now rewrite the $L^{\infty}$ estimates in Proposition \ref{prop:pointwest} to obtain global late-time asymptotics, which is the main result of this section.
			\begin{corollary}
			\label{cor:latetimeasym}
			Let $|m|\geq 2$ and assume that $\nu_m>0$. Let $0<\eta<\frac{\nu_m}{4}$ be arbitrarily small. Then there exists a constant $C_m>0$ such that
			\begin{multline}
			\label{eq:tail}
			\left|\phi_m-\Pi \right|(\tau,r,\theta,\tphi)\\
			\leq C_mr^{-1}((\tau+1)(1-r^{-1})+1)^{-\frac{1}{2}}(\tau+1)^{-\frac{1}{2}-\frac{\nu_m}{4}+\eta}\sqrt{\sum_{2k+l\leq 2}D_2^{(1)}[Q^lK^k\phi]}.
			\end{multline}
			\end{corollary}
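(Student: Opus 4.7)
The plan is to derive \eqref{eq:tail} by partitioning $\Sigma_{\tau}$ into three disjoint regions, each tailored to one of the three pointwise estimates established in Proposition \ref{prop:pointwest}, and checking in each region that the right-hand side of \eqref{eq:tail} dominates the pointwise bound at hand. Concretely, I would set
\begin{align*}
\mathcal{R}_{\mathrm{far}}&:=\Sigma_{\tau}\cap\{r\geq r_I\},\\
\mathcal{R}_{\mathrm{mid}}&:=\Sigma_{\tau}\cap\{r\leq r_I\}\cap\{(\tau+1)(1-r^{-1})\geq 1\},\\
\mathcal{R}_{\mathrm{hor}}&:=\Sigma_{\tau}\cap\{(\tau+1)(1-r^{-1})\leq 1\},
\end{align*}
which together cover $\Sigma_{\tau}$.

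In $\mathcal{R}_{\mathrm{far}}$, the factor $1-r^{-1}$ is bounded below by $1-r_I^{-1}>0$, so $((\tau+1)(1-r^{-1})+1)^{-\frac{1}{2}}\sim (\tau+1)^{-\frac{1}{2}}$ and the right-hand side of \eqref{eq:tail} reduces to $C_m r^{-1}(\tau+1)^{-1-\frac{\nu_m}{4}+\eta}\sqrt{\sum_{2k+l\leq 2}D_2^{(1)}[Q^lK^k\phi]}$, which is precisely the content of \eqref{eq:pointw3}. In $\mathcal{R}_{\mathrm{mid}}$, we have $r^{-1}\sim 1$ and $((\tau+1)(1-r^{-1})+1)^{-\frac{1}{2}}\sim ((\tau+1)(1-r^{-1}))^{-\frac{1}{2}}$, so the right-hand side of \eqref{eq:tail} becomes (up to a constant)
\begin{equation*}
C_m(1-r^{-1})^{-\frac{1}{2}}(\tau+1)^{-1-\frac{\nu_m}{4}+\eta}\sqrt{\sum_{2k+l\leq 2}D_2^{(1)}[Q^lK^k\phi]},
\end{equation*}
which is exactly \eqref{eq:pointw2}. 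Finally, in $\mathcal{R}_{\mathrm{hor}}$, both $r^{-1}$ and $((\tau+1)(1-r^{-1})+1)^{-\frac{1}{2}}$ are of order one, so the right-hand side of \eqref{eq:tail} is comparable to $C_m(\tau+1)^{-\frac{1}{2}-\frac{\nu_m}{4}+\eta}\sqrt{\sum_{2k+l\leq 2}D_2^{(1)}[Q^lK^k\phi]}$. But \eqref{eq:pointw1} delivers the strictly better bound $C_m r^{-\frac{1}{2}}(\tau+1)^{-\frac{1}{2}-\frac{\nu_m}{2}+\eta}\sqrt{\sum_{2k+l\leq 2}D^{(1)}[Q^lK^k\phi]}$, which (using $D^{(1)}\leq D_2^{(1)}$ and $\nu_m/2\geq \nu_m/4$) is dominated by the right-hand side of \eqref{eq:tail}.

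Since all the hard work (the energy decay, the construction of $K^{-1}\hphi$, and the conversion to pointwise bounds) is already packaged into Proposition \ref{prop:pointwest}, the only real task is the bookkeeping to check that the single expression on the right-hand side of \eqref{eq:tail} interpolates correctly between the three regimes; there is no remaining analytic obstacle. The mildly delicate point is ensuring that the transition regions (where $(\tau+1)(1-r^{-1})\sim 1$ or where $r\sim r_I$) are covered by at least one of the three bounds with compatible constants, which is automatic from the inclusion of overlapping cut-offs in the definition of $\mathcal{R}_{\mathrm{mid}}$ and $\mathcal{R}_{\mathrm{hor}}$ and the continuity of the weight functions.
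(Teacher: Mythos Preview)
Your proposal is correct and takes essentially the same approach as the paper. The paper does not spell out a proof of Corollary~\ref{cor:latetimeasym} beyond noting that it amounts to ``rewriting the $L^{\infty}$ estimates in Proposition~\ref{prop:pointwest}'', and your region-by-region verification that the single interpolating weight $r^{-1}((\tau+1)(1-r^{-1})+1)^{-1/2}(\tau+1)^{-1/2-\nu_m/4+\eta}$ dominates each of \eqref{eq:pointw1}--\eqref{eq:pointw3} in its natural domain is exactly the intended argument.
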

			
			\section{Azimuthal instabilities}
			\label{sec:instab}
	In this section, we will use the decay results for $\hphi$ from \S \ref{sec:decay} to derive \emph{non-decay} and instability results. As we are proving lower bounds for non-negative definite quantities, we can restrict without loss of generality to a single (or bounded set of) $\phi_m$. We will take $|m|\geq 2$ so that by Lemma \ref{lm:negalpha} $\alpha_{ m |m|}<0$.
	
	Throughout this section we will assume that either \eqref{eq:asmsphev} holds or that we restrict to initial data satisfying \eqref{eq:restrinitialdataalpha0}.
	
	We first demonstrate the manifestation of an instability in the form of non-decay at the level of the non-degenerate energy $\int_{\Sigma_{\tau}}\mathcal{E}_2[\phi]\,r^2d\sigma dr$, with $\phi$ a solution to the homogeneous wave equation $\square_g\phi=0$.

	We define
	\begin{align*}
	\mathfrak{h}_m:=\sqrt{\sum_{\substack{|m|\leq \ell\leq l_m\\ \alpha_{m\ell}<0}}|{\mathfrak{h}}_{m \ell}|^2}
		\end{align*}

	\begin{proposition}
	\label{eq:energyinstab}
	Let $\eta>0$ be arbitrarily small. Then there exists constants $c_m,C_m>0$ such that
	\begin{equation*}
	\int_{\Sigma_{\tau}}\mathcal{E}_2[\phi_m]\,r^2d\sigma dr\geq c_m\mathfrak{h}_m^2-C_m(1+\tau)^{-\nu_m+\eta}D^{(1)}[\phi_m].
	\end{equation*}
	In particular, for each $m\in \Z$, there exists a $\tau^*_m\geq 0$ suitably large (depending on the initial data) such that for all $\tau\geq \tau^*_m$
	\begin{equation*}
	\int_{\Sigma_{\tau}}\mathcal{E}_2[\phi]\,r^2d\sigma dr\geq \frac{c_m}{2}\mathfrak{h}_m^2.
	\end{equation*}
	\end{proposition}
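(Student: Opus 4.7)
The plan is to prove the lower bound via a reverse triangle inequality, isolating the contribution of $\Pi$ and absorbing the error into the already-established decay of the difference function $\hphi = \phi_m - \Pi$. Concretely, since $\mathcal{E}_2$ is a sum of squared moduli of (co)vectors, for any complex-valued functions one has $|u+v|^2 \geq \tfrac{1}{2}|u|^2 - |v|^2$. Applying this pointwise with $u = $ (the appropriate derivative of $\Pi$) and $v = $ (the corresponding derivative of $\hphi$) yields
\begin{equation*}
\int_{\Sigma_\tau}\mathcal{E}_2[\phi_m]\, r^2\,d\sigma dr \;\geq\; \tfrac{1}{2}\int_{\Sigma_\tau}\mathcal{E}_2[\Pi]\, r^2\,d\sigma dr \;-\; \int_{\Sigma_\tau}\mathcal{E}_2[\hphi]\, r^2\,d\sigma dr.
\end{equation*}
The second integral on the right is immediately controlled by Proposition \ref{prop:nondegedecay}, which gives the claimed $C_m(1+\tau)^{-\nu_m+\eta}D^{(1)}[\phi_m]$ error. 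Thus the entire content of the proposition reduces to the assertion that
\begin{equation*}
\int_{\Sigma_\tau}\mathcal{E}_2[\Pi]\, r^2\,d\sigma dr \;\geq\; c_m\mathfrak{h}_m^2 \qquad \text{uniformly in } \tau \geq 0,
\end{equation*}
with a constant $c_m>0$ depending only on $m$.

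To establish this non-decay of the $\Pi$-energy, I would work mode-by-mode. Using orthogonality of $\{S_{m\ell}(\theta) e^{im\tphi}\}_\ell$ on $\mathbb{S}^2$ (noting that $X$, $T$ act trivially on the angular part, and $\slashed{\nabla}_{\mathbb{S}^2}$ preserves the angular decomposition), the $\mathcal{E}_2$-integral splits as a sum $\sum_{\ell\,:\,\alpha_{m\ell}<0}\int_{\Sigma_\tau}\mathcal{E}_2[\Pi_{m\ell}]\, r^2\,d\sigma dr$, and it suffices to lower bound each $\ell$-summand by $c_{m\ell}|\mathfrak{h}_{m\ell}|^2$. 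For a fixed $\ell$ with $\alpha_{m\ell}<0$ (which exists for $|m|\geq 2$ by Lemma \ref{lm:negalpha}), I would restrict the $r$-integration to the shell $r \in [1+(\tau+1)^{-1},\,1+2(\tau+1)^{-1}]$ and introduce the rescaled coordinate $y = (\tau+1)(r-1)\in[1,2]$. Plugging the explicit formula \eqref{eq:exprPi1}--\eqref{eq:exprPi4} for $\Pi_{m\ell}$ into the rescaling, using the near-horizon asymptotics $w_{\pm,m\ell}(r)\sim (r-1)^{-1/2-im\pm i\sqrt{-\alpha_{m\ell}}}$ and the identity $\tau + 4/(r-1)+1 = (\tau+1)(1+4/y)$, one obtains that on this shell $|\Pi_{m\ell}|$ is bounded uniformly (in $\tau$) by $|\mathfrak{h}_{m\ell}|(\tau+1)^{-1/2}$ times a function of $y$, while $|X\Pi_{m\ell}|^2 r^2 \sim (\tau+1)|\mathfrak{h}_{m\ell}|^2 G(y,\log(\tau+1))$, where $G$ is a bounded function built out of the oscillating factors $y^{\pm 2i\sqrt{-\alpha_{m\ell}}}$, $(1+4/y)^{\pm i\sqrt{-\alpha_{m\ell}}}$ and $(\tau+1)^{\mp 3i\sqrt{-\alpha_{m\ell}}}$. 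Together with $dr = dy/(\tau+1)$ this converts the integral to $\int_1^2 G(y, \log(\tau+1))\, dy \cdot |\mathfrak{h}_{m\ell}|^2$.

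The main obstacle is ensuring that this remaining $y$-integral is bounded below by a positive constant independent of $\tau$. A direct pointwise lower bound on $G$ is precluded by the oscillating phase factors. My plan to overcome this is to observe that the full energy density $\mathcal{E}_2[\Pi_{m\ell}]$ contains several non-negative summands ($|X\Pi_{m\ell}|^2$, $|T\Pi_{m\ell}|^2$, $r^{-2}|\widetilde{\snabla}_{\mathbb{S}^2}\Pi_{m\ell}|^2$), and, after rescaling, these assemble into a sum of moduli squared of independent $\C$-linear combinations of the quantities $y^{2i\sqrt{-\alpha_{m\ell}}}$, $1$, and their cross-partners coming from the $\widetilde{w}_\infty$ contribution in \eqref{eq:exprPi1}. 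Any such sum $\sum_j|\lambda_j + \mu_j y^{2i\sqrt{-\alpha_{m\ell}}}|^2$ integrated over $y\in[1,2]$ is bounded below by a positive multiple of $\sum_j(|\lambda_j|^2+|\mu_j|^2)$ (since the two $y$-profiles are $L^2$-linearly independent on $[1,2]$), and the constants $\lambda_j,\mu_j$ are built out of the fixed structural quantities $B_\pm, \widetilde{B}_\pm$ and $|\mathfrak{h}_{m\ell}|$, with the latter factoring out entirely. One concludes a lower bound $c_{m\ell}|\mathfrak{h}_{m\ell}|^2$ on this piece; integrating over $\mathbb{S}^2$ and summing over the finite set $\{\ell: \alpha_{m\ell}<0\}$ produces the desired $c_m\mathfrak{h}_m^2$. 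The cases \eqref{eq:exprPi2}--\eqref{eq:exprPi3} where one of $\widetilde{B}_\pm$ vanishes, and \eqref{eq:exprPi4} where $\alpha_{m\ell}>0$, require only trivial modifications and do not enter the non-decay argument since only $\alpha_{m\ell}<0$ modes contribute to $\mathfrak{h}_m$.

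Finally, the ``in particular'' statement follows immediately: given initial data with $\mathfrak{h}_m\neq 0$, the decaying error $C_m(1+\tau)^{-\nu_m+\eta}D^{(1)}[\phi_m]$ becomes smaller than $c_m\mathfrak{h}_m^2/2$ once $\tau \geq \tau_m^*$ for some $\tau_m^*$ depending on initial data (and on $m$, $\eta$), which gives $\int_{\Sigma_\tau}\mathcal{E}_2[\phi]\,r^2d\sigma dr \geq \tfrac{c_m}{2}\mathfrak{h}_m^2$ thereafter.
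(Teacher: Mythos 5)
Your proposal is correct and follows the same high-level architecture as the paper: a reverse triangle inequality reduces the claim to lower-bounding $\int_{\Sigma_\tau}\mathcal{E}_2[\Pi]\,r^2d\sigma dr$, the $\hphi$-error is absorbed by Proposition \ref{prop:nondegedecay}, the $\snabla_{\s^2}$- and $K$-contributions are discarded as subleading, and the key non-decay comes from the $\tX\Pi_{m\ell}$-contribution near the horizon using the decomposition \eqref{eq:usefulidPi1} and the near-horizon asymptotics of $w_{\pm,m\ell}$. Where the two arguments diverge is in the final lower bound on $\int|\tX\Pi_{m\ell}|^2\,dr$. The paper integrates over a \emph{fixed} interval $[1,r_H]$, identifies the leading piece of the integrand as a total $r$-derivative, evaluates the boundary terms exactly, and then controls the cross-term $2\Re(c_{+,m\ell}\overline{c_{-,m\ell}}z^{-\sqrt{\alpha_{m\ell}}})$ by Young's inequality, using the \emph{explicit} observation that the cross-term coefficient has modulus $(4+|\alpha_{m\ell}|)^{-1/2}<1/2$. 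You instead restrict to a $\tau$-dependent shell $\{(\tau+1)(r-1)\in[1,2]\}$, rescale to $y\in[1,2]$, and invoke the positive-definiteness of the Gram matrix of the profiles $\{1,\,(y+4)^{2\sqrt{\alpha_{m\ell}}}\}$ in $L^2[1,2]$ to conclude that the $y$-integral of $|\lambda+\mu(y+4)^{2\sqrt{\alpha_{m\ell}}}|^2$ is bounded below by a positive multiple of $|\lambda|^2+|\mu|^2$, with $|\lambda|=|c_{+,m\ell}|$, $|\mu|=|c_{-,m\ell}|$ independent of $\tau$. Both devices are valid; the paper's yields a cleaner numerical constant in terms of $\alpha_{m\ell}$, while yours is softer and avoids tracking the cross-term coefficient at the price of an implicit Gram-determinant constant. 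One small imprecision: you claim the $\Pi$-energy is bounded below \emph{uniformly} in $\tau\geq 0$ with no decaying correction, which strictly requires a compactness argument on the compact $\tau$-range (or, more simply, absorbing the transient correction into the allowed $-C_m(1+\tau)^{-\nu_m+\eta}$ term, as the paper does); this is easily fixed and does not affect the conclusion.
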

	\begin{proof}
	We have that
	\begin{equation*}
	\int_{\Sigma_{\tau}}\mathcal{E}_2[\phi]\,r^2d\sigma dr=\int_{\Sigma_{\tau}}\mathcal{E}_2[\hphi+\Pi]\,r^2d\sigma dr\geq (1-\epsilon)\int_{\Sigma_{\tau}}\mathcal{E}_2[\Pi]\,r^2d\sigma dr-C\int_{\Sigma_{\tau}}\mathcal{E}_2[\hphi]\,r^2d\sigma dr.
	\end{equation*}
	By Proposition \ref{prop:nondegedecay}, the second term on the very RHS above decays appropriately in $\tau$, so it remains to show that
	\begin{equation*}
	\int_{\Sigma_{\tau}}\mathcal{E}_2[\Pi]\,r^2d\sigma dr\geq c|{\mathfrak{h}}_{m \ell}|^2-C(1+\tau)^{-\nu_m+\eta}\sum_{|m|\leq \ell\leq l_m}|{\mathfrak{h}}_{m \ell}|^2.
	\end{equation*}
	
	We have that
	\begin{equation*}
	\int_{\Sigma_{\tau}}\mathcal{E}_2[\Pi]\,r^2d\sigma dr\sim \int_{\Sigma_{\tau}}\left(|\tX \Pi|^2+|\snabla_{\s^2} \Pi|^2+|K\Pi|^2\right)\,r^2d\sigma dr.
	\end{equation*}
	It follows immediately that the integral of the terms $|\snabla_{\s^2} \Pi|^2+|K\Pi|^2$ as well as the integral of $|\tX \Pi|^2$ restricted to $r\leq r_H$, with $r_H$ arbitrarily small, decays with a rate faster than $(1+\tau)^{-1}$. It therefore remains to derive a uniform lower bound for the integral:
	\begin{equation*}
	\int_{\Sigma_{\tau}\cap \{r\leq r_H\}} |\tX \Pi|^2\,d\sigma dr=\sum_{\substack{|m|\leq \ell \leq l_m\\ \alpha_{m\ell}\neq 0}} \int_{1}^{r_H}|\tX \Pi_{m \ell}|^2\,dr.
	\end{equation*}
	Let $\alpha_{m \ell}<0$. We consider without loss of generality the case $\widetilde{B}_-\neq 0$ and $\widetilde{B}_+\neq 0$, in which we can appeal to the decomposition of $\Pi_{m \ell}$ in \eqref{eq:usefulidPi1} in the case $\alpha_{m \ell}<0$:
	\begin{align*}
\Pi_{m \ell}(\tau,r)=&\:b_{+,m \ell}{\mathfrak{h}}_{m \ell}w_{+, m\ell}(r)f_+^{m \ell}(\tau,r)+b_{-,m \ell}{\mathfrak{h}}_{m \ell}w_{-, m\ell}(r)f_-^{m \ell}(\tau,r)+\Pi_{m \ell,{\rm exp}}(\tau,r),\\
\Pi_{m \ell,{\rm exp}}(\tau,r)=&\:{\mathfrak{h}}_{m \ell}\widetilde{w}_{\infty, m\ell}(r)\left[E_+(f_{+,{\rm exp}}^{m \ell}- f_+^{m \ell})(\tau,r)+E_-(f_{-,{\rm exp}}^{m \ell}- f_-^{m \ell})(\tau,r)\right].
\end{align*}
The cases $\widetilde{B}_-=0$ or $\widetilde{B}_+\neq 0$ are strictly easier as the expressions \eqref{eq:exprPi2} and \eqref{eq:exprPi2} immediately give the right decomposition, with no need for $\Pi_{m \ell,{\rm exp}}$.

By Lemma \ref{lm:fexp}, it follows that $\tX \Pi_{m \ell,{\rm exp}}$ decays at least as fast as $(1+\tau)^{-1}$. We now compute:
\begin{equation*}
\begin{split}
\tX(w_{m \ell, \pm } f_{\pm}^{m \ell})=&\:\tX\left((1+O(r-1))\left((\tau+1)(r-1)+4\right)^{-\frac{1}{2}\mp \sqrt{\alpha_{m \ell}}+im}\right)(\tau+1)^{-\frac{1}{2}\mp \sqrt{\alpha_{m \ell}}-im },\\
=&\: \left(-\frac{1}{2}\mp \sqrt{\alpha_{m \ell}}+im\right)((\tau+1)(r-1)+4)^{-\frac{3}{2}\mp \sqrt{\alpha_{m \ell}}}(\tau+1)^{\frac{1}{2}\mp \sqrt{\alpha_{m \ell}}-im}+O((\tau+1)^{-\frac{1}{2}}).
\end{split}
\end{equation*}
Hence, for $c_{\pm, m \ell}=b_{\pm, m \ell} \left(-\frac{1}{2}\mp \sqrt{\alpha_{m \ell}}+im\right)$,
\begin{equation*}
\begin{split}
&|\tX \Pi_{m \ell}|^2\\
=&\: |{\mathfrak{h}}_{m \ell}|^2|c_{+,m \ell}((\tau+1)(r-1)+4)^{+ \sqrt{\alpha_{m \ell}}}+c_{-,m \ell}((\tau+1)(r-1)+4)^{- \sqrt{\alpha_{m \ell}}}|^2((\tau+1)(r-1)+4)^{-3}(\tau+1)\\
&+\ldots\\
=&|{\mathfrak{h}}_{m \ell}|^2(|c_{+,m \ell}|^2+|c_{-,m \ell}|^2+2\Re(c_{+,m \ell}\overline{c_{-,m \ell}}((\tau+1)(r-1)+4)^{- \sqrt{\alpha_{m \ell}}})((\tau+1)(r-1)+4)^{-3}(\tau+1)\\
&+\ldots\\
=&-\frac{|{\mathfrak{h}}_{m \ell}|^2}{2} \frac{d}{dr}\left[|c_{+,m \ell}|^2+|c_{-,m \ell}|^2)((\tau+1)(r-1)+4)^{-2}\right]+\ldots\\
&\:-2|{\mathfrak{h}}_{m \ell}|^2\frac{d}{dr}\Re\left[(2+\sqrt{\alpha_{m \ell}})^{-1} c_{+,m \ell}\overline{c_{-,m \ell}}((\tau+1)(r-1)+4)^{- \sqrt{\alpha_{m \ell}} -2}\right]+\ldots,
\end{split}
\end{equation*}
where $\ldots$ 	indicates terms that decay at least as fast as $(1+\tau)^{-1}$. Integrating the right-hand side in $r\in[1,r_H]$ results in a constant that we can estimate further with a Young's inequality
	\begin{equation*}
\begin{split}
	\int_1^{r_H}|\tX \Pi_{m \ell}|^2\,dr'=&\: \frac{1}{32}|{\mathfrak{h}}_{m \ell}|^2(|c_{+,m \ell}|^2+|c_{-,m \ell}|^2)+\frac{1}{16}|{\mathfrak{h}}_{m \ell}|^2\Re\left((2+\sqrt{\alpha_{m \ell}})^{-1} c_{+,m \ell}\overline{c_{-,m \ell}}e^{-\log 4\sqrt{\alpha_{m \ell}}}\right)\\
	+&\:|{\mathfrak{h}}_{m \ell}|^2O(r_H-1)+\ldots\\
	\geq&\: \frac{1}{32}|{\mathfrak{h}}_{m \ell}|^2\left(1-\frac{1}{\sqrt{4+|\alpha_{m \ell}|}}\right)(|c_{+,m \ell}|^2+|c_{-,m \ell}|^2)+O(r_H-1)+\ldots\\
	\geq&\: \frac{1}{64}|{\mathfrak{h}}_{m \ell}|^2(|c_{+,m \ell}|^2+|c_{-,m \ell}|^2)
	\end{split}
	\end{equation*}
	for $r_H-1$ suitably small.
	\end{proof}
	
	In the next proposition, we will show that the late-time asymptotics in Corollary \ref{cor:latetimeasym} imply blowup of $\int_{\s^2}|Y\phi_m|^2\,d\sigma$ along $\mathcal{H}^+$.
	
	\begin{proposition}
	\label{prop:pointwinstab}
	 There exists a constant $c_m>0$ depending only on $m$ and a monotonically increasing sequence $\{\tau_n\}$ in $ \R_+$ with $\tau_n\to\infty$ along which
	\begin{equation*}
	\int_{\Sigma_{\tau_n}\cap \mathcal{H}^+}|Y \phi_m|^2\,d\sigma\geq  c_m \mathfrak{h}_m^2(1+\tau_n).
	\end{equation*}
	In particular, for any $m\in \Z\setminus\{-1,0,1,\}$, we can estimate
	\begin{equation*}
	4\pi||Y\phi||_{L^{\infty}(\Sigma_{\tau_n}\cap \mathcal{H}^+)}\geq \int_{\Sigma_{\tau_n}\cap \mathcal{H}^+}|Y \phi|^2\,d\sigma\geq  c_m \mathfrak{h}_m^2(1+\tau_n).
	\end{equation*}
	\end{proposition}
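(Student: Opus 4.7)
The plan is to evaluate the wave equation on $\mathcal{H}^+$ so as to produce a transport-type equation for $Y\phi_m$ along the null generator $K$, and then integrate it against the precise late-time asymptotic of $\phi_{m\ell}|_{\mathcal{H}^+}$ provided by Theorem \ref{thm:latetimeasym} to exhibit the $\tau^{1/2}$ growth of $Y\phi_m$ up to oscillation.

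First I would evaluate \eqref{eq:waveeqK} at $r=r_+=1$ with $a=1$. Since $\Delta=Y\Delta=0$ there, the $Y(\Delta Y\phi)$ and $Y\Phi\phi$ contributions drop, and projection onto $S_{m\ell}(\theta)e^{im\tphi}$ reduces $\square_g\phi_m=0$ on $\mathcal{H}^+$ to the transport identity
\begin{equation*}
K(-4Y\phi_{m\ell})|_{\mathcal{H}^+} = (\Lambda_{+,m\ell}+m^2-im)\phi_{m\ell} + (\sin^2\theta(K^2\phi_m - imK\phi_m))_{m\ell} + 2(1+im)K\phi_{m\ell}
\end{equation*}
already recorded in \S\ref{sec:sketchproof}. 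Integrating this along an integral curve of $K$ from $v_0=1$ to $v=\tau$ and inserting the asymptotic
\begin{equation*}
\phi_{m\ell}|_{\mathcal{H}^+}(v) = \mathfrak{h}_{m\ell}v^{-1/2}\bigl(a_{m\ell}e^{-i(\sqrt{-\alpha_{m\ell}}+m)\log v}+b_{m\ell}e^{i(\sqrt{-\alpha_{m\ell}}-m)\log v}\bigr) + O(v^{-1/2-\nu_m/4+\eta})
\end{equation*}
from Theorem \ref{thm:latetimeasym}, together with the faster-decay bounds $|K^j\phi_{m\ell}|_{\mathcal{H}^+}=O(v^{-1/2-j})$ for $j=1,2$ (obtained either by $\tau$-differentiating the same expansion, or by combining the energy decay of Proposition \ref{prop:edecayKinv} applied to $K\hphi,K^2\hphi$ with a Sobolev trace inequality on $\mathcal{H}^+$) and the elementary antiderivative $\int_1^\tau v^{-1/2+i\beta}\,dv = \frac{2}{1+2i\beta}(\tau^{1/2+i\beta}-1)$, I obtain
\begin{equation*}
-4Y\phi_{m\ell}(\tau)|_{\mathcal{H}^+} = \mathfrak{h}_{m\ell}\tau^{1/2}\bigl(A_{m\ell}e^{-i(\sqrt{-\alpha_{m\ell}}+m)\log\tau}+B_{m\ell}e^{i(\sqrt{-\alpha_{m\ell}}-m)\log\tau}\bigr) + O(\tau^{1/2-\nu_m/4+\eta}),
\end{equation*}
with $A_{m\ell},B_{m\ell}\in\C\setminus\{0\}$ determined explicitly by $(\Lambda_{+,m\ell}+m^2-im)$, $a_{m\ell}$, $b_{m\ell}$, $\alpha_{m\ell}$ and $m$. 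The integrated contributions of $K\phi_{m\ell}$ and $K^2\phi_{m\ell}$ are uniformly bounded in $\tau$, and the $\sin^2\theta$ mixing only redistributes these bounded pieces among the $\ell$-modes, so they enter the balance as $O(1)$ errors.

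The last step is to pass to $L^2(\s^2)$ and extract the sequence. By orthonormality of $\{S_{m\ell}e^{im\tphi}\}$ in $L^2(\s^2)$,
\begin{equation*}
\int_{\Sigma_\tau\cap\mathcal{H}^+}|Y\phi_m|^2 d\sigma = 2\pi\sum_{\ell} |Y\phi_{m\ell}(\tau)|^2 \geq \frac{\tau}{16}\sum_{\alpha_{m\ell}<0}|\mathfrak{h}_{m\ell}|^2\,g_\ell(\log\tau) - C\tau^{1-\nu_m/4+\eta},
\end{equation*}
where $g_\ell(s):=|A_{m\ell}|^2+|B_{m\ell}|^2+2\Re\bigl(A_{m\ell}\overline{B_{m\ell}}\,e^{2i\sqrt{-\alpha_{m\ell}}s}\bigr)$ is an almost-periodic function of $s$ with positive mean $|A_{m\ell}|^2+|B_{m\ell}|^2$. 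The hard part will be to align the finitely many frequencies $\{2\sqrt{-\alpha_{m\ell}}\}_{\ell:\alpha_{m\ell}<0}$ so that every $g_\ell$ is simultaneously close to its maximum along a common sequence; this follows from Dirichlet's simultaneous Diophantine approximation theorem, which produces $s_n=\log\tau_n\to\infty$ for which each phase $2\sqrt{-\alpha_{m\ell}}s_n$ lies within a prescribed $\epsilon$ of the argument of $A_{m\ell}\overline{B_{m\ell}}$ modulo $2\pi$, uniformly over the finite set of $\ell$'s with $\alpha_{m\ell}<0$. At such $\tau_n$ one has $\sum_\ell|\mathfrak{h}_{m\ell}|^2 g_\ell(\log\tau_n) \geq \tfrac12\sum_\ell|\mathfrak{h}_{m\ell}|^2(|A_{m\ell}|^2+|B_{m\ell}|^2) \geq c_m\mathfrak{h}_m^2$, and taking $\tau_n$ sufficiently large to absorb the $O(\tau^{1-\nu_m/4+\eta})$ error yields the desired bound $\int_{\Sigma_{\tau_n}\cap\mathcal{H}^+}|Y\phi_m|^2 d\sigma \geq c_m\mathfrak{h}_m^2(1+\tau_n)$; the $L^\infty$ inequality is then immediate from $\|Y\phi_m\|^2_{L^\infty(\Sigma_{\tau_n}\cap\mathcal{H}^+)}\geq (4\pi)^{-1}\int_{\Sigma_{\tau_n}\cap\mathcal{H}^+}|Y\phi_m|^2 d\sigma$.
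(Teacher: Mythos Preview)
Your approach coincides with the paper's: evaluate \eqref{eq:waveeqK} on $\mathcal{H}^+$ to obtain the transport equation for $Y\phi_{m\ell}$, feed in the late-time asymptotic of $\phi_{m\ell}|_{\mathcal{H}^+}$ from Corollary~\ref{cor:latetimeasym}, integrate in $\tau$, and then extract a sequence. The paper carries out the integration against $\Pi_{m\ell}(\tau,1)$ directly (rather than via the constants $a_{m\ell},b_{m\ell}$) and splits into the cases $\tilde b_{+,m\ell}\neq 0$ versus $=0$; your parenthetical that $A_{m\ell},B_{m\ell}\in\C\setminus\{0\}$ is unjustified and unnecessary --- only $|A_{m\ell}|^2+|B_{m\ell}|^2>0$ is actually used downstream.

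The one genuine slip is in the final sequence-extraction step. Dirichlet's simultaneous approximation theorem places the numbers $2\sqrt{-\alpha_{m\ell}}\,s$ near \emph{integers} (i.e.\ near zero modulo $2\pi$), not near arbitrary prescribed targets $\arg(A_{m\ell}\overline{B_{m\ell}})$; simultaneously hitting prescribed targets is a Kronecker-type statement and would require rational independence of the frequencies $\{\sqrt{-\alpha_{m\ell}}\}_\ell$, which you have not verified and which need not hold. Fortunately no number theory is needed here: the finite sum $G(s)=\sum_{\ell}|\mathfrak h_{m\ell}|^2 g_\ell(s)$ is a trigonometric polynomial in $s$, hence almost-periodic with mean $\sum_\ell|\mathfrak h_{m\ell}|^2(|A_{m\ell}|^2+|B_{m\ell}|^2)$, and $\limsup_{s\to\infty}G(s)\geq M(G)$ already yields the sequence. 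Even more simply --- and this is what the paper does, though without spelling it out --- since $\int_{\s^2}|Y\phi_m|^2\,d\sigma=\sum_\ell |Y\phi_{m\ell}|^2$ is a sum of nonnegative terms over the finite set $\{\ell:\alpha_{m\ell}<0\}$, pick $\ell_0$ maximising $|\mathfrak h_{m\ell}|^2$ and use the sequence for that single mode; then $|\mathfrak h_{m\ell_0}|^2\geq (l_m-|m|+1)^{-1}\mathfrak h_m^2$ supplies the $m$-dependent constant $c_m$.
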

	\begin{proof}
	We evaluate \eqref{eq:waveeqK} with $G=0$ along $\mathcal{H}+$ and rearrange terms to obtain:
	\begin{equation*}
K(-4Y \phi)=+(\slashed{\mathcal{D}}- \Phi^2)\phi- \Phi \phi+\sin^2\theta (K^2\phi- K\Phi\phi)+2 K\Phi\phi+2K\phi.
\end{equation*}
Projecting to the angular functions $S_{m \ell}(\theta)e^{im\tphi}$, we obtain:
\begin{equation}
\label{eq:projeqinstab}
K(-4Y \phi_{m \ell})=(\Lambda_{m\ell} +m^2-im)\phi_{m \ell}+(\sin^2\theta (K^2\phi- im K\phi))_{m \ell}+2(1+ im) K\phi_{m \ell}.
\end{equation}
By Corollary \ref{cor:latetimeasym} (applied also with $\phi$ replaced by $K\phi$ and $K^2\phi$), we have that
\begin{equation}
\label{eq:inteqinstab}
\left|K(-4Y \phi_{m \ell})-(\Lambda_{m\ell} +m^2-im)\Pi_{m \ell}(\tau,1)\right|\leq O((1+\tau)^{-\frac{1+\nu_m}{2}}).
\end{equation}
Without loss of generality, we assume $\widetilde{B}_-\neq 0$ and $\widetilde{B}_+\neq 0$. The cases $\widetilde{B}_-=0$ or $\widetilde{B}_+= 0$ can be treated with the same argument and are strictly easier to handle.

We integrate both sides of \eqref{eq:inteqinstab} in $\tau$ to obtain
\begin{multline*}
-4Y \phi_{m \ell}(\tau,1)+Y \phi_{m \ell}(0,1)\\
=\frac{1}{2}(\Lambda_{m\ell} +m^2-im){\mathfrak{h}}_{m\ell}\int_0^{\tau }4^{- \sqrt{\alpha_{m \ell}}+im}b_{+,m \ell}(1+\tau)^{-\frac{1}{2}-\sqrt{\alpha_{m \ell}}-im}+4^{ \sqrt{\alpha_{m \ell}} +im}b_{-,m \ell}(1+\tau)^{-\frac{1}{2}+\sqrt{\alpha_{m \ell}}-im}\,d\tau \\
 +O((1+\tau)^{\frac{1-\nu_m}{2}}).
\end{multline*}
Recall that
\begin{equation*}
\alpha_{m\ell}=\Lambda_{m\ell}-2m^2+\frac{1}{4}.
\end{equation*}
We can therefore write
\begin{equation*}
\Lambda_{m\ell}+m^2-im=\alpha_{\ell}-m^2-\frac{1}{4}=\left(\pm \sqrt{\alpha_{\ell}}+im-\frac{1}{2}\right)\left(\pm \sqrt{\alpha_{\ell}}-im+\frac{1}{2}\right)
\end{equation*}
Hence,
\begin{equation}
\label{eq:tailYphilm}
\begin{split}
Y \phi_{m \ell}(\tau,1)=&\:-\frac{{\mathfrak{h}}_{m\ell}}{8}\left(- \sqrt{\alpha_{\ell}}+im-\frac{1}{2}\right) e^{-2\log 2 ( \sqrt{\alpha_{m \ell}}+im)}b_{+,m \ell} (1+\tau)^{\frac{1}{2}-\sqrt{\alpha_{m \ell}} -im}\\
&\: -\frac{{\mathfrak{h}}_{m\ell}}{8}\left(+ \sqrt{\alpha_{\ell}}+im-\frac{1}{2}\right) e^{2\log 2  (\sqrt{\alpha_{m \ell}}-im)}b_{-,m \ell} (1+\tau)^{\frac{1}{2}+\sqrt{\alpha_{m \ell}} -im}\\
&\: + O((1+\tau)^{\frac{1-\nu_m}{2}}).
	\end{split}
\end{equation}
Denote
\begin{align*}
\tilde{b}_{+,m \ell}=&\:\frac{1}{8}\left(- \sqrt{\alpha_{\ell}}+im-\frac{1}{2}\right)b_{+,m \ell} ,\\
\tilde{b}_{-,m \ell}=&\:\frac{1}{8}\left( \sqrt{\alpha_{\ell}}+im-\frac{1}{2}\right)b_{-,m \ell}.
\end{align*}
We therefore have that
\begin{equation*}
|Y \phi_{m \ell}|^2(\tau,1)=|{\mathfrak{h}}_{m\ell}|^2|\tilde{b}_{+,m \ell}+\tilde{b}_{-,m \ell}e^{i\sqrt{|\alpha_{m \ell}|}(\log (1+\tau)+4\log 2)-2im}|^2(1+\tau)+O((1+\tau)^{1-\nu_m}).
\end{equation*}
If $\tilde{b}_{+,m \ell}\neq 0$, it therefore follows that there exists a sequence of times $\{\tau_n\}$ along which 
\begin{equation*}
|Y \phi_{m \ell}|^2(\tau_n,1)=|{\mathfrak{h}}_{m\ell}|^2|\tilde{b}_{+,m \ell}|^2(1+\tau_n)+O((1+\tau_n)^{1-\nu_m}).
\end{equation*}
If $\tilde{b}_{+,m \ell}=0$, we instead obtain
\begin{equation*}
|Y \phi_{m \ell}|^2(\tau,1)=|{\mathfrak{h}}_{m\ell}|^2|\tilde{b}_{-,m \ell}|^2(1+\tau)+O((1+\tau)^{1-\nu_m}).
\end{equation*}
for all $\tau\geq 0$.

By taking the sum over $\ell$, we therefore have that there exists a constant $c_m>0$ and a sequence of times $\{\tau_n\}$ along which:
\begin{equation*}
\int_{\s^2}|Y\phi_m|^2(\tau_n,1,\theta,\tphi)\,d\sigma\geq c_m\sum_{\substack{|m|\leq \ell \leq l_m \\ \alpha_{m\ell}<0}}|{\mathfrak{h}}_{m\ell}|^2(1+\tau_n).
\end{equation*}
Considering the case $m=2$, we then obtain the existence of a numerical constant $c>0$ such that:
\begin{equation*}
\int_{\Sigma_{\tau_n}\cap \mathcal{H}^+}|Y \phi|^2\,d\sigma=\sum_{m\in \Z}\int_{\s^2}|Y\phi_m|^2(\tau_n,1,\theta,\tphi)\,d\sigma\geq c |H_{2|2|}|^2(1+\tau_n). \qedhere
\end{equation*}
	\end{proof}

	We obtain similarly the existence of stronger instabilities for higher-order $Y$ derivatives of $\phi$.
	\begin{proposition}
	\label{prop:hopointwinstab}
	Let $k\in \N$. There exists a constant $c_{m,k}>0$ depending only on $m$ and $k$ and a monotonically increasing sequence $\{\tau_n\}$ in $ \R_+$ with $\tau_n\to\infty$ along which
	\begin{equation*}
	\int_{\Sigma_{\tau_n}\cap \mathcal{H}^+}|Y^k \phi_m|^2\,d\sigma\geq  c_{m,k} \mathfrak{h}_m^2(1+\tau_n)^{1+2k}.
	\end{equation*}
	In particular, for any $m\in \Z\setminus\{-1,0,1,\}$, we can estimate
	\begin{equation*}
	4\pi||Y\phi||^2_{L^{\infty}(\Sigma_{\tau_n} \cap \mathcal{H}^+)}\int_{\Sigma_{\tau_n}\cap \mathcal{H}^+}|Y^k \phi|^2\,d\sigma\geq  c_{m,k}\mathfrak{h}_m^2 (1+\tau_n)^{1+2k}.
	\end{equation*}
	\end{proposition}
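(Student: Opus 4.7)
The strategy is an induction on $k\in\mathbb{N}$, with the case $k=1$ provided by Proposition \ref{prop:pointwinstab}. For the inductive step, the plan is to commute the wave equation $\square_g\phi_{m}=0$ with $Y^{k-1}$, evaluate the resulting identity along $\mathcal{H}^+$, project onto the angular functions $S_{m\ell}(\theta)e^{im\tilde\varphi}$, and integrate in $\tau$, exactly mirroring the argument in Proposition \ref{prop:pointwinstab}.

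Concretely, starting from \eqref{eq:waveeqK} with $G=0$, I would establish by induction on $k$ a transport-type equation along $\mathcal{H}^+$ of the schematic form
\begin{equation*}
c_k\,K(Y^k\phi_{m\ell}) = d_{k,m\ell}\,Y^{k-1}\phi_{m\ell} + \mathcal{R}_{k,m\ell},
\end{equation*}
where $c_k\neq 0$ is a numerical constant (coming from the leading-order action of $Y^{k-1}$ on $Y(\Delta Y)$ after using $Y^j(\Delta)|_{\mathcal{H}^+}=0$ for $j=0$ and $Y^2(\Delta)|_{\mathcal{H}^+}=2$), $d_{k,m\ell}$ is a polynomial expression in $\Lambda_{m\ell}$, $m$ and $k$, and $\mathcal{R}_{k,m\ell}$ collects lower-order $Y^j\phi_{m\ell}$ with $j<k-1$ as well as terms involving at least one $K$-derivative of lower-order $Y^j\phi$ (which decay strictly faster by the induction hypothesis combined with Corollary \ref{cor:latetimeasym} applied to $K\phi$ and $K^2\phi$). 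The inductive hypothesis, together with \eqref{eq:tailYphilm} as the base, supplies the precise late-time asymptotics
\begin{equation*}
Y^{k-1}\phi_{m\ell}(\tau,1) = \mathfrak{h}_{m\ell}\bigl(\tilde{b}^{(k-1)}_{+,m\ell}(1+\tau)^{k-\frac{3}{2}-\sqrt{\alpha_{m\ell}}-im}+\tilde{b}^{(k-1)}_{-,m\ell}(1+\tau)^{k-\frac{3}{2}+\sqrt{\alpha_{m\ell}}-im}\bigr)+O((1+\tau)^{k-\frac{3}{2}-\frac{\nu_m}{2}}),
\end{equation*}
for explicit constants $\tilde{b}^{(k-1)}_{\pm,m\ell}$. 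Integrating the transport equation from $0$ to $\tau$ yields the analogous asymptotic expansion for $Y^k\phi_{m\ell}$, one power of $\tau$ larger, with constants
\begin{equation*}
\tilde{b}^{(k)}_{\pm,m\ell} = \frac{d_{k,m\ell}}{c_k(k-\tfrac{1}{2}\pm\sqrt{\alpha_{m\ell}}-im)}\,\tilde{b}^{(k-1)}_{\pm,m\ell}.
\end{equation*}

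Once the asymptotic expansion is in hand, the argument of Proposition \ref{prop:pointwinstab} gives, exactly as in the $k=1$ case, a sequence $\{\tau_n\}\to\infty$ along which
\begin{equation*}
|Y^k\phi_{m\ell}|^2(\tau_n,1) \geq c_{m,k,\ell}\,|\mathfrak{h}_{m\ell}|^2(1+\tau_n)^{2k-1},
\end{equation*}
either by extracting a subsequence along which the two oscillating contributions add constructively (when both $\tilde{b}^{(k)}_{\pm,m\ell}$ are nonzero) or directly when one of them vanishes. Summing over $\ell$ with $\alpha_{m\ell}<0$ and integrating over $\mathbb{S}^2$ produces the stated bound on $\int_{\Sigma_{\tau_n}\cap\mathcal{H}^+}|Y^k\phi_m|^2\,d\sigma$ in terms of $\mathfrak{h}_m^2(1+\tau_n)^{1+2k}$.

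The main obstacle will be verifying the nonvanishing of the iterated constants $\tilde{b}^{(k)}_{\pm,m\ell}$: specifically, ensuring that the denominators $k-\tfrac{1}{2}\pm\sqrt{\alpha_{m\ell}}-im$ never vanish (which uses $\Im m\neq 0$ combined with $\alpha_{m\ell}\in\mathbb{R}$ in the relevant case $\alpha_{m\ell}<0$) and that $d_{k,m\ell}\neq 0$ for the values of $\ell$ producing the dominant tail. A secondary technical point will be carefully tracking the commutators $[Y^{k-1},\square_g]$ along $\mathcal{H}^+$: since $\Delta$ vanishes quadratically at $r=1$, successive applications of $Y$ to $Y(\Delta Y\phi)$ lose only a controlled number of powers of $\Delta$ at $\mathcal{H}^+$, and the precise combinatorics determines $c_k$ and $d_{k,m\ell}$. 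All error terms produced in this process either involve an extra $K$-derivative of $\phi$ (handled by applying Corollary \ref{cor:latetimeasym} to $K\phi$) or lower-order $Y^j\phi$ with $j\leq k-2$ (handled by the inductive asymptotic expansion), and in every case their integrated contribution decays strictly faster than $(1+\tau)^{k-\frac{1}{2}}$, so they can be absorbed into the remainder.
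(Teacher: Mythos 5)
Your proposal follows essentially the same route as the paper: commute the wave equation with $Y$-derivatives, evaluate on $\mathcal{H}^+$, project onto the angular eigenfunctions, and integrate a transport equation in $\tau$, with lower-order terms controlled by Corollary \ref{cor:latetimeasym} and the inductive hypothesis. The explicit recursion $\tilde{b}^{(k)}_{\pm,m\ell}=\frac{d_{k,m\ell}}{c_k(k-\frac{1}{2}\pm\sqrt{\alpha_{m\ell}}-im)}\tilde{b}^{(k-1)}_{\pm,m\ell}$ you write down is a nice way to organize what the paper leaves schematic.

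Two small corrections are in order. First, your attribution of $c_k$ to the action of $Y^{k-1}$ on $Y(\Delta Y\phi)$ is off: that term contains no $K$-derivative, so it cannot produce $K(Y^k\phi_{m\ell})$. The coefficient $c_k$ comes from $Y^{k-1}\!\left(2(r^2+a^2)KY\phi\right)\big|_{r=1}$, whose top-order contribution is $2(r^2+a^2)|_{r=1}KY^k\phi=4KY^k\phi$, so in fact $c_k=4$ for all $k$. The quadratic vanishing of $\Delta$ that you invoke (via $Y^2\Delta|_{\mathcal{H}^+}=2$) instead contributes $k(k-1)Y^{k-1}\phi$ to $d_{k,m\ell}$, alongside $-\Lambda_{m\ell}+m^2$ from the angular operator and the $im$-pieces from the $\Phi$-terms; the nonzero imaginary part (proportional to $m\neq 0$) is what guarantees $d_{k,m\ell}\neq 0$. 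Second, your stated reason for the nonvanishing of the denominators ("uses $\Im m\neq 0$") is incorrect, since $m\in\mathbb{Z}$ is real: the correct reason is that for $\alpha_{m\ell}<0$, $\sqrt{\alpha_{m\ell}}$ is purely imaginary and $m$ is real, so $\Re\bigl(k-\tfrac12\pm\sqrt{\alpha_{m\ell}}-im\bigr)=k-\tfrac12\neq 0$. Finally, note that your intermediate step correctly yields $|Y^k\phi_{m\ell}|^2(\tau_n,1)\gtrsim(1+\tau_n)^{2k-1}$, which is the natural growth rate given $Y\phi_{m\ell}\sim(1+\tau)^{1/2}$; the exponent $(1+\tau)^{1+2k}$ you then quote to match the stated proposition does not follow from your derivation (and it appears the paper's stated exponent is a typographical inconsistency with Proposition \ref{prop:pointwinstab}), so the bound you actually prove is the expected one.
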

	\begin{proof}
	Suppose $k=1$. We first evaluate \eqref{eq:waveeqK} with $G=-\square_g\Pi$ along $\mathcal{H}+$ and rearrange terms to obtain:
	\begin{equation*}
K(-4Y \hphi)=+(\slashed{\mathcal{D}}- \Phi^2)\hphi- \Phi \hphi+\sin^2\theta (K^2\hphi- K\Phi\hphi)+2 K\Phi\hphi+2K\hphi+\square_g\Pi.
\end{equation*}
We projecting the above equation the angular functions $S_{m \ell}(\theta)e^{im\tphi}$. After applying Corollary \ref{cor:latetimeasym} (applied also with $\phi$ replaced by $K\phi$ and $K^2\phi$) together with \eqref{eq:boxPi1}, we obtain:
\begin{equation*}
|K^j(Y \hphi_{m \ell})|\leq C (1+\tau)^{-\frac{1}{2}-\nu_m}\sum_{\substack{2k+l\leq 4+2j\\ l\leq 1}}D^{(1)}[Q^kK^l\phi].
\end{equation*}
Hence, after integrating and filling in the definition of $\Pi_{m \ell}$, we obtain:
\begin{equation}
\label{eq:Yhphidecayhor}
|K^j(Y \phi_{m \ell})|\leq C (1+\tau)^{\frac{1}{2}-\nu_m}\sum_{\substack{2k+l\leq 4+2j\\ l\leq 1}}D^{(1)}[Q^kK^l\phi].
\end{equation}

Subsequently, we act with $Y$ on both sides of \eqref{eq:waveeqK}, project onto $S_{m \ell}(\theta)e^{im\tphi}$ and evaluate along $\mathcal{H}^+$ to obtain schematically:
\begin{equation*}
K(Y^2\phi_m)=(\Lambda_{m\ell} +m^2-im)Y\phi_{m \ell}+O( (1+\tau)^{\frac{1}{2}-\nu_m+\eta}).
\end{equation*}
Now we use \eqref{eq:tailYphilm} to estimate further the right-hand side and we integrate both sides in $\tau$ to obtain the late-time asymptotics of $Y^2\phi_m$ and consequently, desired lower bound.	 This argument, when applied to $K^jY^2\hphi_m$  moreover gives $K^j(Y^2 \hphi_{m \ell})=O( (1+\tau)^{\frac{3}{2}-\nu_m+\eta})$.
	
Then $k>1$ case can be treated inductively, by repeating the above argument, where we start with with late-time asymptotics of $Y^l\phi$ and $K^j(Y^{l} \hphi_{m \ell})=O( (1+\tau)^{\frac{1}{2}+l-\nu_m+\eta})$ for $l\leq k$.
	\end{proof}
		
		\appendix
		\section{Estimates for $\square_g \Pi$}
		We consider the functions $f^{m \ell}_+$, $f^{m \ell}_-$, $f^{m \ell}_{+,{\rm exp}}$ and $f^{m \ell}_{-,{\rm exp}}$, defined in \S \ref{sec:constKinv}.
		\begin{lemma}
		\label{lm:fexp}
		Assume that $\alpha_{m \ell}<0$.
	\begin{enumerate}[label=\emph{(\roman*)}]
		\item Let $r\leq 2$ and $K,L\in \N_0$. Then there exists a constant $C<0$ such that for all $0\leq k\leq K$:
		\begin{equation*}
		|((r-1)\tX)^k K^lf_{\pm}^{m \ell}-((r-1)\tX)^k K^l f_{\pm}^{m \ell,{\rm exp}}|(\tau,r)\leq C|m|^{k+l+1}(r-1)(\tau+4(r-1)^{-1}+1)^{-\frac{3}{2}}(\tau+1)^{-\frac{1}{2}-l},\\
		\end{equation*}
		\item Let $r\geq 2$ and $K,L\in \N_0$. Then there exists a $\delta>0$ such that for all $0\leq k\leq K$:
		\begin{equation*}
		|\tX^k K^{l} f_{\pm}^{m \ell,{\rm exp}}|(\tau,r)\leq C|m|^{k+l}e^{-\delta r}(\tau+1)^{-1+\delta-l}.
		\end{equation*}
		\end{enumerate}
		\end{lemma}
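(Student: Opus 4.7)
\medskip

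\noindent\textbf{Proof proposal.} Throughout, write $\xi_1 := \tau + \tfrac{4}{r-1} + 1$ and $\xi_2 := \tau + \tfrac{4}{r-1} + e^{r-1}$, and set $a := -\tfrac12 \mp \sqrt{\alpha_{m\ell}} + im$, $b := -\tfrac12 \mp \sqrt{\alpha_{m\ell}} - im$, so that $f_{\pm}^{m\ell} = \xi_1^a (\tau+1)^b$ and $f_{\pm}^{m\ell,{\rm exp}} = \xi_2^a (\tau+1)^b$. The hypothesis $\alpha_{m\ell}<0$ implies $\sqrt{\alpha_{m\ell}}\in i\mathbb{R}$, so that $|\xi_i^a| = \xi_i^{-1/2}$ and $|(\tau+1)^b| = (\tau+1)^{-1/2}$. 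Moreover $|a|, |b| \leq C|m|$ by the bound $|\sqrt{\alpha_{m\ell}}|\leq C|m|$ coming from $\Lambda_{+,m\ell}\leq C m^2$ for $\ell\leq l_m$.

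\smallskip

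\noindent\emph{Part (i).} The starting point is the fundamental theorem of calculus:
\begin{equation*}
\xi_1^a - \xi_2^a = a \int_{\xi_2}^{\xi_1} s^{a-1}\,ds, \qquad |\xi_1 - \xi_2| = |1 - e^{r-1}| \leq e \cdot (r-1) \text{ for } r\leq 2.
\end{equation*}
Since $\xi_1,\xi_2 \geq \tau + \tfrac{4}{r-1} \geq \tfrac12 \xi_1$ (and likewise for $\xi_2$) uniformly in $r\leq 2$, the integrand is controlled by $C\,\xi_1^{-3/2}$, yielding
\begin{equation*}
|f_{\pm}^{m\ell} - f_{\pm}^{m\ell,{\rm exp}}|(\tau,r) \leq C|m|(r-1)\,\xi_1^{-3/2}(\tau+1)^{-1/2},
\end{equation*}
which is the claimed bound for $k=l=0$. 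The general case is obtained by induction on $k+l$. Each application of $K=\partial_\tau$ differentiates both $\xi_i$ and $(\tau+1)$, producing a factor $O(|m|)$ (coming from $a$ or $b$) and lowering the $\tau$-power by one; each application of $(r-1)\tX$ hits $\xi_1$ or $\xi_2$, producing $(r-1)\tX\xi_1 = -4/(r-1)$ and $(r-1)\tX\xi_2 = -4/(r-1) + (r-1)e^{r-1}$. The first piece preserves the inductive structure because $4/(r-1) \leq \xi_i$, while the discrepancy $(r-1)e^{r-1} = O(r-1)$ for $r\leq 2$ contributes an extra term of the same form as the base case. The main bookkeeping obstacle is to verify that the difference $\xi_1^{a-j} - \xi_2^{a-j}$ that arises after $j$ derivatives satisfies the same mean-value bound with $(a,-3/2)$ replaced by $(a-j,-3/2-j)$; this is handled by the same integral representation as above.

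\smallskip

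\noindent\emph{Part (ii).} For $r\geq 2$, the key input is the weighted AM--GM inequality: for any $0<\delta<\tfrac12$,
\begin{equation*}
\tau + e^{r-1} \geq C_\delta\, (\tau+1)^{1-2\delta}\, e^{2\delta(r-1)},
\end{equation*}
which gives $\xi_2^{-1/2} \leq C_\delta (\tau+1)^{-1/2+\delta} e^{-\delta(r-1)}$. Multiplying by $(\tau+1)^{-1/2}$ yields (ii) for $k=l=0$, with the factor $e^{-\delta}$ absorbed into the constant to replace $e^{-\delta(r-1)}$ by $e^{-\delta r}$. For higher derivatives, one proceeds inductively: each $K$ brings down a factor of $O(|m|)$ and (via the improved power of $\xi_2$ or $(\tau+1)$) an extra $(\tau+1)^{-1}$; each $\tX$ brings down a factor $a\cdot(-4/(r-1)^2 + e^{r-1}) = O(|m|\, e^{r-1})$ in absolute value for $r\geq 2$, together with an improved power of $\xi_2$, so the identity $\xi_2^{-3/2}\cdot e^{r-1} \leq \xi_2^{-1/2}$ closes the induction without loss of exponential decay. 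The main obstacle here is purely the verification that the AM--GM bound remains compatible with the cross-terms generated by Leibniz when differentiating the product $\xi_2^{a-j}(\tau+1)^{b-l}$; this is a routine check once the base case is established.
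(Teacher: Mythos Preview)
Your proposal is correct and follows essentially the same approach as the paper. For part~(i), both you and the paper use the mean-value theorem applied to the function $x\mapsto(\tau+4/(r-1)+x)^{a}$ to extract the factor $(r-1)$ from $|e^{r-1}-1|$, and both handle higher derivatives inductively. For part~(ii), the paper writes $\xi_2^{-1/2}=\xi_2^{-\delta}\cdot\xi_2^{\delta-1/2}$ and bounds each factor separately using $\xi_2\geq e^{r-1}$ and $\xi_2\geq\tau+1$; your weighted AM--GM inequality is an equivalent repackaging of the same two lower bounds.
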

		\begin{proof}
		We consider first (i) with $K=L=0$. Note that
		\begin{equation*}
		\left(\tau+\frac{4}{r-1}+e^{r-1}\right)^{-\frac{1}{2}\mp \sqrt{\alpha_{m \ell}}+im}-\left(\tau+\frac{4}{r-1}+1\right)^{-\frac{1}{2}\mp \sqrt{\alpha_{m \ell}}+im}=h(e^{r-1})-h(1),
		\end{equation*}
		with $h: [0,\infty)\to \R$ defined as follows:
		\begin{equation*}
		h(x)=\left(\tau+\frac{4}{r-1}+x\right)^{-\frac{1}{2}\mp \sqrt{\alpha_{m \ell}}+im}.
		\end{equation*}
		By the mean-value theorem applied to $h$, we have that for $r\leq 2$
		\begin{equation*}
		|h(e^{r-1})-h(1)|\leq (e^{r-1}-1)\sup_{1\leq x\leq e^1}|h'(x)|\leq C|m|(r-1)\left(\tau+\frac{4}{r-1}+1\right)^{-\frac{1}{2}}.
		\end{equation*}
		Taking $L>0$, we can repeat the argument above in a straightforward manner.
		
		Denote for the sake of convenience $a=-\frac{1}{2}\mp \sqrt{\alpha_{m \ell}}+im$. In the case $K=1$, we observe that
		\begin{multline*}
		(r-1)\tX\left[\left(\tau+\frac{4}{r-1}+e^{r-1}\right)^{-\frac{1}{2}\mp \sqrt{\alpha_{m \ell}}+im}-\left(\tau+\frac{4}{r-1}+1\right)^{-\frac{1}{2}\mp \sqrt{\alpha_{m \ell}}+im}\right]\\
		=h_1(e^{r-1})-h_1(1)+a(r-1)\left(\tau+\frac{4}{r-1}+e^{r-1}\right)^{-\frac{3}{2}\mp \sqrt{\alpha_{m \ell}}+im},
		\end{multline*}
		with
		\begin{equation*}
		h_1(x)=-4a(r-1)^{-1}\left(\tau+\frac{4}{r-1}+x\right)^{-\frac{3}{2}\mp \sqrt{\alpha_{m \ell}}+im}.
		\end{equation*}
		By applying the mean-value theorem, it follows that
		\begin{equation*}
		|h_1(e^{r-1})-h_1(1)|\leq C|m|^2(r-1)\left(\tau+\frac{4}{r-1}+1\right)^{-\frac{1}{2}}.
		\end{equation*}
		The general $K$ case follows analogously.
		
		We now consider (ii) with $K=L=0$. We first split:
		\begin{equation*}
		|f_{\pm,{\rm exp}}^{m \ell}|(\tau,r):=\left(\tau+\frac{4}{r-1}+e^{r-1}\right)^{-\delta}\left(\tau+\frac{4}{r-1}+e^{r-1}\right)^{\delta-\frac{1}{2}}(\tau+1)^{-\frac{1}{2} }
		\end{equation*}
Then it follows immediately that
\begin{equation*}
|f_{\pm,{\rm exp}}^{m \ell}|(\tau,r)\leq e^{–\delta r}(\tau+1)^{-1+\delta }.
\end{equation*}
		The $K>0$ and $L>0$ case can be estimated via an analogous splitting.
			\end{proof}
		\begin{proposition}
		\label{prop:squarePial0}
		Let $K,L\in \N_0$. Let $\alpha_{m \ell}<0$ and $r\leq 2$. Then there exists a $C=C(K,N,m)>0$, such that for all $k\leq K$ and $l\leq L$
		\begin{equation*}
		|((r-1)\tX)^kK^l(\rho^2\square_g\Pi))_{m \ell}|(r)\leq C\left((\tau+1)(r-1)+4\right)^{-\frac{1}{2}} (\tau+1)^{-\frac{3}{2}-l}.
		\end{equation*}
		\end{proposition}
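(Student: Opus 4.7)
The plan is to compute $\rho^2\square_g\Pi_{m\ell}$ by exploiting that $\Pi_{m\ell}$ is a sum of products $w(r)\cdot f(\tau,r)$ where $w\,e^{im\widetilde{\varphi}}S_{m\ell}$ is a $K$-invariant stationary solution to $\square_g\phi=0$ (by Lemma~\ref{lm:propfracw}). First, I would reorganize $\Pi_{m\ell}$ using the relations $w_{\infty,m\ell}=B_+w_{+,m\ell}+B_-w_{-,m\ell}$ and $\widetilde{w}_{\infty,m\ell}=\widetilde{B}_+w_{+,m\ell}+\widetilde{B}_-w_{-,m\ell}$ from \S \ref{sec:statsol} together with the definition \eqref{eq:exprPi1} to obtain a decomposition of the form
\begin{equation*}
\Pi_{m\ell}=\mathfrak{h}_{m\ell}\bigl[c_+^{m\ell}w_{+,m\ell}f_+^{m\ell}+c_-^{m\ell}w_{-,m\ell}f_-^{m\ell}\bigr]+\Pi_{m\ell,\mathrm{exp}},
\end{equation*}
where the constants $c_\pm^{m\ell}$ are determined by $B_\pm,\widetilde{B}_\pm,E_\pm$ and $\Pi_{m\ell,\mathrm{exp}}$ gathers the remaining contributions, which involve the differences $f_{\pm,\mathrm{exp}}^{m\ell}-f_\pm^{m\ell}$ weighted against $w_{\pm,m\ell}$. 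This pairing is crucial: the Frobenius exponents of $w_{\pm,m\ell}$ at $r=1$, namely $-\tfrac12-im\pm i\sqrt{-\alpha_{m\ell}}$, match the exponents $a_\pm:=-\tfrac12\mp\sqrt{\alpha_{m\ell}}+im=-\tfrac12\mp i\sqrt{-\alpha_{m\ell}}+im$ of $f_\pm^{m\ell}$ in the variable $\xi:=\tau+\tfrac{4}{r-1}+1$, reflecting that $w_{\pm,m\ell}f_\pm^{m\ell}$ is designed to be an approximate late-time solution near the horizon.

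Applying form (iv) of Lemma~\ref{lm:exprboxg} to $f(\tau,r)g(r)e^{im\widetilde{\varphi}}S_{m\ell}(\theta)$ and expanding via Leibniz, the stationary equation $\rho^2\square_g(g\,e^{im\widetilde{\varphi}}S_{m\ell})=0$ kills the contributions where all $\tX$ derivatives land on $g$. The remainder splits as $\mathcal{E}_1+\mathcal{E}_2$, where
\begin{equation*}
\mathcal{E}_1=g\tX(\Delta\tX f)+2\Delta\tX f\cdot\tX g-2im\upomega_+(r^2-1)g\tX f
\end{equation*}
collects the non-$K$ commutator terms and $\mathcal{E}_2$ gathers all terms involving $K^2f$, $K\tX f$, or $Kf$. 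Since $Kf_\pm^{m\ell}=(a_\pm/\xi+b_\pm/(\tau+1))f_\pm^{m\ell}$ (with $b_\pm$ the exponent on $\tau+1$) produces an extra factor of $\xi^{-1}$ or $(\tau+1)^{-1}$, both bounded by $(\tau+1)^{-1}$ at $r=1$, the bound $|\mathcal{E}_2|\lesssim((\tau+1)(r-1)+4)^{-1/2}(\tau+1)^{-3/2}$ follows directly from the baseline $|w_\pm f_\pm|\lesssim((\tau+1)(r-1)+4)^{-1/2}(\tau+1)^{-1/2}$, and additional $K^l$ derivatives supply the $(\tau+1)^{-l}$ factor.

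The delicate step is controlling $\mathcal{E}_1$ together with the analogous contribution from $\rho^2\square_g\Pi_{m\ell,\mathrm{exp}}$. Using $\Delta=(r-1)^2$ and the explicit identities $\tX f_\pm=-4a_\pm f_\pm/((r-1)^2\xi)$ and $\tX(\Delta\tX f_\pm)=16a_\pm(a_\pm-1)f_\pm/((r-1)^2\xi^2)$, the main part of $\mathcal{E}_1$ for the pairing $g=w_{\pm,m\ell}$, $f=f_\pm^{m\ell}$ is
\begin{equation*}
\mathcal{E}_1=\frac{16a_\pm(a_\pm-1)w_\pm f_\pm}{((\tau+1)(r-1)+4)^2}-\frac{8a_\pm(c_{w_\pm}-im)w_\pm f_\pm}{(\tau+1)(r-1)+4}+(\text{lower order in }r-1),
\end{equation*}
where $c_{w_\pm}=-\tfrac12-im\pm i\sqrt{-\alpha_{m\ell}}$ is the leading coefficient of $w_\pm^{-1}dw_\pm/dr$ at $r=1$. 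Each displayed term is only of size $(\tau+1)^{-1/2}$ at $r=1$---one power too weak---so the claim rests on a cancellation. Two algebraic features drive the resolution: the matching identity $a_\pm+c_{w_\pm}=-1$, which simplifies the $((\tau+1)(r-1)+4)^{-1}$ coefficient into a form amenable to cancellation, and the precise choice $E_\pm=-\widetilde{B}_\mp^{-1}B_\mp$, which ensures that $\rho^2\square_g\Pi_{m\ell,\mathrm{exp}}$ produces a leading contribution exactly opposite to the residual from the main part. Quantitative control on the exponential correction comes from Lemma~\ref{lm:fexp}, which gives $|f_{\pm,\mathrm{exp}}^{m\ell}-f_\pm^{m\ell}|\lesssim(r-1)((\tau+1)(r-1)+4)^{-3/2}(\tau+1)^{-1/2}$. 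Verifying that this cancellation is exact, including through the $O(r-1)$ remainders in the Taylor expansions of $w_\pm$ from Lemma~\ref{lm:propfracw}, will be the most delicate technical point of the proof.

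Finally, the higher-derivative bound for $((r-1)\tX)^kK^l$ follows by induction. Each $(r-1)\tX$ applied to $\xi^{a_\pm}$ produces $-4a_\pm\xi^{a_\pm-1}/(r-1)$; the extra $(r-1)^{-1}$ combines favourably with the $(r-1)$-weights in the operators, introducing only additional $((\tau+1)(r-1)+4)^{-1}$ factors which help rather than hurt, and each additional $K$ contributes $(\tau+1)^{-1}$ as in the $l=0$ case. Derivatives landing on $w_\pm$ are harmless since $(r-1)\tX w_\pm\lesssim|w_\pm|$ by \eqref{eq:estderwelliptic}.
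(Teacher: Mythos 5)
Your approach shares the paper's starting point — decompose $\Pi_{m\ell}=b_{+,m\ell}w_{+,m\ell}f_+^{m\ell}+b_{-,m\ell}w_{-,m\ell}f_-^{m\ell}+\Pi_{m\ell,\mathrm{exp}}$ using the choice of $E_\pm$, exploit $\mathcal{L}w_\pm=0$, and identify the remainder via Leibniz — but your organization of the remainder into $\mathcal{E}_1$ (the $K$-free commutator terms) and $\mathcal{E}_2$ (anything with a $K$ on $f$) misplaces the crucial cancellation. The paper's cancellation is among $J_1+J_2+J_3$, and $J_3=2b_\pm(r^2+1-\hfol\Delta)K\tX(w_\pm f_\pm)$ contains precisely the $K\tX$ term which you relegate to $\mathcal{E}_2$ and claim is bounded ``directly''. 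That direct bound fails: $K\tX(w_\pm f_\pm)=(\tX w_\pm)Kf_\pm+w_\pm\tX(Kf_\pm)$, and because $Kf_\pm=(a_\pm\xi^{-1}+d_\pm(\tau+1)^{-1})f_\pm$ has the $d_\pm(\tau+1)^{-1}f_\pm$ piece that does \emph{not} vanish as $r\to r_+$, while $\tX w_\pm\sim c_{w_\pm}(r-M)^{-1}w_\pm$, the combination is not controlled by simply multiplying the baseline $|w_\pm f_\pm|$ by $(\tau+1)^{-1}$ — the $\tX$ contributes a factor of $(r-M)^{-1}$ (or equivalently, at leading order, $(\tau+1)$ through $\tX\zeta=\tau+1$), and you have not tracked this. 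The paper handles this by showing it is precisely the $\zeta^{-2}$ and $\zeta^{-1}$ parts of $J_3$ which conspire with $J_1$ and $J_2$ to produce the cancellation (e.g.\ the $16a_\pm(a_\pm-1)\zeta^{-2}$ coefficient of $J_1$ is exactly opposite to the $\zeta^{-2}$-coefficient contributed by $J_3$).

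Your second claimed mechanism is also not the paper's: you propose that the $\Pi_{m\ell,\mathrm{exp}}$ contribution ($J_6$ in the paper's notation) produces ``a leading contribution exactly opposite to the residual from the main part''. In the paper the role of $E_\pm=-\widetilde{B}_\mp^{-1}B_\mp$ is solely to make $w_\infty+E_\pm\widetilde{w}_\infty$ proportional to the single Frobenius solution $w_\pm$ (so that the pairing $w_\pm f_\pm$ is well-defined), and $J_6$ is estimated \emph{directly} via Lemma~\ref{lm:fexp}(i), which gives the extra smallness factor $(r-M)((\tau+1)(r-M)+4)^{-1}$ relative to $f_\pm$; no cancellation with $\mathcal{E}_1$ is invoked. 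Indeed one can check that the $\Pi_{m\ell,\mathrm{exp}}$-contribution goes to zero as $r\to r_+$, so it cannot cancel the $O((\tau+1)^{-1/2})$-at-$r=r_+$ residual you correctly identify in $\mathcal{E}_1$.

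So while the decomposition and the stationary-solution idea match the paper, the two key structural claims — that the $K\tX$ term is bounded directly and that $\Pi_{m\ell,\mathrm{exp}}$ provides the compensating cancellation — are both incorrect, and a complete proof would need to redo the cancellation as in the paper, grouping the $K\tX$ term with $J_1$ and $J_2$.
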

	\begin{proof}
	We will prove the $K=N=0$ case. The general $K$ and $L$ case follows analogously. Suppose first that $\widetilde{B}_-,\widetilde{B}_+\neq 0$. Then we can apply \eqref{eq:exprPi1}. By \eqref{eq:waveeqtildeX}, we have that
	\begin{multline}
	\label{eq:Philm}
(\rho^2\square_g\Pi)_{m \ell}=\widetilde{X}((r-1)^2 \widetilde{X}\Pi_{m \ell})-i m(r^2-1)\widetilde{X}\Pi_{m \ell}+(\lambda_{m \ell}+m^2)\Pi_{m \ell}-i m r \Pi_{m \ell}\\
+2(r^2+1-\hfol(r-1)^2)K\widetilde{X}\Pi_{m \ell}+\hfol ((r-1)^2 \hfol- 2(r^2+1))K^2\Pi_{m \ell}+im(2+\hfol(r^2-1))K\Pi_{m \ell}\\
+(\sin^2\theta(K^2\Pi-im K\Pi))_{m \ell}+ \left(2r-\frac{d}{dr}((r-1)^2 \hfol)\right)K\Pi_{m \ell}.
\end{multline}
We will also denote:
\begin{multline*}
\mathcal{L}^K_{m\ell}(\cdot):=\widetilde{X}((r-1)^2 \widetilde{X}(\cdot ))-i m(r^2-1)\widetilde{X}(\cdot )+(\lambda_{m \ell}+m^2)(\cdot )-i m r (\cdot )+2(r^2+1-\hfol(r-1)^2)K\widetilde{X}(\cdot )\\
+\hfol ((r-1)^2 \hfol- 2(r^2+1))K^2(\cdot )+im(2+\hfol(r^2-1))K(\cdot )+\left(2r-\frac{d}{dr}((r-1)^2 \hfol)\right)K(\cdot ),
\end{multline*}
so that
\begin{equation*}
(\rho^2\square_g\Pi)_{m \ell}=\mathcal{L}^K_{m\ell}(\Pi_{m \ell})+(\sin^2\theta(K^2\Pi-im K\Pi))_{m \ell}.
\end{equation*}

We split
\begin{multline*}
\Pi_{m \ell}(\tau,r):={\mathfrak{h}}_{m \ell}(w_{\infty, m\ell}(r)+E_+\widetilde{w}_{\infty, m\ell}(r)) f_+^{m \ell}(\tau,r)+{\mathfrak{h}}_{m \ell}(w_{\infty, m\ell}(r)+E_-\widetilde{w}_{\infty, m\ell}(r)) f_-^{m \ell}(\tau,r)\\
+{\mathfrak{h}}_{m \ell}\widetilde{w}_{\infty, m\ell}(r)\left[E_+(f_{+,{\rm exp}}^{m \ell}- f_+^{m \ell})(\tau,r)+E_-(f_{-,{\rm exp}}^{m \ell}- f_-^{m \ell})(\tau,r)\right]
\end{multline*}
Suppose $\widetilde{B}_-\neq 0$ and $\widetilde{B}_+\neq 0$. Then further rewrite:
\begin{align*}
{\mathfrak{h}}_{m \ell}(w_{\infty, m\ell}(r)+E_+\widetilde{w}_{\infty, m\ell}(r)) f_+^{m \ell}(\tau,r)=&\:{\mathfrak{h}}_{m \ell}(B_+-\widetilde{B}_-^{-1}B_-\widetilde{B}_+)w_{+, m\ell}(r)f_+^{m \ell}(\tau,r)\\
{\mathfrak{h}}_{m \ell}(w_{\infty, m\ell}(r)+E_-\widetilde{w}_{\infty, m\ell}(r)) f_-^{m \ell}(\tau,r)=&\:{\mathfrak{h}}_{m \ell}(B_--\widetilde{B}_+^{-1}B_+\widetilde{B}_-)w_{-, m\ell}(r)f_-^{m \ell}(\tau,r).
\end{align*}
We denote $b_{+,m \ell}:={\mathfrak{h}}_{m \ell}(B_+-\widetilde{B}_-^{-1}B_-\widetilde{B}_+)$ and $b_{-,\ell}:={\mathfrak{h}}_{m \ell}(B_--\widetilde{B}_+^{-1}B_+\widetilde{B}_-)w_{+, m\ell}(r)$, so that
\begin{align}
\label{eq:usefulidPi1}
\Pi_{m \ell}(\tau,r)=&\:b_{+,m \ell}w_{+, m\ell}(r)f_+^{m \ell}(\tau,r)+b_{-,m \ell}w_{+, m\ell}(r)f_-^{m \ell}(\tau,r)+\Pi_{m \ell,{\rm exp}}(\tau,r),\\
\label{eq:usefulidPi2}
\Pi_{m \ell,{\rm exp}}(\tau,r)=&\:{\mathfrak{h}}_{m \ell}\widetilde{w}_{\infty, m\ell}(r)\left[E_+(f_{+,{\rm exp}}^{m \ell}- f_+^{m \ell})(\tau,r)+E_-(f_{-,{\rm exp}}^{m \ell}- f_-^{m \ell})(\tau,r)\right].
\end{align}
Combining \eqref{eq:Philm} with \eqref{eq:eqwellm} implies:
\begin{multline*}
(\rho^2\square_g\Pi)_{m \ell}=\sum_{\square\in\{+,-\}}\overbrace{b_{\square ,m \ell}w_{m \ell, \square}\widetilde{X}((r-1)^2 \widetilde{X}f_{\square}^{m \ell})}^{=:J_1}\\
+\overbrace{b_{\square ,m \ell}\left(2(r-1)^2\frac{\frac{dw_{m \ell, \square}}{dr}}{w_{m \ell, \square}}-im (r^2-1)\right)w_{m \ell, \square}\tX f_{\square}^{m \ell}}^{=:J_2}\\
+\overbrace{2b_{\square ,m \ell}(r^2+1-\hfol(r-1)^2)\widetilde{X}(w_{m \ell, \square}Kf_{\square}^{m \ell})}^{=:J_3}\\
+ \overbrace{b_{\square ,m \ell}\left(im(2+\hfol(r^2-1))+2r-\frac{d}{dr}((r-1)^2 \hfol)\right)w_{m \ell, \square} Kf_{\square}^{m \ell}}^{=:J_4}\\
+\overbrace{b_{\square ,m \ell}w_{m \ell, \square}\hfol ((r-1)^2 \hfol- 2(r^2+1))K^2f_{\square}^{m \ell}}^{=:J_5}+\overbrace{\mathcal{L}^K_{m\ell}(\Pi_{m \ell,{\rm exp}})}^{=:J_6}+\overbrace{(\sin^2\theta(K^2\Pi-im K\Pi))_{m \ell}}^{=:J_7}
\end{multline*}
We will estimate the terms $J_i$ defined above. Without loss of generality, we will only consider the case $\square=+$, as the $\square=-$ estimates proceed entirely analogously. We will drop the $m \ell$ in the notation below.

\paragraph{\underline{$J_1$}:}
\begin{multline*}
J_1=4b_{+ ,m \ell}(1+O(r-1))\left(- 2\sqrt{\alpha}+2i m-3 \right) \left(- 2\sqrt{\alpha} +2i m-1\right)\\
\times \left((\tau+1)(r-1)+4\right)^{-\frac{5}{2}- \sqrt{\alpha}+im}(\tau+1)^{-\frac{1}{2}- \sqrt{\alpha}-im }\\
\end{multline*}

\paragraph{\underline{$J_2$}:}
\begin{equation*}
J_2=2b_{+ ,m \ell}(1+O(r-1))  (2 \sqrt{\alpha} +1) (-2  \sqrt{\alpha} +2 im-1)\left((\tau+1)(r-1)+4\right)^{-\frac{3}{2}-\sqrt{\alpha} +i m}  (\tau+1)^{-\frac{1}{2}-\sqrt{\alpha} -i m}
\end{equation*}

\paragraph{\underline{$J_3$}:}

\begin{equation*}
\begin{split}
J_3=&\:4b_{+ ,m \ell} (-4\alpha+2i m(4\sqrt{\alpha}+4) -8 \sqrt{\alpha}-3 +4 m^2 ) \left((\tau+1)(r-1)+4\right)^{-\frac{5}{2}-\sqrt{\alpha} +i m}(\tau+1)^{-\frac{1}{2}-im -\sqrt{\alpha}} \\
+&\:2 b_{+ ,m \ell}(2\sqrt{\alpha}+1)(2\sqrt{\alpha}+1 -2 i m ) \left(4\tau+1)(r-1)\right)^{-\frac{3}{2}-\sqrt{\alpha} +i m}(\tau+1)^{-\frac{1}{2}-im -\sqrt{\alpha}} \\
+&\: O(1)\left((\tau+1)(r-1)+4\right)^{-\frac{1}{2}} (\tau+1)^{-\frac{3}{2}}
\end{split}
\end{equation*}

It follows that the leading-order terms in $J_1$, $J_2$ and $J_3$ cancel when we sum them up. We obtain:
\begin{equation*}
J_1+J_2+J_3=O(1)\left((\tau+1)(r-1)+4\right)^{-\frac{1}{2}} (\tau+1)^{-\frac{3}{2}}.
\end{equation*}

\paragraph{\underline{$J_4+J_5$}:}
We can immediately conclude that
\begin{equation*}
J_4+J_5=O(1)\left((\tau+1)(r-1)+4\right)^{-\frac{1}{2}} (\tau+1)^{-\frac{3}{2}}.
\end{equation*}

\paragraph{\underline{$J_6$}:}
We have that:
\begin{multline*}
J_6=\sum_{\square\in\{+,-\}}{\mathfrak{h}}_{m \ell} E_{\square}\widetilde{w}_{m \ell, \infty}\widetilde{X}((r-1)^2 \widetilde{X}(f_{\square,{\rm exp}}^{m \ell}- f_{\square}^{m \ell}))\\
+{\mathfrak{h}}_{m \ell} E_{\square}\left(2(r-1)^2\frac{\widetilde{w}'_{m \ell, \infty}}{\widetilde{w}_{\infty, m\ell}}-im (r^2-1)\right)\widetilde{w}_{m \ell, \infty}\tX(f_{\square,{\rm exp}}^{m \ell}- f_{\square}^{m \ell})\\
+2{\mathfrak{h}}_{m \ell} E_{\square}(r^2+1-\hfol(r-1)^2)\widetilde{w}_{m \ell, \infty} \widetilde{X}(\widetilde{w}_{m \ell, \infty}K(f_{\square,{\rm exp}}^{m \ell}- f_{\square}^{m \ell}))\\
+ {\mathfrak{h}}_{m \ell} E_{\square}\left(im(2+\hfol(r^2-1))+2r-\frac{d}{dr}((r-1)^2 \hfol)\right)\widetilde{w}_{m \ell, \infty} K(f_{\square,{\rm exp}}^{m \ell}- f_{\square}^{m \ell})\\
+{\mathfrak{h}}_{m \ell} E_{\square}\widetilde{w}_{m \ell, \infty}\hfol ((r-1)^2 \hfol- 2(r^2+1))K^2(f_{\square,{\rm exp}}^{m \ell}- f_{\square}^{m \ell}).
\end{multline*}

In contrast with the analysis of $J_1$--$J_3$, we do need need to make use of cancellations to conclude that
\begin{equation*}
J_6=O(1)\left((\tau+1)(r-1)+4\right)^{-\frac{1}{2}} (\tau+1)^{-\frac{3}{2}}.
\end{equation*}
Instead, we apply (i) of Lemma \ref{lm:fexp}.

\paragraph{\underline{$J_7$}:}
Since $J_7$ features at least one $K$ derivative, we immediately get
\begin{equation*}
J_7=O(1)\left((\tau+1)(r-1)+4\right)^{-\frac{1}{2}} (\tau+1)^{-\frac{3}{2}}.
\end{equation*}

Now suppose $\widetilde{B}_+=0$. Then, $\widetilde{A}_+=0$ and ${A}_+\neq 0$, and by \eqref{eq:exprPi2}, we have that
\begin{equation*}
 \Pi_{m \ell}(\tau,r):={\mathfrak{h}}_{m \ell}w_{\infty, m\ell}(r)f_+^{m \ell}(\tau,r)={\mathfrak{h}}_{m \ell}A_{+}^{-1}w_{+, m\ell}(r)f_+^{m \ell}(\tau,r).
 \end{equation*}
 Hence, we can repeat the arguments above for estimating $J_1$--$J_5$ and $J_7$, with $b_{+,m \ell}$ replaced by ${\mathfrak{h}}_{m \ell}A_{+}^{-1}$.
 
 Suppose that $\widetilde{B}_-=0$. Then, $\widetilde{A}_-=0$ and ${A}_-\neq 0$, and by \eqref{eq:exprPi3}, we have that
\begin{equation*}
 \Pi_{m \ell}(\tau,r):={\mathfrak{h}}_{m \ell}w_{\infty, m\ell}(r)f_-^{m \ell}(\tau,r)={\mathfrak{h}}_{m \ell}A_{-}^{-1}w_{-, m\ell}(r)f_-^{m \ell}(\tau,r).
 \end{equation*}
In this case we can apply the argument above for estimating $J_1$--$J_5$ and $J_7$, with $b_{-,m \ell}$ replaced by ${\mathfrak{h}}_{m \ell}A_{-}^{-1}$.
	\end{proof}
	
	\begin{proposition}
	\label{prop:inhomagtr0}
		Let $K,L\in \N_0$. Let $0<\alpha_{m \ell}<\frac{1}{4}$ and $r\leq 2$. Then there exists a $C=C(K,N,m)>0$, such that for all $k\leq K$ and $l\leq L$
		\begin{equation*}
		|((r-1)\tX)^kK^l(\rho^2\square_g\Pi))_{m \ell}|(r)\leq C\left((\tau+1)(r-1)+4\right)^{-\frac{1}{2}-\sqrt{\alpha}_{m \ell}} (\tau+1)^{-\frac{1}{2}-\sqrt{\alpha}_{m \ell}}( (\tau+1)^{-1}+(r-1)^{2\sqrt{\alpha_{m \ell}}}).
		\end{equation*}
		\end{proposition}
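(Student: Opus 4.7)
The proof will proceed in direct analogy with that of Proposition \ref{prop:squarePial0}, the difference being that for $0<\alpha_{m\ell}<\frac{1}{4}$ the relevant formula for $\Pi_{m\ell}$ is \eqref{eq:exprPi4}:
\begin{equation*}
\Pi_{m\ell}(\tau,r) = \mathfrak{h}_{m\ell}\, w_{\infty, m\ell}(r)\, f_+^{m\ell}(\tau,r).
\end{equation*}
I use the decomposition $w_{\infty, m\ell} = B_+ w_{+, m\ell} + B_- w_{-, m\ell}$ from \S \ref{sec:statsol} to split $\Pi_{m\ell}$ into a \emph{matched} piece $\mathfrak{h}_{m\ell} B_+ w_{+, m\ell} f_+^{m\ell}$, where the singular exponents of the radial and time factors are both $-\tfrac{1}{2}-\sqrt{\alpha_{m\ell}}$, and a \emph{mismatched} piece $\mathfrak{h}_{m\ell} B_- w_{-, m\ell} f_+^{m\ell}$, whose radial exponent is $-\tfrac{1}{2}+\sqrt{\alpha_{m\ell}}$ while the temporal exponent remains $-\tfrac{1}{2}-\sqrt{\alpha_{m\ell}}$. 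By \eqref{eq:Philm}, $(\rho^2\square_g\Pi)_{m\ell}=\mathcal{L}^K_{m\ell}(\Pi_{m\ell})+(\sin^2\theta(K^2\Pi-imK\Pi))_{m\ell}$.

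The $K$-dependent terms inside $\mathcal{L}^K_{m\ell}$ and the $\sin^2\theta$-piece each cost an extra power of $(\tau+1)^{-1}$ relative to the pointwise size of $\Pi_{m\ell}$ itself, so they contribute a remainder of order $((\tau+1)(r-1)+4)^{-\tfrac{1}{2}-\sqrt{\alpha_{m\ell}}}(\tau+1)^{-\tfrac{3}{2}-\sqrt{\alpha_{m\ell}}}$, which is bounded by the $(\tau+1)^{-1}$ summand on the right-hand side of the target estimate. What remains is to analyze the pure spatial contribution $\mathcal{L}(B_+ w_+ f_+ + B_- w_- f_+)$.

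For the matched piece, I expand $\mathcal{L}(B_+ w_{+,m\ell} f_+^{m\ell})$ in the exact same $J_1+J_2+J_3$ fashion as in the proof of Proposition \ref{prop:squarePial0}, and exploit $\mathcal{L}(w_{+,m\ell} S_{m\ell} e^{im\widetilde\varphi})=0$, i.e.\ \eqref{eq:eqwellm}, to cancel the would-be leading terms at weight $((\tau+1)(r-1)+4)^{-\tfrac{5}{2}-\sqrt{\alpha_{m\ell}}}(\tau+1)^{-\tfrac{1}{2}-\sqrt{\alpha_{m\ell}}}$; the surviving pieces are again of order $((\tau+1)(r-1)+4)^{-\tfrac{1}{2}-\sqrt{\alpha_{m\ell}}}(\tau+1)^{-\tfrac{3}{2}-\sqrt{\alpha_{m\ell}}}$. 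For the mismatched piece $B_- w_{-,m\ell} f_+^{m\ell}$, the same Leibniz decomposition applies and yields analogous $J_1^-,J_2^-,J_3^-$ terms, and $w_{-,m\ell}$ also satisfies $\mathcal{L}(w_{-,m\ell} S_{m\ell} e^{im\widetilde\varphi})=0$. However, the cancellation is incomplete: the coefficients appearing in $J_1^-,J_2^-,J_3^-$ are driven by the exponent of $f_+^{m\ell}$ (namely $-\tfrac{1}{2}-\sqrt{\alpha_{m\ell}}$), whereas the ODE coming from \eqref{eq:eqwellm} provides cancellation against the exponent of $w_-$ (namely $-\tfrac{1}{2}+\sqrt{\alpha_{m\ell}}$). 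The mismatch is precisely the logarithmic derivative of $w_{-,m\ell}/w_{+,m\ell}\sim (r-1)^{2\sqrt{\alpha_{m\ell}}}$, so the residual contribution is of order $((\tau+1)(r-1)+4)^{-\tfrac{1}{2}-\sqrt{\alpha_{m\ell}}}(\tau+1)^{-\tfrac{1}{2}-\sqrt{\alpha_{m\ell}}}(r-1)^{2\sqrt{\alpha_{m\ell}}}$, matching the second summand on the right-hand side. Commutation with $((r-1)\widetilde{X})^k K^l$ proceeds exactly as in Proposition \ref{prop:squarePial0} and the calculation of Lemma \ref{lm:fexp}.

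The main obstacle is the precise algebraic verification that the mismatched calculation produces the factor $(r-1)^{2\sqrt{\alpha_{m\ell}}}$ rather than a weaker $(r-1)^{\delta}$ bound. This boils down to carefully collecting the coefficients in $J_1^-+J_2^-+J_3^-$ after invoking the ODE for $w_{-,m\ell}$: the differences between $\bigl(\tfrac{1}{2}+\sqrt{\alpha_{m\ell}}\bigr)$-coefficients (from $f_+^{m\ell}$) and $\bigl(\tfrac{1}{2}-\sqrt{\alpha_{m\ell}}\bigr)$-coefficients (expected from $w_{-,m\ell}$) are proportional to $\sqrt{\alpha_{m\ell}}$ and multiply products that differ from the matched case by exactly the ratio $w_{-,m\ell}/w_{+,m\ell}\sim (r-1)^{2\sqrt{\alpha_{m\ell}}}$, courtesy of the expansions \eqref{eq:wfreqexp}--\eqref{eq:estderw} in Lemma \ref{lm:propfracw}. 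The remaining estimates for $(F_\Pi)_{m\ell}$ away from $r=1$ and for all the $O(r-1)$ subleading corrections are handled just as in Proposition \ref{prop:squarePial0}.
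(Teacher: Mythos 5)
Your proof takes essentially the same approach as the paper: decompose $w_{\infty,m\ell}=B_{+}w_{+,m\ell}+B_{-}w_{-,m\ell}$ (the paper writes $A_{\pm}$ here, which appears to be a typo for $B_{\pm}$ given the conventions of \S\ref{sec:statsol}), apply the $J_1+J_2+J_3$ cancellation argument from Proposition~\ref{prop:squarePial0} to the matched piece $w_{+,m\ell}f_{+}^{m\ell}$ to obtain the $\bigl((\tau+1)(r-1)+4\bigr)^{-\frac{1}{2}-\sqrt{\alpha_{m\ell}}}(\tau+1)^{-\frac{3}{2}-\sqrt{\alpha_{m\ell}}}$ contribution, and bound $\mathcal{L}^{K}_{m\ell}(w_{-,m\ell}f_{+}^{m\ell})$ directly using $w_{-,m\ell}/w_{+,m\ell}=O((r-1)^{2\sqrt{\alpha_{m\ell}}})$ to obtain the $(r-1)^{2\sqrt{\alpha_{m\ell}}}$ contribution. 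One small imprecision worth flagging: the residual for the mismatched piece is controlled by the ratio $w_{-,m\ell}/w_{+,m\ell}$ itself, not by its \emph{logarithmic derivative} (which would carry an extra $(r-1)^{-1}$ singularity); the paper's terse phrasing -- there is simply no cancellation, and one invokes $w_{-}/w_{+}=O((r-1)^{2\sqrt{\alpha}})$ -- is the cleaner formulation of the same estimate.
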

	\begin{proof}
	By \eqref{eq:exprPi4}, we can split
	\begin{equation*}
 \Pi_{m \ell}(\tau,r)={\mathfrak{h}}_{m \ell}w_{\infty, m\ell}(r)f_+^{m \ell}(\tau,r)={\mathfrak{h}}_{m \ell}A_{+}w_{+, m\ell}(r)f_+^{m \ell}(\tau,r)+{\mathfrak{h}}_{m \ell}A_{-}w_{-, m\ell}(r)f_+^{m \ell}(\tau,r).
\end{equation*}
As in the proof of Proposition \ref{prop:squarePial0}, we write
\begin{multline*}
(\rho^2\square_g\Pi)_{m \ell}=\mathcal{L}^K_{m\ell}(\Pi_{m \ell})+(\sin^2\theta(K^2\Pi-im K\Pi))_{m \ell}={\mathfrak{h}}_{m \ell}A_{+}\mathcal{L}^K_{m\ell}(w_{+, m\ell}(r)f_+^{m \ell})\\
+{\mathfrak{h}}_{m \ell}A_{-}\mathcal{L}^K_{m\ell}(w_{-, m\ell}(r)f_+^{m \ell}).
\end{multline*}
The first term on the very right-hand side can be estimated by repeating the argument involving cancellations from the proof of Proposition \ref{prop:squarePial0} in order to obtain:
\begin{equation*}
|{\mathfrak{h}}_{m \ell}A_{+}\mathcal{L}^K_{m\ell}(w_{+, m\ell}(r)f_+^{m \ell})|=O(1)\left((\tau+1)(r-1)+4\right)^{-\frac{1}{2}-\sqrt{\alpha}} (\tau+1)^{-\frac{3}{2}-\sqrt{\alpha}}.
\end{equation*}
The term ${\mathfrak{h}}_{m \ell}A_{-}\mathcal{L}^K_{m\ell}(w_{-, m\ell}(r)f_+^{m \ell})$, however, will not feature the cancellations seen above. Instead, we use that
\begin{equation*}
\frac{w_{-, m\ell}}{w_{+, m\ell}}=O((r-1)^{2\sqrt{\alpha}})
\end{equation*}
to obtain
\begin{equation*}
{\mathfrak{h}}_{m \ell}A_{-}\mathcal{L}^K_{m\ell}(w_{-, m\ell}(r)f_+^{m \ell})=O((r-1)^{2\sqrt{\alpha}})\left((\tau+1)(r-1)+4\right)^{-\frac{1}{2}-\sqrt{\alpha}} (\tau+1)^{-\frac{1}{2}-\sqrt{\alpha}}.
\end{equation*}
	\end{proof}
	
	\begin{proposition}
	\label{prop:inhomestrgtr2}
	Let $K,L\in \N_0$ and $r\geq 2$. Then there exists a $C=C(K,N,m)>0$, such that for all $k\leq K$ and $l\leq L$
		\begin{equation*}
		|(K^l\rho^2\square_g\Pi)_{m \ell}|(r)\leq Cr^{-1}(1+\tau)^{-2-l}.
		\end{equation*}
		and
		\begin{equation*}
		|K^l(Xr^2)^lX\rho^2\square_g\Pi))_{m \ell}|(r)\leq Cr^{-1}(1+\tau)^{-2-l},
		\end{equation*}
		with $X=\tX-im\frac{\mathbbm{h}}{2}$.
	\end{proposition}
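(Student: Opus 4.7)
I will follow the strategy of the proofs of Propositions \ref{prop:squarePial0} and \ref{prop:inhomagtr0}, adapting the estimates to the asymptotic region $r\geq 2$ via the large-$r$ expansions \eqref{eq:largerasympw}. Focusing on the generic case $\alpha_{m\ell}<0$ with $\widetilde{B}_\pm\neq 0$ (the other cases are strictly simpler), I write
\[
\Pi_{m\ell}=\mathfrak{h}_{m\ell}w_{\infty,m\ell}(r)(f_+^{m\ell}+f_-^{m\ell})+\mathfrak{h}_{m\ell}\widetilde{w}_{\infty,m\ell}(r)\left[E_+f_{+,{\rm exp}}^{m\ell}+E_-f_{-,{\rm exp}}^{m\ell}\right].
\]
By part (ii) of Lemma \ref{lm:fexp}, the $f_{\pm,{\rm exp}}$ pieces together with all their $\tX$, $K$, and $X$-derivatives decay exponentially in $r$ for $r\geq 2$, hence contribute to $\rho^2\square_g\Pi$ with arbitrary polynomial $r$-decay. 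The analysis therefore reduces to estimating $\rho^2\square_g$ applied to the principal piece $w_{\infty,m\ell}f_\pm^{m\ell}$, together with the angular cross-term $(\sin^2\theta(K^2\Pi-imK\Pi))_{m\ell}$ appearing in \eqref{eq:waveeqtildeX}; the latter is trivially of size $Cr^{-1}(1+\tau)^{-3}$ using $|K\Pi_{m\ell'}|\leq C|m|r^{-1}(1+\tau)^{-2}$ and the $\ell'$-summation.

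For the principal piece, I apply the Leibniz-rule splitting from the proof of Proposition \ref{prop:squarePial0}: since $w_{\infty,m\ell}$ satisfies $\mathcal{L}(w_{\infty,m\ell}S_{m\ell}e^{im\tphi})=0$ and is $K$-invariant, the expression $\mathcal{L}^K_{m\ell}(w_{\infty,m\ell}f_\pm)$ splits into a ``stationary'' part $[2\Delta w_{\infty,m\ell}'+\Delta'w_{\infty,m\ell}-im(r^2-1)w_{\infty,m\ell}]\tX f_\pm+\Delta w_{\infty,m\ell}\tX^2 f_\pm$, and a ``$K$-derivative'' part built from the coefficients of $K\tX$, $K^2$, and $K$ in $\mathcal{L}^K_{m\ell}$, namely $\eta_1:=2(r^2+1-\hfol(r-1)^2)$, $\eta_2:=\hfol(\hfol(r-1)^2-2(r^2+1))$, $\eta_3:=im(2+\hfol(r^2-1))+2r-((r-1)^2\hfol)'$. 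The asymptotic expansion \eqref{eq:largerasympw} yields $w_{\infty,m\ell}(r)=e^{imr_*}/r+O(r^{-2})$ and hence
\[
\tX w_{\infty,m\ell}=\left[\frac{im(r^2+1)}{\Delta}-\frac{1}{r}\right]w_{\infty,m\ell}+O(r^{-3})\qquad(r\geq 2).
\]
Meanwhile, since $F:=\tau+4/(r-1)+1$ satisfies $1+\tau\leq F\leq C(1+\tau)$ and $|\tX F|=4/(r-1)^2\leq 4r^{-2}$ on $r\geq 2$, one has $|K^jf_\pm|\leq C|m|^j(1+\tau)^{-1-j}$ and $|\tX^kK^jf_\pm|\leq C|m|^{k+j}(1+\tau)^{-1-k-j}r^{-2k}$. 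Substituting these bounds, the stationary part and the terms $\eta_1 w_{\infty,m\ell}\tX Kf_\pm$, $\eta_2 w_{\infty,m\ell}K^2f_\pm$ are each directly bounded by $Cr^{-1}(1+\tau)^{-2}$.

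The crucial cancellation occurs in the remaining pair, which after substituting the expansion of $\tX w_{\infty,m\ell}$ equals $\{\eta_1[im(r^2+1)/\Delta-1/r]+\eta_3\}w_{\infty,m\ell}Kf_\pm+O(r^{-1}(1+\tau)^{-2})$. Individually $\eta_1\sim-2(r^2+1)$ and $\eta_3\sim 2imr^2+4imr-2r$ are each of size $r^2$; expanding $\hfol=2(r^2+1)/\Delta+O(r^{-2})$, $(r^2+1)/\Delta=1+2/r+O(r^{-2})$, and $\hfol(r^2-1)=2r^2+4r+O(1)$ to subleading order, I verify the identity
\[
\eta_1\left[\frac{im(r^2+1)}{\Delta}-\frac{1}{r}\right]+\eta_3=O(|m|+1),
\]
in which the $O(r^2)$ and $O(r)$ contributions cancel exactly. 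This yields $|(\rho^2\square_g\Pi)_{m\ell}|\leq Cr^{-1}(1+\tau)^{-2}$. The extension to general $l$ is immediate since $K$ is Killing with $Kw_{\infty,m\ell}=0$, so $K^l$ passes through $\mathcal{L}^K_{m\ell}$ to act only on $f_\pm^{m\ell}$, giving the extra factor $(1+\tau)^{-l}$ via $|K^{l+j}\tX^kf_\pm|\leq C|m|^{l+j+k}(1+\tau)^{-1-l-j-k}r^{-2k}$. The second inequality involving $(Xr^2)^lX$ uses the sharper observation $Xw_{\infty,m\ell}=O(r^{-2})$, a consequence of $(r^2+1)/\Delta-\hfol/2=O(r^{-2})$ combined with the expansion of $\tX w_{\infty,m\ell}$; this ensures that each application of $Xr^2$ preserves the leading-order $r^{-1}$ decay. \emph{The main obstacle} is the verification of the cancellation identity above: naive estimation of $\eta_1\tX w_{\infty,m\ell}\cdot Kf_\pm+\eta_3 w_{\infty,m\ell}Kf_\pm$ yields contributions of order $r(1+\tau)^{-2}$, too large by a factor of $r^2$, and the cancellation reflects the fact that $w_{\infty,m\ell}f_\pm^{m\ell}$ is designed to be an approximate outgoing radiation field at $\mathcal{I}^+$, with the oscillating factor $e^{imr_*}$ encoding the expected $u^{-1}e^{-im\upomega_+u}$ asymptotic tail.
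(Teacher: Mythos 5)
Your proof is correct and follows essentially the same route as the paper: split $\Pi_{m\ell}$ into the principal $w_{\infty,m\ell}f_{\pm}^{m\ell}$ piece plus the exponentially $r$-decaying $\Pi_{m\ell,{\rm exp}}$ piece, dispose of the latter via Lemma \ref{lm:fexp}(ii), and isolate the crucial cancellation in the $K$-derivative part of $\mathcal{L}^K_{m\ell}(w_{\infty,m\ell}f_\pm)$. Your identity $\eta_1\left[\frac{im(r^2+1)}{\Delta}-\frac{1}{r}\right]+\eta_3=O(|m|+1)$ (using $\hfol=\frac{2(r^2+1)}{\Delta}+O(r^{-2})$) is precisely the cancellation the paper records in its $J_4+J_5$ computation, just presented more transparently by naming the coefficients $\eta_1,\eta_2,\eta_3$ of $K\tX,K^2,K$ in $\mathcal{L}^K_{m\ell}$ and tracking the cancellation in one closed formula; as a bonus you caught what the paper's expansions should say (the paper's intermediate displays for $\frac{dw_{\infty,m\ell}/dr}{w_{\infty,m\ell}}$ and $\hfol(r^2-1)$ contain sign typos, but your versions are the correct ones and the cancellation survives). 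Your closing observation $Xw_{\infty,m\ell}=O(r^{-2})$ via $\frac{r^2+1}{\Delta}-\frac{\hfol}{2}=O(r^{-2})$ is exactly the ingredient the paper uses for the $(Xr^2)^lX$ inequality, phrased there as $X^k(e^{imr_*}r^{-1})=O(r^{-2-k})$; both treatments of that step are equally terse, and a fuller inductive bookkeeping of how $Xr^2$ distributes across $w_{\infty,m\ell}$ and the $f_\pm$ factors would strengthen either presentation.
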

	\begin{proof}
	Split
\begin{align*}
\Pi_{m \ell}=&\:{\mathfrak{h}}_{m \ell}w_{\infty, m\ell}(r)(f_+^{m \ell}+f_-^{m \ell})+\Pi_{m \ell,{\rm exp}}(\tau,r),\\
\Pi_{m \ell,{\rm exp}}(\tau,r)=&\: {\mathfrak{h}}_{m \ell}\widetilde{w}_{\infty, m\ell}(r)\left[E_+f_{+,{\rm exp}}^{m \ell}(\tau,r)+E_-f_{-,{\rm exp}}^{m \ell}(\tau,r)\right].
\end{align*}
Then
\begin{multline*}
(\rho^2\square_g\Pi)_{m \ell}=\sum_{\square\in\{+,-\}}\overbrace{{\mathfrak{h}}_{m \ell}w_{m \ell, \infty}\widetilde{X}((r-1)^2 \widetilde{X}f_{\square}^{m \ell})}^{=:J_1}\\
+\overbrace{{\mathfrak{h}}_{m \ell}\left(2(r-1)^2\frac{\frac{dw_{m \ell, \infty}}{dr}}{w_{m \ell, \infty}}-im (r^2-1)\right)w_{m \ell, \infty}\tX f_{\square}^{m \ell}}^{=:J_2}\\
+\overbrace{2{\mathfrak{h}}_{m \ell}(r^2+1-\hfol(r-1)^2) w_{m \ell, \infty}\widetilde{X}Kf_{\square}^{m \ell}}^{=:J_3}+\overbrace{2{\mathfrak{h}}_{m \ell}(r^2+1-\hfol(r-1)^2) \frac{dw_{m \ell, \infty}}{dr}Kf_{\square}^{m \ell}}^{=:J_4}\\
+ \overbrace{{\mathfrak{h}}_{m \ell}\left(im(2+\hfol(r^2-1))+2r-\frac{d}{dr}((r-1)^2 \hfol)\right)w_{m \ell, \infty} Kf_{\square}^{m \ell}}^{=:J_5}\\
+\overbrace{{\mathfrak{h}}_{m \ell}w_{m \ell, \infty}\hfol ((r-1)^2 \hfol- 2(r^2+1))K^2f_{\square}^{m \ell}}^{=:J_6}+\overbrace{\mathcal{L}^K_{m\ell}(\Pi_{m \ell,{\rm exp}})}^{=:J_7}+\overbrace{(\sin^2\theta(K^2\Pi-im K\Pi))_{m \ell}}^{=:J_8}
\end{multline*}
Without loss of generality, we will consider the terms with $\square=+$ below.
\paragraph{\underline{$J_1$}:}
We have that
\begin{equation*}
J_1=O(r^{-3})(\tau+1)^{-2}.
\end{equation*}

\paragraph{\underline{$J_2$}:}
We have that
\begin{equation*}
J_2=O(r^{-1})(\tau+1)^{-2}.
\end{equation*}

\paragraph{\underline{$J_3$}:}
We have that
\begin{equation*}
r^2+1-\hfol(r-1)^2=-r^2+O(1)
\end{equation*}
by assumptions made on $\hfol$ in \S \ref{sec:foliations}. We therefore obtain
\begin{equation*}
J_3=O(r^{-1})(\tau+1)^{-3}.
\end{equation*}

\paragraph{\underline{$J_4$}:}
We have that
\begin{equation*}
\frac{\frac{dw_{m \ell, \infty}}{dr}}{w_{m \ell, \infty}}=\frac{r^2+1}{(r-1)^2}\frac{\frac{dw_{m \ell, \infty}}{dr*}}{w_{m \ell, \infty}}=(im -2im r^{-1}-2r+O(r^{-2}))
\end{equation*}
so that
\begin{equation*}
J_4=-2{\mathfrak{h}}_{m \ell}\left(im r^2-2im r-2r+O(1)\right)w_{m \ell, \infty}Kf_{+}^{m \ell}.
\end{equation*}

\paragraph{\underline{$J_5$}:}
We have that
\begin{equation*}
\hfol(r^2-1)=\hfol(r-1)^2\frac{r^2-1}{(r-1)^2}=-2r^2-4r+O(1).
\end{equation*}
and
\begin{equation*}
\frac{d}{dr}(\hfol(r-1)^2)=4r+O(r^{-1}).
\end{equation*}
Hence
\begin{equation*}
J_5={\mathfrak{h}}_{m \ell}\left(-2im r^2-4im r+2r+O(1)\right)w_{m \ell, \infty}Kf_{+}^{m \ell}.
\end{equation*}
We therefore have a cancellation in the leading-order terms in $J_4+J_5$ and we obtain
\begin{equation*}
J_4+J_5=O(r^{-1})(\tau+1)^{-2}.
\end{equation*}

\paragraph{\underline{$J_6$}:}
We have that
\begin{equation*}
J_5=O(r^{-1})(\tau+1)^{-3}.
\end{equation*}

\paragraph{\underline{$J_7$}:}
We do not need to make use of cancellations in the terms in $J_7$. Instead we apply (ii) of Lemma \ref{lm:fexp} to conclude that there exists a $\delta>0$ such that
\begin{equation*}
J_7=O(e^{-\delta r})(\tau+1)^{-2}
\end{equation*}

\paragraph{\underline{$J_8$}:}
We have that
\begin{equation*}
J_8=O(r^{-1})(\tau+1)^{-3}.
\end{equation*}

Since
\begin{equation*}
X^k(e^{im r_*}r^{-1})=O(r^{-2-k})
\end{equation*}
and
\begin{equation*}
((r-1)^2X^k) f_{+}^{m \ell}=O((1+\tau)^{-2}),
\end{equation*}
we conclude the estimate in the proposition.
\end{proof}

\bibliographystyle{alpha}

\end{document}